\keywords{
  proof theory,
  functional programming languages,
  \(\lambda\)-calculus,
  linear logic,
  category theory,
  differential \(\lambda\)-calculus}
\newenvironment{theorem}{\begin{thm}}{\end{thm}}
\newenvironment{lemma}{\begin{lem}}{\end{lem}}
\newenvironment{proposition}{\begin{prop}}{\end{prop}}
\newenvironment{definition}{\begin{defi}}{\end{defi}}
\newenvironment{remark}{\begin{rem}}{\end{rem}}
\newenvironment{example}{\begin{exa}}{\end{exa}}
\newenvironment{Example}{%
   \bigbreak\noindent%
   \refstepcounter{thm}
   \textbf{$\blacktriangleright$\ Example \thethm.\ }%
   }{\ \hfill$\blacktriangleleft$\par\bigbreak}
\numberwithin{examplectr}{section}
\newcommand\Ie{\textit{ie.}}
\newcommand\Etc{\textit{etc.}}
\newcommand*{\inlineeq}[2][]{%
  \begingroup
    \refstepcounter{equation}%
    \ifx\\#1\\%
    \else
      \label{#1}%
    \fi
    \relpenalty=10000 %
    \binoppenalty=10000 %
    \ensuremath{%
      #2%
    }%
    ~\@eqnnum
  \endgroup
}
\newcommand{\labeltext}[2]{%
  \@bsphack
  \csname phantomsection\endcsname 
  \def\@currentlabel{#1}{\label{#2}}%
  \@esphack
}
\newenvironment{Axicond}[2]
{\smallbreak\noindent{#1}\labeltext{#1}{#2}\,}
{\smallbreak}
\newcommand\Proofcase{\smallbreak\noindent$\blacktriangleright$\ }
\newcommand{\Endproof}{
  \ifmmode 
  \else \leavevmode\unskip\penalty9999 \hbox{}\nobreak\hfill
  \fi
  \quad\hbox{$\Box$}
  \par\medskip}
\newcommand\Eqref[1]{(\ref{#1})}
\renewcommand{\phi}{\varphi}
\renewcommand\epsilon{\varepsilon}
\newcommand{\Implies}{\Rightarrow}
\newcommand\Equiv{\Leftrightarrow}
\newcommand{\St}{\mid}
\newcommand{\Bott}{{\mathord{\perp}}}
\newcommand{\Top}{\top}
\newcommand\Seqempty{\Tuple{}}
\newcommand\cC{\mathcal{C}}
\newcommand\cL{\mathcal{L}}
\newcommand\cT{\mathcal{T}}
\newcommand\Fini{{\mathrm{fin}}}
\newcommand\Part[1]{{\mathcal P}\left({#1}\right)}
\newcommand\Union{\bigcup}
\newcommand{\Linarrow}{\multimap}
\newcommand\Myleft{}
\newcommand\Myright{}
\newcommand\Web[1]{\Myleft|{#1}\Myright|}
\newcommand\Supp[1]{\operatorname{\mathsf{supp}}({#1})}
\newcommand\Emptymset{[\,]}
\newcommand\Mset[1]{[{#1}]}
\newcommand\ITens{\otimes}
\newcommand\Tens[2]{{#1}\ITens{#2}}
\newcommand\Comma[2]{{#1},{#2}}
\newcommand\Tensp[2]{({#1}\ITens{#2})}
\newcommand\IWith{\mathrel{\&}}
\newcommand\With[2]{{#1}\IWith{#2}}
\newcommand\Withep[2]{\left({#1}\IWith^\Klsymb{#2}\right)}
\newcommand\Withp[2]{\left({#1}\IWith{#2}\right)}
\newcommand\IPlus{\oplus}
\newcommand\Plus[2]{{#1}\IPlus{#2}}
\newcommand\Orth[2][]{#2^{\mathord\perp_{#1}}}
\newcommand\Klsymb{\mathsf K}
\newcommand\Bwith{\mathop{\&}}
\newcommand\Bwithe{\mathop{\&^{\Klsymb}}}
\newcommand\Bplus{\mathop\oplus}
\newcommand\Bunion{\mathop\cup}
\newcommand\Inj[1]{\overline\pi_{#1}}
\newcommand\Biorth[1]{#1^{\bot\bot}}
\newcommand\One{1}
\newcommand\LL{\hbox{\textsf{LL}}}
\newcommand\Card[1]{\#{#1}}
\newcommand\Locun[1]{1^J}
\newcommand\Isom\simeq
\newcommand\NUCS{\mathbf{nCoh}}
\newcommand\Comp{\mathrel\circ}
\newcommand\Funinv[1]{{#1}^{-1}}
\newcommand\Limpl[2]{{#1}\Linarrow{#2}}
\newcommand\Limplp[2]{\left({#1}\Linarrow{#2}\right)}
\newcommand\Nat{{\mathbb{N}}}
\newcommand\Natnz{{\Nat^+}}
\newcommand\Snat{\mathsf N}
\newcommand\Bool{\mathbf{Bool}}
\newcommand\True{\mathbf t}
\newcommand\False{\mathbf f}
\newcommand\Diffsymb{\mathsf D}
\newcommand\Zero{0}
\newcommand\App[2]{({#1})\Appsep{#2}}
\newcommand\Abst[3]{\lambda#1^{#2}\,{#3}}
\newcommand\List[3]{#1_{#2},\dots,#1_{#3}}
\newcommand\Listter[3]{#1_{#2}\cdots #1_{#3}}
\newcommand\Kronecker[2]{\boldsymbol\delta_{{#1},{#2}}}
\newcommand\Subst[3]{{#1}\left[{#2}/{#3}\right]}
\newcommand\Substbis[2]{{#1}[{#2}]}
\newcommand\Factor[1]{{#1}!}
\newcommand\Multinomb[2]{\genfrac{[}{]}{0pt}{}{#1}{#2}}
\newcommand\Real{\mathbb{R}}
\newcommand\Realp{\mathbb{R}_{\geq 0}}
\newcommand\Realpto[1]{(\Realp)^{#1}}
\newcommand\Realpc{\overline{\Realp}}
\newcommand\Realpcto[1]{\Realpc^{#1}}
\newcommand\Linapp[2]{\left\langle{#1}\right\rangle{#2}}
\newcommand\Mfin[1]{\mathcal M_\Fini({#1})}
\newcommand\Ev{\operatorname{\mathsf{Ev}}}
\newcommand\Evlin{\operatorname{\mathsf{ev}}}
\newcommand\REL{\operatorname{\mathbf{Rel}}}
\newcommand\Norm[1]{\|{#1}\|}
\newcommand\Rel[1]{\mathrel{#1}}
\newcommand\Red{\beta_\Delta}
\newcommand\Redst[1]{\mathop{\mathsf{Red}}}
\newcommand\Tofst[1]{\langle#1\rangle}
\newcommand\Tuple[1]{\langle{#1}\rangle}
\newcommand\Cotuple[1]{\left[{#1}\right]}
\newcommand\Msetofsubst[1]{\bar F}
\newcommand\Inv[1]{{#1}^{-1}}
\newcommand\Invp[1]{({#1})^{-1}}
\newcommand\Pcoh[1]{\mathsf P{#1}}
\newcommand\Pcohp[1]{\Pcoh{(#1)}}
\newcommand\Base[1]{\mathsf e_{#1}}
\newcommand\Matapp[2]{{#1}\Compl{#2}}
\newcommand\PCOH{\mathbf{Pcoh}}
\newcommand\Leftu{\lambda}
\newcommand\Rightu{\rho}
\newcommand\Assoc{\alpha}
\newcommand\Sym{\gamma}
\newcommand\Msetempty{\Mset{\,}}
\newcommand\Retri\zeta
\newcommand\Retrp\rho
\newcommand\Impl[2]{{#1}\Rightarrow{#2}}
\newcommand\Implp[2]{({#1}\Rightarrow{#2})}
\newcommand\Tsem[1]{\llbracket{#1}\rrbracket}
\newcommand\Tsemrel[1]{\llbracket{#1}\rrbracket^{\REL}}
\newcommand\Tsempcoh[1]{\llbracket{#1}\rrbracket^{\PCOH}}
\newcommand\Psem[2]{\llbracket{#1}\rrbracket_{#2}}
\newcommand\Psemrel[2]{\llbracket{#1}\rrbracket^{\REL}_{#2}}
\newcommand\Psempcoh[2]{\llbracket{#1}\rrbracket^{\PCOH}_{#2}}
\newcommand\Tnat\iota
\newcommand\Succ[1]{\operatorname{\mathsf{succ}}(#1)}
\newcommand\Num[1]{\underline{#1}}
\newcommand\Loop\Omega
\newcommand\Tseq[3]{{#1}\vdash{#2}:{#3}}
\newcommand\Timpl\Impl
\newcommand\Timplp\Implp
\newcommand\Simpl\Impl
\newcommand\Simplp[2]{(\Impl{#1}{#2})}
\newcommand\PCFD{\Lang}
\newcommand\PCF{\textsf{PCF}}
\newcommand\Weak[1]{\operatorname{\mathsf{weak}}_{#1}}
\newcommand\Contr[1]{\operatorname{\mathsf{contr}}_{#1}}
\newcommand\Der[1]{\operatorname{\mathsf{der}}_{#1}}
\newcommand\Digg[1]{\operatorname{\mathsf{dig}}_{#1}}
\newcommand\Fun[1]{\widehat{#1}}
\newcommand\Id{\operatorname{\mathsf{Id}}}
\newcommand\Proj[1]{\mathsf{pr}_{#1}}
\newcommand\Proje[1]{\mathsf{pr}^{\Klsymb}_{#1}}
\newcommand\Excl[1]{\oc{#1}}
\newcommand\Excll[1]{\oc\oc{#1}}
\newcommand\Exclp[1]{\oc({#1})}
\newcommand\Prom[1]{#1^!}
\newcommand\Promp[1]{(#1)^!}
\newcommand\Promm[1]{{#1}^{!!}}
\newcommand\Prommm[1]{{#1}^{!!!}}
\newcommand\Relincl\eta
\newcommand\Relrestr\rho
\newcommand\Seely{\mathsf m}
\newcommand\Seelyz{\Seely^0}
\newcommand\Seelyt{\Seely^2}
\newcommand\Compl{\,}
\newcommand\Curlin{\operatorname{\mathsf{cur}}}
\newcommand\Curlinp[1]{\Curlin(#1)}
\newcommand\Cur{\operatorname{\mathsf{Cur}}}
\newcommand\Kl[1]{{#1}_\oc}
\newcommand\Em[1]{{#1}^\oc}
\newcommand\Coalg[1]{h_{#1}}
\newcommand\Coalgw[1]{\mathsf w_{#1}}
\newcommand\Coalgc[1]{\mathsf c_{#1}}
\newcommand\Eval[2]{\langle#1\mid#2\rangle}
\newcommand\Snum[1]{\overline{#1}}
\newcommand\Sif{\overline{\mathsf{if}}}
\newcommand\Ssuc{\overline{\mathsf{suc}}}
\newcommand\Spred{\overline{\mathsf{pred}}}
\newcommand\Szero{\overline{\mathsf{zero}}}
\newcommand\Sfix{\mathcal{Y}}
\newcommand\Vect[1]{\vec{#1}}
\newcommand\Argsep{\,}
\newcommand\Bnfeq{\mathrel{\mathord:\mathord=}}
\newcommand\Bnfor{\,\,\mathord|\,\,}
\newcommand\Len[1]{\mathsf{len}(#1)}
\newcommand\Eset[1]{\{#1\}}
\newcommand\Sfun{\operatorname{\mathsf S}}
\newcommand\Scfun{\Sfun_{\Into}}
\newcommand\Sproj[1]{\operatorname\pi_{#1}}
\newcommand\Sin[1]{\operatorname{\iota}_{#1}}
\newcommand\Ssum{\sigma}
\newcommand\Sflip{\operatorname{\mathsf c}}
\newcommand\Sflipl[1]{\Sflip(#1)}
\newcommand\Stuple[1]{\Tuple{#1}_{\Sfun}}
\newcommand\Smont{\operatorname{\mathsf L}}
\newcommand\Smontz{\Smont^0}
\newcommand\Scmont{\overline{\Smont}}
\newcommand\Sstr{\operatorname\phi}
\newcommand\Sstrc{\Sstr^{\multimap}}
\newcommand\Sdiff{\operatorname\partial}
\newcommand\Sdiffcaold{\delta}
\newcommand\Sdiffca{\tilde\partial}
\newcommand\Sdiffctx[2]{\Sdiff(#1,#2)}
\newcommand\Sdfun{\operatorname{\widetilde{\mathsf D}}}
\newcommand\Sdfunc{\Sdfun_{\mathsf{cur}}} 
\newcommand\Sdfunint{\Sdfun_{\mathsf{int}}} 
\newcommand\Matappa[2]{{#1}\cdot{#2}}
\newcommand\Saxcom{(\textbf{S-com})}
\newcommand\Saxzero{(\textbf{S-zero})}
\newcommand\Saxwit{(\textbf{S-witness})}
\newcommand\Saxdist{(\textbf{S$\ITens$-dist})}
\newcommand\Saxfun{(\textbf{S$\ITens$-fun})}
\newcommand\Sone{\One}
\newcommand\Sonelem{\ast}
\newcommand\Win[1]{\Inj{#1}^{\with}}
\newcommand\Wdiag{\Delta^\with}
\newcommand\Into{\mathbb{D}}
\newcommand\Intoset{\{0,1\}}
\newcommand\Obj[1]{\mathsf{Obj(#1)}}
\newcommand\Intcc[2]{[#1,#2]}
\newcommand\Sfunadd{\tau}
\newcommand\Daxchain{($\partial$\textbf{-chain})}
\newcommand\Daxlocal{($\partial$\textbf{-local})}
\newcommand\Daxlin{($\partial$\textbf{-lin})}
\newcommand\Daxwith{($\partial$\textbf{-}$\mathord{\IWith}$)}
\newcommand\Daxschwarz{($\partial$\textbf{-Schwarz})}
\newcommand\Treesep{\quad\quad}
\newcommand\Textsep{\hspace{6em}}
\newcommand\Sdfunit{\zeta}
\newcommand\Sdfmult{\theta}
\newcommand\Sdfstr{\psi}
\newcommand\Kllin{\mathsf{Lin}_{\mathord\oc}}
\newcommand\Tdiffsymb{\widetilde{\mathsf D}}
\newcommand\Tdiff[1]{\Tdiffsymb{#1}}
\newcommand\Tdiffm[2]{\Tdiffsymb^{#1}{#2}}
\newcommand\Tdnat[1]{\Tdiffsymb^{#1}\Tnat}
\newcommand\Lproj[2]{\Sproj{#1}(#2)}
\newcommand\Lprojd[3]{\Sproj{#1}^{#2}(#3)}
\newcommand\Linj[2]{\Sin{#1}(#2)}
\newcommand\Linjd[3]{\Sin{#1}^{#2}(#3)}
\newcommand\Lin[2]{\Sin{#1}(#2)}
\newcommand\Lsum[1]{\Sdfmult(#1)}
\newcommand\Lsumd[2]{\Sdfmult^{#1}(#2)}
\newcommand\Lflip[1]{\Sflip(#1)}
\newcommand\Lflipd[2]{\Sflip^{#1}(#2)}
\newcommand\Lflipdl[3]{\Sflip^{#1}_{#2}(#3)}
\newcommand\Lflipl[2]{\Sflip_{#1}(#2)}
\newcommand\Ldletv[2]{\partial({#1},{#2})}
\newcommand\Ldletvm[3]{\partial^{#1}({#2},{#3})}
\newcommand\Lzero{0}
\newcommand\Lzerot[1]{0^{#1}}
\newcommand\Lplus[2]{#1+#2}
\newcommand\Ldiff[1]{\Tdiff{#1}}
\newcommand\Ldiffp[1]{\Tdiff({#1})}
\newcommand\Lfix[1]{\mathsf Y#1}
\newcommand\Llet[4]{\mathsf{let}^{#1}(#2,#3,#4)}
\newcommand\Llett[5]{\mathsf{let}_{#1}^{#2}(#3,#4,#5)}
\newcommand\Lif[4]{\mathsf{if}^{#1}(#2,#3,#4)}
\newcommand\Lift[5]{\mathsf{if}_{#1}^{#2}(#3,#4,#5)}
\newcommand\Lpred[2]{\operatorname{\mathsf{pred}}^{#1}(#2)}
\newcommand\Lsucc[2]{\operatorname{\mathsf{succ}}^{#1}(#2)}
\newcommand\Sdfunpart[2]{\Sdfun_{#1}#2}
\renewcommand\Red{\to}
\newcommand\Cpolike{Scott}
\newcommand\Mlin[1]{\mathsf{m}(#1)}
\newcommand\Laxint{(\textbf{Int})}
\newcommand\Snatit[1]{\overline{\mathsf{it}}(#1)}
\newcommand\Slet{\overline{\mathsf{let}}}
\newcommand\Sfunpart[1]{\Sfun_{#1}}
\newcommand\Linred{\to_{\mathsf{lin}}}
\newcommand\Topred{\to_{\mathsf{top}}}
\newcommand\Echole{[\ ]}
\newcommand\Ecfilled[2]{#1[#2]}
\newcommand\Trvar{(\textbf{var})}
\newcommand\Trabs{(\textbf{abs})}
\newcommand\Trapp{(\textbf{app})}
\newcommand\Trfix{(\textbf{fix})}
\newcommand\Trnum{(\textbf{num})}
\newcommand\Trsuc{(\textbf{suc})}
\newcommand\Trpred{(\textbf{prd})}
\newcommand\Trif{(\textbf{if})}
\newcommand\Trzero{(\textbf{0})}
\newcommand\Trprojo{(\textbf{proj1})}
\newcommand\Trprojt{(\textbf{proj2})}
\newcommand\Trprojd{(\textbf{projd})}
\newcommand\Trinj{(\textbf{inj})}
\newcommand\Trsum{(\textbf{sum})}
\newcommand\Trcirc{(\textbf{circ})}
\newcommand\Trdiff{(\textbf{diff})}
\newcommand\Trlet{(\textbf{let})}
\newcommand\Trlin{(\textbf{lin})}
\newcommand\Itrvar{(\textbf{i-var})}
\newcommand\Itrabs{(\textbf{i-abs})}
\newcommand\Itrapp{(\textbf{i-app})}
\newcommand\Itrfix{(\textbf{i-fix})}
\newcommand\Itrnum{(\textbf{i-num})}
\newcommand\Itrsuc{(\textbf{i-suc})}
\newcommand\Itrpredz{(\textbf{i-prd}$0$)}
\newcommand\Itrpredp{(\textbf{i-prd}$+$)}
\newcommand\Itrifz{(\textbf{i-if}$0$)}
\newcommand\Itrifp{(\textbf{i-if}$+$)}
\newcommand\Itrproj{(\textbf{i-proj})}
\newcommand\Itradd{(\textbf{i-add})}
\newcommand\Itrinj{(\textbf{i-inj})}
\newcommand\Itrsum{(\textbf{i-sum})}
\newcommand\Itrcirc{(\textbf{i-circ})}
\newcommand\Itrdiff{(\textbf{i-diff})}
\newcommand\Itrlet{(\textbf{i-let})}
\newcommand\Trcl[1]{#1^{\ast}}
\newcommand\Tdersize[1]{\mathsf{sz}(#1)}
\newcommand\Lcht[1]{\mathsf{lh}(#1)}
\newcommand\Simplicit{sum-implicit}
\newcommand\Sharp{sharp}
\newcommand\Path{word}
\newcommand\Stempty{()}
\newcommand\Starg[2]{\mathsf{arg}(#1)\cdot#2}
\newcommand\Stdiff[2]{\mathsf{D}(#1)\cdot#2}
\newcommand\Stsucc[1]{\mathsf{succ}\cdot#1}
\newcommand\Stpred[1]{\mathsf{pred}\cdot#1}
\newcommand\Stif[4]{\mathsf{if}(#1,#2,#3)\cdot#4}
\newcommand\Stlet[4]{\mathsf{let}(#1,#2,#3)\cdot#4}
\newcommand\Stseq[2]{#1:#2\vdash\Tnat}
\newcommand\Stseqi[4]{#1:#2:#3\vdash#4:\Tnat}
\newcommand\Etseqi[2]{\vdash #1:#2:\Tnat}
\newcommand\Stred{\to}
\newcommand\Stredw{\to_{\mathsf{w}}}
\newcommand\State[3]{(#1,#2,#3)}
\newcommand\Statew[4]{(#1,#2,#3,#4)}
\newcommand\Pempty{\langle\,\rangle}
\newcommand\Stauto{$\ast$-autonomous}
\newcommand\DILL{\mathsf{DiLL}}
\newcommand\Tonat{\Rightarrow}
\newcommand\Monz{\mu^0}
\newcommand\Mont{\mu^2}
\newcommand\Mong[1]{\mu^{#1}}
\newcommand\Tseqact[2]{#1\cdot#2}
\newcommand\Ttermi[3]{#1:#2:#3}
\newcommand\Tseqi[4]{#1\vdash\Ttermi{#2}{#3}{#4}}
\newcommand\Contca[1]{\underline{#1}}
\newcommand\Contz[1]{0_{#1}}
\newcommand\Itintt[2]{|#1|_{#2}}
\newcommand\Itintc[1]{\Bot(#1)}
\newcommand\Itints[3]{\|#1\vdash #3\|_{#2}}
\newcommand\Itintctx[1]{|#1|}
\newcommand\Msrs[1]{\Mfin{#1}}
\newcommand\Rsca[1]{\underline{#1}}
\newcommand\Rsred[1]{\to_{#1}}
\newcommand\Cdsymb{\mathsf{cd}}
\newcommand\Lang{\Lambda_\Cdsymb}
\newcommand\States{\Theta_\Cdsymb}
\newcommand\Statessi{\Theta_\Cdsymb^0}
\newcommand\Statesw{\check\Theta_\Cdsymb}
\newcommand\Strednd{\to_{\mathsf{nd}}}
\newcommand\Stctx[1]{\langle{#1}\rangle}
\newcommand\Rev[1]{\overline{#1}}
\newcommand\Rcycle[1]{\underrightarrow{#1}}
\newcommand\Lcycle[1]{\underleftarrow{#1}}
\newcommand\Itopenn[4]{|#1\vdash #2:#3|^{(#4)}}
\newcommand\Itopen[3]{|#1\vdash #2:#3|}
\newcommand\Mstrans[2]{\mathsf L(#1,#2)}
\newcommand\Mscard[1]{\##1}
\newcommand\Mspmap[2]{{#1}\ast{#2}}
\newcommand\Qforg{\mathsf Q}
\newcommand\Wcell[1]{\check{#1}}
\newcommand\Intow{\check\Into}
\newcommand\Dwords[1]{\mathsf{w}(#1)}
\newcommand\Dwordl[1]{\mathsf w_0(#1)}
\newcommand\Nwcell[3]{\Sigma_{#1}(#2:#3)}
\newcommand\Wempty{\Pempty}
\newcommand\Swcell[1]{\mathsf{ws}(#1)}
\newcommand\Ldiffd{{\mathsf D}}
\newcommand\Appsep{\,}
\title{A coherent differential PCF}
\thanks{This work was partly supported by the ANR project %
\emph{Probabilistic Programming Semantics (PPS)} ANR-19-CE48-0014.}
\author[T.~Ehrhard]{Thomas Ehrhard\lmcsorcid{0000-0001-5231-5504}}
\address{Université Paris Cité, CNRS, Inria, IRIF, F-75013, Paris, France}
\email{ehrhard@irif.fr}
\begin{document}

\begin{abstract}
  \noindent The categorical models of the differential lambda-calculus
  are additive categories because of the Leibniz rule which requires
  the summation of two expressions and therefore the differential
  lambda-calculus features unrestricted sums of terms. A natural and
  simple operational interpretation of such sums of terms is by finite
  non-determinism although other interpretations, based for instance
  on quantum superposition, might also be possible. In a previous
  paper we introduced a categorical framework for differentiation
  which features a partial form of additivity and is compatible with
  deterministic denotational models such as coherence spaces and
  probabilistic models such as probabilistic coherence spaces. Based
  on this categorical theory we develop a syntax of a deterministic
  version of the differential lambda-calculus. One nice feature of
  this new approach to differentiation is that it is compatible with
  general fixpoints of terms, so our language is actually a
  differential extension of PCF for which we provide a fully
  deterministic operational semantics by means of an adapted version
  of the Krivine Machine.
\end{abstract}

\maketitle

\section{Introduction}
The differential lambda-calculus~\cite{EhrhardRegnier02} extends the
(typed) lambda-calculus with a syntactic notion of differentiation
which is compatible with the basic intuition of differentiation in
Calculus: given \(f:E\to F\) a sufficiently regular function between
two \(\Real\)-vector spaces (finite dimensional, or Banach\dots),
\(f':E\to(\Limpl EF)\) where \(\Limpl EF\) is the space of linear (and
continuous if we are in infinite dimension) maps \(E\to F\) such
that %
\(f(x+u)=f(x)+f'(x)\cdot u+o(\Norm u{})\).
More generally \(f'(x)\) is such that
\(y\mapsto f(x)+f'(x)\cdot(y-x)\) is the best affine approximation of
\(f\) which coincides with \(f\) at \(x\).

Syntactically this means that, given a term \(M\) such that %
\(\Tseq\Gamma M{\Timpl AB}\) and a term \(N\) such that
\(\Tseq\Gamma NA\) we introduce a term \(\Diffsymb M\cdot N\) such
that %
\(\Tseq\Gamma{\Diffsymb M\cdot N}{\Timpl AB}\). Intuitively \(M\)
represents a function \(f:A\to B\), \(N\) an element \(u\) of \(A\)
and \(\Diffsymb M\cdot N\) represents the function \(g:A\to B\) such
that \(g(x)=f'(x)\cdot u\).
This syntactic presentation is a convenient way to express the fact
that the derivative has the same regularity as the differentiated
function (intuitively terms represent ``smooth'' maps, so their
derivatives are themselves smooth) and allows easy iteration of
differentiation (\(n\)-th derivatives).

Differentiation is inherently related to the algebraic operation of
\emph{addition} and the associated operation of subtraction, this is
obvious in the definition of derivative we have all been taught at
school:
\(f'(x)=\lim_{\epsilon\to 0}\frac{f(x+\epsilon)-f(x)}{\epsilon}\). %
More algebraically this connection manifests itself for instance by
the \emph{Leibniz Rule} \((fg)'=f'g+fg'\).
In the differential lambda-calculus, beyond the ordinary
\(\beta\)-reduction, there is a \emph{differential
  \(\beta\)-reduction}: %
\(\Diffsymb(\Abst xAM)\cdot N\Rel\Red\frac{\partial M}{\partial
  x}\cdot N\) %
which uses a \emph{linear substitution} %
\(\frac{\partial M}{\partial x}\cdot N\) defined by induction on
\(M\). %
The definition of this operation involves the Leibniz Rule%
\footnote{Actually the Leibniz Rule is not really related to
  multiplication but more fundamentally to the fact that the parameter
  of a function can be used more than once that is, to the logical
  rule of contraction.} %
in the syntactic constructs where the variable \(x\) can be used at
various places, the most typical example being:
\begin{align}
  \label{eq:app-Leib-diff-lambda}
  \frac{\partial\App PQ}{\partial x}\cdot N
  =\App{\frac{\partial P}{\partial x}\cdot N}{Q}
  +\App{\Diffsymb P\cdot\left(\frac{\partial Q}{\partial x}\cdot N\right)}Q
\end{align}
which is a combination of the Leibniz Rule and of the Chain
Rule.
Because of this feature of the definition of
\(\frac{\partial M}{\partial x}\cdot N\) we had to extend the syntax
of the lambda-calculus with an addition operation typed as follows
\begin{equation}
  \label{eq:term-superposition}
  \begin{prooftree}
    \hypo{\Tseq\Gamma{M_0}A}
    \hypo{\Tseq\Gamma{M_1}A}
    \infer2{\Tseq\Gamma{M_0+M_1}A}
  \end{prooftree}
\end{equation}
This rule is most natural if we have in mind the usual mathematical
intuitions about differentiation.
However the lambda-calculus is not only a nice and convenient syntax
for denoting mathematical functions, it is also an expressive
programming language featuring crucial properties of determinism.
But the most natural operational interpretation of this \(+\)
operation is a kind of non-deterministic superposition%
\footnote{Other interpretations might be possible as well, such as,
  perhaps, quantum states superposition, but we did not explore this
  option yet. It seems clear that ``probabilistic superposition'' is
  not a possible option since it requires non-negative coefficients
  whose sum is \(\leq 1\): indeed \(\PCOH\) is not a model of
  differential \(\LL\).}.
For instance if our calculus has a type \(\Tnat\) of integers with
\(\Tseq\Gamma{\Num n}\Tnat\) for all \(n\in\Nat\) (numerals) then we
can write terms like \(\Num{42}+\Num{57}\) which is a superposition of
two values (nothing to do with \(\Num{99}\) of course!).

In the setting of \emph{differentiable programming} where programs can
be formally differentiated (see for instance~\cite{MazzaPagani21} for
a functional presentation), an addition on terms also plays a crucial
role for the same reason (Leibniz Rule).
This is made possible by the presence of a ground type \(\rho\) of
real numbers (which can be equipped with the usual addition of
numbers) and by the fact that any type is of shape
\(\Impl{A_1}{(\cdots(\Impl{A_n}{\rho})\cdots)}\) and therefore
inherits from \(\rho\) a canonical operation of addition defined
pointwise.
For instance, if the language has also a type \(\Tnat\) of integers,
it is not clear at all how the differential of a term of type
\(\Timpl\rho\Tnat\) or of type \(\Timpl\Tnat\Tnat\) could be defined.
In differentiable programming, differentiation is fundamentally
defined for terms of type
\(\Timpl\rho{(\cdots(\Timpl{\rho}\rho)\cdots)}\), and extended to
higher types by a method close to logical relations (by induction
on types), in sharp contrast with coherent differentiation where
derivatives can be taken with respect to parameters \emph{of any
  type}, and does not rely on any specific numerical datatype.

A natural question is then whether differentiation requires such a
general addition operation on terms and is therefore incompatible with
determinism.
In~\cite{Ehrhard23a} we have provided a semantic negative answer to
this question, based on a new categorical setting that we call
\emph{coherent differentiation}.
The basic idea is to replace \emph{additive categories}%
\footnote{Categories enriched over commutative monoids.}%
\footnote{Or \emph{left-additive categories} which are used for
  axiomatizing differentiation in a general cartesian category \(\cC\)
  which is not necessarily the Kleisli category of a model of \(\LL\),
  see~\cite{BluteCockettSeely09,CockettCruttwell14}.
  Observe that, even in this apparently weaker setting,
  left-additivity also entails that any two morphisms
  \(f_1,f_2\in\cC(X,Y)\) can be added, whatever be the objects \(X,Y\)
  of \(\cC\) so that these categories also feature the kind of
  ``non-determinism'' we are trying to eliminate.
  Axiomatizing coherent differentiation in general cartesian
  categories requires a more general kind of summability structure
  which is presented in another article~\cite{EhrhardWalch23}.} %
with categories equipped with a weaker structure that we call a
\emph{summability structure}: such a category \(\cL\) is endowed with
an endofunctor \(\Sfun\) which intuitively maps any object \(X\) to
the object \(\Sfun X\) of pairs \((x_0,x_1)\) such that \(x_0+x_1\) is
well defined. This functor comes with natural transformations
\(\Sproj 0,\Sproj 1,\Ssum\in\cL(\Sfun X,X)\) satisfying suitable
axioms (intuitively they map \((x_0,x_1)\) to \(x_0\), \(x_1\) and
\(x_0+x_1\) respectively).
Thanks to these axioms it is possible to equip \(\Sfun\) with a monad
structure.

Then assuming that \(\cL\) is a resource category (that is, a
cartesian symmetric monoidal category equipped with a resource
modality comonad \(\Excl\_\)), the differential structure is
axiomatized as a natural morphism
\(\Sdiff_X\in\cL(\Excl{\Sfun X},\Sfun{\Excl X})\) which is a
distributive law between the functor \(\Sfun\) and the comonad
\(\Excl\_\), and also between the monad \(\Sfun\) and the functor
\(\Excl\_\).
This allows one to extend the monad \(\Sfun\) to the Kleisli category
\(\Kl\cL\) into a monad that we denote as
\((\Sdfun,\Sdfunit,\Sdfmult)\).
The category \(\Kl\cL\) has the same objects as \(\cL\) but an element
of \(\Kl\cL(X,Y)\) should not be considered as a linear morphism
\(X\to Y\) as in \(\cL(X,Y)\), but as a morphism which is only
``smooth'' (and actually analytic in several concrete models).
The functor \(\Sdfun\) acts exactly as \(\Sfun\) on objects but its
action on morphisms implements differentiation: given
\(f\in\Kl\cL(X,Y)\), considered as a smooth map \(X\to Y\), the
smooth map
\(\Sdfun f=(\Sfun f)\Compl\Sdiff_X\in\Kl\cL(\Sdfun X\,\Sdfun Y)\) %
intuitively maps \((x_0,x_1)\) to \((f(x_0),f'(x_0)\cdot x_1)\).
The basic observation at the core of this work is indeed that, if
\(x_0+x_1\) is defined, then so must be \(f(x_0)+f'(x_0)\cdot x_1\),
as the beginning of the Taylor expansion of \(f(x_0+x_1)\) at \(x_0\).
The monad structure of \(\Sdfun\) accounts for addition: %
intuitively the natural transformation %
\( \Sdfmult_X\in\Kl\cL(\Sdfun^2X,\Sdfun X) \), which is linear%
\footnote{In the sense that it is obtained from a morphism of
  \(\cL(\Sfun^2X,\Sfun X)\) by composition with the counit
  \(\Der{\Sfun^2X}\in\cL(\Excl{\Sfun^2X},\Sfun^2X)\) of the comonad
  \(\Excl\_\).}, maps %
\(((x_{00},x_{01}),(x_{10},x_{11}))\) to %
\((x_{00},x_{01}+x_{10})\) and \(\Sdfunit_X\in\Kl\cL(X,\Sdfun X)\)
maps %
\(x\) to \((x,0)\).
The naturality of these two transformations in \(\Kl\cL\) expresses
exactly that the differential is linear with respect to this \(0\) and
to this addition.

The major benefit of using the operation \(\Sdfun\) for presenting the
derivative of morphisms of \(\Kl\cL\) is that doing so we preserve the
information of summability: we know that the two components of
\(\Sdfun f\) can be added without requiring that \emph{all} pairs of
morphisms can be added.
The categorical axiomatization described in~\cite{Ehrhard23a} involves
other natural linear transformations:
\(\Sproj i\in\Kl\cL(\Sdfun X,X)\) for \(i=0,1\) (the two obvious
projections), %
\(\Sin i\in\Kl\cL(X,\Sdfun X)\) for \(i=0,1\) (the two injections,
\(\Sin0=\Sdfunit\) and \(\Sin1\) maps \(x\) to \((0,x)\)) and the %
``canonical flip'' \(\Sflip\in\Kl\cL(\Sdfun^2X,\Sdfun^2X)\) which
maps %
\(((x_{00},x_{01}),(x_{10},x_{11}))\) to %
\(((x_{00},x_{10}),(x_{01},x_{11}))\).

As explained in the introduction of~\cite{Ehrhard23a} this categorical
axiomatization has strong formal similarities with the \emph{tangent
  categories} axiomatization of the differential calculus on smooth
manifolds~\cite{Rosicky84} ---~of which differential categories are a
special case as shown in~\cite{CockettCruttwell14}~---.
One crucial difference is that, though tangent categories are not
additive in general, they feature an unrestricted addition operation
available in ``tangent spaces'' (axiomatized by means of a pullback)
which is not available in the coherent differential setting.

\subsection{Contents}
In the present paper we propose a differential lambda-calculus which
uses the idea of coherent differentiation, but now at the syntactical
level.

\subsubsection{Semantic motivations}
Since this work arose from semantic considerations, we first provide
in Section~\ref{sec:sem-motivations} a summary of the categorical
framework for coherent differentiation introduced
in~\cite{Ehrhard23a}.
Our goal is also to make the paper as self-contained as possible.
In this section, we present moreover two concrete instances of this
general notion: the relational model and the probabilistic coherence
space model.
We describe in these well-known models of \LL{} various categorical
constructions of coherent differentiation.
These concrete models are also used technically in the paper, the
first for proving that our reduction rules are ``complete'' for
reducing the terms in Section~\ref{sec:completeness} (see
Section~\ref{sec:intro-intersection} the explanation of this word in
this context)
and the second ---~probabilistic coherence spaces
(PCS~\cite{DanosEhrhard08})~--- for proving denotationally that this
machine is deterministic in Section~\ref{sec:determinism}.

This latter model has been a major incentive for introducing coherent
differentiation.
It is a model of \LL{}, and the morphism of the Kleisli category of
its resource comonad \(\Excl\_\) can be seen as analytic functions.
For instance, in PCS, the type of booleans is interpreted as
the set \(B\) of all sub-probability distributions
\(x=(x_\True,x_\False)\) on the booleans (meaning that
\(x_\True,x_\False\in\Realp\) with \(x_\True+x_\False\leq 1\)) and a
morphism from \(B\) to \(B\) is a function \(f:B\to B\) with
\(f(x)=(f_\True(x),f_\False(x))\) such that
\begin{align*}
  f_\True(x)=\sum_{n,p\in\Nat}a_{n,p}x_\True^nx_\False^p
  \text{\quad and\quad}
  f_\False(x)=\sum_{n,p\in\Nat}b_{n,p}x_\True^nx_\False^p
\end{align*}
for uniquely determined%
\footnote{The coefficients can be retrieved from \(f\) by taking
  partial derivatives.} %
\emph{non-negative} coefficients
\((a_{n,p},b_{n,p}\in\Realp)_{n,p\in\Nat}\).
Of course \(B\) has a natural addition operation, but this addition is
only partially defined because the sum of two sub-probability
distributions is not always a sub-probability distribution.
Nevertheless the morphisms of the Kleisli category being analytic in a
rather standard mathematical sense, it is possible to compute their
derivative, and the purpose of coherent differentiation is precisely
to give a meaning to such derivatives in a partially additive setting.

\subsubsection{The choice of \PCF{}}
The categorical framework introduced in~\cite{Ehrhard23a} and
summarized in Section~\ref{sec:sem-motivations} relies on the
semantics of \LL{}, and this is deeply motivated by the fact
that, so far, all concrete known models of coherent differentiation
are models of \LL{}.
As explained above, in such a model \(\cL\) of \LL{},
differentiation is a distributive law which allows to extend \(\Sfun\)
to a differentiation functor on the cartesian closed Kleisli category
\(\Kl\cL\).
Moreover it is shown in~\cite{Ehrhard23a} that, when \(\cL\) is a
\emph{Lafont category} (that is, has a cofree modality comonad), a
very simple condition suffices to guarantee that it is a model of
coherent differentiation.
In particular the two models which will play a technical role in
the present paper ---~the relational model \(\REL\) and the probabilistic
coherence space model \(\PCOH\)~--- are both Lafont categories.

This fact has guided us in the choice of the \(\lambda\)-calculus to
be extended with coherent differential constructs: it seemed natural to
take a language whose translation in such models is as simple as
possible.
From this point of view \PCF{}~\cite{Plotkin77} is a reasonable option.
It is a simply typed call-by-name calculus and thus can be interpreted
by means of the standard and simple Girard translation of the
\(\lambda\)-calculus in \LL{} whose target is the Kleisli
category \(\Kl\cL\).
This choice allows us also to stress a major feature of this new
presentation of differentiation, which is the fact that it is fully
compatible with \emph{general recursive definitions} whose
formalization in \PCF{} is particularly simple and natural.
This is deeply related to the limitation we put on the \(+\)
operation: contrarily to the ordinary differential lambda-calculus,
our typing rules do not allow one to write a term such that
\(\Abst x\Tnat{(x+\Num{42})}\) which obviously has no fixpoint
(\(\Tnat\) is the type of integers).

Other choices of syntax such as the pure \(\lambda\)-calculus or
call-by-push-value~\cite{LevyP06} would have been possible but would
have required additional categorical ingredients, such as the use of a
reflexive object or of the Eilenberg-Moore
category as in~\cite{EhrhardTasson19}.
Of course we are very interested in studying coherent differentiation
in such languages; this will be the object of further work.

\subsubsection{Syntax}
So our calculus is a differential
extension of \PCF{}%
\footnote{More precisely, of a version of \PCF{} extended with a
  \(\mathsf{let}\) operation restricted to the unique ground type
  \(\Tnat\) of integers; this is particularly relevant for
  probabilistic extensions of this calculus in the spirit
  of~\cite{EhrhardPaganiTasson18b}, which is perfectly possible since
  it admits \(\PCOH\) as a natural denotational model.}  where the
main novelty is a differentiation operation on terms: if %
\(\Tseq\Gamma M{\Timpl AB}\) then %
\(\Tseq\Gamma{\Ldiff M}{\Timpl{\Tdiff A}{\Tdiff B}}\) and this
requires a new type construct \(\Tdiff A\).
The semantics tells us that we should have %
\(\Tdiff{\Timplp AB}=\Timplp{A}{\Tdiff B}\) and we can consider this
equation as a \emph{definition} of %
\(\Tdiff{\Timplp AB}\).
So the only basic differential construct on types that we need is
\(\Tdiffm d\Tnat\) for all \(d\in\Nat\).
This differential term construction induces a new redex, namely the
term %
\(\Ldiff{(\Abst xAM)}\) (for a term \(M\) such that
\(\Tseq{\Gamma,x:A}MB\)) and we stipulate that it reduces to %
\(\Abst x{\Tdiff A}{\Ldletv xM}\) where the term \(\Ldletv xM\) is
defined by induction on \(M\) and satisfies
\(\Tseq{\Gamma,x:\Tdiff A}{\Ldletv xM}{\Tdiff B}\).

This inductive definition of \(\Ldletv xM\) involves the use of
constructs \(\Lsumd dN\), \(\Linjd 0dN\) \Etc~which syntactically
account for the natural transformations alluded to above.
The additional superscript \(d\) is required to express the fact that
the corresponding operations are applied under \(d\) applications of
the \(\Sdfun{}\) functor.
For instance in the rule
\(\Ldletv x{\App NP}=\App{\Lsumd 0{\Ldiff{\Ldletv xN}}}{\Ldletv xP}\),
the \(\Lsumd0{\_}\) construction implements the addition involved in
the Leibniz Rule required by the fact that the variable \(x\) may
occur in both \(N\) and \(P\), as in the
Equation~\Eqref{eq:app-Leib-diff-lambda} of the differential
lambda-calculus.
We also stipulate that
\(\Ldletv x{\Lsumd dN}=\Lsumd{d+1}{\Ldletv xN}\) which shows why the
\(d\) superscripts are required.
The same superscripts appear in the basic ``arithmetic'' constructs
\(\Lsucc dN\), \(\Lpred dN\), \(\Lif dN{P_0}{P_1}\) and also in the
aforementioned \(\mathsf{let}\) construct \(\Llet dyNM\).
When applying the \(\Ldletv x\_\) to these constructs, the \(d\)
superscript is similarly incremented, for instance
\(\Ldletv x{\Lpred dN}=\Lpred{d+1}{\Ldletv xN}\).
The \(\Lflipd d\_\) construct is required because, guided by the
semantics, we set
\(\Ldletv x{\Ldiff N}=\Lflipd 0{\Ldiff{\Ldletv xN}}\).
Notice that the simple typing rules are not sufficient to ``guess''
this rule since in this situation we have %
\(\Tseq{\Gamma,x:A}N{\Timpl BC}\) and hence %
\( \Tseq{\Gamma,x:\Tdiff A}{\Ldiff{\Ldletv xN}} {\Timpl{\Tdiff
    A}{\Tdiffm 2B}} \) %
and also
\( \Tseq{\Gamma,x:\Tdiff A}{\Lflipd 0{\Ldiff{\Ldletv xN}}}
{\Timpl{\Tdiff A}{\Tdiffm 2B}} \), the semantic analysis that we
develop in Section~\ref{sec:semantics} is really mandatory.
The syntax also contains a \emph{projection} construct %
\(\Lprojd rdM\) where \(r\in\Eset{0,1}\) and \(d\in\Nat\) is the depth
we are now acquainted with. This construct allows one to ``extract''
the first (when \(r=0\)) or second (when \(r=1\)) component of a term
of type \(\Tdiff A\) which represents a kind of summable pair%
\footnote{From our \(\LL\) point of view it is important to keep in mind
  that it is not a \emph{multiplicative pair} in which both components
  are actually available, but an \emph{additive pair} which offers two
  possible options among which one must be chosen: this is precisely
  the purpose of our projection construct.}.
The language also contains a \(+\) operation%
\footnote{Of course there is also its associated ``neutral'' constant
  \(0\).}  %
on terms which is used in a single reduction rule, namely %
\( \Lprojd 1d{\Lsumd dM} \Rel{\Rsred\Lang}
\Lprojd0d{\Lprojd1dM}+\Lprojd1d{\Lprojd0dM} \).
So we can understand the construction \(\Lsumd d\_\) as a tag
identifying a place where a sum will have to be introduced when we
will need to extract a component from a ``summable \(4\)-tuple'' of
type \(\Tdiffm2 A\) for some type \(A\) and the reduction rules show
how these tags are produced and modified during computations.
The \(+\) term construct is still necessary, just as in the original
differential lambda-calculus of~\cite{EhrhardRegnier02}, but thanks to
the \(\Lsumd d\_\) construct we can prove subject reduction
\emph{without} the very strong typing~\Eqref{eq:term-superposition}
which was the source of the non-determinism of the differential
lambda-calculus%
\footnote{To be more precise, as it becomes apparent in
  Section~\ref{sec:diff-machine} where a specific reduction strategy
  is presented by means of a Krivine machine, we manage to reduce the
  use of the \(+\) to the ground type \(\Tnat\) of integers.
  This is similar to what happens in differentiable
  programming~\cite{MazzaPagani21}, the main differences being that
  our addition is not at all the arithmetic addition of the ground
  type and that we allow differentiation wrt.~parameters of all types,
  not only ground types.}.

\subsubsection{Categorical semantics and soundness}
\label{sec:intro-soundness}
After having presented the syntax of our calculus \(\Lang\) in
Section~\ref{sec:syntax}, we provide and study in
Section~\ref{sec:semantics} additional categorical constructs which
are definable in the general framework of
Section~\ref{sec:sem-motivations}.
Based on these constructs, we present a general categorical semantics
of our calculus \(\Lang\) in the cartesian closed category associated
with such a model by the familiar Kleisli construction associated with
the \(\Excl\_\) functor and to prove that this interpretation is
invariant under the \(\Rsred\Lang\) reduction relation (soundness).
This requires to prove a number of categorical equations involving in
particular \emph{additive strengths}%
\footnote{The adjective ``additive'' refers to the fact that this
  strength deals with the additive operation of cartesian product and
  not to the multiplicative operation of tensor product.
  It is shown in~\cite{Ehrhard23a} that the monad \(\Sfun\) has
  multiplicative strengths as well which are deeply related to these
  additive strengths.
  Notice that both strong monads are commutative.} %
$\Sdfstr_{X_0,X_1}^0\in\Kl\cL(\Sdfun X_0\IWith
X_1,\Sdfun\Withp{X_0}{X_1})$ and
$\Sdfstr_{X_0,X_1}^1\in\Kl\cL(X_0\IWith\Sdfun
X_1,\Sdfun\Withp{X_0}{X_1})$ %
of the monad $\Sdfun$ on \(\Kl\cL\).
These strengths are linear and extremely easy to define because
\(\Sdfun\) commutes with \(\IWith\), but their properties are not so
straightforward due to the definition of the functor \(\Sdfun\) which
involves the distributive law \(\Sdiff\).
Their main purpose is to define \emph{partial derivatives}: given
\(f\in\Kl\cL(\With{X_0}{X_1},Y)\) the first partial derivative of
\(f\) is %
\((\Sdfun f)\Comp\Sdfstr_{X_0,X_1}^0\in\Kl\cL(\Sdfun X_0\IWith
X_1,Y)\) and the second partial derivative is %
\((\Sdfun f)\Comp\Sdfstr_{X_0,X_1}^1\in\Kl\cL(X_0\IWith \Sdfun
X_1,Y)\).
These constructions are of course crucial in the interpretation of
\(\Lang\) since, for instance, \(\Ldletv xM\) must be interpreted as a
derivative with respect to the variable \(x\) but not to the other
variables occurring in \(M\).
Notice that we use \(\Comp\) for the composition operation in
\(\Kl\cL\) whereas composition in \(\cL\) is denoted as simple
juxtaposition.

Thanks to this series of categorical lemmas we can prove soundness for
the \(\Rsred\Lang\) rewriting, meaning that if
\(M\Rel{\Rsred\Lang}M'\) then \(M\) and \(M'\) have the same
interpretation.
The proof is lengthy due mainly to the number of rewriting rules.

\subsubsection{Intersection types, Krivine machine and completeness of
  the reduction}
\label{sec:intro-intersection}
We address in Section~\ref{sec:completeness} the major issue of
\emph{completeness}:
is our rewriting system \(\Rsred\Lang\) sufficiently rich for
performing all ``required'' reductions?
To give a precise meaning to this question, we need an external
reference, independent from our rewriting system.
Denotational models provide us precisely with such a reference, so we
choose the most basic denotational model of \(\Lang\) which is
\(\REL\) (see Section~\ref{sec:sem-motivations} complemented
by~\ref{sec:complements-REL}) and we present the associated
interpretation of terms by means of a \emph{non-idempotent
  intersection typing system} for \(\Lang\).
Our completeness expresses that if the interpretation in \(\REL\) of a
closed term \(M\) of type \(\Tnat\) contains an integer \(\nu\) then
\(M\) reduces to \(\Num\nu\) using the \(\Rsred\Lang\) reduction
relation.
As usual in this kind of situation we prove this property for a
\emph{particular reduction strategy} that we prefer to present as an
abstract machine in Krivine style.
A state, or \emph{command}, of this machine is a triple %
\(c=\State\delta Ms\) where
\begin{itemize}
\item \(M\) is a closed term of type \(\Tdiffm dF\) where \(F\) is a
  type which is \emph{sharp} in the sense that it is not of shape
  \(\Tdiff A\); in other words %
  \(F=(\Timpl{A_1}{\cdots\Timpl{\Timpl{}{A_k}}\Tnat})\),
\item \(\delta=\Tuple{r_1,\dots,r_d}\) is a sequence of length \(d\)
  of elements of \(\Eset{0,1}\) called \emph{access word}
\item and \(s\) is a stack of type \(F\vdash\Tnat\), meaning that it
  represents an evaluation context \(\Stctx s\) of type \(\Tnat\)
  whose ``hole'' has type \(F\).
\end{itemize}
Then the command \(c\) represents the term %
\(\Tofst c=\Stctx s[\Lprojd{r_1}0{\cdots\Lprojd{r_d}0{M}\cdots}]\) %
which is closed of type \(\Tnat\). %

We introduce a set of reduction rules \(\Rsred\States\) for this
machine and prove that these reduction rules are simulated in the
\(\Rsred\Lang\) rewriting system through the \(c\mapsto\Tofst c\)
translation: this shows that we can really consider this machine just
as a convenient way to express a specific reduction strategy for
applying the reduction \(\Rsred\Lang\).

More precisely this rewriting system acts on \emph{finite multisets of
  commands} which are summable in the sense that the interpretations
of their elements are summable in any model.
We extend the semantics (and, accordingly, the intersection typing
system) to stacks and commands and prove that, if a command is
typeable in the intersection typing system (and then its type is a
natural number \(\nu\)) then its \(\Rsred\States\) reduction leads to
a summable multiset \(C\) of commands which contains the constant
\(\Num\nu\) (or more precisely the command
\(\State{\Tuple{}}{\Num\nu}{\Stempty}\)), thus proving our
completeness result.
This proof follows essentially the standard pattern of a reducibility
argument, complicated by the fact that we have to take into account
arbitrary iterations of the \(\Ldletv xM\) construct.
This method is developed in Section~\ref{sec:int-seq-normalization}.

\subsubsection{Probabilistic semantics and determinism}
In Section~\ref{sec:determinism} we prove that the integer
\(\nu\) above is unique, thus showing that the reduction on commands
is essentially deterministic.
To this end we use the fact that the \(\LL\) model of probabilistic
coherence spaces introduced in~\cite{DanosEhrhard08} is a model of
coherent differentiation.
In that model the type \(\Tnat\) is interpreted as the set of all
sub-probability distributions on \(\Nat\) and we observe%
\footnote{This is due to the fact that the formalism \(\Lang\)
  considered here has no construct for generating random integers as
  in~\cite{EhrhardPaganiTasson18b}.} %
that a summable multiset of commands must be interpreted as such a
sub-probability distribution where \emph{all probabilities belong to
  \(\Nat\)} and hence is either equal to \(0\) or concentrated on a
single element \(\nu\) of \(\Nat\).
So all the elements \(c'\) of \(C\) distinct from the command
\(\State{\Tuple{}}{\Num\nu}{\Stempty}\) must have an \(\emptyset\)
interpretation in \(\REL\) and hence cannot reduce to a value. In
spite of this strong result, the fact that the rewriting system for
this machine has still to deal with multisets of states means that it
still contains a little bit of non-determinism.

Finally using an idea suggested by Guillaume~Geoffroy, we get rid of
this non-determinism by slightly modifying our Krivine machine.
The main change consists in making the access word writable. We can
prove simulation results relating this new machine with the original
one which allow one to prove that the new machine, whilst being fully
deterministic, computes the same thing as the original one, in the
same number of steps.

{
\tableofcontents}

\section{Preliminaries}

\subsection{Notations}
\label{sec:not-multisets}
Given a set \(I\) and \(i,j\in I\), remember that
\(\Kronecker ij\in\Eset{0,1}\) (the Kronecker delta) takes value \(1\)
if \(i=j\) and \(0\) otherwise.

We use \(\Natnz\) for \(\Nat\setminus\Eset 0\).

Given a sequence, or word, \(\alpha=\Tuple{\List i1k}\), we set %
\(\Len\alpha=k\).
We use simple juxtaposition to denote the concatenation of sequences
and also \(i\Tuple{\List i1k}=\Tuple{i,\List i1k}\).
We use \(i\) for the one-element sequence \(\Tuple i\) when there are
no ambiguities.
We use the following notation for circular permutations:
\(\Rcycle\alpha=\Tuple{i_k,i_1,\dots,i_{k-1}}\) and %
\(\Lcycle\alpha=\Tuple{i_2,\dots,i_k,i_1}\).
Of course if \(\Len\alpha=2\) we have \(\Rcycle\alpha=\Lcycle\alpha\).
We also use \(\Rev\alpha\) for the reversed word (that is, if
\(\alpha=\Tuple{i_1,\dots,i_{k}}\) then
\(\Rev\alpha=\Tuple{i_k,\dots,i_1}\)).

We use $\Mfin I$ for the set of finite multisets of elements of a set
$I$.
A multiset is a function $m:I\to\Nat$ such that
$\Supp m=\{i\in I\St m(i)\not=0\}$ is finite. We use additive
notations for operations on multisets ($0$ for the empty multiset,
$m+p$ for their pointwise sum).
We use $\Mset{\List i1k}$ for the multiset $m$ such that
$m(i)=\Card{\{j\in\Nat\St i_j=i\}}$.
If \(m=\Mset{\List j1n}\in\Mfin J\) and \(i\in I\) we set
\(\Mspmap im=\Mset{(i,j_1),\dots,(i,j_n)}\in\Mfin{I\times J}\).

We use \(I\uplus J\) to denote \(I\cup J\) when \(I\cap J=\emptyset\).

Our default notation for composition of morphisms in categories is by
simple juxtaposition, and we use \(X\) to denote the identity morphism
\(X\to X\).
In a category-theoretic context, the notation \(t:F\Tonat G\) means
that \(t\) is a natural transformation from the functor \(F\) to the
functor \(G\).

\subsection{Rewriting}\label{sec:ms-rewriting}
Let \(\cT=(\Rsca\cT,\Rsred\cT)\) be a rewriting system, that is
\(\Rsca\cT\) is a set and \(\mathord{\Rsred\cT}\subseteq\Rsca\cT^2\).
We assume that \(\Rsca\cT\) contains a distinguished element \(0\) and
that there is a binary operation \(+\) on \(\Rsca\cT\):
given \(t_1,t_2\in\Rsca\cT\) there is an element
\(t_1+t_2\in\Rsca\cT\).
We make no further assumptions, in particular we do not assume that,
equipped with \(0\) and \(+\), the set \(\Rsca\cT\) is a monoid.
We define a rewriting system \(\Msrs\cT\) by
\(\Rsca{\Msrs\cT}=\Mfin{\Rsca\cT}\) and rewriting relation defined by
the following rules
\begin{center}
  \begin{prooftree}
    \infer0{\Mset 0\Rel{\Rsred{\Msrs\cT}}\Msetempty}
  \end{prooftree}
  \Treesep
  \begin{prooftree}
    \infer0{\Mset{t_1+t_2}\Rel{\Rsred{\Msrs\cT}}\Mset{t_1,t_2}}
  \end{prooftree}
  \Treesep
  \begin{prooftree}
    \hypo{t\Rel{\Rsred\cT} t'}
    \infer1{\Mset t\Rel{\Rsred{\Msrs\cT}}\Mset{t'}}
  \end{prooftree}
  \Treesep
  \begin{prooftree}
    \hypo{S\Rel{\Rsred{\Msrs\cT}} S'}
    \infer1{S+T\Rel{\Rsred{\Msrs\cT}} S'+T}
  \end{prooftree}
\end{center}
In other words, we have \(S\Rel{\Rsred{\Msrs\cT}}S'\) iff one of the
following conditions hold.
\begin{itemize}
\item \(S=S_0+\Mset 0\) and %
  \(S'=S_0\).
\item \(S=S_0+\Mset t\), \(t\Rel{\Rsred\cT}t'\) and %
  \(S'=S_0+\Mset{t'}\).
\item \(S=S_0+\Mset{t_0+t_1}\) and %
  \(S'=S_0+\Mset{t_0,t_1}\).
\end{itemize}

\section{Semantic motivations}
\label{sec:sem-motivations}

Since the main goal of this paper is to shed some light on the
operational content of coherent differentiation, which was introduced
as a categorical refinement of the semantics of \LL{}
in~\cite{Ehrhard23a} (to which we refer for more details) and
motivated by concrete denotational models, we provide first a bird's
eye view on this categorical setting.
We give then two concrete instances of this notion: the relational
semantics and the probabilistic coherence space semantics of \(\LL\).
The more specific semantic operations and properties which are used
for interpreting our extension $\Lang$ of \PCF{} will be described in
Section~\ref{sec:semantics}.

\subsection{A summary of summable differential categories} %
\label{sec:cohdiff-summary}
A \emph{summable category} is a tuple %
\((\cL,\Sfun,\Sproj0,\Sproj1,\Ssum)\) where %
\(\cL\) is a category with zero morphisms%
\footnote{This means that if \(X,Y\) are objects then there is a
  morphism \(0\in\cL(X,Y)\) such that \(f\Compl 0=0\) and
  \(0\Compl g=0\).}, %
\(\Sfun:\cL\to\cL\) is a functor and %
\(\Sproj0,\Sproj1,\Ssum:\Sfun\Tonat\Id\) are natural transformations
such that %
\(\Sproj0,\Sproj1\in\cL(\Sfun X,X)\) are jointly monic: this means
that a morphism $f\in\cL(X,\Sfun Y)$ is fully characterized by %
\(\Sproj0\Compl f\) and \(\Sproj1\Compl f\). %
Then we say that $f_0,f_1\in\cL(X,Y)$ are \emph{summable} if there
is %
\(h\in\cL(X,\Sfun Y)\) such that \(\Sproj i\Compl h=f_i\) for
$i=0,1$. %
This $h$ is unique and is denoted $\Stuple{f_0,f_1}$. In that
situation the sum $f_0+f_1$ of $f_0,f_1$ is defined as %
\(f_0+f_1=\Ssum\Compl\Stuple{f_0,f_1}\).
There are further axioms \Saxcom, \Saxzero{} and \Saxwit{}
in~\cite{Ehrhard23a} which imply in particular that, equipped with
this partially defined addition, any homset \(\cL(X,Y)\) is a partial
commutative monoid with $0$ as neutral element, and the naturality of
$\Sproj0$, $\Sproj1$ and $\Ssum$ implies that composition commutes
with this partially defined addition, that is, $\cL$ is a ``partially
additive'' category (\Ie~a category enriched in \emph{partial
  commutative monoids} in the sense of~\cite{DuchampPoinsot10}).
These axioms also imply that there is a natural transformation %
\(\Sflip:\Sfun^2\Tonat\Sfun^2\) uniquely characterized by %
\(\Sproj i\Compl\Sproj j\Compl\Sflip=\Sproj j\Compl\Sproj i\) %
for all $i,j\in\Eset{0,1}$, it is called the (canonical)
\emph{flip}.
One also defines uniquely two natural injections %
\(\Sin0=\Stuple{X,0},\Sin1=\Stuple{0,X}:X\to\Sfun X\).

\subsubsection{The associated monad, its monoidal strength and its
  commutativity} %
\label{sec:monad-mult-strength}
Under these assumptions $\Sfun$ has a monad structure with unit %
$\Sin0:\Id\Tonat\Sfun$ and multiplication
\(\Sfunadd:\Sfun^2\Tonat\Sfun\) characterized by %
\(\Sproj0\Compl\Sfunadd=\Sproj0\Compl\Sproj0\) and %
\(\Sproj1\Compl\Sfunadd=\Sproj1\Compl\Sproj0+\Sproj0\Compl\Sproj1\) %
from which it follows easily that %
\(\Sfunadd\Compl\Sflip=\Sfunadd\).
When $\cL$ is symmetric monoidal (with monoidal unit $\One$ and
monoidal product $\ITens$) a further axiom \Saxdist{} expresses
in~\cite{Ehrhard23a} that $\ITens$ distributes over the sum of
morphisms, when defined. It is then possible to define a tensorial
strength
$\Sstr^0_{X_0,X_1}\in\cL(\Tens{(\Sfun
  X_0)}{X_1},\Sfun(\Tens{X_0}{X_1}))$ which is a natural
transformation satisfying further commutations expressing its
compatibility with the $\ITens$ monoidal structure of $\cL$.  Using
the symmetry iso of the monoidal structure of $\cL$ one can define
then
$\Sstr^1_{X_0,X_1}\in\cL(\Tens{X_0}{(\Sfun
  X_1)},\Sfun(\Tens{X_0}{X_1}))$ from $\Sstr^0$.
This strength is fully characterized by %
\(\Sproj i\Compl\Sstr_{X_0,X_1}^0=\Tens{\Sproj i}{X_1}\) for
$i=0,1$. Equipped with this strength the monad
\((\Sfun,\Sin0,\Sfunadd)\) is a commutative monad. More precisely the
following diagram commutes
\begin{equation*}
  \begin{tikzcd}
    &\Tens{\Sfun X_0}{\Sfun X_1}
    \ar[ld,swap,"\Sstr^0_{X_0,\Sfun X_1}"]
    \ar[rd,"\Sstr^1_{\Sfun X_0,X_1}"]
    &
    \\
    \Sfun\Tensp{X_0}{\Sfun X_1}
    \ar[d,swap,"\Sfun\Sstr^1_{X_0,X_1}"]
    &&\Sfun\Tensp{\Sfun X_0}{X_1}
    \ar[d,"\Sfun\Sstr^0_{X_0,X_1}"]
    \\
    \Sfun^2\Tensp{X_0}{X_1}
    \ar[rr,"\Sflip_{\Tens{X_0}{X_1}}"]
    &&
    \Sfun^2\Tensp{X_0}{X_1}
  \end{tikzcd}
\end{equation*}
as indeed %
\(
\Sproj i\Compl\Sproj j\Compl(\Sfun\Sstr^1_{X_0,X_1})
\Compl\Sstr^0_{X_0,\Sfun X_1}=\Tens{\Sproj j}{\Sproj i}
\) and %
\(
\Sproj i\Compl\Sproj j\Compl(\Sfun\Sstr^0_{X_0,X_1})
\Compl\Sstr^1_{\Sfun X_0,X_1}=\Tens{\Sproj i}{\Sproj j}
\). %
The induced natural symmetric lax monoidality morphism %
\(
\Smont_{X_0,X_1}
=\Sfunadd\Compl(\Sfun\Sstr^0_{X_0,X_1})\Compl\Sstr^1_{\Sfun X_0,X_1}
=\Sfunadd(\Sfun\Sstr^1_{X_0,X_1})\Compl\Sstr^0_{X_0,\Sfun X_1}
\in\cL(\Tens{\Sfun X_0}{\Sfun X_1},\Sfun\Tensp{X_0}{X_1})
\) is fully characterized by %
\begin{align*}
  \Sproj 0\Compl\Smont_{X_0,X_1}=\Tens{\Sproj0}{\Sproj0}
  \text{\quad and\quad}
  \Sproj 1\Compl\Smont_{X_0,X_1}
  =\Tens{\Sproj1}{\Sproj0}+\Tens{\Sproj0}{\Sproj1}
\end{align*}
and its $0$-ary version is simply %
\(\Smontz=\Sin0\in\cL(\One,\Sfun\One)\).

If the SMC $\cL$ is also closed then we require in~\cite{Ehrhard23a}to
the summability structure to satisfy a further condition~\Saxfun.  We
use %
\((\Limpl XY,\Evlin\in\cL(\Tens{\Limplp XY}{X},Y))\) %
for the internal hom object of $X$ and $Y$ in $\cL$ and %
\(\Curlin f\in\cL(Z,\Limpl XY)\) for the curried version of %
\(f\in\cL(\Tens ZX,Y)\). With these notations, \Saxfun{} says that %
\(\Sfun\Limplp XY\) and \(\Limpl X{\Sfun Y}\) are isomorphic %
(more precisely the morphism %
\(\Sfun\Limplp XY\to\Limplp X{\Sfun Y}\) that one can define using %
\(\Sstr^0_{\Limpl XY,X}\) is an iso).
Intuitively this means that, on morphisms, summability is defined
``pointwise''.

\subsection{Differentiation as a double distributive law on a
  resource category} %
\label{sec:diff-distr-law}
We assume now that $\cL$ is a resource category which means that $\cL$
is a symmetric monoidal closed category which is also cartesian (with
cartesian product %
\((\Bwith_{i\in I}X_i,(\Proj i)_{i\in I})\) for any finite family of
objects \((X_i)_{i\in I}\) of $\cL$, the case $I=\emptyset$ yielding
the terminal object $\Top$ of $\cL$.
Being a resource category means also that $\cL$ is equipped with a
\emph{resource comonad}, that is a tuple
$(\Excl\_,\Der{},\Digg{},\Seelyz,\Seelyt)$ where $\Excl\_$ is a
functor $\cL\to\cL$ which is a comonad with counit $\Der{}$ (for
\emph{dereliction}) and comultiplication $\Digg{}$ (for
\emph{digging}), and $\Seelyz\in\cL(\Sone,\Excl\Top)$ and
$\Seelyt\in\cL(\Tens{\Excl X}{\Excl Y},\Excl{(\With XY)})$ are the
Seely isomorphisms subject to conditions that we do not recall here,
see for instance~\cite{Mellies09}, apart for the following which
explains how $\Digg{}$ interacts with $\Seelyt$.
\begin{equation}\label{eq:seely-digg-comm}
  \begin{tikzcd}
    \Tens{\Excl{X_0}}{\Excl{X_0}}
    \ar[rr,"\Tens{\Digg{X_0}}{\Digg{X_1}}"]
    \ar[d,swap,"\Seelyt_{X_0,X_1}"] &[1.8em] &[2em]
    \Tens{\Excll{X_0}}{\Excll{X_1}}
    \ar[d,"\Seelyt_{\Excl{X_0},\Excl{X_1}}"]
    \\
    \Excl{\Withp{X_0}{X_1}} \ar[r,"\Digg{\With{X_0}{X_1}}"] &
    \Excll{\Withp{X_0}{X_1}}
    \ar[r,"\Excl{\Tuple{\Excl{\Proj0},\Excl{\Proj1}}}"] &
    \Excl{\Withp{\Excl{X_0}}{\Excl{X_1}}}
  \end{tikzcd}
\end{equation}
Then $\Excl\_$ inherits a \emph{lax} symmetric monoidality on the SMC
$(\cL,\mathord\ITens)$.
This means that one can define %
$\Monz\in\cL(1,\Excl\Sone)$ and %
$\Mont_{X_0,X_1}\in\cL(\Tens{\Excl{X_0}}{\Excl{X_1}},\Excl{\Tensp{X_0}{X_1}})$
satisfying suitable coherence commutations.
Explicitly these morphisms are given by
\[
  \begin{tikzcd}
    \Sone\ar[r,"\Seelyz"]
    &[-1em]
    \Excl\Top\ar[r,"\Digg\Top"]
    &
    \Excll\Top\ar[r,"\Excl{\Invp{\Seelyz}}"]
    &
    \Excl\Sone
  \end{tikzcd}
\]
\[
  \begin{tikzcd}
    \Tens{\Excl{X_0}}{\Excl{X_1}}\ar[d,"\Seelyt_{X_0,X_1}"]
    \\
    \Excl{\Withp{X_0}{X_1}}\ar[d,"\Digg{\With{X_0}{X_1}}"]
    \\
    \Excll{\Withp{X_0}{X_1}}\ar[d,"\Excl{\Invp{\Seelyt_{X_0,X_1}}}"]
    \\
    \Excl{\Tensp{\Excl{X_0}}{\Excl{X_1}}}
    \ar[d,"\Excl{\Tensp{\Der{X_0}}{\Der{X_1}}}"]
    \\
    \Excl{\Tensp{{X_0}}{{X_1}}}
  \end{tikzcd}
\]
And by combining these morphisms in an arbitrary way one can define
uniquely %
\(
\Mong n_{\List X0{n-1}}
\in\cL(\Excl{X_0}\ITens\cdots\ITens\Excl{X_{n-1}},
\Exclp{X_0\ITens\cdots\ITens X_{n-1}})
\)
thanks to the coherence diagrams that we left implicit.

Remember that, using $\Kl\cL$ for the Kleisli category of this comonad
where we use the notation $g\Comp f$ to denote composition of
morphisms, we have a functor %
$\Kllin:\cL\to\Kl\cL$ defined by %
$\Kllin(X)=X$ for objects and, given $f\in\cL(X,Y)$, we set %
$\Kllin f=f\Compl\Der X\in\Kl\cL(X,Y)$.
Functoriality results from the fact that $(\Excl\_,\Der{},\Digg{})$ is
a comonad.
The intuition is that this functor allows one to see morphisms of
$\cL$ (considered as linear) as morphisms in $\Kl\cL$
where morphisms are not linear in general.
This is why this functor is often faithful but of course not full in
general.

The Kleisli category $\Kl\cL$ is cartesian with cartesian product of a
family $(X_i)_{i\in I}$ of objects of $\cL$ given by %
$(\Bwith_{i\in I}X_i,(\Proje i=\Kllin(\Proj i))_{i\in I})$. Given a
family of morphisms %
$f_i\in\Kl\cL(Y,X_i)$ for $i\in I$, the morphism %
$\Tuple{f_i}_{i\in I}\in\cL(\Excl Y,\Bwith_{i\in I}X_i)
=\Kl\cL(Y,\Bwith_{i\in I}X_i)$ is uniquely characterized by the fact
that %
$\Proje j\Comp\Tuple{f_i}_{i\in I}=f_j$ for each $j\in I$. 

Notice that if now %
$f_i\in\Kl\cL(X_i,Y_i)$ for each $i\in I$ we can define functorially %
$\Bwithe_{i\in I}f_i\in\Kl\cL(\Bwith_{i\in I}X_i,\Bwith_{i\in I}Y_i)$
by %
\begin{align*}
  \Bwithe_{i\in I}f_i
  =\Tuple{f_i\Comp\Proje i}_{i\in I}
  =\Tuple{f_i\Compl\Excl{\Proj i}}_{i\in I}
  =(\Bwith_{i\in I}f_i)\Compl\Tuple{\Excl{\Proj i}}_{i\in I}\,.
\end{align*}
\begin{lemma}
  Let $f\in\cL(X,Y)$ and $g\in\Kl\cL(Y,Z)$, we have %
  $g\Comp\Kllin(f)=g\Compl\Excl f$.
\end{lemma}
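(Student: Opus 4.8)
The plan is to reduce both sides to composites in $\cL$ and then apply a single comonad identity. First I would recall the two relevant definitions: for $h\in\Kl\cL(X,Y)$ and $g\in\Kl\cL(Y,Z)$ the Kleisli composition is $g\Comp h=g\Compl\Excl h\Compl\Digg X$, while by the very definition of the functor $\Kllin$ we have $\Kllin(f)=f\Compl\Der X\in\cL(\Excl X,Y)$ for $f\in\cL(X,Y)$.

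Next I would substitute $h=\Kllin(f)$ into the formula for Kleisli composition, obtaining $g\Comp\Kllin(f)=g\Compl\Exclp{f\Compl\Der X}\Compl\Digg X$. Using functoriality of the comonad endofunctor $\Excl\_$, the middle factor splits and the whole expression becomes $g\Compl\Excl f\Compl\Excl{\Der X}\Compl\Digg X$, where now $\Excl{\Der X}\in\cL(\Excll X,\Excl X)$ and $\Digg X\in\cL(\Excl X,\Excll X)$.

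The only substantive step is then to invoke the comonad counit law $\Excl{\Der X}\Compl\Digg X=\Excl X$ (the identity on $\Excl X$), which collapses the last two factors and leaves precisely $g\Compl\Excl f$, as claimed. I do not expect any genuine obstacle here: the argument is an immediate calculation once the definitions are unfolded. The sole point requiring a little care is to use the counit law in the form involving $\Excl{\Der X}$ rather than its companion $\Der{\Excl X}\Compl\Digg X=\Excl X$; it is the former that matches exactly the factor produced by pushing $\Excl\_$ through $\Kllin(f)=f\Compl\Der X$.
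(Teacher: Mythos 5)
Your proof is correct and is exactly the paper's argument: the paper's one-line proof simply invokes the comonad identity $\Excl{\Der X}\Compl\Digg X=\Id_{\Excl X}$, and your unfolding of the Kleisli composition and of $\Kllin(f)=f\Compl\Der X$ is precisely what that one-liner leaves implicit. Your closing remark about which counit law is needed is also the right observation.
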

\begin{proof}
  We have $\Excl{\Der X}\Compl\Digg X=\Id_{\Excl X}$.
\end{proof}

Given \(f\in\cL(\Excl X,Y)\) we set %
$\Prom f=\Excl f\Compl\Digg X\in\cL(\Excl X,\Excl Y)$ which is
sometimes called the $\oc$-lifting or the \emph{promotion} of
$f$.
Given %
\( f\in\cL(\Excl{X_0}\ITens\cdots\ITens\Excl{X_{n-1}},Y) \) %
one can define more generally %
\( \Prom f\in\cL(\Excl{X_0}\ITens\cdots\ITens\Excl{X_{n-1}},\Excl Y)
\) %
using \(\Mong n_{\List X0{n-1}}\).
Notice also that if %
\(f\in\cL(X,Y)\) one has \(\Prom{\Kllin(f)}=\Excl f\).

We assume furthermore that \(\cL\) is equipped with a summability
structure satisfying the axioms summarized in
Section~\ref{sec:cohdiff-summary} (we use the same notations) and we
also assume that the summability functor $\Sfun$ preserves cartesian
products and to simplify notations we assume that it preserves them
strictly, that is %
\begin{align} %
  \label{eq:Sfun-preserves-With}
  \Sfun(\Bwith_{i\in I}X_i)=\Bwith_{i\in I}\Sfun X_i
  \text{\quad and \quad}\forall i\in I\ \Sfun\Proj i=\Proj i\,.
\end{align}
A \emph{coherent differential structure} on such a summable resource
category consists of a natural transformation %
\(\Sdiff_X\in\cL(\Excl{\Sfun X},\Sfun\Excl X)\) satisfying a few
axioms that we recall here.

\begin{Axicond}{\Daxlocal}{ax:daxlocal}
  \(
    \begin{tikzcd}
      \Excl{\Sfun X}\ar[r,"\Sdiff_X"]\ar[dr,swap,"\Excl{\Sproj0}"]
      &\Sfun\Excl X\ar[d,"\Sproj0"]\\
      &\Excl X
    \end{tikzcd}
  \)
\end{Axicond}

\begin{Axicond}{\Daxlin}{ax:daxlin}
  \(
    \begin{tikzcd}
      \Excl X\ar[d,swap,"\Excl{\Sin 0}"]\ar[rd,"\Sin0"]
      &\\
      \Excl{\Sfun X}\ar[r,"\Sdiff_X"]
      &\Sfun\Excl X
    \end{tikzcd}
    \quad\quad
    \begin{tikzcd}
      \Excl{\Sfun^2X}\ar[r,"\Sdiff_{\Sfun X}"]\ar[d,swap,"\Excl{\Sfunadd}"]
      &\Sfun{\Excl{\Sfun X}}\ar[r,"\Sfun\Sdiff_X"]
      &\Sfun^2\Excl X\ar[d,"\Sfunadd"]\\
      \Excl{\Sfun X}\ar[rr,"\Sdiff_X"]
      &&
      \Sfun\Excl X
    \end{tikzcd}
  \)
\end{Axicond}
That is, $\Sdiff$ is a distributive law between the monad $\Sfun$ and
the functor $\Excl\_$.
This means essentially that derivatives commute with sums and with
$0$, that is, are linear.
This allows one to extend the comonad \(\Excl\_\) to the Kleisli
category of the monad \(\Sfun\) which is again a resource category.
It can be understood as an \emph{infinitesimal} extension of \(\cL\);
this construction, as well as its syntactical outcomes, will be
studied in further work.

\begin{Axicond}{\Daxchain}{ax:daxchain}
  \(
    \begin{tikzcd}
      \Excl{\Sfun X}\ar[r,"\Sdiff_X"]\ar[rd,swap,"\Der{\Sfun X}"]
      &\Sfun\Excl X\ar[d,"\Sfun\Der X"]\\
      &\Sfun X
    \end{tikzcd}
    \quad
    \begin{tikzcd}
      \Excl{\Sfun X}\ar[rr,"\Sdiff_X"]\ar[d,swap,"\Digg{\Sfun X}"] &
      &\Sfun\Excl X\ar[d,"\Sfun\Digg X"]\\
      \Excll{\Sfun X}\ar[r,"\Excl{\Sdiff_X}"] &\Excl{\Sfun\Excl
        X}\ar[r,"\Sdiff_{\Excl X}"] &\Sfun\Excll X
    \end{tikzcd}
  \)
\end{Axicond}
That is, $\Sdiff$ is a distributive law between the comonad $\Excl\_$
and the functor $\Sfun$.
This allows one to extend $\Sfun$ to an endofunctor on $\Kl\cL$.

\begin{Axicond}{\Daxwith}{ax:daxwith}
\(
\begin{tikzcd}
  \Excl{\Sfun\Top}\ar[rr,"\Sdiff_\Top"]\ar[d,swap,"\Excl 0"]
  &&\Sfun{\Excl\Top}\ar[d,"\Sfun\Inv{(\Seelyz)}"]\\
  \Excl\Top\ar[r,"\Inv{(\Seelyz)}"]&\Sone\ar[r,"\Sin0"]&\Sfun\Sone
\end{tikzcd}
\)

\(
\begin{tikzcd}
  \Excl{\Sfun{\Withp{X_0}{X_1}}}\ar[r,"\Sdiff_{\With{X_0}{X_1}}"]
  \ar[d,swap,"\Excl{\Tuple{\Sfun{\Proj0},\Sfun{\Proj1}}}"]
  &\Sfun\Excl{\Withp{X_0}{X_1}}\ar[r,"\Sfun\Inv{(\Seelyt)}"]
  &[1.2em]\Sfun\Tensp{\Excl{X_0}}{\Excl{X_1}}\\
  \Excl{\Withp{\Sfun X_0}{\Sfun X_1}}\ar[r,"\Inv{(\Seelyt)}"]
  &\Tens{\Excl{\Sfun X_0}}{\Excl{\Sfun X_1}}
  \ar[r,"\Tens{\Sdiff_{X_0}}{\Sdiff_{X_1}}"]
  &\Tens{\Sfun{\Excl{X_0}}}{\Sfun{\Excl{X_1}}}\ar[u,swap,"\Smont_{\Excl{X_0},\Excl{X_1}}"]
\end{tikzcd}
\)

Notice that our assumption that $\Sfun$ preserves $\IWith$ strictly (in
the sense of~\Eqref{eq:Sfun-preserves-With}) means that the
morphism %
\(\Tuple{\Sfun{\Proj0},\Sfun{\Proj1}}\) is the identity.
  
\end{Axicond}
Intuitively, this diagram expresses that a derivative wrt.~a pair of
variables ($\Sdiff_{\With{X_0}{X_1}}$) can be expressed as a sum of
partial derivatives ($\Sdiff_{X_0}$ and $\Sdiff_{X_1}$):
this is deeply related to the Leibniz Rule.

\begin{Axicond}{\Daxschwarz}{ax:daxschwarz}
  \(
    \begin{tikzcd}
      \Excl{\Sfun^2X}\ar[r,"\Sdiff_{\Sfun X}"]\ar[d,swap,"\Excl\Sflip"]
      &\Sfun\Excl{\Sfun X}\ar[r,"\Sfun\Sdiff_X"]
      &\Sfun^2\Excl X\ar[d,"\Sflip"]\\
      \Excl{\Sfun^2X}\ar[r,"\Sdiff_{\Sfun X}"]
      &\Sfun\Excl{\Sfun X}\ar[r,"\Sfun\Sdiff_X"]
      &\Sfun^2\Excl X
    \end{tikzcd}
  \)
\end{Axicond}
This expresses that the second derivative is a symmetric bilinear function.

When a resource category \(\cL\) is equipped with a summability
structure \((\Sfun,\Sproj0,\Sproj1,\Ssum)\) and a natural
transformation \(\Sdiff\) satisfying these axioms, it is called a
\emph{differential summable resource category}.

\subsubsection{The induced coherent differentiation monad}
Thanks to~\ref{ax:daxchain} 
we extend the functor $\Sfun$ to a functor %
\(\Sdfun:\Kl\cL\to\Kl\cL\) on the Kleisli category of the comonad
$\Excl\_$ as follows:
\(\Sdfun X=\Sfun X\) and, if \(f\in\Kl\cL(X,Y)\) then %
\(\Sdfun f=(\Sfun f)\Compl\Sdiff_X\in\Kl\cL(\Sdfun X,\Sdfun Y)\).
Then we can define $\Sdfunit_X=\Kllin{\Sin0}\in\Kl\cL(X,\Sdfun X)$ and
$\Sdfmult_X=\Kllin{\Sfunadd}\in\Kl\cL(\Sdfun^2 X,\Sdfun X)$ and the
condition~\ref{ax:daxlin} 
entails that these morphisms are natural;
the intuitive meaning of that condition is that the differential of a
map of the Kleisli category is linear in the sense that it commutes
with the algebraic operation represented by \(\Sdfunit\) and
\(\Sdfmult\).
These natural transformations are easily seen to equip $\Sdfun$ with a
monad structure.

\subsubsection{The elementary situation} %
\label{sec:elementary-summability}
In the models of coherent differentiation we know for the time being,
the summability and differential structures arise from more basic
properties%
\footnote{And not from an additional \emph{structure}:
  it is interesting to observe that the summability structure we are
  describing in this section, when it exists, is completely
  characterized by pre-existing structures of the category \(\cL\), so
  having such a summability structure is indeed a \emph{property} of
  \(\cL\).} %
of one specific object of a resource category with zero-morphisms
\(\cL\), namely
\[
  \Into=\With\Sone\Sone\,.
\]
Notice first that it is always possible to define
\(\Win 0,\Win 1,\Wdiag\in\cL(\Sone,\Into)\) by
\(\Win0=\Tuple{\Sone,0}\), \(\Win1=\Tuple{0,\Sone}\) and
\(\Wdiag=\Tuple{\Sone,\Sone}\).
We say that \(\cL\) is \emph{elementarily summable} if the two
following properties hold.
Our first assumption is that \(\Win0,\Win1\) are jointly epic,
meaning that morphisms \(f\in\cL(\Into,X)\) are fully characterized by
\(f\Compl\Win0,f\Compl\Win1\in\cL(\Sone,X)\).
Our second assumption, corresponding to~\Saxwit{} of~\cite{Ehrhard23a},
is that if \(f_0,f_1,g\in\cL(\Tens X\Into,Y)\) are such that the
following diagrams commute
\begin{equation*}
  \begin{tikzcd}
    \Tens X\Sone\ar[r,"\Tens X\Wdiag"]\ar[d,swap,"\Tens X{\Win r}"]
    &[1.2em]
    \Tens X\Into\ar[d,"f_r"]
    \\
    \Tens X\Into\ar[r,"g"]
    &
    Y
  \end{tikzcd}
\end{equation*}
for \(r=0,1\) then there is a (necessarily unique by our first
assumption) morphism \(h\in\cL(\Tens{\Tens X\Into}\Into,Y)\) such that
the following diagram commutes
\begin{equation*}
  \begin{tikzcd}
    \Tens{\Tens X\Into}\Sone
    \ar[r,"\Tens{\Tens X\Into}{\Win r}"]
    \ar[d,swap,"\Rightu"]
    &[2em]
    \Tens{\Tens X\Into}\Into\ar[d,"h"]
    \\
    \Tens X\Into\ar[r,"f_r"]
    &
    Y
  \end{tikzcd}
\end{equation*}
for \(r=0,1\).
Under these assumptions, the functor \(\Sfun=(\Limpl\Into\_)\),
equipped with natural transformations
\(\Sproj0,\Sproj1,\Ssum:\Sfun\Tonat\Id\) easily defined by
precomposition with \(\Win0,\Win1,\Wdiag\), is a summability structure
on \(\cL\) which is called \emph{the elementary summability structure}
of \(\cL\).

It is also possible to equip \(\Into\) with a structure of
cocommutative comonoid, with counit \(\Proj0\in\cL(\Into,\Sone)\) and
comultiplication \(\Scmont\in\cL(\Into,\Tens\Into\Into)\) fully
characterized by
\begin{equation*}
  \begin{tikzcd}
    \Sone
    \ar[r,"\Win0"]
    \ar[d,swap,"\Inv\Leftu=\Inv\Rightu"]
    &[1.4em]
    \Into
    \ar[d,"\Scmont"]
    \\
    \Tens\Sone\Sone
    \ar[r,"\Tens{\Win0}{\Win0}"]
    &
    \Tens\Into\Into
  \end{tikzcd}
  \Treesep
  \begin{tikzcd}
    \Sone
    \ar[r,"\Win1"]
    \ar[d,swap,"\Inv\Leftu=\Inv\Rightu"]
    &[4.4em]
    \Into
    \ar[d,"\Scmont"]
    \\
    \Tens\Sone\Sone
    \ar[r,"\Tens{\Win0}{\Win1}+\Tens{\Win1}{\Win0}"]
    &
    \Tens\Into\Into
  \end{tikzcd}
\end{equation*}
where the sum is defined in terms of the elementary summability
structure, and exists by our assumptions.

It is proven in~\cite{Ehrhard23a} that endowing this elementary
summability structure \((\Sfun,\Sproj0,\Sproj1,\Ssum)\) with a
coherent differentiation structure amounts to equipping \(\Into\) with
a \(\oc\)-coalgebra structure \(\Sdiffca\in\cL(\Into,\Excl\Into)\)
which satisfies also some compatibility with the comonoid structure
\((\Proj0,\Scmont)\) of \(\Into\).
We don't need here to go into these technicalities, we only need to
know that, when \(\cL\) is a Lafont
category~\cite{Mellies09,Ehrhard23a}, such a \(\Sdiffca\) always
exists and is induced by the comonoid structure of \(\Into\):
remember indeed that a resource category is Lafont if, roughly
speaking, any cocommutative comonoid is a \(\oc\)-coalgebra, in a
unique way. See~\cite{Ehrhard23a},
Theorem~21 %
for more details.

So, for a Lafont resource category with zero-morphisms, being an
elementary coherent differential category is a property, not an
additional structure: it simply means that it satisfies the two
conditions of elementary summability.

\subsubsection{\(\Cpolike\) summable categories} %
\label{sec:Scott-summable}
One distinctive feature of the present approach to differentiation is
its built-in compatibility with general recursion at all types.
In the models which motivated this work, general recursion is
implemented by means of fixpoint operators which arise from a cpo
structure of homsets as usual in domain-theoretic models of
\PCF{}~\cite{Plotkin77}.
We explain how the associated order relation is induced by the
summability structure.

Let $(\cL,\Sfun)$ be a summable category.
Let $f_0,f_1\in\cL(X,Y)$, we write $f_0\leq f_1$ if there exists
$h\in\cL(X,\Sfun Y)$ such that $\Sproj0\Compl=f_0$ and
$\Ssum\Compl h=f_1$.
In other words:
there is $g\in\cL(X,Y)$ such that $f_0,g$ are summable and $f_1=f_0+g$.
\begin{lemma}
  The relation $\leq$ on $\cL(X,Y)$ is a preorder relation for which
  \(0\) is the least element.
\end{lemma}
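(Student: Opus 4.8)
The plan is to work throughout with the equivalent reformulation of $\leq$ given immediately after its definition: $f_0\leq f_1$ holds precisely when there is some $g\in\cL(X,Y)$ such that $f_0$ and $g$ are summable and $f_1=f_0+g$. The whole argument rests on the fact, recalled just above the statement, that the axioms \Saxcom, \Saxzero{} and \Saxwit{} make each homset $\cL(X,Y)$ a partial commutative monoid with neutral element $0$. Concretely I will use two consequences of this structure: first, that $f+0$ is always defined and equals $f$, so every morphism is summable with $0$; and second, the associativity law of partial commutative monoids, namely that $f_0+g$ and $(f_0+g)+g'$ are defined if and only if $g+g'$ and $f_0+(g+g')$ are defined, in which case the two triple sums agree.

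For reflexivity, given $f\in\cL(X,Y)$ I take the witness $g=0$: since $f$ and $0$ are summable with $f+0=f$, this yields $f\leq f$. The same observation settles the least-element claim: for an arbitrary $f$, choosing the witness $g=f$ makes $0$ and $f$ summable with $0+f=f$, whence $0\leq f$. Both of these are immediate and require nothing beyond the neutrality of $0$.

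The only step that needs a little care is transitivity, and it is where I expect the single genuine subtlety to lie. Suppose $f_0\leq f_1$ and $f_1\leq f_2$, witnessed respectively by $g$ with $f_1=f_0+g$ and by $g'$ with $f_2=f_1+g'$. Substituting the first equation into the second gives $f_2=(f_0+g)+g'$, where by hypothesis both $f_0+g$ and the outer sum are defined. The associativity law then guarantees that $g+g'$ is itself defined and that $f_0+(g+g')$ is defined and equal to $(f_0+g)+g'=f_2$. Taking $g''=g+g'$ as witness, we conclude $f_0\leq f_2$.

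The point to be careful about — and the reason transitivity is the non-trivial case — is that in a \emph{partial} commutative monoid one must verify summability (definedness), not merely equality of the formal sums: it is precisely the associativity axiom supplied by \Saxwit{} that delivers the summability of the pair $(f_0,\,g+g')$, without which the witness $g''$ would not be legitimate. Once this is observed, all three properties follow directly from the partial commutative monoid laws, with no further appeal to the ambient categorical structure.
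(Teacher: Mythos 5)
Your proof is correct and is exactly the verification the paper treats as immediate: the paper states this lemma without proof, relying on the fact (recalled just before it) that \Saxcom, \Saxzero{} and \Saxwit{} make each homset a partial commutative monoid, and your three steps (witness $0$ for reflexivity, witness $f$ for $0\leq f$, and PCM associativity to produce the witness $g+g'$ for transitivity) are precisely the intended argument. Your only loose point is attributing associativity to \Saxwit{} alone --- strictly, the paper derives the PCM structure (including associativity) from the three axioms together, with the detailed derivation in the cited reference --- but this does not affect the validity of the proof.
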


\begin{definition}
  The summable category $(\cL,\Sfun)$ is \emph{\Cpolike} if, equipped
  with $\leq$, any homset $\cL(X,Y)$ is a poset (with $0$ as least
  element as we have seen), which is \(\omega\)-complete is the sense
  that any monotone $\omega$-sequence of elements of $\cL(X,Y)$ has a
  least upper bound%
  \footnote{On purpose we do not ask all directed sets to have a lub
    as it is usual in domain theory because we have in mind
    probability based models where such a requirement would prevent us
    from applying the monotone convergence theorem.}.
  Morphism composition is required to be continuous.
  This means in particular that \(\cL\) is enriched in \(\omega\)-cpos.
  Moreover the $\ITens$ tensor is assumed to be locally continuous (in
  both arguments) when $\cL$ is an SMC.
  Next, the functor $\Sfun$ must be locally continuous, and, in the
  case where $\cL$ is a resource category, the functor $\Excl\_$ is
  also assumed to be locally continuous.
\end{definition}

Assume that \(\cL\) is a \Cpolike{} summable resource category.
In the CCC $\Kl\cL$, for any object $X$, we can define a sequence of
morphisms %
$\Sfix_n^X\in\Kl\cL(\Simpl XX,X)$ by induction as follows
\begin{align*}
  \Sfix^X_0 &=0\\
  \Sfix^X_{n+1} &= \Ev\Comp\Tuple{\Simpl XX,\Sfix^X_n}
\end{align*}
and an easy induction, using the minimality of $0$ and the fact that
all categorical operations are monotone wrt.~$\leq$, shows that the
sequence $(\Sfix^X_n)_{n\in\Nat}$ is monotone.
We set
\begin{align*}
  \Sfix^X=\sup_{n\in\Nat}\Sfix^X_n\in\Kl\cL(\Simpl XX,X)
\end{align*}
and by continuity of all categorical operations we have
\begin{align}
  \label{eq:Sfix-charact}
  \Sfix^X=\Ev\Comp\Tuple{\Simpl XX,\Sfix^X}
\end{align}
which means that \(\Sfix^X\) is a fixpoint operator.

We will illustrate now these general definitions in two concrete
models of \(\LL\) which will also be essential in the technical
developments of this paper.

\subsection{The relational model} %
\label{sec:rel-model}
The simplest example of a resource category is the category $\REL$ of
sets and relations which is a well known model of classical \(\LL\).
In~\cite{Ehrhard23a} it is proven that \(\REL\) is also an
elementary differential summable resource category;
actually we proved this result for the category $\NUCS$ of non-uniform
coherence spaces, but in that category the operations on morphisms are
exactly the same as in $\REL$ and the operations on objects are the
same as in $\REL$ as far as the webs are concerned%
\footnote{A non-uniform coherence space $X$ is a triple consisting of
  a set $\Web X$, the web of $X$, and two disjoint binary symmetric
  relations on that web, the strict coherence and the strict
  incoherence relations.
  Remember that \(\REL\) is a model of \emph{differential \LL{}}
  ---~which is not the case of \(\NUCS\)~---, so it is not surprising
  at all that it is also a coherent differential category.}.

Let us briefly recall the definition of this model of $\LL$. An object
of $\REL$ is a set and $\REL(X,Y)=\Part{X\times Y}$, composition being
the usual composition of relations denoted $v\Compl u$ when
$u\in\REL(X,Y)$ and $v\in\REL(Y,Z)$.
The identity morphism is the diagonal relation.
The isos in this category are the bijections.

This category $\REL$ is symmetric monoidal with
$\Sone=\Eset{\Sonelem}$ as tensorial unit and
$\Tens{X_0}{X_1}=X_0\times X_1$, and given $u_i\in\REL(X_i,Y_i)$ for
$i=0,1$, one sets %
$\Tens{u_0}{u_1}=\Eset{((a_0,a_1),(b_0,b_1))\St (a_i,b_i)\in u_i\text{
    for }i=0,1)}$ defining a functor $\REL^2\to\REL$ which has obvious
natural isos %
$\Leftu_X\in\REL(\Tens\Sone X,X)$, $\Rightu_X\in\REL(\Tens
X\Sone,X)$, %
$\Assoc_{X_0,X_1,X_2}
\in\REL(\Tens{\Tensp{X_0}{X_1}}{X_2},\Tens{X_0}{\Tensp{X_1}{X_2}})$
and %
$\Sym_{X_0,X_1}\in\REL(\Tens{X_0}{X_1},\Tens{X_1}{X_0})$.
This SMC is closed with internal hom from $X$ to $Y$ the pair %
$(\Limpl XY,\Evlin)$ where %
$\Limpl XY=X\times Y$ and
$\Evlin=\Eset{(((a,b),a),b)\St a\in X\text{ and }b\in
  Y}\in\REL(\Tens{\Limplp XY}{X},Y)$. Given any morphism %
$u\in\REL(\Tens ZX,Y)$, the associated morphism (Curry transpose of
$u$) %
$\Curlin u\in\REL(Z,\Limpl XY)$ is simply %
$\Curlin u=\Eset{(a,(b,c))\St ((a,b),c)\in u}$.
This SMCC is \Stauto{} with dualizing object $\Bott=\Sone$, so that
the ``linear negation'' of an object $X$ is simply $X$.

The category $\REL$ is cartesian: the cartesian product of a family
$(X_i)_{i\in I}$ of objects of $\REL$ is %
$(\Bwith_{i\in I}X_i,(\Proj i)_{i\in I})$ where %
$\Bwith_{i\in I}X_i=\Union_{i\in I}\Eset i\times X_i$ and %
$\Proj i=\Eset{((i,a),a)\St i\in I\text{ and }a\in X_i}
\in\REL(\Bwith_{j\in I}X_j,X_i)$ is the $i$th projection.
The tupling of a family of morphisms %
$(u_i\in\REL(Y,X_i))_{i\in I}$ is the morphism %
$\Tuple{u_i}_{i\in I}\in\REL(Y,\Bwith_{i\in I}X_i)$ given by %
$\Tuple{u_i}_{i\in I} =\Eset{(b,(i,a))\St i\in I\text{ and }(b,a)\in
  u_i}$.
The coproduct %
$(\Bplus_{i\in I}X_i,(\Inj i)_{i\in I})$ also exists and is given by %
$\Bplus_{i\in I}X_i=\Bwith_{i\in I}X_i$ and %
$\Inj i\in\REL(X_i,\Bplus_{j\in I}X_j)$ is given by %
$\Inj i=\Eset{(a,(i,a))\St i\in I\text{ and }a\in X_i}$;
it is the $i$th injection.
The cotupling of morphisms $(u_i\in\REL(X_i,Y))_{i\in I}$ is %
$\Cotuple{u_i}_{i\in I}\in\REL(\Bplus_{i\in I}X_i,Y)$ given by %
$\Cotuple{u_i}_{i\in I}=\Eset{((i,a),b)\St (a,b)\in u_i}$.
Notice that the terminal (and initial) object of $\REL$ is
$\Top=\Zero=\emptyset$.

The SMC $\REL$ is a Lafont category~\cite{Mellies09,Ehrhard23a}. The
exponential functor is given by $\Excl X=\Mfin X$ and, if
$s\in\REL(X,Y)$ then %
\(
\Excl s =\{(\Mset{\List a1k},\Mset{\List b1k})\St k\in\Nat \text{
  and }\forall i\ (a_i,b_i)\in s)\}
\), %
defining a functor $\REL\to\REL$.
The comonad structure of that functor is given by the natural
transformations %
$\Der X=\Eset{(\Mset a,a)\St a\in X}\in\REL(\Excl X,X)$ and %
$\Digg X=\Eset{(\Mset{\List m1n},m_1+\cdots+m_n) \St n\in\Nat\text{
    and }\List m1n\in\Excl X}\in\REL(\Excl X,\Excll X)$.
The Seely isomorphisms are %
$\Seelyz=\Eset{(\Sonelem,\Emptymset)}\in\REL(\Sone,\Excl\Top)$ and %
\begin{multline*}
  \Seelyt_{X,Y}
  =\{((\Mset{\List a1n},\Mset{\List b1k})
  ,\Mset{(1,a_1),\dots,(1,a_n)},(2,b_1),\dots,(2,b_k)) \St\\
  \List
  a1n\in X\text{ and }\List b1k\in Y\}\in\REL(\Tens{\Excl X}{\Excl
  Y},\Exclp{\With XY})\,.
\end{multline*}
The Kleisli category \(\Kl\REL\) can be
directly described as %
\(\Kl\REL(X,Y)=\Mfin X\times Y\), with identity %
\(\Id_X=\Eset{(\Mset a,a)\St a\in X}\in\Kl\REL(X,X)\) %
and, given \(u\in\Kl\REL(X,Y)\) and \(v\in\Kl\REL(Y,Z)\), %
composition given by
\(
v\Comp u
=\Eset{(m_1+\cdots+m_k,c)\St k\in\Nat\text{ and }
  \exists\List b1k\in Y\
  (\Mset{\List b1k},c)\in v
  \text{ and }
(m_i,b_i)\in u\text{ for }i=1,\dots,k}
\).

It is easy to see that $\REL$ is elementarily summable (see
Section~\ref{sec:elementary-summability}) and has therefore exactly
one coherent differential structure.
The cocommutative comonoid structure of
\(\Into=\With\Sone\Sone=\Eset{0,1}\) has counit %
\(\Proj0=\Eset{(0,\Sonelem)}\in\REL(\Into,\Sone)\) %
and comultiplication %
\begin{multline*}
\Scmont
=\Eset{(0,(0,0)),(1,(1,0)),(1,(0,1))}\\
=\{(r,(r_0,r_1))\St r,r_0,r_1\in\Into\text{ and }r=r_0+r_1)\}
\in\REL(\Into,\Tens\Into\Into)
\end{multline*} %
as explained in %
Section~5.1 
of~\cite{Ehrhard23a} %
which, by the Lafont property, induces on \(\Into\) the $\oc$-coalgebra
structure \(\Sdiffca\in\REL(\Into,\Excl\Into)\) given by %
\begin{multline*}
  \Sdiffca =\Eset{(0,k\Mset 0)\St k\in\Nat} \cup\Eset{(1,\Mset
    1+k\Mset 0)\St k\in\Nat} \\
  =\Eset{(r,\Mset{\List
    r1k})\in\Into\times\Mfin\Into \St r=\sum_{i=1}^kr_i}\,.
\end{multline*}
As explained in Section~\ref{sec:elementary-summability}, in this
elementary setting, the associated summability functor is %
\[
  \Scfun=(\Limpl\Into\_)\,,
\]
and is therefore given by %
\(\Scfun X=\Eset{0,1}\times X\) and, for \(u\in\REL(X,Y)\), %
\(\Scfun u=\Eset{((i,a),(i,b))\St i\in\Into\text{ and }(a,b)\in u}\).
The associated natural transformations are %
\(
\Sproj i=\Eset{((i,a),a)\St a\in X},
\Ssum
=\Sproj0\cup\Sproj1
=\Eset{((i,a),a)\St i\in\Into\text{ and }a\in X}
\in\REL(\Scfun X,X)
\) %
and the two injections are %
\(\Sin i=\Eset{(a,(i,a))\St a\in X}\in\REL(X,\Scfun X)\) %
for $i=0,1$. %
So any two morphisms \(f_0,f_1\in\REL(X,Y)\) are summable, and their
sum is \(f_0\cup f_1\).

The monad structure of \(\Scfun\) has
\(
\Sin0
=\Eset{(a,(0,a))\St a\in X}\in\REL(X,\Scfun X)
\) as unit and %
\begin{align*}
\Sfunadd
&=\Eset{((0,0,a),(0,a))\St a\in X}
\cup\Eset{((1,0,a),(1,a))\St a\in X}
\cup\Eset{((0,1,a),(1,a))\St a\in X}\\
&=\Eset{((r_0,r_1,a),(r,a))
  \St a\in X,\ r,r_0,r_1\in\Into \text{ and }r=r_0+r_1}
\in\REL(\Scfun^2X,\Scfun X)
\end{align*} %
as multiplication. %
The flip is %
\(
\Sflip=
\{((r_0,r_1,a),(r_1,r_0,a))\St r_0,r_1\in\Into\text{ and }a\in X\}
\in\REL(\Scfun^2X,\Scfun^2X)
\).

Notice that there is a natural bijection between %
\(\Excl{\Scfun X}\) and \(\Mfin X\times\Mfin X\), mapping the multiset %
\(\Mset{(0,a_1),\dots,(0,a_l),(1,b_1),\dots,(1,b_r)}\) to %
\((\Mset{\List a1l},\Mset{\List b1r})\).
The natural transformation %
\(\Sdiff_X\in\REL(\Excl{\Scfun X},\Scfun{\Excl X})\) is defined as the
Curry transpose of the following composition of morphisms
\[
  \begin{tikzcd}
    \Tens{\Exclp{\Limpl\Into X}}{\Into}
    \ar[r,"\Tens{\Exclp{\Limpl\Into X}}{\Sdiffca}"]
    &[2em]
    \Tens{\Exclp{\Limpl\Into X}}{\Excl\Into}
    \ar[r,"\Mont_{\Limpl\Into X,\Into}"]
    &[0.6em]ms
    \Exclp{\Tens{\Limplp\Into X}{\Into}}
    \ar[r,"\Excl\Evlin"]
    &[-1em]
    \Excl X
  \end{tikzcd}\,.
\]
(see~\cite{Ehrhard23a},
Theorem~20 %
and hence %
\begin{multline*}
  \Sdiff_X =\Eset{((m,\Emptymset),(0,m))\St m\in\Mfin X}\\
  \cup\Eset{((m,\Mset a),(1,m+\Mset a))\St m\in\Mfin X\text{ and }a\in
    X}\,.
\end{multline*}
It will also be convenient to write equivalently
\begin{multline*}
\Sdiff_X=
  \{(m',(r,m))\in\Mfin{\Into\times X}\times(\Into\times\Mfin X)
  \St\\
  m=\Mset{\List a1k},\ m'=\Mset{(r_1,a_1),\dots,(r_k,a_k)}
  \text{ and }r=r_1+\cdots+r_k\}\,.
\end{multline*}

Therefore the functor \(\Sdfun:\Kl\REL\to\Kl\REL\), which is defined
on objects by %
\(\Sdfun X=\Scfun X\) and on morphisms by %
\(\Sdfun u=(\Scfun u)\Compl\Sdiff_X\in\REL(\Excl{\Scfun X},\Scfun Y)\)
for %
\(u\in\Kl\REL(X,Y)\) satisfies
\begin{align*}
  \Sdfun u&=
  \Eset{((m,\Emptymset),(0,b))\St (m,b)\in u}
            \cup\Eset{((m,\Mset a),(1,b))\St (m+\Mset a,b)\in u}\\
          &=\{(m',(r,b))\St m'
            =\Mset{(r_1,a_1),\dots,(r_k,a_k)}\in\Mfin{\Into\times X},\\
          &\Textsep\Textsep  (\Mset{\List a1k},b)\in u\text{ and }
            r=r_1+\cdots+r_k\in\Into\}\,.
\end{align*} %
Then the monad structure of \(\Scfun\) is extended to \(\Sdfun\)
by %
\(
\Sdfunit_X
=\{(\Mset a,(0,a))\St a\in X\}\in\Kl\REL(X,\Sdfun X)
\) %
and %
\begin{align*}
  \Sdfmult_X
  &=\Eset{(\Mset{(0,0,a)},(0,a))\St a\in X}\\
  &\Textsep\cup\Eset{(\Mset{(1,0,a)},(1,a))\St a\in X}
    \cup\Eset{(\Mset{(0,1,a)},(1,a))\St a\in X}\\
  &=\Eset{(\Mset{(r_0,r_1,a)},(r,a))\St
    a\in X,\ r,r_0,r_1\in\Into\text{ and }r=r_0+r_1}
    \in\Kl\REL(\Sdfun^2X,\Sdfun X)\,.
\end{align*}

Concerning fixpoints, observe that \((\REL,\Scfun)\) is \Cpolike{} (in
the sense of Section~\ref{sec:Scott-summable}) with $\subseteq$ as
associated order relation on morphisms. One checks easily that the %
\((\Sfix_n^X\in\Kl\REL(\Simpl XX,X))_{n\in\Nat}\) %
are given by %
\(\Sfix_0^X=\emptyset\) and %
\( \Sfix_{n+1}^X =\Eset{(m_1+\cdots+m_k+\Mset{(\Mset{\List
      a1k},a)},a)\St k\in\Nat\text{ and }
  (m_1,a_1),\dots,(m_k,a_k)\in\Sfix_n^X} \) so that %
\(\Sfix^X\in\Kl\REL(\Simpl XX,X)\) is the least set such that %
for all \(k\in\Nat\), \((m_1,a_1),\dots,(m_k,a_k)\in\Sfix^X\) %
and \(a\in X\), one has %
\((m_1+\cdots+m_k+\Mset{(\Mset{\List a1k},a)},a)\in\Sfix^X\).

\subsection{Probabilistic coherence spaces as an elementary coherent
  differential category} %
\label{sec:PCS-definition}
We present now the model which motivated this whole investigation, the
model of probabilistic coherence spaces (PCS~\cite{DanosEhrhard08}),
and explain why it is an elementary model of coherent differentiation.
One can see \(\PCOH\) as obtained by adding to the purely combinatorial
skeleton \(\REL\) some ``numerical flesh'' whose purpose is to
describe probabilities of computational events.

Given an at most countable set $A$ and $u,u'\in\Realpcto A$, we set
$\Eval u{u'}=\sum_{a\in A}u_au'_a\in\Realpc$ where \(\Realpc\) is the
completed half real line. Given $P\subseteq\Realpcto A$, we define
$\Orth P\subseteq\Realpcto A$ as
\begin{align*}
  \Orth P=\{u'\in\Realpcto A\St\forall u\in P\ \Eval u{u'}\leq 1\}\,.
\end{align*}
Observe that if $P$ satisfies
\( \forall a\in A\,\exists x\in P\ x_a>0 \) and
\( \forall a\in A\,\exists m\in\Realp \forall x\in P\ x_a\leq m \)
then $\Orth P\in\Realpto I$ and $\Orth P$ satisfies the same two
properties that we call \emph{local boundedness} which can also be
rephrased as
\begin{align*}
  \forall a\in A\quad 0<\sup_{x\in P}x_a<\infty\,.
\end{align*}

A probabilistic pre-coherence space (pre-PCS) is a pair
$X=(\Web X,\Pcoh X)$ where $\Web X$ is an at most countable
set\footnote{This restriction is not technically necessary, but very
  meaningful from a philosophic point of view; the non countable case
  should be handled via measurable spaces and then one has to consider
  more general objects as in~\cite{EhrhardPaganiTasson18} for
  instance.} and $\Pcoh X\subseteq\Realpcto{\Web X}$ satisfies
$\Biorth{\Pcoh X}=\Pcoh X$. A probabilistic coherence space (PCS) is a
pre-PCS $X$ such that \(\Pcoh X\) is locally bounded.

Given a PCS $X$ and \(x\in\Pcoh X\) we set
$\Norm x_X=\sup_{x'\in\Pcoh{\Orth X}}\Eval x{x'}\in\Intcc01$. This
operation obeys the usual properties of a norm: %
\(\Norm x=0\Implies x=0\), %
\(\Norm{x_0+x_1}\leq\Norm{x_0}+\Norm{x_1}\) and %
\(\Norm{\lambda x}=\lambda\Norm x\) for all \(\lambda\in\Intcc 01\).

\begin{remark}%
  \label{rem:pcoh-comp-notation}
  Given \(x\in\Pcoh X\) and \(a\in\Web X\) we use the notations
  \(x_a\) or \(x(a)\) for the corresponding element of \(\Realp\),
  depending on the context. In some situations \(x_i\) can denote an
  element of \(\Pcoh X\) and in such a situation we will prefer the
  notation \(x_i(a)\) to denote the \(a\)-component of \(x_i\) to
  avoid the ugly \({x_i}_a\).
\end{remark}

Given $t\in\Realpcto{A\times B}$ considered as a matrix (where $A$ and
$B$ are at most countable sets) and $u\in\Realpcto A$, we define
$\Matappa tu\in\Realpcto B$ by
$(\Matappa tu)_b=\sum_{a\in A}t_{a,b}u_a$ (usual formula for applying
a matrix to a vector), and if $s\in\Realpcto{B\times C}$ we define the
product%
\footnote{We write this product in the reverse order wrt.~the usual
  algebraic conventions on matrices, because it is the notion of
  composition in our category and we respect the standard order of
  factors when writing a composition in a category.
  This is a well known and unfortunate mismatch of conventions.}
$\Matapp st\in\Realpcto{A\times C}$ of the matrix $s$ and $t$ as usual
by $(\Matapp st)_{a,c}=\sum_{b\in B}t_{a,b}s_{b,c}$. This is an
associative operation.

Let $X$ and $Y$ be PCSs, a morphism from $X$ to $Y$ is a matrix
$t\in\Realpto{\Web X\times\Web Y}$ such that
$\forall x\in\Pcoh X\ \Matappa tx\in\Pcoh Y$. It is clear that the
identity (diagonal) matrix is a morphism from $X$ to $X$ and that the matrix
product of two morphisms is a morphism and therefore, PCSs equipped
with this notion of morphism form a category $\PCOH$.

The condition $t\in\PCOH(X,Y)$ is equivalent to
\(
\forall x\in\Pcoh X\,\forall y'\in\Pcoh{\Orth Y}\ \Eval{\Matappa tx}{y'}\leq 1
\)
and observe that $\Eval{\Matappa tx}{y'}=\Eval t{\Tens x{y'}}$ where
$(\Tens x{y'})_{(a,b)}=x_ay'_b$. We define
$\Limpl XY=(\Web X\times\Web Y,\{t\in\Realpto{\Web{\Limpl
    XY}}\St\forall x\in\Pcoh X\ \Matappa tx\in\Pcoh Y\})$: this is a
pre-PCS by this observation, and checking that it is indeed a PCS is
easy.

We define then $\Tens XY=\Orth{\Limplp X{\Orth Y}}$; %
this is a PCS which satisfies
\(
\Pcohp{\Tens XZ}=\Biorth{\{\Tens xz\St x\in\Pcoh X\text{ and }z\in\Pcoh Z\}}
\)
where $(\Tens xz)_{(a,c)}=x_az_c$.
Then it is easy to see that we have equipped in that way the category
$\PCOH$ with a symmetric monoidal structure for which it is
$\ast$-autonomous with the dualizing object
$\Bott=\One=(\{*\},[0,1])$, which coincides with the unit of $\ITens$.
The $\ast$-autonomy follows easily from the observation that
$(\Limpl X\Bott)\Isom\Orth X$.

\begin{lemma}
  Given \(s,t\in\PCOH(X_1\ITens\cdots\ITens X_k,Y)\), if %
  for all \((x_i\in\Pcoh{X_i})_{i=1}^k\) one has %
  \(\Matappa s{(x_1\ITens\cdots\ITens x_k)}
  =\Matappa t{(x_1\ITens\cdots\ITens x_k)}\)
  then \(s=t\).
\end{lemma}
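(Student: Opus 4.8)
The plan is to prove that the two matrices $s,t$ coincide entrywise. Recalling that the web of a tensor product is the product of the webs of its factors, both $s$ and $t$ are matrices indexed by $(\Web{X_1}\times\cdots\times\Web{X_k})\times\Web Y$, so it suffices to show $s_{((a_1,\dots,a_k),b)}=t_{((a_1,\dots,a_k),b)}$ for every choice of $a_i\in\Web{X_i}$ and $b\in\Web Y$. The idea is to recover each such entry by evaluating at suitably scaled basis vectors.

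The key ingredient, and the only point where the PCS hypothesis is genuinely used, is that small positive multiples of basis vectors lie in the $\Pcoh{X_i}$. First I would observe that for each $c\in\Web{X_i}$ the quantity $M=\sup_{x'\in\Pcoh{\Orth{X_i}}}x'_c$ is finite by local boundedness of $\Pcoh{\Orth{X_i}}$; hence for any $\epsilon\in\Realp$ with $\epsilon M\leq 1$ the vector $\epsilon\Base c\in\Realpto{\Web{X_i}}$ satisfies $\Eval{\epsilon\Base c}{x'}=\epsilon x'_c\leq 1$ for all $x'\in\Pcoh{\Orth{X_i}}$, so that $\epsilon\Base c\in\Biorth{\Pcoh{X_i}}=\Pcoh{X_i}$. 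It is this combination of local boundedness with the biorthogonality closure $\Biorth{\Pcoh{X_i}}=\Pcoh{X_i}$ that makes the argument work, and I expect it to be the main (indeed the only nontrivial) obstacle.

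Fixing indices $(a_1,\dots,a_k)$ and a target $b$, I would then instantiate the hypothesis at $x_i=\epsilon_i\Base{a_i}$ for $\epsilon_i>0$ small enough. Since matrix application on a pure tensor is $(\Matappa s{(x_1\ITens\cdots\ITens x_k)})_b=\sum_{(a'_1,\dots,a'_k)}s_{((a'_1,\dots,a'_k),b)}\,(x_1)_{a'_1}\cdots(x_k)_{a'_k}$ and $(\epsilon_i\Base{a_i})_{a'_i}=\epsilon_i\,\Kronecker{a_i}{a'_i}$, every summand except the one with $a'_i=a_i$ for all $i$ vanishes, leaving $\epsilon_1\cdots\epsilon_k\,s_{((a_1,\dots,a_k),b)}$; the identical computation applies to $t$.

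Finally, the hypothesis $\Matappa s{(x_1\ITens\cdots\ITens x_k)}=\Matappa t{(x_1\ITens\cdots\ITens x_k)}$ is an equality of vectors, hence holds componentwise, so comparing $b$-components gives $\epsilon_1\cdots\epsilon_k\,s_{((a_1,\dots,a_k),b)}=\epsilon_1\cdots\epsilon_k\,t_{((a_1,\dots,a_k),b)}$. Dividing by the positive scalar $\epsilon_1\cdots\epsilon_k$ yields $s_{((a_1,\dots,a_k),b)}=t_{((a_1,\dots,a_k),b)}$, and as the indices are arbitrary this gives $s=t$. Everything past the basis-vector step is routine finite linear algebra, with no convergence issues since each test vector has a single point of support, so I expect no further difficulty.
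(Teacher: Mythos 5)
Your proof is correct and is the expected argument: the paper states this lemma without proof, and the only substantive point is exactly the one you isolate, namely that local boundedness of \(\Pcoh{X_i}\) makes \(\Pcoh{\Orth{X_i}}\) locally bounded, so the scaled basis vectors \(\epsilon\Base{a_i}\) lie in \(\Biorth{\Pcoh{X_i}}=\Pcoh{X_i}\), after which evaluation at pure tensors of such vectors reads off each matrix entry up to the positive factor \(\epsilon_1\cdots\epsilon_k\). This also matches the paper's own indication (made for the analogous lemma on Kleisli morphisms involving \(\Excl{}\)) that these separation properties rest crucially on the local boundedness of PCSs.
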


The category $\PCOH$ is cartesian: if $(X_j)_{j\in J}$ is an at most
countable family of PCSs, then
$(\Bwith_{j\in J}X_j,(\Proj j)_{j\in J})$ is the cartesian product of
the $X_j$'s, with
$\Web{\Bwith_{j\in J}X_j}=\Bunion_{j\in J}\{j\}\times\Web{X_j}$,
$(\Proj j)_{(k,a),a'}=1$ if $j=k$ and $a=a'$ and
$(\Proj j)_{(k,a),a'}=0$ otherwise, and
$x\in\Pcohp{\Bwith_{j\in J}X_j}$ if $\Matappa{\Proj j}x\in\Pcoh{X_j}$
for each $j\in J$ (for $x\in\Realpto{\Web{\Bwith_{j\in
      J}X_j}}$). Given $(t_j\in\PCOH(Y,X_j))_{j\in J}$, the unique
morphism $t=\Tuple{t_j}_{j\in J}\in\PCOH(Y,\Bwith_{j\in J}X_j)$ such
that $\Proj j\Compl t=t_j$ is simply defined by
$t_{b,(j,a)}=(t_j)_{a,b}$. The dual operation $\Bplus_{j\in J}X_j$,
which is a coproduct, is characterized by
$\Web{\Bplus_{j\in J}X_j}=\Bunion_{j\in J}\{j\}\times\Web{X_j}$ and
$x\in\Pcohp{\Bplus_{j\in J}X_j}$ if $x\in\Pcohp{\Bwith_{j\in J}X_j}$
and $\sum_{j\in J}\Norm{\Matappa{\Proj j}x}_{X_j}\leq 1$.

A particular case is $\Snat=\Bplus_{\nu\in\Nat}X_\nu$ where
$X_\nu=\One$ for each $\nu\in\Nat$. So that $\Web\Snat=\Nat$ and
$x\in\Realpto\Nat$ belongs to $\Pcoh\Snat$ if
$\sum_{\nu\in\Nat}x_\nu\leq 1$ (that is, $x$ is a sub-probability
distribution on $\Nat$). For each $\nu\in\Nat$ we have
$\Base \nu\in\Pcoh\Snat$ which is the distribution concentrated on the
integer $\nu$.  There are successor and predecessor morphisms
$\Ssuc,\Spred\in\PCOH(\Snat,\Snat)$ given by
$\Ssuc_{\nu,\nu'}=\Kronecker{\nu+1}{\nu'}$ %
(see~Section~\ref{sec:not-multisets}) %
and $\Spred_{\nu,\nu'}=1$ if $\nu=\nu'=0$ or $\nu=\nu'+1$ (and
$\Spred_{\nu,\nu'}=0$ in all other cases). An element of
$\PCOH(\Snat,\Snat)$ is a (sub)stochastic matrix and the very idea of
this model is to represent programs as transformations of this kind,
and their generalizations.

As to the exponentials, one sets $\Web{\Excl X}=\Mfin{\Web X}$ and
$\Pcohp{\Excl X}=\Biorth{\{\Prom x\St x\in\Pcoh X\}}$ where, given
$m\in\Mfin{\Web X}$, $\Prom x_m=x^m=\prod_{a\in\Web X}x_a^{m(a)}$. A
morphism
$t\in\PCOH(\Excl X,Y)=\Pcohp{\Limpl{\Excl X}{Y}}$ is
completely characterized by the associated analytic function
\begin{align*}
  \Fun t:\Pcoh X&\to\Pcoh Y\\
  x&\mapsto\Matappa t{\Prom x}=\sum_{m\in\Web{\Excl
      X},b\in\Web Y}t_{m,b}x^m\,\Base b\,.
\end{align*}

\begin{lemma}%
  \label{lemma:kl-tensor-maps-charact}
  Let %
  \(t\in\Realpto{\Web{\Limpl{\Excl{X_1}\ITens\cdots\ITens\Excl{X_k}}{Y}}}\).
  One has \(t\in\PCOH(\Excl{X_1}\ITens\cdots\ITens\Excl{X_k},Y)\) %
  iff for all \((x_i\in\Pcoh{X_i})_{i=1}^k\) one has %
  \(\Matappa t{(\Prom{x_1}\ITens\cdots\ITens\Prom{x_k})}\in\Pcoh Y\).
\end{lemma}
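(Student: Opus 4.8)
\emph{Notation.} Throughout write $X=\Excl{X_1}\ITens\cdots\ITens\Excl{X_k}$ and
\[
  G=\{\Prom{x_1}\ITens\cdots\ITens\Prom{x_k}\St x_i\in\Pcoh{X_i}\text{ for }i=1,\dots,k\}\subseteq\Realpcto{\Web X}\,.
\]
The left-to-right implication is immediate: each $\Prom{x_i}$ belongs to $\Pcohp{\Excl{X_i}}$, and tensoring elements of the factors lands in the tensor PCS by the generating description $\Pcohp{\Tens AB}=\Biorth{\{\Tens ab\St a\in\Pcoh A,\ b\in\Pcoh B\}}$, so every element of $G$ lies in $\Pcoh X$; hence a morphism $t\in\PCOH(X,Y)$ sends $\Prom{x_1}\ITens\cdots\ITens\Prom{x_k}$ into $\Pcoh Y$. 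The whole content of the lemma is therefore the converse, and my plan is to derive it from the single equality $\Pcoh X=\Biorth G$, which I call $(\star)$.

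\emph{How $(\star)$ yields the converse.} Recall the characterisation stated just before the lemma: $t\in\PCOH(X,Y)$ iff $\Eval t{\Tens w{y'}}\le 1$ for all $w\in\Pcoh X$ and $y'\in\Pcoh{\Orth Y}$. Fix such a $y'$ and define $s\in\Realpcto{\Web X}$ by $s_a=\sum_{b\in\Web Y}t_{a,b}\,y'_b$, so that $\Eval t{\Tens w{y'}}=\Eval ws$ for every $w\in\Realpcto{\Web X}$. The hypothesis of the lemma says precisely that $\Eval{\Prom{x_1}\ITens\cdots\ITens\Prom{x_k}}s\le 1$ for all $x_i\in\Pcoh{X_i}$, that is $s\in\Orth G$. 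Since $\Orth{\Biorth P}=\Orth P$ always, $(\star)$ gives $\Orth{\Pcoh X}=\Orth{\Biorth G}=\Orth G\ni s$, whence $\Eval ws\le 1$, i.e.\ $\Eval t{\Tens w{y'}}\le 1$, for every $w\in\Pcoh X$. As $y'$ was arbitrary this is exactly the morphism condition.

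\emph{Reducing $(\star)$ to a binary statement.} The key lemma is: if $A,B$ are PCSs with $\Pcoh A=\Biorth S$ and $\Pcoh B=\Biorth T$, then $\Pcohp{\Tens AB}=\Biorth{\{\Tens st\St s\in S,\ t\in T\}}$. Granting it, $(\star)$ follows by induction on $k$. The base case $k=1$ is the definition $\Pcohp{\Excl{X_1}}=\Biorth{\{\Prom{x_1}\St x_1\in\Pcoh{X_1}\}}$. For the inductive step apply the binary statement with $A=\Excl{X_1}\ITens\cdots\ITens\Excl{X_{k-1}}$ (whose $\Pcoh A$ is the biorthogonal of the generators $G_{k-1}$ by the induction hypothesis), $B=\Excl{X_k}$, and $S,T$ the associated generating sets, using $\Tens{(\Prom{x_1}\ITens\cdots\ITens\Prom{x_{k-1}})}{\Prom{x_k}}=\Prom{x_1}\ITens\cdots\ITens\Prom{x_k}$ up to the associativity isomorphism.

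\emph{Proof of the binary statement (the main obstacle).} The inclusion $\supseteq$ is the only one needing work, since $S\subseteq\Pcoh A$ and $T\subseteq\Pcoh B$ give $\{\Tens st\}\subseteq\{\Tens ab\St a\in\Pcoh A,\ b\in\Pcoh B\}$ and hence, after $\Biorth{}$, the inclusion $\subseteq$ with $\Pcohp{\Tens AB}=\Biorth{\{\Tens ab\}}$. I would prove the equivalent inclusion $\{\Tens ab\St a\in\Pcoh A,\ b\in\Pcoh B\}\subseteq\Biorth{\{\Tens st\}}$ by a Fubini-style double orthogonality. Let $u\in\Orth{\{\Tens st\St s\in S,\ t\in T\}}$ and fix $a\in\Pcoh A=\Biorth S$ and $b\in\Pcoh B=\Biorth T$. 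For $t\in T$ set $u^{(t)}_{a'}=\sum_{b'\in\Web B}u_{a',b'}\,t_{b'}$; then $\Eval s{u^{(t)}}=\Eval{\Tens st}u\le 1$ for every $s\in S$, so $u^{(t)}\in\Orth S$ and thus $\Eval a{u^{(t)}}\le 1$. Writing this out gives $\sum_{a',b'}a_{a'}u_{a',b'}t_{b'}\le 1$ for every $t\in T$, so the vector $v^{(a)}_{b'}=\sum_{a'\in\Web A}a_{a'}u_{a',b'}$ lies in $\Orth T$, whence $\Eval b{v^{(a)}}=\sum_{a',b'}a_{a'}b_{b'}u_{a',b'}=\Eval{\Tens ab}u\le 1$. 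Hence $\Tens ab\in\Biorth{\{\Tens st\}}$. This double-orthogonality step, in which one freezes one tensor factor at a time and feeds the contracted vectors through $\Biorth S$ and then $\Biorth T$, is the heart of the argument; all the reorderings of sums are legitimate because every entry lies in $\Realpc=[0,\infty]$, so by Tonelli the iterated sums are unconditionally equal. The remaining ingredients — the definitions of $\Pcohp{\Tens AB}$ and $\Pcohp{\Excl{X_i}}$, the identity $\Orth{\Biorth P}=\Orth P$, and the morphism characterisation — are all recorded above, and the induction on $k$ only reshuffles associativity isomorphisms.
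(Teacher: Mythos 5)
Your proof is correct. For the record, the paper itself states this lemma \emph{without} proof — it is treated as a known fact about probabilistic coherence spaces, imported from the PCS literature — so there is no proof of record to compare against. Your argument is the standard one that fills this gap: reduce the morphism condition to the single equality \(\Pcohp{\Excl{X_1}\ITens\cdots\ITens\Excl{X_k}}=\Biorth{G}\) where \(G\) is the set of tensors of promotions, and prove that equality by induction on \(k\) from the binary statement, whose hard inclusion is established by the freeze-one-factor double-orthogonality computation. Every step checks out: the contraction \(s_a=\sum_{b\in\Web Y}t_{a,b}y'_b\) does satisfy \(\Eval t{\Tens w{y'}}=\Eval ws\); the identity \(\Orth{\Biorth P}=\Orth P\) and the idempotency \(\Biorth{\Biorth P}=\Biorth P\) are legitimate properties of the orthogonality closure; the rearrangements of sums are valid by Tonelli since all entries lie in \(\Realpc\); and the base case is literally the definition \(\Pcohp{\Excl X}=\Biorth{\Eset{\Prom x\St x\in\Pcoh X}}\).

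One cosmetic slip worth fixing: in the binary statement your inclusion labels are swapped. Reading the equality \(\Pcohp{\Tens AB}=\Biorth{\Eset{\Tens st\St s\in S,\ t\in T}}\) left to right, monotonicity of the biorthogonal closure applied to \(\Eset{\Tens st}\subseteq\Eset{\Tens ab\St a\in\Pcoh A,\ b\in\Pcoh B}\) yields \(\Biorth{\Eset{\Tens st}}\subseteq\Pcohp{\Tens AB}\), i.e.\ the easy direction is "\(\supseteq\)", not "\(\subseteq\)" as you write, and the direction needing work is "\(\subseteq\)". This does not damage the proof, because the inclusion you actually establish, \(\Eset{\Tens ab\St a\in\Pcoh A,\ b\in\Pcoh B}\subseteq\Biorth{\Eset{\Tens st\St s\in S,\ t\in T}}\), is (after one application of idempotency of the closure) exactly the hard direction \(\Pcohp{\Tens AB}\subseteq\Biorth{\Eset{\Tens st}}\); only the prose labels need to be corrected.
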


\begin{lemma}%
  \label{lemma:pcoh-kl-morph-charact}
  If \(s,t\in\PCOH(\Excl{X_1}\ITens\cdots\ITens\Excl{X_k},Y)\)
  satisfy %
  \(
  \Matappa s{(\Prom{x_1}\ITens\cdots\ITens\Prom{x_k})}
  =
  \Matappa t{(\Prom{x_1}\ITens\cdots\ITens\Prom{x_k})}
  \) for all %
  \((x_i\in\Pcoh{X_i})_{i=1}^k\) then \(s=t\).
\end{lemma}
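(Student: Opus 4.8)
The plan is to recover the matrix coefficients of $s$ and $t$ from the analytic functions they induce, exploiting that the induced functions are multivariate power series with non-negative coefficients whose domain of definition is fat enough to pin the coefficients down uniquely. First I would reduce to scalars: equality of $\Matappa{s}{(\Prom{x_1}\ITens\cdots\ITens\Prom{x_k})}$ and $\Matappa{t}{(\Prom{x_1}\ITens\cdots\ITens\Prom{x_k})}$ in $\Pcoh Y$ means equality of every $b$-component, so fixing $b\in\Web Y$ it suffices to prove $s_{(m_1,\dots,m_k),b}=t_{(m_1,\dots,m_k),b}$ for each tuple $(m_1,\dots,m_k)\in\Mfin{\Web{X_1}}\times\cdots\times\Mfin{\Web{X_k}}$. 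Writing out the matrix application with $x_i^{m_i}=\prod_{a\in\Web{X_i}}x_i(a)^{m_i(a)}$, the hypothesis says that the two power series
\[
\sum_{(m_1,\dots,m_k)} t_{(m_1,\dots,m_k),b}\prod_{i=1}^k x_i^{m_i}
\quad\text{and}\quad
\sum_{(m_1,\dots,m_k)} s_{(m_1,\dots,m_k),b}\prod_{i=1}^k x_i^{m_i}
\]
in the scalar variables $(x_i(a))_{i,a}$ take the same (finite, by local boundedness of $\Pcoh Y$) value for every choice of $x_i\in\Pcoh{X_i}$.

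The key geometric input is that $\Pcoh{X_i}$ contains a full-dimensional box near the origin once we restrict to finitely many coordinates. Since $\Pcoh{X_i}=\Biorth{\Pcoh{X_i}}$, this set is convex, contains $0$, and is downward closed: if $0\le y\le x$ componentwise with $x\in\Pcoh{X_i}$, then $\Eval y{x'}\le\Eval x{x'}\le 1$ for every $x'\in\Pcoh{\Orth{X_i}}$. Fix finite subsets $F_i\subseteq\Web{X_i}$. Local boundedness gives, for each $a\in F_i$, an element of $\Pcoh{X_i}$ with strictly positive $a$-component; averaging these finitely many elements (convexity) and then zeroing the coordinates outside $F_i$ (downward closure) yields $y_i\in\Pcoh{X_i}$ supported on $F_i$ with $y_i(a)>0$ for all $a\in F_i$. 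By downward closure again, the whole box $\prod_{a\in F_i}[0,y_i(a)]$, extended by $0$ outside $F_i$, lies inside $\Pcoh{X_i}$.

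Restricting the two series to these coordinates, i.e.\ setting $x_i(a)=0$ for $a\notin F_i$, kills every monomial indexed by a tuple with $\Supp{m_i}\not\subseteq F_i$, so each side becomes a genuine power series in the finitely many variables $(x_i(a))_{i,\,a\in F_i}$. These series converge on the open polydisc determined by the $y_i(a)$ (absolute convergence follows from finiteness of the value at the positive point $(y_i)_i$, the coefficients being non-negative) and hence define real-analytic functions there that agree on the real box $\prod_i\prod_{a\in F_i}(0,y_i(a))$. By the \emph{identity theorem} for analytic functions of finitely many variables (equivalently, uniqueness of the Taylor coefficients of a convergent multivariate power series), all corresponding coefficients coincide: $t_{(m_1,\dots,m_k),b}=s_{(m_1,\dots,m_k),b}$ for every tuple with $\Supp{m_i}\subseteq F_i$. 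As the $F_i$ are arbitrary finite sets and every multiset has finite support, this holds for all tuples $(m_1,\dots,m_k)$; and as $b\in\Web Y$ was arbitrary, $s=t$.

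The main obstacle is the geometric step: confirming that $\Pcoh{X_i}$ genuinely contains a coordinate box with non-empty interior, which is precisely what lets the identity theorem bite. This is where local boundedness (for strict positivity of each coordinate) together with convexity and downward closure of biorthogonally closed sets are simultaneously needed; I would state these closure properties explicitly since the proof rests on them. The analytic uniqueness itself is standard, but I would take care to note that convergence on the box is guaranteed because $\Matappa{t}{(\Prom{x_1}\ITens\cdots\ITens\Prom{x_k})}\in\Pcoh Y$ has finite components, so there is no issue evaluating the series within its polydisc of convergence. An alternative route, extracting coefficients by induction on total degree via the scalings $x_i\mapsto\lambda x_i$ and letting $\lambda\to 0^+$, stays entirely within $\Realp$ and also works, but the analytic-uniqueness argument is the cleanest.
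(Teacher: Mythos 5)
Your proof is correct. Note that the paper itself gives no proof of this lemma---it only remarks that ``this very useful property uses crucially the local boundedness property of PCSs''---and your argument is exactly the standard one that remark alludes to: local boundedness together with convexity and downward closure of $\Pcoh{X_i}$ (both consequences of biorthogonal closure) produce a coordinate box of strictly positive radius inside $\Pcoh{X_i}$, and uniqueness of coefficients of a convergent multivariate power series with non-negative coefficients (your identity-theorem step, or equally well your scaling alternative) then forces the matrix entries to coincide. You also correctly identify that local boundedness of $Y$ is what makes the series values finite, so the evaluation and convergence steps are legitimate; there is no gap.
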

This very useful property uses crucially the local boundedness
property of PCSs.

Then given $t\in\PCOH(X,Y)$, we explain now how to define
$\Excl t\in\PCOH(\Excl X,\Excl Y)$.
Let \(m\in\Mfin{\Web X}\) and \(p\in\Mfin{\Web Y}\).
We use \(\Mstrans mp\) for the set of all
\(r\in\Mfin{\Web X\times\Web Y}\) such that
\begin{align*}
  \forall a\in\Web X\ m(a)=\sum_{b\in\Web Y}r(a,b)
  \text{\quad and\quad}
  \forall b\in\Web Y\ p(b)=\sum_{a\in\Web X}r(a,b)\,.
\end{align*}
Notice that if \(r\in\Mfin{\Web X\times\Web Y}\) then %
\(\Mscard r=\Mscard m=\Mscard p\) so that \(\Mstrans mp\) is non-empty
iff \(\Mscard m=\Mscard p\).
When \(r\in\Mstrans np\) we set
\begin{align*}
  \Multinomb pr
  =\prod_{b\in\Web Y}\frac{\Factor{p(b)}}{\prod_{a\in\Web X}\Factor{r(a,b)}}
\end{align*}
which belongs to \(\Nat\setminus\Eset 0\); this is a generalized
multinomial coefficient.
Then we have
\begin{align*}
  (\Excl t)_{m,p}=\sum_{r\in\Mstrans mp}\Multinomb pr t^r
\end{align*}
where we recall that %
\(t^r=\prod_{(a,b)\in\Web X\times\Web Y}t_{a,b}^{r(a,b)}\).
The main feature of this definition is that for all \(x\in\Pcoh X\) one
has %
\(
\Fun{\Excl t}(x)
=\Matappa{\Excl t}{\Prom x}=\Prom{(\Matappa tx)}
\). %
This property fully characterizes \(\Excl t\). %
The comonad structure is given by %
\(\Der X\in\Realpto{\Web{\Limpl{\Excl X}{X}}}\) given by %
\((\Der X)_{m,a}=\Kronecker{m}{\Mset a}\) so that %
\(\forall x\in\Pcoh X\ \Matappa{\Der X}{\Prom x}=x\in\Pcoh X\) and
therefore \(\Der X\in\Pcoh(\Excl X,X)\).
Similarly one defines %
\(\Digg X\in\Realpto{\Web{\Limpl{\Excl X}{\Excll X}}}\) by %
\((\Digg X)_{(m,\Mset{\List m1n})}=\Kronecker m{m_1+\cdots+m_n}\) so
that %
\(\forall x\in\Pcoh X\ \Matappa{\Digg X}{\Prom x}=\Promm x\) and
hence, again, \(\Digg X\in\Pcoh(\Excl X,\Excll X)\).
The equations required to prove that \((\Excl\_,\Der{},\Digg{})\) is
indeed a comonad result from Lemma~\ref{lemma:pcoh-kl-morph-charact}.
For instance, let \(t\in\Pcoh(X,Y)\), we have %
\( \Matappa{(\Digg Y\Compl\Excl t)}{\Prom x} =\Matappa{\Digg
  Y}{(\Prom{\Matappa tx})} =\Matappa{\Digg X}{\Prom{(\Matappa tx)}}
=\Promm{(\Matappa tx)} \) and %
\( \Matappa{(\Excll t\Compl\Digg X)}{\Prom x} =\Matappa{\Excll
  t}{(\Matappa{\Digg X}{\Prom x})} =\Matappa{\Excll t}{\Promm x}
=\Prom{(\Matappa{\Excl t}{\Prom x})} =\Promm{(\Matappa tx)} \) which
shows that \(\Digg{}\) is a natural transformation.
As another example, we have %
\( \Matappa{(\Digg{\Excl X}\Compl\Digg X)}{\Prom x}
=\Matappa{\Digg{\Excl X}}{\Promm x} =\Prommm x \) and %
\( \Matappa{(\Excl{\Digg X}\Compl\Digg X)}{\Prom x}
=\Matappa{\Excl{\Digg X}}{\Promm x} =\Prom{(\Matappa{\Digg X}{\Prom
    x})} =\Prom{(\Promm x)} =\Prommm x \) %
and hence %
\(\Digg{\Excl X}\Compl\Digg X=\Excl{\Digg X}\Compl\Digg X\) %
which is one of the required comonad commutations.
The others are proven similarly.

The monoidality Seely isomorphisms %
\(\Seelyz\in\PCOH(\Sone,\Excl\Top)\) and %
\( \Seelyt_{X_1,X_2}
\in\PCOH(\Tens{\Excl{X_1}}{\Excl{X_2}},\Exclp{\With{X_1}{X_2}}) \) %
are given by %
\(\Seelyz_{\Sonelem,\Emptymset}=1\) and %
\(\Seelyt_{((m_1,m_2),m)}=\Kronecker{\Mspmap 1{m_1}+\Mspmap
  2{m_2}}{m}\) %
where, for a multiset \(m=\Mset{\List a1k}\) we set %
\(\Mspmap im=\Mset{(i,a_1),\dots,(i,a_k)}\), see
Section~\ref{sec:not-multisets}.
It is obvious that \(\Seelyz\) is an iso. To check that
\(\Seelyt_{X_1,x_2}\) is a morphism we use
Lemma~\ref{lemma:kl-tensor-maps-charact}: let \(x_i\in\Pcoh{X_i}\) for
\(i=1,2\), one has %
\( \Matappa{\Seelyt_{X_1,X_2}}{(\Tens{\Prom{x_1}}{\Prom{x_2}})}
=\Prom{\Tuple{x_1,x_2}} \in\Pcoh{\Exclp{\With{X_1}{X_2}}} \). %
Conversely, defining
\(
s\in\Realpto{\Limpl{\Exclp{\With{X_1}{X_2}}}{\Tensp{\Excl{X_1}}{\Excl{X_2}}}}
\) by %
\(s_{m,(m_1,m_2)}=\Kronecker{\Mspmap1{m_1}+\Mspmap2{m_2}}{m}\) we
have %
\( \Matappa s{\Prom{\Tuple{x_1,x_2}}} =\Tens{\Prom{x_1}}{\Prom{x_2}}
\in\Pcohp{\Tens{\Excl{X_1}}{\Excl{X_2}}} \) for all
\(x_i\in\Pcoh{X_i}\) (\(i=1,2\)), and hence %
\(s\in\PCOH(\Exclp{\With{X_1}{X_2}},\Tensp{\Excl{X_1}}{\Excl{X_2}})\). %
It is obvious that \(s\) is the inverse of \(\Seelyt_{X_1,X_2}\) which
is therefore an iso in \(\PCOH\).
Proving that it is natural and that it satisfies all the required
commutations for turning \(\PCOH\) into a model of \LL{} is routine
(using crucially Lemma~\ref{lemma:pcoh-kl-morph-charact}).

The induced lax monoidality %
\(
\Mong k\in\PCOH(
\Excl{X_1}\ITens\cdots\ITens\Excl{X_k},\Excl{(X_1\ITens\cdots\ITens X_k)})
\) is such that %
\((\Mong k)_{(\List m1k),m}=1\)
if %
\(m=\Mset{(a_1^1,\dots,a_k^1),\dots,(a_1^n,\dots,a_k^n))}\) and %
\((m_i=\Mset{a_i^1,\dots,a_i^n})_{i=1}^k\), and %
\((\Mong k)_{(\List m1k),m}=0\) otherwise.

\begin{theorem}%
  \label{th:pcoh-Lafont}
  The SMC \(\PCOH\) is a Lafont category.
\end{theorem}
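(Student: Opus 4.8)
The plan is to unfold the definition of a Lafont category and verify the single universal property it packages. Recall (see~\cite{Mellies09}) that a resource category is \emph{Lafont} exactly when its resource comonad $\Excl\_$ is the cofree cocommutative comonoid comonad: for every object $X$ the comonoid $(\Excl X,\Weak X,\Contr X)$ determined by the Seely isomorphisms, together with the dereliction $\Der X\in\PCOH(\Excl X,X)$, must be terminal among comonoids over $X$. Concretely, for every cocommutative comonoid $(C,w,\delta)$ in $\PCOH$ (with $w\in\PCOH(C,\Sone)$ and $\delta\in\PCOH(C,\Tens CC)$) and every $g\in\PCOH(C,X)$ there must be a \emph{unique} comonoid morphism $\hat g\in\PCOH(C,\Excl X)$ with $\Der X\Compl\hat g=g$. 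Since the fact that $(\Excl X,\Weak X,\Contr X)$ is a cocommutative comonoid is already part of the $\LL$ structure established above, the entire content of the theorem is this universal property, and that is what I would prove.

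I would produce $\hat g$ together with its uniqueness, following the explicit-formula description of the free exponential. Write $\delta^{(n)}\in\PCOH(C,\Tpower Cn)$ for the $n$-fold iterated comultiplication ($\delta^{(0)}=w$, $\delta^{(1)}=\Id_C$, $\delta^{(n+1)}=(\Tens\delta{\Tpower C{n-1}})\Compl\delta^{(n)}$); coassociativity makes it canonical and cocommutativity makes it invariant under the $\Symgrp n$-action permuting the tensor factors, so $\Tpower gn\Compl\delta^{(n)}\in\PCOH(C,\Tpower Xn)$ is $\Symgrp n$-invariant as well. The cardinality-$n$ part of $\Excl X$ (the restriction of $\Mfin{\Web X}$ to multisets $m$ with $\Mscard m=n$) is the equaliser of the $\Symgrp n$-action on $\Tpower Xn$, through which every such invariant map factors uniquely; assembling these factorisations over all $n$ yields a candidate matrix $\hat g$, whose coefficient at $\gamma\in\Web C$ and $m=\Mset{a_1,\dots,a_n}$ is read off from $\Tpower gn\Compl\delta^{(n)}$ by symmetrisation over the orderings of $m$. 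The comonoid equations for any comparison $h$ over $g$ force exactly this formula degree by degree (the counit law fixes the degree-$0$ component as $w$, while iterated contraction followed by $\Tpower{\Der X}n$ recovers $\Tpower gn\Compl\delta^{(n)}$), so $\hat g$ is the only possible candidate; that it is genuinely a comonoid morphism over $g$ is then a routine degreewise matrix check, which Lemma~\ref{lemma:pcoh-kl-morph-charact} helps streamline.

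The one genuinely non-formal step, and the step I expect to be the main obstacle, is showing that this candidate $\hat g$ is an honest $\PCOH$-morphism, i.e.\ that $\Matappa{\hat g}\gamma\in\Pcoh{\Excl X}$ for every $\gamma\in\Pcoh C$. Because $\Pcoh{\Excl X}=\Biorth{\{\Prom x\St x\in\Pcoh X\}}$, this reduces to the estimate $\Eval{\Matappa{\hat g}\gamma}{u}\leq1$ for all $u\in\Pcoh{\Orth{\Excl X}}$, whose left-hand side is an infinite sum ranging over all multiset cardinalities $n$. The plan is to reorganise this pairing, using the comonoid laws, into a series governed by $\Matappa g\gamma\in\Pcoh X$ and by the counit inequality $\Matappa w\gamma\leq1$, which bounds the total mass generated by the maps $\delta^{(n)}$; local boundedness of $X$ is then exactly what forces the series to converge and stay below $1$. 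Equivalently, and probably more cleanly, I would exhibit $\Excl X$ as the sequential limit of the truncated comonoids built from the $\Symgrp n$-equalisers, check that this limit exists in $\PCOH$ and is carried by $\Mfin{\Web X}$ with precisely the $\Biorth{\{\Prom x\}}$ structure — here local boundedness guarantees the limit is attained and bounded — and read the cofree universal property off from the universal property of the limit. The remaining naturality and comonad coherences are routine and already subsumed by the $\LL$ structure in place.
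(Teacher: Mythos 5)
The first thing to note is that the paper does \emph{not} prove this theorem: its entire proof is the citation ``This is the object of~\cite{CrubilleEhrhardPaganiTasson16}'', i.e.\ the Lafont property of \(\PCOH\) is quoted as the main result of a separate paper devoted exclusively to it. Your sketch is therefore an attempt to reconstruct that external proof, and its architecture is indeed the right one — it is the Melliès--Tabareau--Tasson recipe that the cited reference follows: symmetrised tensor powers, a projective limit, and a comparison morphism \(\hat g\) whose coefficients are forced degree by degree. Your uniqueness argument (counit law fixes degree \(0\), iterated contraction followed by \(\Tpower{\Der X}{n}\) recovers \(\Tpower gn\Compl\delta^{(n)}\)) is correct, up to bookkeeping of the multinomial/factorial factors relating invariant vectors in \(\Tpower Xn\) to coefficients on multisets.

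The existence half, however, has two genuine gaps. First, the shortcut ``read the cofree universal property off from the universal property of the limit'' is false as stated: a limit of the truncated system is just a limit, and to conclude cofreeness one must additionally prove that this limit is preserved by every functor \(Z\ITens(-)\). This tensor-commutation is the hypothesis of the Melliès--Tabareau--Tasson theorem; it is needed even to \emph{define} the comultiplication of the limit object (the maps into pairwise tensors of truncations assemble into a map \(L\to\Tens LL\) only if \(\Tens LL\) is itself the limit of the tensored diagram), and proving it in \(\PCOH\) is precisely the technical core of~\cite{CrubilleEhrhardPaganiTasson16}, not a routine corollary. (Relatedly, the projective system must be built from \(\Withp X\Sone\), not from \(X\): without the affine point there are no projections \(\Tpower X{n+1}\to\Tpower Xn\) linking consecutive degrees, hence no diagram whose limit one could take.) Second, your proposed mechanism for the key estimate \(\Eval{\Matappa{\hat g}\gamma}{u}\leq 1\) cannot work as described. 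A dual vector \(u\in\Pcoh{\Orthp{\Excl X}}\) is constrained \emph{only} by its pairings against promotions \(\Prom x\), whereas \(\Matappa{\hat g}\gamma\) is in general neither a promotion nor a subconvex combination of promotions, so there is no direct reduction of the estimate to those constraints; the counit inequality \(\Matappa w\gamma\leq 1\) bounds the degree-\(0\) coefficient only, and local boundedness is a pointwise nondegeneracy condition on web elements (used in the paper for results like Lemma~\ref{lemma:pcoh-kl-morph-charact}) that plays no such convergence-forcing role. Showing that elements produced by \emph{arbitrary} cocommutative comonoids pair below \(1\) against all such \(u\) is exactly the content of the theorem; ``reorganising the series using the comonoid laws'' presupposes this rather than proving it.
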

\begin{proof}
  This is the object of~\cite{CrubilleEhrhardPaganiTasson16}. 
\end{proof}

\subsubsection{Elementary differential structure of \(\PCOH\)}%
\label{sec:pcoh-can-diff-struct}
The category \(\PCOH\) has zero-morphisms (we have the \(0\) matrix
in \(\PCOH(X,Y)\) for any two objects \(X\) and \(Y\)).

The object \(\Into=\With\Sone\Sone\) can be described as %
\(\Web\Into=\Eset{0,1}\) and %
\(\Pcoh\Into=\Intcc01^2\).
Then the morphisms \((\Win r\in\PCOH(\Sone,\Into))_{r=0,1}\) are
characterized by %
\(\Matappa{\Win 0}u=(u,0)\) and \(\Matappa{\Win 1}u=(0,u)\) for %
\(u\in\Pcoh\Sone=\Intcc01\). These two morphisms are jointly epic
because, for any \(t\in\PCOH(\Into,X)\) and \((u_0,u_1)\in\Pcoh\Into\)
one has %
\(\Matappa t{(u_0,u_1)}=\Matappa t{(u_0,0)}+\Matappa t{(0,u_1)}\) by
linearity of \(t\).

Given a PCS \(X\), the PCS \(\Scfun X=\Limplp\Into X\) is
characterized by \(\Web{\Scfun X}=\Eset{0,1}\times\Web X\) and %
\(\Pcoh{\Scfun X}=\{(x_0,x_1)\in\Pcoh X^2\St x_0+x_1\in\Pcoh X\}\) (to
be more precise, an element \(x\in\Realpto{\Web{\Scfun X}}\) belongs
to \(\Pcoh{\Scfun X}\) if \(x_0+x_1\in\Pcoh X\), where
\(x_r\in\Pcoh X\) is given by \(x_r(a)=x(r,a)\), we refer to
Remark~\ref{rem:pcoh-comp-notation} for the notation).
Given \(x\in\Pcohp{\Scfun X}\), the morphism
\(x\Compl\Wdiag\in\Compl\PCOH(\Sone,X)\), considered as an element of
\(\Pcoh X\), is simply \(x_0+x_1\).
So the natural transformations %
\(\Sproj0,\Sproj1,\Ssum\in\PCOH(\Scfun X,X)\) are characterized by %
\(\Matappa{\Sproj r}x=x_r\) and \(\Matappa\Ssum x=x_0+x_1\).

Therefore two morphisms \(s_0,s_1\in\PCOH(X,Y)\) are summable iff %
\(\forall x\in\Pcoh X\ \Matappa{s_0}x+\Matappa{s_1}x\in\Pcoh Y\) which
is equivalent to \(s_0+s_1\in\PCOH(X,Y)\) since %
\(\Matappa{s_0}x+\Matappa{s_1}x=\Matappa{(s_0+s_1)}x\).
Then the witness of summability is %
\(\Stuple{s_0,s_1}\in\PCOH(X,\Scfun Y)\) characterized by %
\(\Matappa{\Stuple{s_0,s_1}}x=(\Matappa{s_0}x,\Matappa{s_1}x)\).
Let $s_{00},s_{01},s_{10},s_{11}\in\PCOH(X,Y)$ be morphisms such that
$(s_{00},s_{01})$ and $(s_{10},s_{11})$ are summable, and moreover
$(s_{00}+s_{01},s_{10}+s_{11})$ is summable.
Then the witnesses
$\Stuple{s_{00},s_{01}},\Stuple{s_{10},s_{11}}\in\PCOH(X,\Sfun X)$ are
summable because %
\(\Stuple{s_{00},s_{01}}+\Stuple{s_{10},s_{11}}=
\Stuple{s_{00}+s_{10},s_{01}+s_{11}}\) as easily checked.
So \Saxwit{} holds (see Section~\ref{sec:elementary-summability}
and~\cite{Ehrhard23a}) which shows that \(\PCOH\) is an elementary
summable category.

As explained in Section~\ref{sec:elementary-summability}
and~\cite{Ehrhard23a} %
Section~5.1 
\(\Into\) is equipped with a commutative comonoid structure given by
the two \(\PCOH\) morphisms %
\(\Proj0\in\PCOH(\Into,\Sone)\) and %
\(\Scmont\in\PCOH(\Into,\Tens\Into\Into)\), which are given by %
\((\Proj0)_{r,\Sonelem}=\Kronecker r0\) and %
\(\Scmont_{r,(r_0,r_1)}=\Kronecker{r}{r_0+r_1}\) %
for \(r,r_0,r_1\in\Into\).
Therefore, by Theorem~\ref{th:pcoh-Lafont}, \(\Into\) has an induced
\(\oc\)-coalgebra structure \(\Sdiffca\in\PCOH(\Into,\Excl\Into)\),
which is given by
\begin{align*}
  \Sdiffca_{r,\Mset{\List r1k}}=\Kronecker{r}{\sum_{i=1}^kr_k}\,,
\end{align*}
in other words \(\Sdiffca_{0,\Mset{\List r1k}}\) is equal to \(1\) if
all the \(r_i\)'s are \(=0\) and to \(0\) otherwise.
And %
\(\Sdiffca_{1,\Mset{\List r1k}}\) is equal to \(1\) if exactly one
among the \(r_i\)'s is equal to \(1\) an all the others are equal to
\(0\), and to \(0\) otherwise.

By %
Theorem~21 %
of~\cite{Ehrhard23a}
we know that \(\Sdiffca\) (denoted~\(\Sdiffcaold\) in that paper)
defines a coherent differential structure on \(\PCOH\).
Let us describe explicitly the associated natural %
\(\Sdiff_X\in\PCOH(\Excl{\Sdfun X},\Sdfun{\Excl X})\).
We know that %
\(\Sdiff_X=\Curlin d\) where %
\(d\in\PCOH(\Excl{\Limplp\Into X}\ITens\Into,\Excl X)\) is defined as
the following composition of morphisms in \(\PCOH\):
\begin{center}
  \begin{tikzcd}
    \Tens{\Exclp{\Limpl\Into X}}{\Into}
    \ar[r,"\Tens{\Exclp{\Limpl\Into X}}{\Sdiffca}"]
    &[2em]
    \Tens{\Exclp{\Limpl\Into X}}{\Excl\Into}
    \ar[r,"\Mont_{\Limpl\Into X,\Into}"]
    &[0.6em] \Exclp{\Tens{\Limplp\Into X}{\Into}}
    \ar[r,"\Excl\Evlin"]
    &[-1em] \Excl X
  \end{tikzcd}\,.
\end{center}
Let
\(
d'=
\Mont_{\Limpl\Into X,\Into}\Compl(\Tens{\Exclp{\Limpl\Into X}}{\Sdiffca})
\in\PCOH(\Tens{\Exclp{\Limpl\Into X}}{\Into},
\Exclp{\Tens{\Limplp\Into X}{\Into}})
\), we have %
\begin{align*}
  d'_{(\Mset{(r_1,a_1),\dots,(r_n,a_n))},r),m}
  =
  \begin{cases}
    1 &\text{if }m=\Mset{((r_1,a_1),r'_1),\dots,((r_n,a_n),r'_n)}
    \text{ and }r=r'_1+\cdots+r'_n\\
    0 &\text{otherwise}
  \end{cases}
\end{align*}
On the other hand, to have
\((\Excl{\Evlin})_{\Mset{((r_1,a_1),r'_1),\dots,((r_n,a_n),r'_n)},p}\not=0\) %
we need %
\((r_i=r'_i)_{i=1}^n\) and \(p=\Mset{\List a1n}\).
So to have %
\(d_{(\Mset{(r_1,a_1),\dots,(r_n,a_n))},r),p}\not=0\) we need %
\(r=r_1+\cdots+r_n\) and \(p=\Mset{\List a1n}\), and then
\begin{align*}
  d_{(\Mset{(r_1,a_1),\dots,(r_n,a_n))},r),p}
  =(\Excl\Evlin)_{\Mset{((r_1,a_1),r_1),
  \dots,((r_n,a_n),r_n)},\Mset{\List a1n}}
\end{align*}
Notice that %
\(\Mstrans{\Mset{((r_1,a_1),r_1), \dots,((r_n,a_n),r_n)}}{\Mset{\List
    a1n}}\) contains exactly one element \(h\) such that
\(\Evlin^h\not=0\), namely the multiset %
\(h=\Mset{(((r_1,a_1),r_1),a_1),\dots,(((r_n,a_n),r_n),a_n)}\), and of
course \(\Evlin^h=1\).
If \(r=0\) we have \(r_1=\cdots=r_n=0\) and hence
\begin{align*}
  d_{(\Mset{(r_1,a_1),\dots,(r_n,a_n)},r),p}
  =\Multinomb{p}{h}=1\,.
\end{align*}
If \(r=1\), there is exactly one \(i\in\Eset{1,\dots,n}\) such that
\(r_i=1\), and we have \(r_j=0\) for \(j\not=i\). Then we have
\begin{align*}
  d_{(\Mset{(r_1,a_1),\dots,(r_n,a_n)},r),p}
  =\Multinomb{p}{h}=p(a_i)\,.
\end{align*}
To summarize
\begin{align*}
  (\Sdiff_X)_{m',(r,m)}
  =
  \begin{cases}
    1 &\text{if }r=0\text{ and }m'=\Mspmap0m\\
    m(a) &\text{if }r=1,\ a\in\Supp m
    \text{ and }m'=(\Mspmap 0{(m-\Mset a)})+\Mset{(1,a)}\\
    0&\text{otherwise.}
  \end{cases}
\end{align*}
Let \(t\in\Kl\PCOH(X,Y)=\Pcohp{\Limpl{\Excl X}Y}\). %
Then \(\Sdfun t\in\Kl\PCOH(\Sdfun X,\Sdfun Y)\) is defined as %
\((\Sfun t)\Compl\Sdiff_X\), so we have %
\begin{align*}
  (\Sdfun t)_{m',(r,b)}
  =
  \begin{cases}
    t_{m,b}&\text{if }r=0\text{ and }m'=\Mspmap 0m\\
    (m(a)+1)t_{m+\Mset a,b}&\text{if }r=1
    \text{ and }m'=(\Mspmap0m)+\Mset{(1,a)}\\
    0&\text{otherwise.}
  \end{cases}
\end{align*}
Notice that in the above trichotomy the multiset \(m\) is completely
determined by the condition on \(m'\): in the first case
\(m'=\Mset{(0,a_1),\dots,(0,a_n)}\) and then \(m=\Mset{\List a1n}\). In
the second case \(m'=\Mset{(r_1,a_1),\dots,(r_n,a_n)}\) and there is
exactly one index \(i\) such that \(r_i=1\), and we have \(r_j=0\) for
\(j\not=i\). Then we have
\(m=\Mset{a_1,\dots,a_{i-1},a_{i+1},\dots,a_n}\) and \(a=a_i\).

Let \(t\in\Kl\PCOH(X,Y)\), to describe the function %
\(\Fun{\Sdfun t}:\Pcohp{\Sdfun X}\to\Pcohp{\Sdfun Y}\), remember first
that \(\Pcohp{\Sdfun X}\) can be identified with the set of all pairs
\((x,u)\in\Pcoh X^2\) such that \(x+u\in\Pcoh X\). With this
identification, the element \(\Prom{(x,u)}\) of
\(\Pcohp{\Excl{\Scfun X}}\) is given by
\begin{align*}
  \Prom{(x,u)}_{m'}=\prod_{a\in\Web X}x_a^{m'(0,a)}
  \prod_{a\in\Web X}u_a^{m'(1,a)}\,.
\end{align*}
If \(m'=\Mspmap 0m\) then \(\Prom{(x,u)}_{m'}=x^m\)
and if \(m'=\Mspmap0m+\Mset{(1,a)}\) then \(\Prom{(x,u)}_{m'}=x^mu_a\).
It follows that
\begin{align*}
  \Fun{\Sdfun t}(x,u)
  =\left(\Fun t(x),\sum_{m\in\Mfin X,a\in\Web X,b\in\Web Y}
  (m(a)+1)t_{m+\Mset a,b}x^mu_a\,\Base b\right)
\end{align*}
and notice that the second component of this tuple is nothing but the
\(u\)-linear component of the powerseries \(\Fun t(x+u)\)
(see~\cite{Ehrhard19}). So, as expected, if we set \(f=\Fun t\) then
\[
  \Fun{\Sdfun t}(x,u)=(f(x),f'(x)\cdot u)
\]
in the ordinary sense of mathematical differentiation.

\begin{Example}
  There is a morphism \(t\in\Kl\PCOH(\Sone,\Sone)\) such that,
  identifying \(\Pcoh\Sone\) with \(\Intcc01\), one has
  \(\Fun t(x)=1-\sqrt{1-x}=\sum_{n\in\Nat}t_nx^n=f(x)\) for a sequence
  \((t_n)_{n\in\Nat}\) of non-negative real numbers that we could
  write explicitly. Then %
  \( \Fun{\Sdfun t}(x,u) =(\Fun t(x),\sum_{n\in\Nat}(n+1)t_{n+1}x^nu)
  =(f(x),f'(x)u).  \) %
  In this case it is interesting to notice that
  \(f'(x)=\frac1{2\sqrt{1-x}}\) is not defined for \(x=1\) but that
  \(f'(x)u\) is defined even for \(x=1\) (and takes value \(0\))
  because of the constraint that \(x+u=1\). And indeed we know that %
  \(\Sdfun t\in\Kl\PCOH(\Sdfun\Sone,\Sdfun\Sone)\). The function \(f\)
  is entirely defined by the equation \(f(x)=\frac 12x+\frac12f(x)^2\)
  and by the fact that the corresponding series must have only
  non-negative coefficients. It is easy to write in a probabilistic
  version of \PCF{} with a unit type a recursive program which is
  interpreted as \(t\).
\end{Example}

\section{Syntax of \texorpdfstring{\(\Lang\)}{Λcd}}\label{sec:syntax}
Our choice of notations for \(\Lang\) is fully coherent with the
notations chosen to describe the model, suggesting a straightforward
denotational interpretation.
The types are \( A,B,\dots \Bnfeq \Tdnat d \Bnfor \Timpl AB \) (with
$d\in\Nat$) and then for any type $A$ we define $\Tdiff A$ as follows:
$\Tdiff{(\Tdnat d)}=\Tdnat{d+1}$ and
$\Tdiff{(\Timpl AB)}=(\Timpl A{\Tdiff B})$.
Terms are given by
\begin{align*}
  M,N,\dots
  \Bnfeq x
  &\Bnfor \Abst xAM
    \Bnfor \App MN
    \Bnfor \Lfix M
    \Bnfor \Num n
    \Bnfor \Lsucc dM
    \Bnfor \Lpred dM
    \Bnfor \Lift AdMPQ\\
    &\Bnfor \Llett AdxMP
    \Bnfor \Ldiff M
    \Bnfor \Lprojd idM
    \Bnfor \Linjd idM
     \Bnfor \Lsumd dM
     \Bnfor \Lflipdl dlM
    \Bnfor \Lzerot A
    \Bnfor\Lplus MN
\end{align*}
where $n,d,l\in\Nat$ and $i\in\Eset{0,1}$, so that our syntax has
countably many constructs.

\subsection{The typing system}

The typing system uses a reduction relation $\Linred$ expressing that
most constructs are linear wrt.~$0$ and %
addition of terms; it is specified in Figure~\ref{fig:lin-eq} and is
based on the following notion of \emph{linear context}
\begin{equation} %
  \label{eq:linear-context}
  \begin{split}
  L \Bnfeq\Echole
  & \Bnfor\Abst xAL
    \Bnfor\App LN
    \Bnfor\Lsucc dL
    \Bnfor\Lpred dL
    \Bnfor\Lif dLPQ
    \Bnfor\Llet dxLP
  \\
  & \Bnfor\Ldiff L
    \Bnfor \Lprojd idL
    \Bnfor \Linjd idL
    \Bnfor \Lsumd dL
    \Bnfor \Lflipdl dlL\,.
  \end{split}
\end{equation}
The \emph{height} $\Lcht L$ of a linear context $L$ is the distance
between its hole and its root, in other words $\Lcht\Echole=0$,
$\Lcht{\Abst xAL}=1+\Lcht L$, $\Lcht{\Lif dLPQ}=1+\Lcht L$ \emph{etc}.

\begin{remark}
  We have decorated the conditional and the let constructs with a type, which is
  intended to be the type of its last parameter(s).
  The only purpose of this decoration is to provide a type for the
  resulting $\Lzero$ in the linear reduction of $\Lif d\Lzero PQ$ and
  $\Llet dx\Lzero P$ in Figure~\ref{fig:lin-eq}.
  Most often, we will drop this type decoration which can easily be
  retrieved from the context.
\end{remark}

\begin{lemma}
  For any linear context \(L\) we have %
  \(L[0]\Trcl\Linred0\) and %
  \(L[M_0+M_1]\Trcl\Linred L[M_0]+L[M_1]\).
\end{lemma}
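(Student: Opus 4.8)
The plan is to prove both statements by a single structural induction on the linear context $L$, following the grammar~\eqref{eq:linear-context}; equivalently one inducts on the height $\Lcht L$. The two reductions are handled by the same induction, since each uses its own inductive hypothesis in exactly the same way. Throughout I rely on the fact that $\Linred$ is closed under the single-construct linear contexts, so that $M\Trcl\Linred M'$ entails $C[M]\Trcl\Linred C[M']$ for every construct-shaped context $C$ occurring in~\eqref{eq:linear-context}. This contextual closure is part of the definition of the relation in Figure~\ref{fig:lin-eq}, and it is what lets me push a reduction obtained from the inductive hypothesis underneath the outermost construct.

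For the base case $L=\Echole$ both claims are immediate: $L[0]=0$ and $L[M_0+M_1]=M_0+M_1=L[M_0]+L[M_1]$, so each holds by reflexivity of $\Trcl\Linred$ (zero steps).

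For the inductive step I would write $L=C[L']$, where $C$ is one of the construct-shaped contexts of~\eqref{eq:linear-context} and $L'$ is a linear context of strictly smaller height, so that the inductive hypothesis applies to $L'$. For the first claim, $L[0]=C[L'[0]]$; the inductive hypothesis gives $L'[0]\Trcl\Linred 0$, pushing this through $C$ yields $C[L'[0]]\Trcl\Linred C[0]$, and Figure~\ref{fig:lin-eq} supplies for each such $C$ a rule $C[0]\Linred 0$, whence $L[0]\Trcl\Linred 0$. For the second claim, $L[M_0+M_1]=C[L'[M_0+M_1]]$; the inductive hypothesis gives $L'[M_0+M_1]\Trcl\Linred L'[M_0]+L'[M_1]$, pushing this through $C$ yields $C[L'[M_0+M_1]]\Trcl\Linred C[L'[M_0]+L'[M_1]]$, and then the linearity rule $C[N_0+N_1]\Linred C[N_0]+C[N_1]$ of Figure~\ref{fig:lin-eq} (instantiated with $N_i=L'[M_i]$) completes the reduction to $L[M_0]+L[M_1]$.

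The argument is entirely uniform across the many constructs, so the only genuine task is to check that Figure~\ref{fig:lin-eq} does provide the matching pair of rules $C[0]\Linred 0$ and $C[N_0+N_1]\Linred C[N_0]+C[N_1]$ for every production of~\eqref{eq:linear-context}. The sole non-uniformity I anticipate is in the conditional $\Lif dLPQ$ and the let $\Llet dxLP$, where the zero-rule produces the typed constant $\Lzerot A$ rather than a bare $0$ --- this is exactly the reason for the type decoration discussed in the preceding remark --- and one identifies this $\Lzerot A$ with the $0$ of the statement. Beyond this bookkeeping there is no real obstacle, since everything reduces to the one-step linearity rules together with the contextual closure of $\Linred$.
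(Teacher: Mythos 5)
Your proof is correct and follows essentially the same route as the paper's: induction on the height of \(L\), decomposing \(L\) as a height-one context applied to a smaller linear context, pushing the inductive reduction through the outer context via the contextual-closure rule of Figure~\ref{fig:lin-eq}, and finishing with the one-step rule \(L[0]\Linred 0\) or \(L[M_0+M_1]\Linred L[M_0]+L[M_1]\). The only difference is presentational: you spell out the \(L[0]\) case and the \(\Lzerot A\) bookkeeping, which the paper dismisses as ``similar''.
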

\begin{proof}
  By induction on \(\Lcht L\). If \(\Lcht L=0\) we use the fact that
  \(R\Trcl\Linred R\).
  Otherwise we have \(L=K[L']\) where \(\Lcht K=1\) and
  \(\Lcht{L'}=\Lcht L-1\).
  By inductive hypothesis %
  \(L'[M_0+M_1]\Trcl\Linred L'[M_0]+L'[M_1]\) and hence by definition
  of \(\Linred\) we have %
  \(L[M_0+M_1]\Trcl\Linred K[L'[M_0]+L'[M_1]]\) and by definition of
  \(\Linred\) again we have %
  \(K[L'[M_0]+L'[M_1]]\Linred L[M_0]+L[M_1]\). %
  The case of \(L[0]\) is similar.
\end{proof}

\begin{figure}
  \centering
  \begin{prooftree}
    \infer0{L[0]\Linred 0}
  \end{prooftree}
  \Treesep
  \begin{prooftree}
    \infer0{L[M_0+M_1]\Linred L[M_0]+L[M_1]}
  \end{prooftree}
  \Treesep
  \begin{prooftree}
    \hypo{M\Linred M'}
    \infer1{L[M]\Linred L[M']}
  \end{prooftree}
  \caption{Linear reduction, $L$ must be a linear context of height $1$.}
  \label{fig:lin-eq}  
\end{figure}

\begin{figure}
  \begin{center}
    \begin{prooftree}
      \hypo{i\in\Eset{1,\dots,k}}
      \infer1[\Trvar]{\Tseq{(x_1:A_1,\dots,x_k:A_k)}{x_i}{A_i}}
    \end{prooftree}
    \labeltext{$\Trvar$}{rl:trvar}
    \Treesep
    \begin{prooftree}
      \hypo{\Tseq{\Gamma,x:A}{M}{B}}
      \infer1[\Trabs]{\Tseq{\Gamma}{\Abst xAM}{\Timpl AB}}
    \end{prooftree}
    \labeltext{$\Trabs$}{rl:trabs}
  \end{center}
  \begin{center}
    \begin{prooftree}
      \hypo{\Tseq\Gamma M{\Timpl AB}}
      \hypo{\Tseq\Gamma NA}
      \infer2[\Trapp]{\Tseq\Gamma{\App MN}B}
    \end{prooftree}
    \labeltext{$\Trapp$}{rl:trapp}
    \Treesep
    \begin{prooftree}
      \hypo{\Tseq\Gamma M{\Timpl AA}}
      \infer1[\Trfix]{\Tseq\Gamma{\Lfix M}A}
    \end{prooftree}
    \labeltext{$\Trfix$}{rl:trfix}
    \Treesep
    \begin{prooftree}
      \hypo{n\in\Nat}
      \infer1[\Trnum]{\Tseq\Gamma{\Num n}\Tnat}
    \end{prooftree}
    \labeltext{$\Trnum$}{rl:trnum}
  \end{center}
  \begin{center}
    \begin{prooftree}
      \hypo{\Tseq\Gamma M{\Tdiffm d\Tnat}}
      \infer1[\Trsuc]{\Tseq\Gamma{\Lsucc dM}{\Tdiffm d\Tnat}}
    \end{prooftree}
    \labeltext{$\Trsuc$}{rl:trsuc}
    \Treesep
    \begin{prooftree}
      \hypo{\Tseq\Gamma M{\Tdiffm d\Tnat}}
      \infer1[\Trpred]{\Tseq\Gamma{\Lpred dM}{\Tdiffm d\Tnat}}
    \end{prooftree}
    \labeltext{$\Trpred$}{rl:trpred}
  \end{center}
  \begin{center}
    \begin{prooftree}
      \hypo{\Tseq\Gamma M{\Tdiffm d\Tnat}}
      \hypo{\Tseq\Gamma PA}
      \hypo{\Tseq\Gamma QA}
      \infer3[\Trif]{\Tseq\Gamma{\Lift AdMPQ}{\Tdiffm d A}}
    \end{prooftree}
    \labeltext{$\Trif$}{rl:trif}
    \Treesep
    \begin{prooftree}
      \infer0[\Trzero]{\Tseq\Gamma{\Lzerot A}A}
    \end{prooftree}
    \labeltext{$\Trzero$}{rl:trzero}
  \end{center}
  \begin{center}
    \begin{prooftree}
      \hypo{\Tseq\Gamma{M}{\Tdiffm{d+1}A}}
      \hypo{i\in\Eset{0,1}}
      \infer2[\Trprojo]{\Tseq\Gamma{\Lprojd idM}{\Tdiffm dA}}
    \end{prooftree}
    \labeltext{$\Trprojo$}{rl:trprojo}
    \Treesep
    \begin{prooftree}
      \hypo{\Tseq\Gamma M{\Tdiffm{d+1}A}}
      \infer1[\Trprojt]{\Tseq\Gamma{\Lplus{\Lprojd0dM}{\Lprojd1dM}}{\Tdiffm dA}}
    \end{prooftree}
    \labeltext{$\Trprojt$}{rl:trprojt}
  \end{center}
  \begin{center}
    \begin{prooftree}
      \hypo{\Tseq\Gamma{M_0+M_1}{\Tdiffm{d+1}A}}
      \infer1[\Trprojd]{\Tseq\Gamma{\Lprojd1d{M_0}+\Lprojd0d{M_1}}{\Tdiffm dA}}
    \end{prooftree}
    \labeltext{$\Trprojd$}{rl:trprojd}
    \Treesep
    \begin{prooftree}
      \hypo{\Tseq\Gamma MA}
      \hypo{M\Linred M'}
      \infer2[\Trlin]{\Tseq\Gamma{M'}A}
    \end{prooftree}
    \labeltext{$\Trlin$}{rl:trlin}
  \end{center}
  \begin{center}
    \begin{prooftree}
      \hypo{\Tseq\Gamma M{\Tdiffm dA}}
      \hypo{i\in\Eset{0,1}}
      \infer2[\Trinj]{\Tseq\Gamma{\Linjd idM}{\Tdiffm{d+1}A}}
    \end{prooftree}
    \labeltext{$\Trinj$}{rl:trinj}
    \Treesep
    \begin{prooftree}
      \hypo{\Tseq\Gamma M{\Tdiffm{d+2}A}}
      \infer1[\Trsum]{\Tseq\Gamma{\Lsumd dM}{\Tdiffm{d+1}A}}
    \end{prooftree}
    \labeltext{$\Trsum$}{rl:trsum}
    \Treesep
    \begin{prooftree}
      \hypo{\Tseq\Gamma M{\Tdiffm{d+l+2}A}}
      \infer1[\Trcirc]{\Tseq\Gamma{\Lflipdl dlM}{\Tdiffm{d+l+2}A}}
    \end{prooftree}
    \labeltext{$\Trcirc$}{rl:trcirc}
  \end{center}
  \begin{center}
    \begin{prooftree}
      \hypo{\Tseq\Gamma M{\Timpl AB}}
      \infer1[\Trdiff]{\Tseq\Gamma{\Ldiff M}{\Timpl{\Tdiff A}{\Tdiff B}}}
    \end{prooftree}
    \labeltext{$\Trdiff$}{rl:trdiff}
    \Treesep
    \begin{prooftree}
      \hypo{\Tseq\Gamma M{\Tdiffm d\Tnat}}
      \hypo{\Tseq{\Gamma,x:\Tnat}NB}
      \infer2[\Trlet]{\Tseq\Gamma{\Llett AdxMN}{\Tdiffm dB}}
    \end{prooftree}
    \labeltext{$\Trlet$}{rl:trlet}
  \end{center}
  \caption{Typing rules}
  \label{fig:typing-rules}
\end{figure}
One should think of a term of type $\Tdiffm kA$ as a complete binary
tree of height $k$ whose leaves have type $A$.
In constructs such as $\Lsucc dM$, the integer $d$ represents the
``depth'' at which the corresponding operation is performed in a tree
of type $\Tdiffm kA$ with $k\geq d$.
The main intuitive feature of such a tree is that its leaves are
summable. %
When $d=0$ we often drop the superscript.

We provide a typing system in Figure~\ref{fig:typing-rules} allowing
one to prove typing judgments $\Tseq\Gamma MA$.
Notice that in general, when %
$\Tseq\Gamma{N_0}A$ and $\Tseq\Gamma{N_1}A$, %
it is not necessarily true that %
$\Tseq\Gamma{N_0+N_1}A$.

Some examples of terms are provided in Section~\ref{sec:term-examples}
together with their relational semantics.
More examples will be provided in forthcoming articles.

\begin{remark}
  We use the notation \(\Ldiff M\) for the syntactic differentiation
  of a term \(M\) instead of \(\mathsf D M\) as in an earlier version
  of this paper, and similarly for types.
  In that way we avoid a clash with the notations used in the
  (categorical and additive) setting of cartesian differential
  categories for denoting an operation which, in our syntax, would
  correspond to \(\Lproj 1{\Ldiff M}\).
  We use fully compatible notations in the final version
  of~\cite{Ehrhard23a}.
\end{remark}

\subsection{Differential}
Given a variable $x$ and a term $N$, we define a term %
$\Ldletv xM$ in Figure~\ref{fig:diff-subst} which is called the
\emph{differential} of \(M\) with respect to \(x\).

As it is usual in the \(\lambda\)-calculus, this definition requires
some \(\alpha\)-conversions to be performed on the fly.
\begin{figure}
  \begin{align*}
    \Ldletv xy
    &=
      \begin{cases}
        x & \text{if }y=x\\
        \Linj0 y & \text{otherwise}
      \end{cases}
          &%
            \Ldletv x{\Abst yBP}&=\Abst yB{\Ldletv xP}
    \text{ if }x\not=y\\
    \Ldletv x{\Ldiff M}
    &=\Lflip{\Ldiff{\Ldletv xM}}
          &%
            \Ldletv x{\App PQ}
                 &={\App{\Lsum{\Ldiff{\Ldletv xP}}}{\Ldletv xQ}}\\
    \Ldletv x{\Lfix M}
    &=\Lfix{(\Lsum{\Ldiff{\Ldletv xM}})}
          &%
            \Ldletv{x}{\Num n}&=\Linj0{\Num n}\\
    \Ldletv x{\Lsucc dM}&=\Lsucc{d+1}{\Ldletv xM}
          &%
            \Ldletv x{\Lpred dM}&=\Lpred{d+1}{\Ldletv xM}\\
    \Ldletv x{\Lif dMPQ}
    &=\Lsum{\Lflipl d{\Lif{d+1}{\Ldletv xM}{\Ldletv xP}{\Ldletv xQ}}}
      \hspace{-16em}
          &\\
    \Ldletv x{\Llet dyPQ}
    &=\Lsum{\Lflipl d{\Llet{d+1} y{\Ldletv xP}{\Ldletv xQ}}}
      \hspace{-16em}
          &\\
    \Ldletv x\Lzero
    &=\Lzero
          &\Ldletv x{\Lplus{M_0}{M_1}}
                 &=\Lplus{\Ldletv x{M_0}}{\Ldletv x{M_1}}\\
    \Ldletv x{\Lprojd idM}
    &=\Lprojd i{d+1}{\Ldletv xM}
          &%
            \Ldletv x{\Lsumd dM}
                 &=\Lsumd{d+1}{\Ldletv xM}\\
    \Ldletv x{\Linjd idM}&=\Linjd i{d+1}{\Ldletv xM}
          &%
            \Ldletv x{\Lflipdl dlM}
                 &=\Lflipdl{d+1}l{\Ldletv xM}
  \end{align*}
  \caption{Inductive definition of the differential of a term}
  \label{fig:diff-subst}  
\end{figure}

\begin{lemma}\label{lemma:ldlet-lin-context}
  Let \(L\) be a linear context. There is a linear context %
  \(\Ldletv xL\) %
  such that, for any term \(M\), we have %
  \(\Ldletv x{L[M]}=\Ldletv xL[\Ldletv xM]\).
\end{lemma}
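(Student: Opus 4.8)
The plan is to define $\Ldletv xL$ by structural induction on the linear context $L$ and to verify the identity $\Ldletv x{L[M]}=\Ldletv xL[\Ldletv xM]$ in parallel, through a case analysis whose cases are in exact correspondence with the clauses of Figure~\ref{fig:diff-subst}. In the base case $L=\Echole$ we have $L[M]=M$ for every term $M$, so setting $\Ldletv x\Echole=\Echole$ makes the identity hold trivially. For the inductive step I would write $L$ as a single linear-context former applied to the immediate sub-context $L'$ that still carries the hole (which exists and is unique, since a linear context contains exactly one occurrence of $\Echole$ reachable only through the formers of \Eqref{eq:linear-context}), and use the induction hypothesis in the form $\Ldletv x{L'[M]}=\Ldletv x{L'}[\Ldletv xM]$, where $\Ldletv x{L'}$ is already known to be a linear context.

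The mechanism common to every case is the following. Each clause of Figure~\ref{fig:diff-subst} corresponding to a linear-context former rewrites $\Ldletv x{L[M]}$ into an expression in which the recursive occurrence $\Ldletv x{L'[M]}$ sits inside a nest of operators, every one of which is itself a linear-context former listed in \Eqref{eq:linear-context}; since plugging into a context is compositional, that nest, with the hole of $\Ldletv x{L'}$ left open, is precisely the linear context $\Ldletv xL$ we want, and plugging $\Ldletv xM$ into it reproduces the right-hand side of the clause. For the depth-shifting formers this is immediate and the nest is a single operator: e.g.\ for $L=\Lsucc d{L'}$ we have $\Ldletv x{\Lsucc d{L'[M]}}=\Lsucc{d+1}{\Ldletv x{L'[M]}}$, so we set $\Ldletv x{\Lsucc d{L'}}=\Lsucc{d+1}{\Ldletv x{L'}}$; the formers $\Lpred d\Echole$, $\Lprojd id\Echole$, $\Linjd id\Echole$, $\Lsumd d\Echole$ and $\Lflipdl dl\Echole$ behave identically, as does the binder case $L=\Abst yA{L'}$, which yields $\Ldletv x{\Abst yA{L'}}=\Abst yA{\Ldletv x{L'}}$ once the on-the-fly $\alpha$-conversion that keeps the bound variable $y$ distinct from $x$ (required for the corresponding clause to apply) has been performed.

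The cases carrying real content are the four in which $\Ldletv x{\cdot}$ wraps the recursive call in extra constructors. For application, $\Ldletv x{\App{L'[M]}N}=\App{\Lsum{\Ldiff{\Ldletv x{L'[M]}}}}{\Ldletv xN}$, so the nest $\App{\Lsum{\Ldiff\Echole}}{\Ldletv xN}$, built from $\Ldiff$, $\Lsum$ and $\App\Echole{\Ldletv xN}$, is a linear context; for differentiation one uses the nest $\Lflip{\Ldiff\Echole}$; and for the conditional and the $\mathsf{let}$, where the hole occupies the scrutinee of the conditional (resp.\ the bound expression of the $\mathsf{let}$) while the branches $P,Q$ (resp.\ the body $Q$) contribute the fixed differentiated terms $\Ldletv xP,\Ldletv xQ$ (resp.\ $\Ldletv xQ$), one uses the nests $\Lsum{\Lflipl d{\Lif{d+1}\Echole{\Ldletv xP}{\Ldletv xQ}}}$ and $\Lsum{\Lflipl d{\Llet{d+1}y\Echole{\Ldletv xQ}}}$. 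In each of these the operators $\Ldiff$, $\Lsum$, $\Lflip$ and $\Lflipl d$ are abbreviations for instances of the formers $\Lsumd d$ and $\Lflipdl dl$ of \Eqref{eq:linear-context}, so the resulting nest is a bona fide linear context. Composing the formers and using that plugging commutes with each of them then gives $\Ldletv xL[\Ldletv xM]=\Ldletv x{L[M]}$, completing the induction.

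I expect no deep difficulty: the argument is essentially the observation that $\Ldletv x{\cdot}$ acts functorially on the grammar of linear contexts. The only genuine obstacle is the bookkeeping in these four wrapping cases, namely checking that every constructor introduced by a differential clause — in particular the flips and summations, together with their depth indices — is indeed one of the linear-context formers of \Eqref{eq:linear-context}, so that the hole stays in a linear position; this, together with the standing $\alpha$-conversion convention in the two binding cases, is exactly what guarantees that $\Ldletv xL$ is a linear context and that hole-filling commutes with $\Ldletv x{\cdot}$ as claimed.
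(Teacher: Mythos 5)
Your proof is correct and takes essentially the same approach as the paper: the clause-by-clause definition of \(\Ldletv xL\) you build by structural induction (identity for the hole, depth-shifted formers for the unary cases, and the wrapped nests \(\App{\Lsum{\Ldiff\Echole}}{\Ldletv xN}\), \(\Lflip{\Ldiff\Echole}\), \(\Lsum{\Lflipl d{\Lif{d+1}\Echole{\Ldletv xP}{\Ldletv xQ}}}\), \(\Lsum{\Lflipl d{\Llet{d+1}y\Echole{\Ldletv xQ}}}\) for application, differentiation, conditional and let) is precisely the definition the paper records in Figure~\ref{fig:context-diff}, and the commutation with hole-filling is the same routine induction the paper summarizes as a simple analysis of Figure~\ref{fig:diff-subst}.
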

\begin{proof}
  Simple analysis of the definition of \(\Ldletv xM\) in
  Figure~\ref{fig:diff-subst}.
  The definition of \(\Ldletv xL\) is given in
  Figure~\ref{fig:context-diff}.
\end{proof}


\begin{figure}
  \centering
  \begin{align*}
    \Ldletv x\Echole
    &=\Echole
    &
      \Ldletv x{\Abst yAL}
    &=\Abst yA{\Ldletv xL}
      \text{ if }x\not=y\\
    \Ldletv x{\Ldiff L}
    &=\Lflip{\Ldiff{\Ldletv xL}}
    &
      \Ldletv x{\App LN}
    &={\App{\Lsum{\Ldiff{\Ldletv xL}}}{\Ldletv xN}}\\
    \Ldletv x{\Lsucc dL}&=\Lsucc{d+1}{\Ldletv xL}
          &%
            \Ldletv x{\Lpred dL}&=\Lpred{d+1}{\Ldletv xL}\\
    \Ldletv x{\Lif dLPQ}
    &=\Lsum{\Lflipl d{\Lif{d+1}{\Ldletv xL}{\Ldletv xP}{\Ldletv xQ}}}
      \hspace{-16em}
          &\\
    \Ldletv x{\Llet dyLQ}
    &=\Lsum{\Lflipl d{\Llet{d+1} y{\Ldletv xL}{\Ldletv xQ}}}
      \hspace{-16em}
          &\\
    \Ldletv x{\Lprojd idL}
    &=\Lprojd i{d+1}{\Ldletv xL}
          &%
            \Ldletv x{\Lsumd dL}
                 &=\Lsumd{d+1}{\Ldletv xL}\\
    \Ldletv x{\Linjd idL}&=\Linjd i{d+1}{\Ldletv xL}
          &%
            \Ldletv x{\Lflipdl dlL}
                 &=\Lflipdl{d+1}l{\Ldletv xL}
  \end{align*}
  \caption{Inductive definition of the differential of a context}
  \label{fig:context-diff}
\end{figure}

\begin{lemma}\label{lemma:linred-in-context}
  If \(R\Linred R'\) and \(L\) is a linear context then %
  \(L[R]\Linred L[R']\). %
  We also have \(L[0]\Trcl\Linred 0\) and
  \(L[R_0+R_1]\Trcl\Linred L[R_0]+L[R_1]\).
\end{lemma}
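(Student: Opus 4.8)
The plan is to separate the three assertions. The last two, $L[0]\Trcl\Linred 0$ and $L[R_0+R_1]\Trcl\Linred L[R_0]+L[R_1]$, coincide (up to renaming the variables) with the earlier lemma of this subsection, the one proved by induction on $\Lcht L$ using the first two rules of Figure~\ref{fig:lin-eq}; so I would simply invoke that lemma rather than reprove it. The only genuinely new content is the first claim, namely compatibility of the single-step relation $\Linred$ with an \emph{arbitrary} linear context, and this I would establish by a short induction on the height $\Lcht L$.

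For that induction, the base case $\Lcht L=0$ forces $L=\Echole$, whence $L[R]=R\Linred R'=L[R']$ directly from the hypothesis. For the inductive step $\Lcht L\geq 1$, I would decompose $L=K[L']$ with $K$ a linear context of height $1$ and $\Lcht{L'}=\Lcht L-1$; such a decomposition exists for every production of the grammar~\eqref{eq:linear-context} and is precisely the one already used in the earlier lemma. Applying the inductive hypothesis to $L'$ gives $L'[R]\Linred L'[R']$, and then the third rule of Figure~\ref{fig:lin-eq}---which is exactly the closure of $\Linred$ under height-$1$ contexts---yields $K[L'[R]]\Linred K[L'[R']]$, that is $L[R]\Linred L[R']$.

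I do not expect any real obstacle here. The only point requiring care is the mismatch between Figure~\ref{fig:lin-eq}, whose three rules are stated only for contexts of height $1$, and the statement of the lemma, which quantifies over all linear contexts; the height induction is exactly what bridges this gap, mirroring the structure of the earlier lemma. I would just confirm, by inspection of \eqref{eq:linear-context}, that the decomposition $L=K[L']$ into a height-$1$ head context $K$ and a shorter tail $L'$ is available for each clause of the grammar, which is immediate.
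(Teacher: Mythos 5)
Your proposal is correct and matches the paper's proof, which simply says ``straightforward inductions on $\Lcht L$'': your base case, the decomposition $L=K[L']$ with $\Lcht K=1$, and the use of the third rule of Figure~\ref{fig:lin-eq} are exactly that induction. Your observation that the last two claims duplicate the earlier unnumbered lemma (and can just be cited rather than reproved) is a harmless shortcut, not a different argument, since that lemma was itself proved by the same induction on $\Lcht L$.
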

\begin{proof}
  Straightforward inductions on \(\Lcht L\).
\end{proof}

\begin{lemma}\label{lemma:ldlet-linred}
  If $R\Linred R'$ then %
  $\Ldletv xR\Trcl\Linred\Ldletv x{R'}$.
\end{lemma}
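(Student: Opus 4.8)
The plan is to induct on the derivation of $R\Linred R'$ following the three rules of Figure~\ref{fig:lin-eq}, reducing everything to two already-established facts. First, that $\Ldletv x\cdot$ commutes with linear contexts, namely $\Ldletv x{L[M]}=\Ldletv xL[\Ldletv x{M}]$ for the linear context $\Ldletv xL$ of Figure~\ref{fig:context-diff} (Lemma~\ref{lemma:ldlet-lin-context}). Second, that $\Linred$ is compatible with arbitrary linear contexts, giving $L[0]\Trcl\Linred0$, $L[R_0+R_1]\Trcl\Linred L[R_0]+L[R_1]$, and the congruence $R\Linred R'\Rightarrow L[R]\Linred L[R']$ (Lemma~\ref{lemma:linred-in-context}). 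I would also read off Figure~\ref{fig:diff-subst} the clauses $\Ldletv x\Lzero=\Lzero$ and $\Ldletv x{\Lplus{M_0}{M_1}}=\Lplus{\Ldletv x{M_0}}{\Ldletv x{M_1}}$, which say that the differential commutes with $0$ and with $+$.

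For the first base case, $R=L[0]\Linred 0=R'$ with $L$ of height $1$, I would compute $\Ldletv x{L[0]}=\Ldletv xL[\Ldletv x\Lzero]=\Ldletv xL[0]$ by Lemma~\ref{lemma:ldlet-lin-context} and the $\Lzero$-clause, and then invoke $\Ldletv xL[0]\Trcl\Linred0=\Ldletv x{R'}$ from Lemma~\ref{lemma:linred-in-context}, since $\Ldletv xL$ is a linear context. For the second base case, $R=L[\Lplus{M_0}{M_1}]\Linred L[M_0]+L[M_1]=R'$, the left-hand side rewrites by the $+$-clause to $\Ldletv xL[\Lplus{\Ldletv x{M_0}}{\Ldletv x{M_1}}]$, which reduces by Lemma~\ref{lemma:linred-in-context} to $\Lplus{\Ldletv xL[\Ldletv x{M_0}]}{\Ldletv xL[\Ldletv x{M_1}]}$; the right-hand side equals this same term by the $+$-clause together with Lemma~\ref{lemma:ldlet-lin-context}, so the two sides agree and the required reduction holds.

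For the inductive case (rule C), $R=L[M]$ and $R'=L[M']$ with $M\Linred M'$ derived by a strictly shorter derivation and $L$ of height $1$. The induction hypothesis gives $\Ldletv xM\Trcl\Linred\Ldletv x{M'}$, and Lemma~\ref{lemma:ldlet-lin-context} turns the goal into $\Ldletv xL[\Ldletv xM]\Trcl\Linred\Ldletv xL[\Ldletv x{M'}]$. I would then transport the reduction sequence $\Ldletv xM\Trcl\Linred\Ldletv x{M'}$ through the linear context $\Ldletv xL$ one step at a time, each step justified by the congruence clause of Lemma~\ref{lemma:linred-in-context}, by a trivial induction on the length of the sequence.

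I expect the only genuine point to watch --- and the reason the conclusion is phrased with $\Trcl\Linred$ rather than with a single step --- to be that, although $L$ has height $1$ in every rule, its differential $\Ldletv xL$ typically has height strictly greater than $1$ (for instance $\Ldletv x{\App LN}=\App{\Lsum{\Ldiff{\Ldletv xL}}}{\Ldletv xN}$). Extracting a $0$ or a sum from deep inside $\Ldletv xL$ therefore cannot be done by one application of the height-$1$ rules of Figure~\ref{fig:lin-eq}, which is precisely what the multi-step clauses of Lemma~\ref{lemma:linred-in-context} supply; keeping the bookkeeping of these several steps straight is essentially the whole content of the proof, with no deeper difficulty.
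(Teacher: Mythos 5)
Your proof is correct and follows essentially the same route as the paper's: induction on the derivation of $R\Linred R'$, using Lemma~\ref{lemma:ldlet-lin-context} to commute the differential with the linear context and Lemma~\ref{lemma:linred-in-context} (together with the clauses $\Ldletv x\Lzero=\Lzero$ and $\Ldletv x{\Lplus{M_0}{M_1}}=\Lplus{\Ldletv x{M_0}}{\Ldletv x{M_1}}$ of Figure~\ref{fig:diff-subst}) to push the resulting reductions through it. You are in fact slightly more thorough than the paper, which only writes out the sum case and the congruence case and leaves the $L[0]\Linred 0$ case implicit.
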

\begin{proof}
  By induction on the derivation of \(R\Linred R'\).
  Assume that \(R=L[R_0+R_1]\) and \(R'=L[R_0]+L[R_1]\) with
  \(\Lcht L=1\). Using Lemma~\ref{lemma:ldlet-lin-context}, we have %
  \begin{align*}
    \Ldletv xR
    &=\Ldletv xL[\Ldletv x{R_0+R_1}]\\
    &=\Ldletv xL[\Ldletv x{R_0}+\Ldletv x{R_1}]\\
    &\Trcl\Linred\Ldletv xL[\Ldletv x{R_0}]+\Ldletv xL[\Ldletv x{R_1}]\\
    &=\Ldletv x{L[R_0]}+\Ldletv x{L[R_1]}\\
    &=\Ldletv x{L[R_0]+L[R_1]}
  \end{align*} %
  by Lemma~\ref{lemma:linred-in-context}. %

  Assume now that \(R=L[M]\), \(R'=L[M']\) and \(M\Linred M'\). By
  inductive hypothesis we know that %
  \(\Ldletv xM\Trcl\Linred\Ldletv x{M'}\).
  We have %
  \(\Ldletv xR=\Ldletv xL[\Ldletv xM]\) and %
  \(\Ldletv x{R'}=\Ldletv xL[\Ldletv x{M'}]\) %
  by Lemma~\ref{lemma:ldlet-lin-context} and hence %
  \(\Ldletv xR\Trcl\Linred\Ldletv x{R'}\) by
  Lemma~\ref{lemma:linred-in-context}.
\end{proof}

\begin{lemma}\label{lemma:ldlet-typing}
  If $\Tseq{\Gamma,x:A}{M}{B}$ %
  then $\Tseq{\Gamma,x:\Tdiff A}{\Ldletv xM}{\Tdiff B}$.
\end{lemma}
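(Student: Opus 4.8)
The plan is to argue by induction on the \emph{typing derivation} of \(\Tseq{\Gamma,x:A}MB\) (rather than on the raw structure of \(M\)), since the rules \(\Trlin\), \(\Trprojt\) and \(\Trprojd\) of Figure~\ref{fig:typing-rules} are not syntax-directed and a sum \(\Lplus{N_0}{N_1}\) can only be produced by one of these three rules. Throughout, \(x:A\) is the distinguished differentiation variable, and the whole argument rests on two elementary type identities: \(\Tdiffm k{(\Timpl CD)}=\Timpl C{\Tdiffm kD}\), an immediate consequence of the defining equation \(\Tdiff{(\Timpl CD)}=\Timpl C{\Tdiff D}\), and \(\Tdiffm{j+k}E=\Tdiffm j{(\Tdiffm kE)}\). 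The recurring move is to read a given type under one of these equalities so that the relevant typing rule applies at the appropriate depth.

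For every syntax-directed rule, \(\Ldletv xM\) is the matching clause of Figure~\ref{fig:diff-subst}, and the check is direct. The representative case is the application \(\Trapp\): from \(\Tseq{\Gamma,x:A}P{\Timpl CB}\) and \(\Tseq{\Gamma,x:A}QC\) the induction hypothesis yields \(\Tseq{\Gamma,x:\Tdiff A}{\Ldletv xP}{\Timpl C{\Tdiff B}}\) (using the first identity on \(\Tdiff{(\Timpl CB)}\)) and \(\Tseq{\Gamma,x:\Tdiff A}{\Ldletv xQ}{\Tdiff C}\). Rule \(\Trdiff\) types \(\Ldiff{\Ldletv xP}\) at \(\Timpl{\Tdiff C}{\Tdiffm 2B}\); reading this as \(\Tdiffm 2{(\Timpl{\Tdiff C}B)}\), rule \(\Trsum\) with \(d=0\) types \(\Lsumd 0{\Ldiff{\Ldletv xP}}\) at \(\Timpl{\Tdiff C}{\Tdiff B}\), and applying this to \(\Ldletv xQ\) produces the required \(\Tdiff B\). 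This is exactly the place where the Leibniz and chain rules are reflected syntactically, and the fixpoint case \(\Trfix\) is handled identically.

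The conditional \(\Trif\) (and, word for word, \(\Trlet\)) is the other instructive case: with \(\Tseq{\Gamma,x:A}M{\Tdiffm d\Tnat}\), branches of type \(A'\), and hence \(B=\Tdiffm d{A'}\), the hypotheses give \(\Ldletv xM\) of type \(\Tdiffm{d+1}\Tnat\) and the differentiated branches of type \(\Tdiff{A'}\); rule \(\Trif\) at depth \(d+1\) gives type \(\Tdiffm{d+1}{(\Tdiff{A'})}=\Tdiffm{d+2}{A'}\), the flip \(\Lflipl d{\cdot}\) preserves it, and reading \(\Tdiffm{d+2}{A'}\) as \(\Tdiffm 2{(\Tdiffm d{A'})}\) lets \(\Trsum\) deliver \(\Tdiffm{d+1}{A'}=\Tdiff B\). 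All other syntax-directed clauses (\(\Trvar,\Trabs,\Trnum,\Trsuc,\Trpred,\Trzero,\Trinj,\Trsum,\Trcirc\)) are simpler instances of the same superscript bookkeeping, the binding cases relying on the \(\alpha\)-renaming implicit in Figure~\ref{fig:diff-subst} to ensure \(x\neq y\).

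It remains to treat the three non-syntax-directed rules, and here lies the only genuine subtlety. For \(\Trlin\), where \(M\Linred M'\), the hypothesis gives \(\Tseq{\Gamma,x:\Tdiff A}{\Ldletv xM}{\Tdiff B}\), and I would invoke Lemma~\ref{lemma:ldlet-linred} to get \(\Ldletv xM\Trcl\Linred\Ldletv x{M'}\), then iterate \(\Trlin\) along this sequence; the mismatch between the single-step \(\Linred\) in the rule and the multi-step \(\Trcl\Linred\) of the lemma is precisely why that lemma was proved first. For \(\Trprojt\) the term is \(\Lplus{\Lprojd0dM}{\Lprojd1dM}\), whose differential is \(\Lplus{\Lprojd0{d+1}{\Ldletv xM}}{\Lprojd1{d+1}{\Ldletv xM}}\); since the hypothesis types \(\Ldletv xM\) at \(\Tdiffm{d+2}C\) (writing \(C\) for the base type), rule \(\Trprojt\) at depth \(d+1\) returns \(\Tdiffm{d+1}C=\Tdiff B\). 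The rule \(\Trprojd\) is the same, now using the hypothesis on the \emph{whole} premise \(\Tseq\Gamma{M_0+M_1}{\Tdiffm{d+1}C}\) to type \(\Ldletv x{M_0}+\Ldletv x{M_1}\) at \(\Tdiffm{d+2}C\) before applying \(\Trprojd\) at depth \(d+1\). No individual step is deep; the real work is the systematic incrementing of the depth superscripts on \(\Lsumd{}\), \(\Lflipl{}\), \(\Lprojd{}\), \(\Linjd{}\) under \(\Ldletv x{\cdot}\), matched against the two type identities above.
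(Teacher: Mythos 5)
Your proof is correct and follows essentially the same route as the paper's: induction on the typing derivation with a case analysis on the last rule, the identity $\Tdiff{(\Timpl CD)}=\Timpl C{\Tdiff D}$ driving the depth bookkeeping, the combination \Trdiff{}--\Trsum{}--\Trapp{} for application, the depth-$(d{+}1)$ rule plus flip plus \Trsum{} for \Trif{}/\Trlet{}, and Lemma~\ref{lemma:ldlet-linred} (iterating \Trlin{} along $\Trcl\Linred$) together with the direct treatment of \Trprojt{} and \Trprojd{} for the non-syntax-directed rules.
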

\begin{proof}
  We consider the following cases, the others are left to the reader.
  
  \Proofcase %
  Assume first that $M=\Lif dP{Q_0}{Q_1}$ and that the last typing
  rule is~\ref{rl:trif} 
  so that $\Tseq{\Gamma,x:A}{P}{\Tdiffm d\Tnat}$ and
  $\Tseq{\Gamma,x:A}{Q_i}{B}$ for $i=0,1$. By inductive hypothesis we
  have %
  $\Tseq{\Gamma,x:\Tdiff A}{\Ldletv xP}{\Tdiffm{d+1}\Tnat}$ and %
  $\Tseq{\Gamma,x:\Tdiff A}{\Ldletv x{Q_i}}{\Tdiff{B}}$ %
  for $i=0,1$. Applying the rule~\ref{rl:trif} 
  we get
  $\Tseq{\Gamma,x:\Tdiff A} {\Lif{d+1}{\Ldletv xP}{\Ldletv
      x{Q_0}}{\Ldletv x{Q_1}}} {\Tdiffm{d+2}B}$ and hence we have
  $\Tseq{\Gamma,x:\Tdiff A}{\Lflipl d{\Lif{d+1}{\Ldletv xP}{\Ldletv
        x{Q_0}}{\Ldletv x{Q_1}}}}{\Tdiffm{d+2}B}$. %
  Therefore
  \[
    \Tseq{\Gamma,x:\Tdiff A}{\Lsum{\Lflipl d{\Lif{d+1}{\Ldletv xP}{\Ldletv
            x{Q_0}}{\Ldletv x{Q_1}}}}}{\Tdiffm{d+1}B}\,.
  \]
  Notice finally that $\Tdiffm{d+1}B=\Tdiff{\Tdiffm dB}$ is exactly
  the type expected for $\Ldletv xM$ in that case.

  \Proofcase %
  Assume that $M=\Ldiff P$ and that the last typing rule %
  is~\ref{rl:trdiff} 
  so that $\Tseq{\Gamma,x:A}P{\Simpl CD}$,
  $\Tseq{\Gamma,x:A}{M}{\Simpl{\Tdiff C}{\Tdiff D}}$ and
  $B=\Simplp{\Tdiff C}{\Tdiff D}$. By inductive hypothesis we have
  $\Tseq{\Gamma,x:\Tdiff A}{\Ldletv xP}{\Simpl C{\Tdiff D}}$ and hence
  $\Tseq{\Gamma,x:\Tdiff A}{\Ldiff{\Ldletv xP}}{\Simpl{\Tdiff C}{\Tdiffm
      2D}}=\Tdiffm2{\Simplp{\Tdiff C}{D}}=\Tdiff B$. It follows that
  $\Tseq{\Gamma,x:\Tdiff A}
  {\Lflip{\Ldiff{\Ldletv xP}}}{\Tdiff B}$ as required.
  
  \Proofcase %
  Assume next that $M=\Abst yCP$ and that the last typing rule %
  is~\ref{rl:trabs} 
  so that $\Tseq{\Gamma,x:A,y:C}{P}{D}$ (and hence
  $B=\Timplp CD$). By inductive hypothesis
  $\Tseq{\Gamma,x:\Tdiff A,y:C}{\Ldletv xP}{\Tdiff D}$ and hence
  $\Tseq{\Gamma,x:\Tdiff A}
  {\Abst yB{\Ldletv xP}}{\Timplp C{\Tdiff D}=\Tdiff B}$
  as required.

  \Proofcase %
  Assume now that $M=\App PQ$ and that the last typing rule %
  is~\ref{rl:trapp} 
  with $\Tseq{\Gamma,x:A}{P}{\Timpl CB}$ and
  $\Tseq{\Gamma,x:A}{Q}C$. Then by inductive hypothesis we have
  $\Tseq{\Gamma,x:\Tdiff A}
  {\Ldletv xP}{\Tdiff{(\Timpl CB)}}=\Timplp{C}{\Tdiff B}$
  and $\Tseq{\Gamma,x:\Tdiff A}{\Ldletv xQ}{\Tdiff C}$. Therefore
  \[
  \Tseq{\Gamma,x:\Tdiff A}{\Ldiff{\Ldletv xP}}
  {\Timpl{\Tdiff C}{\Tdiffm2B}=\Tdiffm 2{\Timplp{\Tdiff C}B}}
  \]  and hence %
  \(
  \Tseq{\Gamma,x:\Tdiff A}{\Lsum{\Ldiff{\Ldletv xP}}}
  {\Timpl{\Tdiff C}{\Tdiff B}}
  \) so that %
  \(
  \Tseq{\Gamma,x:\Tdiff A}{\App{\Lsum{\Ldiff{\Ldletv xP}}}{\Ldletv xQ}}
  {\Tdiff B}
  \)
  by the rules~\ref{rl:trsum} 
  and~\ref{rl:trapp}. 

  \Proofcase %
  Assume that $M=\Lflipdl dlP$ and that the last typing rule %
  is~\ref{rl:trcirc} 
  with $\Tseq{\Gamma,x:A}P{\Tdiffm{l+d+2}C=B}$. By inductive
  hypothesis we have %
  $\Tseq{\Gamma,x:\Tdiff A}{\Ldletv xP}{\Tdiffm{l+d+3}C}$ and hence %
  $\Tseq{\Gamma,x:\Tdiff A}
  {\Lflipdl{d+1}l{\Ldletv xP}}{\Tdiffm{l+d+3}C=\Tdiff B}$ by
  applying the rule~\ref{rl:trcirc}. 
  
  \Proofcase %
  Assume that $M=\Lfix P$ and that the last typing rule %
  is~\ref{rl:trfix} 
  with $\Tseq{\Gamma,x:A}P{\Timpl BB}$ so that
  $\Tseq{\Gamma,x:\Tdiff A}{\Ldletv xP}{\Timpl B{\Tdiff B}}$ and hence
  $\Tseq{\Gamma,x:\Tdiff A} {\Ldiff{\Ldletv xP}}{\Timpl{\Tdiff
      B}{\Tdiffm2B}}$ by~\ref{rl:trdiff} 
  and therefore
  $\Tseq{\Gamma,x:\Tdiff A}{\Lsum{\Ldiff{\Ldletv xP}}}{\Timpl{\Tdiff
      B}{\Tdiff B}}$ by~\ref{rl:trsum} 
  and finally
  $\Tseq{\Gamma,x:\Tdiff A}{\Lfix{(\Lsum{\Ldiff{\Ldletv xP}})}}{\Tdiff
    B}$ by~\ref{rl:trfix}. 

  \Proofcase %
  Assume that the last typing rule is~\ref{rl:trprojo} 
  meaning that we have
  $M=\Lprojd idP$ and $B=\Tdiffm dC$ with %
  $\Tseq{\Gamma,x:A}{P}{\Tdiffm{d+1}C}$. %
  Then by inductive hypothesis we have %
  $\Tseq{\Gamma,x:\Tdiff A}{\Ldletv x{P}}{\Tdiffm{d+2}C}$ and hence %
  $\Tseq{\Gamma,x:\Tdiff A}
  {\Lprojd i{d+1}{\Ldletv x{P}}}{\Tdiffm{d+1}C}$ by the
  rule~\ref{rl:trprojo}. 

  \Proofcase %
  Assume that the last typing rule is~\ref{rl:trprojt} 
  so that %
  $M=\Lprojd 0dP+\Lprojd 1dP$ and $B=\Tdiffm dC$ with %
  $\Tseq{\Gamma,x:A}{P}{\Tdiffm{d+1} C}$. %
  By inductive hypothesis we have %
  $\Tseq{\Gamma,x:\Tdiff A}{\Ldletv xP}{\Tdiffm{d+2}B}$ and hence %
  $\Tseq{\Gamma,x:\Tdiff A} {\Lprojd 0{d+1}{\Ldletv xP}+\Lprojd
    1{d+1}{\Ldletv xP}} {\Tdiffm{d+1}C}$ by the %
  rule~\ref{rl:trprojt}. 
  That is %
  $\Tseq{\Gamma,x:\Tdiff A}{\Ldletv xM}{\Tdiff B}$ as expected.

  \Proofcase %
  Assume that the last typing rule is~\ref{rl:trprojd} 
  so that %
  $M=\Lprojd0d{P_0}+\Lprojd1d{P_1}$ with %
  $\Tseq{\Gamma,x:A}{P_0+P_1}{\Tdiffm{d+1}C}$ and $B=\Tdiffm dC$. %
  By inductive hypothesis we have %
  \[
    \Tseq{\Gamma,x:\Tdiff A}
    {\Ldletv x{P_0}+\Ldletv x{P_1}}{\Tdiffm{d+2}C}
  \] and hence %
  $\Tseq{\Gamma,x:\Tdiff A}{\Lprojd 0{d+1}{\Ldletv x{P_0}}
    +\Lprojd 1{d+1}{\Ldletv x{P_1}}}{\Tdiffm{d+1}C}$ that is %
  $\Tseq{\Gamma,x:\Tdiff A}{\Ldletv xM}{\Tdiff B}$ as expected.
  
  \Proofcase %
  Assume that the last typing rule is~\ref{rl:trlin} 
  so that %
  $\Tseq{\Gamma,x:A}PB$ and $P\Linred M$. %
  By inductive hypothesis %
  $\Tseq{\Gamma,x:\Tdiff A}{\Ldletv xP}{\Tdiff B}$ and we have %
  $\Ldletv xP\Trcl\Linred\Ldletv xM$ by Lemma~\ref{lemma:ldlet-linred}
  and hence %
  $\Tseq{\Gamma,x:\Tdiff A}{\Ldletv xM}{\Tdiff B}$ by the %
  rule~\ref{rl:trlin}. 
\end{proof}

\subsection{Reduction rules}
We define a rewriting system \(\Lang\). The elements of \(\Rsca\Lang\)
are the terms of the syntax introduced above. 
The main reduction rules are given in Figure~\ref{fig:main-red-rules}.
\renewcommand\Red{\Rel{\Rsred\Lang}}
\begin{figure}
  \begin{align*}
    \App{\Abst xAM}{N}
    &\Red\Subst MNx
    &\Ldiffp{\Abst xAM}
    &\Red\Abst x{\Tdiff A}{\Ldletv xM}\\
    \Lsucc0{\Num n}&\Red\Num{n+1}
    &\Lpred0{\Num 0}&\Red\Num 0\\
    \Lpred0{\Num{n+1}}&\Red\Num n
    &\Lif0{\Num 0}{P}{Q}&\Red P\\
    \Lif0{\Num{n+1}}PQ&\Red Q
    &\Llet 0x{\Num n}P&\Red\Subst P{\Num n}x\\
    \Lfix P&\Red\App P{\Lfix P}
  \end{align*}
  \caption{Main reduction rules}
  \label{fig:main-red-rules}  
\end{figure}
A second series of reduction rules given in
Figure~\ref{fig:proj-red-rules} specifies how the projections
$\Lprojd jdM$ interact with the other constructs. They are crucially
used for ``reading'' the result of a computation by accessing leaves
of a ``tree'' of type $\Tdiffm dA$ (complete binary tree of height
$d$; the leaves can themselves be trees if $A=\Tdiffm eB$ with
$e>0$).

\begin{remark}
  The two rules
  $\Lprojd id{\Lprojd jeM}\Red\Lprojd j{e-1}{\Lprojd idM}$ %
  if $d<e$ and %
  $\Lprojd id{\Lprojd jeM}\Red\Lprojd j{e}{\Lprojd i{d+1}M}$ %
  if $e\leq d$ lead clearly to infinite sequences of computations so
  it would be tempting to remove one of them from the rewriting
  system. However both seem necessary in order to prove the soundness
  of the stack machine that we introduce in Section~\ref{sec:stacks}.
\end{remark}

We also need the reduction rule
\begin{center}
  \begin{prooftree}
    \hypo{M\Linred M'}
    \infer1{M\Red M'}
  \end{prooftree}
\end{center}

\begin{figure}
  {\footnotesize
    \begin{align*}
    \Lprojd id{\Abst xAM}
    &\Red\Abst xA{\Lprojd idM}
    &%
      \Lprojd id{\App MN}
    &\Red \App{\Lprojd idM}N\\
    \Lprojd id{\Lsucc{e} M}&\Red\Lsucc{e-1}{\Lprojd idM}
                             \text{\quad if }d<e
    &%
      \Lprojd id{\Lpred{e}M}&\Red\Lpred{e-1}{\Lprojd idM}
                              \text{\quad if }d<e\\
    \Lprojd id{\Lif{e} MPQ}
    &\Red\Lif{e-1}{\Lprojd idM}PQ
      \text{\quad if }d<e
    &%
      \Lprojd id{\Lif eMPQ}
    &\Red\Lif{e}M{\Lprojd i{d-e}P}{\Lprojd i{d-e}Q}
    \\
    &&&\text{\quad\quad\quad if }e\leq d\\
    \Lprojd id{\Llet{e}xMP}
    &\Red\Llet{e-1}x{\Lprojd idM}P
      \text{\quad if }d<e
    &%
      \Lprojd id{\Llet{e}xMP}&\Red \Llet exM{\Lprojd i{d-e}P}
                               \text{\quad if }e\leq d\\
    \Lprojd 0d{\Lsumd dM}
    &\Red \Lprojd0d{\Lprojd0d M}
    &%
      \Lprojd 1d{\Lsumd dM}
    &\Red \Lprojd 1d{\Lprojd 0dM}+\Lprojd 0d{\Lprojd 1dM}\\
    \Lprojd id{\Lsumd eM}
    &\Red \Lsumd{e-1}{\Lprojd idM}
      \text{\quad if }d<e
    &%
      \Lprojd id{\Lsumd eM}
    &\Red \Lsumd{e}{\Lprojd i{d+1}M}
      \text{\quad if }e<d\\
    &&\hspace{-18em}\Lprojd{i_{l+1}}d{\cdots\Lprojd{i_0}d{\Lflipdl dlM}}
       \Red\Lprojd{i_0}d{\Lprojd{i_{l+1}}d{\cdots\Lprojd{i_1}d{M}}}
    \\
    \Lprojd id{\Lflipdl elM}&\Red\Lflipdl{e-1}l{\Lprojd idM}
                              \text{\quad if }d<e
    &%
      \Lprojd id{\Lflipdl elM}&\Red\Lflipdl{e}l{\Lprojd idM}
                                \text{\quad if }e+l+2\leq d
    \\
    \Lprojd id{\Linjd jdM}
    &\Red M\text{\quad if }i=j
    &%
      \Lprojd id{\Linjd jdM}
    &\Red 0\text{\quad if }i\not=j\\
    \Lprojd id{\Linjd jeM}
    &\Red \Linjd j{e-1}{\Lprojd idM}\text{\quad if }d<e
    &%
      \Lprojd id{\Linjd jeM}
    &\Red \Linjd j{e}{\Lprojd i{d-1}M}\text{\quad if }e<d\\
    \Lprojd id{\Lprojd jeM}&\Red\Lprojd j{e-1}{\Lprojd idM}
                             \text{\quad if }d<e
    &\Lprojd id{\Lprojd jeM}&\Red\Lprojd j{e}{\Lprojd i{d+1}M}
                              \text{\quad if }e\leq d\\
    \Lprojd i{d+1}{\Ldiff M}&\Red\Ldiff{\Lprojd idM}
    &\Ldiff{\Lprojd idM}&\Red\Lprojd i{d+1}{\Ldiff M}
  \end{align*}}
  \caption{Projection reduction rules}
  \label{fig:proj-red-rules}    
\end{figure}

\begin{example}
  Let \(T=\Abst{f}{\Timpl AA}{\Abst xA{\App f{\App fx}}}\) so that
  \begin{align*}
    \Tseq{}{\Ldiff T}{\Timpl{(\Timpl A{\Tdiff A})}{\Timpl A{\Tdiff A}}}\,,
  \end{align*}
  we have
  \begin{align*}
    \Ldiff T
    &\Rel\Red
      \Abst{f}{\Timpl{A}{\Tdiff A}}{\Ldletv f{\Abst xA{\App f{\App fx}}}}\\
    &=\Abst{f}{\Timpl{A}{\Tdiff A}}{\Abst xA{\Ldletv f{\App f{\App fx}}}}\\
    &=\Abst{f}{\Timpl{A}{\Tdiff A}}{\Abst xA{
      \App{\Lsum{\Ldiff{\Ldletv ff}}}{\Ldletv{f}{\App fx}}}}\\
    &=\Abst{f}{\Timpl{A}{\Tdiff A}}{\Abst xA{
      \App{\Lsum{\Ldiff f}}{\App{\Lsum{\Ldiff f}}{\Ldletv fx}}}}\\
    &=\Abst{f}{\Timpl{A}{\Tdiff A}}{\Abst xA{
      \App{\Lsum{\Ldiff f}}{\App{\Lsum{\Ldiff f}}{\Linj0 x}}}}
  \end{align*}
  which shows that, contrarily to what happens in the differential
  \(\lambda\)-calculus, even if the variable \(f\) occurs twice, no
  actual sum is created during this computation of the differential of
  \(T\).
\end{example}

\subsubsection{Reducing sums, and the evaluation contexts} %
\label{sec:reducing-sums} %
These reduction rules can be applied almost anywhere in a term (taking
care as usual of not binding free variables of $N$ in the ordinary
substitution $\Subst MNx$ and in the differential substitution
$\Ldletv xM$).

However, in order to make the proof of subject reduction possible, we
forbid reductions within subterms of the shape $M_0+M_1$.
Indeed by the very nature of the coherence we want to implement in
this programming language, we have provided very restricted ways to
type sums.
For that reason allowing one for instance to reduce $M_0$ to some
$M'_0$ by performing, say, a $\beta$-reduction would lead to a term
$M'_0+M_1$ whose typeability is not at all obvious (imagine for
instance that $M_i=\Lproj iM$ for some $M$ such that
$\Tseq\Gamma M{\Tdiff A}$).
One option would be to develop a theory of ``parallel'' reductions
generalizing the observation that in the example at hand the
$\beta$-reduction performed in $M_0$ is also available in $M_1$
because both come from the same term $M$.
This kind of approach
will certainly be developed in further work.
For the time being we adopt a much simpler and conservative approach.
So here is the syntax of our evaluation contexts:
\begin{align*}
  E
  \Bnfeq \Echole
  & \Bnfor \Abst xAE
    \Bnfor \App EN
    \Bnfor \App ME
    \Bnfor \Lfix E
    \Bnfor \Lsucc dE
    \Bnfor \Lpred dE
  \\
  &  \Bnfor \Lif dEPQ
    \Bnfor \Lif dMEQ
    \Bnfor \Lif dMPE
    \Bnfor \Llet dxEP
    \Bnfor \Llet dxME
  \\
  & \Bnfor \Ldiff E
    \Bnfor \Lprojd idE
    \Bnfor \Linjd idE
    \Bnfor \Lsumd dE
    \Bnfor \Lflipdl dlE
\end{align*}
and the associated inference rule is as usual
\begin{center}
  \begin{prooftree}
    \hypo{M\Red M'}
    \infer1{\Ecfilled EM\Red\Ecfilled E{M'}}
  \end{prooftree}
\end{center}

We will need however to perform reduction within sums at some point
otherwise our computations will remain stuck for artificial
reasons. So we do allow such reductions but only at ``toplevel'': this
is precisely the purpose of the associated rewriting system %
\(\Msrs\Lang\) defined in Section~\ref{sec:ms-rewriting}.

\subsubsection{Term multiset typing}
Let \(S=\Mset{\List M1k}\in\Rsca{\Msrs\Lang}\), \(\Gamma\) be a
context and \(A\) be a type. We write \(\Tseq\Gamma SA\) if
\(\Tseq\Gamma{M_i}A\) for \(i=1,\dots,k\). This notion of typing for
multisets (which represent sums of terms) is quite weak: %
\(\Tseq\Gamma{\Mset{M_0,M_1}}A\) does not imply
\(\Tseq\Gamma{M_0+M_1}A\).
\renewcommand\Topred{\Rel{\Rsred{\Msrs\Lang}}} %
It is only for that reason that we will be able to prove subject
reduction for \(\Topred\). %
This is not really an issue because we will prove that the semantics
is invariant by reduction (including the $\Topred$ reduction) ---~a
property which is called as usual \emph{soundness}--- so we know that
actually the terms that we obtain by performing the $\Topred$
reduction belong to the expected type even if we are not necessarily
able to prove it syntactically.

\subsubsection{Subject reduction}

\begin{lemma}\label{lemma:typing-subst-lemma}
  If $\Tseq{\Gamma,x:A}MB$ and $\Tseq\Gamma NA$ then
  $\Tseq\Gamma{\Subst MNx}B$.
\end{lemma}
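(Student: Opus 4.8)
The plan is to argue by induction on the derivation of $\Tseq{\Gamma,x:A}MB$, showing in each case that $\Subst MNx$ inherits the type $B$ over $\Gamma$. The fact I would lean on throughout is that ordinary substitution commutes with every term constructor of $\Lang$ (performing the usual $\alpha$-conversions so that the bound variables of $M$ avoid the free variables of $N$) and in particular distributes over $+$, that is $\Subst{(\Lplus PQ)}Nx=\Lplus{\Subst PNx}{\Subst QNx}$. Consequently, for each typing rule whose conclusion is assembled by a constructor, I apply the induction hypothesis to the premises and reapply the same rule to the substituted premises.

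The base cases are immediate. For \Trvar{} with $M=x_i$: if $x_i=x$ then $B=A$ and $\Subst xNx=N$, whose type over $\Gamma$ is $A=B$ by the second hypothesis; if $x_i\neq x$ then $\Subst{x_i}Nx=x_i$, which still carries its type in $\Gamma$. For \Trnum{} and \Trzero{} the subject is untouched by the substitution. The structural rules \Trapp, \Trfix, \Trsuc, \Trpred, \Trif, \Trlet, \Trdiff, \Trinj, \Trsum, \Trcirc{} and \Trprojo{} all follow the same pattern, using the commutation of substitution with the relevant constructor; \Trabs{} ($M=\Abst yCP$) requires only first $\alpha$-renaming $y$ away from the free variables of $N$ before invoking the induction hypothesis on $\Tseq{\Gamma,y:C,x:A}PD$. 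The rules \Trprojt{} and \Trprojd, whose conclusions are sums, are handled likewise since substitution distributes over $+$: for instance under \Trprojd{} the premise $\Tseq{\Gamma,x:A}{M_0+M_1}{\Tdiffm{d+1}C}$ gives by induction $\Tseq\Gamma{\Subst{M_0}Nx+\Subst{M_1}Nx}{\Tdiffm{d+1}C}$, and reapplying \Trprojd{} yields the desired typing of $\Subst MNx$.

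The one case that is not purely structural is \Trlin, where the derivation ends with a premise $\Tseq{\Gamma,x:A}PB$ together with $P\Linred M$. Here the induction hypothesis gives $\Tseq\Gamma{\Subst PNx}B$, and to reclose with \Trlin{} I need $\Subst PNx\Linred\Subst MNx$. This is where the real content lies, and I expect it to be the main obstacle. I would isolate it as a preliminary lemma: \emph{if $R\Linred R'$ then $\Subst RNx\Linred\Subst{R'}Nx$}. Its proof follows the three clauses of Figure~\ref{fig:lin-eq} and rests on the observation that applying $\Subst{\cdot}Nx$ to a height-$1$ linear context $L$, away from its hole, yields again a height-$1$ linear context $L'=\Subst LNx$, since the grammar~\Eqref{eq:linear-context} is closed under exactly the constructors that substitution traverses. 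Then $\Subst{(\Ecfilled L0)}Nx=\Ecfilled{L'}0\Linred0=\Subst0Nx$ and, because substitution distributes over $+$, $\Subst{(\Ecfilled L{\Lplus{R_0}{R_1}})}Nx=\Ecfilled{L'}{\Lplus{\Subst{R_0}Nx}{\Subst{R_1}Nx}}\Linred\Ecfilled{L'}{\Subst{R_0}Nx}+\Ecfilled{L'}{\Subst{R_1}Nx}=\Subst{(\Ecfilled L{R_0}+\Ecfilled L{R_1})}Nx$, matching $\Subst{R'}Nx$ in each case, while the contextual clause is handled by an immediate induction on $\Lcht L$.

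With this auxiliary lemma in hand the \Trlin{} case closes at once, which completes the induction and hence the proof. The potential technical subtleties — variable capture in \Trabs{} and the correct bookkeeping of the $+$-forming rules — are routine given the commutation properties above; the genuine work is confined to verifying that substitution is compatible with $\Linred$.
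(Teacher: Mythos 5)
Your proof is correct and follows the same route as the paper, which simply states that the lemma holds ``by straightforward induction on the typing derivation of $M$.'' Your elaboration — in particular isolating the compatibility of substitution with $\Linred$ (via the observation that $\Subst LNx$ is again a height-$1$ linear context) as the only non-structural ingredient, needed for the \Trlin{} case — is exactly the content the paper leaves implicit.
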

\begin{proof}
  Straightforward induction on the typing derivation of $M$.
\end{proof}

\begin{theorem}[Subject reduction]\label{th:subj-reduction}
  If $\Tseq\Gamma MA$ and $M\Red M'$ then $\Tseq\Gamma{M'}A$.
\end{theorem}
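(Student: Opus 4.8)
The plan is to argue by induction on the derivation of $M\Red M'$, taking the contraction of each redex of Figures~\ref{fig:main-red-rules} and~\ref{fig:proj-red-rules} as base cases, the rule $M\Linred M'\Rightarrow M\Red M'$ as a further base case, and the evaluation-context rule $M\Red M'\Rightarrow E[M]\Red E[M']$ as the single inductive step. The essential tools are already available: Lemma~\ref{lemma:typing-subst-lemma} for ordinary substitution, Lemma~\ref{lemma:ldlet-typing} for the differential substitution $\Ldletv xM$, and Lemmas~\ref{lemma:ldlet-lin-context}--\ref{lemma:ldlet-linred} together with~\ref{lemma:linred-in-context} for the interaction of $\Linred$ with contexts.

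The base cases of Figure~\ref{fig:main-red-rules} are quick. The $\beta$-redex $\App{\Abst xAM}N\Red\Subst MNx$ is exactly Lemma~\ref{lemma:typing-subst-lemma}. The differential redex $\Ldiffp{\Abst xAM}\Red\Abst x{\Tdiff A}{\Ldletv xM}$ is handled by inverting \Trdiff{} and \Trabs{} to get $\Tseq{\Gamma,x:A}MB$, applying Lemma~\ref{lemma:ldlet-typing} to obtain $\Tseq{\Gamma,x:\Tdiff A}{\Ldletv xM}{\Tdiff B}$, and re-applying \Trabs{}, using the definitional identity $\Tdiff{(\Timpl AB)}=\Timpl A{\Tdiff B}$ to see that the resulting type is the expected one. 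The arithmetic rules and the fixpoint unfolding $\Lfix P\Red\App P{\Lfix P}$ follow by inverting the relevant rule and re-applying \Trnum{}, \Trsuc{}, \Trpred{}, \Trif{}, \Trlet{}, \Trapp{} and \Trfix{}. Finally the linear base case is immediate: if $\Tseq\Gamma MA$ and $M\Linred M'$ then $\Tseq\Gamma{M'}A$ by one use of \Trlin{}.

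The projection rules of Figure~\ref{fig:proj-red-rules} are verified by routine but careful bookkeeping on the depth superscripts: the commutation rules (projection through $\Abst{}{}{}$, $\App{}{}$, $\Lsucc{}{}$, $\Linjd{}{}{}$, $\Lflipdl{}{}{}$, $\Ldiff{}$, etc.) only shift a superscript up or down while preserving the number of $\Tdiff$'s, so after inverting \Trprojo{} and the rule for the inner construct one re-applies the two rules in the opposite order. The delicate cases are the two sum-extraction rules: $\Lprojd 0d{\Lsumd dM}\Red\Lprojd 0d{\Lprojd0dM}$ is direct, whereas $\Lprojd 1d{\Lsumd dM}\Red\Lprojd 1d{\Lprojd 0dM}+\Lprojd 0d{\Lprojd 1dM}$ produces a sum, which I would type by first deriving $\Tseq\Gamma M{\Tdiffm{d+2}A}$ (invert \Trprojo{} then \Trsum{}), then using \Trprojt{} to type $\Lprojd 0dM+\Lprojd 1dM$ at $\Tdiffm{d+1}A$, and finally \Trprojd{} to reach $\Tdiffm dA$; the output type then matches the left-hand side exactly.

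The main obstacle is the evaluation-context case: from $\Tseq\Gamma{E[R]}A$ and $R\Red R'$ I must obtain $\Tseq\Gamma{E[R']}A$. I would reduce this to a replacement statement proved by induction on $E$, asserting that a typing of $E[R]$ determines a context $\Gamma'$ (extending $\Gamma$ by the variables bound above the hole) and a type $B$ with $\Tseq{\Gamma'}RB$, the typing at the hole being preserved when any $R'$ with $\Tseq{\Gamma'}{R'}B$ is plugged in; combined with the induction hypothesis this closes the case. The difficulty is that typing is \emph{not} syntax-directed, since a derivation may end with \Trlin{}, \Trprojt{} or \Trprojd{}. Here the crucial structural fact is that evaluation contexts never descend into a subterm of the form $M_0+M_1$, so for $E\neq\Echole$ the head constructor of $E[R]$ is never $\Lplus{}{}$ nor $\Lzero$: this immediately rules out \Trprojt{} and \Trprojd{} as the last rule (their conclusions have $+$ at the root), and it controls \Trlin{} as well, because a single $\Linred$ step whose contractum does not have $+$ or $0$ at its root (Figure~\ref{fig:lin-eq} only rewrites strictly inside a height-$1$ linear context) preserves the head frame. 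Thus any trailing \Trlin{} can be pushed through the outermost frame of $E$ using Lemma~\ref{lemma:linred-in-context}, reducing to the syntax-directed rule for that frame and enabling the inductive extraction and re-plugging. I expect this commutation of \Trlin{} with the evaluation-context frames to be the technically heaviest part of the argument.
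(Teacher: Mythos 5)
Your proposal is correct and follows essentially the same route as the paper's proof: induction on the derivation of $M\Red M'$, Lemma~\ref{lemma:typing-subst-lemma} for the $\beta$-redex, Lemma~\ref{lemma:ldlet-typing} for the differential redex, a direct use of \Trlin{} for linear steps, and the same inversion-plus-reapplication bookkeeping for the rules of Figure~\ref{fig:proj-red-rules}, including the identical \Trprojt{}-then-\Trprojd{} derivation for $\Lprojd 1d{\Lsumd dM}$. The only genuine divergence is the treatment of the non-syntax-directed typing rules. The paper dismisses all three of \Trprojt{}, \Trprojd{} and \Trlin{} as possible last rules in one sentence (``the reduction $\Red$ does not apply to sums and to $\Lzero$'') and thereby treats typing as syntax-directed, leaving the evaluation-context case implicit; you accept this reasoning only for \Trprojt{} and \Trprojd{} (whose subjects literally are sums) and flag \Trlin{} as requiring a separate commutation/inversion argument. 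Your caution is justified: the subject of a \Trlin{} conclusion need not be a sum or $\Lzero$, because the congruence rule of Figure~\ref{fig:lin-eq} preserves the head constructor, so a typing of, say, a $\beta$-redex can end with \Trlin{} --- from $\Tseq\Gamma{\App{\Abst xAP}N}B$ and $P\Linred P'$ one derives $\Tseq\Gamma{\App{\Abst xA{P'}}N}B$ by \Trlin{}, and the resulting subject is a $\beta$-redex. Hence the inversion property the paper takes for granted does need the lemma you sketch (push trailing \Trlin{}'s below the head frame, re-typing the modified subterm by \Trlin{} and re-applying the syntax-directed rule for that frame). In short: same proof strategy, but your version makes explicit, and fills, a step that the paper passes over too quickly.
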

\begin{proof}
  The last possible typing rules for the derivation of %
  $\Tseq\Gamma MA$ cannot be any of the %
  rules~\ref{rl:trprojt}, 
  \ref{rl:trprojd} 
  or~\ref{rl:trlin} 
  since the reduction $\Red$ does not apply to sums and to
  $\Lzero$. For the remaining typing rules, observe that the typing
  system is syntax directed, we consider a few reductions. The proof
  is by induction on the derivation of $M\Red M'$.

  \Proofcase %
  Assume that $M=\Ldiff{(\Abst xBN)}$ and
  $M'=\Abst x{\Tdiff B}{\Ldletv xN}$ so that the last typing rule %
  is~\ref{rl:trdiff}, 
  with $\Tseq{\Gamma,x:B}{N}{C}$ and hence
  $\Tseq\Gamma{\Abst xBN}{\Timpl BC}$ and
  $A=(\Timpl{\Tdiff B}{\Tdiff C})$. Then we have
  $\Tseq{\Gamma,x:\Tdiff B}{\Ldletv xN}{\Tdiff C}$ by
  Lemma~\ref{lemma:ldlet-typing}, and therefore
  $\Tseq\Gamma{\Abst x{\Tdiff B}{\Ldletv xN}}{\Timpl{\Tdiff B}{\Tdiff
      C}}$ by~\ref{rl:trabs}. 
  All the other reduction rules of Figure~\ref{fig:main-red-rules} are
  dealt with as usual in the typed $\lambda$-calculus, using
  Lemma~\ref{lemma:typing-subst-lemma}.

  The fact that if $\Tseq\Gamma MA$ and $M\Linred M'$ then
  $\Tseq\Gamma{M'}A$ is by a straightforward application of %
  rule~\ref{rl:trlin}. 

  So we consider now some of the rules of Figure~\ref{fig:proj-red-rules}.

  \Proofcase %
  Assume that $M=\Lprojd id{\Abst xBN}$ with
  $\Tseq{\Gamma,x:B}N{\Tdiffm{d+1}C}$ so that
  $A=(\Timpl B{\Tdiffm{d}C})=\Tdiffm{d}{(\Timpl BC)}$. Then we
  have $\Tseq{\Gamma,x:B}{\Lprojd idN}{\Tdiffm dC}$ and hence
  $\Tseq\Gamma{\Abst xB{\Lprojd idN}}{A}$. And on the other
  hand $\Tseq\Gamma{\Lprojd id{\Abst xBN}}{A}$.

  \Proofcase %
  Assume that $M=\Lprojd id{\App NP}$ with %
  $\Tseq\Gamma N{\Timpl B{\Tdiffm{d+1}C}=\Tdiff{(\Timpl B{\Tdiffm
        dC})}}$ %
  and $\Tseq\Gamma PB$ %
  so that $\Tseq\Gamma{\App NP}{\Tdiffm{d+1}C}$ and hence
  $\Tseq\Gamma{M}{A}$ where $A=\Tdiffm dC$, and on the other hand
  $\Tseq\Gamma{\Lprojd idN}{\Timpl B{\Tdiffm dC}}$ and hence
  $\Tseq\Gamma{\App{\Lprojd idN}{P}}{A}$.

  \Proofcase %
  Assume that $M=\Lprojd id{\Lsucc eN}$ with %
  $\Tseq\Gamma N{\Tdiffm e\Tnat}$. For $M$ to be typeable we need to have %
  $d<e$ and then $\Tseq\Gamma M{\Tdiffm{e-1}\Tnat=A}$. Then we have %
  $\Tseq\Gamma{\Lprojd idN}{\Tdiffm{e-1}\Tnat}$ and hence %
  $\Tseq\Gamma{\Lsucc{e-1}{\Lprojd idN}}A$.

  \Proofcase %
  Assume that $M=\Lprojd id{\Lif eNPQ}$ and $d<e$, with
  $\Tseq\Gamma N{\Tdiffm e\Tnat}$ and $\Tseq\Gamma PB$ and
  $\Tseq\Gamma QB$ so that %
  $\Tseq\Gamma{\Lif eNPQ}{\Tdiffm eB}$ and hence %
  $\Tseq{\Gamma}{M}{\Tdiffm{e-1}B}$, which means that %
  $A=\Tdiffm{e-1}B$. On the other hand we have %
  $\Tseq{\Gamma}{\Lprojd idN}{\Tdiffm{e-1}\Tnat}$ and hence %
  $\Tseq{\Gamma}{\Lif{e-1}{\Lprojd idN}{P}{Q}}{\Tdiffm{e-1}B}$ as
  expected.

  \Proofcase %
  Assume that $M=\Lprojd id{\Lif eNPQ}$ and $e\leq d$, with %
  $\Tseq\Gamma N{\Tdiffm e\Tnat}$ and assume that $\Tseq\Gamma PB$ and
  $\Tseq\Gamma QB$ so that %
  $\Tseq{\Gamma}{\Lif eNPQ}{\Tdiffm eB=A}$. %
  For the term $M$ to be typeable, we need $B$ to be of
  shape $\Tdiffm{d-e+1}C$ for some (uniquely defined) type $C$ so that
  $A=\Tdiffm{d+1}C$ and hence $\Tseq\Gamma{\Lprojd idM}{\Tdiffm
    dC}$. On the other hand we have
  $\Tseq{\Gamma}{\Lprojd i{d-e}P}{\Tdiffm{d-e}C}$ and hence
  $\Tseq{\Gamma}{\Lif e{N}{\Lprojd i{d-e}P}{\Lprojd i{d-e}Q}}{\Tdiffm
    dC}$ as expected.

  \Proofcase %
  Assume that $M=\Lprojd 0d{\Lsumd dN}$ with %
  $\Tseq\Gamma{N}{\Tdiffm{d+2}C}$, and hence %
  $\Tseq\Gamma{\Lsumd dN}{\Tdiffm{d+1}C}$ and therefore %
  $\Tseq\Gamma{\Lprojd 0d{\Lsumd dN}}{C}$, so that $A=\Tdiffm dC$. %
  Then we have %
  $\Tseq\Gamma{\Lprojd 0dN}{\Tdiffm{d+1}C}$ by~\ref{rl:trprojo} 
  and hence
  $\Tseq\Gamma{\Lprojd0d{\Lprojd0d N}}A$ by~\ref{rl:trprojo} 
  again.
  
  \Proofcase %
  Assume that $M=\Lprojd 1d{\Lsumd dN}$ with
  $\Tseq\Gamma{N}{\Tdiffm{d+2}C}$, and hence %
  $\Tseq\Gamma{\Lsumd dN}{\Tdiffm{d+1}C}$ and therefore %
  $\Tseq\Gamma{\Lprojd 1d{\Lsumd dN}}{C}$, so that $A=\Tdiffm dC$. %
  Then we have %
  $\Tseq\Gamma{\Lprojd 0dN+\Lprojd 1dN}{\Tdiffm{d+1}C}$ %
  by~\ref{rl:trprojt} 
  and hence
  $\Tseq\Gamma{\Lprojd1d{\Lprojd0d N}+\Lprojd0d{\Lprojd1d N}}A$ %
  by~\ref{rl:trprojd}. 

  \Proofcase %
  Assume that $M=\Lprojd id{\Lsumd eN}$ with $d<e$. So we must have
  $\Tseq\Gamma N{\Tdiffm{e+2}A}$ so that
  $\Tseq\Gamma{\Lsumd
    eN}{\Tdiffm{e+1}A=\Tdiffm{d+1}{\Tdiffm{e-d}A}}$ and hence
  $\Tseq\Gamma{\Lprojd id{\Lsumd eN}}{\Tdiffm eA}$. We have
  $\Tseq\Gamma{\Lprojd idN}{\Tdiffm{e+1}A}$ and hence
  $\Tseq\Gamma{\Lsumd{e-1}{\Lprojd idN}}{\Tdiffm eA}$ as required.

  \Proofcase %
  Assume that $M=\Lprojd id{\Lsumd eN}$ with $e<d$. We must have
  $\Tseq\Gamma N{\Tdiffm{e+2}A}$ so that
  $\Tseq\Gamma{\Lsumd eN}{\Tdiffm{e+1}A}$ and for
  $\Lprojd id{\Lsumd eN}$ to be typeable we need $A$ to be of shape
  $\Tdiffm{d-e}B$ (for a uniquely defined type $B$) so that
  $\Tseq\Gamma{\Lsumd eN}{\Tdiffm{d+1}B}$ and hence
  $\Tseq\Gamma{M=\Lprojd id{\Lsumd eN}}{\Tdiffm dB=\Tdiffm eA}$. We
  have $\Tseq\Gamma N{\Tdiffm {d+2}B}$ and hence
  $\Tseq\Gamma{\Lprojd i{d+1}N}{\Tdiffm{d+1}B}$ and therefore (using
  the fact that $d>0$)
  $\Tseq\Gamma{\Lsumd e{\Lprojd i{d+1}N}}{\Tdiffm dB=\Tdiffm eA}$.

  \Proofcase %
  Assume that %
  $M=\Lprojd{i_{l+1}}d{\cdots\Lprojd{i_0}d{\Lflipdl dlN}}$ with %
  $\Tseq\Gamma N{\Tdiffm{l+d+2}C}$ and hence %
  $\Tseq\Gamma M{\Tdiffm{d}C=A}$ by~\ref{rl:trcirc} 
  and $l+2$ applications of~\ref{rl:trprojo}. 
  Then we have %
  $\Tseq\Gamma{\Lprojd{i_0}d{\Lprojd{i_{l+1}}d{\cdots\Lprojd{i_1}d{N}}}}A$
  by $l+2$ applications of~\ref{rl:trprojo}. 

  \Proofcase %
  Assume that %
  $M=\Lprojd id{\Lflipdl elN}$ with $d<e$ so that we have %
  $\Tseq\Gamma N{\Tdiffm{e+l+2}C}$ for a type $C$ such that %
  $A=\Tdiffm{e+l+1}C$. Then we have %
  $\Tseq\Gamma{\Lprojd idN}{\Tdiffm{e+l+1}C}$ (since $d<e+l+1$) and hence %
  $\Tseq\Gamma{\Lflipdl {e-1}l{\Lprojd idN}}A$ since %
  $e>0$ and hence $e+l+1=(e-1)+l+2$.

  \Proofcase %
  Assume that %
  $M=\Lprojd id{\Lflipdl elN}$ with $e+l+2\leq d$ so that we have %
  $\Tseq\Gamma N{\Tdiffm{e+l+2}C}$ for a type $C$ such that %
  $A=\Tdiffm{e+l+1}C$, and moreover %
  $A=\Tdiffm dD$ for some type $D$, meaning that %
  $C=\Tdiffm{d-e-l-1}D$ (remember that $d-e-l-1>0$). Then we have %
  $\Tseq\Gamma{\Lprojd idN}{\Tdiffm{e+l+1}C=A}$
  by~\ref{rl:trprojo} 
  and hence %
  $\Tseq\Gamma{\Lflipdl el{\Lprojd idN}}A$ %
  by~\ref{rl:trcirc} 
  which can be applied since %
  $A=\Tdiffm{e+l+2}{\Tdiffm{d-e-l-2}{D}}$.

  The remaining cases are similar.
\end{proof}

Given a derivation $\delta$ in the typing system we use
$\Tdersize\delta$ for the number of inference rule occurrences $\delta$
contains.
\begin{lemma}\label{lemma:ty-der-sum}
  Let $\delta$ be a typing derivation of %
  $\Tseq\Gamma{\Lplus{M_0}{M_1}}A$. %
  For $j=0,1$, there is a derivation $\delta_j$ of %
  $\Tseq\Gamma{M_j}A$ such that %
  $\Tdersize{\delta_j}\leq\Tdersize{\delta}$.
\end{lemma}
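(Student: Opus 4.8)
The plan is to strengthen the statement so that the linear reduction hidden in rule \Trlin{} can be handled, and to prove by strong induction on $\Tdersize\delta$ the following claim: for every linear context $L$ and every derivation $\delta$ of $\Tseq\Gamma{L[\Lplus{M_0}{M_1}]}A$, there are derivations $\delta_j$ of $\Tseq\Gamma{L[M_j]}A$ with $\Tdersize{\delta_j}\leq\Tdersize\delta$ for $j=0,1$. The lemma is the instance $L=\Echole$. The starting observation is that, by inspection of Figure~\ref{fig:typing-rules}, a term whose head symbol is $+$ can be the conclusion of only \Trprojt{}, \Trprojd{} or \Trlin{}, every other rule producing a different head constructor.

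I would first dispose of the cases where $\Lcht L\geq 1$: the head symbol of $L[\Lplus{M_0}{M_1}]$ then comes from $L$ and is one of the non-sum constructors, so the last rule of $\delta$ is the matching structural rule (\Trabs{}, \Trapp{}, \Trsuc{}, \Trpred{}, \Trif{}, \Trlet{}, \Trdiff{}, \Trprojo{}, \Trinj{}, \Trsum{} or \Trcirc{}), and the hole sits inside exactly one premise $\Tseq{\Gamma'}{L_1[\Lplus{M_0}{M_1}]}B$, where $L$ is that head constructor $K$ applied to the linear context $L_1$ with $\Lcht{L_1}=\Lcht L-1$. Applying the induction hypothesis to this premise and rebuilding the rule with the untouched sibling premises keeps the size $\leq\Tdersize\delta$. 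For $L=\Echole$ and last rule \Trprojt{}, the components $M_0=\Lprojd0dN$ and $M_1=\Lprojd1dN$ come from the single premise $\Tseq\Gamma N{\Tdiffm{d+1}{A'}}$, so one \Trprojo{} each yields $\delta_j$ of size exactly $\Tdersize\delta$; for $L=\Echole$ and \Trprojd{}, the premise $\Tseq\Gamma{\Lplus{N_0}{N_1}}{\Tdiffm{d+1}{A'}}$ is itself a head sum, so the induction hypothesis splits it and a final \Trprojo{} rebuilds each $M_j$ within budget.

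The genuinely delicate case is when the last rule of $\delta$ is \Trlin{}, so that $\delta$ derives $\Tseq\Gamma{L[\Lplus{M_0}{M_1}]}A$ from $\Tseq\Gamma{M''}A$ (via $\delta'$ of size $\Tdersize\delta-1$) together with a step $M''\Linred L[\Lplus{M_0}{M_1}]$. I would inspect which of the three schemes of Figure~\ref{fig:lin-eq} yields the target. The scheme $K[\Lplus{P_0}{P_1}]\Linred\Lplus{K[P_0]}{K[P_1]}$ (with $\Lcht K=1$) produces a head $+$, hence forces $L=\Echole$, $M''=K[\Lplus{P_0}{P_1}]$ and $M_j=K[P_j]$, and the induction hypothesis applied to $\delta'$ with the context $K$ settles this case at size $\leq\Tdersize\delta-1$; the scheme $K[\Lzero]\Linred\Lzero$ is impossible since the target is not $\Lzero$. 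The remaining scheme, reduction under a height-one context, $K[N]\Linred K[N']$ with $N\Linred N'$, forces $\Lcht L\geq 1$, $L=K[L_1]$ and $N'=L_1[\Lplus{M_0}{M_1}]$: the sum is exposed only strictly below the head, and this is the crux.

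To resolve it I would note that, since the term $K[N]$ has head $K$ (neither $+$ nor $\Lzero$), every \Trlin{} step occurring at the top of $\delta'$ must again be an instance of the ``reduction under $K$'' scheme, so peeling these $p$ outer \Trlin{} nodes off $\delta'$ preserves the head $K$ and reaches the structural rule for $K$, isolating a sub-derivation $\delta''$ of $\Tseq{\Gamma'}{N_{\mathrm{inner}}}B$ of some size $t$ (with $N_{\mathrm{inner}}\Trcl\Linred N\Linred N'=L_1[\Lplus{M_0}{M_1}]$) and sibling premises of total size $q$, so that $\Tdersize{\delta'}=p+1+t+q$. Applying \Trlin{} along these $p{+}1$ reduction steps to $\delta''$ gives a derivation of $\Tseq{\Gamma'}{L_1[\Lplus{M_0}{M_1}]}B$ of size $t+p+1<\Tdersize\delta$; this term has the form $L_1[\Lplus{\cdot}{\cdot}]$, so the induction hypothesis (context $L_1$) splits it into derivations of $\Tseq{\Gamma'}{L_1[M_j]}B$ of size $\leq t+p+1$, and rebuilding the head rule $K$ with its original siblings yields $\Tseq\Gamma{L[M_j]}A$ of size $\leq 1+(t+p+1)+q=p+t+q+2=\Tdersize\delta$. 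I expect this bookkeeping to be the main obstacle: the point is that the $p{+}1$ \Trlin{} nodes one re-inserts are exactly as many as those removed (the $p$ peeled nodes together with the outermost \Trlin{} of $\delta$ itself), so subject reduction for $\Linred$ never breaks the size bound; and it is precisely in order to feed the exposed sub-sum $L_1[\Lplus{M_0}{M_1}]$ back to the induction hypothesis that generalising over arbitrary linear contexts $L$ is indispensable.
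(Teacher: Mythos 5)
Your proof is correct, and it takes a genuinely different --- and in fact more careful --- route than the paper's. The paper argues by structural induction on $\delta$ with the same three head cases (\Trprojt{}, \Trprojd{}, \Trlin{}); in the \Trlin{} case it identifies $M_j=L[N_j]$ with $\Lcht L=1$ and then asserts, ``by a simple inspection of the various possibilities for $L$'', that the premise derivation $\delta'$ of $\Tseq\Gamma{L[\Lplus{N_0}{N_1}]}A$ contains a sub-derivation $\delta''$ typing $\Lplus{N_0}{N_1}$, to which the inductive hypothesis is applied before re-attaching the rule for $L$. That inversion step silently assumes that $\delta'$ ends with the structural rule matching the head of $L$, which is exactly what can fail: $\delta'$ may itself end with \Trlin{} (your ``reduction under a height-one context'' scheme), and then $\Lplus{N_0}{N_1}$ need not be the subject of any judgment inside $\delta'$. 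Concretely, from $\Tseq\Gamma R{\Tdiffm 2\Tnat}$ one can apply \Trprojt{}, then \Trprojo{}, then \Trsuc{}, then \Trlin{}, obtaining a derivation of $\Tseq\Gamma{\Lsucc 0{(\Lplus{\Lprojd 00{P_0}}{\Lprojd 00{P_1}})}}{\Tnat}$ (where $P_j=\Lprojd j0R$) in which the sum $\Lplus{\Lprojd 00{P_0}}{\Lprojd 00{P_1}}$ is never typed; taking this as $\delta'$ and adding one more \Trlin{} step produces an instance of the lemma on which the paper's case analysis, read literally, does not go through. Your two modifications --- generalising the statement over arbitrary linear contexts $L$, and replacing structural induction by strong induction on $\Tdersize\delta$ so that the inductive hypothesis may be applied to a derivation that is \emph{not} a sub-derivation of $\delta$ (the one you rebuild by peeling the chain of $p$ outer \Trlin{} nodes and re-applying $p+1$ \Trlin{} steps to the hole premise) --- are precisely what is needed to close this case, and your size accounting ($1+(t+p+1)+q=\Tdersize\delta$) is right. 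So your argument does not merely differ from the paper's proof: it repairs it.

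One presentational flaw you should fix: in your second paragraph you claim that when $\Lcht L\geq 1$ ``the last rule of $\delta$ is the matching structural rule''. That inference is not valid --- the last rule can also be \Trlin{}, as your own third paragraph shows --- so the case split should be stated as being on the last rule of $\delta$ rather than on $\Lcht L$: if the last rule is structural it must be the one matching the head of $L$ (your second paragraph), and if it is \Trlin{} the three reduction schemes are analysed (your third paragraph). With that rewording the case analysis is exhaustive and the proof stands.
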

\begin{proof}
  By induction on $\delta$. The following cases can arise.

  \Proofcase %
  The last rule of $\delta$ is~\ref{rl:trprojt} 
  so that %
  $M_j=\Lprojd jdM$, $A=\Tdiffm dB$ and %
  $\Tseq\Gamma M{\Tdiffm{d+1}B}$ by a derivation $\delta'$ such that %
  $\Tdersize{\delta'}=\Tdersize\delta-1$ that we can extend with a
  rule~\ref{rl:trprojo} 
  to get the required derivation $\delta_j$ of %
  $\Tseq\Gamma{M_j}A$ which satisfies %
  $\Tdersize{\delta_j}=\Tdersize{\delta'}+1=\Tdersize\delta$.

  \Proofcase %
  The last rule of $\delta$ is~\ref{rl:trprojd} 
  so that %
  $M_0=\Lprojd 1d{N_0}$, $M_1=\Lprojd 0d{N_1}$, %
  $A=\Tdiffm dB$ and %
  $\Tseq\Gamma{N_0+N_1}{\Tdiffm{d+1}{B}}$ by a derivation %
  $\delta'$ such that %
  $\Tdersize{\delta'}=\Tdersize{\delta}-1$ %
  and hence by inductive hypothesis, for $j=0,1$, we have a
  derivation %
  $\delta'_j$ of $\Tseq\Gamma{N_j}{\Tdiffm{d+1}B}$ such that %
  $\Tdersize{\delta'_j}\leq\Tdersize{\delta'}$ that we can extend with
  a~\ref{rl:trprojo} 
  rule to get a derivation %
  $\delta_j$ of %
  $\Tseq\Gamma{M_j}A$. We have
  $\Tdersize{\delta_j}=\Tdersize{\delta'_j}+1\leq\Tdersize\delta$.

  \Proofcase %
  The last rule of $\delta$ is~\ref{rl:trlin} 
  so that there is a linear
  context $L$ of height $1$ and terms $N_0$, $N_1$ such that %
  $M_j=L[N_j]$ and %
  $\Tseq\Gamma{L[N_0+N_1]}A$ by a derivation $\delta'$ such that %
  $\Tdersize{\delta'}=\Tdersize\delta-1$. %
  This implies (by a simple inspection of the various possibilities for
  $L$ which has height $1$) that for some context $\Delta$ and some
  type $B$ one has %
  $\Tseq\Delta{N_0+N_1}B$ by a derivation $\delta''$ such that %
  $\Tdersize{\delta''}=\Tdersize{\delta'}-k_L$ where %
  $k_L\in\Natnz$ \emph{depends only on $L$} (if for instance
  $L=\Lif d{\Echole}{P_0}{P_1}$ then $k_L=1+k_0+k_1$ where $k_i$ is
  the size of the typing derivation of $P_i$). %
  So that by inductive hypothesis we have derivations $\delta''_j$
  of %
  $\Tseq\Delta{N_j}B$ for $j=0,1$ such that %
  $\Tdersize{\delta''_j}\leq\Tdersize{\delta''}$. %
  We can extend $\delta''_j$ with exactly the typing rule associated
  with $L$ to get a derivation %
  $\delta_j$ of $\Tseq\Gamma{L[N_j]}A$ such that %
  $\Tdersize{\delta_j}=\Tdersize{\delta''_j}+k_L
  \leq\Tdersize{\delta''}+k_L
  =\Tdersize{\delta'}=\Tdersize{\delta}-1$ and hence %
  $\Tdersize{\delta_j}<\Tdersize\delta$.
\end{proof}

\begin{theorem}[Subject reduction for multisets] %
  \label{th:subj-reduction-ms}
  Assume that \(\Tseq\Gamma SA\) where %
  \(S\in\Rsca{\Msrs\Lang}\) and that %
  \(S\Topred S'\). Then %
  \(\Tseq\Gamma{S'}A\).
\end{theorem}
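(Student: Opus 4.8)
The plan is to reduce the statement to the single-term subject reduction theorem (Theorem~\ref{th:subj-reduction}) and to the sum-splitting lemma (Lemma~\ref{lemma:ty-der-sum}) by a direct case analysis on the three clauses that define the multiset reduction $\Topred$, as recalled in Section~\ref{sec:ms-rewriting}. Throughout I use the fact that $\Tseq\Gamma SA$ means nothing more than that every element of the multiset $S$ is typeable at $A$ in the context $\Gamma$; in particular, typing is preserved by the multiset union $+$, so it suffices to treat the ``active'' element that is rewritten and to carry the passive part $S_0$ through unchanged.

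First I would dispose of the clause $S = S_0 + \Mset 0$, $S' = S_0$. Here the rewritten element is the zero term $\Lzero$, which is simply discarded; since every element of $S_0$ is already typeable at $A$ (because $\Tseq\Gamma SA$ and $S_0$ is a sub-multiset of $S$), we immediately obtain $\Tseq\Gamma{S'}A$, with no typing lemma needed. Next, for the clause $S = S_0 + \Mset t$, $t \Red t'$, $S' = S_0 + \Mset{t'}$, the hypothesis $\Tseq\Gamma SA$ gives both $\Tseq\Gamma tA$ and $\Tseq\Gamma{S_0}A$. Applying Theorem~\ref{th:subj-reduction} to the single-term reduction $t \Red t'$ yields $\Tseq\Gamma{t'}A$, and recombining with the unchanged part $S_0$ gives $\Tseq\Gamma{S_0 + \Mset{t'}}A$, that is $\Tseq\Gamma{S'}A$.

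Finally, for the clause $S = S_0 + \Mset{t_0 + t_1}$, $S' = S_0 + \Mset{t_0,t_1}$, the hypothesis provides a derivation of $\Tseq\Gamma{\Lplus{t_0}{t_1}}A$. Lemma~\ref{lemma:ty-der-sum} then furnishes derivations of $\Tseq\Gamma{t_0}A$ and of $\Tseq\Gamma{t_1}A$, so both newly exposed elements are typeable at $A$; together with $\Tseq\Gamma{S_0}A$ this gives $\Tseq\Gamma{S'}A$.

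I expect no genuine obstacle in this argument: all the real work has already been carried out in Lemma~\ref{lemma:ty-der-sum}, whose whole point is precisely that the weak multiset typing can always be split across a syntactic sum, even though the converse fails. This is exactly why the multiset reduction $\Topred$ was designed as it is, and it is what makes subject reduction provable for $\Topred$ while it would not be available for the unrestricted in-context reduction on sums. The only care required is the routine bookkeeping ensuring that the three clauses are exhaustive and that the passive context $S_0$ carries its typing through the multiset union $+$ unchanged.
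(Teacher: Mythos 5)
Your proof is correct and follows essentially the same route as the paper's: a case analysis on the three clauses of the multiset reduction, handling the removal of $\Lzero$ by the sub-multiset observation, the in-place reduction by Theorem~\ref{th:subj-reduction}, and the splitting of a sum by Lemma~\ref{lemma:ty-der-sum}. The only (inessential) difference is that you follow the literal three clauses of the definition of $\Topred$, whereas the paper's write-up phrases the first and third cases as involving a term reduction step; the underlying arguments are identical.
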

\begin{proof}
  The following cases are possible.

  \Proofcase %
  \(S=S_0+\Mset M\), \(M\Red 0\) and %
  \(S'=S_0\). We have \(\Tseq\Gamma{S'}{A}\) since all elements of
  \(S'\) belong to \(S\).

  \Proofcase %
  \(S=S_0+\Mset M\), \(M\Red M'\) and %
  \(S'=S_0+\Mset{M'}\). Then we have \(\Tseq\Gamma{M'}A\) by
  Theorem~\ref{th:subj-reduction} and hence \(\Tseq\Gamma{S'}A\).

  \Proofcase %
  \(S=S_0+\Mset M\), 
  \(M\Red M_0+M_1\) and %
  \(S'=S_0+\Mset{M_0,M_1}\). Since \(\Tseq\Gamma MA\), we have %
  \(\Tseq\Gamma{M_i}A\) for \(i=0,1\) by Lemma~\ref{lemma:ty-der-sum}
  and hence \(\Tseq\Gamma{S'}A\).
\end{proof}



\subsection{Examples of terms}
\label{sec:term-examples}
Given \(M\) such that %
\(\Tseq\Gamma M{\Timpl AB}\) we set %
\begin{align*}
  \Ldiffd M=\Lproj1{\Ldiff M}
\end{align*}
so that \(\Tseq\Gamma{\Ldiffd M}{\Timpl{\Tdiff A}{B}}\) and
\(\Ldiffd M\) can be understood as the derivative of \(M\) taking in
\(\Tdiff A\) an argument which combines the point at which it is
computed and the ``direction'' to which it is (linearly) applied.
Given moreover \(N\) such that \(\Tseq\Gamma NA\) we set
\begin{align*}
  \Linapp MN=\App{\Ldiffd M}{\Linj 1N}
\end{align*}
so that \(\Tseq\Gamma{\Linapp MN}B\).
This application of \(M\) to \(N\) will be ``successful'' only if
\(M\) uses its argument linearly. Otherwise, it will produce the
result \(\Lzero\).

\begin{remark}
  We can understand the term \(\Lzero\) (of any type) as an
  uncatchable exception which is raised when some constraint of
  linearity expressed by the use of the \(\Ldiff{}\) operator is not
  respected.
  In the semantics, it does not seem possible to distinguish
  \(\Lzerot A\) from \(\Lfix{\Abst xAx}\) (an ever-looping term) so
  that this exception cannot be caught \emph{within} the language,
  but, in the implementation, nothing forces us to enter into an
  infinite loop when a term reduces to \(\Lzero\): we can simply abort
  the execution of the program and return the information that it
  failed (and possibly more information about this failure: this a
  mere matter of implementation).

  Syntactically, whether these two kinds of zero have necessarily to
  be identified remains an open question, but since our approach is
  strongly based on denotational semantics and since none of our
  models allows to distinguish them, we will stick to this
  identification for the time being.
  Nevertheless we could even imagine that the execution of our
  ``differential programs'' is controlled by an external
  ``meta-program'' which is allowed to handle such exceptions.
\end{remark}

To try to provide some intuitions about the operational behavior of
our example terms, we use the intersection typing system of
Section~\ref{sec:inter-types-terms} even if we are aware that this
system cannot be fully understood by the reader at this stage. We
believe nevertheless that most of its rules convey an intuitive
meaning which can be useful to understand these terms.

The only typing derivations for \(\Ldiffd f\) in the intersection
typing system of Section~\ref{sec:inter-types-terms} (see in
particular Figure~\ref{fig:int-typing-rules}) lead to judgments of
shape
\begin{align*}
  \Tseqi{\Contz\Gamma,f:\Mset{(\Mset{\List a1k,a},b)}:\Timpl AB}
  {\Ldiffd f}{(\Mset{\Tseqact 0{a_1},\dots,\Tseqact 0{a_1},\Tseqact 1a},b)}
  {\Timpl{\Tdiff A}B}
\end{align*}
where the notation \(\Tseqact ia\) is defined in
Section~\ref{sec:REL-deduction}.

The only typing derivations for \(\Linapp fx\) in the intersection typing
system of Figure~\ref{fig:int-typing-rules} lead to judgments of shape
\begin{align*}
  \Tseqi{\Contz\Gamma,f:\Mset{(\Mset a,b)}:\Timpl AB,x:\Mset a:A}
  {\Linapp fx}{b}{B}
\end{align*}
where \(a\in\Tsemrel A\) and \(b\in\Tsemrel B\) whereas the typings of
\(\App fx\) are
\begin{align*}
  \Tseqi{\Contz\Gamma,f:\Mset{(m,b)}:\Timpl AB,x:m:A}
  {\App fx}{b}{B}
\end{align*}
for all \(m\in\Mfin{\Tsemrel A}\) and \(b\in\Tsemrel B\).

To illustrate this construction we define a term \(M\) such that %
\(\Tseq{}{M}{\Timpl{(\Timpl\Tnat\Tnat)}{\Timpl\Tnat\Tnat}}\) by
\begin{align*}
  M=\Lfix{\Abst F{\Timpl{(\Timpl\Tnat\Tnat)}{\Timpl\Tnat\Tnat}}
  {\Abst f{\Timpl\Tnat\Tnat}{\Abst x\Tnat
  {\Llet0yx{\Lif0y{\Num 0}{\Succ{\App F{f\Argsep\Linapp fy}}}}}}
  }}\,.
\end{align*}
The intuition is that \(\App M{f\Argsep{\Num\nu}}\) iterates
\(f:\Timpl\Tnat\Tnat\) starting from \(\nu\) until the value \(0\)
is reached, and returns the number of steps. Moreover, the program is
written in such a way that the argument \(f\) is used as a linear
morphism.
In the PCS model of Section~\ref{sec:PCS-definition} we can see \(M\)
as implementing a random walk in a Markov chain (on the integers)
represented by its first argument (which is used as a
\(\Nat\times\Nat\) sub-stochastic matrix thanks to the linear
application \(\Linapp fy\)).

In the intersection typing system of
Figure~\ref{fig:int-typing-rules}, we can derive
\begin{align*}
  \Tseqi{}
  {M}
  {(\Mset{(\Mset{\nu_1},\nu_2),(\Mset{\nu_2},\nu_3),\dots,
  (\Mset{\nu_{k}},0)},(\Mset{\nu_1},k))}
  {\Timpl{(\Timpl\Tnat\Tnat)}{\Timpl\Tnat\Tnat}}
\end{align*}
if \(k\in\Nat\) and \(\nu_1,\dots,\nu_k\in\Natnz\).
For \(k=0\) this means in particular that
\( \Tseqi{}{M}{(\Msetempty,(\Mset 0,0))}
{\Timpl{(\Timpl\Tnat\Tnat)}{\Timpl\Tnat\Tnat}}
\).
It is important to notice that the linearity of \(M\) in its second
parameter (called \(x\) in the definition of \(M\)) comes from our use
of the \(\Llet 0yx\_\) construct.
Then we can take the derivative of \(M\), typed as follows:
\begin{align*}
  \Tseq{}{\Ldiffd M}{\Timpl{(\Timpl\Tnat{\Tdiff\Tnat})}{\Timpl\Tnat\Tnat}}\,.
\end{align*}

We describe now a way of building an argument for such a derivative.
Let \(\cT\subseteq\Nat\)
be recursive and implemented by some term of type
\(\Tseq{}T{\Tnat\to\Tnat}\), in the sense that
\begin{align*}
  \Psem T{}=\{(\Mset\nu,0)\St\nu\in\cT\}\cup\{(\Mset\nu,1)\St\nu\notin\cT\}\,.
\end{align*}
Given terms \(M_0,M_1\) such that
\((\Tseq\Gamma{M_i}{\Timpl\Tnat A})_{i=0,1}\) (for some type \(A\) and
typing context \(\Gamma\)) we define a term \(\Tuple{M_0,M_1}^T\) such
that %
\(\Tseq\Gamma{\Tuple{M_0,M_1}^T}{\Timpl\Tnat{\Tdiff A}}\) given by %
\begin{align*}
  \Tuple{M_0,M_1}^T
  =\Abst x{\Tnat}
  {\Llet0yx{\Lif0{\App Ty}{\Lin0{\App{M_0}y}}{\Lin1{\App{M_1}y}}}}
\end{align*}
so that
\begin{equation*}
  \begin{prooftree}
    \hypo{\nu\in \cT_i}
    \infer1{\Tseqi{f_i:\Mset{(k\Mset\nu,a)}:\Timpl\Tnat A,
  f_{1-i}:\Msetempty:\Timpl\Tnat A}
  {\Tuple{f_0,f_1}^T}{(\Mset\nu,\Tseqact ia)}{\Timpl\Tnat{\Tdiff A}}}
  \end{prooftree}
\end{equation*}
where \(\cT_0=\cT\) and \(\cT_1=\Nat\setminus \cT\).
Let us take \(A=\Tnat\) so that
\(
\Tseq{}{\App{\Ldiffd M}{\Tuple{M_0,M_1}^T}}{\Timpl\Tnat\Tnat}
\).
%
Then one has
\begin{multline*}
  f_0:\Mset{(\Mset{\nu_1},\nu_2),(\Mset{\nu_2},\nu_3),
    \dots,(\Mset{\nu_{i-1}},\nu_i),(\Mset{\nu_{i+1}},\nu_{i+2}),
  \dots,(\Mset{\nu_k},0)}:\Timpl\Tnat\Tnat,
  \\
  f_1:\Mset{(\Mset{\nu_{i}},\nu_{i+1})}:\Timpl\Tnat\Tnat
  \vdash
  {\App{\Ldiffd M}{\Tuple{f_0,f_1}^T}}:(\Mset{\nu_1},k):\Timpl\Tnat\Tnat
\end{multline*}
as soon as \(\nu_j\in\cT\setminus\Eset 0\) for
\(j\in\Eset{1,\dots,i_{i-1},i_{i+1},\dots,k}\), \(\nu_i\notin \cT\)
and \(\nu_i\not=0\).
In other words, in order to induce a successful computation, the
stochastic matrices \(M_0\) and \(M_1\) in %
\(\App{\Ldiffd M}{\Tuple{M_0,M_1}^T}\Appsep\Num\nu\) %
need to satisfy the property that the iteration of \(M_0\) from \(\nu\)
will leave \(\cT\) exactly once, and \(M_1\) will return in \(\cT\)
from that external value.
During the computation, \(M_1\) is used exactly once whereas there are
no restrictions on the number of uses of \(M_0\).

More interesting examples should arise by considering typically a term
\(M\) such that %
\[\Tseq{\Gamma,g:\Timpl{\Tdiff A}{\Tdiff B},x:A}{M}{B}\,.\]
Then we have
\begin{align*}
  \Tseq{\Gamma,f:\Timpl AB,x:A}{\Subst{M}{\Ldiff f}{g}}{B}
\end{align*}
and hence
\begin{align*}
  \Tseq{\Gamma}{P=\Lfix{(\Abst{f}{\Timpl AB}
  {\Abst{x}{A}{\Subst{M}{\Ldiff f}{g}}})}}{\Timpl AB}
\end{align*}
meaning that we are defining a term \(P\) such that
\(\Tseq\Gamma P{\Timpl AB}\) in terms of its own derivative, something
which looks a bit like a ``differential equation''.
The study of such definitions, typical of our new coherent
differential setting, seems complex and is postponed to further work.

%


\section{Semantics} %
\label{sec:semantics}

Before explaining in Section~\ref{sec:types-terms-interp} how types
and terms are interpreted in the Kleisli category of a coherent
differential summable resource category, we provide more information
about the categorical constructions that we will need. These
Sections~\ref{sec:partial-derivatives}~--~\ref{sec:syn-basic-constr}
can be
seen as a complement to Section~\ref{sec:cohdiff-summary}
and~\ref{sec:diff-distr-law}.

\subsection{Partial derivatives}
\label{sec:partial-derivatives}
We assume to be given a summable resource category $\cL$ which is
closed (wrt.~its symmetric monoidal structure) and is equipped with a
differentiation $\Sdiff$, see Section~\ref{sec:cohdiff-summary}.
We generalize the lax monoidality of the $\Sdfun$ functor to a natural
transformation
$\Smont_n\in\cL(\Sdfun{X_0}\ITens\cdots\ITens\Sdfun{X_n},
\Sdfun(X_0\ITens\cdots\ITens X_n))$ by induction on $n$ (there are
various possible definitions, all leading to the same morphisms), for
instance $\Smont_0=\Sin0$ and
$\Smont_{n+1} =\Smont_{X_0\ITens\cdots\ITens
  X_n,X_{n+1}}\Compl\Tensp{\Smont_n}{\Sdfun{X_{n+1}}}$. The resulting
morphism is fully characterized by the following property.
\begin{lemma}
  $\Sproj 0\Compl\Smont_n=\Sproj0\ITens\cdots\ITens\Sproj0$ and
  $\Sproj 1\Compl\Smont_n=\Sproj1\ITens\Sproj
  0\ITens\cdots\ITens\Sproj0+\cdots+\Sproj
  0\ITens\cdots\ITens\Sproj0\ITens\Sproj1$.
\end{lemma}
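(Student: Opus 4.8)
The plan is to argue by induction on $n$, using that $\Sproj0,\Sproj1$ are jointly monic (Section~\ref{sec:cohdiff-summary}): since $\Smont_n$ lands in an object of the form $\Sfun Z$, it is completely determined by $\Sproj0\Compl\Smont_n$ and $\Sproj1\Compl\Smont_n$, so it suffices to check the two displayed equations — and this same joint monicity is precisely what justifies the ``fully characterized'' clause of the statement. For the base case I would use the definition $\Smont_0=\Sin0$ together with $\Sin0=\Stuple{X,0}$, which gives $\Sproj0\Compl\Sin0=\mathrm{id}$ and $\Sproj1\Compl\Sin0=0$; under the usual conventions that an empty $\ITens$ is the identity on $\One$ and an empty sum is $0$, these are exactly the two equations in the base case.

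For the induction step I would unfold the recurrence $\Smont_{n+1}=\Smont_{X_0\ITens\cdots\ITens X_n,X_{n+1}}\Compl(\Smont_n\ITens\Sdfun{X_{n+1}})$, writing $Y=X_0\ITens\cdots\ITens X_n$ and recalling from Section~\ref{sec:monad-mult-strength} the binary characterization $\Sproj0\Compl\Smont_{Y,X_{n+1}}=\Sproj0\ITens\Sproj0$ and $\Sproj1\Compl\Smont_{Y,X_{n+1}}=\Sproj1\ITens\Sproj0+\Sproj0\ITens\Sproj1$. Precomposing with $\Sproj0$ and using bifunctoriality of $\ITens$ together with $\Sproj0\Compl\Sdfun{X_{n+1}}=\Sproj0$ yields $\Sproj0\Compl\Smont_{n+1}=(\Sproj0\Compl\Smont_n)\ITens\Sproj0$, which by the induction hypothesis is the required string of $\Sproj0$'s.

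Precomposing with $\Sproj1$ is the delicate step. Here I would distribute the precomposition over the sum $\Sproj1\ITens\Sproj0+\Sproj0\ITens\Sproj1$, which is legitimate because composition commutes with the partially defined addition and $\ITens$ distributes over it by \Saxdist, obtaining $\Sproj1\Compl\Smont_{n+1}=(\Sproj1\Compl\Smont_n)\ITens\Sproj0+(\Sproj0\Compl\Smont_n)\ITens\Sproj1$. Substituting the two induction hypotheses, the first summand expands into the terms carrying $\Sproj1$ in one of the leading slots and $\Sproj0$ in the last, while the second contributes the single term with $\Sproj1$ in the last slot; together they are exactly the claimed sum ranging over all positions. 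The main obstacle I anticipate is bookkeeping the partial additivity: one must ensure at each use that the sums being manipulated are genuinely summable, so that \Saxdist and the commutation of composition with $+$ apply. This is not automatic in a merely partially additive category, but every sum occurring here is the image of a witness $\Stuple{-,-}$ and hence summable by construction, after which the manipulation is routine.
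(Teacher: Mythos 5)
Your proof is correct and is exactly the argument the paper intends: the lemma is stated without proof, immediately after recalling the binary characterization of $\Smont_{X_0,X_1}$ and the inductive definition $\Smont_{n+1}=\Smont_{X_0\ITens\cdots\ITens X_n,X_{n+1}}\Compl\Tensp{\Smont_n}{\Sdfun{X_{n+1}}}$, so the intended argument is precisely your induction on $n$, reducing to the two displayed equations via joint monicity of $\Sproj0,\Sproj1$. Your explicit bookkeeping of the partial additivity (composition commuting with defined sums, and \Saxdist{} to distribute $\ITens$ over the sum coming from the inductive hypothesis) is sound and merely makes explicit what the paper leaves implicit.
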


\subsubsection{Additive strength} %
\label{sec:additive-strength}
We define morphisms
$\Sdfstr_{X_0,X_1}^0\in\Kl\cL(\Sdfun X_0\IWith
X_1,\Sdfun\Withp{X_0}{X_1})$ and
$\Sdfstr_{X_0,X_1}^1\in\Kl\cL(X_0\IWith\Sdfun
X_1,\Sdfun\Withp{X_0}{X_1})$ %
of $\Sdfun$ as
\begin{align*}
  \Sdfstr_{X_0,X_1}^0=\Kllin\Withp{\Sfun X_0}{\Sin 0}
  \text{\quad and\quad}
  \Sdfstr_{X_0,X_1}^1=\Kllin\Withp{\Sin 0}{\Sfun X_1}\,.
\end{align*}

\begin{lemma}
  The morphism
  $\Sdfstr^0_{X_0,X_1}\in\Kl\cL(\With{\Sdfun
    X_0}{X_1},\Sdfun\Withp{X_0}{X_1})$ is natural in $X_0,X_1$ and
  similarly for $\Sdfstr^1_{X_0,X_1}$.
\end{lemma}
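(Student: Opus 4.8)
The plan is to prove joint naturality directly: for $g_0\in\Kl\cL(X_0,Y_0)$ and $g_1\in\Kl\cL(X_1,Y_1)$ I would establish the square
\[
\Sdfun(g_0\Bwithe g_1)\Comp\Sdfstr^0_{X_0,X_1}
=\Sdfstr^0_{Y_0,Y_1}\Comp(\Sdfun g_0\Bwithe g_1)\,,
\]
the statement for $\Sdfstr^1$ then following by the symmetry of the cartesian structure. Since $\Sdfstr^0_{X_0,X_1}=\Kllin\Withp{\Sfun X_0}{\Sin0}$ is linear, I would first turn the square into a computation in $\cL$ by means of two reductions: the identity $g\Comp\Kllin(f)=g\Compl\Excl f$ proved above, and the identity $\Sdfun\Kllin u=\Kllin\Sfun u$, which follows from the first \Daxchain{} triangle $\Sfun\Der X\Compl\Sdiff_X=\Der{\Sfun X}$. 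These rewrite every $\Kllin$-factor, and $\Sdfun$ of every linear map, as an honest $\cL$-morphism.

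The common target $\Sdfun\Withp{X_0}{X_1}=\With{\Sfun X_0}{\Sfun X_1}$ is a cartesian product whose two factors are $\Sfun$-objects, so the family $(\Sproj a\Compl\Proj b)_{a,b\in\Eset{0,1}}$ is jointly monic (using also $\Sfun\Proj b=\Proj b$). Hence it suffices to check the four equalities obtained by post-composing both sides with $\Sproj a\Compl\Proj b$. The two components with $a=0$ reduce to the naturality in $\Kl\cL$ of the projection $\Sproj0:\Sdfun\Tonat\Id$, i.e.\ $\Lprojs0\Comp\Sdfun h=h\Comp\Lprojs0$; this in turn is immediate from \Daxlocal{} (which gives $\Sproj0\Compl\Sdiff_X=\Excl{\Sproj0}$) together with $g\Comp\Kllin(f)=g\Compl\Excl f$.

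The real work is in the two components with $a=1$. On the source side of the square the only differential that occurs is $\Sdiff_{X_0}$, coming from $\Sdfun g_0$ (the second factor $g_1$ is \emph{not} differentiated, which is exactly the asymmetry built into $\Sdfstr^0$); on the target side one meets $\Sdiff_{\With{X_0}{X_1}}$, coming from $\Sdfun(g_0\Bwithe g_1)$. Bridging the two is precisely the Leibniz axiom \Daxwith{}, which decomposes $\Sdiff_{\With{X_0}{X_1}}$ into the two partial contributions built from $\Sdiff_{X_0}$ and $\Sdiff_{X_1}$: the $\Sin0$ injected into the second slot by $\Sdfstr^0$ annihilates the $\Sdiff_{X_1}$-partial, leaving exactly $\Sdiff_{X_0}$, while naturality of $\Sdiff$ lets me slide the $\Excl{g_i}$ across. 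I expect this \Daxwith{}-step to be the main obstacle, since \Daxwith{} is stated in terms of the Seely isomorphism $\Seelyt$ and the lax morphism $\Smont_{\Excl{X_0},\Excl{X_1}}$, and one has to track these through the unfolding of $\Bwithe$ (via $\Tuple{\Excl{\Proj0},\Excl{\Proj1}}$) and verify that every contribution except the first partial derivative cancels.
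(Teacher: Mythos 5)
Your proposal is sound, and its skeleton — rewrite the Kleisli naturality square as an equation in $\cL$ (via $g\Comp\Kllin(f)=g\Compl\Excl f$ and $\Sdfun\Kllin u=\Kllin{\Sfun u}$) and then test against a jointly monic family of projections — is the same as the paper's; but on the decisive components you take a genuinely different, and heavier, route. The paper never touches \Daxwith{}: post-composing with the product projection $\Proj i$ and using naturality of $\Sdiff$ (in $\cL$) at the morphism $\Proj i$, together with the standing strictness assumption $\Sfun{\Proj i}=\Proj i$ of~\Eqref{eq:Sfun-preserves-With}, turns $\Sfun{\Excl{\Proj i}}\Compl\Sdiff_{\With{X_0}{X_1}}$ into $\Sdiff_{X_i}\Compl\Excl{\Proj i}$ in one step, so the dreaded $\Sdiff_{\With{X_0}{X_1}}$ is replaced by $\Sdiff_{X_i}$ before any Leibniz decomposition is ever needed. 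The $i=0$ component is then immediate, and the $i=1$ component finishes with $\Excl{\Proj 1}\Compl\Excl{\Withp{\Sfun{X_0}}{\Sin 0}}=\Excl{\Sin 0}\Compl\Excl{\Proj 1}$, \Daxlin{} (which gives $\Sdiff_{X_1}\Compl\Excl{\Sin 0}=\Sin 0$) and naturality of $\Sin 0$. Your \Daxwith{} route does go through: the characterization of $\Sproj 1\Compl\Smont$ together with \Daxlocal{} yields the two-partial decomposition, and the $\Sdiff_{X_1}$-partial is killed exactly as you predict, since $\Sproj 1\Compl\Sdiff_{X_1}\Compl\Excl{\Sin 0}=\Sproj 1\Compl\Sin 0=0$ by \Daxlin{}. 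But the bill is real: you need naturality of $\Seelyt$, the absorption $\Tens u0=0$, and — for the component landing in $Y_1$, your $(a,b)=(1,1)$ — an additional argument (essentially the nullary part of \Daxwith{} pushed through naturality of $\Sdiff$) showing that weakening the surviving factor $\Sproj 1\Compl\Sdiff_{X_0}$ yields $0$; none of this arises on the paper's path. What your finer monic family $(\Sproj a\Compl\Proj b)$ buys is that the $a=0$ components become literally the $\Kl\cL$-naturality of $\Sproj 0$, i.e.\ \Daxlocal{}, which is tidy; the paper gets the same economy by keeping the two $\Proj i$-components whole and computing directly. In short: correct, but you misplace the difficulty — this lemma needs only naturality of $\Sdiff$, \Daxlin{} and strict preservation of $\IWith$, not the Leibniz axiom.
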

\begin{proof}
  Let $f_i\in\Kl\cL(X_i,Y_i)$ for $i=0,1$, %
  we must show that the two following morphisms are equal:
  \begin{align*}
    g&=\Sdfun\Withep{f_0}{f_1}\Comp\Sdfstr^0_{X_0,X_1}\\
       &=\Withp{\Sfun f_0}{\Sfun f_1}
       \Compl\Tuple{\Sfun\Excl{\Proj 0},\Sfun\Excl{\Proj 1}}
       \Compl\Sdiff_{\With{X_0}{X_1}}
       \Compl\Excl{\Withp{\Sfun{X_0}}{\Sin0}}\\
    h&=\Sdfstr_{\With{Y_0}{Y_1}}^0
       \Comp\Withep{\Sdfun f_0}{f_1}\\
       &=\Withp{\Sfun Y_0}{\Sin0}
       \Compl\Withp{(\Sfun{f_0})\Compl\Sdiff_{X_0}}{f_1}
       \Compl\Tuple{\Excl{\Proj0},\Excl{\Proj1}}\\
     &=\Withp{\Sfun Y_0}{\Sin0}
       \Compl\Withp{\Sfun{f_0}}{f_1}
       \Compl\Withp{\Sdiff_{X_0}}{\Excl{X_1}}
       \Compl\Tuple{\Excl{\Proj0},\Excl{\Proj1}}
  \end{align*}
  and for this it suffices to prove that %
  $\Proj i\Compl g=\Proj i\Compl h$ for $i=0,1$. We have
  \begin{align*}
    \Proj 0\Compl g
    &=(\Sfun f_0)
      \Compl(\Sfun\Excl{\Proj0})
      \Compl\Sdiff_{\With{X_0}{X_1}}
      \Compl\Excl{\Withp{\Sfun{X_0}}{\Sin0}}\\
    &=(\Sfun f_0)
      \Compl\Sdiff_{X_0}
      \Compl\Excl{\Proj 0}
      \Compl\Excl{\Withp{\Sfun{X_0}}{\Sin0}}
    \text{\quad by naturality of }\Sdiff\\
    &=(\Sfun f_0)
      \Compl\Sdiff_{X_0}
      \Compl\Excl{\Proj 0}=\Proj0\Compl h
  \end{align*}
  and, by a similar computation
  \begin{align*}
    \Proj 1\Compl g
    =(\Sfun f_1)
    \Compl\Sdiff_{X_1}
    \Compl\Excl{\Sin0}
    \Compl\Excl{\Proj 1}
    =(\Sfun f_1)
    \Compl\Sin0
    \Compl\Excl{\Proj1}
    =\Sin0
    \Compl f_1
    \Compl\Excl{\Proj1}
  \end{align*}
  by~\ref{ax:daxlin} 
  and by naturality of $\Sin0$ and hence %
  $\Proj 1\Compl g=\Proj 1\Compl h$.
\end{proof}

There is a simple connection between this additive strength and the
tensorial strength of the monad $\Sfun$ introduced %
in~\cite{Ehrhard23a},
Section~4, 
see also Section~\ref{sec:monad-mult-strength} in the present paper,
through the strong symmetric monoidal structure of $\Excl\_$ (Seely
isomorphisms).
\begin{theorem}\label{th:sdfstr-sstr-comm}
  The following diagram commutes.
  \begin{equation*}
    \begin{tikzcd}
      \Tens{\Excl{\Sfun X_0}}{\Excl{X_1}}
      \ar[r,"\Seelyt_{\Sfun X_0,X_1}"]
      \ar[d,swap,"\Tens{\Sdiff_{X_0}}{\Excl{X_1}}"]
      &[1em]
      \Exclp{\With{\Sfun X_0}{X_1}}
      \ar[r,"\Promp{\Sdfstr^0_{X_0,X_1}}"]
      &[1em]
      \Excl{\Sfun\Withp{X_0}{X_1}}
      \ar[d,"\Sdiff_{\With{X_0}{X_1}}"]
      \\
      \Tens{\Sfun{\Excl{X_0}}}{\Excl{X_1}}
      \ar[r,"\Sstr_{\Excl{X_0},\Excl{X_1}}^0"]
      &
      \Sfun\Tensp{\Excl{X_0}}{\Excl{X_1}}
      \ar[r,"\Sfun\Seelyt_{X_0,X_1}"]
      &
      \Sfun\Exclp{\With{X_0}{X_1}}
    \end{tikzcd}
  \end{equation*}
  and similarly for $\Sdfstr_{X_0,X_1}^1$ and $\Sstr_{X_0,X_1}^1$.
\end{theorem}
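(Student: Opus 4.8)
The plan is to reduce the claimed equality to a purely monadic identity by stripping off the Seely isomorphisms via the axiom \Daxwith{}, and then to verify one elementary lemma relating the lax monoidality \(\Smont\) to the tensorial strength \(\Sstr^0\) using joint monicity of \(\Sproj0,\Sproj1\).

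First I would simplify the top leg. Since \(\Sdfstr^0_{X_0,X_1}=\Kllin(\Withp{\Sfun X_0}{\Sin0})\) and \(\Prom{\Kllin(f)}=\Excl f\) for any \(f\in\cL(X,Y)\), the promotion in the diagram is simply \(\Promp{\Sdfstr^0_{X_0,X_1}}=\Excl{\Withp{\Sfun X_0}{\Sin0}}\). Writing \(T\) for the top leg and \(B\) for the bottom leg, it thus suffices to prove
\[
  \Sdiff_{\With{X_0}{X_1}}\Compl\Excl{\Withp{\Sfun X_0}{\Sin0}}\Compl\Seelyt_{\Sfun X_0,X_1}
  =\Sfun\Seelyt_{X_0,X_1}\Compl\Sstr^0_{\Excl{X_0},\Excl{X_1}}\Compl(\Tens{\Sdiff_{X_0}}{\Excl{X_1}})\,.
\]
The key idea is that \(\Sfun\Seelyt_{X_0,X_1}\) is an isomorphism, so it is enough to show \(\Sfun\Inv{(\Seelyt_{X_0,X_1})}\Compl T=\Sfun\Inv{(\Seelyt_{X_0,X_1})}\Compl B\). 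On the right this cancels immediately (functoriality of \(\Sfun\)), leaving \(\Sstr^0_{\Excl{X_0},\Excl{X_1}}\Compl(\Tens{\Sdiff_{X_0}}{\Excl{X_1}})\).

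On the left I would apply \Daxwith{} to the composite \(\Sfun\Inv{(\Seelyt_{X_0,X_1})}\Compl\Sdiff_{\With{X_0}{X_1}}\); using the observation recorded just after \Daxwith{} that \(\Tuple{\Sfun\Proj0,\Sfun\Proj1}\) is the identity (because \(\Sfun\) preserves \(\IWith\) strictly), the left leg becomes \(\Smont_{\Excl{X_0},\Excl{X_1}}\Compl(\Tens{\Sdiff_{X_0}}{\Sdiff_{X_1}})\Compl\Inv{(\Seelyt_{\Sfun X_0,\Sfun X_1})}\Compl\Excl{\Withp{\Sfun X_0}{\Sin0}}\Compl\Seelyt_{\Sfun X_0,X_1}\). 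Naturality of the Seely isomorphism lets me commute \(\Excl{\Withp{\Sfun X_0}{\Sin0}}\) through \(\Inv{(\Seelyt)}\) and cancel \(\Inv{(\Seelyt_{\Sfun X_0,X_1})}\Compl\Seelyt_{\Sfun X_0,X_1}=\Id\), leaving the factor \(\Tens{\Excl{\Sfun X_0}}{\Excl{\Sin0}}\); then \Daxlin{}, in the form \(\Sdiff_{X_1}\Compl\Excl{\Sin0}=\Sin0\), collapses the left leg to \(\Smont_{\Excl{X_0},\Excl{X_1}}\Compl(\Tens{\Sdiff_{X_0}}{\Sin0})\). It remains to identify the two collapsed legs, which follows from the lemma \(\Smont_{Y_0,Y_1}\Compl(\Tens{\Sfun Y_0}{\Sin0})=\Sstr^0_{Y_0,Y_1}\): instantiating at \(Y_i=\Excl{X_i}\) and precomposing with \(\Tens{\Sdiff_{X_0}}{\Excl{X_1}}\), bifunctoriality of \(\ITens\) (together with \(\Sin0\Compl\Excl{X_1}=\Sin0\)) turns the right-hand collapsed leg into the left-hand one. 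I would prove this lemma by joint monicity of \(\Sproj0,\Sproj1\), using \(\Sproj0\Compl\Smont=\Tens{\Sproj0}{\Sproj0}\), \(\Sproj1\Compl\Smont=\Tens{\Sproj1}{\Sproj0}+\Tens{\Sproj0}{\Sproj1}\), \(\Sproj i\Compl\Sstr^0=\Tens{\Sproj i}{Y_1}\), and the identities \(\Sproj0\Compl\Sin0=\Id\), \(\Sproj1\Compl\Sin0=0\), \(\Tens{h}{0}=0\) (preservation of zero morphisms by \(\ITens\), from \Saxdist{}). The \(\Sdfstr^1\)/\(\Sstr^1\) case is symmetric, obtained by the same argument after inserting the symmetry iso of \(\ITens\).

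I expect the main obstacle to be the \(\Sproj1\)-component: there is no axiom expressing \(\Sproj1\Compl\Sdiff\) as a single morphism, so a naive componentwise check of the original diagram is awkward and would force me to unfold \(\Smont\) through \(\Sstr^0,\Sstr^1\) and \(\Sfunadd\) by hand. Routing the argument through \Daxwith{} — which packages exactly the interaction of \(\Sdiff\) with \(\IWith\) and the Seely isomorphisms — is what makes the \(\Sproj1\) bookkeeping tractable, reducing the whole statement to the elementary \(\Smont\)–\(\Sstr^0\) lemma above.
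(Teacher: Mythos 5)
Your proof is correct, and its skeleton matches the paper's: both arguments strip the Seely isomorphisms via \Daxwith{}, use naturality of \(\Seelyt\) together with \(\Promp{\Sdfstr^0_{X_0,X_1}}=\Exclp{\With{\Sfun X_0}{\Sin0}}\), and then apply \Daxlin{} (in the form \(\Sdiff_{X_1}\Compl\Excl{\Sin0}=\Sin0\)) to reduce everything to the identity \(\Smont_{\Excl{X_0},\Excl{X_1}}\Compl\Tensp{\Sdiff_{X_0}}{\Sin0}=\Sstr^0_{\Excl{X_0},\Excl{X_1}}\Compl\Tensp{\Sdiff_{X_0}}{\Excl{X_1}}\). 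Where you genuinely diverge is in how this residual identity is closed. The paper unfolds \(\Smont\) by its definition \(\Smont=\Sfunadd\Compl(\Sfun\Sstr^1)\Compl\Sstr^0\), pushes \(\Tensp{\Excl{X_0}}{\Sin0}\) through \(\Sstr^0\) by naturality, and then invokes the strength unit law followed by the monad law \(\Sfunadd\Compl(\Sfun\Sin0)=\Id\). You instead isolate the reusable lemma \(\Smont_{Y_0,Y_1}\Compl\Tensp{\Sfun Y_0}{\Sin0}=\Sstr^0_{Y_0,Y_1}\) and verify it by joint monicity of \(\Sproj0,\Sproj1\), using only the componentwise characterizations \(\Sproj0\Compl\Smont=\Tens{\Sproj0}{\Sproj0}\), \(\Sproj1\Compl\Smont=\Tens{\Sproj1}{\Sproj0}+\Tens{\Sproj0}{\Sproj1}\) and \(\Sproj i\Compl\Sstr^0=\Tens{\Sproj i}{Y_1}\), all of which the paper records in Section~\ref{sec:monad-mult-strength}. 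Both routes are sound; yours is more self-contained, never needing the definitional decomposition of \(\Smont\) nor the strength/monad coherence laws, at the cost of needing compatibility of composition and \(\ITens\) with partial sums and zero (which \Saxdist{} and partial additivity of \(\cL\) do guarantee), whereas the paper's computation is shorter once those coherence laws are taken as given. A purely cosmetic difference: you cancel the iso \(\Sfun\Seelyt_{X_0,X_1}\) on both legs at the outset, while the paper carries it along and rewrites the top-right composite directly into the bottom one.
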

Remember that %
$\Promp{\Sdfstr^0_{X_0,X_1}}$ is the promotion of %
$\Sdfstr^0_{X_0,X_1}\in\cL(\Exclp{\With{\Sfun
    X_0}{X_1}},\Sfun\Withp{X_0}{X_1})$, so that actually %
$\Promp{\Sdfstr^0_{X_0,X_1}}=\Exclp{\With{\Sfun X_0}{\Sin0}}$.
\begin{proof}
  By~\ref{ax:daxwith} 
  we have
  \begin{align*}
    \Sdiff_{\With{X_0}{X_1}}
    =(\Sfun{\Seelyt_{X_0,X_1}})
    \Compl\Smont_{\Excl{X_0},\Excl{X_1}}
    \Compl\Tensp{\Sdiff_{X_0}}{\Sdiff_{X_1}}
    \Compl\Invp{\Seelyt_{\Sfun X_0,\Sfun X_1}}
  \end{align*}
  so that
  \begin{align*}
    \Sdiff_{\With{X_0}{X_1}}
    &\Compl\Promp{\Sdfstr^0_{X_0,X_1}}
    \Compl\Seelyt_{\Sfun X_0,X_1}\\
    &=(\Sfun{\Seelyt_{X_0,X_1}})
      \Compl\Smont_{\Excl{X_0},\Excl{X_1}}
      \Compl\Tensp{\Sdiff_{X_0}}{\Sdiff_{X_1}}
      \Compl\Invp{\Seelyt_{\Sfun X_0,\Sfun X_1}}
      \Compl\Exclp{\With{\Sfun X_0}{\Sin0}}
      \Compl\Seelyt_{\Sfun X_0,X_1}\\
    &=(\Sfun{\Seelyt_{X_0,X_1}})
      \Compl\Smont_{\Excl{X_0},\Excl{X_1}}
      \Compl\Tensp{\Sdiff_{X_0}}{\Sdiff_{X_1}}
      \Tensp{\Exclp{\Sfun X_0}}{\Excl{\Sin0}}\\
    &=(\Sfun{\Seelyt_{X_0,X_1}})
      \Compl\Smont_{\Excl{X_0},\Excl{X_1}}
      \Compl\Tensp{\Sdiff_{X_0}}{\Sin0}
      \text{\quad by~\ref{ax:daxlin} 
      }\\
    &=(\Sfun{\Seelyt_{X_0,X_1}})
      \Compl\Sfunadd
      \Compl(\Sfun\Sstr_{\Excl{X_0},\Excl{X_1}}^1)
      \Compl\Sstr_{\Excl{X_0},\Sfun\Excl{X_1}}^0
      \Compl\Tensp{\Sdiff_{X_0}}{\Sin0}\\
    &\text{\quad\quad by definition of }\Smont
      \text{ in~\cite{Ehrhard23a},
      Section~4}\\ 
    &=(\Sfun{\Seelyt_{X_0,X_1}})
      \Compl\Sfunadd
      \Compl(\Sfun\Sstr_{\Excl{X_0},\Excl{X_1}}^1)
      \Compl\Sfun\Tensp{\Excl{X_0}}{\Sin0}
      \Compl\Sstr_{\Excl{X_0},\Excl{X_1}}^0
      \Compl\Tensp{\Sdiff_{X_0}}{\Excl{X_1}}\\
    &=(\Sfun{\Seelyt_{X_0,X_1}})
      \Compl\Sfunadd
      \Compl(\Sfun\Sin0)
      \Compl\Sstr_{\Excl{X_0},\Excl{X_1}}^0
      \Compl\Tensp{\Sdiff_{X_0}}{\Excl{X_1}}\\
    &\text{\quad\quad since }\Sstr^0
      \text{ is a strength of the monad }\Sfun
      \text{ whose unit is }\Sin0\\
    &=(\Sfun{\Seelyt_{X_0,X_1}})
      \Compl\Sstr_{\Excl{X_0},\Excl{X_1}}^0
      \Compl\Tensp{\Sdiff_{X_0}}{\Excl{X_1}}
  \end{align*}
  by one of the monad commutations.
\end{proof}

\begin{lemma}\label{lemma:sdfstr-description}
  The morphism $\Sdfstr^0_{X_0,X_1}$ is equal to the following
  composition of morphisms in $\cL$:
\begin{equation}
  \label{eq:Sdfstr-def}
  \begin{tikzcd}
    \Excl{\Withp{\Sfun X_0}{X_1}}\ar[r,"\Invp{\Seelyt}"]
    &[2em]
    \Tens{\Excl{\Sfun X_0}}{\Excl{X_1}}
    \ar[r,"\Tens{\Sdiff_{X_0}}{\Excl{X_1}}"]
    &
    \Tens{\Sfun\Excl{X_0}}{\Excl{X_1}}
    \ar[d,"\Sstr^0_{\Excl{X_0},\Excl{X_1}}"]
    \\
    \Sfun{\Withp{X_0}{X_1}}
    &
    \Sfun\Excl{\Withp{X_0}{X_1}}
    \ar[l,"\Sfun\Der{\With{X_0}{X_1}}"]
    &
    \Sfun\Tensp{\Excl{X_0}}{\Excl{X_1}}
    \ar[l,"\Sfun\Seelyt"]
  \end{tikzcd}
\end{equation}
\end{lemma}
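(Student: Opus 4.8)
The plan is to obtain~\Eqref{eq:Sdfstr-def} as the composite of $\Sdiff_{\With{X_0}{X_1}}$ with the dereliction collapse supplied by axiom~\Daxchain, reading off the middle of that composite from the commutation already established in Theorem~\ref{th:sdfstr-sstr-comm}. Write $C$ for the composite displayed in~\Eqref{eq:Sdfstr-def},
\[
  C=(\Sfun\Der{\With{X_0}{X_1}})\Compl(\Sfun\Seelyt)
  \Compl\Sstr^0_{\Excl{X_0},\Excl{X_1}}
  \Compl(\Tens{\Sdiff_{X_0}}{\Excl{X_1}})\Compl\Invp{\Seelyt_{\Sfun X_0,X_1}}\,.
\]
First I would invoke Theorem~\ref{th:sdfstr-sstr-comm}, whose commuting square is exactly the identity
\[
  (\Sfun\Seelyt_{X_0,X_1})\Compl\Sstr^0_{\Excl{X_0},\Excl{X_1}}
  \Compl(\Tens{\Sdiff_{X_0}}{\Excl{X_1}})
  =\Sdiff_{\With{X_0}{X_1}}\Compl\Promp{\Sdfstr^0_{X_0,X_1}}
  \Compl\Seelyt_{\Sfun X_0,X_1}\,.
\]
Substituting this into $C$ produces $C=(\Sfun\Der{\With{X_0}{X_1}})\Compl\Sdiff_{\With{X_0}{X_1}}\Compl\Promp{\Sdfstr^0_{X_0,X_1}}\Compl\Seelyt_{\Sfun X_0,X_1}\Compl\Invp{\Seelyt_{\Sfun X_0,X_1}}$.

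Next I would cancel $\Seelyt_{\Sfun X_0,X_1}\Compl\Invp{\Seelyt_{\Sfun X_0,X_1}}=\Id$ and apply the first diagram of~\Daxchain, namely $(\Sfun\Der X)\Compl\Sdiff_X=\Der{\Sfun X}$ with $X=\With{X_0}{X_1}$, which collapses the head of the composite to a single dereliction and leaves $C=\Der{\Sfun\Withp{X_0}{X_1}}\Compl\Promp{\Sdfstr^0_{X_0,X_1}}$. Finally, recalling from the remark after Theorem~\ref{th:sdfstr-sstr-comm} that $\Promp{\Sdfstr^0_{X_0,X_1}}=\Exclp{\With{\Sfun X_0}{\Sin0}}$ (equivalently, $\Prom{\Kllin(f)}=\Excl f$ applied to the defining morphism $f=\Withp{\Sfun X_0}{\Sin0}$), naturality of the counit $\Der{}\colon\Excl{}\Tonat\Id$ gives $\Der{\Sfun\Withp{X_0}{X_1}}\Compl\Excl{\Withp{\Sfun X_0}{\Sin0}}=\Withp{\Sfun X_0}{\Sin0}\Compl\Der{\With{\Sfun X_0}{X_1}}$, and this right-hand side is precisely $\Kllin(\Withp{\Sfun X_0}{\Sin0})=\Sdfstr^0_{X_0,X_1}$ by the definition of the additive strength. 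Hence $C=\Sdfstr^0_{X_0,X_1}$, as required.

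The derivation is an essentially mechanical diagram chase once Theorem~\ref{th:sdfstr-sstr-comm} is in hand, so the only real care needed is bookkeeping: getting the orientation of the Seely iso right (the leftmost arrow of~\Eqref{eq:Sdfstr-def} is $\Invp{\Seelyt_{\Sfun X_0,X_1}}$, which must cancel against the $\Seelyt_{\Sfun X_0,X_1}$ produced by the theorem), and correctly identifying $\Promp{\Sdfstr^0_{X_0,X_1}}$ with $\Excl{\Withp{\Sfun X_0}{\Sin0}}$ so that the concluding naturality step lands exactly on the definition. The one conceptual ingredient, rather than pure manipulation, is the use of~\Daxchain to rewrite $\Sfun\Der{}\Compl\Sdiff$ as a bare dereliction, which is what allows the entire composite to retract onto $\Sdfstr^0_{X_0,X_1}$.
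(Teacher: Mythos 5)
Your proof is correct and follows essentially the same route as the paper's: rewrite the composite via Theorem~\ref{th:sdfstr-sstr-comm}, cancel the Seely iso, collapse $(\Sfun\Der{})\Compl\Sdiff$ to $\Der{}$ by \Daxchain{}, and identify $\Der{}\Compl\Promp{\Sdfstr^0}$ with $\Sdfstr^0$. The only difference is that you spell out this last identification (via $\Promp{\Sdfstr^0_{X_0,X_1}}=\Excl{\Withp{\Sfun X_0}{\Sin0}}$ and naturality of the counit), which the paper leaves implicit.
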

\begin{proof}
  By Theorem~\ref{th:sdfstr-sstr-comm}, the
  morphism~\Eqref{eq:Sdfstr-def} is equal to
  \begin{align*}
    (\Sfun\Der{\With{X_0}{X_1}})
    \Compl\Sdiff_{\With{X_0}{X_1}}
    \Compl\Promp{\Sdfstr^0_{X_0,X_1}}
    &=\Der{\Sfun\Withp{X_0}{X_1}}
      \Compl\Promp{\Sdfstr^0_{X_0,X_1}}
      \text{\quad by~\ref{ax:daxchain} 
      }\\
    &=\Sdfstr^0_{X_0,X_1}\,.\qedhere
  \end{align*}
\end{proof}

\begin{lemma}
\begin{align}
  \Sproj0\Compl\Sdfstr^0_{X_0,X_1}=\With{\Sproj0}{X_1}
  &\text{\quad and\quad}
    \Sproj1\Compl\Sdfstr^0_{X_0,X_1}=\With{\Sproj1}0\label{eq:sproj-sdfstr0}\\
  \Sproj0\Compl\Sdfstr^1_{X_0,X_1}=\With{X_0}{\Sproj0}
  &\text{\quad and\quad}
    \Sproj1\Compl\Sdfstr^1_{X_0,X_1}=\With0{\Sproj1}\label{eq:sproj-sdfstr1}\,.
\end{align}    
\end{lemma}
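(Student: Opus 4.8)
The plan is to reduce the four equations to identities between linear morphisms in $\cL$ and then to verify those componentwise over the cartesian product. Both the projections $\Sproj i$ and the strengths $\Sdfstr^0_{X_0,X_1}=\Kllin\Withp{\Sfun X_0}{\Sin 0}$, $\Sdfstr^1_{X_0,X_1}=\Kllin\Withp{\Sin 0}{\Sfun X_1}$ are images under the functor $\Kllin$ of morphisms of $\cL$ (the projections being the linear natural transformations $\Sfun\Tonat\Id$). Since $\Kllin$ is a functor, and preserves composition, it suffices to prove the corresponding equations between the underlying $\cL$-morphisms, for instance $\Sproj0\Compl\Withp{\Sfun X_0}{\Sin0}=\With{\Sproj0}{X_1}$ and $\Sproj1\Compl\Withp{\Sfun X_0}{\Sin0}=\With{\Sproj1}0$, together with their $\Sdfstr^1$-analogues.

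First I would establish that, on a cartesian product, the natural transformation $\Sproj i$ acts coordinatewise, i.e.\ $\Sproj i=\With{\Sproj i}{\Sproj i}$ as a morphism $\Sfun\Withp{X_0}{X_1}=\With{\Sfun X_0}{\Sfun X_1}\to\With{X_0}{X_1}$. This follows by postcomposing with each cartesian projection $\Proj j$ and invoking the naturality of $\Sproj i$ together with the strict preservation of $\IWith$ by $\Sfun$ (Equation~\Eqref{eq:Sfun-preserves-With}, which gives $\Sfun\Proj j=\Proj j$): one obtains $\Proj j\Compl\Sproj i=\Sproj i\Compl\Proj j$ for $j=0,1$, which is precisely the coordinatewise description by joint monicity of the $\Proj j$. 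I expect this to be the only step requiring a genuine (if short) argument; everything else is bookkeeping with the bifunctoriality of $\IWith$.

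Finally I would conclude by direct computation, using functoriality of $\IWith$ and the defining properties of the injection $\Sin0=\Stuple{X,0}$, namely $\Sproj0\Compl\Sin0=X$ (the identity) and $\Sproj1\Compl\Sin0=0$. For the first strength this yields $\Sproj i\Compl\Withp{\Sfun X_0}{\Sin0}=\With{\Sproj i}{(\Sproj i\Compl\Sin0)}$, which equals $\With{\Sproj0}{X_1}$ when $i=0$ and $\With{\Sproj1}0$ when $i=1$. The two identities for $\Sdfstr^1_{X_0,X_1}=\Kllin\Withp{\Sin 0}{\Sfun X_1}$ follow by the symmetric computation, the injection now sitting in the first coordinate, giving $\With{X_0}{\Sproj0}$ and $\With0{\Sproj1}$ respectively. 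Hence all four equations hold in $\cL$, and therefore in $\Kl\cL$ after applying $\Kllin$.
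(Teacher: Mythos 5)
Your proof is correct and matches the paper's intent: the paper disposes of this lemma with "Immediate consequence of the definitions," and your argument — pushing everything through $\Kllin$, using $\Sfun\Proj j=\Proj j$ plus naturality to see that $\Sproj i$ acts coordinatewise on $\IWith$, and then computing with $\Sproj 0\Compl\Sin0=\Id$ and $\Sproj1\Compl\Sin0=0$ — is exactly the verification the paper leaves implicit. No gaps; this is the same approach, just spelled out.
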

\begin{proof}
  Immediate consequence of the definitions.
\end{proof}

\begin{theorem}
  The natural morphisms $\Sdfstr^0,\Sdfstr^1$ are strengths for the
  monad $(\Sdfun,\Sdfunit,\Sdfmult)$ on the category $\Kl\cL$.
\end{theorem}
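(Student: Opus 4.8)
The plan is to verify the four monad-strength axioms for $\Sdfstr^0$ — compatibility with the unit $\Sdfunit$ and with the multiplication $\Sdfmult$, together with the two coherence squares involving the unit object $\Top$ and the associator of $\IWith$ — and then to deduce the statement for $\Sdfstr^1$ by the symmetric argument, conjugating with the symmetry $\Sym$ of the cartesian product and using the formulas \Eqref{eq:sproj-sdfstr1} in place of \Eqref{eq:sproj-sdfstr0}.

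The observation that makes everything tractable is that every morphism occurring in these diagrams is \emph{linear}, i.e.\ lies in the image of $\Kllin$. Indeed $\Sdfstr^0=\Kllin\Withp{\Sfun X_0}{\Sin 0}$, $\Sdfunit=\Kllin\Sin0$ and $\Sdfmult=\Kllin\Sfunadd$ by definition; $\Kllin$ preserves the cartesian structure (as $\Proje i=\Kllin\Proj i$ and $\Kllin$ preserves tupling), so the Kleisli products $\Withe{\Sdfmult}{X_1}$, $\Withe{\Sdfunit}{X_1}$ and so on are again of the form $\Kllin(-)$; and, crucially, $\Sdfun$ sends a linear morphism to a linear one, since for $g\in\cL(X,Y)$ one computes $\Sdfun(\Kllin g)=\Sfun(g\Compl\Der X)\Compl\Sdiff_X=(\Sfun g)\Compl(\Sfun\Der X)\Compl\Sdiff_X=(\Sfun g)\Compl\Der{\Sfun X}=\Kllin(\Sfun g)$, the third equality being exactly the left triangle of \Daxchain. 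Thus $\Sdiff$ disappears from the functorial action, and because both sides of each axiom are $\Kllin$ of explicit $\cL$-morphisms, it suffices to prove the corresponding equations \emph{in $\cL$}, where $\Sdfstr^0$ is literally $\With{\Sfun X_0}{\Sin0}$ and $\Sdfmult$ is $\Sfunadd$.

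To establish such an $\cL$-equation I would use joint monicity of $\Sproj0,\Sproj1$: two morphisms into $\Sfun Z$ agree as soon as they agree after postcomposition with $\Sproj0$ and with $\Sproj1$. For the multiplication law the $\cL$-level left-hand side is $\Withp{\Sfun X_0}{\Sin0}\Compl\Withp{\Sfunadd}{X_1}$ and the right-hand side is $\Sfunadd\Compl\Sfun\Withp{\Sfun X_0}{\Sin0}\Compl\Withp{\Sfun^2 X_0}{\Sin0}$. Postcomposing with $\Sproj0$ and using $\Sproj0\Compl\Sfunadd=\Sproj0\Compl\Sproj0$, the strict preservation of $\IWith$ by $\Sfun$ \Eqref{eq:Sfun-preserves-With}, naturality of $\Sproj0$, and $\Sproj0\Compl\Sin0=\Id$, both sides collapse to $\With{\Sproj0\Compl\Sproj0}{X_1}$. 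The $\Sproj1$ case carries the real content: here $\Sproj1\Compl\Sfunadd=\Sproj1\Compl\Sproj0+\Sproj0\Compl\Sproj1$ splits the right-hand side into two summands, which after pushing the projections through by naturality and by \Eqref{eq:sproj-sdfstr0} reduce to $\With{\Sproj1\Compl\Sproj0}{0}$ and $\With{\Sproj0\Compl\Sproj1}{0}$ respectively; their sum is $\With{\Sproj1\Compl\Sproj0+\Sproj0\Compl\Sproj1}{0}$, which is exactly the value of the left-hand side. The unit axiom and the two monoidal-coherence squares are entirely analogous but shorter, each reducing to the identities $\Sproj0\Compl\Sin0=\Id$ and $\Sproj1\Compl\Sin0=0$.

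The main obstacle is therefore not any single hard diagram but the correct matching of the partial sum in the $\Sproj1$-component of the multiplication law: one must check that the two contributions arising from the double application of the strength recombine, via the partial additivity of $\cL$ and the additivity of the pairing $\With{-}{-}$, into the single sum $\Sproj1\Compl\Sproj0+\Sproj0\Compl\Sproj1$ dictated by the definition of $\Sfunadd$. Keeping track of which of the two copies of $\Sfun$ inside $\Sdfun^2 X_0$ each projection acts upon, and invoking the strictness \Eqref{eq:Sfun-preserves-With} at each step, is the only delicate bookkeeping the argument requires.
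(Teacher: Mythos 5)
Your proof is correct and takes essentially the same route as the paper: the paper's own proof consists exactly of the observation that all morphisms involved are images under $\Kllin$ of morphisms of $\cL$, so that the commutations may be checked in $\cL$, where they are declared ``quite easy''. Your write-up merely supplies the details the paper leaves implicit — the identity $\Sdfun(\Kllin g)=\Kllin(\Sfun g)$ obtained from the first triangle of \Daxchain{}, the preservation of the cartesian structure by $\Kllin$, and the joint-monicity computations with the careful handling of the partial sum in the $\Sproj 1$-component of the multiplication law.
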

This means that the following diagrams commute in $\Kl\cL$.
\begin{equation*}
  \begin{tikzcd}
    \With{X_0}{X_1}
    \ar[d,swap,"\With{\Sdfunit_{X_0}}{X_1}"]
    \ar[dr,"\Sdfunit_{\With{X_0}{X_1}}"]
    &[1em]
    \\
    \With{\Sdfun X_0}{X_1}
    \ar[r,"\Sdfstr^0_{X_0,X_1}"]
    &
    \Sdfun\Withp{X_0}{X_1}
  \end{tikzcd}
\end{equation*}
\begin{equation*}
  \begin{tikzcd}
    \With{\Sdfun^2X_0}{X_1}
    \ar[d,swap,"\With{\Sdfmult_{X_0}}{X_1}"]
    \ar[r,"\Sdfstr^0_{\Sdfun X_0,X_1}"]
    &[1em]
    \Sdfun\Withp{\Sdfun X_0}{X_1}
    \ar[r,"\Sdfun\Sdfstr_{X_0,X_1}"]
    &[1em]
    \Sdfun^2\Withp{X_0}{X_1}
    \ar[d,"\Sdfmult_{\With{X_0}{X_1}}"]
    \\
    \With{\Sdfun X_0}{X_1}
    \ar[rr,"\Sdfstr^0_{X_0,X_1}"]
    &&
    \Sdfun\Withp{X_0}{X_1}    
  \end{tikzcd}
\end{equation*}
\begin{equation*}
  \begin{tikzcd}
    \With{\Sdfun X}{\Top}
    \ar[r,"\Sdfstr_{X,\Top}"]
    \ar[dr,swap,"\Proj0"]
    &
    \Sdfun\Withp{X}{\Top}
    \ar[d,"\Sdfun\Proj0"]
    \\
    &
    \Sdfun X
  \end{tikzcd}
  \Treesep
  \begin{tikzcd}
    \Sdfun X_0\IWith X_1\IWith X_2
    \ar[r,"\With{\Sdfstr^0_{X_0,X_1}}{X_2}"]
    \ar[rd,swap,"\Sdfstr^0_{X_0,\With{X_1}{X_2}}"]
    &[2em]
    \With{\Sdfun\Withp{X_0}{X_1}}{X_2}
    \ar[d,"\Sdfstr^0_{\With{X_0}{X_1},X_2}"]
    \\
    &
    \Sdfun(X_0\IWith X_1\IWith X_2)
  \end{tikzcd}
\end{equation*}
where we keep the associativity isomorphisms of $\IWith$ implicit.

\begin{proof}
  It suffices to prove the corresponding commutations in %
  $\cL$ rather than $\Kl\cL$ since all the involved morphisms are
  images of morphisms in $\cL$ through $\Kllin$, and this is quite
  easy.
\end{proof}

The commutativity of this strength takes a particularly strong form in
this setting.
\begin{lemma}\label{lemma:Sdfstr-flip}
  The following diagram commutes in $\cL$
  \begin{equation*}
    \begin{tikzcd}
      \Sfun{X_0}\IWith\Sfun{X_1}
      \ar[r,"\Sdfstr_{\Sfun X_0,X_1}^1"]
      \ar[d,swap,"\Sdfstr^0_{X_0,\Sfun X_1}"]
      &
      \Sfun\Withp{\Sfun X_0}{X_1}\ar[r,"\Sfun\Sdfstr^0_{X_0,X_1}"]
      &
      \Sfun^2\Withp{X_0}{X_1}\ar[d,"\Sflip"]\\
      \Sfun\Withp{X_0}{\Sfun X_1}
      \ar[rr,"\Sfun\Sdfstr^1_{X_0,X_1}"]
      &&
      \Sfun^2\Withp{X_0}{X_1}
    \end{tikzcd}
  \end{equation*}
\end{lemma}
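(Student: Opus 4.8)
The plan is to exploit that the target object is $\Sfun^2\Withp{X_0}{X_1}$ and that, by the joint monicity of $\Sproj0,\Sproj1$ built into the definition of a summable category, a morphism $g$ with codomain $\Sfun^2\Withp{X_0}{X_1}$ is completely determined by the four morphisms $\Sproj i\Compl\Sproj j\Compl g$ for $i,j\in\Eset{0,1}$ (apply joint monicity once for the outer $\Sfun$ and once for the inner one). So it suffices to show that the two legs of the square agree after post-composition with each $\Sproj i\Compl\Sproj j$.

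First I would strip off the flip and the functorial layers, reducing both legs to the elementary projection formulas \Eqref{eq:sproj-sdfstr0} and \Eqref{eq:sproj-sdfstr1}. For the upper--right leg $L=\Sflip\Compl(\Sfun\Sdfstr^0_{X_0,X_1})\Compl\Sdfstr^1_{\Sfun X_0,X_1}$, the characterizing property of the flip $\Sproj i\Compl\Sproj j\Compl\Sflip=\Sproj j\Compl\Sproj i$ turns $\Sproj i\Compl\Sproj j\Compl L$ into $\Sproj j\Compl\Sproj i\Compl(\Sfun\Sdfstr^0_{X_0,X_1})\Compl\Sdfstr^1_{\Sfun X_0,X_1}$, and naturality of $\Sproj i$ (that is, $\Sproj i\Compl\Sfun h=h\Compl\Sproj i$) slides the inner projection past $\Sfun\Sdfstr^0_{X_0,X_1}$, giving $\Sproj i\Compl\Sproj j\Compl L=(\Sproj j\Compl\Sdfstr^0_{X_0,X_1})\Compl(\Sproj i\Compl\Sdfstr^1_{\Sfun X_0,X_1})$. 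For the lower leg $R=(\Sfun\Sdfstr^1_{X_0,X_1})\Compl\Sdfstr^0_{X_0,\Sfun X_1}$ the same naturality step directly yields $\Sproj i\Compl\Sproj j\Compl R=(\Sproj i\Compl\Sdfstr^1_{X_0,X_1})\Compl(\Sproj j\Compl\Sdfstr^0_{X_0,\Sfun X_1})$.

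Then I would substitute the values from \Eqref{eq:sproj-sdfstr0} and \Eqref{eq:sproj-sdfstr1} (the former also instantiated at $(X_0,\Sfun X_1)$, the latter at $(\Sfun X_0,X_1)$) and simplify using functoriality of $\IWith$, the identities $\Sproj0\Compl\Sin0=\Id$ and $\Sproj1\Compl\Sin0=0$ already used to derive those formulas, and the absorption of composition by zero morphisms. The four cases come out equal on both legs: $(i,j)=(0,0)$ gives $\With{\Sproj0}{\Sproj0}$; $(i,j)=(1,0)$ gives $\With0{\Sproj1}$; $(i,j)=(0,1)$ gives $\With{\Sproj1}0$; and $(i,j)=(1,1)$ gives $\With00=0$. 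Since these agree for all $i,j$, joint monicity forces $L=R$.

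There is no genuinely hard step; the only real risk is an index slip. One must keep straight which of the two $\Sfun$ layers each $\Sproj{}$ removes — the flip is present precisely to swap those roles — and instantiate the projection formulas at the correct objects. Tabulating all four cases as above is the safe way to rule out a spurious mismatch.
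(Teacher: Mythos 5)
Your proof is correct and follows essentially the same route as the paper's: both reduce the claim, via joint monicity of $\Sproj0,\Sproj1$ and the characterization $\Sproj i\Compl\Sproj j\Compl\Sflip=\Sproj j\Compl\Sproj i$, to the identity $\Sproj i\Compl\Sdfstr^1_{X_0,X_1}\Compl\Sproj j\Compl\Sdfstr^0_{X_0,\Sfun X_1}=\Sproj j\Compl\Sdfstr^0_{X_0,X_1}\Compl\Sproj i\Compl\Sdfstr^1_{\Sfun X_0,X_1}$, and then verify the four cases using Equations~\Eqref{eq:sproj-sdfstr0} and~\Eqref{eq:sproj-sdfstr1}; your tabulated values agree with the paper's computations (the paper merely dismisses the fourth case as symmetrical).
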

\begin{proof}
  We prove that for each $j,k\in\Eset{0,1}$ one has
  \[
    \Sproj k \Compl\Sproj j \Compl(\Sfun\Sdfstr^1_{X_0,X_1})
  \Compl\Sdfstr^0_{X_0,\Sfun X_1} = \Sproj j \Compl\Sproj k
  \Compl(\Sfun\Sdfstr^0_{X_0,X_1}) \Compl\Sdfstr^1_{X_0,\Sfun X_1}
  \]
  which will prove our contention since
  $\Sproj k\Compl\Sproj j\Compl\Sflip=\Sproj j\Compl\Sproj k$. This
  amounts to proving that
  \begin{align*}
    \Sproj k
    \Compl\Sdfstr^1_{X_0,X_1}
    \Compl\Sproj j
    \Compl\Sdfstr^0_{X_0,\Sfun X_1}   
    =
    \Sproj j
    \Compl\Sdfstr^0_{X_0,X_1}
    \Compl\Sproj k
    \Compl\Sdfstr^1_{\Sfun X_0,X_1}
  \end{align*}
  for which we apply Equations~\Eqref{eq:sproj-sdfstr0}
  and~\Eqref{eq:sproj-sdfstr1}.  We have
  \begin{align*}
    \Sproj 0
    \Compl\Sdfstr^1_{X_0,X_1}
    \Compl\Sproj 0
    \Compl\Sdfstr^0_{X_0,\Sfun X_1}
    &=\Withp{X_0}{\Sproj0}
      \Compl\Withp{\Sproj0}{\Sfun X_1}
      =\With{\Sproj0}{\Sproj0}
      =\Sproj 0
      \Compl\Sdfstr^0_{X_0,X_1}
      \Compl\Sproj 0
      \Compl\Sdfstr^1_{\Sfun X_0,X_1}
    \\
    \Sproj 1
    \Compl\Sdfstr^1_{X_0,X_1}
    \Compl\Sproj 1
    \Compl\Sdfstr^0_{X_0,\Sfun X_1}
    &=\Withp{0}{\Sproj1}
      \Compl\Withp{\Sproj1}{0}
      =0
      =\Sproj 1
      \Compl\Sdfstr^0_{X_0,X_1}
      \Compl\Sproj 1
      \Compl\Sdfstr^1_{\Sfun X_0,X_1}
    \\
    \Sproj 0
    \Compl\Sdfstr^1_{X_0,X_1}
    \Compl\Sproj 1
    \Compl\Sdfstr^0_{X_0,\Sfun X_1}
    &=\Withp{X_0}{\Sproj0}\Compl\Withp{\Sproj1}{0}
      =\With{\Sproj1}0\\
    &=\Withp{\Sproj1}{0}
      \Compl\Withp{\Sfun X_0}{\Sproj0}
      =\Sproj 1
      \Compl\Sdfstr^0_{X_0,X_1}
      \Compl\Sproj 0
      \Compl\Sdfstr^1_{\Sfun X_0,X_1}    
  \end{align*}
  and the last case is symmetrical.
\end{proof}

As a consequence
\[
  \Sfunadd\Compl(\Sfun\Sstr_1^{X_0,X_1})\Compl\Sstr_0^{X_0,\Sfun X_1}
=\Sfunadd\Compl(\Sfun\Sstr_0^{X_0,X_1})\Compl\Sstr_1^{\Sfun
  X_0,X_1}\in\cL(\Sfun X_0\IWith\Sfun X_1,\Sfun\Withp{X_0}{X_1})\,.
\]

Thanks to our assumption that $\Sfun$ preserves $\IWith$ strictly, in
the sense of~\Eqref{eq:Sfun-preserves-With}, this morphism is actually
the identity.
\begin{theorem}\label{th:sdstr-mont-identity}
  The morphism
  $\Sfunadd\Compl(\Sfun\Sdfstr^1_{X_0,X_1})\Compl\Sdfstr^0_{X_0,\Sfun X_1}
  =\Sfunadd\Compl(\Sfun\Sdfstr^0_{X_0,X_1})\Compl\Sdfstr^1_{\Sfun
    X_0,X_1}$ is the identity morphism.
\end{theorem}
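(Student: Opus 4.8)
The plan is to prove the two composites agree and equal the identity in two stages. For the equality of the two composites I would invoke Lemma~\ref{lemma:Sdfstr-flip}, which gives $(\Sfun\Sdfstr^1_{X_0,X_1})\Compl\Sdfstr^0_{X_0,\Sfun X_1}=\Sflip\Compl(\Sfun\Sdfstr^0_{X_0,X_1})\Compl\Sdfstr^1_{\Sfun X_0,X_1}$; precomposing with $\Sfunadd$ and using the identity $\Sfunadd\Compl\Sflip=\Sfunadd$ from Section~\ref{sec:monad-mult-strength} immediately yields that the two expressions in the statement coincide. It therefore remains to show that the single morphism $g=\Sfunadd\Compl(\Sfun\Sdfstr^0_{X_0,X_1})\Compl\Sdfstr^1_{\Sfun X_0,X_1}$ is the identity. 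Since $\Sfun$ preserves $\IWith$ strictly (see~\Eqref{eq:Sfun-preserves-With}), both the source and the target of $g$ are $\Sfun X_0\IWith\Sfun X_1=\Sfun\Withp{X_0}{X_1}$, and because $\Sproj0,\Sproj1$ are jointly monic it suffices to check that $\Sproj0\Compl g=\Sproj0$ and $\Sproj1\Compl g=\Sproj1$; these are exactly $\Sproj i\Compl\mathrm{id}$, since strictness together with the naturality of $\Sproj i$ makes $\Sproj i$ on $\Sfun\Withp{X_0}{X_1}$ equal to $\With{\Sproj i}{\Sproj i}$.

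For the component $\Sproj0\Compl g$ I would use $\Sproj0\Compl\Sfunadd=\Sproj0\Compl\Sproj0$ (Section~\ref{sec:monad-mult-strength}), then slide the remaining $\Sproj0$ past $\Sfun\Sdfstr^0_{X_0,X_1}$ by naturality of $\Sproj0$, and finally apply the characterisations $\Sproj0\Compl\Sdfstr^0_{X_0,X_1}=\With{\Sproj0}{X_1}$ of~\Eqref{eq:sproj-sdfstr0} and $\Sproj0\Compl\Sdfstr^1_{\Sfun X_0,X_1}=\With{\Sfun X_0}{\Sproj0}$ of~\Eqref{eq:sproj-sdfstr1}. Composing the two resulting product morphisms componentwise gives $\With{\Sproj0}{\Sproj0}=\Sproj0$, as required.

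The component $\Sproj1\Compl g$ is the genuinely interesting case, and I expect it to be the main obstacle, as it is where the Leibniz-style splitting of the sum appears. Here I would use $\Sproj1\Compl\Sfunadd=\Sproj1\Compl\Sproj0+\Sproj0\Compl\Sproj1$ together with the fact that composition commutes with the partially defined addition of $\cL$, writing $\Sproj1\Compl g$ as a sum of two terms. Sliding $\Sproj0,\Sproj1$ past $\Sfun\Sdfstr^0_{X_0,X_1}$ by naturality and applying~\Eqref{eq:sproj-sdfstr0} and~\Eqref{eq:sproj-sdfstr1} to each factor, the first term reduces to $\With{\Sproj1}{0}$ and the second to $\With{0}{\Sproj1}$. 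The delicate point is then to recognise, using the distributivity of $\IWith$ over the partial sum guaranteed by the partial additivity of $\cL$, that
\begin{align*}
  \With{\Sproj1}{0}+\With{0}{\Sproj1}=\With{\Sproj1}{\Sproj1}=\Sproj1\,,
\end{align*}
the two summands being summable because each has a zero component. This yields $\Sproj1\Compl g=\Sproj1$ and, combined with the previous paragraph and joint monicity of $\Sproj0,\Sproj1$, concludes that $g$ is the identity morphism.
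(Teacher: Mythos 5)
Your proof is correct and takes essentially the same route as the paper: the equality of the two composites via Lemma~\ref{lemma:Sdfstr-flip} together with $\Sfunadd\Compl\Sflip=\Sfunadd$, then the computation of $\Sproj0$ and $\Sproj1$ composed with the morphism using $\Sproj0\Compl\Sfunadd=\Sproj0\Compl\Sproj0$, $\Sproj1\Compl\Sfunadd=\Sproj1\Compl\Sproj0+\Sproj0\Compl\Sproj1$, naturality of the $\Sproj i$, and Equations~\Eqref{eq:sproj-sdfstr0}--\Eqref{eq:sproj-sdfstr1}, yielding $\With{\Sproj0}{\Sproj0}$ and $\With{\Sproj1}{0}+\With{0}{\Sproj1}=\With{\Sproj1}{\Sproj1}$ exactly as in the paper. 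The only differences are presentational: the paper imports these mixed-projection values from the proof of Lemma~\ref{lemma:Sdfstr-flip} rather than re-deriving them, and you spell out the concluding joint-monicity step and the identification of $\With{\Sproj i}{\Sproj i}$ with $\Sproj i$ on $\Sfun\Withp{X_0}{X_1}$, which the paper leaves implicit.
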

\begin{proof}
  The first equation results from Lemma~\ref{lemma:Sdfstr-flip}. From
  the proof of that lemma we get
  \begin{align*}
    \Sproj0\Compl\Sfunadd
    \Compl(\Sfun\Sdfstr^1_{X_0,X_1})
    \Compl\Sdfstr^0_{X_0,\Sfun X_1}
    &=\Sproj0
      \Compl\Sproj0
      \Compl(\Sfun\Sdfstr^1_{X_0,X_1})
      \Compl\Sdfstr^0_{X_0,\Sfun X_1}\\
    &=\With{\Sproj0}{\Sproj0}
  \end{align*}
  and
  \begin{align*}
    \Sproj1\Compl\Sfunadd
    \Compl(\Sfun\Sdfstr^1_{X_0,X_1})
    \Compl\Sdfstr^0_{X_0,\Sfun X_1}
    &=\Sproj1
      \Compl\Sproj0
      \Compl(\Sfun\Sdfstr^1_{X_0,X_1})
      \Compl\Sdfstr^0_{X_0,\Sfun X_1}+
      \Sproj0
      \Compl\Sproj1
      \Compl(\Sfun\Sdfstr^1_{X_0,X_1})
      \Compl\Sdfstr^0_{X_0,\Sfun X_1}\\
    &=\Withp0{\Sproj1}+\Withp{\Sproj1}0=\With{\Sproj1}{\Sproj1}
  \end{align*}
  by linearity of $\IWith$ on morphisms.
\end{proof}

More generally given objects $\List X0n$ we have an additive strength
morphism %
\begin{align*}
  \Sdfstr^i\in\Kl\cL(X_0\IWith\cdots\IWith\Sdfun{X_i}\IWith\cdots\IWith X_n,
  \Sdfun(X_0\IWith\cdots\IWith X_n))\,.
\end{align*}
which is actually linear and comes from
$\Sdfstr^i\in\cL(X_0\IWith\cdots\IWith\Sfun{X_i}\IWith\cdots\IWith
X_n, \Sfun(X_0\IWith\cdots\IWith X_n))$.  Up to the identification of
$\Sfun(X_0\IWith\cdots\IWith X_n)$ with
$\Sfun{X_0}\IWith\cdots\IWith\Sfun{X_n}$, this morphism of $\cL$ can
simply be written as
\begin{align}
  \label{eq:sdfstrgen-def}
  \Sdfstr^i=\Sin0^{X_0}\IWith\cdots\IWith\Sin0^{X_{i-1}}\IWith X_i
  \IWith\Sin0^{X_{i+1}}\IWith\cdots\IWith\Sin0^{X_n}\,.
\end{align}
When we will need to be explicit as to the list of
objects $\List X0n$, we will write $\Sdfstr^i_{\List X0n}$ instead of
$\Sdfstr^i$.

\begin{lemma}\label{lemma:sdfstr-commute-general}
  Let $i,l\in\Eset{0,\dots,n}$. If $i\not=l$ we have
  \begin{align*}
    (\Sfun\Sdfstr^i_{\List X0n})
    \Compl\Sdfstr^l_{X_0,\dots,\Sfun X_i,\dots,X_n}
    =\Sflip\Compl
    (\Sfun\Sdfstr^l_{\List X0n})
    \Compl\Sdfstr^i_{X_0,\dots,\Sfun X_l,\dots,X_n}\,.
  \end{align*}
  And for any $i,l\in\Eset{0,\dots,n}$, we have
  \begin{align*}
    \Sfunadd\Compl(\Sfun\Sdfstr^i_{\List X0n})
    \Compl\Sdfstr^l_{X_0,\dots,\Sfun X_i,\dots,X_n}
    &=\Sfunadd\Compl
  (\Sfun\Sdfstr^l_{\List X0n}) \Compl\Sdfstr^i_{X_0,\dots,\Sfun
    X_l,\dots,X_n}\\
  \Sfunadd\Compl(\Sfun\Sdfstr^i_{\List X0n})
    \Compl\Sdfstr^i_{X_0,\dots,\Sfun X_i,\dots,X_n}
    &=\Sdfstr^i_{\List
      X0n}\Compl(X_0\IWith\cdots\IWith\Sfunadd\IWith\cdots\IWith X_n)\,.
  \end{align*}
\end{lemma}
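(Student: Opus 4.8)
The plan is to carry out everything in $\cL$ rather than in $\Kl\cL$, exactly as in the preceding theorem on strengths: all the morphisms involved are linear (the strengths $\Sdfstr^i$ are, by~\Eqref{eq:sdfstrgen-def}, $\IWith$-products of identities and injections $\Sin0$, while $\Sflip$ and $\Sfunadd$ are linear by construction), so each claimed identity follows from the corresponding identity in $\cL$. Throughout I use that $\Sfun$ preserves $\IWith$ strictly, in the sense of~\Eqref{eq:Sfun-preserves-With}, so that under the identification of $\Sfun(X_0\IWith\cdots\IWith X_n)$ with $\Sfun X_0\IWith\cdots\IWith\Sfun X_n$ the natural transformations $\Sproj0,\Sproj1$ act componentwise, i.e.\ $\Sproj r=\Sproj r\IWith\cdots\IWith\Sproj r$ (this comes from naturality of $\Sproj r$ and the strict equalities $\Sfun\Proj j=\Proj j$).

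First I would record the projection formulas generalizing~\Eqref{eq:sproj-sdfstr0} and~\Eqref{eq:sproj-sdfstr1}: from~\Eqref{eq:sdfstrgen-def}, using that $\Sproj0\Compl\Sin0$ is the identity and $\Sproj1\Compl\Sin0=0$, together with the componentwise action just mentioned, one gets
\begin{align*}
  \Sproj0\Compl\Sdfstr^i_{\List X0n}
  &=X_0\IWith\cdots\IWith\Sproj0\IWith\cdots\IWith X_n\\
  \Sproj1\Compl\Sdfstr^i_{\List X0n}
  &=0\IWith\cdots\IWith\Sproj1\IWith\cdots\IWith 0
\end{align*}
with $\Sproj0$ (resp.\ $\Sproj1$) in slot $i$ and identities (resp.\ zero morphisms) in the other slots. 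These are the only computational facts about $\Sdfstr^i$ that the argument needs.

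For the flip equation (case $i\neq l$) both composites land in $\Sfun^2(X_0\IWith\cdots\IWith X_n)$, so by joint monicity of $\Sproj0,\Sproj1$ applied twice it suffices to prove $\Sproj p\Compl\Sproj q\Compl(\text{LHS})=\Sproj p\Compl\Sproj q\Compl(\text{RHS})$ for all $p,q\in\Eset{0,1}$. On the right I use $\Sproj p\Compl\Sproj q\Compl\Sflip=\Sproj q\Compl\Sproj p$; on both sides I then push the relevant projection through $\Sfun\Sdfstr^i$ (resp.\ $\Sfun\Sdfstr^l$) by naturality of $\Sproj q$ (resp.\ $\Sproj p$), and finally apply the projection formulas above. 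Since $i\neq l$, the two ``active'' slots are distinct and all other slots carry identities, so each of the four composites collapses to a single $\IWith$-product whose only nontrivial entries sit in slots $i$ and $l$; matching them for each $(p,q)$ is exactly the four-case check of Lemma~\ref{lemma:Sdfstr-flip} carried out ``in parallel'' at the two slots. The bookkeeping of which object occupies which slot (the inner $\Sdfstr$ acts on a list in which slot $i$ already reads $\Sfun X_i$) is the only delicate, though routine, point, and is where I expect the main effort.

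The two additive equations then follow quickly. For the first, the case $i=l$ is trivial, and for $i\neq l$ I compose the flip equation on the left with $\Sfunadd$ and use $\Sfunadd\Compl\Sflip=\Sfunadd$ (Section~\ref{sec:monad-mult-strength}), which rewrites $\Sfunadd\Compl(\Sfun\Sdfstr^i)\Compl\Sdfstr^l$ as $\Sfunadd\Compl(\Sfun\Sdfstr^l)\Compl\Sdfstr^i$, i.e.\ the desired right-hand side. For the second equation I again invoke joint monicity and compute $\Sproj0\Compl$ and $\Sproj1\Compl$ of both sides, using $\Sproj0\Compl\Sfunadd=\Sproj0\Compl\Sproj0$ and $\Sproj1\Compl\Sfunadd=\Sproj1\Compl\Sproj0+\Sproj0\Compl\Sproj1$ together with the projection formulas; this is the slot-$i$ localization of the multiplication--strength compatibility already established in the strength theorem (its $n=1$ instance having right-hand side $\Sdfstr^0\Compl(\Sfunadd\IWith X_1)$), and reduces to the same elementary computation as in Theorem~\ref{th:sdstr-mont-identity}.
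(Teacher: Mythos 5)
Your proposal is correct and follows essentially the same route as the paper: the flip equation by joint monicity of $\Sproj0,\Sproj1$ and the generalized projection formulas (exactly the argument of Lemma~\ref{lemma:Sdfstr-flip}), the first additive equation from the flip equation together with $\Sfunadd\Compl\Sflip=\Sfunadd$ (trivial when $i=l$), and the last equation by computing $\Sproj0$ and $\Sproj1$ of both sides using $\Sproj0\Compl\Sfunadd=\Sproj0\Compl\Sproj0$ and $\Sproj1\Compl\Sfunadd=\Sproj1\Compl\Sproj0+\Sproj0\Compl\Sproj1$. The only detail worth making explicit in your last computation is the one the paper flags: $\IWith$ is not multilinear on morphisms, so collapsing the sum of the two composites into a single $\IWith$-product is legitimate only because all factors other than the $i$th are $0$ in both summands.
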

\begin{proof}
  The proof of the first equation is exactly as the one of
  Lemma~\ref{lemma:Sdfstr-flip}. In the case $l\not=i$, the second
  equation follows from the first one and from
  $\Sfunadd\Compl\Sflip=\Sfunadd$ and in the case $l=i$, it is
  trivial. We prove the last equation. We have
  \begin{align*}
    \Sproj 0\Compl\Sfunadd\Compl(\Sfun\Sdfstr^i_{\List X0n})
    \Compl\Sdfstr^i_{X_0,\dots,\Sfun X_i,\dots,X_n}
    &=\Sproj0
      \Compl\Sproj0
      \Compl(\Sfun\Sdfstr^i_{\List X0n})
      \Compl\Sdfstr^i_{X_0,\dots,\Sfun X_i,\dots,X_n}\\
    &=\Sproj0
      \Compl\Sdfstr^i_{\List X0n}
      \Compl\Sproj0
      \Compl\Sdfstr^i_{X_0,\dots,\Sfun X_i,\dots,X_n}\\
    &=X_0\IWith\cdots\IWith(\Sproj0\Compl\Sproj 0)\IWith\cdots\IWith X_n
  \end{align*}
  and
  \begin{align*}
    \Sproj0
    \Compl\Sdfstr^i_{\List X0n}
    \Compl(X_0\IWith\cdots\IWith\Sfunadd\IWith\cdots\IWith X_n)
    &=X_0\IWith\cdots\IWith(\Sproj0\Compl\Sfunadd)\IWith\cdots\IWith X_n\\
    &=X_0\IWith\cdots\IWith(\Sproj0\Compl\Sproj0)\IWith\cdots\IWith X_n\,.
  \end{align*}
  Next we have
  \begin{align*}
    &\Sproj 1\Compl\Sfunadd\Compl(\Sfun\Sdfstr^i_{\List X0n})
    \Compl\Sdfstr^i_{X_0,\dots,\Sfun X_i,\dots,X_n}\\
    &\Textsep=(\Sproj0\Compl\Sproj1+\Sproj1\Compl\Sproj0)
      \Compl(\Sfun\Sdfstr^i_{\List X0n})
      \Compl\Sdfstr^i_{X_0,\dots,\Sfun X_i,\dots,X_n}\\
    &\Textsep=\Sproj0
      \Compl\Sdfstr^i_{\List X0n}
      \Compl\Sproj1
      \Compl\Sdfstr^i_{X_0,\dots,\Sfun X_i,\dots,X_n}
      +\Sproj1
      \Compl\Sdfstr^i_{\List X0n}
      \Compl\Sproj0
      \Compl\Sdfstr^i_{X_0,\dots,\Sfun X_i,\dots,X_n}\\
    &\Textsep=(X_0\IWith\cdots\IWith\Sproj 0\IWith\cdots\IWith X_n)
      \Compl(0\IWith\cdots\IWith\Sproj1\IWith\cdots\IWith 0)\\
    &\Textsep\quad+(0\IWith\cdots\IWith\Sproj1\IWith\cdots\IWith 0)
      \Compl(X_0\IWith\cdots\IWith\Sproj 0\IWith\cdots\IWith X_n)\\
    &\Textsep=0\IWith\cdots\IWith(\Sproj0\Compl\Sproj1+\Sproj1\Compl\Sproj0)
      \IWith\cdots\IWith 0
  \end{align*}
  Notice that $\IWith$ is not a multilinear operation on morphisms, so
  in the last equality we are crucially using the fact that all
  factors but the $i$th are equal to $0$ in both summands.  On the
  other hand we have
  \begin{align*}
    \Sproj1
    \Compl\Sdfstr^i_{\List X0n}
    \Compl(X_0\IWith\cdots\IWith\Sfunadd\IWith\cdots\IWith X_n)
    &=0\IWith\cdots\IWith(\Sproj1\Compl\Sfunadd)\IWith\cdots\IWith 0\\
    &=0\IWith\cdots\IWith(\Sproj0\Compl\Sproj1+\Sproj1\Compl\Sproj0)
      \IWith\cdots\IWith 0
  \end{align*}
  proving our contention by the fact that $\Sproj0,\Sproj1$ are jointly monic.
\end{proof}

Given $f\in\Kl\cL(X_0\IWith\cdots\IWith X_n,Y)$, we define the %
\emph{$i$-th partial derivative of $f$} %
as
\(
\Sdfunpart if
=\Sdfun f\Comp\Sdfstr^i
\in\Kl\cL(X_0\IWith\cdots\IWith\Sdfun{X_i}\IWith\cdots\IWith X_n,\Sdfun Y)
\).

\begin{theorem}\label{th:partial-der-commute}
  Let $i,l\in\Eset{0,\dots,n}$. If $i\not=l$ then
  \begin{align*}
    \Sdfunpart i{\Sdfunpart lf}=\Sflip\Comp\Sdfunpart l{\Sdfunpart if}
  \end{align*}
  so that for any $i,l\in\Eset{0,\dots,n}$ we have
  \( \Sdfmult\Comp\Sdfunpart i{\Sdfunpart lf} =\Sdfmult\Comp\Sdfunpart
  l{\Sdfunpart if} \). Moreover
  $\Sdfmult\Comp\Sdfunpart i{\Sdfunpart if}=\Sdfunpart
  if\Comp(X_0\IWith\cdots\IWith\Sdfmult\IWith\cdots\IWith X_n)$.
\end{theorem}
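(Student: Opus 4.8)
The plan is to reduce the three identities to the category $\cL$ through the functor $\Kllin$, using the three equations of Lemma~\ref{lemma:sdfstr-commute-general}, and to isolate the only genuinely non-linear ingredients, namely $\Sdfun f$ and $\Sdfun^2 f$. Writing $X=X_0\IWith\cdots\IWith X_n$, I would first record the bookkeeping that makes the reduction work. The additive strengths $\Sdfstr^i$, the flip $\Sflip$ and the multiplication $\Sdfmult=\Kllin(\Sfunadd)$ are all linear, so they, together with the $\IWith$-tuplings of linear morphisms, lie in the image of $\Kllin$; moreover \Daxchain{} gives $\Sdfun\Kllin(u)=\Kllin(\Sfun u)$, so $\Sdfun$ agrees with $\Sfun$ on linear morphisms. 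Since $\Kllin$ is a functor, each equation of $\cL$ in Lemma~\ref{lemma:sdfstr-commute-general} between such composites is carried to the corresponding equation of $\Kl\cL$. Unfolding the definition of the partial derivative gives $\Sdfunpart i{\Sdfunpart lf}=\Sdfun^2 f\Comp P_{i,l}$, where $P_{i,l}=\Sdfun\Sdfstr^l_{\List X0n}\Comp\Sdfstr^i_{X_0,\dots,\Sdfun X_l,\dots,X_n}$ is linear and is the $\Kllin$-image of one of the composites appearing in Lemma~\ref{lemma:sdfstr-commute-general}.

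The key step, which I expect to be the main obstacle, is to prove that $\Sflip$ is natural for the functor $\Sdfun^2$ on $\Kl\cL$, that is $\Sdfun^2 f\Comp\Sflip=\Sflip\Comp\Sdfun^2 f$ for every $f\in\Kl\cL(X,Y)$. Here I would unfold both sides as morphisms of $\cL$: precomposing by the linear $\Sflip$ turns the left-hand side into $\Sdfun^2 f\Compl\Excl\Sflip$ (by the identity $g\Comp\Kllin(u)=g\Compl\Excl u$), and with $\Sdfun^2 f=(\Sfun^2 f)\Compl(\Sfun\Sdiff_X)\Compl\Sdiff_{\Sfun X}$ this reads $(\Sfun^2 f)\Compl(\Sfun\Sdiff_X)\Compl\Sdiff_{\Sfun X}\Compl\Excl\Sflip$; since postcomposing with a linear morphism is $\cL$-composition, the right-hand side becomes $\Sflip\Compl(\Sfun^2 f)\Compl(\Sfun\Sdiff_X)\Compl\Sdiff_{\Sfun X}$, which by naturality of $\Sflip$ in $\cL$ equals $(\Sfun^2 f)\Compl\Sflip\Compl(\Sfun\Sdiff_X)\Compl\Sdiff_{\Sfun X}$. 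The two expressions then coincide precisely by \Daxschwarz. The delicate point is matching the $\Excl\Sflip$ produced by the Kleisli composition against the Schwarz diagram and keeping the indices of the strengths straight.

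With this in hand the three statements become formal. For $i\ne l$, the first equation of Lemma~\ref{lemma:sdfstr-commute-general} applied to the pair $(l,i)$ gives $P_{i,l}=\Sflip\Comp P_{l,i}$, whence $\Sdfunpart i{\Sdfunpart lf}=\Sdfun^2 f\Comp\Sflip\Comp P_{l,i}=\Sflip\Comp\Sdfun^2 f\Comp P_{l,i}=\Sflip\Comp\Sdfunpart l{\Sdfunpart if}$, using the naturality just established. The second identity is trivial for $i=l$, and for $i\ne l$ it follows from the first together with $\Sdfmult\Comp\Sflip=\Sdfmult$, the $\Kllin$-image of $\Sfunadd\Compl\Sflip=\Sfunadd$ recalled in Section~\ref{sec:monad-mult-strength}.

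For the last identity I would use naturality of the monad multiplication $\Sdfmult\colon\Sdfun^2\Tonat\Sdfun$ to rewrite $\Sdfmult\Comp\Sdfunpart i{\Sdfunpart if}=\Sdfun f\Comp(\Sdfmult\Comp P_{i,i})$, and on the other hand $\Sdfunpart if\Comp(X_0\IWith\cdots\IWith\Sdfmult\IWith\cdots\IWith X_n)=\Sdfun f\Comp(\Sdfstr^i_{\List X0n}\Comp(X_0\IWith\cdots\IWith\Sdfmult\IWith\cdots\IWith X_n))$. The two parenthesized linear factors are then equal by the $\Kllin$-image of the last equation of Lemma~\ref{lemma:sdfstr-commute-general}, which completes the proof. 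Apart from the flip-naturality step, everything is a mechanical transport of the $\cL$-level identities of Lemma~\ref{lemma:sdfstr-commute-general} along $\Kllin$, combined with the naturality of $\Sflip$ and $\Sdfmult$ on the Kleisli category.
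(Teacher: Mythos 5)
Your proof is correct and follows essentially the same route as the paper, whose own proof is a one-liner: ``an immediate consequence of Lemma~\ref{lemma:sdfstr-commute-general} and of the naturality of $\Sflip$ and of $\Sdfmult$ in the category $\Kl\cL$''. The only addition on your side is that you explicitly establish the naturality of $\Sflip$ with respect to $\Sdfun^2$ in $\Kl\cL$ from \Daxschwarz{} (together with the transport of the $\cL$-level identities along $\Kllin$), a fact the paper invokes without proof.
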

This is an immediate consequence of
Lemma~\ref{lemma:sdfstr-commute-general} and of the naturality of
$\Sflip$ and of $\Sdfmult$ in the category $\Kl\cL$.

Notice that the morphism %
\(\Sdfmult\Comp\Sdfunpart i{\Sdfunpart lf}\) %
in the statement of this result belongs to
$\Kl\cL(X_0\IWith\cdots\IWith\Sdfun X_i\IWith
X_{i+1}\IWith\cdots\IWith\Sdfun X_l\IWith\cdots\IWith X_n,\Sdfun Y)$
if $i<l$ and to
$\Kl\cL(X_0\IWith\cdots\IWith\Sdfun^2 X_i\IWith\cdots\IWith X_n,\Sdfun
Y)$ if $i=l$.

\begin{theorem}\label{th:sdfun-sdfunpart-sfunadd}
  Let $f\in\Kl\cL(\With{X_0}{X_1},Y)$ so that
  $\Sdfun f\in\Kl\cL(\With{\Sdfun X_0}{\Sdfun X_1},\Sdfun Y)$. Then
  \begin{align*}
    \Sdfun f=\Sdfmult\Comp\Sdfunpart1\Sdfunpart0f
    =\Sdfmult\Comp\Sdfunpart0\Sdfunpart1f\,.
  \end{align*}
\end{theorem}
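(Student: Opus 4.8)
The plan is to peel $f$ off the right-hand side by functoriality and naturality, reducing the claim to Theorem~\ref{th:sdstr-mont-identity}, which asserts that the relevant composite of additive strengths is the identity. First I would expand $\Sdfunpart 1\Sdfunpart 0f$. Since $\Sdfunpart 0f=\Sdfun f\Comp\Sdfstr^0_{X_0,X_1}$ and $\Sdfun$ is a functor on $\Kl\cL$, functoriality gives
\[
\Sdfunpart 1\Sdfunpart 0f
=\Sdfun(\Sdfun f\Comp\Sdfstr^0_{X_0,X_1})\Comp\Sdfstr^1_{\Sfun X_0,X_1}
=\Sdfun^2 f\Comp(\Sdfun\Sdfstr^0_{X_0,X_1})\Comp\Sdfstr^1_{\Sfun X_0,X_1}\,,
\]
where $\Sdfstr^1$ is instantiated at $\Sfun X_0,X_1$ because the source object of $\Sdfunpart 0f$ in its first variable is now $\Sdfun X_0=\Sfun X_0$.

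Next I would invoke the naturality of the monad multiplication $\Sdfmult:\Sdfun^2\Tonat\Sdfun$ in $\Kl\cL$ (a consequence of \Daxlin, as recalled above) at the morphism $f\in\Kl\cL(\With{X_0}{X_1},Y)$, namely $\Sdfmult_Y\Comp\Sdfun^2 f=\Sdfun f\Comp\Sdfmult_{\With{X_0}{X_1}}$. Composing the displayed expression with $\Sdfmult$ then yields
\[
\Sdfmult\Comp\Sdfunpart 1\Sdfunpart 0f
=\Sdfun f\Comp\bigl(\Sdfmult_{\With{X_0}{X_1}}\Comp(\Sdfun\Sdfstr^0_{X_0,X_1})\Comp\Sdfstr^1_{\Sfun X_0,X_1}\bigr)\,,
\]
so everything comes down to showing that the bracketed endomorphism of $\With{\Sdfun X_0}{\Sdfun X_1}$ is the identity.

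Here is the heart of the matter. Each of the three factors in the bracket is linear, i.e.\ lies in the image of $\Kllin:\cL\to\Kl\cL$: indeed $\Sdfmult=\Kllin\Sfunadd$ and $\Sdfstr^1_{\Sfun X_0,X_1}$ is linear by its very definition, while $\Sdfun\Sdfstr^0_{X_0,X_1}=\Kllin(\Sfun\Sdfstr^0_{X_0,X_1})$ because $\Sdfun$ sends a linear morphism $\Kllin u$ to $\Kllin(\Sfun u)$ — this last identity is precisely $(\Sfun\Der X)\Compl\Sdiff_X=\Der{\Sfun X}$, the first diagram of \Daxchain. Since $\Kllin$ is a functor, the bracket is $\Kllin$ applied to the $\cL$-composite $\Sfunadd\Compl(\Sfun\Sdfstr^0_{X_0,X_1})\Compl\Sdfstr^1_{\Sfun X_0,X_1}$, the strengths now read as the $\cL$-morphisms of~\Eqref{eq:sdfstrgen-def}; and this composite is the identity by Theorem~\ref{th:sdstr-mont-identity}. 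Hence the bracket is $\Id$ and $\Sdfmult\Comp\Sdfunpart 1\Sdfunpart 0f=\Sdfun f$. The second equality $\Sdfun f=\Sdfmult\Comp\Sdfunpart 0\Sdfunpart 1f$ follows by the identical computation with the roles of $0$ and $1$ exchanged, using the other half of Theorem~\ref{th:sdstr-mont-identity}, namely that $\Sfunadd\Compl(\Sfun\Sdfstr^1_{X_0,X_1})\Compl\Sdfstr^0_{X_0,\Sfun X_1}$ is the identity. I expect the only delicate point to be this bookkeeping: tracking which instances of $\Sdfstr^0,\Sdfstr^1$ survive functoriality, and verifying that the passage from $\Kl\cL$-composition to $\cL$-composition (via $\Kllin$ being a functor and acting as $\Sfun$ on linear maps) lands on exactly the normalized morphism of Theorem~\ref{th:sdstr-mont-identity} rather than a variant.
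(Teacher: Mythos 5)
Your proof is correct and, for the first equality, is precisely the paper's argument: expand $\Sdfunpart1\Sdfunpart0f$ by functoriality of $\Sdfun$, slide $\Sdfmult$ past $\Sdfun^2f$ by naturality, and conclude by Theorem~\ref{th:sdstr-mont-identity}. The only divergence is how the second equality is obtained: the paper gets it for free from Theorem~\ref{th:partial-der-commute}, i.e.\ from $\Sdfunpart0\Sdfunpart1f=\Sflip\Comp\Sdfunpart1\Sdfunpart0f$ together with $\Sfunadd\Compl\Sflip=\Sfunadd$, whereas you rerun the computation symmetrically using the other composite of Theorem~\ref{th:sdstr-mont-identity}, namely that $\Sfunadd\Compl(\Sfun\Sdfstr^1_{X_0,X_1})\Compl\Sdfstr^0_{X_0,\Sfun X_1}$ is also the identity; both are one-line arguments, yours being slightly more uniform, the paper's reusing an already established commutation. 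One point where your write-up is actually more careful than the paper's: you justify that the bracketed endomorphism of $\With{\Sdfun X_0}{\Sdfun X_1}$ is $\Kllin$ applied to the $\cL$-morphism of Theorem~\ref{th:sdstr-mont-identity}, by observing that all three factors are linear and that $\Sdfun$ sends $\Kllin u$ to $\Kllin(\Sfun u)$ via the first diagram of \Daxchain; the paper invokes that theorem silently, leaving this $\cL$-versus-$\Kl\cL$ bookkeeping implicit.
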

\begin{proof}
  The second equation holds by Theorem~\ref{th:partial-der-commute}. Next
  we have
  \begin{align*}
    \Sdfmult\Comp\Sdfunpart1\Sdfunpart0f
    &=\Sdfmult
      \Comp\Sdfun(\Sdfunpart0f\Comp\Sdfstr^0_{X_0,X_1})\\
    &=\Sdfmult
      \Comp\Sdfun^2f
      \Comp\Sdfun{\Sdfstr^0_{X_0,X_1}}
      \Comp\Sdfstr^1_{\Sdfun X_0,X_1}\\
    &=\Sdfun f
      \Comp\Sdfmult
      \Comp\Sdfun{\Sdfstr^0_{X_0,X_1}}
      \Comp\Sdfstr^1_{\Sdfun X_0,X_1}\\
    &=\Sdfun f
  \end{align*}
  by Theorem~\ref{th:sdstr-mont-identity}.
\end{proof}
\begin{remark}
  The intuitive meaning of this result is that the derivative of a
  function acting on pairs is obtained as the sum of its partial
  derivatives. This sum is computed by the \(\Sdfmult\) natural
  transformation.
\end{remark}

Given $e\in\Nat$ we can more generally define a linear
\begin{align*}
  \Sdfstr^i_{\List X0n}(e)
  \in\Kl\cL(X_0\IWith\cdots\IWith\Sdfun^eX_i\IWith\cdots\IWith X_n,
  \Sdfun^e X_0\IWith\cdots\IWith\Sdfun^eX_n)
\end{align*}
by induction on $e$ (we give only the definition for $n=1$, the
generalization is easy and not really required for our purpose):
we set $\Sdfstr^0_{X_0,X_1}(0)=\Id$ and %
$\Sdfstr^0_{X_0,X_1}(e+1)
=\Sdfun\Sdfstr^0_{X_0,X_1}(e)\Comp\Sdfstr^0_{\Sdfun^e X_0,X_1}$ typed
as follows:
\begin{center}
  \begin{tikzcd}
    \Sdfun^{e+1}X_0\IWith X_1\ar[r,"\Sdfstr^0_{\Sdfun^e X_0,X_1}"]
    &[2em]
    \Sdfun^{e+1}X_0\IWith \Sdfun X_1\ar[r,"\Sdfun\Sdfstr^0_{X_0,X_1}(e)"]
    &[2em]
    \Sdfun^{e+1}X_0\IWith \Sdfun^{e+1} X_1    
  \end{tikzcd}
\end{center}
and similarly for \(\Sdfstr^1(e)\). %
We can easily give a direct description of this morphism.
\begin{lemma}\label{lemma:sdfstr-it-charact}
  $\Sdfstr^i_{\List X0n}(e)
  =\Sin0(e)\IWith\cdots\IWith\Sdfun^eX_i\IWith\cdots\IWith\Sin0(e)$
  where $\Sin0(e)\in\cL(X,\Sfun^{e}X)$ is defined inductively by %
  $\Sin0(e)=\Id$ and $\Sin0(e+1)=(\Sfun{\Sin0(e)})\Compl\Sin0$.
\end{lemma}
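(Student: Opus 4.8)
The plan is to argue by induction on $e$, after first reducing the whole computation from the Kleisli category $\Kl\cL$ to the underlying category $\cL$. The key preliminary observation is that every morphism appearing here is \emph{linear}, i.e.\ of the form $\Kllin f$ for some $f\in\cL$: this holds for the base strength by~\eqref{eq:sdfstrgen-def}, and is propagated to each $\Sdfstr^i_{\List X0n}(e)$ by its inductive definition. For a linear morphism $\Kllin f$ one computes, using the first diagram of~\Daxchain{},
\[
  \Sdfun\Kllin f=(\Sfun f)\Compl(\Sfun\Der X)\Compl\Sdiff_X=(\Sfun f)\Compl\Der{\Sfun X}=\Kllin(\Sfun f)\,,
\]
so that on linear morphisms $\Sdfun$ coincides with $\Sfun$ and the Kleisli composition $\Comp$ coincides with composition in $\cL$. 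It therefore suffices to prove the identity in $\cL$, where, by~\eqref{eq:sdfstrgen-def} together with the identification of $\Sfun\Withp{X_0}{X_1}$ with $\With{\Sfun X_0}{\Sfun X_1}$, the base strength is simply $\Sdfstr^0_{X_0,X_1}=\With{\Sfun X_0}{\Sin0}$.

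I would then treat the case $n=1$, $i=0$ in full, the general case being notationally identical. The base case $e=0$ is immediate, since $\Sdfstr^0_{X_0,X_1}(0)=\Id$ while $\Sin0(0)=\Id$, making the right-hand side $\With{X_0}{X_1}=\Id$. For the inductive step, assuming $\Sdfstr^0_{X_0,X_1}(e)=\Withp{\Sfun^e X_0}{\Sin0(e)}$ in $\cL$ (first factor the identity on $\Sfun^e X_0$), one applies $\Sfun$ and uses that it preserves $\IWith$ strictly~\eqref{eq:Sfun-preserves-With} to get $\Sfun\Sdfstr^0_{X_0,X_1}(e)=\Withp{\Sfun^{e+1}X_0}{\Sfun\Sin0(e)}$, while the base strength at the shifted objects reads $\Sdfstr^0_{\Sfun^e X_0,X_1}=\Withp{\Sfun^{e+1}X_0}{\Sin0}$. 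Composing these and using functoriality of $\IWith$ (composition is computed componentwise) yields
\begin{align*}
  \Sdfstr^0_{X_0,X_1}(e+1)
  &=\Withp{\Sfun^{e+1}X_0}{\Sfun\Sin0(e)}\Compl\Withp{\Sfun^{e+1}X_0}{\Sin0}\\
  &=\Withp{\Sfun^{e+1}X_0}{(\Sfun\Sin0(e))\Compl\Sin0}
   =\Withp{\Sfun^{e+1}X_0}{\Sin0(e+1)}\,,
\end{align*}
the last step being exactly the recursive definition of $\Sin0(e+1)$, which closes the induction.

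There is essentially no hard step here; this is a clean bookkeeping induction. The only two points that require attention are the justification, resting on the $\partial$-chain axiom, that one may replace $\Sdfun$ and $\Comp$ by $\Sfun$ and ordinary composition so as to work entirely in $\cL$, and the correct matching of the object parameters of the base strength after the $\Sfun$-shift (so that its $i$-th factor lands on $\Sfun^{e+1}X_0$ rather than $\Sfun^e X_0$). The general $n$ and $i$ case is proved by the identical induction: the $i$-th factor carries the identity $\Sfun^e X_i$ throughout and every other factor carries $\Sin0(e)$, and it uses precisely the same three ingredients — the explicit form~\eqref{eq:sdfstrgen-def}, strict preservation of $\IWith$ by $\Sfun$, and functoriality of $\IWith$.
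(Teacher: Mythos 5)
Your proof is correct and takes essentially the same route as the paper: the paper's proof is literally ``a straightforward induction'' on $e$, stated right after the observation that the recursion $\Sin0(e+1)=(\Sdfun\Sin0(e))\Comp\Sin0$ also holds in $\Kl\cL$, and that is exactly the induction you carry out. Your preliminary reduction to $\cL$ (via linearity of all the morphisms involved and the \Daxchain{} computation $\Sdfun\Kllin f=\Kllin(\Sfun f)$) simply makes explicit the bookkeeping that the paper leaves implicit.
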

That is, in $\Kl\cL$, $\Sin0(e+1)=(\Sdfun\Sin0(e))\Comp\Sin0$. The
proof is a straightforward induction.

\begin{lemma}\label{lemma:sdfunpart-it-expr}
  If $f\in\Kl\cL(\With{X_0}{X_1},Y)$, $d\in\Nat$ and $i\in\Eset{0,1}$,
  we have %
  $\Sdfunpart i^df=\Sdfun^df\Comp\Sdfstr^i_{X_0,X_1}(d)$.
\end{lemma}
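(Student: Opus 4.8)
The plan is to argue by induction on $d$, the two cases $i=0$ and $i=1$ being symmetric; I treat $i=0$ and indicate the obvious change for $i=1$ at the end. For the base case $d=0$ one simply observes that $\Sdfunpart 0^0f=f$, while $\Sdfun^0f\Comp\Sdfstr^0_{X_0,X_1}(0)=f\Comp\Id=f$ by the convention $\Sdfstr^0_{X_0,X_1}(0)=\Id$, so the two sides coincide.

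For the inductive step I would assume $\Sdfunpart 0^df=\Sdfun^df\Comp\Sdfstr^0_{X_0,X_1}(d)$, which is a morphism in $\Kl\cL(\Sdfun^dX_0\IWith X_1,\Sdfun^dY)$ by Lemma~\ref{lemma:sdfstr-it-charact}, and then apply the first partial derivative once more. Since the first domain factor of $\Sdfunpart 0^df$ has become $\Sdfun^dX_0$, the strength relevant to this further differentiation is $\Sdfstr^0_{\Sdfun^dX_0,X_1}$; unfolding the definition $\Sdfunpart 0g=\Sdfun g\Comp\Sdfstr^0$ of the partial derivative, together with the inductive hypothesis and the functoriality of $\Sdfun$, would give
\begin{align*}
  \Sdfunpart 0^{d+1}f
  &=\Sdfun(\Sdfunpart 0^df)\Comp\Sdfstr^0_{\Sdfun^dX_0,X_1}\\
  &=\Sdfun^{d+1}f\Comp\Sdfun\Sdfstr^0_{X_0,X_1}(d)\Comp\Sdfstr^0_{\Sdfun^dX_0,X_1}\,.
\end{align*}
The last two factors are exactly the right-hand side of the recurrence $\Sdfstr^0_{X_0,X_1}(d+1)=\Sdfun\Sdfstr^0_{X_0,X_1}(d)\Comp\Sdfstr^0_{\Sdfun^dX_0,X_1}$ defining the iterated strength, so the expression equals $\Sdfun^{d+1}f\Comp\Sdfstr^0_{X_0,X_1}(d+1)$, closing the induction.

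I do not expect any genuine obstacle here: the argument is a direct induction in which functoriality of $\Sdfun$ and the recursive definition of $\Sdfstr^0_{X_0,X_1}(d)$ do all the work. The only point demanding care is the bookkeeping of object indices, namely recognising that the extra application of $\Sdfunpart 0$ at stage $d$ is taken over the object $\Sdfun^dX_0$, since the first domain factor has by then accumulated $d$ copies of $\Sdfun$; this is precisely the index $\Sdfun^dX_0$ appearing in the second factor of the recurrence for $\Sdfstr^0_{X_0,X_1}(d+1)$. The case $i=1$ runs identically, using the companion recurrence $\Sdfstr^1_{X_0,X_1}(d+1)=\Sdfun\Sdfstr^1_{X_0,X_1}(d)\Comp\Sdfstr^1_{X_0,\Sdfun^dX_1}$.
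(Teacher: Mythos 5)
Your proof is correct and is essentially the paper's own argument: the paper dispatches this lemma as an "immediate consequence of the functoriality of $\Sdfun$ in $\Kl\cL$ and of the definition of $\Sdfstr^i_{X_0,X_1}(d)$", and your induction on $d$ is exactly the spelled-out version of that, with the inductive step matching the recurrence $\Sdfstr^0_{X_0,X_1}(d+1)=\Sdfun\Sdfstr^0_{X_0,X_1}(d)\Comp\Sdfstr^0_{\Sdfun^d X_0,X_1}$ verbatim. Your bookkeeping remark (that the extra derivative at stage $d$ is taken over $\Sdfun^dX_0$) is precisely the point the paper's recursive definition encodes, so there is nothing to add.
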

\begin{proof}
  Immediate consequence of the functoriality of $\Sdfun$ in $\Kl\cL$
  and of the definition of $\Sdfstr^i_{X_0,X_1}(d)$.
\end{proof}

\begin{lemma}\label{lemma:proj-sin-deep}
  If $d<e$ we have %
  $\Sdfun^d\Sproj 0\Comp\Sin0(e)=\Sin0(e-1)$ and %
  $\Sdfun^d\Sproj 1\Comp\Sin0(e)=0$.
\end{lemma}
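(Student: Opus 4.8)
The plan is to push the whole identity down to the base category $\cL$, where $\Sin0(e)$, $\Sproj0$ and $\Sproj1$ live as genuine natural transformations and where the summability axioms are stated. All three morphisms are \emph{linear}, i.e.\ of the form $\Kllin(u)$: indeed $\Sdfunit=\Kllin(\Sin0)$, and $\Sproj0,\Sproj1$ are by definition the linear images of the corresponding transformations of $\cL$. The first step is to observe that $\Sdfun$ and Kleisli composition act on linear morphisms exactly as $\Sfun$ and $\cL$-composition do. Functoriality of $\Kllin$ gives $\Kllin(v)\Comp\Kllin(u)=\Kllin(v\Compl u)$, and from the left triangle of \Daxchain{} (which reads $(\Sfun\Der X)\Compl\Sdiff_X=\Der{\Sfun X}$) one computes $\Sdfun\Kllin(u)=(\Sfun(u\Compl\Der X))\Compl\Sdiff_X=(\Sfun u)\Compl\Der{\Sfun X}=\Kllin(\Sfun u)$, hence $\Sdfun^d\Sproj j=\Kllin(\Sfun^d\Sproj j)$, while $\Sin0(e)$, read in $\Kl\cL$, is $\Kllin$ of the $\cL$-morphism of Lemma~\ref{lemma:sdfstr-it-charact}. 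Consequently the two stated equalities in $\Kl\cL$ are the $\Kllin$-images of $(\Sfun^d\Sproj0)\Compl\Sin0(e)=\Sin0(e-1)$ and $(\Sfun^d\Sproj1)\Compl\Sin0(e)=0$ in $\cL$, which is what I would prove.

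These two $\cL$-identities I would establish simultaneously by induction on $d$, with $e$ ranging over all $e>d$. The ingredients are the recursion $\Sin0(e)=(\Sfun\Sin0(e-1))\Compl\Sin0$, the naturality in $\cL$ of $\Sproj0$ and $\Sproj1$ as transformations $\Sfun\Tonat\Id$, and the elementary facts $\Sproj0\Compl\Sin0=\Id$ and $\Sproj1\Compl\Sin0=0$. For the base case $d=0$ (and $e\geq1$) I slide $\Sproj j$ past $\Sfun\Sin0(e-1)$ by naturality, obtaining $\Sproj j\Compl\Sin0(e)=\Sin0(e-1)\Compl\Sproj j\Compl\Sin0$, which is $\Sin0(e-1)$ for $j=0$ and $0$ for $j=1$. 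For the inductive step, assuming the claim for $d$ and taking $e>d+1$, I rewrite $(\Sfun^{d+1}\Sproj j)\Compl\Sin0(e)=\Sfun\bigl((\Sfun^d\Sproj j)\Compl\Sin0(e-1)\bigr)\Compl\Sin0$ and apply the induction hypothesis at $e-1$ (legitimate since $d<e-1$): for $j=0$ this yields $\Sfun(\Sin0(e-2))\Compl\Sin0=\Sin0(e-1)$, and for $j=1$ it yields $\Sfun(0)\Compl\Sin0=0$. The only auxiliary fact used here is $\Sfun0=0$, which follows from joint monicity of $\Sproj0,\Sproj1$ together with their naturality, since $\Sproj i\Compl\Sfun0=0\Compl\Sproj i=0=\Sproj i\Compl0$.

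The point that requires care — and the reason the detour through $\cL$ is not a mere convenience — is that $\Sproj1$ is \emph{not} natural as a transformation $\Sdfun\Tonat\Id$ on $\Kl\cL$: for $f\in\Kl\cL(X,Y)$ one has $\Sproj1\Comp\Sdfun f\neq f\Comp\Sproj1$ in general, because $\Sdfun f$ sends $(x_0,x_1)$ to $(f(x_0),f'(x_0)\cdot x_1)$ and the tangent component genuinely mixes the two inputs (only $\Sproj0$ is natural in $\Kl\cL$, by \Daxlocal). Working in $\cL$ sidesteps this, since there $\Sproj1$ is a bona fide natural transformation. The remaining subtlety is purely bookkeeping: the morphism $\Sdfun^d\Sproj j$ must be instantiated at the object $\Sfun^{e-d-1}X$ for its composite with $\Sin0(e)\colon X\to\Sfun^eX$ to typecheck, and this is exactly where the hypothesis $d<e$ enters; I would keep this index visible throughout the induction.
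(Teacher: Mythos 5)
Your proof is correct and follows essentially the same route as the paper's: reduce the statement to $\cL$ on the grounds that all the morphisms involved are linear, then induct on $d$ using the recursion $\Sin0(e)=(\Sfun{\Sin0(e-1)})\Compl\Sin0$, naturality of $\Sproj i$ in $\cL$, and $\Sproj i\Compl\Sin0=\Kronecker i0\Id$. The only difference is that you spell out what the paper leaves implicit — the $\Kllin$/\Daxchain{} justification for transferring the computation to $\cL$, and the fact that $\Sfun 0=0$ (via joint monicity) — which is a harmless refinement, not a change of method.
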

\begin{proof}
  By induction on $d$. Notice first that we are actually dealing with
  linear morphisms so that we can do our computations in $\cL$. For
  the base case we have, since $e>0$: %
  $\Sproj i\Compl\Sin0(e) =\Sproj i\Compl\Sfun(\Sin0(e-1))\Compl\Sin0
  =\Sin0(e-1)\Compl\Sproj i\Compl\Sin0$ %
  and we have $\Sproj i\Compl\Sin0=\Kronecker i0\Id$. %

  For the inductive case observe first that since $d+1<e$ we have
  $e\geq 2$. Then %
  \begin{align*}
    (\Sfun^{d+1}\Sproj i)\Compl\Sin0(e)
    &=(\Sfun^{d+1}\Sproj i)\Compl\Sfun(\Sin0(e-1))\Comp\Sin0\\
    &=\Sfun((\Sfun^d\Sproj i)\Compl\Sin0(e-1))\Compl\Sin0\\
    &=\Sfun(\Kronecker i0\Sin0(e-2))\Compl\Sin0
      \text{\quad by ind.~hypothesis}\\
    &=\Kronecker i0\Sin0(e-1)
  \end{align*}
  as contended.
\end{proof}

We generalize the canonical flip
$\Sflip\in\cL(\Sfun^2X,\Sfun^2X)$ to an iso
$\Sflipl l\in\cL(\Sfun^{l+2}X,\Sfun^{l+2}X)$ for each $l\in\Nat$
defined inductively by
\begin{align*}
  \Sflipl0=\Sflip
  \text{\quad and\quad}
  \Sflipl{l+1}=\Sflip\Compl(\Sfun\Sflipl l)\,.
\end{align*}

\begin{lemma}\label{lemma:sflipl-circular-permutation}
  Given $\List i0{l+1}\in\Eset{0,1}$, one has
  \begin{align*}
    \Sproj{i_{l+1}}
    \Compl\cdots\Compl\Sproj{i_0}
    \Compl\Sflipl l
    =\Sproj{i_0}
    \Compl\Sproj{i_{l+1}}
    \Compl\cdots\Compl\Sproj{i_1}\,.
  \end{align*}
\end{lemma}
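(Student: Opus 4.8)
The plan is to prove the identity by induction on $l$, using only the characterizing equation of the canonical flip $\Sflip$ together with the naturality of the projections $\Sproj0,\Sproj1$. Since $\Sflip$, $\Sflipl l$ and the $\Sproj i$ are all (linear) morphisms of $\cL$, the whole computation can be carried out in $\cL$; there is no need to pass through $\Kl\cL$.

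For the base case $l=0$ we have $\Sflipl0=\Sflip$, and the claim reads $\Sproj{i_1}\Compl\Sproj{i_0}\Compl\Sflip=\Sproj{i_0}\Compl\Sproj{i_1}$, which is exactly the defining property of $\Sflip$ recalled in Section~\ref{sec:cohdiff-summary} (taking $i=i_1$ and $j=i_0$ in $\Sproj i\Compl\Sproj j\Compl\Sflip=\Sproj j\Compl\Sproj i$).

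For the inductive step I would expand $\Sflipl{l+1}=\Sflip\Compl(\Sfun\Sflipl l)$ and treat the projections from the inside out. First, the two projections applied immediately after $\Sflip$ form the chunk $\Sproj{i_1}\Compl\Sproj{i_0}\Compl\Sflip$, which the flip equation rewrites as $\Sproj{i_0}\Compl\Sproj{i_1}$; this turns the left-hand side of the goal into $\Sproj{i_{l+2}}\Compl\cdots\Compl\Sproj{i_2}\Compl\Sproj{i_0}\Compl\Sproj{i_1}\Compl\Sfun\Sflipl l$. Next, since $\Sproj{i_1}\colon\Sfun\Tonat\Id$, its naturality applied to the morphism $\Sflipl l$ gives $\Sproj{i_1}\Compl\Sfun\Sflipl l=\Sflipl l\Compl\Sproj{i_1}$, so $\Sproj{i_1}$ can be commuted to the far right, yielding $\Sproj{i_{l+2}}\Compl\cdots\Compl\Sproj{i_2}\Compl\Sproj{i_0}\Compl\Sflipl l\Compl\Sproj{i_1}$. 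Finally I would apply the induction hypothesis to the prefix $\Sproj{i_{l+2}}\Compl\cdots\Compl\Sproj{i_2}\Compl\Sproj{i_0}\Compl\Sflipl l$, which (under the reindexing $j_0=i_0$ and $j_m=i_{m+1}$ for $1\le m\le l+1$) rewrites it as $\Sproj{i_0}\Compl\Sproj{i_{l+2}}\Compl\cdots\Compl\Sproj{i_2}$; composing with the trailing $\Sproj{i_1}$ gives precisely the right-hand side $\Sproj{i_0}\Compl\Sproj{i_{l+2}}\Compl\cdots\Compl\Sproj{i_1}$ of the statement for $l+1$.

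The only genuine difficulty is bookkeeping: keeping track of which $\Sfun$-layer each projection peels off, and verifying that the induction hypothesis (which involves $l+2$ projections $j_0,\dots,j_{l+1}$) is instantiated with the correct list, namely $i_0,i_2,i_3,\dots,i_{l+2}$. Once this index matching is set up, no categorical input beyond the single characterizing equation of $\Sflip$ and the naturality of the $\Sproj i$ is required, and the argument is routine.
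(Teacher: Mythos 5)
Your proof is correct and follows essentially the same route as the paper's: induction on $l$, with the base case given by the defining equation of $\Sflip$, and the inductive step obtained by expanding $\Sflipl{l+1}=\Sflip\Compl(\Sfun\Sflipl l)$, applying the flip equation to the innermost pair $\Sproj{i_1}\Compl\Sproj{i_0}\Compl\Sflip$, sliding $\Sproj{i_1}$ past $\Sfun\Sflipl l$ by naturality, and then invoking the induction hypothesis on the remaining prefix. Your explicit reindexing $j_0=i_0$, $j_m=i_{m+1}$ is exactly the instantiation the paper leaves implicit, so nothing is missing.
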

\begin{proof}
  By induction on $l$. For $l=0$ the property holds by the very
  definition of $\Sflip$. Assume that the property holds for $l$ and
  let us prove it for $l+1$. So let $\List i0{l+2}\in\Eset{0,1}$. We
  have
  \begin{align*}
    \Sproj{i_{l+2}}
    \Compl\cdots\Compl\Sproj{i_0}
    \Compl\Sflipl{l+1}
    &=\Sproj{i_{l+2}}
      \Compl\cdots\Compl\Sproj{i_0}
      \Compl\Sflip\Compl(\Sfun\Sflipl l)\\
    &=\Sproj{i_{l+2}}
      \Compl\cdots\Compl\Sproj{i_2}\Compl\Sproj{i_0}
      \Compl\Sproj{i_1}
      \Compl(\Sfun\Sflipl l\\
    &=\Sproj{i_{l+2}}
      \Compl\cdots\Compl\Sproj{i_2}\Compl\Sproj{i_0}
      \Compl\Sflipl l
      \Compl\Sproj{i_1}
      \text{\quad by nat.~of }\Sproj{i_1}\\
    &=\Sproj{i_0}
      \Compl\Sproj{i_{l+2}}
      \Compl\cdots\Compl\Sproj{i_2}\Compl\Sproj{i_1}
      \text{\quad by ind.~hyp.}\qedhere
  \end{align*}
\end{proof}
This means that $\Sflipl l$ implements a circular permutation of
length $l+2$ on the indices.

\begin{lemma}\label{lemma:sdfun-circular}
  Let $f\in\Kl\cL(\With{X_0}{X_1},Y)$ and let $k\in\Nat$. Then
  $\Sdfunpart 1{\Sdfunpart0^{k+1} f},\Sdfunpart0^{k+1}{\Sdfunpart 1
    f}\in\Kl\cL(\With{\Sdfun^{k+1}X}{\Sdfun
    X},\Sdfun^{k+2}Y)$ satisfy the relation
  $\Sdfunpart 1{\Sdfunpart0^{k+1}{f}}
  =\Sflipl k\Comp\Sdfunpart0^{k+1}{\Sdfunpart 1f}$.
\end{lemma}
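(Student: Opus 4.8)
The plan is to proceed by induction on $k$, using Theorem~\ref{th:partial-der-commute} as the engine for exchanging the two outermost differentiations and Lemma~\ref{lemma:sflipl-circular-permutation} to recognise the resulting permutation as $\Sflipl{k+1}$. For the base case $k=0$ the statement reads $\Sdfunpart1{\Sdfunpart0 f}=\Sflipl0\Comp\Sdfunpart0{\Sdfunpart1 f}$, which is precisely Theorem~\ref{th:partial-der-commute} with $i=1$, $l=0$, since $\Sflipl0=\Sflip$ by definition.

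For the inductive step I would apply the induction hypothesis to $g=\Sdfunpart0 f\in\Kl\cL(\With{\Sdfun X_0}{X_1},\Sdfun Y)$, which yields $\Sdfunpart1{\Sdfunpart0^{k+1}g}=\Sflipl k\Comp\Sdfunpart0^{k+1}{\Sdfunpart1 g}$, where the flip $\Sflipl k$ is now taken at the object $\Sdfun Y$. Since $\Sdfunpart0^{k+1}g=\Sdfunpart0^{k+2}f$ and $\Sdfunpart1 g=\Sdfunpart1{\Sdfunpart0 f}$, and since the base case gives $\Sdfunpart1{\Sdfunpart0 f}=\Sflip\Comp\Sdfunpart0{\Sdfunpart1 f}$, this rewrites as
\[
  \Sdfunpart1{\Sdfunpart0^{k+2}f}
  =\Sflipl k\Comp\Sdfunpart0^{k+1}{(\Sflip\Comp\Sdfunpart0{\Sdfunpart1 f})}.
\]
The next step is to commute $\Sdfunpart0^{k+1}$ past the postcomposition by the linear morphism $\Sflip$: functoriality of $\Sdfun$ gives $\Sdfunpart0(\phi\Comp h)=\Sdfun\phi\Comp\Sdfunpart0 h$, hence by iteration $\Sdfunpart0^{k+1}{(\Sflip\Comp h)}=(\Sdfun^{k+1}\Sflip)\Comp\Sdfunpart0^{k+1}h$, and \Daxchain{} identifies $\Sdfun^{k+1}\Sflip$ with the image in $\Kl\cL$ of the $\cL$-morphism $\Sfun^{k+1}\Sflip$. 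Using $\Sdfunpart0^{k+1}{\Sdfunpart0{\Sdfunpart1 f}}=\Sdfunpart0^{k+2}{\Sdfunpart1 f}$, the goal becomes
\[
  \Sdfunpart1{\Sdfunpart0^{k+2}f}
  =\Sflipl k\Comp(\Sdfun^{k+1}\Sflip)\Comp\Sdfunpart0^{k+2}{\Sdfunpart1 f},
\]
so everything reduces to the structural identity $\Sflipl k\Comp(\Sdfun^{k+1}\Sflip)=\Sflipl{k+1}$ between two endomorphisms of $\Sfun^{k+3}Y$ (equivalently, since all morphisms involved are linear, the $\cL$-identity $\Sflipl k\Compl\Sfun^{k+1}\Sflip=\Sflipl{k+1}$).

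I expect this last identity to be the only genuine obstacle, and I would prove it in $\cL$ by joint monicity of the iterated projections $\Sproj{i_{k+2}}\Compl\cdots\Compl\Sproj{i_0}$. Fixing $\List i0{k+2}\in\Eset{0,1}$, Lemma~\ref{lemma:sflipl-circular-permutation} (with $l=k+1$) evaluates the right-hand side as $\Sproj{i_0}\Compl\Sproj{i_{k+2}}\Compl\cdots\Compl\Sproj{i_1}$. For the left-hand side I would first apply Lemma~\ref{lemma:sflipl-circular-permutation} (with $l=k$, at base object $\Sfun Y$) to the outer $k+2$ projections to absorb $\Sflipl k$, turning $\Sproj{i_{k+1}}\Compl\cdots\Compl\Sproj{i_0}\Compl\Sflipl k$ into $\Sproj{i_0}\Compl\Sproj{i_{k+1}}\Compl\cdots\Compl\Sproj{i_1}$; then push the $k+1$ now-outermost projections $\Sproj{i_{k+1}}\Compl\cdots\Compl\Sproj{i_1}$ through $\Sfun^{k+1}\Sflip$ by naturality of $\Sproj0,\Sproj1$, peeling off all $k+1$ functor layers and reducing $\Sfun^{k+1}\Sflip$ to $\Sflip$ on the two innermost coordinates; finally the defining equation $\Sproj a\Compl\Sproj b\Compl\Sflip=\Sproj b\Compl\Sproj a$ converts $\Sproj{i_{k+2}}\Compl\Sproj{i_0}\Compl\Sflip$ into $\Sproj{i_0}\Compl\Sproj{i_{k+2}}$. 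The two composites then coincide for every choice of indices, and joint monicity closes the induction. The only points demanding care are the bookkeeping of which $\Sfun$-layer each projection and each flip acts on, and the fact that the flip delivered by the induction hypothesis sits at the object $\Sdfun Y$ rather than $Y$, which is exactly what makes it combine with $\Sdfun^{k+1}\Sflip$ into the length-$(k+2)$ cycle $\Sflipl{k+1}$.
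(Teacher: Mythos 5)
Your proof is correct, but it organizes the induction differently from the paper, and the difference is instructive. The paper peels off the \emph{outermost} $\Sdfunpart0$: writing $\Sdfunpart0^{k+2}f=\Sdfunpart0(\Sdfunpart0^{k+1}f)$, it first swaps $\Sdfunpart1$ past that outer $\Sdfunpart0$ using Theorem~\ref{th:partial-der-commute}, then applies the induction hypothesis to $f$ itself, and finally pulls the linear $\Sflipl k$ out of $\Sdfunpart0$ by functoriality; the flip composite it ends with is $\Sflip\Comp\Sdfun\Sflipl k$, which is \emph{literally} the definition of $\Sflipl{k+1}$, so the induction closes in four lines with no auxiliary fact. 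You instead peel off the \emph{innermost} $\Sdfunpart0$: you apply the induction hypothesis to $g=\Sdfunpart0 f$ (legitimate, since the statement is quantified over all $f$ and all objects, and the flip is then taken at base object $\Sdfun Y$), use the base case on the inner pair, and push the resulting $\Sflip$ through $\Sdfunpart0^{k+1}$; the composite you end with is $\Sflipl k\Compl\Sfun^{k+1}\Sflip$, which is \emph{not} the definitional unfolding of $\Sflipl{k+1}$, so you must prove the structural identity $\Sflipl k\Compl\Sfun^{k+1}\Sflip=\Sflipl{k+1}$ separately. Your joint-monicity proof of that identity is right: evaluating both sides under $\Sproj{i_{k+2}}\Compl\cdots\Compl\Sproj{i_0}$, Lemma~\ref{lemma:sflipl-circular-permutation} applied with $l=k$ at base object $\Sfun Y$ absorbs $\Sflipl k$, naturality of $\Sproj0,\Sproj1$ reduces $\Sfun^{k+1}\Sflip$ to a single $\Sflip$ on the two innermost layers, and the defining equation $\Sproj i\Compl\Sproj j\Compl\Sflip=\Sproj j\Compl\Sproj i$ finishes: both sides come out as $\Sproj{i_0}\Compl\Sproj{i_{k+2}}\Compl\cdots\Compl\Sproj{i_1}$. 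In group-theoretic terms, a cycle of length $k+3$ factors as a $(k+2)$-cycle composed with a transposition in two ways: the paper's definition $\Sflipl{k+1}=\Sflip\Compl(\Sfun\Sflipl k)$ hard-wires one factorization (inner cycle, outer transposition), and its induction is arranged to land exactly on it, while your induction lands on the other (outer cycle, inner transposition) and therefore has to re-derive it from the circular-permutation lemma. So the paper buys brevity by a judicious choice of which factor to induct around; your route costs one extra lemma but makes explicit a reusable identity about the generalized flips.
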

\begin{proof}
  For $k=0$, this is just Theorem~\ref{th:partial-der-commute}. For
  the inductive step we have
  \begin{align*}
    \Sdfunpart 1{\Sdfunpart0^{k+2}{f}}
    &=\Sflip\Comp\Sdfunpart0\Sdfunpart1\Sdfunpart0^{k+1}f
    \text{\quad by Theorem~\ref{th:partial-der-commute}}\\
    &=\Sflip\Comp\Sdfunpart0(\Sflipl k\Comp\Sdfunpart0^{k+1}\Sdfunpart1f)
    \text{\quad by ind.~hyp.}\\
    &=\Sflip\Comp\Sdfun\Sflipl k\Comp\Sdfunpart0^{k+2}\Sdfunpart1f
    \text{\quad by def.~of }\Sdfunpart0\\
    &=\Sflipl{k+1}\Comp\Sdfunpart0^{k+2}\Sdfunpart1f
      \text{\quad by def.~of }\Sflipl{k+1}\,.%
      \qedhere
  \end{align*}
\end{proof}

\subsection{Differentiation in the closed case} %
\label{sec:differentiation-closed}
Since our purpose is to provide the categorical foundations of
$\PCFD$, we require the category $\cL$ to be closed wrt.~its SMC
structure.

Remember that we consider the isos
$\Sdfun(\With ZX)\Isom\Sdfun Z\IWith\Sdfun X$ and
$\Sdfun{(\Simpl XY)}\Isom\Simpl X{\Sdfun Y}$ as identities: this is
our \Saxfun{} axiom of~\cite{Ehrhard23a} and we assume that the
corresponding iso, which is
$\Cur(\Sdfunpart0\Ev)=\Cur((\Sdfun\Ev)\Comp\Sdfstr^0_{\Simpl
  XY,X})\in\Kl\cL(\Sdfun(\Simpl XY),\Simpl X{\Sdfun Y})$, is the
identity morphism. With these identifications we have the following
equation.

\begin{lemma}\label{lemma:Sdfun-curry}
  Let $f\in\Kl\cL(\With ZX,Y)$ so that $\Cur f\in\Kl\cL(Z,\Simpl
  XY)$, %
  $\Sdfunpart0 f\in\Kl\cL(\With{\Sdfun Z}{X},\Sdfun Y)$ and %
  $\Sdfun(\Cur f)\in\Kl\cL(\Sdfun Z,\Simpl X{\Sdfun Y})$. Then %
  $\Sdfun(\Cur f)=\Cur(\Sdfunpart0 f)$.
\end{lemma}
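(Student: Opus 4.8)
The plan is to exploit the universal property of currying in the cartesian closed category $\Kl\cL$. Since $\Cur$ is a bijection between $\Kl\cL(\With{\Sdfun Z}{X},\Sdfun Y)$ and $\Kl\cL(\Sdfun Z,\Simpl X{\Sdfun Y})$, it suffices to check that $\Sdfun(\Cur f)$ and $\Cur(\Sdfunpart0 f)$ have the same uncurrying, that is, to prove $\Ev\Comp\Withe{\Sdfun(\Cur f)}{X}=\Sdfunpart0 f$, where $\Ev\in\Kl\cL(\With{(\Simpl X{\Sdfun Y})}{X},\Sdfun Y)$ is the evaluation morphism at $\Simpl X{\Sdfun Y}$. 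Indeed, $\Cur^{-1}(\Cur(\Sdfunpart0 f))=\Sdfunpart0 f$ by definition, so equality of the uncurryings forces equality of the curried morphisms.

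The first ingredient I would use is the standing assumption recalled just before the statement: the iso $\Cur(\Sdfunpart0\Ev)\in\Kl\cL(\Sdfun(\Simpl XY),\Simpl X{\Sdfun Y})$ is the identity morphism. Under the identification $\Sdfun(\Simpl XY)=\Simpl X{\Sdfun Y}$ this says precisely that $\Sdfunpart0\Ev$ coincides with the evaluation at $\Simpl X{\Sdfun Y}$, since the evaluation is by definition the unique morphism whose curried form is the identity. Unfolding $\Sdfunpart0\Ev=\Sdfun\Ev\Comp\Sdfstr^0_{\Simpl XY,X}$, this yields the key identity $\Ev=\Sdfun\Ev\Comp\Sdfstr^0_{\Simpl XY,X}$, in which the $\Ev$ inside $\Sdfun\Ev$ is the evaluation at $\Simpl XY$ while the leftmost $\Ev$ is the one at $\Simpl X{\Sdfun Y}$.

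With this in hand the rest is a short diagram chase. Starting from $\Ev\Comp\Withe{\Sdfun(\Cur f)}{X}$ I would substitute the key identity for the leftmost $\Ev$, obtaining $\Sdfun\Ev\Comp\Sdfstr^0_{\Simpl XY,X}\Comp\Withe{\Sdfun(\Cur f)}{X}$. Next I move $\Sdfstr^0_{\Simpl XY,X}$ past $\Withe{\Sdfun(\Cur f)}{X}$ using the naturality of the additive strength $\Sdfstr^0$ established above, specialised to the Kleisli morphism $\Cur f\colon Z\to\Simpl XY$ and the identity on $X$; this gives $\Sdfun\Ev\Comp\Sdfun\Withe{\Cur f}{X}\Comp\Sdfstr^0_{Z,X}$. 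By functoriality of $\Sdfun$ the first two factors merge into $\Sdfun(\Ev\Comp\Withe{\Cur f}{X})\Comp\Sdfstr^0_{Z,X}$, and since $\Ev\Comp\Withe{\Cur f}{X}=f$ by the defining property of currying, the right-hand side is $\Sdfun f\Comp\Sdfstr^0_{Z,X}=\Sdfunpart0 f$, as required.

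There is no real obstacle beyond bookkeeping; the single point demanding care is the correct reading of the standing assumption, namely translating ``the curried iso is the identity'' into the operative equation $\Sdfunpart0\Ev=\Ev$ at $\Simpl X{\Sdfun Y}$, and keeping track of the two distinct evaluation morphisms (at $\Simpl XY$ and at $\Simpl X{\Sdfun Y}$) that this equation relates.
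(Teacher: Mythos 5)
Your proposal is correct and takes essentially the same route as the paper's own (very terse) proof: the paper reduces the claim to $\Cur(\Sdfunpart0\Ev)\Comp\Sdfun(\Cur f)=\Cur(\Sdfunpart0 f)$ and says this ``boils down to the naturality of $\Sdfstr^0$ by simple computations in the CCC $\Kl\cL$'', which is exactly the chase you carry out. Your translation of the standing assumption into $\Sdfunpart0\Ev=\Ev$ at $\Simpl X{\Sdfun Y}$, followed by uncurrying, naturality of $\Sdfstr^0$, functoriality of $\Sdfun$ and the currying law, just makes those ``simple computations'' explicit.
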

\begin{proof}
  More precisely we must prove that %
  $\Cur(\Sdfunpart0\Ev)\Comp\Sdfun(\Cur f)=\Cur(\Sdfunpart0f)$ which
  boils down to the naturality of $\Sdfstr^0$ by simple computations
  in the CCC $\Kl\cL$.
\end{proof}

\begin{lemma}\label{lemma:sstr-sin0}
  The following diagram commutes in $\cL$
  \begin{equation*}
    \begin{tikzcd}
      \Tens{X}{Y}\ar[r,"\Tens{\Sin 0}Y"]\ar[dr,swap,"\Sin0"]
      &\Tens{\Sfun X}Y\ar[d,"\Sstr_0^{X,Y}"]\\
      &\Sfun\Tensp XY
    \end{tikzcd}
  \end{equation*}
\end{lemma}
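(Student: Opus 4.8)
The plan is to exploit the joint monicity of $\Sproj0$ and $\Sproj1$: since a morphism into $\Sfun\Tensp XY$ is completely determined by its two components $\Sproj0\Compl(\cdot)$ and $\Sproj1\Compl(\cdot)$, it suffices to check that the two sides of the triangle, namely $\Sstr^0_{X,Y}\Compl\Tens{\Sin0}Y$ and $\Sin0$, agree after postcomposition with each of $\Sproj0$ and $\Sproj1$. Throughout I would use the defining characterization of the strength, $\Sproj i\Compl\Sstr^0_{X,Y}=\Tens{\Sproj i}Y$, and the fact that $\Sin0=\Stuple{X,0}$, so that $\Sproj0\Compl\Sin0=X$ (identity) and $\Sproj1\Compl\Sin0=0$.

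First I would postcompose with $\Sproj0$. Using $\Sproj0\Compl\Sstr^0_{X,Y}=\Tens{\Sproj0}Y$ together with the bifunctoriality of $\ITens$, the composite along the top and right edges becomes $\Tens{\Sproj0}Y\Compl\Tens{\Sin0}Y=\Tens{(\Sproj0\Compl\Sin0)}Y=\Tens XY$, the identity morphism, because $\Sproj0\Compl\Sin0$ is the identity of $X$. The diagonal edge gives $\Sproj0\Compl\Sin0=\Tens XY$ as well, so the two components coincide. Next I would postcompose with $\Sproj1$: in the same way the top-then-right path reduces to $\Tens{(\Sproj1\Compl\Sin0)}Y=\Tens0Y$, while the diagonal gives $\Sproj1\Compl\Sin0=0$. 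Once one knows $\Tens0Y=0$, both components agree, and joint monicity of $\Sproj0,\Sproj1$ forces the triangle to commute.

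Since the whole argument is just rewriting along the defining equations of $\Sstr^0$ and $\Sin0$, there is no genuine obstacle; the only point deserving a word of justification is the compatibility $\Tens0Y=0$ of the tensor with zero morphisms. This I would derive from the axiom \Saxdist{} (distributivity of $\ITens$ over the partial addition of morphisms) together with the fact that $0$ is the neutral element of the partial commutative monoid $\cL(\Tens XY,\Tens XY)$, which is what makes $\ITens$ preserve zeros. Everything else is routine bifunctoriality and the elementary identities $\Sproj0\Compl\Sin0=X$ and $\Sproj1\Compl\Sin0=0$.
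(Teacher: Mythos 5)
Your proof is correct and follows exactly the route the paper intends: the paper's proof consists of the single remark that the result ``is easily proven using as usual the fact that $\Sproj0,\Sproj1$ are jointly monic,'' and your component-wise verification via $\Sproj i\Compl\Sstr^0_{X,Y}=\Tens{\Sproj i}Y$ and $\Sproj i\Compl\Sin 0$ is precisely that argument spelled out. The one point you flag, $\Tens 0Y=0$, is indeed the absorption of zero by $\ITens$ that the paper takes as part of the summability-structure axioms (it invokes ``$0$ is absorbing for $\ITens$'' elsewhere, e.g.\ in the proof of Lemma~\ref{lemma:Kleisli-multilin}), so no gap remains.
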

This is easily proven using as usual the fact that $\Sproj0,\Sproj1$
are jointly monic.

\begin{lemma}\label{lemma:sdfun-Ev-expression}
  The following diagram commutes in $\Kl\cL$
  \begin{equation*}
    \begin{tikzcd}
      \With{\Simplp X{\Sdfun Y}}{\Sdfun X}
      \ar[r,"\Sdfun\Ev"]
      \ar[d,swap,"\Sdfstr^1_{\Simpl X{\Sdfun Y},X}"]
      &
      \Sdfun Y\\
      \With{\Simplp X{\Sdfun^2Y}}{\Sdfun X}
      \ar[r,"\Sdfun\Ev"]
      &
      \Sdfun^2Y\ar[u,swap,"\Sdfmult"]
    \end{tikzcd}
  \end{equation*}
\end{lemma}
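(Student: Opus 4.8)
The plan is to reduce the statement to the two immediately preceding results, Lemma~\ref{lemma:Sdfun-curry} and Theorem~\ref{th:sdfun-sdfunpart-sfunadd}, working throughout modulo the \Saxfun{} identifications $\Sdfun(\Simpl XZ)=\Simpl X{\Sdfun Z}$ and the strict preservation of $\IWith$ by $\Sdfun$. The first thing to notice is that the two arrows labelled $\Sdfun\Ev$ are the images under $\Sdfun$ of two \emph{different} evaluation morphisms: the top one is $\Sdfun\Ev$ for $\Ev\in\Kl\cL(\With{\Simplp XY}{X},Y)$, whose domain $\Sdfun(\With{\Simplp XY}{X})$ is indeed $\With{\Simplp X{\Sdfun Y}}{\Sdfun X}$, while the bottom one is $\Sdfun\Ev$ for $\Ev\in\Kl\cL(\With{\Simplp X{\Sdfun Y}}{X},\Sdfun Y)$, whose domain is $\With{\Simplp X{\Sdfun^2 Y}}{\Sdfun X}$. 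With this reading, the commutation to be proved is exactly the equation
\[
\Sdfun\Ev=\Sdfmult\Comp(\Sdfun\Ev)\Comp\Sdfstr^1_{\Simpl X{\Sdfun Y},X}
\]
in $\Kl\cL(\With{\Simplp X{\Sdfun Y}}{\Sdfun X},\Sdfun Y)$.

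The first step is to identify the partial derivative of evaluation in its function argument. Since $\Cur\Ev=\Id_{\Simpl XY}$ in any cartesian closed category, applying Lemma~\ref{lemma:Sdfun-curry} to $f=\Ev\in\Kl\cL(\With{\Simplp XY}{X},Y)$ gives $\Cur(\Sdfunpart0\Ev)=\Sdfun(\Cur\Ev)=\Sdfun\Id_{\Simpl XY}=\Id_{\Simpl X{\Sdfun Y}}$, where the last identity uses functoriality of $\Sdfun$ together with \Saxfun. Uncurrying (and using that $\Cur$ is a bijection) yields the key identity $\Sdfunpart0\Ev=\Ev$, where on the right $\Ev\in\Kl\cL(\With{\Simplp X{\Sdfun Y}}{X},\Sdfun Y)$; that is, differentiating evaluation in its function slot produces again an evaluation morphism, now into $\Sdfun Y$.

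The second step is to apply Theorem~\ref{th:sdfun-sdfunpart-sfunadd} to $f=\Ev\in\Kl\cL(\With{\Simplp XY}{X},Y)$, viewed as a morphism on the binary product with $X_0=\Simpl XY$ and $X_1=X$, obtaining $\Sdfun\Ev=\Sdfmult\Comp\Sdfunpart1\Sdfunpart0\Ev$. Substituting the identity of the previous step for the inner operator and then unfolding the definition $\Sdfunpart1 g=\Sdfun g\Comp\Sdfstr^1$ on $g=\Ev\in\Kl\cL(\With{\Simplp X{\Sdfun Y}}{X},\Sdfun Y)$ gives
\[
\Sdfun\Ev=\Sdfmult\Comp\Sdfunpart1\Ev=\Sdfmult\Comp(\Sdfun\Ev)\Comp\Sdfstr^1_{\Simpl X{\Sdfun Y},X},
\]
which is precisely the claimed commutation.

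I do not expect a genuine obstacle: the whole content is packaged in the two cited results, and the argument is a two-line substitution. The one delicate point is bookkeeping, namely keeping track of which evaluation morphism each occurrence of $\Sdfun\Ev$ denotes and checking that every domain and codomain matches under the \Saxfun{} and strict-$\IWith$ identifications, so that $\Sdfunpart0\Ev$ really is the bottom evaluation and $\Sdfunpart1\Sdfunpart0\Ev$ lands in $\Kl\cL(\With{\Simplp X{\Sdfun Y}}{\Sdfun X},\Sdfun^2Y)$. Should the slick identification $\Sdfunpart0\Ev=\Ev$ be felt to need justification beyond Lemma~\ref{lemma:Sdfun-curry}, a fully self-contained fallback is to test the target equation against the jointly monic projections $\Sproj0,\Sproj1$ (which remain jointly monic in $\Kl\cL$ because $\Kllin(\Sproj i)\Comp g=\Sproj i\Compl g$); the $\Sproj0$-component is then immediate from \Daxlocal{} and naturality, while the $\Sproj1$-component reproduces the same partial-derivative computation.
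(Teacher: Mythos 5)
Your proof is correct, and it takes a genuinely different route from the paper's. The paper proves this lemma by a long direct computation in $\cL$: it expands both occurrences of $\Sdfun\Ev$ through the distributive law $\Sdiff$ and the Seely isomorphisms, then reconciles the two sides using \Daxwith, \Daxchain, the definition of $\Smont$ via the tensorial strengths, Lemma~\ref{lemma:sstr-sin0}, naturality and the monad laws of $\Sfun$. You instead stay entirely in $\Kl\cL$ and reduce the statement to two previously established facts: the identity $\Sdfunpart0\Ev=\Ev$ and Theorem~\ref{th:sdfun-sdfunpart-sfunadd} applied to $f=\Ev$, after which unfolding $\Sdfunpart1{(\Sdfunpart 0\Ev)}=(\Sdfun\Ev)\Comp\Sdfstr^1_{\Simpl X{\Sdfun Y},X}$ yields exactly the claimed equation. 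Your reading of the diagram (the two arrows labelled $\Sdfun\Ev$ are images of two different evaluation morphisms) is the intended one; the paper itself writes $\Ev'$ for the bottom one when it invokes this lemma inside the proof of Lemma~\ref{lemma:sem-diff-subst}. There is no circularity, since Lemma~\ref{lemma:Sdfun-curry} and Theorem~\ref{th:sdfun-sdfunpart-sfunadd} (via Theorems~\ref{th:partial-der-commute} and~\ref{th:sdstr-mont-identity}) all precede this lemma and none of their proofs uses it. One cosmetic point: deriving $\Sdfunpart0\Ev=\Ev$ from Lemma~\ref{lemma:Sdfun-curry} at $f=\Ev$ via $\Cur\Ev=\Id$ merely restates the standing assumption of Section~\ref{sec:differentiation-closed} that $\Cur(\Sdfunpart0\Ev)$ \emph{is} the identity --- that lemma is itself stated modulo this identification, so applying it to $\Ev$ returns the assumption; citing the assumption directly would be cleaner. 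As for what each route buys: yours shows the lemma is a formal consequence of the additive-strength apparatus of Section~\ref{sec:partial-derivatives} (ultimately of Theorem~\ref{th:sdstr-mont-identity}, ``the total derivative is the sum of the partial derivatives'') without ever opening $\Sdfun$ up into $\Sdiff$; the paper's longer computation exhibits explicitly how the distributive-law axioms enter, at the cost of a page of calculation.
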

\begin{proof}
  We need to come back to the definition of the functor $\Sdfun$. We
  expand the definition of
  $\Sdfun\Ev:\With{\Simplp X{\Sdfun Y}}{\Sdfun X}\to\Sdfun Y$ in $\cL$.
  \begin{align*}
    \Sdfun\Ev
    &=(\Sfun\Ev)\Compl\Sdiff_{\With{\Limplp{\Excl X}{Y}}{X}}\\
    &=(\Sfun\Evlin)
      \Compl\Sfun\Tensp{\Der{\Limpl{\Excl X}{Y}}}{\Excl X}
      \Compl\Sfun\Invp\Seelyt
      \Compl\Sdiff_{\With{\Limplp{\Excl X}{Y}}{X}}
    \text{\quad by def.~of }\Ev\\
    &=(\Sfun\Evlin)
      \Compl\Sfun\Tensp{\Der{\Limpl{\Excl X}{Y}}}{\Excl X}
      \Compl\Smont_{\Excl{\Limplp{\Excl X}{Y},\Excl X}}
      \Compl\Tensp{\Sdiff_{\Limpl{\Excl X}Y}}{\Sdiff_X}
      \Compl\Invp\Seelyt
      \text{\quad by~\ref{ax:daxwith} 
      }\\
    &=(\Sfun\Evlin)
      \Compl\Smont_{{\Limplp{\Excl X}{Y},\Excl X}}
      \Compl\Tensp{\Sfun\Der{\Limpl{\Excl X}{Y}}}{\Sfun\Excl X}
      \Compl\Tensp{\Sdiff_{\Limpl{\Excl X}Y}}{\Sdiff_X}
      \Compl\Invp\Seelyt\\
    &=(\Sfun\Evlin)
      \Compl\Smont_{{\Limplp{\Excl X}{Y},\Excl X}}
      \Compl\Tensp{\Der{\Sfun\Limplp{\Excl X}{Y}}}{\Sdiff_X}
      \Compl\Invp\Seelyt
      \text{\quad by~\ref{ax:daxchain} 
      }\\
    &=(\Sfun\Evlin)
      \Compl\Sfunadd
      \Compl(\Sfun\Sstr^0_{\Comma{\Limplp{\Excl X}{Y}}{\Excl X}})
      \Compl\Sstr^1_{\Limpl{\Excl X}{\Sfun Y},\Excl X}
      \Compl\Tensp{\Der{\Sfun\Limplp{\Excl X}{Y}}}{\Sdiff_X}
      \Compl\Invp\Seelyt
    \text{\quad by def.~of }\Smont\\
    &=\Sfunadd
      \Compl(\Sfun^2\Evlin)
      \Compl(\Sfun\Sstr^0_{\Comma{\Limplp{\Excl X}{Y}}{\Excl X}})
      \Compl\Sstr^1_{\Limpl{\Excl X}{\Sfun Y},\Excl X}
      \Compl\Tensp{\Der{\Sfun\Limplp{\Excl X}{Y}}}{\Sdiff_X}
      \Compl\Invp\Seelyt\\
    &=\Sfunadd
      \Compl(\Sfun\Evlin)
      \Compl\Sstr^1_{\Limpl{\Excl X}{\Sfun Y},\Excl X}
      \Compl\Tensp{\Der{\Sfun\Limplp{\Excl X}{Y}}}{\Sdiff_X}
      \Compl\Invp\Seelyt
  \end{align*}
  by the identification
  $\Sfun\Limplp{\Excl X}{Y}=\Limplp{\Excl X}{\Sfun Y}$. On the other
  hand we have
  \begin{align*}
    (\Sdfun\Ev)\Comp\Sdfstr^1_{\Simpl X{\Sdfun Y},X}
    &=(\Sfun\Evlin)
      \Compl\Smont_{{\Limplp{\Excl X}{\Sfun Y},\Excl X}}
      \Compl\Tensp{\Der{\Sfun\Limplp{\Excl X}{\Sfun Y}}}{\Sdiff_X}
      \Compl\Invp\Seelyt
      \Compl\Excl{\Sdfstr^1_{\Simpl X{\Sdfun Y},X}}
    \text{\quad as above}\\
    &=(\Sfun\Evlin)
      \Compl\Smont_{{\Limplp{\Excl X}{\Sfun Y},\Excl X}}
      \Compl\Tensp{\Der{\Sfun\Limplp{\Excl X}{\Sfun Y}}}{\Sdiff_X}
      \Compl\Invp\Seelyt
      \Compl\Excl{\Withp{\Sin0}{\Sfun X}}\\
      &\text{\Textsep by def.~of }\Sdfstr^1_{\Simpl X{\Sdfun Y},X}\\
    &=(\Sfun\Evlin)
      \Compl\Smont_{{\Limplp{\Excl X}{\Sfun Y},\Excl X}}
      \Compl\Tensp{\Der{\Sfun\Limplp{\Excl X}{\Sfun Y}}}{\Sdiff_X}
      \Compl{\Tensp{\Excl{\Sin0}}{\Excl{\Sfun X}}}
      \Compl\Invp\Seelyt\\
    &=(\Sfun\Evlin)
      \Compl\Smont_{{\Limplp{\Excl X}{\Sfun Y},\Excl X}}
      \Compl{\Tensp{\Sin0}{\Sdiff_X}}
      \Compl\Tensp{\Der{\Limplp{\Excl X}{\Sfun Y}}}{\Excl{\Sfun X}}
      \Compl\Invp\Seelyt\\
    &=(\Sfun\Evlin)
      \Compl\Sfunadd
      \Compl(\Sfun\Sstr^1_{\Limpl{\Excl X}{\Sfun Y},\Excl X})
      \Compl\Sstr^0_{\Limpl{\Excl X}{\Sfun Y},\Sfun\Excl X}
      \Compl{\Tensp{\Sin0}{\Sfun\Excl X}}\\
    &\hspace{4em}
      \Compl{\Tensp{\Limplp{\Excl X}{\Sfun Y}}{\Sdiff_X}}
      \Compl\Tensp{\Der{\Limplp{\Excl X}{\Sfun Y}}}{\Excl{\Sfun X}}
      \Compl\Invp\Seelyt
    \text{\quad by def.~of }\Smont\\
    &=(\Sfun\Evlin)
      \Compl\Sfunadd
      \Compl(\Sfun\Sstr^1_{\Limpl{\Excl X}{\Sfun Y},\Excl X})
      \Compl\Sin0
      \Compl\Tensp{\Der{\Limplp{\Excl X}{\Sfun Y}}}{\Sdiff_X}
      \Compl\Invp\Seelyt\text{\ \ by Lemma~\ref{lemma:sstr-sin0}}\\
    &=(\Sfun\Evlin)
      \Compl\Sfunadd
      \Compl\Sin0
      \Compl\Sstr^1_{\Limpl{\Excl X}{\Sfun Y},\Excl X}
      \Compl\Tensp{\Der{\Limplp{\Excl X}{\Sfun Y}}}{\Sdiff_X}
      \Compl\Invp\Seelyt\\
    &=(\Sfun\Evlin)
      \Compl\Sstr^1_{\Limpl{\Excl X}{\Sfun Y},\Excl X}
      \Compl\Tensp{\Der{\Limplp{\Excl X}{\Sfun Y}}}{\Sdiff_X}
      \Compl\Invp\Seelyt
  \end{align*}
  which proves our contention.
\end{proof}

\subsection{The case of multilinear morphisms} %
\label{sec:diff-multilinear}
For each $i\in\{0,\dots,n\}$ we can define a tensorial generalized
strength
\[
  \Sstr^i_{\List X0n}
  \in\cL(X_0\ITens\cdots\ITens\Sfun{X_i}\ITens
  \cdots\ITens
  X_n,\Sfun{(X_0\ITens\cdots\ITens X_n)})\,.
\]

Let $l\in\cL(X_0\ITens\cdots\ITens X_n,Y)$, then we define
$\Mlin l\in\Kl\cL(X_0\IWith\cdots\IWith X_n,Y)$ as the following
composition of morphisms
\begin{center}
  \begin{tikzcd}
    \Exclp{X_0\IWith\cdots\IWith X_n}
    \ar[r,"\Inv{(\Seely^n)}"]
    &\Excl{X_0}\ITens\cdots\ITens\Excl{X_n}
    \ar[r,"\Der{X_0}\ITens\cdots\ITens\Der{X_n}"]
    &[3.8em]X_0\ITens\cdots\ITens X_n
    \ar[r,"l"]
    &[-1em]Y    
  \end{tikzcd}
\end{center}
A morphism in $\Kl\cL(X_0\IWith\cdots\IWith X_n,Y)$ definable in that
way can be called an \emph{$n+1$-linear} morphism (that is, a
multilinear morphisms with $n+1$ arguments) for the following reason.

\begin{lemma}\label{lemma:Kleisli-multilin}
  With these notations, we have
  \[
    \Mlin l\Comp(X_0\IWith\cdots\IWith X_{i-1}\IWith 0\IWith
    X_{i+1}\cdots\IWith X_n)=0
  \] and if
  $f_0,f_1\in\cL(Z,X_i)$ are summable then so are
  \[
    \Mlin l\Comp(X_0\IWith\cdots\IWith X_{i-1}\IWith f_j\IWith
    X_{i+1}\cdots\IWith X_n)
  \] for $j=0,1$ and
  \begin{multline*}
    \Mlin l\Comp(X_0\IWith\cdots\IWith X_{i-1}\IWith(f_0+f_1)\IWith
    X_{i+1}\cdots\IWith X_n)\\
    =\Mlin l\Comp(X_0\IWith\cdots\IWith
    X_{i-1}\IWith f_0\IWith X_{i+1}\cdots\IWith X_n)\\
    +\Mlin
    l\Comp(X_0\IWith\cdots\IWith X_{i-1}\IWith f_1\IWith
    X_{i+1}\cdots\IWith X_n)\,.
  \end{multline*}
\end{lemma}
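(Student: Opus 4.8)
The plan is to reduce all three assertions to a single \emph{master identity} describing how $\Mlin l$ absorbs a ``linear'' product morphism in its $i$-th slot. Writing $\pi_f=X_0\IWith\cdots\IWith f\IWith\cdots\IWith X_n\in\cL(X_0\IWith\cdots\IWith Z\IWith\cdots\IWith X_n,X_0\IWith\cdots\IWith X_n)$ for the $\cL$-product morphism carrying $f\in\cL(Z,X_i)$ in the $i$-th position and identities elsewhere, I claim that
\begin{align*}
  \Mlin l\Comp\Kllin(\pi_f)
  =\Mlin{l\Compl(X_0\ITens\cdots\ITens f\ITens\cdots\ITens X_n)}\,,
\end{align*}
where the tensor morphism on the right again carries $f$ in slot $i$ and identities elsewhere. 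Since the morphism appearing in the statement is precisely $\pi_f$ regarded as a Kleisli morphism through $\Kllin$, this identity is all that is needed.

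To prove the master identity I would first use the identity $g\Comp\Kllin(h)=g\Compl\Excl h$ established above to rewrite the left-hand side as $\Mlin l\Compl\Excl{\pi_f}$, moving the whole computation into $\cL$. Expanding the definition of $\Mlin l$ gives
\begin{align*}
  \Mlin l\Compl\Excl{\pi_f}
  =l\Compl(\Der{X_0}\ITens\cdots\ITens\Der{X_n})
  \Compl\Inv{(\Seely^n)}\Compl\Excl{\pi_f}\,.
\end{align*}
Next I would push $\Excl{\pi_f}$ through the Seely isomorphism: by naturality of the $n$-ary Seely iso $\Seely^n$ (a consequence of the naturality of the binary $\Seelyt$) one has $\Inv{(\Seely^n)}\Compl\Excl{\pi_f}=(\Excl{X_0}\ITens\cdots\ITens\Excl f\ITens\cdots\ITens\Excl{X_n})\Compl\Inv{(\Seely^n)}$, the right-hand Seely being taken at the source object. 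Functoriality of the $n$-fold tensor together with the naturality of dereliction, $\Der{X_i}\Compl\Excl f=f\Compl\Der Z$, then rewrites the composite of the two middle layers as $(X_0\ITens\cdots\ITens f\ITens\cdots\ITens X_n)\Compl(\Der{X_0}\ITens\cdots\ITens\Der Z\ITens\cdots\ITens\Der{X_n})$. Absorbing the leading tensor morphism into $l$ and recognising the remaining composite as the defining expression of $\Mlin{\,\cdot\,}$ yields the claimed right-hand side.

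With the master identity in hand the three statements follow formally. For the first, taking $f=0$ makes $X_0\ITens\cdots\ITens 0\ITens\cdots\ITens X_n=0$, since the tensor preserves zero morphisms by \Saxdist, whence $l\Compl(\cdots)=0$ and $\Mlin 0=0$. For the last two, when $f_0,f_1\in\cL(Z,X_i)$ are summable, \Saxdist{} guarantees that $X_0\ITens\cdots\ITens f_0\ITens\cdots$ and $X_0\ITens\cdots\ITens f_1\ITens\cdots$ are summable with sum $X_0\ITens\cdots\ITens(f_0+f_1)\ITens\cdots$; post-composing with $l$ preserves summability and the sum because $\cL$ is partially additive, and $\Mlin{\,\cdot\,}$ preserves sums as it is of the form $l\mapsto l\Compl M$ with $M=(\Der{X_0}\ITens\cdots\ITens\Der{X_n})\Compl\Inv{(\Seely^n)}$ fixed. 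Applying the master identity to $f_0$, $f_1$ and $f_0+f_1$ then delivers both the summability of the two composites and the required additivity. The only delicate point is the bookkeeping in the middle paragraph, namely invoking naturality of the $n$-ary Seely isomorphism and tracking which slot carries the non-identity factor; everything else is a routine use of functoriality, naturality of $\Der{}$, and the partial-additivity axioms.
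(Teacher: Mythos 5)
Your proof is correct and follows essentially the same route as the paper: the paper likewise reduces all three claims to the identity $\Mlin l\Comp(X_0\IWith\cdots\IWith f\IWith\cdots\IWith X_n)=\Mlin{l\Compl(X_0\ITens\cdots\ITens f\ITens\cdots\ITens X_n)}$ and then concludes via \Saxdist{} together with the facts that composition preserves partial sums and that $0$ is absorbing for $\ITens$ and composition. The only difference is that you spell out the naturality computation (Seely isomorphism and dereliction) establishing that identity, which the paper asserts without proof.
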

\begin{proof}
  For the summability, we have
  $\Mlin l\Comp(X_0\IWith\cdots\IWith X_{i-1}\IWith f_j\IWith
  X_{i+1}\cdots\IWith X_n)=\Mlin{l_j}$ where
  $l_j=l\Compl(X_0\ITens\cdots\ITens f_j\ITens\cdots\ITens X_n)$ and
  since $f_0,f_1$ are summable so are
  $X_0\ITens\cdots\ITens f_j\ITens\cdots\ITens X_n$ for $j=0,1$ %
  by \Saxdist{} of~\cite{Ehrhard23a} with
  $X_0\ITens\cdots\ITens(f_0+f_1)\ITens\cdots\ITens X_n$ as sum. The
  result follows by
  Lemma~12 %
  of~\cite{Ehrhard23a}. For the property relative to $0$ we use similarly
  the fact that $0$ is absorbing for $\ITens$ and for composition in
  $\cL$.
\end{proof}

\begin{theorem} %
  \label{th:sdiff-multilin}
  Let \(f\in\cL(X_0\IWith\cdots\IWith X_n,Y)\) be $n+1$-linear. %
  Then
  \(
  \Sdfun f\in\Kl\cL(\Sdfun X_0\IWith\cdots\IWith\Sdfun X_n,\Sdfun Y)
  \) %
  is also $n+1$-linear.
\end{theorem}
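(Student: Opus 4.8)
The plan is to prove the theorem by exhibiting, for an $n+1$-linear $f$, an explicit linear morphism making $\Sdfun f$ itself $n+1$-linear. By definition of $n+1$-linearity we may write $f=\Mlin l$ for some $l\in\cL(X_0\ITens\cdots\ITens X_n,Y)$, and setting $X=X_0\IWith\cdots\IWith X_n$ I claim that
\begin{equation*}
  \Sdfun{(\Mlin l)}=\Mlin{\tilde l}
  \text{\quad where\quad}
  \tilde l=(\Sfun l)\Compl\Smont_n\in\cL(\Sfun X_0\ITens\cdots\ITens\Sfun X_n,\Sfun Y)
\end{equation*}
and $\Smont_n$ is the generalized lax monoidality morphism of Section~\ref{sec:partial-derivatives}. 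Since $\Sdfun X_i=\Sfun X_i$ on objects, $\Mlin{\tilde l}$ belongs to $\Kl\cL(\Sdfun X_0\IWith\cdots\IWith\Sdfun X_n,\Sdfun Y)$, so establishing this identity is exactly the assertion that $\Sdfun f$ is $n+1$-linear.

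The first ingredient is the $n$-ary generalization of \Daxwith{}, namely
\begin{equation*}
  \Sdiff_X=(\Sfun\Seely^n)\Compl\Smont_n
  \Compl(\Sdiff_{X_0}\ITens\cdots\ITens\Sdiff_{X_n})\Compl\Inv{(\Seely^n)},
\end{equation*}
where $\Seely^n$ is the generalized Seely isomorphism used in the definition of $\Mlin l$ and the $\Smont_n$ on the right is taken at the objects $\Excl{X_0},\dots,\Excl{X_n}$. I would prove this by induction on $n$: the base case is precisely the reformulation of \Daxwith{} already established in the proof of Theorem~\ref{th:sdfstr-sstr-comm}, and the inductive step combines the inductive definitions of $\Seely^n$ and of $\Smont_n$ with the naturality and associativity coherences of the Seely isomorphisms and of $\Smont$. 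This is the step I expect to be the main obstacle, as it requires tracking the $n$-ary monoidal coherence of both $\Excl\_$ and $\Sfun$ that the paper mostly leaves implicit.

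Granting this lemma, the remaining computation is short. Expanding $\Sdfun{(\Mlin l)}=(\Sfun{\Mlin l})\Compl\Sdiff_X$ together with the definition of $\Mlin l$, and substituting the $n$-ary \Daxwith{}, the adjacent factors $\Sfun\Inv{(\Seely^n)}\Compl(\Sfun\Seely^n)$ cancel by functoriality, leaving
\begin{equation*}
  \Sdfun{(\Mlin l)}=(\Sfun l)\Compl\Sfun(\Der{X_0}\ITens\cdots\ITens\Der{X_n})
  \Compl\Smont_n\Compl(\Sdiff_{X_0}\ITens\cdots\ITens\Sdiff_{X_n})\Compl\Inv{(\Seely^n)}.
\end{equation*}
Naturality of $\Smont_n$ applied to the morphisms $\Der{X_i}$ moves $\Sfun(\Der{X_0}\ITens\cdots\ITens\Der{X_n})$ across $\Smont_n$, turning the prefix into $\Smont_n\Compl(\Sfun\Der{X_0}\ITens\cdots\ITens\Sfun\Der{X_n})$; then the first triangle of \Daxchain{} gives $\Sfun\Der{X_i}\Compl\Sdiff_{X_i}=\Der{\Sfun X_i}$ for each $i$, whence
\begin{equation*}
  \Sdfun{(\Mlin l)}=(\Sfun l)\Compl\Smont_n
  \Compl(\Der{\Sfun X_0}\ITens\cdots\ITens\Der{\Sfun X_n})\Compl\Inv{(\Seely^n)}
  =\Mlin{\tilde l},
\end{equation*}
as desired. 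Beyond the $n$-ary \Daxwith{} lemma, the only care needed is the bookkeeping of the strict preservation $\Excl{\Sfun X}=\Excl{(\Sfun X_0\IWith\cdots\IWith\Sfun X_n)}$ when matching the domains in the composites.
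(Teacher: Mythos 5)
Your proposal is correct and takes essentially the same route as the paper: the paper's proof (stated for $n=1$ and left as a one-line computation) also writes $f=\Mlin l$, sets $l'=(\Sfun l)\Compl\Smont_{X_0,X_1}$, and invokes \Daxwith{} and \Daxchain{} to conclude $\Mlin{l'}=\Sdfun f$. You have merely carried out, in full $n$-ary generality, the cancellation of Seely isomorphisms, the naturality of $\Smont$, and the dereliction triangle of \Daxchain{} that the paper leaves implicit.
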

\begin{proof}
  We assume $n=1$ for the sake of readability, the general case is not
  more difficult. Let $l\in\cL(\Tens{X_0}{X_1},Y)$ be such that %
  $f=\Mlin l$. Then we set %
  \( l'=(\Sfun l)\Compl\Smont_{X_0,X_1} \in\cL(\Tens{\Sfun X_0}{\Sfun
    X_1},\Sfun Y) \) %
  and using~\ref{ax:daxwith} 
  and~\ref{ax:daxchain} 
  one shows that %
  \(\Mlin{l'}=\Sdfun f\).
\end{proof}

\begin{lemma}\label{lemma:sproj-sdiff-Seely}
  Given $\List X0n\in\Obj\cL$ and $X=X_0\IWith\cdots\IWith X_n$, we have
  \begin{align*}
    \Sproj 0\Compl\Sdiff_X\Compl\Seely^n
    &=\Seely^n\Compl(\Excl{\Sproj 0}\ITens\cdots\ITens\Excl{\Sproj 0})\\
    \Sproj 1\Compl\Sdiff_X\Compl\Seely^n
    &=\Seely^n\Compl((\Sproj 1\Compl\Sdiff_{X_0})\ITens\Excl{\Sproj 0}\ITens
      \cdots\ITens\Excl{\Sproj 0}
      +\cdots
      +\Excl{\Sproj 0}\ITens\cdots\ITens\Excl{\Sproj 0}
      \ITens(\Sproj 1\Compl\Sdiff_{X_0}))\,.
  \end{align*}
  which both belong to
  $\cL(\Excl{\Sfun{X_0}}\ITens\cdots\ITens\Excl{\Sfun{X_n}},\Excl{X})$.
\end{lemma}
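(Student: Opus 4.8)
The plan is to reduce both identities to an $(n{+}1)$-factor version of the axiom \Daxwith{} combined with the already-established behaviour of $\Sproj 0$ and $\Sproj 1$ on the generalized lax monoidality $\Smont_n$. Concretely, I would first establish the factorization
\begin{equation*}
  \Sdiff_X\Compl\Seely^n
  =(\Sfun\Seely^n)\Compl\Smont_n
  \Compl(\Sdiff_{X_0}\ITens\cdots\ITens\Sdiff_{X_n})\,,
\end{equation*}
where on the left $\Seely^n$ is the Seely iso of the family $(\Sfun X_i)_i$ (with codomain $\Excl{\Sfun X}$, using that $\Sfun$ preserves $\IWith$ strictly in the sense of~\Eqref{eq:Sfun-preserves-With}) and on the right $\Seely^n$ is the Seely iso of the family $(X_i)_i$. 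The two-factor case is exactly \Daxwith{} rewritten as in the first displayed equation of the proof of Theorem~\ref{th:sdfstr-sstr-comm} (recall that $\Tuple{\Sfun\Proj0,\Sfun\Proj1}$ is the identity under our strictness assumption). The inductive step peels off the last factor, decomposing $\Seely^n$ and $\Smont_n$ according to their inductive definitions and applying the binary \Daxwith{}; the only genuine check is that the associativity and symmetry isomorphisms of $\ITens$ line up on both sides, which follows from the coherence of the Seely isomorphisms and of the strength underlying $\Smont$.

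Granting this factorization, the two equations are obtained by post-composing with $\Sproj 0$ and $\Sproj 1$. For $\Sproj 0$, naturality of $\Sproj 0$ gives $\Sproj 0\Compl(\Sfun\Seely^n)=\Seely^n\Compl\Sproj 0$, and the characterization $\Sproj 0\Compl\Smont_n=\Sproj 0\ITens\cdots\ITens\Sproj 0$ yields
\begin{equation*}
  \Sproj 0\Compl\Sdiff_X\Compl\Seely^n
  =\Seely^n\Compl\bigl((\Sproj 0\Compl\Sdiff_{X_0})\ITens\cdots\ITens(\Sproj 0\Compl\Sdiff_{X_n})\bigr)\,;
\end{equation*}
since $\Sproj 0\Compl\Sdiff_{X_i}=\Excl{\Sproj 0}$ by \Daxlocal{}, this is the first equation. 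For $\Sproj 1$, the same naturality step together with the companion formula $\Sproj 1\Compl\Smont_n=\Sproj 1\ITens\Sproj 0\ITens\cdots\ITens\Sproj 0+\cdots+\Sproj 0\ITens\cdots\ITens\Sproj 0\ITens\Sproj 1$ produces a sum of $n{+}1$ terms. Distributivity of $\ITens$ over sums of morphisms (\Saxdist) and of composition over sums lets me push the $\Sdiff_{X_i}$ inside each summand, and replacing every $\Sproj 0\Compl\Sdiff_{X_j}$ by $\Excl{\Sproj 0}$ (again \Daxlocal) leaves in the $i$-th summand exactly $\Excl{\Sproj 0}\ITens\cdots\ITens(\Sproj 1\Compl\Sdiff_{X_i})\ITens\cdots\ITens\Excl{\Sproj 0}$, which is the second equation.

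I expect the laborious part to be the inductive proof of the generalized \Daxwith{}: all the mathematical content is already contained in the binary axiom and in the formulas for $\Sproj i\Compl\Smont_n$, so once the factorization is in place the post-composition is purely formal. The delicate point in that induction is the coherence bookkeeping for the associators of $\ITens$, and in particular threading the two distinct instances of $\Seely^n$ — for the families $(\Sfun X_i)_i$ and $(X_i)_i$ — consistently through the naturality squares of $\Sproj 0$ and $\Sproj 1$.
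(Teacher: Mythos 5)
Your proposal is correct and takes essentially the same route as the paper: the paper's proof also factors $\Sdiff_X\Compl\Seelyt$ through \Daxwith{} as $(\Sfun\Seelyt)\Compl\Smont\Compl\Tensp{\Sdiff_{X_0}}{\Sdiff_{X_1}}$ and then post-composes with $\Sproj 0$ and $\Sproj 1$, using their naturality, the formulas for $\Sproj i\Compl\Smont$, distributivity over sums, and \Daxlocal{}. The only difference is one of explicitness: the paper writes out only the case $n=1$ and declares the general case "not more difficult", whereas you propose to prove the $n$-ary factorization (the generalized \Daxwith) by induction, which is exactly the bookkeeping the paper leaves implicit.
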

\begin{proof}
  For the sake of readability we assume that $n=1$, the general case
  is not more difficult. By Axiom~\ref{ax:daxwith} 
  we have
  $\Sdiff_X\Compl\Seelyt
  =\Sfun\Seelyt\Compl\Smont_2\Compl\Tensp{\Sdiff_{X_0}}{\Sdiff_{X_1}}$
  hence
  \begin{align*}
    \Sproj 0\Compl\Sdiff_X\Compl\Seelyt
    &=\Sproj 0\Compl\Sfun\Seelyt\Compl\Smont_2
      \Compl\Tensp{\Sdiff_{X_0}}{\Sdiff_{X_1}}\\
    &=\Seelyt\Compl\Sproj0\Compl\Smont_2
      \Compl\Tensp{\Sdiff_{X_0}}{\Sdiff_{X_1}}\\
    &=\Seelyt\Compl\Tensp{\Sproj0}{\Sproj0}
      \Compl\Tensp{\Sdiff_{X_0}}{\Sdiff_{X_1}}\\
    &=\Seelyt\Tensp{\Excl{\Sproj 0}}{\Excl{\Sproj 0}}
  \end{align*}
  and
  \begin{align*}
    \Sproj1\Compl\Sdiff_X\Compl\Seelyt
    &=\Seelyt\Compl\Sproj1\Compl\Smont_2
      \Compl\Tensp{\Sdiff_{X_0}}{\Sdiff_{X_1}}\\
    &=\Seelyt\Compl(\Tens{\Sproj1}{\Sproj0}+\Tens{\Sproj0}{\Sproj1})
      \Compl\Tensp{\Sdiff_{X_0}}{\Sdiff_{X_1}}\\
    &=\Seelyt\Compl(\Tens{(\Sproj1\Compl\Sdiff_{X_0})}{\Excl{\Sproj0}}
      +\Tens{\Excl{\Sproj0}}{(\Sproj1\Compl\Sdiff_{X_1})})\,.
      \qedhere
  \end{align*}
\end{proof}

\begin{theorem}\label{th:sproj-sdfun-mlin}
  With the same notations,
  $\Sdfun\Mlin l\in\Kl\cL(\Sdfun X_0\IWith\cdots\IWith\Sdfun
  X_n,\Sdfun Y)$ satisfies
  \begin{align*}
    \Sproj0\Comp\Sdfun\Mlin l
    &=\Mlin l\Comp(\Sproj0\IWith\cdots\IWith\Sproj 0)\\
    \Sproj1\Comp\Sdfun\Mlin l
    &=\Mlin l\Comp(\Sproj1\IWith\cdots\IWith\Sproj 0)
    +\cdots+\Mlin l\Comp(\Sproj0\IWith\cdots\IWith\Sproj 1)\,.
  \end{align*}
\end{theorem}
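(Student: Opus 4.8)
The plan is to reduce both equalities to composites in $\cL$ and to verify them after precomposing with the generalised Seely isomorphism. Write $X=X_0\IWith\cdots\IWith X_n$, so that by our strictness assumption~\Eqref{eq:Sfun-preserves-With} we have $\Sdfun X=\Sfun X=\Sfun X_0\IWith\cdots\IWith\Sfun X_n$. Since $\Sproj i$ is the image under $\Kllin$ of a linear morphism, post-composition in $\Kl\cL$ by it coincides with composition in $\cL$; hence, unfolding the definition of $\Sdfun$ on morphisms and then using the naturality of $\Sproj i\colon\Sfun\Tonat\Id$, I would obtain
\begin{align*}
  \Sproj i\Comp\Sdfun\Mlin l
  =\Sproj i\Compl(\Sfun\Mlin l)\Compl\Sdiff_X
  =\Mlin l\Compl\Sproj i\Compl\Sdiff_X\,,
\end{align*}
where the rightmost $\Sproj i$ is taken at $\Excl X$. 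Both this morphism and the claimed right-hand side lie in $\cL(\Excl{\Sfun X},Y)$, and since the generalised Seely iso $\Seely^n_{\Sfun X_0,\dots,\Sfun X_n}$ is invertible it suffices to prove each equality after precomposing with it.

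For the left-hand sides I would apply Lemma~\ref{lemma:sproj-sdiff-Seely}: precomposing with $\Seely^n_{\Sfun X_0,\dots,\Sfun X_n}$ turns the factor $\Sproj i\Compl\Sdiff_X\Compl\Seely^n_{\Sfun X_0,\dots,\Sfun X_n}$ into $\Seely^n_{X_0,\dots,X_n}$ post-composed with $\Excl{\Sproj0}\ITens\cdots\ITens\Excl{\Sproj0}$ when $i=0$, and with the sum whose $j$-th term is $\Excl{\Sproj0}\ITens\cdots\ITens(\Sproj1\Compl\Sdiff_{X_j})\ITens\cdots\ITens\Excl{\Sproj0}$ (distinguished factor in position $j$) when $i=1$. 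Since $\Mlin l\Compl\Seely^n_{X_0,\dots,X_n}=l\Compl(\Der{X_0}\ITens\cdots\ITens\Der{X_n})$ by definition of $\Mlin l$ and $\Inv{(\Seely^n)}\Compl\Seely^n=\Id$, each contribution reduces to $l\Compl(\Der{X_0}\ITens\cdots\ITens\Der{X_n})$ composed with the relevant tensor.

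For the right-hand sides, each morphism $\Sproj0\IWith\cdots\IWith\Sproj1\IWith\cdots\IWith\Sproj0$ is $\Kllin$ of a linear map, so $\Mlin l\Comp(\Sproj0\IWith\cdots\IWith\Sproj1\IWith\cdots\IWith\Sproj0)=\Mlin l\Compl\Exclp{\Sproj0\IWith\cdots\IWith\Sproj1\IWith\cdots\IWith\Sproj0}$; using the naturality of the Seely iso to commute $\Inv{(\Seely^n_{X_0,\dots,X_n})}$ past $\oc(-)$, and again $\Inv{(\Seely^n)}\Compl\Seely^n=\Id$, precomposition with $\Seely^n_{\Sfun X_0,\dots,\Sfun X_n}$ turns this term into $l\Compl(\Der{X_0}\ITens\cdots\ITens\Der{X_n})\Compl(\Excl{\Sproj0}\ITens\cdots\ITens\Excl{\Sproj1}\ITens\cdots\ITens\Excl{\Sproj0})$. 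The well-definedness (summability) of the $n+1$ summands follows from the multilinearity of $\Mlin l$ (Lemma~\ref{lemma:Kleisli-multilin}), or directly from the fact that they are obtained by post-composing the single well-defined morphism of the second clause of Lemma~\ref{lemma:sproj-sdiff-Seely}, composition in $\cL$ being compatible with the partial addition.

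It then remains to match the two presentations factor by factor. For $i=0$ they coincide verbatim. The hard part is the case $i=1$, where comparing the $j$-th summands reduces to the single identity
\begin{align*}
  \Der{X_j}\Compl(\Sproj1\Compl\Sdiff_{X_j})=\Der{X_j}\Compl\Excl{\Sproj1}\,,
\end{align*}
which I expect to be the crux of the argument: both sides equal $\Sproj1\Compl\Der{\Sfun X_j}$, the left one by the first triangle of \Daxchain{} (namely $\Sfun\Der{X_j}\Compl\Sdiff_{X_j}=\Der{\Sfun X_j}$) together with the naturality of $\Sproj1$, and the right one by the naturality of dereliction. Because $\Der{X_0}\ITens\cdots\ITens\Der{X_n}$ is a tensor, this local identity propagates through the $j$-th tensor factor while the remaining factors carry $\Excl{\Sproj0}$ on both sides and match trivially, yielding the equality of the $j$-th summands and hence of the full sums.
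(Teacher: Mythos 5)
Your proposal is correct and takes essentially the same route as the paper, whose proof is exactly the one-liner ``immediate consequence of Lemma~\ref{lemma:sproj-sdiff-Seely} and of the naturality of the $\Sproj j$'s wrt.~$\Sfun$ in $\cL$''. Your unfolding of that one-liner, including the crux identity $\Der{X_j}\Compl\Sproj 1\Compl\Sdiff_{X_j}=\Der{X_j}\Compl\Excl{\Sproj 1}$ justified by \Daxchain{} and naturality of dereliction, coincides with the detailed computation the paper itself carries out in the proof of the subsequent Theorem~\ref{th:diffpart-multilin}.
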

\begin{proof}
  Immediate consequence of Lemma~\ref{lemma:sproj-sdiff-Seely} and of
  the naturality of the $\Sproj j$'s wrt.~$\Sfun$ in $\cL$.
\end{proof}

For $i\in\Eset{1,\dots,n}$, we have
\(
(\Sfun l)\Compl\Sstr^i
\in\cL(X_0\ITens\cdots\ITens\Sfun{X_i}\ITens\cdots\ITens
X_n,\Sfun Y)
\). %
Remember that $\Sproj j\in\cL(\Sfun X,X)$. Given a
``linear'' morphism $h\in\cL(X,Y)$, we use the same notation $h$ for
the corresponding morphism $\Kllin h=h\Compl\Der X\in\Kl\cL(X,Y)$.

Let $l\in\cL(X_0\ITens\cdots\ITens X_n,Y)$, since $\Mlin l$ is
multilinear, its partial derivatives should be ``trivial'', the
purpose of the next result is to state precisely this
triviality. Given $i\in\Eset{0,\dots,n}$, we define the $i$-th
``partial application'' of the functor $\Sfun$ to $l$ as
$\Sfunpart il=(\Sfun l)\Compl\Sstr^i\in\cL(X_0\ITens\cdots\ITens\Sfun
X_i\ITens\cdots\ITens X_n,\Sfun Y)$.
\begin{theorem}\label{th:diffpart-multilin}
  For each $i\in\{0,\dots,n\}$ we have
  $\Sdfunpart i{\Mlin l}=\Mlin{\Sfunpart il}$ and for
  $j\in\Eset{0,1}$, we have
  $\Sproj j\Comp\Sdfunpart i{\Mlin l}=\Mlin
  l\Comp(X_0\IWith\cdots\IWith\Sproj j\IWith\cdots\IWith X_n)$.
\end{theorem}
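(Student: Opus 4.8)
The plan is to derive both equalities from the explicit description of $\Sdfun\Mlin l$ furnished by the proof of Theorem~\ref{th:sdiff-multilin}, combined with the characterizing equations of the generalized lax monoidality $\Smont_n$ and of the tensorial strength $\Sstr^i$. I would first recall from (the proof of) Theorem~\ref{th:sdiff-multilin} that $\Sdfun\Mlin l=\Mlin{l'}$ where $l'=(\Sfun l)\Compl\Smont_n\in\cL(\Sfun X_0\ITens\cdots\ITens\Sfun X_n,\Sfun Y)$ is again multilinear. Since $\Sdfstr^i$ is linear, being the $\Kllin$-image of the morphism $\sigma=\Sin0\IWith\cdots\IWith X_i\IWith\cdots\IWith\Sin0$ displayed in~\eqref{eq:sdfstrgen-def}, I would compute $\Sdfunpart i{\Mlin l}=\Mlin{l'}\Comp\Kllin\sigma=\Mlin{l'}\Compl\Excl\sigma$ using the Lemma $g\Comp\Kllin(f)=g\Compl\Excl f$; then naturality of $\Seely^n$ and of $\Der{}$ turns this into $\Mlin{l'\Compl(\Sin0\ITens\cdots\ITens\Sfun X_i\ITens\cdots\ITens\Sin0)}$, the $i$-th factor being the identity on $\Sfun X_i$.

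The crux is then the identity
\[
  \Smont_n\Compl(\Sin0\ITens\cdots\ITens\Sfun X_i\ITens\cdots\ITens\Sin0)=\Sstr^i\,,
\]
which I would prove by composing both sides with $\Sproj0$ and $\Sproj1$ and invoking joint monicity. For $\Sproj0$ one uses $\Sproj0\Compl\Smont_n=\Sproj0\ITens\cdots\ITens\Sproj0$ together with $\Sproj0\Compl\Sin0=\Id$, recovering $\Sproj0\Compl\Sstr^i$. For $\Sproj1$ one uses $\Sproj1\Compl\Smont_n=\sum_k\Sproj0\ITens\cdots\ITens\Sproj1\ITens\cdots\ITens\Sproj0$ (the sum ranging over the position $k$ of $\Sproj1$); in each summand with $k\neq i$ the factor $\Sproj1\Compl\Sin0=0$ kills the term, so only the $i$-th summand survives and gives $\Sproj1\Compl\Sstr^i$. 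As $\Sfunpart il=(\Sfun l)\Compl\Sstr^i$, this yields $\Sdfunpart i{\Mlin l}=\Mlin{\Sfunpart il}$, the first equality.

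For the second equality I would postcompose the two formulas of Theorem~\ref{th:sproj-sdfun-mlin} for $\Sproj j\Comp\Sdfun\Mlin l$ with $\Sdfstr^i$, reading off the factorwise action of $\Sdfstr^i$ from~\eqref{eq:sdfstrgen-def} via $\Sproj0\Compl\Sin0=\Id$ and $\Sproj1\Compl\Sin0=0$. For $j=0$ the composite $(\Sproj0\IWith\cdots\IWith\Sproj0)\Compl\Sdfstr^i$ collapses directly to $X_0\IWith\cdots\IWith\Sproj0\IWith\cdots\IWith X_n$ with $\Sproj0$ in position $i$. For $j=1$, every summand of Theorem~\ref{th:sproj-sdfun-mlin} indexed by $k\neq i$ acquires a $0$ in its $k$-th $\IWith$-component (since $\Sproj1\Compl\Sin0=0$ there) and therefore vanishes after composition with $\Mlin l$ by the $0$-absorption clause of Lemma~\ref{lemma:Kleisli-multilin}; only the $i$-th summand remains, producing $\Mlin l\Comp(X_0\IWith\cdots\IWith\Sproj1\IWith\cdots\IWith X_n)$.

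The main obstacle throughout is purely bookkeeping rather than conceptual: because $\IWith$ is not a multilinear operation on morphisms, the off-diagonal summands cannot simply be discarded by naive multilinearity, and one must explicitly cancel them through the clause of Lemma~\ref{lemma:Kleisli-multilin} asserting that $\Mlin l$ composed with a tuple having $0$ in one $\IWith$-component is $0$. Beyond this, and the two characterizing equations together with joint monicity of $\Sproj0,\Sproj1$, no further categorical input is needed.
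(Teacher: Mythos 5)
Your proof is correct, but it follows a genuinely different route from the paper's. The paper proves the second equation \emph{first}, by brute-force unfolding of $\Sdfunpart 1{\Mlin l}$ down to the level of $\Sdiff$, the Seely isomorphisms and derelictions, invoking the axioms \Daxlocal{}, \Daxwith{} and \Daxchain{} directly (the off-diagonal term is killed by an explicit computation showing $\Mlin l\Compl\Excl{\Withp{0}{\Sproj0}}=0$); it then obtains the first equation by computing $\Sproj j\Compl\Mlin{\Sfunpart il}$ and concluding by joint monicity of $\Sproj0,\Sproj1$. You instead work one level up: you reuse the already-established Theorem~\ref{th:sdiff-multilin} (via its explicit witness $\Sdfun\Mlin l=\Mlin{(\Sfun l)\Compl\Smont_n}$) and Theorem~\ref{th:sproj-sdfun-mlin}, so that the only new computation is the identity $\Smont_n\Compl(\Sin0\ITens\cdots\ITens\Sfun X_i\ITens\cdots\ITens\Sin0)=\Sstr^i$, proven by projections and joint monicity; the differentiation axioms never appear explicitly, being encapsulated in the cited theorems. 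This buys modularity and brevity, and isolates the genuinely ``summability-level'' content of the statement, at the price of relying on a fact stated only inside the proof of Theorem~\ref{th:sdiff-multilin} (a move the paper itself makes elsewhere, e.g.\ in Theorem~\ref{th:sdstr-mont-identity}) and on the evident $n{+}1$-ary extension of the projection characterization of $\Sstr^i$, which the paper never states but also uses tacitly. Two small points deserve one extra line each if you write this up: (i) Lemma~\ref{lemma:Kleisli-multilin} as stated requires identities in all slots other than the zeroed one, whereas your off-diagonal summands carry $\Sproj0$ in slot $i$, so you must factor $X_0\IWith\cdots\IWith 0\IWith\cdots\IWith\Sproj0\IWith\cdots\IWith X_n$ as $(X_0\IWith\cdots\IWith 0\IWith\cdots\IWith X_n)\Comp(X_0\IWith\cdots\IWith\Sproj0\IWith\cdots\IWith X_n)$ before applying the lemma; (ii) distributing the precomposition with $\Sdfstr^i$ over the sum in Theorem~\ref{th:sproj-sdfun-mlin} uses that composition commutes with the partially defined sums, which holds in $\Kl\cL$ but should be said.
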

\begin{proof}
  For the sake of readability we take $n=1$. The general case is
  conceptually not more difficult to deal with, just harder to read
  due to cumbersome notations. We first prove the second equation for
  $i=1$ (the case \(i=0\) is similar), namely, in $\cL$:
  \begin{align*}
    \Sproj j\Compl\Sdfunpart 1{\Mlin l}
    =\Mlin l\Compl\Excl{(\With{X_0}{\Sproj j})}
  \end{align*}
  where $l\in\cL(\Tens{X_0}{X_1},Y)$ so that
  $\Mlin l=l\Compl\Tensp{\Der{X_0}}{\Der{X_1}}\Compl\Invp\Seelyt$ is a
  bilinear morphism in $\Kl\cL$. We have
  \begin{align*}
    \Sproj j\Compl\Sdfunpart 1{\Mlin l}
    =\Sproj j
      \Compl\Sfun\Mlin l
      \Compl\Sdiff_{\With{X_0}{X_1}}
      \Compl\Excl{\Sdfstr^1}
    =\Mlin l\Compl\Sproj j
      \Compl \Sdiff_{\With{X_0}{X_1}}
      \Compl\Excl{\Sdfstr^1}\,.
  \end{align*}
  Therefore
  \begin{align*}
    \Sproj 0\Compl\Sdfunpart 1{\Mlin l}
    &=\Mlin l\Compl\Excl{\Withp{\Sproj0}{\Sproj0}}\Compl\Excl{\Sdfstr^1}
      \text{\quad by~\ref{ax:daxlocal} 
      }\\
    &=\Mlin l\Compl\Excl{\Withp{X_0}{\Sproj0}}
  \end{align*}
  since
  $\Withp{\Sproj0}{\Sproj0}\Compl\Sdfstr^1=\Withp{X_0}{\Sproj0}$. Next
  \begin{align*}
    \Sproj 1\Compl\Sdfunpart 1{\Mlin l}
    &=\Mlin l\Compl\Seelyt
      \Compl(\Tensp{(\Sproj1\Compl\Sdiff_{X_0})}{\Excl{\Sproj0}}
      +\Tensp{\Excl{\Sproj0}}{(\Sproj1\Compl\Sdiff_{X_1})})
      \Compl\Invp{\Seelyt}\Compl\Excl{\Sdfstr^1}\\
    &\text{\Textsep by~\ref{ax:daxwith} 
      and def.~of }\Smont\\
    &=l\Compl\Tensp{\Der{X_0}}{\Der{X_1}}
      \Compl\Invp{\Seelyt}
      \Compl\Seelyt
      \Compl(\Tensp{(\Sproj1\Compl\Sdiff_{X_0})}{\Excl{\Sproj0}}
      +\Tensp{\Excl{\Sproj0}}{(\Sproj1\Compl\Sdiff_{X_1})})
      \Compl\Invp{\Seelyt}\Compl\Excl{\Sdfstr^1}\\
    &\text{\Textsep by def.~of }\Mlin l\\
    &=l
      \Compl
      (\Tens{(\Der{X_0}\Compl\Sproj1\Compl\Sdiff_{X_0})}
            {(\Der{X_1}\Compl\Excl{\Sproj0})}
      +\Tens{(\Der{X_0}\Compl\Excl{\Sproj0})}
            {(\Der{X_1}\Compl\Sproj1\Compl\Sdiff_{X_1})})
      \Compl\Compl\Invp{\Seelyt}\Compl\Excl{\Sdfstr^1}\\
    &=l
      \Compl
      (\Tens{(\Sproj1\Compl\Sfun{\Der{X_0}}\Compl\Sdiff_{X_0})}
            {(\Der{X_1}\Compl\Excl{\Sproj0})}\\
    &\Textsep  +\Tens{(\Der{X_0}\Compl\Excl{\Sproj0})}
            {(\Sproj1\Compl\Sfun{\Der{X_1}}\Compl\Sdiff_{X_1})})
      \Compl\Compl\Invp{\Seelyt}\Compl\Excl{\Sdfstr^1}
    \text{\quad by nat.~of }\Sproj1\\
    &=l
      \Compl
      (\Tens{(\Sproj1\Compl\Der{\Sfun X_0})}
            {(\Der{X_1}\Compl\Excl{\Sproj0})}
      +\Tens{(\Der{X_0}\Compl\Excl{\Sproj0})}
            {(\Sproj1\Compl\Der{\Sfun X_1})})
      \Compl\Compl\Invp{\Seelyt}\Compl\Excl{\Sdfstr^1}\\
    &\text{\Textsep by~\ref{ax:daxchain} 
      }\\
    &=l
      \Compl
      (\Tens{(\Der{X_0}\Compl\Excl{\Sproj1})}
            {(\Der{X_1}\Compl\Excl{\Sproj0})}
      +\Tens{(\Der{X_0}\Compl\Excl{\Sproj0})}
            {(\Der{X_1}\Compl\Excl{\Sproj1})})
      \Compl\Compl\Invp{\Seelyt}\Compl\Excl{\Sdfstr^1}\\
    &=\Mlin l
      \Compl\Seelyt
      \Compl(\Tens{\Excl{\Sproj 0}}{\Excl{\Sproj1}}
      +\Tens{\Excl{\Sproj 1}}{\Excl{\Sproj0}})
      \Compl\Compl\Invp{\Seelyt}\Compl\Excl{\Sdfstr^1}\\
    &=\Mlin l
      \Compl(\Excl{\Withp{\Sproj0}{\Sproj1}}+\Excl{\Withp{\Sproj1}{\Sproj0}})
      \Compl\Excl{\Sdfstr^1}\\
    &=\Mlin l\Compl\Excl{\Withp{X_0}{\Sproj 1}}
      +\Mlin l\Compl\Excl{\Withp{0}{\Sproj0}}
  \end{align*}
  since $\Withp{\Sproj0}{\Sproj1}\Sdfstr^1=\Withp{X_0}{\Sproj1}$ and
  $\Withp{\Sproj1}{\Sproj0}\Sdfstr^1=\Withp{0}{\Sproj0}$. Finally we have
  \begin{align*}
    \Mlin l\Compl\Excl{\Withp{0}{\Sproj0}}
    &=l\Compl\Tensp{\Der{X_0}}{\Der{X_1}}
      \Compl\Invp{\Seelyt}
      \Compl\Excl{\Withp{0}{\Sproj0}}\\
    &=l\Compl\Tensp{\Der{X_0}}{\Der{X_1}}
      \Compl\Tensp{\Excl 0}{\Excl{\Sproj0}}
      \Compl\Invp\Seelyt\\
    &=l\Compl\Tensp0{\Sproj0}
      \Compl\Invp\Seelyt=0      
  \end{align*}
  proving our contention. Now we prove the first equation for $i=1$
  (the case \(i=0\) is similar). For $j\in\Eset{0,1}$ we have %
  \begin{align*}
    \Sproj j\Compl\Mlin{(\Sfun l)\Compl\Sstr^1}
    &=\Sproj j
      \Compl(\Sfun l)
      \Compl\Sstr^1
      \Compl\Tensp{\Der{X_0}}{\Der{\Sfun X_1}}
      \Compl\Invp\Seelyt\\
    &=l
      \Compl\Sproj j
      \Compl\Sstr^1
      \Compl\Tensp{\Der{X_0}}{\Der{\Sfun X_1}}
      \Compl\Invp\Seelyt\\
    &=l
      \Compl\Tensp{X_0}{\Sproj j}
      \Compl\Tensp{\Der{X_0}}{\Der{\Sfun X_1}}
      \Compl\Invp\Seelyt
    \text{\quad by def.~of }\Sstr^1\\
    &=l
      \Compl\Tensp{\Der{X_0}}{\Der{X_1}}
      \Compl\Tensp{\Excl{X_0}}{\Excl{\Sproj j}}
      \Compl\Invp\Seelyt\\
    &=l
      \Compl\Tensp{\Der{X_0}}{\Der{X_1}}
      \Compl\Invp\Seelyt
      \Compl\Excl{\Withp{X_0}{\Sproj j}}\\
    &=\Mlin l\Compl\Excl{\Withp{X_0}{\Sproj j}}
    \text{\quad by definition of }\Mlin l\\
    &=\Sproj j\Compl\Sdfunpart 1{\Mlin l}
    \text{\quad as we have just proven,}
  \end{align*}
  which proves our first equation by the fact that $\Sproj0,\Sproj1$
  are jointly epic.
\end{proof}

\subsection{The basic multilinear constructs}
\label{sec:basic-categ-constr}
Now we introduce the multilinear operations which will interpret the
basic constructs of $\PCFD$.
We make the following assumption about $\cL$.
\begin{Axicond}{\Laxint}{ax:laxint}
  The functor $X\mapsto\Plus\Sone X$ from $\cL$ to $\cL$ has an initial
  algebra $\Snat$.
\end{Axicond}
This means that there is a morphism
$\chi\in\cL(\Plus\Sone\Snat,\Snat)$ such that for any
$f\in\cL(\Plus\Sone X,X)$ there is exactly one morphism
$g\in\cL(\Snat,X)$ such that $f\Compl(\Plus\Sone g)=g\Compl\chi$.
We know that there is only one such morphism $\chi$, and that this
morphism is an iso (Lambek's Lemma). We assume that $\chi$ is the
identity to simplify notations, so that $\Snat=\Plus\Sone\Snat$ ``on
the nose''.
Given $f\in\cL(\Plus\Sone X,X)$ we use $\Snatit f$ for the unique
element of $\cL(\Snat,X)$ such that
\( \Snatit f=f\Compl(\Plus\Sone{\Snatit f}) \).

We set $\Ssuc=\Inj 1\in\cL(\Snat,\Plus\Sone\Snat)=\cL(\Snat,\Snat)$
which represents the successor constructor on integers and
$\Szero=\Inj0\in\cL(\Sone,\Snat)$ which represents the zero
constant.
It follows that for each $\nu\in\Nat$ we can define the constants
$\Snum\nu\in\cL(\Sone,\Snat)$ by $\Snum0=\Szero$ and
$\Snum{\nu+1}=\Ssuc{(\Snum\nu)}$.

Next we define the predecessor morphism
$\Spred=\Cotuple{\Inj0,\Snat}\in\cL(\Plus\Sone\Snat,\Snat)$, that is
$\Spred\in\cL(\Snat,\Snat)$.
We have $\Spred\Compl\Snum0=\Snum 0$ and
$\Spred\Compl\Snum{\nu+1}=\Snum\nu$.

Next notice that we have a morphism
$\Coalg\Snat\in\cL(\Snat,\Excl\Snat)$ which turns $\Snat$ into a
$\Excl\_$-coalgebra (that is, an object of $\Em\cL$, the
Eilenberg-Moore category of the comonad $\Excl\_$).
This morphism is $\Coalg\Snat=\Snatit f$ where
$f=\Cotuple{\Excl{\Inj0}\Compl\Coalg\Sone
  ,\Excl\Snat}\in\cL(\Plus\Sone{\Excl\Snat},\Excl\Snat)$ where
$\Coalg\Sone
=\Excl{\Invp\Seelyz}\Compl\Digg\Top\Compl\Seelyz\in\cL(\Sone,\Excl\Sone)$
is the canonical $\Excl\_$-coalgebra structure of $\Sone$.
This allows one in particular to define an erasing morphism
$\Coalgw\Snat=\Weak\Snat\Compl\Coalg\Snat\in\cL(\Snat,\Sone)$ as well
as a duplicating morphism
$\Coalgc\Snat
=\Tensp{\Der\Snat}{\Der\Snat}\Compl\Contr\Snat\Compl\Coalg\Snat
\in\cL(\Snat,\Tens\Snat\Snat)$.
Remember indeed that, for any object \(X\) of \(\cL\), the object
\(\Excl X\) has a canonical structure of cocommutative comonoid with
counit \(\Weak X\in\cL(\Excl X,\Sone)\) and comultiplication
\(\Contr X\in\cL(\Excl X,\Tens{\Excl X}{\Excl X})\).

Given an object $X$ we set
$\Slet=\Evlin\Compl\Sym\Compl\Tensp{\Coalg\Snat}{(\Limpl{\Excl\Snat}{X})}
\in\cL(\Tens\Snat{(\Limpl{\Excl\Snat}{X})},X)$.

Last we define
$\Sif=\Evlin\Tensp g\Snat\in\cL(\Snat\ITens(\With XX),X)$ where
\[
  g=\Cotuple{\Curlinp{\Proj0\Compl\Leftu},
  \Curlinp{\Proj1\Compl\Leftu}\Compl\Coalgw\Snat}
  \in\cL(\Snat,\Limpl{\With XX}{X})
\]
where $\Proj j\Compl\Leftu$ is typed as follows
\begin{center}
  \begin{tikzcd}
    \Tens{\Sone}{\Withp XX}\ar[r,"\Leftu"]
    &\Withp XX\ar[r,"\Proj j"]
    &X
  \end{tikzcd}
\end{center}
so that the two following diagrams commute
\begin{equation}\label{eq:diag-conditional}
  {\footnotesize
    \begin{tikzcd}
    \Tens\Sone{\Withp XX}\ar[r,"\Tens\Szero{\Withp XX}"]
    \ar[d,swap,"\Leftu"]
    &[2em]\Tens\Snat{\Withp XX}\ar[d,"\Sif"]\\
    \With XX\ar[r,"\Proj0"]
    &X
  \end{tikzcd}
  \\
  \begin{tikzcd}
    \Tens\Snat{\Withp XX}\ar[rr,"\Tens\Ssuc{\Withp XX}"]
    \ar[d,swap,"\Tens{\Coalgw\Snat}{\Withp XX}"]
    &&\Tens\Snat{\Withp XX}\ar[d,"\Sif"]\\
    \Tens\Sone{\Withp XX}\ar[r,"\Leftu"]
    &\With XX\ar[r,"\Proj1"]
    &X
  \end{tikzcd}}
\end{equation}

\subsection{Syntactic constructs in the model}
\label{sec:syn-basic-constr}
We introduce now semantic constructs on morphisms which exactly
mimic the syntax so as to make the translation from syntax to
semantics straightforward.

First, given $n\in\Nat$ we also use the notation $\Snum n$ for the
morphism $\Snum n\Compl\Weak{Z}\in\Kl\cL(Z,\Snat)$.

Given $f\in\Kl\cL(Z,\Snat)$ we define
$\Ssuc(f)=\Ssuc\Compl f\in\Kl\cL(Z,\Snat)$ and similarly
$\Spred(f)=\Spred\Compl f$.

More generally given $d\in\Nat$ and $f\in\Kl\cL(Z,\Sdfun^d\Snat)$ we
set %
$\Ssuc^d(f)=(\Sdfun^d\Ssuc)\Comp f\in\Kl\cL(Z,\Sdfun^d\Snat)$. %
We define similarly $\Spred^d(f)\in\Kl\cL(Z,\Sdfun^d\Snat)$.

We have defined %
$\Mlin\Sif\in\Kl\cL(\Snat\IWith\Withp XX,X)$. So we have %
\[
  \Sdfunpart 0^d\Mlin\Sif\in\Kl\cL(\Sdfun^d\Snat\IWith\Withp
  XX,\Sdfun^d X)
\]
(notice that this is not a trilinear morphism, but a bilinear one,
separately linear in $\Sdfun^k\Snat$ and $\With XX$).
Let $g\in\Kl\cL(Z,\Sdfun^d\Snat)$ and $f_j\in\Kl\cL(Z,X)$ for $j=0,1$.
We set %
\[
  \Sif^d(g,f_0,f_1)={\Sdfunpart0^d\Mlin\Sif}\Comp\Tuple{g,f_0,f_1}
  \in\Kl\cL(Z,\Sdfun^k X)\,.
\] %
Notice that
$\Sdfunpart0^d{\Mlin\Sif}=\Mlin{\Sfunpart0^d\Sif}$ %
(this notation is introduced in Section~\ref{sec:diff-multilinear}).

We have defined %
$\Mlin\Slet\in\Kl\cL(\With\Snat{(\Simpl{\Snat}X)},X)$ so that %
\[
  \Sdfunpart0^d\Mlin\Slet
  \in\Kl\cL(\With{\Sdfun^d\Snat}{(\Simpl{\Snat}X)},\Sdfun^dX)\,.
\]
Let %
$g\in\Kl\cL(Z,\Sdfun^d\Snat)$ and %
$f\in\Kl\cL(\With Z\Snat,X)$ so that %
$\Cur f\in\Kl\cL(Z,\Simpl\Snat X)$, we set %
\[
  \Slet^d(g,f)=\Sdfunpart0^d\Mlin\Slet\Comp\Tuple{g,\Cur f}
  \in\Kl\cL(Z,\Sdfun^dX)\,.
\]

If $f\in\Kl\cL(Z,\Simpl XY)$ and $g\in\Kl\cL(Z,X)$ then we define
$\App fg\in\Kl\cL(Z,Y)$ as $\App fg=\Ev\Comp\Tuple{f,g}$.

If $f\in\Kl\cL(Z,\Simpl XY)$ we have
$\Ev\Comp\Withp fX\in\Kl\cL(\With ZX,Y)$ and hence
$\Sdfunpart 1{(\Ev\Comp\Withp fX)}\in\Kl\cL(\With Z{\Sdfun X},\Sdfun
Y)$ so that we set
\[
  \Sdfunc(f)=\Cur{(\Sdfunpart 1{(\Ev\Comp\Withp
      fX)})}\in\Kl\cL(Z,\Simpl{\Sdfun X}{\Sdfun Y})\,.
\]
Notice that
\begin{align*}
  \Sdfunpart 1{(\Ev\Comp\Withp fX)}
  &=\Sdfun\Ev\Comp\Withp{\Sdfun f}{\Sdfun X}\Comp\Sdfstr_{Z,X}^1\\
  &=\Sdfun\Ev\Comp\Sdfstr_{\Simpl XY,X}^1\Comp(\With f{\Sdfun X})
\end{align*}
by naturality of $\Sdfstr^1$ and hence %
\begin{equation}
  \label{eq:dcur-dint}
  \Sdfunc(f)=\Sdfunint^{X,Y}\Comp f
\end{equation}
where %
\[
  \Sdfunint^{X,Y}= \Cur{(\Sdfun\Ev\Comp\Sdfstr_{\Simpl
      XY,X}^1)}\in\Kl\cL(\Simpl XY,\Simpl{\Sdfun X}{\Sdfun Y})
\]
is the ``internalization'' of the functor $\Sdfun$ made possible by
its strength.

Remember that a morphism %
\(f\in\Kl\cL(X,Y)\) is linear if \(f=\Kllin g\) for some %
\(g\in\cL(X,Y)\) and that, when this $g$ exists, it is unique. 
\begin{lemma} %
  \label{lemma:dint-linear}
  The morphism
  $\Sdfunint^{X,Y}=\Cur{(\Sdfun\Ev\Comp\Sdfstr_{\Simpl XY,X}^1)}
  \in\Kl\cL(\Simpl XY,\Simpl{\Sdfun X}{\Sdfun Y})$ is linear.
\end{lemma}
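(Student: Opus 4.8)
The plan is to reduce the statement to a factorisation property in $\cL$, using the standard fact that a Kleisli morphism of the form $\Cur h$ is linear exactly when its uncurried version factors through a dereliction. Writing $h=\Sdfun\Ev\Comp\Sdfstr^1_{\Simpl XY,X}\in\Kl\cL(\With{(\Simpl XY)}{\Sdfun X},\Sdfun Y)$, and recalling that $\Simpl XY=\Limpl{\Excl X}Y$ and $\Sdfun X=\Sfun X$ on objects, the morphism $\Cur h$ is linear iff $h\Compl\Seelyt_{\Simpl XY,\Sfun X}\in\cL(\Tens{\Excl{(\Simpl XY)}}{\Excl{\Sfun X}},\Sfun Y)$ factors as $l\Compl\Tensp{\Der{\Simpl XY}}{\Excl{\Sfun X}}$ for some $l\in\cL(\Tens{(\Simpl XY)}{\Excl{\Sfun X}},\Sfun Y)$; in that case $\Sdfunint^{X,Y}=\Kllin{(\Curlin l)}$. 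So the whole proof amounts to producing such an $l$ by an explicit computation, which runs exactly parallel to the proof of Lemma~\ref{lemma:sdfun-Ev-expression}.

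First I would unfold the two factors of $h$ in $\cL$: since $\Sdfstr^1_{\Simpl XY,X}=\Kllin(\Withp{\Sin0}{\Sfun X})$ is linear, $h=(\Sfun\Ev)\Compl\Sdiff_{\With{(\Simpl XY)}{X}}\Compl\Excl{\Withp{\Sin0}{\Sfun X}}$. Precomposing with $\Seelyt$ and using its naturality turns $\Excl{\Withp{\Sin0}{\Sfun X}}\Compl\Seelyt_{\Simpl XY,\Sfun X}$ into $\Seelyt_{\Simpl X{\Sfun Y},\Sfun X}\Compl\Tensp{\Excl{\Sin0}}{\Excl{\Sfun X}}$. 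Next I would invoke axiom \Daxwith{} (in the form used in the proof of Theorem~\ref{th:sdfstr-sstr-comm}, where $\Sfun$ preserves $\IWith$ strictly) to rewrite $\Sdiff_{\With{(\Simpl XY)}{X}}$ as $\Sfun{\Seelyt}\Compl\Smont_{\Excl{(\Simpl XY)},\Excl X}\Compl\Tensp{\Sdiff_{\Simpl XY}}{\Sdiff_X}\Compl\Invp{\Seelyt}$; the inner $\Invp{\Seelyt}\Compl\Seelyt$ cancels. Axiom \Daxlin{} then gives $\Sdiff_{\Simpl XY}\Compl\Excl{\Sin0}=\Sin0$, so the surviving expression contains $\Smont_{\Excl{(\Simpl XY)},\Excl X}\Compl\Tensp{\Sin0}{\Sdiff_X}$. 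Using Lemma~\ref{lemma:sstr-sin0}, the definition of $\Smont$ in terms of $\Sstr^0,\Sstr^1$, and the monad unit laws for $(\Sfun,\Sin0,\Sfunadd)$, this collapses to $\Sstr^1_{\Excl{(\Simpl XY)},\Excl X}\Compl\Tensp{\Excl{(\Simpl XY)}}{\Sdiff_X}$.

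At this stage the expression reads $\Sfun\Ev\Compl\Sfun{\Seelyt_{\Simpl XY,X}}\Compl\Sstr^1_{\Excl{(\Simpl XY)},\Excl X}\Compl\Tensp{\Excl{(\Simpl XY)}}{\Sdiff_X}$, and the final moves extract the dereliction. Since $\Ev$ is linear in its function argument, $\Ev\Compl\Seelyt_{\Simpl XY,X}=\Evlin\Compl\Tensp{\Der{\Simpl XY}}{\Excl X}$, hence $\Sfun\Ev\Compl\Sfun{\Seelyt}=\Sfun\Evlin\Compl\Sfun\Tensp{\Der{\Simpl XY}}{\Excl X}$; naturality of $\Sstr^1$ in its first argument then slides $\Der{\Simpl XY}$ past $\Sstr^1$, and bifunctoriality of $\ITens$ isolates $\Tensp{\Der{\Simpl XY}}{\Excl{\Sfun X}}$ on the right. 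This exhibits the desired factorisation with $l=\Sfun\Evlin\Compl\Sstr^1_{\Simpl XY,\Excl X}\Compl\Tensp{(\Simpl XY)}{\Sdiff_X}$, so that $\Sdfunint^{X,Y}=\Kllin{(\Curlin l)}$ is linear. The only real obstacle is bookkeeping: one must keep precise track of which exponential factor carries the differentiation and of the Seely isomorphisms threading through every step, and note that the strict-preservation assumption \Eqref{eq:Sfun-preserves-With} is exactly what makes $\Tuple{\Sfun\Proj0,\Sfun\Proj1}$ in \Daxwith{} the identity; none of the individual categorical identities is delicate once the correct shape is fixed.
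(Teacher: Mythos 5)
Your proof is correct and follows the same route as the paper: the paper's own proof is just the one-line remark that the lemma ``results from the fact that $\Ev$ is left-linear and $\Sdfstr^1_{\Simpl XY,X}$ is linear'', and your computation---reducing linearity of the curried morphism to a factorisation of its uncurried form through $\Tensp{\Der{\Simpl XY}}{\Excl{\Sfun X}}$, then applying \Daxwith{}, \Daxlin{}, Lemma~\ref{lemma:sstr-sin0}, Seely naturality and the monad unit laws---is precisely the elaboration of those two facts, running parallel to the paper's proofs of Theorem~\ref{th:sdfstr-sstr-comm} and Lemma~\ref{lemma:sdfun-Ev-expression}. No step is missing or incorrect; you simply make explicit what the paper leaves implicit.
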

\begin{proof}
  This results from the fact that $\Ev$ is left-linear and %
  $\Sdfstr^1_{\Simpl XY,X}$ is linear on %
  $\With{\Simplp XY}{\Sdfun X}$.
\end{proof}
So we shall also consider tacitly \(\Sdfunint\) as an element of %
\(\cL(\Simpl XY,\Simpl{\Sdfun X}{\Sdfun Y})\).

If $f\in\Kl\cL(Z,\Sdfun X)$ and $j\in\Eset{0,1}$ we set
$\Sproj j(f)=\Sproj j\Compl f\in\Kl\cL(Z,X)$, and if
$f\in\Kl\cL(Z,\Sdfun^2X)$ we set
$\Sfunadd(f)=\Sfunadd\Compl f\in\Kl\cL(Z,\Sdfun X)$ and
$\Sflip(f)=\Sflip\Compl f\in\Kl\cL(Z,\Sdfun^2 X)$. Last if
$f\in\Kl\cL(Z,X)$ we set
$\Sin j(f)=\Sin j\Compl f\in\Kl\cL(Z,\Sdfun Y)$.

\begin{lemma}\label{lemma:sdfun-nat}
  For any object $X$ of $\cL$ we have
  \begin{align*}
    \Sdfunpart 1{\Mlin\Sif_X}
    &=\Mlin\Sif_{\Sdfun X}\in\Kl\cL(\Snat\IWith\Sdfun\Withp XX,\Sdfun X)\\
    \Sdfunpart 1{\Mlin\Slet_X}
    &=\Mlin\Slet_{\Sdfun X}\in\Kl\cL(\Snat\IWith\Sdfun\Simplp\Snat X,\Sdfun X)
  \end{align*}
\end{lemma}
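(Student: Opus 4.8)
The plan is to reduce both identities to equalities of \emph{linear} morphisms in $\cL$ and then verify those, by the universal property of $\Snat$ for $\Sif$ and by a direct diagram chase for $\Slet$. Recall that $\Mlin\Sif$ and $\Mlin\Slet$ are the multilinear morphisms attached to the linear morphisms $\Sif_X\in\cL(\Tens\Snat{\Withp XX},X)$ and $\Slet_X\in\cL(\Tens\Snat{\Limpl{\Excl\Snat}X},X)$, so Theorem~\ref{th:diffpart-multilin}, applied with $n=1$ and $i=1$, gives $\Sdfunpart 1{\Mlin\Sif_X}=\Mlin{\Sfunpart 1\Sif_X}$ and $\Sdfunpart 1{\Mlin\Slet_X}=\Mlin{\Sfunpart 1\Slet_X}$, where $\Sfunpart1\Sif_X=(\Sfun\Sif_X)\Compl\Sstr^1_{\Snat,\With XX}$ and likewise for $\Slet$. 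Since $l\mapsto\Mlin l$ is well defined on linear morphisms, it suffices to establish the two equalities $\Sfunpart 1\Sif_X=\Sif_{\Sdfun X}$ and $\Sfunpart 1\Slet_X=\Slet_{\Sdfun X}$ in $\cL$, using the strict preservation $\Sfun\Withp XX=\Withp{\Sfun X}{\Sfun X}$, $\Sfun\Proj i=\Proj i$ of~\Eqref{eq:Sfun-preserves-With} and the \Saxfun{} identification $\Sfun\Limplp{\Excl\Snat}X=\Limplp{\Excl\Snat}{\Sfun X}$, which make the source and target objects literally match those of $\Sif_{\Sdfun X}$ and $\Slet_{\Sdfun X}$.

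For $\Sif$ I would exploit that $\Snat=\Plus\Sone\Snat$ with $\Szero=\Inj0$, $\Ssuc=\Inj1$, and that $\Tens{(-)}{\Withp XX}$, being a left adjoint, preserves this coproduct; hence $\Tens\Szero{\Sfun\Withp XX}$ and $\Tens\Ssuc{\Sfun\Withp XX}$ are the coproduct injections into $\Tens\Snat{\Sfun\Withp XX}$, so a morphism out of $\Tens\Snat{\Sfun\Withp XX}$ is determined by its two precompositions. Consequently $\Sif_{\Sdfun X}$ is \emph{the} morphism satisfying the two diagrams~\Eqref{eq:diag-conditional} at $\Sdfun X$, and it is enough to check that $\Sfunpart1\Sif_X$ satisfies them. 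Precomposing $(\Sfun\Sif_X)\Compl\Sstr^1_{\Snat,\With XX}$ with $\Tens\Szero{\Sfun\Withp XX}$ (resp.\ with $\Tens\Ssuc{\Sfun\Withp XX}$) and moving $\Szero$ (resp.\ $\Ssuc$, then $\Coalgw\Snat$) through the strength by naturality of $\Sstr^1$ in its first argument, then invoking the left-unitor coherence $(\Sfun\Leftu)\Compl\Sstr^1_{\Sone,\With XX}=\Leftu$ of the strength together with the diagrams~\Eqref{eq:diag-conditional} for $\Sif_X$, yields $\Proj0\Compl\Leftu$ (resp.\ $\Proj1\Compl\Leftu\Compl\Tens{\Coalgw\Snat}{\Sfun\Withp XX}$) once $\Sfun\Proj i$ is replaced by $\Proj i$. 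These are exactly the two required equations, whence $\Sfunpart1\Sif_X=\Sif_{\Sdfun X}$.

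For $\Slet$ a direct computation is cleaner. Starting from $\Slet_X=\Evlin\Compl\Sym\Compl\Tensp{\Coalg\Snat}{\Limpl{\Excl\Snat}X}$, I would first slide $\Coalg\Snat$ across the strength by naturality of $\Sstr^1$, getting $\Sfunpart1\Slet_X=(\Sfun\Evlin)\Compl(\Sfun\Sym)\Compl\Sstr^1_{\Excl\Snat,\Limpl{\Excl\Snat}X}\Compl\Tensp{\Coalg\Snat}{\Sfun\Limplp{\Excl\Snat}X}$. Then the relation $(\Sfun\Sym)\Compl\Sstr^1_{\Excl\Snat,\Limpl{\Excl\Snat}X}=\Sstr^0_{\Limpl{\Excl\Snat}X,\Excl\Snat}\Compl\Sym$ (valid because $\Sstr^1$ is defined from $\Sstr^0$ by conjugation with $\Sym$ and $\Sym$ is an involution), naturality of $\Sym$, and the \Saxfun{} identity $(\Sfun\Evlin)\Compl\Sstr^0_{\Limpl{\Excl\Snat}X,\Excl\Snat}=\Evlin$ collapse the expression to $\Evlin\Compl\Tensp{\Limpl{\Excl\Snat}{\Sfun X}}{\Coalg\Snat}\Compl\Sym$, which is precisely $\Slet_{\Sfun X}=\Slet_{\Sdfun X}$ after one further application of naturality of $\Sym$. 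Applying $l\mapsto\Mlin l$ to the two established linear equalities concludes.

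The main obstacle is bookkeeping rather than conceptual: getting the strength coherences exactly right, in particular the left-unitor law for $\Sstr^1$ and the symmetry conjugation relating $\Sstr^1$ and $\Sstr^0$, and tracking the two identifications (strict $\IWith$-preservation and \Saxfun{}) that turn $\Sfun\Withp XX$ and $\Sfun\Limplp{\Excl\Snat}X$ into $\Withp{\Sfun X}{\Sfun X}$ and $\Limplp{\Excl\Snat}{\Sfun X}$ on the nose. One should also spell out that $\Tens{(-)}{\Withp XX}$ preserves the coproduct $\Snat=\Plus\Sone\Snat$, which is what legitimizes the universal-property argument used for $\Sif$.
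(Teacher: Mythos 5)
Your proof is correct, but it follows a genuinely different route from the paper's. The paper never descends to the linear-level identities $\Sfunpart1{\Sif_X}=\Sif_{\Sfun X}$ and $\Sfunpart1{\Slet_X}=\Slet_{\Sfun X}$ that you establish: instead it stays in $\Kl\cL$, post-composes both sides of the lemma with $\Sproj0$ and $\Sproj1$, pushes the projections through $\Sdfun\Mlin\Sif_X$ via Theorem~\ref{th:sproj-sdfun-mlin}, uses the explicit form of $\Sdfstr^1$ so that the summand carrying $\Sproj1$ in the non-differentiated slot becomes a composite through $\Withp{0}{\Withp{\Sproj0}{\Sproj0}}$ and dies by bilinearity of $\Mlin\Sif$, then invokes \emph{naturality of $\Sif_X$ (resp.\ $\Slet_X$) in $X$} to recognize $\Sproj j\Comp\Mlin\Sif_{\Sdfun X}$, concluding by joint monicity of $\Sproj0,\Sproj1$; the $\Slet$ case is dismissed as completely similar because the identical projection computation applies. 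Your argument instead reduces to $\cL$ via the first equation of Theorem~\ref{th:diffpart-multilin} and proves the two linear identities directly: for $\Sif$ by the universal property of the coproduct $\Plus\Sone\Snat$ (legitimized by $\Tens{(-)}{\Withp XX}$ being a left adjoint), checking the two diagrams~\Eqref{eq:diag-conditional} for $(\Sfun{\Sif_X})\Compl\Sstr^1$; for $\Slet$ by a computation using the symmetry conjugation of $\Sstr^1$ and the \Saxfun{} identity $(\Sfun\Evlin)\Compl\Sstr^0=\Evlin$. What your route buys: it is more self-contained, since the paper's appeal to naturality of $\Sif_X$ and $\Slet_X$ in $X$ is asserted without proof, whereas you derive what you need from the defining characterizations; and it yields the stronger linear-level equalities, reusable elsewhere. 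What the paper's route buys: it needs neither the strength unit coherence $(\Sfun\Leftu)\Compl\Sstr^1_{\Sone,Y}=\Leftu$ nor the coproduct-preservation argument, remaining entirely within the projection calculus of Section~\ref{sec:diff-multilinear}. Both proofs rely equally on the strict preservation~\Eqref{eq:Sfun-preserves-With} and the \Saxfun{} identifications, and both are sound.
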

\begin{proof}
  We have
  \begin{align*}
    \Sproj0\Comp\Sdfunpart 1{\Mlin\Sif_X}
    &=\Sproj 0
      \Comp\Sdfun{\Mlin\Sif_X}
      \Comp\Sdfstr^1_{\Snat,\With XX}\\
    &=\Mlin\Sif_{X}
      \Comp{\Withp{\Sproj 0}{\Withp{\Sproj 0}{\Sproj 0}}}
      \Comp\Sdfstr^1_{\Snat,\With XX}
    \text{\quad by Theorem~\ref{th:sproj-sdfun-mlin}}\\
    &=\Mlin\Sif_{X}
      \Comp{\Withp{\Snat}{\Withp{\Sproj 0}{\Sproj 0}}}\\
    &=\Sproj0\Comp\Mlin\Sif_{\Sdfun X}
  \end{align*}
  by naturality of $\Sif_X$ with respect to $X$. Next
  \begin{align*}
    \Sproj1\Comp\Sdfunpart 1{\Mlin\Sif_X}
    &=\Sproj 1
      \Comp\Sdfun{\Mlin\Sif_X}
      \Comp\Sdfstr^1_{\Snat,\With XX}\\
    &=\Mlin\Sif_{X}
      \Comp{\Withp{\Sproj 1}{\Withp{\Sproj 0}{\Sproj 0}}}
      \Comp\Sdfstr^1_{\Snat,\With XX}\\
    &\Textsep +
      \Mlin\Sif_{X}
      \Comp{\Withp{\Sproj 0}{\Withp{\Sproj 1}{\Sproj 1}}}
      \Comp\Sdfstr^1_{\Snat,\With XX}
    \text{\quad by Theorem~\ref{th:sproj-sdfun-mlin}}\\
    &=\Mlin\Sif_{X}
      \Comp{\Withp{0}{\Withp{\Sproj 0}{\Sproj 0}}}+
      \Mlin\Sif_{X}
      \Comp{\Withp{\Snat}{\Withp{\Sproj 1}{\Sproj 1}}}\\
    &=\Mlin\Sif_{X}
      \Comp{\Withp{\Snat}{\Withp{\Sproj 1}{\Sproj 1}}}
    \text{\quad by bilinearity of }\Mlin\Sif\\
    &=\Sproj1\Comp\Mlin\Sif_{\Sdfun X}
      \text{\quad by naturality}
  \end{align*}
  and the contention follows by joint monicity of
  $\Sproj0,\Sproj1$. The case of $\Slet$ is completely similar.
\end{proof}

\begin{lemma}\label{lemma:sif-slet-sdfunpart}
  For $k\in\Nat$ we have
  \begin{align*}
    \Sdfunpart1{\Sdfunpart0^{k+1}\Mlin\Sif_X}
    &=\Sflipl k\Comp\Sdfunpart0^{k+1}\Mlin\Sif_{\Sdfun X}\\
    \Sdfunpart1{\Sdfunpart0^{k+1}\Mlin\Slet_X}
    &=\Sflipl
      k\Comp\Sdfunpart0^{k+1}\Mlin\Slet_{\Sdfun X}\,.
  \end{align*}
\end{lemma}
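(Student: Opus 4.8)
The plan is to obtain both equations as essentially immediate consequences of the two preceding lemmas, with no new categorical computation: the work has already been done in Lemma~\ref{lemma:sdfun-circular} (the circular-permutation identity for iterated partial derivatives) and Lemma~\ref{lemma:sdfun-nat} (which computes the first partial derivative of $\Mlin\Sif$ and $\Mlin\Slet$ in their second, linear, slot). All that remains is to combine them and keep track of the domains and codomains.

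First I would instantiate Lemma~\ref{lemma:sdfun-circular} with the bilinear morphism $f=\Mlin\Sif_X\in\Kl\cL(\Snat\IWith\Withp XX,X)$, reading the $\Snat$ factor as the index-$0$ slot and the $\With XX$ factor as the index-$1$ slot. This directly yields
\begin{align*}
  \Sdfunpart1{\Sdfunpart0^{k+1}\Mlin\Sif_X}
  =\Sflipl k\Comp\Sdfunpart0^{k+1}{\Sdfunpart1{\Mlin\Sif_X}}\,.
\end{align*}
Then, by the $\Sif$-clause of Lemma~\ref{lemma:sdfun-nat}, the inner partial derivative in the $\With XX$ slot is computed as $\Sdfunpart1{\Mlin\Sif_X}=\Mlin\Sif_{\Sdfun X}$. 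Substituting gives $\Sdfunpart0^{k+1}{\Sdfunpart1{\Mlin\Sif_X}}=\Sdfunpart0^{k+1}\Mlin\Sif_{\Sdfun X}$, and hence the desired identity $\Sdfunpart1{\Sdfunpart0^{k+1}\Mlin\Sif_X}=\Sflipl k\Comp\Sdfunpart0^{k+1}\Mlin\Sif_{\Sdfun X}$. The argument for $\Slet$ is word-for-word identical, simply using the $\Slet$-clause of Lemma~\ref{lemma:sdfun-nat} in place of the $\Sif$-clause.

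The only point that requires genuine attention, and the closest thing to an obstacle here, is the bookkeeping of types. One must use crucially that $\Sfun$ preserves $\IWith$ strictly, in the sense of~\Eqref{eq:Sfun-preserves-With}, so that $\Sdfun\Withp XX=\Withp{\Sdfun X}{\Sdfun X}$; this is what makes $\Mlin\Sif_{\Sdfun X}$ well typed as an element of $\Kl\cL(\Snat\IWith\Sdfun\Withp XX,\Sdfun X)$ and guarantees that $\Sdfunpart1{\Mlin\Sif_X}$ and $\Mlin\Sif_{\Sdfun X}$ live in the same homset before one applies $\Sdfunpart0^{k+1}$. With this identification, both sides of the claimed equation land in $\Kl\cL(\Sdfun^{k+1}\Snat\IWith\Sdfun\Withp XX,\Sdfun^{k+2}X)$, so the post-composition with $\Sflipl k\in\cL(\Sfun^{k+2}X,\Sfun^{k+2}X)$ is of the correct type — exactly the type discipline already recorded in the statement of Lemma~\ref{lemma:sdfun-circular}, so nothing further needs to be verified.
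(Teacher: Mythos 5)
Your proof is correct and is exactly the paper's argument: the paper's own proof of this lemma is the one-liner ``By Lemma~\ref{lemma:sdfun-circular} and Lemma~\ref{lemma:sdfun-nat}'', which is precisely the combination you spell out (instantiate the circular-permutation identity with $f=\Mlin\Sif_X$, then rewrite the inner derivative $\Sdfunpart1{\Mlin\Sif_X}$ as $\Mlin\Sif_{\Sdfun X}$, and likewise for $\Slet$). Your added bookkeeping remark about the strict preservation of $\IWith$ by $\Sfun$, in the sense of~\Eqref{eq:Sfun-preserves-With}, is consistent with the paper's conventions and correctly identifies the identification needed for the two sides to live in the same homset.
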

\begin{proof}
  By Lemma~\ref{lemma:sdfun-circular} and Lemma~\ref{lemma:sdfun-nat}.
\end{proof}

\subsubsection{Recursion}
\label{sec:cat-recursion}

From now on we assume that $\cL$ is a differential summable resource
category which is \Cpolike{}, see Section~\ref{sec:Scott-summable}.

\begin{theorem}\label{th:sdfun-sfix}
  For any object $X$ we have
  \begin{align*}
    \Sdfun\Sfix^X=\Sfix^{\Sdfun X}\Comp\Cur{(\Sdfun\Ev)}
  \end{align*}
\end{theorem}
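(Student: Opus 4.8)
The plan is to prove the identity by the standard approximant argument, exploiting the fact that in a $\Cpolike$ setting every operation in sight is locally continuous. Write $\phi=\Cur{(\Sdfun\Ev)}\in\Kl\cL(\Simpl X{\Sdfun X},\Simpl{\Sdfun X}{\Sdfun X})$, so that both sides of the claimed equation live in $\Kl\cL(\Simpl X{\Sdfun X},\Sdfun X)$ once we use the identifications $\Sdfun(\Simpl XX)=\Simpl X{\Sdfun X}$ and $\Sdfun(\With AB)=\With{\Sdfun A}{\Sdfun B}$. Since $\Sdfun$ is locally continuous (it acts on a Kleisli morphism $f$ by $(\Sfun f)\Compl\Sdiff$, a composite of the locally continuous $\Sfun$ with precomposition by $\Sdiff$) and composition in $\Kl\cL$ is continuous, I have $\Sdfun\Sfix^X=\sup_n\Sdfun\Sfix^X_n$ and $\Sfix^{\Sdfun X}\Comp\phi=\sup_n(\Sfix^{\Sdfun X}_n\Comp\phi)$. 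Hence it suffices to prove, for every $n$, the approximant identity $\Sdfun\Sfix^X_n=\Sfix^{\Sdfun X}_n\Comp\phi$, which I would do by induction on $n$.

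The induction rests on two structural facts that I would isolate first. First, for any linear morphism $g\in\cL(Z,W)$ one has $\Sdfun(\Kllin g)=\Kllin{(\Sfun g)}$: indeed $\Sdfun(\Kllin g)=(\Sfun g)\Compl(\Sfun\Der Z)\Compl\Sdiff_Z$, and the first diagram of \Daxchain{} gives $(\Sfun\Der Z)\Compl\Sdiff_Z=\Der{\Sfun Z}$. Taking $g=\Proj j$ and using $\Sfun\Proj j=\Proj j$ from~\Eqref{eq:Sfun-preserves-With} yields $\Sdfun\Proje j=\Proje j$, so $\Sdfun$ fixes the projections and therefore preserves tupling: $\Proje j\Comp\Sdfun\Tuple{f_0,f_1}=\Sdfun(\Proje j\Comp\Tuple{f_0,f_1})=\Sdfun f_j$, whence $\Sdfun\Tuple{f_0,f_1}=\Tuple{\Sdfun f_0,\Sdfun f_1}$. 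Second, $\Sdfun$ preserves zero morphisms: $\Sfun 0=0$ because $\Sproj i\Compl\Sfun 0=0\Compl\Sproj i=0$ for $i=0,1$ by naturality of $\Sproj 0,\Sproj1$, and these are jointly monic; consequently $\Sdfun 0=(\Sfun 0)\Compl\Sdiff=0$.

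For the base case, $\Sdfun\Sfix^X_0=\Sdfun 0=0=0\Comp\phi=\Sfix^{\Sdfun X}_0\Comp\phi$. For the step, using functoriality of $\Sdfun$, preservation of identities and tuplings, and the induction hypothesis,
\begin{align*}
  \Sdfun\Sfix^X_{n+1}
  &=\Sdfun\Ev\Comp\Sdfun\Tuple{\Simpl XX,\Sfix^X_n}
  =\Sdfun\Ev\Comp\Tuple{\Id_{\Simpl X{\Sdfun X}},\Sdfun\Sfix^X_n}.
\end{align*}
On the other side, the CCC computation rule $\Ev\Comp\Tuple{\Cur h,g}=h\Comp\Tuple{\Id,g}$ applied with $h=\Sdfun\Ev$ and the induction hypothesis $\Sfix^{\Sdfun X}_n\Comp\phi=\Sdfun\Sfix^X_n$ give
\begin{align*}
  \Sfix^{\Sdfun X}_{n+1}\Comp\phi
  &=\Ev\Comp\Tuple{\phi,\Sfix^{\Sdfun X}_n\Comp\phi}
  =\Ev\Comp\Tuple{\Cur{(\Sdfun\Ev)},\Sdfun\Sfix^X_n}
  =\Sdfun\Ev\Comp\Tuple{\Id_{\Simpl X{\Sdfun X}},\Sdfun\Sfix^X_n},
\end{align*}
so the two agree, completing the induction; passing to the suprema then yields the theorem. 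The main obstacle I anticipate is not the continuity bookkeeping but the careful verification that $\Sdfun$ commutes with the cartesian structure of $\Kl\cL$ (preservation of projections, identities and tuplings) while keeping the identifications $\Sdfun(\Simpl XX)=\Simpl X{\Sdfun X}$ and $\Sdfun(\With AB)=\With{\Sdfun A}{\Sdfun B}$ coherent with $\Cur$ and $\Ev$; once $\Sdfun\Proje j=\Proje j$ is in hand, everything else is routine CCC manipulation.
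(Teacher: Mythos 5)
Your proof is correct and follows essentially the same route as the paper's: induction on the approximants $\Sfix_n$, with the inductive step handled by functoriality of $\Sdfun$, its commutation with the cartesian structure of $\Kl\cL$, and the CCC laws, followed by passage to the supremum via local continuity. The only difference is presentational: you explicitly derive the auxiliary facts (preservation of projections, tuplings and zero by $\Sdfun$, via \Daxchain{} and the strictness assumption~\Eqref{eq:Sfun-preserves-With}) that the paper simply invokes, and you reduce both sides to a common expression rather than rewriting one side into the other.
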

Observe that this equation is well typed: we have %
$\Ev:\With{\Simplp XX}{X}\to X$ and hence %
$\Sdfun\Ev:\With{\Simplp X{\Sdfun X}}{\Sdfun X}\to \Sdfun X$ so that %
$\Cur(\Sdfun\Ev):\Simplp X{\Sdfun X}\to\Simplp{\Sdfun X}{\Sdfun X}$
and hence both sides of the equation are morphisms %
$\Simplp X{\Sdfun X}\to\Sdfun X$.
\begin{proof}
  By induction on $n\in\Nat$ we prove that %
  $\forall n\in\Nat\ \Sdfun\Sfix^X_n
  =\Sfix^{\Sdfun X}_n\Comp\Cur{(\Sdfun\Ev)}$ and the
  result follows by continuity. For $n=0$ the equation is obvious so
  assume that it holds for some $n\in\Nat$.

  We have
  \begin{align*}
    \Sfix^{\Sdfun X}_{n+1}\Comp\Cur{(\Sdfun\Ev)}
    &= \Ev^{X,\Sdfun X}
      \Comp\Tuple{\Simpl X{\Sdfun X},\Sfix^{\Sdfun X}_n}
      \Comp\Cur{(\Sdfun\Ev^{X,X})}\\
    &= \Ev^{X,\Sdfun X}
      \Comp\Tuple{\Cur(\Sdfun\Ev^{X,X}),
      \Sfix^{\Sdfun X}_n\Comp\Cur(\Sdfun\Ev^{X,X})}\\
    &= \Ev^{X,\Sdfun X}
      \Comp\Tuple{\Cur(\Sdfun\Ev^{X,X}),\Sdfun\Sfix^{X}_n}
      \text{\quad by ind.~hyp.}\\
    &= \Ev^{X,\Sdfun X}
      \Comp\Withp{\Cur(\Sdfun\Ev^{X,X})}{\Sdfun X}
      \Comp\Tuple{\Simpl X{\Sdfun X},\Sdfun\Sfix^{X}_n}\\
    &= \Sdfun\Ev^{X,X}
      \Comp\Tuple{\Simpl X{\Sdfun X},\Sdfun\Sfix^{X}_n}
      \text{\quad by cart.~closedness}\\
    &=\Sdfun(\Ev\Comp\Tuple{\Simpl XX,\Sfix^X_n})\\
    &=\Sdfun\Sfix^X_{n+1}
  \end{align*}
  as contended, using also the fact that $\Sdfun$ is a functor which
  commutes with cartesian products.
\end{proof}

\begin{remark}
  This theorem is a remarkable property of the fixpoint operator at
  any type \(X\): its derivative can be simply expressed by means of
  a fixpoint operator at type \(\Tdiff X\).
\end{remark}

\subsection{Interpreting types and terms} %
\label{sec:types-terms-interp}
The translation of any type $A$ into an object $\Tsem A$ of $\Kl\cL$
(that is, of $\cL)$ is given by %
$\Tsem{\Tdiffm d\Tnat}=\Sdfun^d\Snat$ and %
$\Tsem{\Timpl AB}=\Simplp{\Tsem A}{\Tsem B}$ so that %
$\Tsem{\Tdiffm dA}=\Sdfun^d\Tsem A$ holds for all type $A$ and all
$d\in\Nat$ thanks to our identification of $\Simpl X{\Sdfun Y}$ with
$\Sdfun\Simplp XY$.

A context $\Gamma=(x_1:A_1,\dots,x_k:A_k)$ is interpreted as %
$\Tsem\Gamma=\Tsem{A_1}\IWith\cdots\IWith\Tsem{A_k}$ considered as an
object of $\Kl\cL$.


The next theorem also provides our definition of the interpretation of
terms.
\begin{theorem}\label{th:sem-defined-sum}
  Given a term $M$, a type $A$ and a context $\Gamma$ such that %
  $\Tseq\Gamma MA$ for some typing derivation $\delta$ (so that $A$ is
  actually determined by $M$) %
  one can define $\Psem M\Gamma\in\Kl\cL(\Tsem\Gamma,\Tsem A)$ in
  such a way that
  \begin{itemize}
  \item $\Psem M\Gamma\in\Kl\cL(\Tsem\Gamma,\Tsem A)$ depends only on
    $M$ and not on $\delta$
  \item and if $M=M_0+M_1$ then %
    $\Psem{M_0}\Gamma,\Psem{M_1}\Gamma$ are summable in %
    $\Kl\cL(\Tsem\Gamma,\Tsem A)$ and %
    $\Psem M\Gamma=\Psem{M_0}\Gamma+\Psem{M_1}\Gamma$ (this makes
    sense by Lemma~\ref{lemma:ty-der-sum}).
\end{itemize}
\end{theorem}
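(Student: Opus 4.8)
The plan is to define, by induction on a typing derivation $\delta$ of $\Tseq\Gamma MA$, a morphism $\Psem M\Gamma^\delta\in\Kl\cL(\Tsem\Gamma,\Tsem A)$, following the last rule of $\delta$ and interpreting each construct by the corresponding semantic operation of Sections~\ref{sec:basic-categ-constr} and~\ref{sec:syn-basic-constr}: \Trvar{} by a projection $\Proje i$, \Trabs{} by $\Cur$, \Trapp{} by the Kleisli application, \Trfix{} by $\Sfix$, \Trnum{} by $\Snum n$, \Trsuc{} and \Trpred{} by $\Ssuc^d$ and $\Spred^d$, \Trif{} by $\Sif^d$, \Trlet{} by $\Slet^d$, \Trdiff{} by $\Sdfunc=\Sdfunint\Comp(\cdot)$, \Trzero{} by $0$, and the purely differential constructs $\Lprojd idM$, $\Linjd idM$, $\Lsumd dM$, $\Lflipdl dlM$ by the images under $\Sdfun^d$ of $\Sproj i$, $\Sin i$, $\Sfunadd$ and of the iterated flip (Lemma~\ref{lemma:sflipl-circular-permutation}), precomposed with the interpretation of their argument. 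The typing of these operations together with the identification $\Tsem{\Tdiffm dA}=\Sdfun^d\Tsem A$ makes the target homset correct, and the only rules not determined by the head constructor of $M$ are \Trprojt, \Trprojd{} and \Trlin; these are the source of all the work.

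I would prove the two bullet points simultaneously by induction on $\Tdersize\delta$. The first preparatory step is to record that every construct occurring in a linear context~\Eqref{eq:linear-context} is interpreted by a morphism that is linear in its hole, that is, commutes with $0$ and with summable sums. This holds because the underlying operations are all linear or separately linear in the relevant argument: $\Sproj i$, $\Sin i$, $\Sfunadd$, the flip and $\Sdfunint$ (linear by Lemma~\ref{lemma:dint-linear}) are linear, whereas $\Ssuc^d$, $\Spred^d$, the distinguished arguments of $\Sif^d$ and $\Slet^d$, abstraction and application are covered by the multilinearity results of Section~\ref{sec:diff-multilinear}, in particular Lemma~\ref{lemma:Kleisli-multilin}. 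From this I would deduce invariance under $\Linred$: if $M\Linred M'$ then $\Psem M\Gamma=\Psem{M'}\Gamma$, since the two generating cases $L[0]\Linred 0$ and $L[M_0+M_1]\Linred L[M_0]+L[M_1]$ are exactly the equations stating that the head construct of $L$ preserves $0$ and sums. This disposes of \Trlin: applying it does not alter the interpretation, so a derivation ending with \Trlin{} can be matched against a direct derivation of its reduct.

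The core of the argument is the analysis of sums. Whenever $\Tseq\Gamma{\Lplus{M_0}{M_1}}A$ holds, Lemma~\ref{lemma:ty-der-sum} supplies derivations of $\Tseq\Gamma{M_j}A$ of no larger size, so by induction $\Psem{M_0}\Gamma$ and $\Psem{M_1}\Gamma$ are already well defined; what remains is to show that they are summable and that the value attached to the sum is $\Psem{M_0}\Gamma+\Psem{M_1}\Gamma$, uniformly in the last rule. For \Trprojt{} the two summands are $\Sdfun^d\Sproj0$ and $\Sdfun^d\Sproj1$ precomposed with a common $h=\Psem M\Gamma$, so a summability witness is obtained from $h$ (transported through the appropriate iterated flip when $d>0$), and their sum is the intended interpretation. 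For \Trprojd{} I would start from the witness of $\Psem{N_0}\Gamma,\Psem{N_1}\Gamma$ furnished by the inductive hypothesis on $\Tseq\Gamma{N_0+N_1}{\Tdiffm{d+1}A}$ and use $\Sflip$ to build a witness for the crossed pair $\Sdfun^d\Sproj1\Comp\Psem{N_0}\Gamma$ and $\Sdfun^d\Sproj0\Comp\Psem{N_1}\Gamma$. A sum produced through \Trlin{} is reduced to one of the previous two cases via $\Linred$-invariance and the linearity of the head construct.

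The main obstacle will be precisely this combined coherence-and-summability check for sums: one must verify that the three admissible ways of typing a term of shape $\Lplus{M_0}{M_1}$ always deliver summable components with the same total morphism, and that these witnesses survive the iteration of the depth-indexed constructs, where bringing the relevant $\Sfun$ layer into projecting position relies on the circular-permutation property of the iterated flip (Lemma~\ref{lemma:sflipl-circular-permutation}) and on the interaction of partial derivatives with $\Sfunadd$ (Lemma~\ref{lemma:sdfstr-commute-general} and Theorem~\ref{th:partial-der-commute}). Once invariance under $\Linred$ and this sum coherence are in place, independence of $\Psem M\Gamma^\delta$ from $\delta$ follows by a routine comparison of the last rules of two derivations of the same $M$, appealing to the inductive hypothesis on the immediate subderivations; the check that $\Psem M\Gamma\in\Kl\cL(\Tsem\Gamma,\Tsem A)$ with the announced type is then immediate from the typing of the semantic constructs.
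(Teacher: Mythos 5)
Your proposal is correct and follows essentially the same route as the paper: induction on the size of the typing derivation, the standard semantic operations for the syntax-directed rules, and the three sum-producing rules \Trprojt, \Trprojd, \Trlin{} handled via Lemma~\ref{lemma:ty-der-sum} together with summability witnesses and the (left-)linearity of the semantic constructs interpreting height-one linear contexts. The only difference is organizational: you factor out an explicit $\Linred$-invariance statement, where the paper inlines the same argument as a case analysis on the height-one context $L$ inside the \Trlin{} case, but the substance is identical.
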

\begin{proof}
  By induction on $\Tdersize\delta$ where $\delta$ is a derivation of
  the typing judgment $\Tseq\Gamma MA$.
  We proceed by cases, according to the last rule in $\delta$.

  \Proofcase %
  If $M=x_i$ for some $i\in\Eset{1,\dots,k}$ we set %
  $\Psem M\Gamma=\Proj i$.

  \Proofcase %
  If $M=\Abst xBN$ then we have $A=\Timplp BC$ and
  $\Tseq{\Gamma,x:B}NC$ so that by inductive hypothesis %
  $\Psem N{\Gamma,x:B}\in\Kl\cL(\With{\Tsem\Gamma}{\Tsem B},\Tsem C)$
  and we set %
  $\Psem M\Gamma=\Cur{\Psem N{\Gamma,x:B}}
  \in\Kl\cL(\Tsem\Gamma,\Simpl{\Tsem B}{\Tsem C})$ by inductive
  hypothesis.

  \Proofcase %
  If $M=\App NP$ with %
  $\Tseq\Gamma N{\Timpl BA}$ and %
  $\Tseq\Gamma PB$ then we have by inductive hypothesis %
  $\Psem N\Gamma\in\Kl\cL(\Tsem\Gamma,\Simpl{\Tsem B}{\Tsem A})$ and %
  $\Psem P\Gamma\in\Kl\cL(\Tsem\Gamma,\Tsem B)$ and hence we set %
  $\Psem M\Gamma=\App{\Psem N\Gamma}{\Psem P\Gamma}
  =\Ev\Comp\Tuple{\Psem N\Gamma,\Psem P\Gamma}
  \in\Kl\cL(\Tsem\Gamma,\Tsem A)$.

  \Proofcase %
  If $M=\Lfix N$ with %
  $\Tseq\Gamma N{\Timpl AA}$ so that by inductive hypothesis %
  $\Psem N\Gamma\in\Kl\cL(\Tsem\Gamma,\Simpl{\Tsem A}{\Tsem A})$ and
  so we set %
  \(
  \Psem M\Gamma
  =\Sfix^{\Tsem A}\Comp\Psem N\Gamma\in\Kl\cL(\Tsem\Gamma,\Tsem A)
  \)
  as required.

  \Proofcase %
  If $M=\Num n$ for some $n\in\Nat$, we set %
  $\Psem M\Gamma=\Snum n\in\Kl\cL(\Tsem\Gamma,\Snat)$.

  \Proofcase %
  If $M=\Lsucc dN$ so that $\Tseq\Gamma N{\Tdiffm d\Tnat}$ and hence %
  $\Psem N\Gamma\in\Kl\cL(\Tsem\Gamma,\Sdfun^d\Snat)$ by inductive
  hypothesis, we set %
  $\Psem M\Gamma=\Ssuc^d(\Psem
  N\Gamma)\in\Kl\cL(\Tsem\Gamma,\Sdfun^d\Snat)$.
  Of course we set similarly %
  $\Psem{\Lpred dN}\Gamma=\Spred^d(\Psem
  N\Gamma)\in\Kl\cL(\Tsem\Gamma,\Sdfun^d\Snat)$.

  \Proofcase %
  If $M=\Lif dN{P_0}{P_1}$ with %
  $\Tseq\Gamma N{\Tdiffm d\Tnat}$ and %
  $\Tseq\Gamma {P_j}A$ for $j=0,1$ so that by inductive hypothesis %
  $\Psem N\Gamma\in\Kl\cL(\Tsem\Gamma,\Sdfun^d\Snat)$ and %
  $\Psem{P_j}\Gamma\in\Kl\cL(\Tsem\Gamma,\Tsem A)$ for $j=0,1$.
  So we set %
  \( \Psem M\Gamma =\Sif^d (\Psem
  N\Gamma,\Psem{P_0}\Gamma,\Psem{P_1}\Gamma)
  \in\Kl\cL(\Tsem\Gamma,\Sdfun^d\Tsem A=\Tsem{\Tdiffm dA}) \) %
  where we use the notation \(\Sif^d\) introduced in
  Section~\ref{sec:syn-basic-constr}.

  \Proofcase %
  If $M=\Llet d xNP$ with %
  $\Tseq\Gamma N{\Tdiffm d\Tnat}$ and %
  $\Tseq{\Gamma,x:\Tnat}{P}{A}$ so that by inductive hypothesis %
  $\Psem N\Gamma\in\Kl\cL(\Tsem\Gamma,\Sdfun^d\Snat)$ and %
  $\Psem P{\Gamma,x:\Tnat}\in\Kl\cL(\With{\Tsem\Gamma}{\Snat},\Tsem
  A)$ %
  and we set %
  $\Psem M\Gamma =\Slet^d(\Psem N\Gamma,\Cur{\Psem P{\Gamma,x:\Tnat}})
  \in\Kl\cL(\Tsem\Gamma,\Sdfun^d\Psem A\Gamma)$
  where we use the notation \(\Slet^d\) introduced in
  Section~\ref{sec:syn-basic-constr}.

  \Proofcase %
  We set $\Psem{\Lzerot A}\Gamma=0\in\Kl\cL(\Tsem\Gamma,\Tsem A)$.

  \Proofcase %
  If $M=\Lprojd idN$ then %
  $\Tseq\Gamma N{\Tdiffm{d+1}B}$ with $A=\Tdiffm dB$ so that %
  $\Psem N\Gamma\in\Kl\cL(\Tsem\Gamma,\Sdfun^{d+1}\Tsem B)$ and we
  set %
  $\Psem M\Gamma=\Sdfun^d\Sproj i\Comp\Psem N\Gamma
  \in\Kl\cL(\Tsem\Gamma,\Sdfun^d\Tsem B=\Tsem A)$.

  \Proofcase %
  If $M=\Linjd idN$ then we have %
  $\Tseq\Gamma N{\Tdiffm dB}$ with $A=\Tdiffm{d+1}B$ so that %
  $\Psem N\Gamma\in\Kl\cL(\Tsem\Gamma,\Sdfun^d\Tsem B)$ and we set %
  $\Psem M\Gamma=\Sdfun^d\Sin i\Comp\Psem N\Gamma
  \in\Kl\cL(\Tsem\Gamma,\Sdfun^{d+1}\Tsem B=\Tsem A)$.

  \Proofcase %
  If $M=\Lsumd dN$ then we have %
  $\Tseq\Gamma N{\Tdiffm{d+1}B}$ with $A=\Tdiffm dB$ so that %
  $\Psem N\Gamma\in\Kl\cL(\Tsem\Gamma,\Sdfun^{d+1}\Tsem B)$ and we
  set %
  \(
  \Psem M\Gamma
  =\Sdfun^d\Sdfmult\Comp\Psem N\Gamma
  \in\Kl\cL(\Tsem\Gamma,\Sdfun^d\Tsem B=\Tsem A)
  \).

  \Proofcase %
  If $M=\Lflipdl dlN$ then we have %
  $\Tseq\Gamma N{\Tdiffm{d+l+2}B}$ with $A=\Tdiffm{d+l+2}B$ and %
  $\Psem N\Gamma\in\Kl\cL(\Tsem\Gamma,\Sdfun^{d+l+2}\Tsem B)$ and we
  set %
  $\Psem M\Gamma=\Sdfun^d\Sflipl l\Comp\Psem N\Gamma
  \in\Kl\cL(\Tsem\Gamma,\Sdfun^{d+l+2}\Tsem B=\Tsem A)$.

  \Proofcase %
  If $M=\Ldiff N$ then we have %
  $\Tseq\Gamma N{\Timpl BC}$ and $A=(\Timpl{\Tdiff B}{\Tdiff C})$ %
  and hence %
  $\Psem N\Gamma\in\Kl\cL(\Tsem\Gamma,\Simpl{\Tsem B}{\Tsem C})$ and
  we set %
  $\Psem M\Gamma=\Sdfunint^{\Tsem B,\Tsem C}\Comp\Psem N\Gamma
  \in\Kl\cL(\Tsem\Gamma,\Simplp{\Sdfun\Tsem B}{\Sdfun\Tsem C}=\Tsem
  A)$.

  Assume now that $M=M_0+M_1$.
  We distinguish the same subcases as in the proof of
  Lemma~\ref{lemma:ty-der-sum}.

  \Proofcase %
  The last rule of $\delta$ is~\ref{rl:trprojt} 
  so that %
  $M_j=\Lprojd jdN$, $A=\Tdiffm dB$ and %
  $\Tseq\Gamma N{\Tdiffm{d+1}B}$ by a derivation $\delta'$ such that %
  $\Tdersize{\delta'}=\Tdersize\delta-1$.
  By inductive hypothesis we have %
  $f=\Psem N\Gamma\in\Kl\cL(\Tsem\Gamma,\Sdfun^{d+1}\Tsem B)$ and we
  know that the morphisms %
  $\Sdfun^d\Sproj 0\Comp f=\Psem{M_0}\Gamma$ and %
  $\Sdfun^d\Sproj 1\Comp f=\Psem{M_1}\Gamma$ are summable, with sum %
  $\Sdfun^d\Ssum\Comp f$.
  We set %
  $\Psem M\Gamma=\Sdfun^d\Ssum\Comp\Psem N\Gamma
  \in\Kl\cL(\Tsem\Gamma,\Sdfun^d\Tsem B)$ so that actually %
  $\Psem M\Gamma=\Psem{M_0}\Gamma+\Psem{M_1}\Gamma$.

  \Proofcase %
  The last rule of $\delta$ is~\ref{rl:trprojd} 
  so that %
  $M_0=\Lprojd 1d{N_0}$, $M_1=\Lprojd 0d{N_1}$, %
  $A=\Tdiffm dB$ and %
  $\Tseq\Gamma{N_0+N_1}{\Tdiffm{d+1}{B}}$ by a derivation %
  $\delta'$ such that %
  $\Tdersize{\delta'}=\Tdersize{\delta}-1$.
  By inductive hypothesis we have defined two summable morphisms %
  $\Psem{N_j}\Gamma\in\Kl\cL(\Tsem\Gamma,\Sdfun^{d+1}\Tsem{B})$ for
  $j=0,1$.
  It follows that the 4 morphisms
  $(\Sdfun^d\Sproj i)\Compl\Psem{N_j}\Gamma$ (for $i,j\in\Eset{0,1}$)
  are summable, and hence %
  $\Sdfun^d\Sproj 1\Comp\Psem{N_0}\Gamma =\Psem{M_0}\Gamma$ and %
  $\Sdfun^d\Sproj 0\Comp\Psem{N_1}\Gamma =\Psem{M_1}\Gamma$ are
  summable.
  We set %
  $\Psem M\Gamma=\Psem{M_0}\Gamma+\Psem{M_1}\Gamma$.

  \Proofcase %
  The last rule of $\delta$ is~\ref{rl:trlin} 
  so that there is a linear
  context $L$ of height $1$ and terms $N_0$, $N_1$ such that %
  $M_j=L[N_j]$ and %
  $\Tseq\Gamma{L[N_0+N_1]}A$ by a derivation $\delta'$ such that %
  $\Tdersize{\delta'}=\Tdersize\delta-1$. %
  This implies (by a simple inspection of the various possibilities for
  $L$ which has height $1$) that for some context $\Delta$ and some
  type $B$ one has %
  $\Tseq\Delta{N_0+N_1}B$ by a derivation $\delta''$ such that %
  $\Tdersize{\delta''}=\Tdersize{\delta'}-k_L$ where %
  $k_L\in\Natnz$ \emph{depends only on $L$} (if for instance
  $L=\Lif d{\Echole}{P_0}{P_1}$ then $k_L=1+k_0+k_1$ where $k_i$ is
  the size of the typing derivation of $P_i$). %
  Now we consider the various possibilities for $L$.
  \begin{itemize}
  \item $L=\Abst xC\Echole$ and we have %
    $\Delta=(\Gamma,x:B)$, $A=\Timplp CB$.
    By inductive hypothesis we have %
    $\Psem{N_j}{\Gamma,x:C}\in\Kl\cL(\With{\Tsem\Gamma}{\Tsem C},\Tsem
    B)$ %
    for $j=0,1$, %
    $\Psem{N_0}{\Gamma,x:C}$ and $\Psem{N_1}{\Gamma,x:C}$ %
    are summable and also that %
    $\Psem{N_0+N_1}{\Gamma,x:C}
    =\Psem{N_0}{\Gamma,x:C}+\Psem{N_1}{\Gamma,x:C}$. %
    We have $\Psem{M_j}\Gamma=\Cur{\Psem{N_j}{\Gamma,x:C}}$ %
    because we know that there is a derivation $\delta_j$ of %
    $\Tseq{\Gamma,x:C}{N_j}B$ such that %
    $\Tdersize{\delta_j}\leq\Tdersize{\delta}$ by %
    Lemma~\ref{lemma:ty-der-sum}.
    It follows that %
    $\Psem{M_0}\Gamma$ and $\Psem{M_1}\Gamma$ %
    are summable and we can set %
    $\Psem M\Gamma=\Psem{M_0}\Gamma+\Psem{M_1}\Gamma$
  \item $L=\App{\Echole}P$ and we have %
    $\Delta=\Gamma$, $B=\Timplp CA$, $\Tseq\Gamma PC$ by a derivation
    of size $k_P>0$ and %
    the derivation $\delta'$ of %
    $\Tseq\Gamma{N_0+N_1}{\Timpl CA}$ satisfies %
    $\Tdersize{\delta}=\Tdersize{\delta'}+k_P+1$.
    So by inductive hypothesis %
    $\Psem P\Gamma\in\Kl\cL(\Tsem\Gamma,\Tsem C)$, and %
    $\Psem{N_j}\Gamma\in\Kl\cL(\Tsem\Gamma,\Simpl{\Tsem C}{\Tsem
      A})$ %
    (for $j=0,1$) are summable and we have %
    $\Psem{N_0+N_1}\Gamma=\Psem{N_0}\Gamma+\Psem{N_1}\Gamma$.
    We have %
    $\Psem{M_j}\Gamma=\Ev\Comp\Tuple{\Psem{N_j}\Gamma,\Psem
      P\Gamma}$ %
    because the derivation $\delta_j$ of $\Tseq\Gamma{M_j}A$
    satisfies %
    $\Tdersize{\delta_j}\leq\Tdersize\delta$ and hence %
    $\Psem{M_0}\Gamma$ and $\Psem{M_1}\Gamma$ are summable with %
    $\Psem{M_0}\Gamma+\Psem{M_1}\Gamma
    =\Ev\Comp\Tuple{\Psem{N_0}\Gamma+\Psem{N_1}\Gamma,\Psem
      P\Gamma}$ %
    by left-linearity of $\Ev$.
    We set %
    $\Psem M\Gamma=\Psem{M_0}\Gamma+\Psem{M_1}\Gamma$.
  \item $L=\Lif d\Echole{P_0}{P_1}$ and we have %
    $\Delta=\Gamma$, $B=\Tdiffm d\Tnat$, $A=\Tdiffm dC$ and %
    $\Tseq\Gamma{P_i}C$ for $i=0,1$ by derivations of sizes %
    $k_0$ and $k_1$ respectively so that, denoting by %
    $\delta'$ the derivation of %
    $\Tseq\Gamma{N_0+N_1}{\Tdiffm d\Tnat}$, we have %
    $\Tdersize\delta=\Tdersize{\delta'}+k_0+k_1+1$. %
    It follows by inductive hypothesis that %
    $\Psem{P_i}\Gamma\in\Kl\cL(\Tsem\Gamma,\Tsem C)$ for $i=0,1$, %
    and that %
    $\Psem{N_j}\Gamma\in\Kl\cL(\Tsem\Gamma,\Sdfun^d\Snat)$ for $j=0,1$
    are summable with %
    $\Psem{N_0+N_1}\Gamma=\Psem{N_0}\Gamma+\Psem{N_1}\Gamma$.
    For $j=0,1$ we have %
    $\Psem{M_j}\Gamma=\Sdfunpart 0^d\Mlin\Sif
    \Comp\Tuple{\Psem{N_j}\Gamma,\Tuple{\Psem{M_0}\Gamma,\Psem{M_1}\Gamma}}$ %
    because the derivation $\delta_j$ of $\Tseq\Gamma{M_j}A$
    satisfies %
    $\Tdersize{\delta_j}\leq\Tdersize\delta$ %
    (by Lemma~\ref{lemma:ty-der-sum}) and hence, by left-linearity
    of %
    $\Sdfunpart 0^d\Mlin\Sif$, %
    $\Psem{M_0}\Gamma$ and $\Psem{M_1}\Gamma$ are summable and satisfy
    $\Psem{M_0}\Gamma+\Psem{M_1}\Gamma =\Sdfunpart 0^d\Mlin\Sif
    \Comp\Tuple{\Psem{N_0}\Gamma+\Psem{N_1}\Gamma,
      \Tuple{\Psem{P_0}\Gamma,\Psem{P_1}\Gamma}}$.
    We set %
    $\Psem M\Gamma=\Psem{M_0}\Gamma+\Psem{M_1}\Gamma$.
  \item $L=\Llet dx{\Echole}P$ and we have %
    $\Delta=\Gamma$, $B=\Tdiffm d\Tnat$, %
    $\Tseq{\Gamma,x:\Tnat}{P}{C}$ by a derivation of size $k$ and
    $A=\Tdiffm dC$ so that denoting by $\delta'$ the derivation of %
    $\Tseq\Gamma{N_0+N_1}{\Tdiffm d\Tnat}$ we have %
    $\Tdersize\delta=\Tdersize{\delta'}+k+1$.
    This case is completely similar to the previous one.
    By inductive hypothesis we have %
    $\Psem P{\Gamma,x:\Tdiffm d\Tnat}
    \in\Kl\cL(\Tsem\Gamma\IWith\Sdfun^d\Snat,\Tsem C)$ and, for
    $j=0,1$ we have %
    $\Psem{N_j}\Gamma\in\Kl\cL(\Tsem\Gamma,\Tdiffm\Snat)$ which are
    summable with %
    $\Psem{N_0+N_1}\Gamma=\Psem{N_0}\Gamma+\Psem{N_1}\Gamma$.
    For $j=0,1$ we have %
    $\Psem{M_j}\Gamma=\Sdfunpart 0^d\Mlin\Slet
    \Comp\Tuple{\Psem{N_j}\Gamma,\Cur\Psem P{\Gamma,x:\Tdiffm
        d\Tnat}}$ %
    because the derivation $\delta_j$ of $\Tseq\Gamma{M_j}A$
    satisfies %
    $\Tdersize{\delta_j}\leq\Tdersize\delta$ %
    (by Lemma~\ref{lemma:ty-der-sum}) and hence, by left-linearity
    of %
    $\Sdfunpart 0^d\Mlin\Slet$, %
    $\Psem{M_0}\Gamma$ and $\Psem{M_1}\Gamma$ are summable and satisfy
    $\Psem{M_0}\Gamma+\Psem{M_1}\Gamma =\Sdfunpart 0^d\Mlin\Slet
    \Comp\Tuple{\Psem{N_0}\Gamma+\Psem{N_1}\Gamma, \Cur{\Psem
        P{\Gamma,x:C}}}$.
    We set %
    $\Psem M\Gamma=\Psem{M_0}\Gamma+\Psem{M_1}\Gamma$.
  \item $L=\Ldiff{\Echole}$ and we have %
    $\Delta=\Gamma$, $B=\Timplp CE$ and
    $A=\Timplp{\Tdiff C}{\Tdiff E}$ %
    and we use $\delta'$ for the derivation of %
    $\Tseq\Gamma{N_0+N_1}{\Timpl CE}$ so that %
    $\Tdersize{\delta}=\Tdersize{\delta'}+1$ %
    and hence by inductive hypothesis we have %
    $\Psem{N_j}\Gamma\in\Kl\cL(\Tsem\Gamma,\Simpl{\Tsem C}{\Tsem
      E})$ %
    for $j=0,1$, these two morphisms are summable and we have %
    $\Psem{N_0+N_1}\Gamma=\Psem{N_0}\Gamma+\Psem{N_1}\Gamma$.
    For $j=0,1$ we have %
    $\Psem{M_j}\Gamma=\Sdfunint^{\Tsem C,\Tsem E}
    \Comp\Psem{N_j}\Gamma$ %
    because the derivation $\delta_j$ of $\Tseq\Gamma{M_j}A$
    satisfies %
    $\Tdersize{\delta_j}\leq\Tdersize\delta$ %
    (by Lemma~\ref{lemma:ty-der-sum}) and hence, by linearity of %
    $\Sdfunint^{\Tsem C,\Tsem E}$, %
    $\Psem{M_0}\Gamma$ and $\Psem{M_1}\Gamma$ are summable and satisfy
    $\Psem{M_0}\Gamma+\Psem{M_1}\Gamma =\Sdfunint^{\Tsem C,\Tsem E}
    \Comp(\Psem{N_0}\Gamma+\Psem{N_1}\Gamma)$. 
    We set %
    $\Psem M\Gamma=\Psem{M_0}\Gamma+\Psem{M_1}\Gamma$.
  \item $L=\Lprojd id\Echole$ and we have %
    $\Delta=\Gamma$, $B=\Tdiffm{d+1}C$ and $A=\Tdiffm dC$ %
    and we use $\delta'$ for the derivation of %
    $\Tseq\Gamma{N_0+N_1}{\Tdiffm{d+1}C}$ so that %
    $\Tdersize{\delta}=\Tdersize{\delta'}+1$ %
    and hence by inductive hypothesis we have %
    $\Psem{N_j}\Gamma\in\Kl\cL(\Tsem\Gamma,\Sdfun^{d+1}\Tsem C)$ %
    for $j=0,1$, these two morphisms are summable and we have %
    $\Psem{N_0+N_1}\Gamma=\Psem{N_0}\Gamma+\Psem{N_1}\Gamma$.
    For $j=0,1$ we have %
    $\Psem{M_j}\Gamma=\Sdfun^d{\Sproj i} \Comp\Psem{N_j}\Gamma$ %
    because the derivation $\delta_j$ of $\Tseq\Gamma{M_j}A$
    satisfies %
    $\Tdersize{\delta_j}\leq\Tdersize\delta$ %
    (by Lemma~\ref{lemma:ty-der-sum}) and hence, by linearity of %
    $\Sdfun^d{\Sproj i}$, %
    $\Psem{M_0}\Gamma$ and $\Psem{M_1}\Gamma$ are summable and satisfy
    $\Psem{M_0}\Gamma+\Psem{M_1}\Gamma =\Sdfun^d\Sproj i
    \Comp(\Psem{N_0}\Gamma+\Psem{N_1}\Gamma)$. We set %
    $\Psem M\Gamma=\Psem{M_0}\Gamma+\Psem{M_1}\Gamma$.
  \item The remaining cases are similar: 
    in each of them we see that we can sensibly set
    $\Psem M\Gamma=\Psem{M_0}\Gamma+\Psem{M_1}\Gamma$. %
    \qedhere
  \end{itemize}
\end{proof}

\begin{figure}
  \centering
  \begin{align*}
    \Psem{x_i}\Gamma
    &=\Proj i
    &
      \Psem{\Abst xBN}\Gamma
    &=\Cur{\Psem N{\Gamma,x:B}}\\
    \Psem{\App NP}\Gamma
    &=\Ev\Comp\Tuple{\Psem N\Gamma,\Psem P\Gamma}
    &
      \Psem{\Lfix N}\Gamma
    &=\Sfix^{\Tsem A}\Comp\Psem N\Gamma\\
    \Psem{\Num n}\Gamma
    &=\Snum n
    &
      \Psem{\Lsucc dN}\Gamma
    &=\Ssuc^d(\Psem N\Gamma)\\
    \Psem{\Lpred dN}\Gamma
    &=\Ssuc^d(\Psem N\Gamma)
    &
      \Psem{\Lif dN{P_0}{P_1}}\Gamma
    &=\Sif^d (\Psem N\Gamma,\Psem{P_0}\Gamma,\Psem{P_1}\Gamma)\\
    \Psem{\Llet d xNP}\Gamma
    &=\Slet^d(\Psem N\Gamma,\Cur{\Psem P{\Gamma,x:\Tnat}})
    &
      \Psem{\Lprojd idN}\Gamma
    &=\Sdfun^d\Sproj i\Comp\Psem N\Gamma\\
    \Psem{\Linjd idN}\Gamma
    &=\Sdfun^d\Sin i\Comp\Psem N\Gamma
    &
      \Psem{\Lsumd dN}\Gamma
    &=\Sdfun^d\Sdfmult\Comp\Psem N\Gamma\\
    \Psem{\Lflipdl dlN}\Gamma
    &=\Sdfun^d\Sflipl l\Comp\Psem N\Gamma
    &
      \Psem{\Ldiff N}\Gamma
    &=\Sdfunint^{\Tsem B,\Tsem C}\Comp\Psem N\Gamma\\
    \Psem{M_0+M_1}\Gamma
    &=\Psem{M_0}\Gamma+\Psem{M_1}\Gamma
  \end{align*}
  \caption{Interpretation of terms (see the proof of
    Theorem~\ref{th:sem-defined-sum})}
  \label{fig:term-interp}
\end{figure}

\subsubsection{Substitution lemmas}

The first substitution lemma is completely standard in a
$\lambda$-calculus setting.

\begin{lemma}\label{lemma:sem-context-extension}
  If $\Tseq\Gamma MB$, so that $\Tseq{\Gamma,x:A}MB$, then we have %
  $\Psem M{\Gamma,x:A}=\Psem M\Gamma\Comp\Proj 0$ where %
  $\Proj 0\in\Kl\cL(\With{\Tsem\Gamma}{\Tsem A},\Tsem\Gamma)$ is the
  first projection.
\end{lemma}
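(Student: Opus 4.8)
The plan is to prove the statement by induction on the size $\Tdersize\delta$ of a typing derivation $\delta$ of $\Tseq\Gamma MB$, following exactly the case analysis of Theorem~\ref{th:sem-defined-sum} (so that we may work with the canonical interpretation $\Psem M\Gamma$, which by that theorem does not depend on $\delta$). The binder cases, however, force a preliminary generalization: in the clause for $M=\Abst yCN$, and likewise for $M=\Llet dyNP$, the interpretation is built from $\Psem N{\Gamma,y:C}$, so the weakening variable $x:A$ must be inserted \emph{before} the trailing bound variable, i.e.\ in the interior of the context rather than at its end. I would therefore first strengthen the statement to allow weakening at an arbitrary position: for any splitting $\Gamma=(\Gamma_0,\Gamma_1)$ and any fresh $x$, if $\Tseq{\Gamma_0,\Gamma_1}MB$ then $\Tseq{\Gamma_0,x:A,\Gamma_1}MB$ and $\Psem M{\Gamma_0,x:A,\Gamma_1}=\Psem M{\Gamma_0,\Gamma_1}\Comp\pi$, where $\pi\in\Kl\cL(\Tsem{\Gamma_0}\IWith\Tsem A\IWith\Tsem{\Gamma_1},\Tsem{\Gamma_0}\IWith\Tsem{\Gamma_1})$ is the linear projection dropping the $\Tsem A$ factor (the image under $\Kllin$ of the evident cartesian projection of $\cL$). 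The lemma as stated is the special case where $\Gamma_1$ is empty, for which $\pi=\Proj0$.

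The core observation is that every clause of Figure~\ref{fig:term-interp} produces $\Psem M\Gamma$ in one of a few uniform shapes. Many constructs ($\Lsucc dN$, $\Lpred dN$, $\Lprojd idN$, $\Linjd idN$, $\Lsumd dN$, $\Lflipdl dlN$, $\Ldiff N$, $\Lfix N$) yield a postcomposition $u\Comp\Psem N\Gamma$ by a fixed morphism $u$ that is independent of the context; here the induction hypothesis $\Psem N{\Gamma_0,x:A,\Gamma_1}=\Psem N{\Gamma_0,\Gamma_1}\Comp\pi$ together with associativity of $\Comp$ gives the result at once. The tupling constructs ($\App NP$, and $\Sif^d(\cdots)$, $\Slet^d(\cdots)$) are handled using $\Tuple{f,g}\Comp\pi=\Tuple{f\Comp\pi,g\Comp\pi}$ and the left-linearity of $\Ev$, $\Mlin\Sif$ and $\Mlin\Slet$ (recall these are built as $\Mlin\cdot$, resp.\ $\Sdfunpart0^d\Mlin\cdot$), so that precomposing the whole tuple by $\pi$ agrees with precomposing each component. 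The variable case $M=x_i$ reduces to the cartesian identity $\Proj j\Comp\pi=\Proj{j'}$ relating the projection of the longer context to that of the shorter one.

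The binder and let cases are precisely where the interior-weakening generalization pays off. For $M=\Abst yCN$, applying the induction hypothesis to $N$ in the split $(\Gamma_0,(\Gamma_1,y:C))$ gives $\Psem N{\Gamma_0,x:A,\Gamma_1,y:C}=\Psem N{\Gamma_0,\Gamma_1,y:C}\Comp(\pi\IWith\Tsem C)$, after which naturality of currying, $\Cur{(g\Comp(\pi\IWith\Tsem C))}=\Cur g\Comp\pi$, yields the claim; the clause for $\Llet dyNP$ is treated identically for the bound $y:\Tnat$. For the sum case $M=M_0+M_1$, and more generally for the subcases corresponding to the last rules \Trprojt, \Trprojd{} and \Trlin{} handled as in Theorem~\ref{th:sem-defined-sum}, I would invoke the partial additivity of $\Kl\cL$ recalled in Section~\ref{sec:cohdiff-summary}: since $\pi$ is linear, precomposition by $\pi$ preserves summability and distributes over the partial sum (Lemma~\ref{cohdiff-lemma:compl-summable} of~\cite{Ehrhard21a}), so from $\Psem{M_j}{\Gamma_0,x:A,\Gamma_1}=\Psem{M_j}{\Gamma_0,\Gamma_1}\Comp\pi$ one obtains $(\Psem{M_0}{\Gamma_0,\Gamma_1}+\Psem{M_1}{\Gamma_0,\Gamma_1})\Comp\pi$ exactly.

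The main obstacle is bookkeeping rather than conceptual: carrying the interior-weakening generalization coherently through the binder clauses, and checking that each compound combinator ($\Sif^d$, $\Slet^d$, $\Sdfunint$) genuinely commutes with precomposition by a linear projection. This last point is nothing other than the left-linearity and naturality already established for the partial-derivative and multilinear constructs in Sections~\ref{sec:partial-derivatives} and~\ref{sec:diff-multilinear}, so no categorical input is needed beyond the generalization of the statement itself.
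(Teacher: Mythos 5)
Your proposal is correct and takes essentially the same approach as the paper: the paper's entire proof is the single line ``By induction on the typing derivation of $M$'', and your induction over derivations with the standard clause-by-clause case analysis is exactly what that line leaves implicit (your invocation of left-linearity for the tupling cases is even slightly more than needed, since $\Tuple{f,g}\Comp h=\Tuple{f\Comp h,g\Comp h}$ holds for arbitrary $h$ by the universal property of the product). The interior-weakening strengthening you introduce for the binder cases is a legitimate and necessary bit of bookkeeping; the paper handles the analogous positioning issue for Lemma~\ref{lemma:sem-diff-subst} by appealing instead to the symmetry of the cartesian product $\IWith$, which amounts to the same thing.
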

\begin{proof}
  By induction on the typing derivation of \(M\).
\end{proof}

\begin{lemma}[Ordinary substitution]\label{lemma:sem-subst}
  If $\Tseq\Gamma N{A}$ and %
  $\Tseq{\Gamma,x:A}MB$ then %
  $\Psem{\Subst MNx}\Gamma\in\Kl\cL(\Tsem\Gamma,\Tsem B)$ satisfies
  \begin{align*}
    \Psem{\Subst MxN}\Gamma=
    \Psem M{\Gamma,x:A}
    \Comp\Tuple{\Tsem\Gamma,\Psem N\Gamma}\,.
  \end{align*}
\end{lemma}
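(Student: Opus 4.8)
The plan is to proceed by induction on the typing derivation $\delta$ of $\Tseq{\Gamma,x:A}MB$, following exactly the case analysis used to define $\Psem M{\Gamma,x:A}$ in Theorem~\ref{th:sem-defined-sum} (summarized in Figure~\ref{fig:term-interp}). The key categorical fact underlying every case is that $\Tuple{\Tsem\Gamma,\Psem N\Gamma}\in\Kl\cL(\Tsem\Gamma,\With{\Tsem\Gamma}{\Tsem A})$ is a morphism of the Kleisli category, and that all the operations used in the interpretation (currying, tupling, $\Ev$, composition with linear morphisms such as $\Sdfun^d\Sproj i$, $\Sdfun^d\Sin i$, $\Sdfun^d\Sdfmult$, $\Sdfun^d\Sflipl l$, $\Sdfunint$, and the multilinear constructs $\Sif^d$, $\Slet^d$, $\Ssuc^d$, $\Spred^d$) are natural, so that precomposing $\Psem M{\Gamma,x:A}$ with the substitution morphism commutes with them. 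Concretely, in each case I would expand the definition of $\Psem M{\Gamma,x:A}$, push $\Comp\Tuple{\Tsem\Gamma,\Psem N\Gamma}$ inward using naturality and the inductive hypothesis on the immediate subterms, and recognize the result as $\Psem{\Subst MNx}\Gamma$.

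First I would dispatch the base cases. For $M=x$ we have $\Subst xNx=N$ and $\Psem x{\Gamma,x:A}=\Proj 1$, so the equation reduces to $\Proj1\Comp\Tuple{\Tsem\Gamma,\Psem N\Gamma}=\Psem N\Gamma$, a projection law in $\Kl\cL$. For $M=x_i$ with $x_i\neq x$, substitution does nothing and the claim is $\Proj i\Comp\Tuple{\Tsem\Gamma,\Psem N\Gamma}=\Proj i$, again a projection law. The constants $\Num n$ and $\Lzerot A$ are handled by the fact that $\Snum n$ and $0$ factor through the weakening $\Coalgw{}$ and composition with $\Tuple{\Tsem\Gamma,\Psem N\Gamma}$ leaves them unchanged.

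For the inductive cases I would treat application and abstraction as the representative structural constructs. For $M=\App{N'}P$, using $\Psem{\App{N'}P}{\Gamma,x:A}=\Ev\Comp\Tuple{\Psem{N'}{\Gamma,x:A},\Psem P{\Gamma,x:A}}$, I would compose with $\Tuple{\Tsem\Gamma,\Psem N\Gamma}$, use the tupling law $\Tuple{g,h}\Comp k=\Tuple{g\Comp k,h\Comp k}$ in the CCC $\Kl\cL$, apply the inductive hypothesis to $N'$ and $P$, and conclude. For $M=\Abst yC{N'}$ the argument requires the usual care with the extra variable: combining Lemma~\ref{lemma:sem-context-extension} (to insert $y$ into the context for $\Psem N{\Gamma}$) with the standard currying identity $\Cur g\Comp k=\Cur{(g\Comp\Withp k{\Tsem C})}$, one reduces to the inductive hypothesis for $N'$ in the extended context. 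All the remaining constructs ($\Lfix$, $\Lsucc d$, $\Lpred d$, $\Lif d{}{}{}$, $\Llet dxy$, $\Lprojd id$, $\Linjd id$, $\Lsumd d$, $\Lflipdl dl$, $\Ldiff{}$, and $+$) follow the identical template: each is $\theta\Comp\Tuple{\dots}$ or $\theta\Comp\Psem{N'}{\Gamma,x:A}$ for a natural (indeed often linear) $\theta$, so naturality slides $\theta$ past the substitution morphism and the inductive hypothesis finishes the case.

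The main obstacle I expect is bookkeeping rather than conceptual: the binder cases ($\Abst yC{}$ and $\Llet dxy{}$) require correctly weaving Lemma~\ref{lemma:sem-context-extension} together with the interchange of the substituted variable and the freshly bound variable, and must invoke the $\alpha$-convention that $y\notin\Fvar N\cup\Eset x$ so that $\Subst{(\Abst yC{N'})}Nx=\Abst yC{\Subst{N'}Nx}$ is legitimate. Handling these carefully is the only delicate part; the rest is a uniform naturality argument, so I would state the binder cases in full and remark that the others are strictly analogous.
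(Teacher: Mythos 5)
Your proposal is correct and takes essentially the same approach as the paper: the paper's entire proof of this lemma is the one-liner ``straightforward induction on the typing derivation of $M$'', which is exactly the induction you carry out. The details you supply (projection and tupling laws for the base cases, naturality/linearity of the semantic constructs for the structural cases, and Lemma~\ref{lemma:sem-context-extension} together with the $\alpha$-convention for the binder cases) are precisely what that elided induction consists of.
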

\begin{proof}
  By induction on the typing derivation $M$.
\end{proof}

\begin{lemma}[Semantics of the differential]\label{lemma:sem-diff-subst}
  If %
  $\Tseq{\Gamma,x:A}MB$ then %
  $\Psem{\Ldletv xM}{\Gamma,x:\Tdiff A}
  \in\Kl\cL(\Tsem\Gamma\IWith\Ldiff{\Tsem A},\Sdfun\Tsem B)$ satisfies
  \begin{align*}
    \Psem{\Ldletv xM}{\Gamma,x:\Tdiff A}=
    \Sdfunpart 1{\Psem M{\Gamma,x:A}}\,.
  \end{align*}
\end{lemma}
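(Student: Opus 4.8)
The plan is to argue by induction on the typing derivation of $M$, following exactly the inductive definition of $\Ldletv xM$ in Figure~\ref{fig:diff-subst} and the definition of the interpretation recorded in Figure~\ref{fig:term-interp}. Throughout I treat the context $\Gamma$ as a single bundled object $Z=\Tsem\Gamma$ and the distinguished variable $x:A$ as the second component, so that $\Psem M{\Gamma,x:A}\in\Kl\cL(\With Z{\Tsem A},\Tsem B)$ and the right-hand side $\Sdfunpart 1{\Psem M{\Gamma,x:A}}$ is the partial derivative in the $\Tsem A$-slot, obtained through the additive strength $\Sdfstr^1$. For each syntactic construct I would unfold both the semantics of $\Ldletv x{(\cdot)}$ and the partial derivative $\Sdfunpart 1{(\cdot)}$ of the corresponding interpretation, and reduce the desired equation to a categorical identity already established in Section~\ref{sec:partial-derivatives} and Section~\ref{sec:syn-basic-constr}. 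The inductive hypothesis, applied to the immediate subterms, always provides the semantics of their differentials as the corresponding partial derivatives.

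The routine cases are the variable, numeral, zero, successor, predecessor, projection, injection, sum-tag and flip constructs. For a variable $y\ne x$, for $\Num n$ and for $\Lzerot A$, the interpretation factors through the first projection $\Proj0\in\Kl\cL(\With Z{\Tsem A},Z)$, and one checks directly that the partial derivative in the $\Tsem A$-slot of such an $x$-independent morphism is $\Sin0$ precomposed with it, matching the $\Linj0{(\cdot)}$ produced on the syntactic side; for the variable $y=x$ one checks that $\Sdfunpart 1{\Proj1}=\Proj1$. For $\Lsucc dN$, $\Lpred dN$, $\Lprojd idN$, $\Linjd idN$, $\Lsumd dN$ and $\Lflipdl dlN$ the result follows from the inductive hypothesis together with the naturality in $\Kl\cL$ of the linear transformations $\Sdfun^d\Ssuc$, $\Sdfun^d\Sproj i$, $\Sdfun^d\Sin i$, $\Sdfun^d\Sdfmult$ and $\Sdfun^d\Sflipl l$, each of which commutes past the functor $\Sdfun$ applied in forming the partial derivative. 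The sum construct $\Lplus{M_0}{M_1}$ and the rule \Trlin{} are handled using the linearity of the first partial derivative (a consequence of \Daxlin, that is, of the naturality of $\Sdfunit$ and $\Sdfmult$) together with Lemma~\ref{lemma:ldlet-linred}, which ensures that linear reduction is respected by $\Ldletv x{(\cdot)}$.

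The genuinely structural cases rely on the dedicated categorical lemmas. For the abstraction $\Abst yBP$ (with $x\ne y$) I would invoke Lemma~\ref{lemma:Sdfun-curry}, adapted to differentiation in the $x$-slot, so that currying commutes with the partial derivative and the equation reduces to the inductive hypothesis for $P$. For the conditional $\Lif dNPQ$ and the let construct $\Llet dyNQ$, whose differentials both carry the pattern $\Lsum{\Lflipl d{(\cdots)}}$, the essential input is Lemma~\ref{lemma:sif-slet-sdfunpart} (together with Lemma~\ref{lemma:sdfun-circular} and the circular-permutation description of Lemma~\ref{lemma:sflipl-circular-permutation}): it produces exactly the multiplication $\Sdfmult$ recorded by the $\Lsum$ tag and the iterated flip $\Sflipl d$ recorded by the $\Lflipl d{(\cdots)}$ construct. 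For the fixpoint $\Lfix N$, with differential $\Lfix{(\Lsum{\Ldiff{\Ldletv xN}})}$, I would apply Theorem~\ref{th:sdfun-sfix}, which expresses the derivative of the fixpoint operator again as a fixpoint operator at the differentiated type, the $\Lsum$ tag matching the $\Sdfmult$ appearing there.

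The main obstacle is the application case $\Ldletv x{\App PQ}=\App{\Lsum{\Ldiff{\Ldletv xP}}}{\Ldletv xQ}$, the only place where the Leibniz rule genuinely intervenes since $x$ may occur in both $P$ and $Q$. Here I would begin from $\Psem{\App PQ}{\Gamma,x:A}=\Ev\Comp\Tuple{\Psem P{\Gamma,x:A},\Psem Q{\Gamma,x:A}}$ and use the functoriality of $\Sdfun$ (its commutation with tupling and composition) to obtain $\Sdfunpart 1{\Psem{\App PQ}{\Gamma,x:A}}=\Sdfun\Ev\Comp\Tuple{\Sdfunpart 1{\Psem P{\Gamma,x:A}},\Sdfunpart 1{\Psem Q{\Gamma,x:A}}}$. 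The crux is then to rewrite $\Sdfun\Ev$ by means of Lemma~\ref{lemma:sdfun-Ev-expression}, which factors it through the ordinary evaluation, the internalization $\Sdfunint$ (the interpretation of $\Ldiff{(\cdot)}$ via equation~\Eqref{eq:dcur-dint}) and the multiplication $\Sdfmult$ (the interpretation of the $\Lsum{(\cdot)}$ tag); matching this factorization against the interpretation of the right-hand side, and carefully tracking the strengths $\Sdfstr^0$ and $\Sdfstr^1$ through $\Sdfun\Ev$, is the delicate computation. The differentiation case $\Ldletv x{\Ldiff M}=\Lflip{\Ldiff{\Ldletv xM}}$ is the other subtle one: since $\Ldiff{(\cdot)}$ is interpreted by the linear morphism $\Sdfunint$ (Lemma~\ref{lemma:dint-linear}) and the external differentiation in $x$ must be pushed past this internalized copy of $\Sdfun$, the two differentiations commute only up to the canonical flip, which is precisely the content of Theorem~\ref{th:partial-der-commute} (the $k=0$ instance of Lemma~\ref{lemma:sdfun-circular}) and accounts for the $\Lflip$ on the syntactic side.
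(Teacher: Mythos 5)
Your case-by-case plan essentially reproduces the paper's proof: you attach the same key lemmas to the same cases (Lemma~\ref{lemma:Sdfun-curry} for abstraction, Lemma~\ref{lemma:sdfun-Ev-expression} for application, the flip commutation behind Theorem~\ref{th:partial-der-commute} for $\Ldiff M$, Lemma~\ref{lemma:sif-slet-sdfunpart} for the conditional and the let, Theorem~\ref{th:sdfun-sfix} for fixpoints), and the routine linear cases follow by functoriality and naturality exactly as in the paper. The gap is in your very first sentence: you induct on the \emph{typing derivation}, whereas the paper inducts on the term $M$ itself and explicitly flags that this choice, made possible by Theorem~\ref{th:sem-defined-sum}, is ``crucial when dealing with sums''. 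Concretely, the typing rules whose conclusion is a sum $M_0+M_1$ (namely \Trprojt, \Trprojd{} and \Trlin) do not have $M_0$ and $M_1$ as subjects of their premises; worst of all, in \Trlin{} the premise types a \emph{different} term $P$ with $P\Linred M_0+M_1$, so a derivation-induction hypothesis tells you nothing about $M_0+M_1$. To transfer it you would need invariance of the interpretation under $\Linred$ — once to identify $\Psem{P}{\Gamma,x:A}$ with $\Psem{M_0+M_1}{\Gamma,x:A}$, and once more (via Lemma~\ref{lemma:ldlet-linred}) to identify $\Psem{\Ldletv xP}{\Gamma,x:\Tdiff A}$ with $\Psem{\Ldletv x{(M_0+M_1)}}{\Gamma,x:\Tdiff A}$. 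That semantic invariance is established only later, inside the proof of Theorem~\ref{th:sem-term-invariant}; the purely syntactic Lemma~\ref{lemma:ldlet-linred} that you cite does not supply it, and neither does ``linearity of the first partial derivative''.

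The missing ingredient is Theorem~\ref{th:sem-defined-sum}, which you never invoke: it guarantees that the interpretation is independent of the chosen derivation and that for any typeable sum the morphisms $\Psem{M_0}{\Gamma,x:A}$ and $\Psem{M_1}{\Gamma,x:A}$ are summable with $\Psem{M_0+M_1}{\Gamma,x:A}=\Psem{M_0}{\Gamma,x:A}+\Psem{M_1}{\Gamma,x:A}$. With this in hand the paper inducts on the structure of $M$: the case $M=M_0+M_1$ is then handled uniformly, whatever its last typing rule is, by applying the inductive hypothesis to the structurally smaller terms $M_0$ and $M_1$ and using additivity, and no \Trlin{} case ever arises. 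Either switch to this induction, or first establish semantic $\Linred$-invariance as a standalone lemma (it does hold, by linearity of the interpretations of linear contexts, and its proof does not depend on the present lemma); with that repair, the rest of your plan goes through as sketched.
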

If
$\Gamma=(\List A1k)$ and %
$\Delta=(\List A1{i-1},\Tdiff{A_i},\List A{i+1}k)$ %
for some
$i\in\Eset{1,\dots,k}$ and %
$\Tseq\Gamma MB$, and then we have %
$\Tseq\Delta{\Ldletv{x_i}M}{\Tdiff B}$ and in this slightly more
general situation the lemma states that %
\(
\Psem{\Ldletv{x_i}M}\Delta
=\Sdfunpart i^{\List Z1k}{\Psem M\Delta}
\) %
where %
$\Sdfunpart i^{\List Z1k}f=\Sdfun f\Comp\Sdfstr^i_{\List Z1k}$ %
is the ``$i$th partial derivative of $f$''.
This slightly more general statement is equivalent to the lemma by the
symmetry of the cartesian product $\IWith$.
\begin{proof}
  By induction on \(M\), %
  \emph{and not on its typing derivation $\delta$}. %
  This is possible thanks to Theorem~\ref{th:sem-defined-sum} which
  states that the interpretation does not depend on the typing
  derivation. This point is crucial when dealing with sums. We use the
  following notations: $f$ for $\Psem M{\Gamma,x:A}$, %
  $Z$ for $\Tsem\Gamma$, $U$ for $\Tsem A$, %
  and $X$ for $\Tsem B$. %
  Sometimes we also write $\Gamma=(x_1:A_1,\dots,x_k:A_k)$ and in that
  case we set %
  $Z_i=\Tsem{A_i}$ so that $Z=Z_1\IWith\cdots\IWith Z_k$.

  \Proofcase %
  Assume that $M=x$ and hence $\delta$ must end with~\ref{rl:trvar} 
  . %
  We have $X=U$ and $f=\Proj 1\in\Kl\cL(\With ZU,U)$, we have
  $\Sdfunpart1 f=(\Sdfun\Proj 1)\Comp\Sdfstr^1=\Proj
  1\Comp\Withp{\Sin0}{\Sdfun U}=\Proj 1\in\Kl\cL(\With Z{\Sdfun
    U},\Sdfun U)$ since $\Sdfun$ commutes with cartesian products in
  $\Kl\cL$.
  It follows that %
  \( \Sdfunpart 1f=\Psem{\Ldletv xM}{\Gamma,x:\Tdiff A} \) %
  since $\Ldletv xM=M$.

  \Proofcase %
  Assume that $M=x_i$ for some $i\in\Eset{1,\dots,k}$ and hence
  $\delta$ must end with~\ref{rl:trvar} 
  So we have %
  $f=\Proj i\in\Kl\cL(Z_1\IWith\cdots\IWith Z_k\IWith U,Z_i)$, %
  $X=Z_i$ %
  and we have %
  $\Sdfunpart 1f =\Sdfun f\Comp\Sdfstr^1_{Z,U} =\Sdfun
  f\Comp\Sdfstr^{k+1}_{\List Z1k,U} =\Sdfun\Proj
  i\Comp(\Sin0\IWith\cdots\IWith\Sin0\IWith U) =\Proj
  i\Comp(\Sin0\IWith\cdots\IWith\Sin0\IWith U) =\Sin0\Comp\Proj i$ %
  using the fact that $\Sfun$ preserves cartesian products and the
  expression~\Eqref{eq:sdfstrgen-def} of the $k+1$-ary $\Sdfstr^i$. %
  Therefore %
  \(
  \Sdfunpart 1f
  =\Psem{\Ldletv xM}{\Gamma,x:\Tdiff A}
  \) %
  since %
  $\Ldletv xM=\Lin0{x_i}$.

  \Proofcase %
  Assume that %
  $M=\Abst yCP$ so that $\delta$ must end with~\ref{rl:trabs} 
  applied to %
  $\Tseq{\Gamma,x:A,y:C}P{E}$ and we have $B=\Timplp CE$ and hence %
  $X=\Simplp VY$ where $\Tsem C=V$ and $\Tsem E=Y$. %
  Let $g=\Psem P{\Gamma,x:A,y:C}\in\Kl\cL(Z\IWith U\IWith V,Y)$. %
  Notice that
  \(
  \Sdfunpart 1g=\Sdfun g\Comp\Sdfstr^1_{Z,U,V}
  \in\Kl\cL(Z\IWith\Sdfun U\IWith V,\Sdfun Y)
  \). %
  We have
  \begin{align*}
    \Sdfunpart1{f}
    &= \Sdfunpart1(\Cur g)\\
    &= \Sdfun(\Cur g)\Comp\Sdfstr^1_{Z,U}\\
    &= \Cur((\Sdfun g)\Comp\Sdfstr^0_{\With ZU,V})\Comp\Sdfstr^1_{Z,U}
      \text{\quad by Lemma~\ref{lemma:Sdfun-curry}}
  \end{align*}
  where
  \(
  \Sdfstr^0_{\With ZU,V}
  \in\Kl\cL(\Sdfun Z\IWith\Sdfun U\IWith V,
  \Sdfun Z\IWith\Sdfun U\IWith\Sdfun V)
  \). %
  So
  \(
  \Sdfunpart1{f}
  =\Cur((\Sdfun g)\Comp\Sdfstr^0_{\With ZU,V}\Comp(\With{\Sdfstr^1_{Z,U}}{V}))
  =\Cur((\Sdfun g)\Comp\Sdfstr_{Z,U,V}^1)=\Cur(\Sdfunpart 1g)
  \). %
  By inductive hypothesis we have %
  $\Sdfunpart 1g
  =\Psem{\Ldletv xP}{\Gamma,x:\Tdiff A,y:C}$ and hence %
  \(
  \Sdfunpart 1f
  =\Psem{\Abst yC{\Ldletv xP}}{\Gamma,x:\Tdiff A}
  \) %
  as required.

  \Proofcase %
  Assume that $M=\Ldiff P$ so that $\delta$ must end %
  with~\ref{rl:trdiff} 
  applied to %
  $\Tseq{\Gamma,x:A}P{\Timpl CE}$ and we have %
  $B=\Timplp CE$ and hence $X=\Simplp VY$ where %
  $\Tsem C=V$ and $\Tsem E=Y$. %
  Let $g=\Psem P{\Gamma,x:A}\in\Kl\cL(\With ZU,\Simpl VY)$ so that %
  \(
  f=\Sdfunc g
  =\Sdfunint^{V,Y}\Comp g
  \in\Kl\cL(\With ZU,\Simpl{\Sdfun V}{\Sdfun Y})
  \) where %
  \(
  \Sdfunint^{V,Y}=\Cur{(\Sdfun\Ev\Comp\Sdfstr^{\Simpl VY,V}_1)}
  \).
Then
  \begin{align*}
    \Sdfunpart1f
    &=\Sdfun{\Sdfunc g}\Comp\Sdfstr^1_{Z,U}\\
    &=\Sdfun{\Cur{(\Sdfun\Ev\Comp\Sdfstr^{\Simpl V{Y},V}_1)}}
      \Comp\Sdfun g\Comp\Sdfstr^1_{Z,U}\\
    &=\Cur(\Sdfun(\Sdfun\Ev\Comp\Sdfstr^{\Simpl V{Y},V}_1)
      \Comp\Sdfstr^0_{\Simpl VY,\Sdfun V})
      \Comp\Sdfunpart 1 g
      \text{\quad by Lemma~\ref{lemma:Sdfun-curry}.}
  \end{align*}
  Next we have
  \begin{align*}
    \Sdfun^2\Ev
    \Comp\Sdfun\Sdfstr^1_{\Simpl VY,V}
    \Comp\Sdfstr^0_{\Simpl VY,\Sdfun V}
    &=\Sdfun^2\Ev
      \Comp\Sflip
      \Comp\Sdfun\Sdfstr^0_{\Simpl VY,V}
      \Comp\Sdfstr^1_{\Simpl V{\Sdfun Y},V}\\
    &=\Sflip
      \Comp\Sdfun^2\Ev
      \Comp\Sdfun\Sdfstr^0_{\Simpl VY,V}
      \Comp\Sdfstr^1_{\Simpl V{\Sdfun Y},V}
  \end{align*}
  by Lemma~\ref{lemma:Sdfstr-flip} and naturality of $\Sflip$. By our
  identification of $\Sdfun(\Simpl VY)$ with $\Simpl V{\Sdfun Y}$
  through the iso
  $\Cur(\Sdfunpart0\Ev)$ we have
  $\Sdfun\Ev\Comp\Sdfstr^0_{\Simpl VY,V}=\Ev\in\Kl\cL((\Simpl V{\Sdfun
    Y})\IWith V,\Sdfun Y)$ and hence
  \begin{align*}
    \Sdfunpart1f
    &=\Cur(\Sflip\Comp\Sdfun\Ev\Comp\Sdfstr^1_{\Simpl V{\Sdfun Y},V})
    \Comp\Sdfunpart 1g\\
    &=\Sflip\Comp\Cur(\Sdfun\Ev\Comp\Sdfstr^1_{\Simpl V{\Sdfun Y},V})
    \Comp\Sdfunpart 1g\\
    &=\Sflip\Comp\Sdfunc\Sdfunpart1 g\\
    &=\Psem{\Lflip{\Ldiff{\Ldletv xP}}}{\Gamma,x:\Tdiff A}\,.
  \end{align*}
  Notice that we have identified $\Sflip_{\Simpl VY}$ with
  $\Simpl V{\Sflip_Y}$ in accordance with our convention of
  identifying $\Sdfun^2(\Simpl VY)$ with $\Simpl V{\Sdfun^2Y}$.

  \Proofcase %
  Assume that $M=\App PQ$ so that $\delta$ must end %
  with~\ref{rl:trapp} 
  applied to %
  $\Tseq{\Gamma,x:A}P{\Timpl CB}$ and %
  $\Tseq{\Gamma,x:A}Q{C}$ and let $Y=\Tsem C$. %
  Let $p=\Psem P\Gamma\in\Kl\cL(\With ZU,\Simpl YX)$ and %
  $q=\Psem Q\Gamma\in\Kl\cL(\With ZU,Y)$ so that %
  \( f =\Psem M\Gamma =\App pq =\Ev\Comp\Tuple{p,q} \in\Kl\cL(\With
  ZU,X) \), %
  $\Sdfunpart1p\in\Kl\cL(\With Z{\Sdfun U},\Simpl Y{\Sdfun X})$ and
  $\Sdfunpart1q\in\Kl\cL(\With Z{\Sdfun U},\Sdfun Y)$. %
  We have
  $\Sdfunc{\Sdfunpart1 p}\in\Kl\cL(\With Z{\Sdfun U},\Simpl{\Sdfun
    Y}{\Sdfun^2 X})$ so that %
  \(\Sdfmult\Comp\Sdfunc{\Sdfunpart1 p}\in\Kl\cL(\With Z{\Sdfun
    U},\Simpl{\Sdfun Y}{\Sdfun X})\) and hence
  $\App{\Sdfmult\Comp\Sdfunc{\Sdfunpart1p}}
  {\Sdfunpart1q}\in\Kl\cL(\With Z{\Sdfun U},\Sdfun X)$, we prove that
  \begin{equation*}
    \Sdfunpart1(\App pq)
    =\App{\Sdfmult\Comp\Sdfunc{\Sdfunpart1p}}{\Sdfunpart1q}\,.
  \end{equation*}

  We have
  \begin{align*}
    \Sdfunpart1(\App pq)
    &=(\Sdfun\Ev)\Comp(\Sdfun\Tuple{p,q})\Comp\Sdfstr^1_{Z,U}\\
    &=(\Sdfun\Ev)\Comp\Tuple{\Sdfun p,\Sdfun q}\Comp\Sdfstr^1_{Z,U}
    \text{\quad since $\Sdfun$ preserves cart.~prod.}\\
    &=(\Sdfun\Ev)\Comp\Tuple{\Sdfunpart 1p,\Sdfunpart 1q}\,.
  \end{align*}
  where
  $\Sdfun\Ev\in\Kl\cL(\With{\Simplp X{\Sdfun Y}}{\Sdfun X},\Sdfun Y)$
  and we know by Lemma~\ref{lemma:sdfun-Ev-expression} that
  \(
  \Sdfun\Ev
  =\Sdfmult
  \Comp(\Sdfun\Ev')
  \Comp\Sdfstr^1_{\Simpl{Y}{\Sdfun X},Y}
  \) %
  where $\Ev'$ is the
  evaluation morphism in
  $\Kl\cL(\With{\Simplp Y{\Sdfun X}}{Y},\Sdfun X)$.

  On the other hand
  \begin{align*}
    \App{\Sdfunc{\Sdfunpart1p}}{\Sdfunpart1q}
    &=\Ev\Comp\Tuple{\Sdfunc{\Sdfunpart1p},\Sdfunpart1q}\\
    &=\Ev\Comp\Withp{\Sdfunint^{Y,\Sdfun X}}{\Sdfun X}
      \Comp\Tuple{\Sdfunpart1p,\Sdfunpart1q}
  \end{align*}
  where we recall that
  $\Sdfunint^{Y,\Sdfun X} =\Cur((\Sdfun\Ev')\Comp\Sdfstr^1_{\Simpl
    Y{\Sdfun X},Y}) \in\Kl\cL(\Simpl Y{\Sdfun X},\Simpl{\Sdfun
    Y}{\Sdfun^2X})$ and we have used Equation~\Eqref{eq:dcur-dint}.
  Therefore
  \begin{align*}
    \App{\Sdfunc{\Sdfunpart1p}}{\Sdfunpart1q}
    =\Sdfun\Ev'
    \Comp\Sdfstr^1_{\Simpl Y{\Sdfun X},Y}
    \Comp\Tuple{\Sdfunpart1p,\Sdfunpart1q}\,.
  \end{align*}
  It follows that
  \begin{align*}
    \App{\Sdfmult\Comp\Sdfunc{\Sdfunpart1p}}{\Sdfunpart1q}
    &=\Sdfmult\Comp\App{\Sdfunc{\Sdfunpart1p}}{\Sdfunpart1q}\\
    &=\Sdfmult\Comp(\Sdfun\Ev')
      \Comp\Sdfstr^1_{\Simpl Y{\Sdfun X},Y}
      \Comp\Tuple{\Sdfunpart1p,\Sdfunpart1q}\\
    &=\Sdfun\Ev\Comp\Tuple{\Sdfunpart1p,\Sdfunpart1q}
      \text{\quad by Lemma~\ref{lemma:sdfun-Ev-expression}}\\
    &=\Sdfunpart 1f
  \end{align*}
  where the first equation results from our identification of %
  \(\Simpl{\Sdfun Y}{\Sdfmult}\) with \(\Sdfmult\) %
  and we have also used the fact that
  \( f=\App pq=\Ev\Comp\Tuple{p,q} \). %
  Finally, using again Equation~\Eqref{eq:dcur-dint}, %
  \begin{align*}
    \Sdfunpart 1f
    &=\App{\Sdfmult
      \Comp\Sdfunint
      \Comp{\Sdfunpart1p}}{\Sdfunpart1q}\\
    &=\Psem{\App{\Lsum{\Ldiff{\Ldletv xP}}}{\Ldletv xQ}}{\Gamma,x:\Tdiff A}
  \end{align*}
  as contended, using of course also the inductive hypothesis.

  \Proofcase %
  Assume that $M=\Lsucc dP$ so that the last rule of $\delta$ %
  is~\ref{rl:trsuc} 
  and that we have %
  $\Tseq{\Gamma,x:A}P{\Tdiffm d\Tnat=B}$ and hence $X=\Sdfun^d\Snat$. %
  Let $g=\Psem P{\Gamma,x:A}$ so that %
  $f=\Sdfun^d\Mlin\Ssuc\Comp g$, we have %
  \begin{align*}
    \Sdfunpart1f
    &=\Sdfunpart 1(\Sdfun^d\Mlin\Ssuc\Comp f)\\
    &=\Sdfun^{d+1}\Mlin\Ssuc
      \Comp\Sdfun g
      \Comp\Sdfstr^1_{Z,U}\\
    &=\Sdfun^{d+1}\Mlin\Ssuc
      \Comp\Sdfunpart1 g\\
    &=\Sdfun^{d+1}\Mlin\Ssuc
      \Comp\Psem{\Ldletv xP}{\Gamma,x:\Tdiff A}
    \text{\quad   by inductive hypothesis}\\
    &=\Psem{\Lsucc{d+1}{\Ldletv xP}}{\Gamma,x:\Tdiff A}
  \end{align*}

  The cases where $M$ is of shape $\Lpred dP$, $\Lprojd idP$,
  $\Linjd idP$, $\Lsumd dP$ and $\Lflipd dP$ are similarly dealt with.

  \Proofcase %
  Assume that $M=\Lif dP{Q_0}{Q_1}$ so that $\delta$ ends %
  with~\ref{rl:trif} 
  and that we have %
  $\Tseq{\Gamma,x:A}P{\Tdiffm d\Tnat}$, %
  $\Tseq{\Gamma,x:A}{Q_i}C$ for $i=0,1$ so that %
  $B=\Tdiffm dC$ and $X=\Sdfun^dY$ where $Y=\Tsem C$. Let %
  $p=\Psem P{\Gamma,x:A}\in\Kl\cL(\With ZU,\Sdfun^d\Tnat)$ and %
  $(q_i=\Psem{Q_i}{\Gamma,x:A}\in\Kl\cL(\With ZU,Y))_{i=0,1}$.
  We have
  $\Sdfunpart0^d{\Mlin\Sif_X}\in\cL(\Sdfun^d\Snat\IWith\Withp
  YY,\Sdfun^d Y=X)$ and
  \(f=\Psem M{\Gamma,x:A}=\Sdfunpart0^d\Mlin\Sif_X
  \Comp\Tuple{p,\Tuple{q_0,q_1}}\in\Kl\cL(\With ZU,X) 
  \).
  We have
  \begin{align*}
    &\Sdfunpart1(\Sdfunpart0^d\Mlin\Sif_X
    \Comp\Tuple{p,\Tuple{q_0,q_1}})\\
    &\Textsep=\Sdfun\Sdfunpart0^d\Mlin\Sif_X
      \Comp\Sdfun\Tuple{p,\Tuple{q_0,q_1}}
      \Comp\Sdfstr^1_{Z,U}\text{\quad by def.~of }\Sdfunpart1
      \text{ and functoriality of }\Sdfun\\
    &\Textsep=\Sdfmult
      \Comp\Sdfunpart1\Sdfunpart0^{d+1}\Mlin\Sif_X
      \Comp\Tuple{\Sdfunpart1p,\Tuple{\Sdfunpart1{q_0},\Sdfunpart1{q_1}}}
      \text{\quad by Theorem~\ref{th:sdfun-sdfunpart-sfunadd}}\\
    &\Textsep=\Sdfmult
      \Comp\Sflipl d
      \Comp\Sdfunpart0^{d+1}\Mlin\Sif_{\Sdfun X}
      \Comp\Tuple{\Sdfunpart1p,\Tuple{\Sdfunpart1{q_0},\Sdfunpart1{q_1}}}
    \text{\quad by Lemma~\ref{lemma:sif-slet-sdfunpart}}
  \end{align*}
  therefore
  \begin{align*}
    \Sdfunpart1f
    &=\Sdfmult
      \Comp\Sflipl d
      \Comp\Sdfunpart0^{d+1}\Mlin\Sif_{\Sdfun X}
      \Comp\Tuple{\Sdfunpart1p,\Tuple{\Sdfunpart1{q_0},\Sdfunpart1{q_1}}}\\
    &=\Sdfmult
      \Comp\Sflipl d
      \Comp\Sdfunpart0^{d+1}\Mlin\Sif_{\Sdfun X}
      \Comp\Tuple{\Psem{\Ldletv xP}{\Gamma,x:\Tdiff A},
      \Tuple{\Psem{\Ldletv x{Q_0}}{\Gamma,x:\Tdiff A},
      \Psem{\Ldletv x{Q_1}}{\Gamma,x:\Tdiff A}}}\\
    & \text{\Textsep by ind.~hyp.}\\
    &=\Psem{\Lsum{\Lflipl d{\Lif{d+1}{\Ldletv
      xP}{\Ldletv x{Q_0}}{\Ldletv x{Q_1}}}}}{\Gamma,x:\Tdiff A}\\
    &=\Psem{\Ldletv xM}{\Gamma,x:\Tdiff A}
  \end{align*}
  as required.
  The case $M=\Llet dyPQ$ is completely similar.

  \Proofcase %
  Assume that $M=M_0+M_1$. Whatever the last rule of $\delta$ is, we
  know that $(\Tseq{\Gamma,x:A}{M_i}B)_{i=0,1}$ and, by
  Theorem~\ref{th:sem-defined-sum}, that %
  $(g_i=\Psem{M_i}{\Gamma,x:A}\in\Kl\cL(\With ZU,X))_{i=0,1}$ are well
  defined and summable and that %
  $f=g_0+g_1$, where $f=\Psem M{\Gamma,x:A}$.
  By inductive hypothesis we have $\Psem{\Ldletv x{M_i}}\Gamma=g_i$
  for $i=0,1$.
  By left-linearity of composition in $\Kl\cL$, we have
  \begin{align*}
    f
    &=g_0+g_1\\
    &=\Psem{\Ldletv x{M_0}}{\Gamma,x:\Tdiff A}
      +\Psem{\Ldletv x{M_0}}{\Gamma,x:\Tdiff A}\\
    &=\Psem{\Ldletv x{M_0}+\Ldletv x{M_1}}{\Gamma,x:\Tdiff A}
      \text{\quad by Theorem~\ref{th:sem-defined-sum}}\\
    &=\Psem{\Ldletv x{M}}{\Gamma,x:\Tdiff A}
  \end{align*}
  as required.

  \Proofcase %
  Assume that $M=\Lfix P$ so that $\delta$ ends %
  with~\ref{rl:trfix} 
  and that we have %
  $\Tseq{\Gamma,x:A}P{\Timpl BB}$ so that, setting %
  $g=\Psem M{\Gamma,x:A}$ we have %
  $g\in\Kl\cL(\With ZU,\Simpl XX)$ and %
  $f=\Psem M{\Gamma,x:A}=\Sfix^X\Comp g$.
  We have
  \begin{align*}
    \Sdfunpart 1f
    &=\Sdfun{\Sfix^X}
      \Comp\Sdfunpart1g\\
    &=\Sfix^{\Sdfun X}
      \Comp\Cur(\Sdfun\Ev^{X,X})\Comp\Sdfunpart1g
      \text{\quad by Theorem~\ref{th:sdfun-sfix}}\\
    &=\Sfix^{\Sdfun X}
      \Comp\Cur(\Sdfmult
      \Comp\Sdfun\Ev^{X,\Sdfun X}
      \Comp\Sdfstr^1_{\Simpl X{\Sdfun X},X})
      \Comp\Sdfunpart1g
      \text{\quad by Lemma~\ref{lemma:sdfun-Ev-expression}}\\
    &=\Sfix^{\Sdfun X}
      \Comp{\Simplp{\Sdfun X}{\Sdfmult}}
      \Comp\Cur(\Sdfun\Ev^{X,\Sdfun X}
      \Comp\Sdfstr^1_{\Simpl X{\Sdfun X},X})
      \Comp\Sdfunpart1g
      \text{\quad by cartesian closedness}\\
    &=\Sfix^{\Sdfun X}
      \Comp{\Simplp{\Sdfun X}{\Sdfmult}}
      \Comp\Sdfunint^{X,\Sdfun X}
      \Comp\Sdfunpart1g
      \text{\quad by definition of }\Sdfunint^{X,\Sdfun X}\\
    &=\Sfix^{\Sdfun X}
      \Comp\Sdfmult
      \Comp\Sdfunint^{X,\Sdfun X}
      \Comp\Sdfunpart1g
  \end{align*}
  and hence
  \begin{align*}
    \Sdfunpart1f
    &=\Sfix^{\Sdfun X}
      \Comp\Sdfmult
      \Comp\Sdfunint^{X,\Sdfun X}
      \Comp\Sdfunpart1g\\
    &=\Sfix^{\Sdfun X}
      \Comp\Sdfmult
      \Comp\Sdfunint^{X,\Sdfun X}
      \Comp{\Psem{\Ldletv xP}{\Gamma,x:\Tdiff A}}\\
    &=\Psem{\Lfix{(\Lsum{\Ldiff{\Ldletv xP}})}}{\Gamma,x:\Tdiff A}\\
    &=\Psem{\Ldletv xM}{\Gamma,x:\Tdiff A}
  \end{align*}
  as required.
\end{proof}

\subsubsection{Soundness theorem}

\begin{theorem}[Soundness of the semantics] %
  \label{th:sem-term-invariant}
  Assume that $\Tseq\Gamma MA$ and $M\Red M'$ (so that
  $\Tseq\Gamma{M'}A$). Then we have $\Psem M\Gamma=\Psem{M'}\Gamma$.
\end{theorem}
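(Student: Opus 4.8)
The plan is to argue by induction on the derivation of $M\Red M'$, which is built from two kinds of steps: the contextual closure rule $\frac{M\Red M'}{\Ecfilled EM\Red\Ecfilled E{M'}}$, and the base rules of Figures~\ref{fig:main-red-rules} and~\ref{fig:proj-red-rules} together with the embedding $M\Linred M'\Rightarrow M\Red M'$. I would dispose of the contextual case first. Since the interpretation of Figure~\ref{fig:term-interp} is defined compositionally --- each term former is sent to a fixed categorical operation applied to the interpretations of its immediate subterms (built from $\Cur$, $\Ev$, tupling, and the linear morphisms $\Sdfun^d\Sproj i$, $\Sdfun^d\Sin i$, $\Sdfun^d\Sdfmult$, $\Sdfunint$, together with $\Ssuc$, $\Spred$, $\Sdfunpart0^d\Mlin\Sif$, $\Sdfunpart0^d\Mlin\Slet$, $\Sfix$) --- equality of the interpretations of the reduced subterm propagates to the whole term, form by form, directly from Figure~\ref{fig:term-interp}. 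For the $\Linred$ embedding I would prove $\Psem{L[0]}\Gamma=0$ and $\Psem{L[M_0+M_1]}\Gamma=\Psem{L[M_0]}\Gamma+\Psem{L[M_1]}\Gamma$ for a linear context $L$ of height one: this is precisely additivity in the hole of the operation associated with $L$, which holds because each such operation is linear or left-linear (Lemma~\ref{lemma:Kleisli-multilin} and the linearity of $\Sdfunint$ and of the $\Sdfun^d\Sproj i$, etc.), combined with $\Psem{M_0+M_1}\Gamma=\Psem{M_0}\Gamma+\Psem{M_1}\Gamma$ from Theorem~\ref{th:sem-defined-sum}.

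For the main rules of Figure~\ref{fig:main-red-rules}, ordinary $\beta$-reduction is handled by the substitution Lemma~\ref{lemma:sem-subst} and the cartesian-closed identity $\Ev\Comp\Tuple{\Cur{\Psem N{\Gamma,x:A}},\Psem P\Gamma}=\Psem N{\Gamma,x:A}\Comp\Tuple{\Tsem\Gamma,\Psem P\Gamma}$. The differential $\beta$-rule $\Ldiffp{\Abst xAM}\Red\Abst x{\Tdiff A}{\Ldletv xM}$ is the conceptually central case, but it is short once Lemma~\ref{lemma:sem-diff-subst} is available: writing $f=\Psem M{\Gamma,x:A}$ and $U=\Tsem A$, the interpretation of the redex is $\Sdfunint^{U,\Tsem B}\Comp\Cur{f}=\Sdfunc(\Cur f)$ by~\Eqref{eq:dcur-dint}; unfolding $\Sdfunc(p)=\Cur(\Sdfunpart1{(\Ev\Comp\Withp pU)})$ with $\Ev\Comp\Withp{\Cur f}U=f$ yields $\Sdfunc(\Cur f)=\Cur(\Sdfunpart1 f)$, while the reduct is interpreted as $\Cur{(\Psem{\Ldletv xM}{\Gamma,x:\Tdiff A})}=\Cur(\Sdfunpart1 f)$ by Lemma~\ref{lemma:sem-diff-subst}, so the two agree. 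The arithmetic and test rules reduce to the defining equations of $\Ssuc$, $\Spred$, $\Snum n$ of Section~\ref{sec:basic-categ-constr} and the commuting squares~\Eqref{eq:diag-conditional}; the $\mathsf{let}$ rule uses the coalgebra definition of $\Slet$ followed again by Lemma~\ref{lemma:sem-subst}; and the fixpoint rule is immediate from~\Eqref{eq:Sfix-charact}.

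The rules of Figure~\ref{fig:proj-red-rules} are numerous but each unwinds to a single naturality or commutation identity between $\Sproj i$, $\Sin i$, $\Sdfmult$, the generalized flip, and the functor $\Sdfun$. I would group them: the rules pushing $\Lprojd id{}$ through a former follow from naturality of $\Sproj i$ with respect to $\Sfun$ and functoriality of $\Sdfun$, the side conditions $d<e$ versus $e\le d$ tracking whether the projection acts above or below the inner depth, with Lemma~\ref{lemma:proj-sin-deep} supplying the crossing identities $\Sdfun^d\Sproj0\Comp\Sin0(e)=\Sin0(e-1)$ and $\Sdfun^d\Sproj1\Comp\Sin0(e)=0$; the injection--projection rules $\Lprojd id{\Linjd jdM}\Red M$ (if $i=j$) and $\Red 0$ (otherwise) are the equations $\Sproj i\Compl\Sin j=\Kronecker ij\Id$ transported by $\Sdfun^d$; the flip rule
\[
  \Lprojd{i_{l+1}}d{\cdots\Lprojd{i_0}d{\Lflipdl dlM}}
  \Red\Lprojd{i_0}d{\Lprojd{i_{l+1}}d{\cdots\Lprojd{i_1}d{M}}}
\]
is exactly Lemma~\ref{lemma:sflipl-circular-permutation} applied under $\Sdfun^d$; and the summing rule $\Lprojd1d{\Lsumd dM}\Red\Lprojd0d{\Lprojd1dM}+\Lprojd1d{\Lprojd0dM}$ (with its companion for $\Lprojd0d{}$) is settled by the characterizing equations $\Sproj0\Compl\Sfunadd=\Sproj0\Compl\Sproj0$ and $\Sproj1\Compl\Sfunadd=\Sproj1\Compl\Sproj0+\Sproj0\Compl\Sproj1$ of $\Sdfmult$, with Theorem~\ref{th:sem-defined-sum} providing the interpretation of the right-hand sum. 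The two $\Ldiff$/$\Lprojd{}$ commutations rest on the identification $\Sdfun\Simplp XY=\Simpl X{\Sdfun Y}$ and Lemma~\ref{lemma:Sdfun-curry}.

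The main obstacle is less any single deep idea than the bookkeeping: the two substitution lemmas already carry the real semantic weight, Lemma~\ref{lemma:sem-diff-subst} itself being proved by induction on terms and relying on the derivation-independence of Theorem~\ref{th:sem-defined-sum}, so the soundness argument proper is a long but mechanical case analysis. Within it the two delicate points are (i) the rules that create a sum --- the $\Lsumd d$ projection rule and the implicit sums hidden in $\Ldletv x{}$ --- where Theorem~\ref{th:sem-defined-sum} must be invoked both to guarantee the reduct is interpretable and to preserve summability, and (ii) the depth-shifting projection rules, where matching the side conditions $d<e$ and $e\le d$ to the categorical identities requires care with the iterated injections $\Sin0(e)$ and Lemma~\ref{lemma:proj-sin-deep}, and where having \emph{both} commutation directions present in the rewriting system (as noted in the remark following Figure~\ref{fig:proj-red-rules}) must be checked to leave the interpretation unchanged in either orientation.
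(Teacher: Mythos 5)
Your proposal is correct and follows essentially the same route as the paper's proof: a case analysis on the reduction rules in which the two substitution lemmas (Lemmas~\ref{lemma:sem-subst} and~\ref{lemma:sem-diff-subst}) carry the main weight, the differential $\beta$-rule reduces to $\Sdfunint\Comp\Cur f=\Cur(\Sdfunpart 1f)$, the arithmetic, conditional, let and fixpoint rules follow from~\Eqref{eq:diag-conditional}, the coalgebra structure of $\Snat$ and~\Eqref{eq:Sfix-charact}, and the projection rules unwind to naturality and the characterizing equations of $\Sproj i$, $\Sin i$, $\Sdfmult$ and $\Sflipl l$, with Lemmas~\ref{lemma:proj-sin-deep}, \ref{lemma:sdfstr-it-charact} and~\ref{lemma:sflipl-circular-permutation} and Theorems~\ref{th:sdiff-multilin}, \ref{th:sproj-sdfun-mlin} and~\ref{th:sem-defined-sum} handling exactly the delicate points you identify. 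The only cosmetic difference is that you make the evaluation-context closure case explicit (the paper leaves it implicit in the compositionality of Figure~\ref{fig:term-interp}), which changes nothing of substance.
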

\begin{proof}
  We consider the various cases in the definition of $\Red$. We set
  $Z=\Tsem\Gamma$ and $X=\Tsem A$.

  \Proofcase %
  Assume first that $M\Linred M'$. The fact that %
  $\Psem M\Gamma=\Psem{M'}\Gamma$ results simply from the linearity of
  the semantic constructs corresponding to the linear contexts. As
  an example, consider the situation where %
  $M=\Llet dx{M_0+M_1}N$ and %
  $M'=\Llet dx{M_0}N+\Llet dx{M_1}N$ (so that we have %
  $\Tseq\Gamma{M_i}{\Tdiffm d\Tnat}$ and %
  $\Tseq\Gamma NB$ for a type $B$, and we have $A=\Tdiffm dB$). Then
  we have %
  \begin{align*}
    \Psem M\Gamma
    &=\Sdfunpart0^d\Mlin\Slet
      \Comp\Tuple{\Psem{M_0+M_1}\Gamma,\Psem N\Gamma}\\
    &=\Sdfunpart0^d\Mlin\Slet
      \Comp\Tuple{\Psem{M_0}\Gamma+\Psem{M_1}\Gamma,\Psem N\Gamma}\\
    &=\Sdfunpart0^d\Mlin\Slet
      \Comp\Tuple{\Psem{M_0}\Gamma,\Psem N\Gamma}
      +\Sdfunpart0^d\Mlin\Slet
      \Comp\Tuple{\Psem{M_1}\Gamma,\Psem N\Gamma}
  \end{align*}
  by the bilinearity of $\Slet$.

\Proofcase %
Assume that %
$M=\App{\Abst xAP}{Q}$ and %
$M'=\Subst PQx$, we directly apply Lemma~\ref{lemma:sem-subst}.

\Proofcase %
Assume that \(M=\Lfix N\) and %
\(M'=\App NM\). We directly apply Equation~\Eqref{eq:Sfix-charact}.

\Proofcase %
Assume that %
$M=\Ldiffp{\Abst xBP}$ and $M'=\Abst x{\Tdiff B}{\Ldletv xP}$ so
that %
$A=\Timplp{\Tdiff B}{\Tdiff C}$ and $\Tseq{\Gamma,x:B}PC$, so that
setting $U=\Tsem B$ and $Y=\Tsem C$ we have
$f=\Psem P{\Gamma,x:B}\in\Kl\cL(\With ZU,Y)$. Then we have
\begin{align*}
  \Psem M\Gamma
  &=\Sdfunint\Comp\Cur f\\
  &=\Cur(\Sdfun\Ev^{U,Y}\Comp\Sdfstr^1_{\Simpl UY,U})\Comp\Cur f\\
  &=\Cur(\Sdfun\Ev^{U,Y}\Comp\Sdfstr^1_{\Simpl UY,U}
    \Comp\Withp{\Cur f}{\Sdfun U})\\
  &=\Cur(\Sdfun\Ev^{U,Y}\Comp\Withp{\Sdfun(\Cur f)}{\Sdfun U}
    \Comp\Sdfstr^1_{Z,U})
    \text{\quad by nat.~of }\Sdfstr^1\\
  &=\Cur(\Sdfun(\Ev^{U,Y}\Comp\Withp{\Cur f}{U})\Comp\Sdfstr^1_{Z,U})\\
  &=\Cur(\Sdfun f\Comp\Sdfstr^1_{Z,U})\\
  &=\Cur(\Sdfunpart1f)\\
  &=\Cur(\Psem{\Ldletv xP}{\Gamma,x:\Tdiff B})
    \text{\quad by Lemma~\ref{lemma:sem-diff-subst}}\\
  &=\Psem{\Abst x{\Tdiff B}{\Ldletv xP}}\Gamma
\end{align*}

\Proofcase %
Assume that $M=\Lif0{\Num{\nu+1}}{P_0}{P_1}$ and $M'=P_1$ so that,
setting %
$p_i=\Psem{P_i}\Gamma\in\Kl\cL(Z,X)$ for $i=0,1$, we have %
$\Psem M\Gamma=\Mlin\Sif \Comp\Tuple{\Snum{\nu+1}\Comp
  0,\Tuple{p_0,p_1}}=p_1$ by the second diagram
of~\Eqref{eq:diag-conditional}, where %
$0$ is the unique element of $\Kl\cL(Z,\Top)$. The cases where %
$M=\Lsucc0{\Num \nu}\Red\Num{\nu+1}$, %
$M=\Lpred0{\Num 0}\Red\Num 0$, %
$M=\Lpred0{\Num{\nu+1}}\Red\Num \nu$, and %
$M=\Lif0{\Num 0}{P}{Q}\Red P$ are similar.

\Proofcase %
Assume that %
$M=\Llet 0x{\Num \nu}P$ and %
$M'=\Subst P{\Num \nu}x$. Let %
$f=\Psem P{\Gamma,x:\Tnat}\in\Kl\cL(\With Z\Snat,X)$.
We have $\Snum \nu\in\Kl\cL(Z,\Snat)$ and
$\Coalg\Snat\Compl\Snum \nu=\Prom{\Snum \nu}$ (actually $\Snum \nu$ is a
morphism in the Eilenberg-Moore category of $\Excl\_$) from which it
follows that
\begin{align*}
  \Psem M\Gamma
  &=\Slet^0(\Snum \nu,f)    \text{\quad using the notation of
    Section~\ref{sec:syn-basic-constr}}\\
  &=\Mlin\Slet\Comp\Tuple{\Snum \nu,\Cur f}
\\
  &=\Slet\Compl\Tensp{\Snum \nu}{\Cur f}\Compl\Contr Z
  \\
  &=\Evlin\Compl\Sym\Compl\Tensp{\Coalg\Snat}{\Limplp{\Excl\Snat}{X}}
    \Tensp{\Snum \nu}{\Cur f}\Compl\Contr Z\\
  &=\Evlin\Compl\Tensp f{\Prom{\Snum \nu}}\Compl\Contr Z\\
  &=f\Comp\Tuple{Z,\Snum \nu}\\
  &=\Psem{\Subst P{\Num \nu}x}\Gamma
\end{align*}
by
Lemma~\ref{lemma:sem-subst}.

\Proofcase %
Assume that %
$M=\Lprojd jd{\Abst xBP}$ and %
$M'=\Abst xB{\Lprojd idP}$ so that we must have %
$\Tseq{\Gamma,x:B}P{\Tdiffm{d+1}C}$ with $A=\Tdiffm{d}{\Timplp BC}$, %
$X=\Simplp U{\Sdfun^dY}$ where $U=\Tsem B$ and $Y=\Tsem C$. %
Let $f=\Psem P{\Gamma,x:B}\in\Kl\cL(\With ZU,\Sdfun^{d+1}Y)$ so that %
$\Psem{\Abst xBP}\Gamma=\Cur f\in\Kl\cL(Z,\Simpl U{\Sdfun^{d+1}Y})$
and %
$\Sdfun^d\Sproj i\Comp f\in\Kl\cL(\With ZU,\Sdfun^d Y)$.
Remember that, for any object $V$,
$\Sstr^0_{\Simpl UV,\Excl U} \in\cL(\Sdfun{\Limplp{\Excl
    U}V}\ITens\Excl X, \Sdfun(\Tens{\Limplp{\Excl X}{U}}{\Excl X}))$
and that by \Saxfun{} of~\cite{Ehrhard23a} the morphism %
$\Sstrc =\Cur({(\Sdfun\Evlin) \Compl\Sstr^0_{\Simpl UV,\Excl U}})
\in\cL(\Sdfun(\Simpl UV),\Simpl U{\Sdfun V})$ is an iso. %
Thanks to this iso we identify the objects %
$\Sdfun^{d+1}(\Simpl UY)$ and $\Simpl U{\Sdfun^{d+1} Y}$. %
Under this identification the morphisms
$\Sdfun^d\Sproj j\in\cL(\Sdfun^{d+1}(\Simpl UY),\Sdfun^d(\Simpl UY))$
and %
$\Simpl U{\Sdfun^d\Sproj j}\in\cL(\Simpl U{\Sdfun^{d+1} Y},\Simpl
U{\Sdfun^dY})$ are identified as well. Since %
$(\Simpl U{\Sdfun^d\Sproj j})\Comp\Cur f=\Cur(\Sdfun^d\Sproj j\Comp
f)$ we have
\begin{align*}
  \Psem M\Gamma
  &=\Sdfun^d\Sproj i(\Psem{\Abst xBP}\Gamma)\\
  &=(\Simpl U{\Sdfun^d\Sproj j})\Comp\Cur f\\
  &=\Cur(\Sdfun^d\Sproj j\Comp f)\\
  &=\Cur(\Sdfun^d\Sproj j\Comp \Psem P{\Gamma,x:B})\\
  &=\Cur(\Psem{\Lprojd idP}{\Gamma,x:B})\\
  &=\Psem{\Abst xB{\Lprojd idP}}\Gamma
\end{align*}
as required.

\Proofcase %
Assume that %
$M=\Lprojd id{\App PQ}$ and %
$M'=\App{\Lprojd i{d}P}{Q}$ %
so that we must have %
$\Tseq\Gamma P{\Timpl B{\Tdiffm{d+1}C}}$ and %
$\Tseq\Gamma QB$ with $A=\Tdiffm dC$. %
Setting %
$U=\Tsem B$, $Y=\Tsem C$, %
$p=\Psem P\Gamma$ and $q=\Psem Q\Gamma$ we have %
$\Psem M\Gamma=\Sdfun^d\Sproj i\Comp\App pq\in\Kl\cL(Z,\Tdiffm dY)$.
Then, by naturality of
evaluation, we have
$\Psem M\Gamma=(\Sdfun^d\Sproj
j)\Comp\Ev\Comp\Tuple{p,q}
=\Ev\Comp\Withp{\Simplp{U}{(\Sdfun^{d}\Sproj j)}}{U}
\Comp\Tuple{p,q} =\Ev\Comp\Tuple{(\Sdfun^{d+1}\Sproj j)\Comp p,q}$
under the same identification as in the previous case. That is %
$\Psem M\Gamma=\Psem{M'}\Gamma$.

\Proofcase %
Assume $M=\Lprojd jd{\Lsucc{e}P}$ and %
$M'=\Lsucc{e-1}{\Lprojd jdP}$ so that we must have $d<e$, %
$\Tseq\Gamma{\Lsucc eP}{\Tdiffm{e}\Tnat}$
and $A=\Tdiffm{e-1}\Tnat$.
We have $\Ssuc\in\cL(\Snat,\Snat)$ that we consider as usual as a
morphism in $\Kl\cL(\Snat,\Snat)$ so that
$\Sdfun^e\Ssuc\in\Kl\cL(\Sdfun^e\Snat,\Sdfun^e\Snat)$ and since $d<e$
we can write $\Sdfun^e\Snat=\Sdfun^{d+1}\Sdfun^{e-d-1}\Snat$ so that
$\Sdfun^d\Sproj i\in\Kl\cL(\Sdfun^e\Snat,\Sdfun^{e-1}\Snat)$ and we
have
\begin{align*}
  \Sdfun^d\Sproj i\Comp\Sdfun^e\Ssuc
  &=\Sdfun^d(\Sproj j\Comp\Sdfun^{e-d}\Ssuc)\\
  &=\Sdfun^d(\Sdfun^{e-d-1}\Ssuc\Comp\Sproj j)
    \text{\quad by nat.~of }\Sproj j\\
  &=\Sdfun^{e-1}\Ssuc\Comp\Sdfun^d\Sproj j
\end{align*}
so that $\Psem M\Gamma=\Psem{M'}\Gamma$. %
The cases $M=\Lprojd jd{\Lpred{e}P}$ and %
$M'=\Lpred{e-1}{\Lproj jP}$ (still with $d<e$) are completely similar.

\Proofcase %
Assume that %
$M=\Lprojd id{\Lif{e} N{P_0}{P_1}}$ and %
$M'=\Lif{e-1}{\Lprojd idN}{P_0}{P_1}$ with $d<e$. We must have %
$\Tseq\Gamma N{\Tdiffm e\Tnat}$ and %
$\Tseq\Gamma{P_j}B$ for $j=0,1$, for a type \(B\) uniquely determined
by \(A=\Tdiffm{e-1}B\). %
Let $Y=\Tsem B$, %
we have $X=\Sdfun^{e-1}Y$. Let %
$f=\Psem N\Gamma\in\Kl\cL(Z,\Sdfun^e\Snat)$ and %
$p_j=\Psem{P_j}\Gamma\in\Kl\cL(Z,Y)$ for $j=0,1$.
We have
\begin{align*}
  \Sdfun^d\Sproj i
  &\Comp\Sdfunpart0^e\Mlin\Sif
  \Comp\Tuple{f,\Tuple{p_0,p_1}}\\
  &=\Sdfun^d\Sproj i
    \Comp\Sdfun^e\Mlin\Sif
    \Comp\Sdfstr^0_{\Snat,\With YY}(e)
    \Comp\Tuple{f,\Tuple{p_0,p_1}}
    \text{\quad by Lemma~\ref{lemma:sdfunpart-it-expr}}\\
  &=\Sdfun^d(\Sproj i\Comp\Sdfun^{e-d}\Mlin\Sif)
    \Comp\Sdfstr^0_{\Snat,\With YY}(e)
    \Comp\Tuple{f,\Tuple{p_0,p_1}}\,.
\end{align*}
Now we have to consider the two cases $i=0$ and $i=1$. The first case
is dealt with as follows.
\begin{align*}
  \Sdfun^d\Sproj 0
  &\Comp\Sdfunpart0^e\Mlin\Sif
  \Comp\Tuple{f,\Tuple{p_0,p_1}}\\
  &=\Sdfun^d(\Sdfun^{e-d-1}\Mlin\Sif\Comp\Withp{\Sproj 0}{\Sproj 0}) 
    \Comp\Sdfstr^0_{\Snat,\With YY}(e)
    \Comp\Tuple{f,\Tuple{p_0,p_1}}\\
  &  \text{\Textsep by Theorems~\ref{th:sdiff-multilin}
    and~\ref{th:sproj-sdfun-mlin}}\\
  &=\Sdfun^{e-1}\Mlin\Sif
    \Comp\Withp{\Sdfun^d\Sproj 0}{\Sdfun^d\Sproj 0}
    \Comp\Withp{\Sdfun^e\Snat}{\Sin0(e)}
    \Comp\Tuple{f,\Tuple{p_0,p_1}}\\
    &\text{\Textsep by Lemma~\ref{lemma:sdfstr-it-charact}}\\
  &=\Sdfun^{e-1}\Mlin\Sif
    \Comp\Withp{\Sdfun^d\Sproj0}{\Sin0(e-1)}
    \Comp\Tuple{f,\Tuple{p_0,p_1}}
    \text{\quad by Lemma~\ref{lemma:proj-sin-deep}}\\
  &=\Sdfun^{e-1}\Mlin\Sif
    \Comp\Withp{\Sdfun^{e-1}\Snat}{\Sin0(e-1)}
    \Comp\Tuple{\Sdfun^d\Sproj 0\Comp f,\Tuple{p_0,p_1}}\\
  &=\Sdfunpart0^{e-1}\Mlin\Sif
    \Comp\Tuple{\Sdfun^d\Sproj 0\Comp f,\Tuple{p_0,p_1}}\,.
\end{align*}
Let us deal with the second case.
\begin{align*}
  \Sdfun^d\Sproj 1
  \Comp\Sdfunpart0^e\Mlin\Sif
  \Comp\Tuple{f,\Tuple{p_0,p_1}}
  &=\Sdfun^d(\Sdfun^{e-d-1}\Mlin\Sif\Comp\Withp{\Sproj 0}{\Sproj 1}) 
    \Comp\Sdfstr^0_{\Snat,\With XX}(e)
    \Comp\Tuple{f,\Tuple{p_0,p_1}}\\
  &\ +\Sdfun^d(\Sdfun^{e-d-1}\Mlin\Sif\Comp\Withp{\Sproj 1}{\Sproj 0}) 
    \Comp\Sdfstr^0_{\Snat,\With XX}(e)
    \Comp\Tuple{f,\Tuple{p_0,p_1}}\\
  &\text{\Textsep by Theorems~\ref{th:sdiff-multilin}
    and~\ref{th:sproj-sdfun-mlin}.}
\end{align*}
By Lemmas~\ref{lemma:sdfstr-it-charact} and~\ref{lemma:proj-sin-deep}
we have %
$\Withp{\Sdfun^d{\Sproj 1}}{\Sdfun^d{\Sproj 0}}
\Comp\Sdfstr^{\Snat,\With XX}_0(e) =\Withp{\Sdfun^d{\Sproj
    1}}{\Sin0(e-1)}$ and %
$\Withp{\Sdfun^d{\Sproj 0}}{\Sdfun^d{\Sproj 1}}
\Comp\Sdfstr^{\Snat,\With XX}_0(e) =\Withp{\Sdfun^d{\Sproj 0}}{0}$.
It follows by bilinearity of $\Sdfun^{e-1}\Mlin\Sif$ that
\begin{align*}
  \Sdfun^d\Sproj 1
  \Comp\Sdfunpart0^e\Mlin\Sif
  \Comp\Tuple{f,\Tuple{p_0,p_1}}
  &=\Sdfun^{e-1}\Mlin\Sif
    \Comp\Withp{\Sdfun^d\Sproj1}{\Sin0(e-1)}
    \Comp\Tuple{f,\Tuple{p_0,p_1}}\\
  &=\Sdfunpart0^{e-1}\Mlin\Sif
    \Comp\Tuple{\Sdfun^d\Sproj 1\Comp f,\Tuple{p_0,p_1}}\,.
\end{align*}
To summarize, for $i=0,1$ and if $d<e$ we have
\begin{align*}
  \Sdfun^d\Sproj i
  \Comp\Sdfunpart0^e\Mlin\Sif
  \Comp\Tuple{f,\Tuple{p_0,p_1}}
  =\Sdfunpart0^{e-1}\Mlin\Sif
  \Comp\Tuple{\Sdfun^d\Sproj i\Comp f,\Tuple{p_0,p_1}}\,.
\end{align*}
It follows that $\Psem M\Gamma=\Psem{M'}\Gamma$.

The case $M=\Lprojd id{\Llet{e}xMP}$ and
$M'=\Llet{e-1}x{\Lprojd idM}P$ with $d<e$ is handled similarly.

\Proofcase %
Assume that %
$M=\Lprojd id{\Lif eN{P_0}{P_1}}$ and %
$M'=\Lif{e}N{\Lprojd i{d-e}{P_0}}{\Lprojd i{d-e}{P_1}}$, with %
$e\leq d$. %
We must have $\Tseq\Gamma N{\Tdiffm e\Tnat}$ and there must be a type
$B$ such that %
$\Tseq\Gamma{P_j}B$ for $j=0,1$ so that %
\(\Tseq\Gamma{\Lif eN{P_0}{P_1}}{\Tdiffm eB}\). %
Since $\Tseq\Gamma MA$ the type $B$ must be of shape %
$\Tdiffm{d-e+1}C$ for a (uniquely determined) type $C$, and then %
$A=\Tdiffm{d}C$ and %
$\Tseq\Gamma{\Lif eN{P_0}{P_1}}{\Tdiffm{d+1}C}$. %
We set $U=\Tsem C$ and $Y=\Tsem B=\Sdfun^{d-e+1}U$, %
$f=\Psem N\Gamma\in\Kl\cL(Z,\Sdfun^e\Snat)$ and %
$p_j=\Psem{P_j}\Gamma\in\Kl\cL(Z,Y)$.
We have
\begin{align*}
  \Sdfun^d\Sproj i
  &\Comp\Sdfunpart0^e\Mlin\Sif
  \Comp\Tuple{f,\Tuple{p_0,p_1}}\\
  &=\Sdfun^d\Sproj i
    \Comp\Sdfun^e\Mlin\Sif
    \Comp\Sdfstr^0_{\Snat,\With YY}(e)    
    \Comp\Tuple{f,\Tuple{p_0,p_1}}\\
  & =\Sdfun^e(\Sdfun^{d-e}\Sproj i\Comp\Mlin\Sif)
    \Comp\Sdfstr^0_{\Snat,\Sdfun^{d-e+1}\Withp UU}(e)
    \Comp\Tuple{f,\Tuple{p_0,p_1}}\\
  &
    =\Sdfun^e(\Mlin\Sif    
    \Comp\Withp{\Snat}{\Sdfun^{d-e}\Sproj i})
    \Comp\Sdfstr^0_{\Snat,\Sdfun^{d-e+1}\Withp UU}(e)
    \Comp\Tuple{f,\Tuple{p_0,p_1}}\\
  &\text{\Textsep\Textsep by naturality of }\Mlin\Sif\\
  &
    =\Sdfun^e\Mlin\Sif
    \Comp\Withp{\Sdfun^e\Snat}{\Sdfun^{d}\Sproj i}
    \Comp\Withp{\Sdfun^e\Snat}{\Sin0(e)}
    \Comp\Tuple{f,\Tuple{p_0,p_1}}\\
  &
    =\Sdfun^e\Mlin\Sif
    \Comp\Withp{\Sdfun^e\Snat}{\Sin0(e)}
    \Comp\Withp{\Sdfun^e\Snat}{\Sdfun^{d-e}\Sproj i}
    \Comp\Tuple{f,\Tuple{p_0,p_1}}
    \text{\quad by nat.~of }\Sin0(e)\\
  &
    =\Sdfunpart0^e\Mlin\Sif
    \Comp\Tuple{f,\Tuple{(\Sdfun^{d-e}\Sproj i)\Comp p_0,
    (\Sdfun^{d-e}\Sproj i)\Comp p_1}}
\end{align*}
and it follows that %
$\Psem M\Gamma=\Psem{M'}\Gamma$ since %
$(\Sdfun^{d-e}\Sproj i)\Comp p_j=\Psem{\Lprojd i{d-e}{P_j}}\Gamma$ for
$j=0,1$.

The case where $M=\Lprojd id{\Llet{e}xMP}$ and %
$M'=\Llet exM{\Lprojd i{d-e}P}$ with $e\leq d$ is handled
similarly.

\Proofcase %
Assume that %
$M=\Lprojd 0d{\Lsumd dN}$ and %
$M'=\Lprojd0d{\Lprojd0d N}$ so that we must have %
$\Tseq\Gamma N{\Tdiffm{d+2}B}$ for a (uniquely determined) type $B$
such that $A=\Tdiffm{d}B$. Let $Y=\Tsem B$ so that $X=\Sdfun^{d}Y$ and
$f=\Psem N\Gamma\in\Kl\cL(Z,\Sdfun^{d+2}Y)$.
We have
\begin{align*}
  (\Sdfun^d\Sproj0)
  \Comp(\Sdfun^d\Sdfmult)
  \Comp f
  &=\Sdfun^d(\Sproj0\Comp\Sdfmult)
    \Comp f\\
  &=\Sdfun^d(\Sproj0\Comp\Sproj0)
    \Comp f\\
  &=(\Sdfun^d\Sproj0)\Comp(\Sdfun^d\Sproj0)\Comp f\\
  &=\Psem{M'}\Gamma\,.
\end{align*}
Notice that we are using the fact that we are composing \emph{linear}
morphisms in $\Kl\cL$ so that the equation
$\Sproj0\Comp\Sdfmult=\Sproj0\Comp\Sproj0$ holds in $\Kl\cL$ because %
$\Sproj0\Compl\Sfunadd=\Sproj0\Compl\Sproj0$ holds in $\cL$.

The case where $M=\Lprojd 1d{\Lsumd dN}$ and %
$M'=\Lprojd1d{\Lprojd0d N}+\Lprojd0d{\Lprojd1d N}$ is similar, using
the equation %
$\Sproj1\Compl\Sfunadd=\Sproj1\Compl\Sproj0+\Sproj0\Compl\Sproj1$ in $\cL$.

\Proofcase %
Assume that %
$M=\Lprojd id{\Lsumd eN}$ and %
$M'=\Lsumd{e-1}{\Lprojd idN}$ with $d<e$
so that we must have %
$\Tseq\Gamma N{\Tdiffm{e+2}B}$ for a (uniquely determined) type %
$B$ such that $A=\Tdiffm{e}B$. Let %
$Y=\Tsem B$ so that $X=\Sdfun^eY$. Let %
$f=\Psem N\Gamma\in\Kl\cL(Z,\Sdfun^{e+2}Y)$  so that %
$(\Sdfun^e\Sdfmult)\Comp f\in\Kl\cL(Z,\Sdfun^{e+1}X)$ and hence %
$(\Sdfun^d\Sproj i)\Comp(\Sdfun^e\Sdfmult)\Comp
f\in\Kl\cL(Z,\Sdfun^{e}X)$. We have
\begin{align*}
  (\Sdfun^d\Sproj i)
  \Comp(\Sdfun^e\Sdfmult)
  \Comp f
  &=\Sdfun^d(\Sproj i\Comp(\Sdfun^{e-d}\Sdfmult))
    \Comp f\\
  &=\Sdfun^d(\Sdfun^{e-d-1}\Sdfmult\Comp\Sproj i)
    \Comp f
    \text{\quad by naturality of }\Sproj i\\
  &=(\Sdfun^{e-1}\Sdfmult)
    \Comp(\Sdfun^d\Sproj i)
    \Comp f\\
  &=\Psem{M'}\Gamma\,.
\end{align*}

\Proofcase %
Assume that %
$M=\Lprojd id{\Lsumd eN}$ and %
$M'=\Lsumd{e}{\Lprojd i{d+1}N}$ with $e<d$ so that we must have %
$\Tseq\Gamma N{\Tdiffm{e+2}B}$ for some type $B$ and we have %
$\Tseq\Gamma{\Lsumd eN}{\Tdiffm{e+1}B}$.
There must be a type $C$ such
that %
$\Tdiffm{e+1}B=\Tdiffm{d+1}C$, and then
$A=\Tdiffm dC=\Tdiffm eB$. In other words %
$B=\Tdiffm{d-e}C$ %
and of course $C$ is uniquely determined by
$A$. Let %
$Y=\Tsem C$ so that we have %
$f=\Psem N\Gamma\in\Kl\cL(Z,\Tdiffm{d+2}Y)$ and hence %
$(\Sdfun^d\Sproj i)\Comp(\Sdfun^e\Sdfmult)\Comp
f\in\Kl\cL(Z,\Sdfun^{d}Y)$.
We have
\begin{align*}
  (\Sdfun^d\Sproj i)\Comp(\Sdfun^e\Sdfmult)\Comp f
  &=\Sdfun^e(\Sdfun^{d-e}\Sproj i\Comp\Sdfmult)
    \Comp f\\
  &=\Sdfun^e(\Sdfmult\Comp\Sdfun^{d-e+1}\Sproj i)
    \Comp f
    \text{\quad by nat.~of }\Sdfmult\text{, observing that }d-e\geq 1\\
  &=\Sdfun^e\Sdfmult
    \Comp\Sdfun^{d+1}\Sproj i
    \Comp f\\
  &=\Psem{M'}\Gamma\,.
\end{align*}

\Proofcase %
Assume that %
$M=\Lprojd id{\Lflipdl elN}$ and %
$M'=\Lflipdl{e-1}l{\Lprojd idN}$ with $d<e$. %
Then there must be a type $B$ such that %
$\Tseq\Gamma{N}{\Tdiffm{e+l+2}{B}}$ and hence %
$\Tseq\Gamma{\Lflipdl elN}{\Tdiffm{e+l+2}{B}}$, and %
$\Tseq\Gamma M{\Tdiffm{e+l+1}{B}}$ since $d<e+l+2$. So we have %
$A=\Tdiffm{e+l+1}{B}$. Let $Y=\Tsem B$ so that %
$f=\Psem N\Gamma\in\Kl\cL(Z,\Sdfun^{e+l+2}Y)$ and we have
\begin{align*}
  \Psem M\Gamma
  &=(\Sdfun^d\Sproj i)
    \Comp(\Sdfun^e\Sflipl l)
    \Comp f\\
  &=\Sdfun^d(\Sproj i\Comp\Sdfun^{e-d}\Sflipl l)
    \Comp f\\
  &=\Sdfun^d(\Sdfun^{e-d-1}\Sflipl l\Comp\Sproj i)
    \Comp f
    \text{\quad by naturality of }\Sproj i\\
  &=(\Sdfun^{e-1}\Sflipl l)
    \Comp(\Sdfun^d\Sproj i)
    \Comp f\\
  &=\Psem{M'}\Gamma\,.
\end{align*}

\Proofcase %
Assume that %
$M=\Lprojd id{\Lflipdl elN}$ and %
$M'=\Lflipdl{e}l{\Lprojd idM}$ with $e+l+2\leq d$. Then there must be
a type $B$ such that %
$\Tseq\Gamma N{\Tdiffm{e+l+2}B}$ and hence %
$\Tseq\Gamma{\Lflipdl elN}{\Tdiffm{e+l+2}B}$ and therefore there must
be a type $C$ such that %
$\Tdiffm{d+1}C=\Tdiffm{e+l+2}B$, that is %
$B=\Tdiffm{d-e-l-1}C$. Setting $Y=\Tsem C$ we have %
$f=\Psem N\Gamma\in\Kl\cL(Z,\Sdfun^{d+1}Y)$ and
\begin{align*}
  \Psem M\Gamma
  &=(\Sdfun^d\Sproj i)
    \Comp(\Sdfun^e\Sflipl l)
    \Comp f\\
  &=\Sdfun^e((\Sdfun^{d-e}\Sproj i)\Comp\Sflipl l)
    \Comp f\\
  &=\Sdfun^e((\Sdfun^{l+2}\Sdfun^{d-e-l-2}\Sproj i)\Comp\Sflipl l)
    \Comp f\\
  &=\Sdfun^e(\Sflipl l\Comp(\Sdfun^{d-e}\Sproj i))
    \Comp f
    \text{\quad by naturality of }\Sflipl l\\
  &=(\Sdfun^e\Sflipl l)
    \Comp(\Sdfun^d\Sproj i)
    \Comp f\\
  &=\Psem{M'}\Gamma\,.
\end{align*}

\Proofcase %
Assume that %
$M=\Lprojd{i_{l+1}}d{\cdots\Lprojd{i_0}d{\Lflipdl dlN}}$ and %
$M'=\Lprojd{i_0}d{\Lprojd{i_{l+1}}d{\cdots\Lprojd{i_1}d{N}}}$ so that
there must be a type $B$ such that %
$\Tseq\Gamma N{\Tdiffm{d+l+2}B}$ and hence
$\Tseq\Gamma M{\Tdiffm dB=A}$. Let $Y=\Tsem B$ and %
$f=\Psem N\Gamma\in\Kl\cL(Z,\Sdfun^{d+l+2}Y)$. We have
\(
(\Sdfun^d\Sproj{i_{l+1}})\Comp\cdots\Comp(\Sdfun^d\Sproj{i_{0}})
\in\Kl\cL(\Sdfun^{d+l+2}X,\Sdfun^{d}X)
\)
and then
\begin{align*}
  \Psem M\Gamma
  &=(\Sdfun^d\Sproj{i_{l+1}})
    \Comp\cdots
    \Comp(\Sdfun^d\Sproj{i_{0}})
    \Comp(\Sdfun^d\Sflipl l)
    \Comp f\\
  &=(\Sdfun^d\Sproj{i_{0}})
    \Comp
    (\Sdfun^d\Sproj{i_{l+1}})
    \Comp\cdots
    \Comp(\Sdfun^d\Sproj{i_{1}})
    \Comp f\\
  &=\Psem{M'}\Gamma
\end{align*}
by Lemma~\ref{lemma:sflipl-circular-permutation}.

\Proofcase %
Assume that %
$M=\Lprojd id{\Linjd jdN}$, $M'=N$ if $i=j$ and $M'=\Lzero$ if
$i\not=j$. %
We must have $\Tseq\Gamma N{\Tdiffm dB}$ for a type $B$ such that %
$A=\Tdiffm dB$. Setting $Y=\Tsem B$ we have %
$f=\Psem N\Gamma\in\Kl\cL(Z,\Sdfun^dY)$
and 
\begin{align*}
  (\Sdfun^d\Sproj i)
  \Comp(\Sdfun^d\Sin j)
  \Comp f
  &=\Sdfun^d(\Sproj i\Comp\Sin j)\Comp f\\
  &=\Kronecker ij f
\end{align*}
since $\Sproj i\Comp\Sin j=\Kronecker ij\Id$.

\Proofcase %
Assume that %
$M=\Lprojd id{\Linjd jeN}$ and %
$M'=\Linjd j{e-1}{\Lprojd idN}$ with $d<e$ so that we must have %
$\Tseq\Gamma N{\Tdiffm eB}$ and %
$A=\Tdiffm dC$ with $\Tdiffm{d+1}C=\Tdiffm{e+1}B$ so that %
$C=\Tdiffm{e-d}B$. Let $Y=\Tsem B$ so that we have %
$f=\Psem N\Gamma\in\Kl\cL(Z,\Sdfun^eY)$ and
\begin{align*}
  \Psem M\Gamma
  &=(\Sdfun^d\Sproj i)
    \Comp(\Sdfun^e\Sin j)
    \Comp f\\
  &=\Sdfun^d(\Sproj i\Comp\Sdfun^{e-d}\Sin j)
    \Comp f\\
  &=\Sdfun^d((\Sdfun^{e-d-1}\Sin j)\Comp\Sproj i)
    \Comp f
    \text{\quad by naturality of }\Sproj i\\
  &=(\Sdfun^{e-1}\Sin j)
    \Comp(\Sdfun^d\Sproj i)
    \Comp f\\
  &=\Psem{M'}\Gamma\,.
\end{align*}

\Proofcase %
Assume that %
$M=\Lprojd id{\Linjd jeN}$ and %
$M'=\Linjd j{e}{\Lprojd i{d-1}N}$ with $e<d$ so that we must have %
$\Tseq\Gamma N{\Tdiffm eB}$ and %
$A=\Tdiffm dC$ with $\Tdiffm{d+1}C=\Tdiffm{e+1}B$ so that %
$B=\Tdiffm{d-e}C$. Let $Y=\Tsem C$ so that we have %
$f=\Psem N\Gamma\in\Kl\cL(Z,\Sdfun^dY)$ and
\begin{align*}
  \Psem M\Gamma
  &=(\Sdfun^d\Sproj i)
    \Comp(\Sdfun^e\Sin j)
    \Comp f\\
  &=\Sdfun^e((\Sdfun^{d-e}\Sproj i)\Comp\Sin j)
    \Comp f\\
  &=\Sdfun^e(\Sin j\Comp(\Sdfun^{d-e-1}\Sproj i))
    \Comp f
    \text{\quad by naturality of }\Sin j\\
  &=(\Sdfun^e\Sin j)\Comp(\Sdfun^{d-1}\Sproj i)
    \Comp f\\
  &=\Psem{M'}\Gamma\,.
\end{align*}

\Proofcase %
Assume that %
$M=\Lprojd id{\Lprojd jeN}$ and %
$M'=\Lprojd j{e-1}{\Lprojd idN}$ with $d<e$. We must have %
$\Tseq\Gamma N{\Tdiffm{e+1}B}$ for a type $B$ such that %
$A=\Tdiffm{e-1}B$. Let $Y=\Tsem B$ so that we have %
$f=\Psem N\Gamma\in\Kl\cL(Z,\Sdfun^{e+1}Y)$ and
\begin{align*}
  \Psem M\Gamma
  &=(\Sdfun^d\Sproj i)
    \Comp(\Sdfun^e\Sproj j)
    \Comp f\\
  &=\Sdfun^d(\Sproj i\Comp(\Sdfun^{e-d}\Sproj j))
    \Comp f\\
  &=\Sdfun^d((\Sdfun^{e-d-1}\Sproj j)\Comp\Sproj i)
    \Comp f
    \text{\quad by naturality of }\Sproj i\\
  &=(\Sdfun^{e-1}\Sproj j)
    \Comp(\Sdfun^d\Sproj i)
    \Comp f\\
  &=\Psem{M'}\Gamma\,.
\end{align*}

\Proofcase %
Assume that %
$M=\Lprojd id{\Lprojd jeN}$ and %
$M'=\Lprojd j{e}{\Lprojd i{d+1}N}$ with $e\leq d$. We must have %
$\Tseq\Gamma N{\Tdiffm{e+1}B}$ for a type $B$ and then %
\(\Tseq\Gamma{\Lprojd jeN}{\Tdiffm eB}\) so that
\(B=\Tdiffm{d-e+1}C\) for a type \(C\) such that %
$A=\Tdiffm{d}C$. Let $Y=\Tsem C$ so that we have %
$f=\Psem N\Gamma\in\Kl\cL(Z,\Sdfun^{d+2}Y)$ and
\begin{align*}
  \Psem M\Gamma
  &=(\Sdfun^d\Sproj i)
    \Comp(\Sdfun^e\Sproj j)
    \Comp f\\
  &=\Sdfun^e((\Sdfun^{d-e}\Sproj i)\Comp\Sproj j)
    \Comp f\\
  &=\Sdfun^e(\Sproj j\Comp(\Sdfun^{d-e+1}\Sproj i))
    \Comp f
    \text{\quad by naturality of }\Sproj j\\
  &=(\Sdfun^{e}\Sproj j)
    \Comp(\Sdfun^{d+1}\Sproj i)
    \Comp f\\
  &=\Psem{M'}\Gamma\,.%
    \qedhere
\end{align*}
\end{proof}

\begin{definition} %
  \label{def:term-multiset-summability}
  Let \(S=\Mset{\List M1k}\in\Rsca{\Mfin\Lang}\) and assume that %
  \(\Tseq\Gamma SA\), that is \((\Tseq\Gamma{M_i}A)_{i=1}^k\).
  We say that \(S\) is \emph{\(\cL\)-summable} if the family %
  \((\Psem{M_i}\Gamma)_{i=1}^k\) is summable in
  \(\Kl\cL(\Tsem\Gamma,\Tsem A)\).
  When this property holds we set %
  \( \Psem S\Gamma
  =\sum_{i=1}^k\Psem{M_i}\Gamma\in\Kl\cL(\Tsem\Gamma,\Tsem A) \). %
\end{definition}

Notice that this is not a purely syntactic notion, it depends on the
notion of summability of \(\cL\).

\begin{remark}
  We can introduce an absolute notion of summability by quantifying
  universally on \(\cL\) but this will not be really useful here.
\end{remark}

\begin{theorem}[multiset soundness] %
  \label{th:term-multiset-soundness}
  If \(S\in\Rsca{\Mfin\Lang}\) is such that %
  \(\Tseq\Gamma SA\) and \(S\) is \(\cL\)-summable and if
  \(S\Topred S'\) then %
  \(S'\) is \(\cL\)-summable and \(\Psem{S'}\Gamma=\Psem{S}\Gamma\).
\end{theorem}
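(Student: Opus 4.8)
The plan is to run a case analysis following the three mutually exclusive shapes of a $\Topred$-step spelled out at the end of Section~\ref{sec:ms-rewriting}: either $S=S_0+\Mset 0$ and $S'=S_0$, or $S=S_0+\Mset M$ with $M\Red M'$ and $S'=S_0+\Mset{M'}$, or $S=S_0+\Mset{M_0+M_1}$ and $S'=S_0+\Mset{M_0,M_1}$. In every case the sub-multiset $S_0$ is shared by $S$ and $S'$, so each reduction only modifies a single entry of the family of interpretations; the work is therefore to check that this local modification preserves both $\cL$-summability of the whole family and its total sum. Since $\cL$-summability (Definition~\ref{def:term-multiset-summability}) is exactly summability of the finite family $(\Psem{M_i}\Gamma)_i$ in $\Kl\cL$, I would phrase each step as a manipulation of summable families.

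The first two cases are routine. For $S=S_0+\Mset 0$ I would use that $\Psem{\Lzero}\Gamma=0$, so the family attached to $S'$ is obtained from that of $S$ by deleting a zero morphism; since $0$ is neutral for the partial addition on homsets and a subfamily of a summable family is summable (both immediate from the partial commutative monoid structure recalled in Section~\ref{sec:cohdiff-summary}), $S'$ is $\cL$-summable with $\Psem{S'}\Gamma=\Psem S\Gamma$. For $S=S_0+\Mset M$ with $M\Red M'$, subject reduction (Theorem~\ref{th:subj-reduction}) gives $\Tseq\Gamma{M'}A$, so $\Psem{M'}\Gamma$ is defined, and the single-term soundness theorem (Theorem~\ref{th:sem-term-invariant}) gives $\Psem M\Gamma=\Psem{M'}\Gamma$; replacing one entry of a family by a morphism equal to it changes neither summability nor the sum.

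The delicate case is the third one, $S=S_0+\Mset{M_0+M_1}\Topred S_0+\Mset{M_0,M_1}=S'$. Here Lemma~\ref{lemma:ty-der-sum} ensures $\Tseq\Gamma{M_i}A$ for $i=0,1$, so both $\Psem{M_0}\Gamma$ and $\Psem{M_1}\Gamma$ are defined, and Theorem~\ref{th:sem-defined-sum} tells us they are summable with $\Psem{M_0+M_1}\Gamma=\Psem{M_0}\Gamma+\Psem{M_1}\Gamma$. What is genuinely needed is a \emph{refinement} property of finite summable families: if a family is summable and one of its entries is itself a sum $g_0+g_1$ of two summable morphisms, then the family obtained by replacing that entry with the pair $g_0,g_1$ is still summable and has the same total sum. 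I would invoke this as the finite-family lemma following from the summability axioms of~\cite{Ehrhard21a} (in particular \Saxass{} and \Saxwit{}), applied to the entry $\Psem{M_0+M_1}\Gamma$ of the summable family associated with $S$; this yields at once the $\cL$-summability of $S'$ and the equality $\Psem{S'}\Gamma=\Psem S\Gamma$.

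The main obstacle is precisely this last refinement step: the binary sum of Theorem~\ref{th:sem-defined-sum} only guarantees that the two summands can be added, whereas here I must feed them back into an already-summable family of the other elements of $S_0$ and conclude joint summability of the enlarged family. This is exactly where the associativity of the partial additive structure—rather than mere binary summability—is used, so the crux of a full write-up is to make the invocation of the $n$-ary summability/associativity lemma precise; the rest of the theorem is bookkeeping over the three reduction shapes.
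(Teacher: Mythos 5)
Your proposal is correct and follows essentially the same route as the paper: a case analysis over the $\Topred$-step shapes, using $\Psem{\Lzerot A}\Gamma=0$ and subfamily summability for the dropped zero, Theorem~\ref{th:sem-term-invariant} for the single-term reduction case, and Lemma~\ref{lemma:ty-der-sum} with Theorem~\ref{th:sem-defined-sum} for the sum-splitting case. The refinement property you correctly single out as the crux (replacing an entry of a summable family by two summable morphisms whose sum it is, preserving summability and the total) is precisely what the paper invokes, namely Theorem~\ref{cohdiff-th:summability-subsets} of~\cite{Ehrhard21a}, so your appeal to the summability axioms of that reference is exactly the intended justification.
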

Remember that \(\Tseq\Gamma{S'}A\) by
Theorem~\ref{th:subj-reduction-ms}.

\begin{proof}
  The following cases are possible, for some %
  \(S_0=\Mset{\List N1k}\).

  \Proofcase %
  \(S=S_0+\Mset M\), \(M\Red 0\) and %
  \(S'=S_0\). Since %
  \((\Psem{N_1}\Gamma,\dots,\Psem{N_k}\Gamma,\Psem M\Gamma)\) is a
  summable family, we know that the family
  \((\Psem{N_1}\Gamma,\dots,\Psem{N_k}\Gamma)\) is summable by %
  Theorem~5
  of~\cite{Ehrhard23a}. %
  Moreover \(\Psem{S'}\Gamma=\Psem S\Gamma\) because
  \(\Psem M\Gamma=\Psem 0\Gamma=0\) by
  Theorem~\ref{th:sem-term-invariant}.

  \Proofcase %
  \(S=S_0+\Mset M\), \(M\Red M'\) and %
  \(S'=S_0+\Mset{M'}\). Then we have %
  \(\Psem{M'}\Gamma=\Psem{M}\Gamma\) by
  Theorem~\ref{th:sem-term-invariant} and hence \(S'\) is
  \(\cL\)-summable and \(\Psem{S'}\Gamma=\Psem S\Gamma\).

  \Proofcase %
  \(S=S_0+\Mset M\), \(M\Red M_0+M_1\) and %
  \(S'=S_0+\Mset{M_0,M_1}\). Then by
  Theorem~\ref{th:sem-term-invariant} we know that
  \(\Psem{M_0}\Gamma,\Psem{M_1}\Gamma\) are summable and
  \(\Psem M\Gamma=\Psem{M_0}\Gamma+\Psem{M_1}\Gamma\). %
  By
  Theorem~5 %
  of~\cite{Ehrhard23a}
  we know that %
  \(
  (\Psem{N_1}\Gamma,\dots,\Psem{N_k}\Gamma, %
  \Psem{M_0}\Gamma,\Psem{M_1}\Gamma)
  \) %
  is a summable family with
  \(\Psem S\Gamma=\Psem{N_1}\Gamma+\cdots+\Psem{N_k}\Gamma %
  +\Psem{M_0}\Gamma+\Psem{M_1}\Gamma\), %
  that is \(S'\) is \(\cL\)-summable and %
  \(\Psem{S'}\Gamma=\Psem S\Gamma\).
\end{proof}

\begin{remark}
  This semantic notion of multiset summability seems counter to the
  general philosophy of this work which is to take the notion of summability
  into account at a purely syntactic level.
  We hope to be able to avoid it by means of a ``parallel reduction''
  as explained in Section~\ref{sec:reducing-sums}; this does not seem
  trivial and is postponed to further work since it is not strictly
  necessary for our current purpose.
  We will see indeed in Section~\ref{sec:determinism} that these sums can
  be controlled (see Theorem~\ref{cor:sem-adequacy-det} which uses
  Theorem~\ref{th:term-multiset-soundness}), and even disposed of in
  Section~\ref{sec:det-machine}.
\end{remark}

\section{Completeness of the reduction rules} %
\label{sec:completeness}

\subsection{A differential abstract machine}\label{sec:stacks} %
\label{sec:diff-machine}
Our goal now is to define a specific reduction strategy within the
rewriting system \(\Lang\). We do this by means of an ``abstract
machine'' which has no environment (just as in the work of Krivine on
classical realizability). In further papers we will certainly develop
an environment machine more suitable for implementations.

So a state of our machine will consist of a term and a stack. It will
also contain an \emph{access word} which will be a finite sequence of
\(0\) and \(1\) describing which component of the output is sought:
the machine will allow terms of type \(\Tdiffm d\Tnat\) to be
evaluated and the word, of length \(d\), specifies which leaf of the
associated tree of height \(d\) has to be computed by the machine.

Let us say that a type $A$ is \emph{\Sharp{}} if it is not of shape %
$\Tdiff B$; in our type language, this is equivalent to %
$A=(\Timpl{A_1}{\cdots\Timpl{A_k}{\Tnat}})$ for some uniquely
determined types $\List A1k$. %
Any type $A$ can be written uniquely %
$A=\Tdiffm dE$ where $E$ is sharp and \(d\in\Nat\).
We use the letters $E,F,\dots$ for \Sharp{} types.

We say that a term is \emph{\Simplicit{}} if it does not contain the
following constructs: %
$\Lzero$ and  $M_0+M_1$.

A \emph{\Path{}} is an element $\alpha$ of $\Eset{0,1}^{<\omega}$, we
refer to Section~\ref{sec:not-multisets} for notations concerning
words.
Then we define the stacks as follows:
\begin{align*}
  s,t,\dots\Bnfeq
  \Stempty
  \Bnfor \Starg Ms
  \Bnfor \Stsucc s
  \Bnfor \Stpred s
  \Bnfor \Stif\alpha{M_0}{M_1}s
  \Bnfor \Stlet\alpha xMs
  \Bnfor \Stdiff is
\end{align*}
where $\alpha$ is a word, $i\in\Eset{0,1}$, and $M$, $M_0$ and $M_1$ are
\Simplicit. Stacks are typed by judgments of shape $\Stseq sE$ where
$E$ is a \Sharp{} type.
The typing rules for stacks are given in Figure~\ref{fig:stack-type}.

\begin{figure}
  \begin{center}
    \begin{prooftree}
      \infer0{\Stseq\Stempty\Tnat}
    \end{prooftree}
    \Treesep
    \begin{prooftree}
      \hypo{\Stseq s\Tnat}
      \infer1{\Stseq{\Stsucc s}\Tnat}
    \end{prooftree}
    \Treesep
    \begin{prooftree}
      \hypo{\Stseq s\Tnat}
      \infer1{\Stseq{\Stpred s}\Tnat}
    \end{prooftree}
  \end{center}
  \begin{center}
    \begin{prooftree}
      \hypo{\Stseq sE}
      \hypo{\Tseq{}{M_0}{\Tdiffm dE}}
      \hypo{\Tseq{}{M_1}{\Tdiffm dE}}
      \hypo{\delta\in\Intoset^d}
      \infer4{\Stseq{\Stif\delta{M_0}{M_1}s}{\Tnat}}
    \end{prooftree}
  \end{center}
  \begin{center}
    \begin{prooftree}
      \hypo{\Stseq sE}
      \hypo{\Tseq{x:\Tnat}{M}{\Tdiffm dE}}
      \hypo{\delta\in\Intoset^d}
      \infer3{\Stseq{\Stlet\delta{x}{M}s}{\Tnat}}      
    \end{prooftree}
    \Treesep
    \begin{prooftree}
      \hypo{\Tseq{}MA}
      \hypo{\Stseq sE}
      \infer2{\Stseq{\Starg Ms}{\Timpl AE}}
    \end{prooftree}
  \end{center}
  \begin{center}
    \begin{prooftree}
      \hypo{\Stseq{s}{\Timpl{\Tdiff A}E}}
      \hypo{i\in\Intoset}
      \infer2{\Stseq{\Stdiff is}{\Timpl AE}}
    \end{prooftree}
  \end{center}
  \caption{Typing rules for stacks}\label{fig:stack-type}
\end{figure}

\subsubsection{The machine} %
\label{sec:machine}
We define the states of our machine as follows:
\begin{align*}
  c,c_0,c_1,\dots
  \Bnfeq \State\delta Ms
  \Bnfor 0
  \Bnfor c_0+c_1
\end{align*}
where $\delta$ is a word, $M$ is a \Simplicit{} term and $s$ is a
stack. The state \(\State\delta Ms\) is well typed if
\begin{align*}
  \Stseq sE
  \text{\quad and\quad}
  \Tseq{}{M}{\Tdiffm{\Len\delta}E}
\end{align*}
for some sharp type \(E\) (uniquely determined by \(M\)). The state
\(0\) is always well typed and \(c_0+c_1\) is well typed if \(c_0\)
and \(c_1\) are. We define a rewriting system
\(\States\) such that \(\Rsca\States\) is the set of all states. A state is %
\Simplicit{} if it is not of shape \(0\) or \(c_1+c_2\).

\renewcommand\Stred{\Rel{\Rsred{\States}}} %
The associated reduction relation \(\Stred\) is defined in
Figure~\ref{fig:state-reduction}. It is a deterministic reduction
relation on states: determinism results from the fact that the rule to
be applied on $\State\alpha Ms$ is completely determined by the shape
of $M$ (actually, by the outermost construct of $M$). Notice that
states which are not \Simplicit{} (that is, which are of shape \(0\)
or \(c_0+c_1\)) cannot be reduced by the system \(\States\). The
associated reduction system \(\Msrs\States\), see
Section~\ref{sec:ms-rewriting}, is precisely designed to reduce such
sums.

\begin{figure}
{\footnotesize
  \centering
  \begin{align*}
    \State\delta{\App MN}{s}
    &\Stred\State{\delta}{M}{\Starg Ns}
    &\State{\delta}{\Abst xAM}{\Starg Ns}
    &\Stred\State{\delta}{\Subst MNx}{s}
    \\
    \State{\delta i}{\Ldiff M}{s}
    &\Stred\State\delta M{\Stdiff is}
    &\State{\delta}{\Abst xAM}{\Stdiff is}
    &\Stred\State{\delta i}{\Abst x{\Tdiff A}{\Ldletv xM}}{s}
    \\
    \State\delta{\Lfix M}s
    &\Stred\State\delta M{\Starg{\Lfix M}s}
    &&\\
    \State{\delta}{\Lsucc dM}{s}
    &\Stred\State{\delta}{M}{\Stsucc s}
    &\State{\delta}{\Lpred dM}{s}
    &\Stred\State{\delta}{M}{\Stpred s}
    \\
    \State{\Pempty}{\Num n}{\Stsucc s}
    &\Stred\State{\Pempty}{\Num{n+1}}{s}
    &\State{\Pempty}{\Num 0}{\Stpred s}
    &\Stred\State{\Pempty}{\Num{0}}{s}
    \\
    \State{\Pempty}{\Num{n+1}}{\Stpred s}
    &\Stred\State{\Pempty}{\Num{n}}{s}
    &&\\
    \State{\epsilon\delta}{\Lif dM{P_0}{P_1}}{s}
    &\Stred\State\delta M{\Stif\epsilon{P_0}{P_1}s}
    &\State{\Pempty}{\Num 0}{\Stif{\epsilon}{P_0}{P_1}s}
    &\Stred\State{\epsilon}{P_0}{s}\\
    \State{\Pempty}{\Num{n+1}}{\Stif{\epsilon}{P_0}{P_1}s}
    &\Stred\State{\epsilon}{P_1}{s}
    &&\\
    \State{\epsilon\delta}{\Llet dxNM}{s}
    &\Stred\State{\delta}{N}{\Stlet\epsilon xNs}
    &\State{\Pempty}{\Num n}{\Stlet\epsilon xNs}
    &\Stred\State\epsilon{\Subst N{\Num n}x}s
    \\
    \State{\epsilon i\delta}{\Linjd idM}{s}
    &\Stred\State{\epsilon\delta}{M}{s}
    &\State{\epsilon i\delta}{\Linjd{1-i}dM}{s}
    &\Stred0
    \\
    \State{\epsilon\alpha\delta}{\Lflipdl dlM}{s}
    &\Stred\State{\epsilon\Rcycle\alpha\delta}{M}{s}
    &\State{\epsilon\delta}{\Lprojd idM}{s}
    &\Stred\State{\epsilon i\delta}{M}{s}
    &\\  
    \State{\epsilon 0\delta}{\Lsumd dM}{s}
    &\Stred\State{\epsilon 00\delta}{M}{s}
    &\State{\epsilon 1\delta}{\Lsumd dM}{s}
    &\Stred\State{\epsilon 10\delta}{M}{s}\\
    &&&\hspace{6em}+\State{\epsilon 01\delta}{M}{s}
  \end{align*}}
  \caption{Reduction rules for states. Convention: $d=\Len\delta$,
    $e=\Len\epsilon$.}
  \label{fig:state-reduction}
\end{figure}

\begin{proposition}\label{prop:machine-subj-red}
  If \(c\Stred c'\) and $c$ is a well typed state, then $c'$ is a well
  typed state.
\end{proposition}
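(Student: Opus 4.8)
The plan is to proceed by a case analysis on the rule of Figure~\ref{fig:state-reduction} that produces the step $c\Stred c'$. Since the left-hand side of every such rule is a \Simplicit{} state $\State\delta Ms$ and the applicable rule is determined by the outermost construct of $M$ (there is no contextual closure, the sum/multiset structure being relegated to $\Msrs\States$), it suffices to treat each rule once. The organising observation is that, by the unique decomposition of any type as $\Tdiffm dE$ with $E$ \Sharp{}, a state $\State\delta Ms$ is well typed exactly when, writing the (unique) type of $M$ as $\Tdiffm dE$ with $E$ \Sharp{}, one has $d=\Len\delta$ and $\Stseq sE$. So in each case I would first invert the typing of $M$ (the rules of Figure~\ref{fig:typing-rules} are syntax-directed) and of $s$ (likewise Figure~\ref{fig:stack-type}) to extract the hypotheses, and then reassemble a typing derivation for $c'$, checking simultaneously the two invariants: that the length of the new word equals the differential depth of the new current term, and that the stack is typed at its \Sharp{} type.

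First I would dispatch the administrative rules that unwind a construct onto the stack — application, $\Ldiff{}$, $\Lfix{}$, $\Lsucc d{}$, $\Lpred d{}$, the conditional and the $\mathsf{let}$. Each follows the same pattern: inverting the term typing yields precisely the premisses of the matching stack-formation rule ($\Starg{}{}$, $\Stdiff{}{}$, $\Stsucc{}$, $\Stpred{}$, $\Stif{}{}{}{}$, $\Stlet{}{}{}{}$), after which one checks the word length against the differential depth of the new current term; where arrow types are involved (application and $\Ldiff{}$) the identity $\Tdiff{(\Timpl AB)}=\Timpl A{\Tdiff B}$ lets one pull the $\Tdiff$'s past the arrow so that the \Sharp{} type of the subterm matches what the stack rule demands. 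The word-bookkeeping rules for $\Lprojd id{}$, $\Linjd id{}$, $\Lsumd d{}$ and $\Lflipdl dl{}$ are lighter still: here the \Sharp{} type $E$ and the stack typing are unchanged, so I only have to verify the length of the new word, using that $\Lprojd id{}$ and $\Linjd id{}$ shift the depth by one in opposite directions, that $\Lsumd d{}$ shifts it by one while inserting the extra bit, and that $\Len{\Rcycle\alpha}=\Len\alpha$ for the circular permutation governing $\Lflipdl dl{}$. The mismatched-injection rule produces $0$ and the $\Lsumd d{}$-at-level-$1$ rule produces a sum of two states; both right-hand sides are well typed by definition once each simple summand is, the latter being checked exactly as the matching case.

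The genuinely substantive cases are the two $\beta$-like rules and the differential abstraction rule. For $\State\delta{\Abst xAM}{\Starg Ns}\Stred\State\delta{\Subst MNx}s$ and for the $\mathsf{let}$-with-numeral rule I would invoke the ordinary substitution lemma (Lemma~\ref{lemma:typing-subst-lemma}) to retype the contracted term, the stack and word being unchanged. The crucial case is $\State{\delta}{\Abst xAM}{\Stdiff is}\Stred\State{\delta i}{\Abst x{\Tdiff A}{\Ldletv xM}}s$: inverting the $\Stdiff{}{}$ stack rule gives $\Stseq s{\Timpl{\Tdiff A}{E}}$ for the relevant \Sharp{} type $E$, while Lemma~\ref{lemma:ldlet-typing} upgrades $\Tseq{x:A}MB$ to $\Tseq{x:\Tdiff A}{\Ldletv xM}{\Tdiff B}$, so that $\Abst x{\Tdiff A}{\Ldletv xM}$ has type $\Timpl{\Tdiff A}{\Tdiff B}$; pulling out the $\Tdiff$'s shows its differential depth has grown by exactly one, matching the extension of the word from $\delta$ to $\delta i$, and its \Sharp{} type is precisely $\Timpl{\Tdiff A}{E}$, as required for the unchanged stack. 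The main obstacle is exactly this simultaneous bookkeeping: one must keep the \Sharp{}/depth decomposition of the type synchronised with the length of the access word across every rule, and it is the differential abstraction rule — where Lemma~\ref{lemma:ldlet-typing} replaces $A,B$ by $\Tdiff A,\Tdiff B$ and the word gains a letter — that ties the two invariants together. Everything else is routine inversion and reassembly.
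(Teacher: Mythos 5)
Your proof is correct and takes essentially the same approach as the paper's: a case-by-case analysis of the rules of Figure~\ref{fig:state-reduction}, inverting the syntax-directed typing of the current term and of the stack, and checking that the access-word length matches the differential depth of the new term while the stack keeps its sharp type. The difference is only one of emphasis: the paper works out the conditional, differentiation, sum and injection cases and dismisses the rest as similar, whereas you work out the $\beta$-substitution and differential-abstraction cases (via Lemmas~\ref{lemma:typing-subst-lemma} and~\ref{lemma:ldlet-typing}), which the paper leaves implicit.
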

\begin{proof}
  Since \(c\Stred c'\) we must have $c=\State\alpha Ms$ with $\Stseq sE$ and
  $\Tseq{}M{\Tdiffm{\Len\alpha}E}$. We have to consider each rewrite
  rule of Figure~\ref{fig:state-reduction} so we reason by cases on
  the shape of $M$, we focus on the most interesting cases, the other
  ones are similar and easier.

  \Proofcase %
  Assume that $M=\Lif dN{P_0}{P_1}$ so that we must have %
  \(\Tseq{}N{\Tdiffm d\Tnat}\), %
  \(\Tseq{}{P_i}{\Tdiffm eE}\) for $i=0,1$ and hence %
  \(\Tseq{}{M}{\Tdiffm{d+e}E}\),
  \(\Stseq{s}{E}\) and
  $\Len\alpha=e+d$ so that we can write $\alpha=\epsilon\delta$ with %
  $\Len\delta=d$ and $\Len\epsilon=e$. Then we have %
  \(\Stseq{\Stif\epsilon{P_0}{P_1}s}{\Tnat}\) %
  and hence 
  \(c'=\State{\delta}{N}{\Stif\epsilon{P_0}{P_1}s}\) %
  is well typed.

  \Proofcase %
  Assume $M=\Ldiff N$ so that we have $\Len\alpha>0$ and %
  $\Tseq{}{N}{\Tdiffm{\Len\alpha-1}E=\Timplp{A}{\Tdiffm{\Len\alpha-1}F}}$
  for some type $A$, and $E=\Timplp AF$ where $F$ is \Sharp. %
  We have %
  \[
    \Tseq{}M{\Timplp{\Tdiff A}{\Tdiffm{\Len\alpha}F}
      =\Tdiffm{\Len\alpha}{\Timplp AF}}
  \]
  and we can write %
  $\alpha=\delta i$ and with this notation %
  $c'=\State{\delta}{N}{\Stdiff is}$ is well typed since %
  $\Stseq s{\Timpl{\Tdiff A}{F}}$ and hence %
  $\Stseq{\Stdiff is}{\Timpl A{F}}$.

  \Proofcase %
  Assume that $M=\Lsumd dN$ then we have %
  $\Tseq{}{N}{\Tdiffm{d+2}{B}}$ for some type $B=\Tdiffm eE$ where %
  $E$ is \Sharp{} and $\Stseq{s}{E}$. And %
  $\Tseq{}{M}{\Tdiffm{d+1}B=\Tdiffm{d+e+1}E}$ and hence we can write %
  $\alpha=\epsilon i\delta$ with %
  $i\in\Eset{0,1}$, $\Len\epsilon=e$ and $\Len\delta=d$. %
  If $i=0$ we have %
  $c'=\State{\epsilon 00\delta}{N}{s}$ which is well typed since %
  $\Tdiffm{d+2}{B}=\Tdiffm{d+e+2}E=\Len{\epsilon 00\delta}$. %
  If $i=1$ we have %
  $c'=c_0+c_1$ with $c_j=\State{\epsilon(1-j)j\delta}{N}{s}$ for
  $j=0,1$, and $c_0$ and $c_1$ are well typed for the same reason.

  \Proofcase %
  Assume that $M=\Linjd idN$ with $i\in\Eset{0,1}$. Then we have %
  $\Tseq{}{N}{\Tdiffm dB}$ for some type $B=\Tdiffm eE$ where $E$ is
  \Sharp, $\Stseq sE$ and $\Tseq{}{M}{\Tdiffm{e+d+1}{E}}$. So we can
  write %
  $\alpha=\epsilon j\delta$ where $j\in\Eset{0,1}$. %
  If $j\not=i$ we have $c'=0$ and hence \(c'\) is the sum of the empty
  family of well typed states. %
  If $j=i$ then %
  $c'=\State{\epsilon\delta}{N}{s}$ which is well typed since %
  $\Tdiffm dB=\Tdiffm{e+d}E$.
\end{proof}

\begin{proposition}
  If $c$ is a \Simplicit{} well typed state which is $\Stred$-normal
  then there is $\nu\in\Nat$ such that
  $c=\State{\Pempty}{\Num\nu}{\Stempty}$.
\end{proposition}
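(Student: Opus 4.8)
The plan is to argue by a case analysis on the outermost constructor of $M$, exploiting the remark made just after Figure~\ref{fig:state-reduction} that the rule applicable to a state $\State\delta Ms$ is entirely determined by the head symbol of $M$. Throughout I write the (unique) type of $M$ in sharp form as $\Tdiffm{d_0}{E_0}$ with $E_0$ sharp; well-typedness of $c$ then forces $E_0=E$ and, by uniqueness of the sharp decomposition, $d_0=\Len\delta$. In other words the length of the access word equals the ``ground depth'' of the type of $M$ (the exponent that ends up over $\Tnat$ once the arrow spine of the type is fully unfolded). This single observation is what will guarantee, in each structural case, that $\delta$ is long enough for the corresponding rule to fire.

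First I would dispose of the cases that cannot occur: since $c$ is \Simplicit{}, the term $M$ is \Simplicit{}, so $M$ is neither $\Lzerot A$ nor a sum; and since $\Tseq{}MA$ uses the empty context, $M$ cannot be a variable. For the heads $\App NP$, $\Lfix N$, $\Lsucc dN$ and $\Lpred dN$ a rule of Figure~\ref{fig:state-reduction} fires unconditionally (these rules merely push a frame onto the stack, or unfold the fixpoint), so $c$ is not normal. For $\Ldiff N$, $\Lprojd idN$, $\Linjd idN$, $\Lsumd dN$, $\Lflipdl dlN$, $\Lif dN{P_0}{P_1}$ and $\Llet dxNP$ the relevant rule requires splitting $\delta$ into parts of prescribed length (length $\geq 1$ for $\Ldiff$, a suffix of length $d$ for $\Lproj$/$\Lif$/$\Llet$, an extra letter giving $\geq d+1$ for $\Linj$/$\Lsum$, and a middle block of length $l+2$ giving $\geq d+l+2$ for $\Lflip$). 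Reading off the ground depth of each type from the rules of Figure~\ref{fig:typing-rules} shows it is always at least the needed threshold, so $\Len\delta$ suffices and a rule fires; hence $c$ is not normal in any of these cases either.

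Two cases remain, and they are governed by the type of the stack rather than by $\delta$. If $M=\Abst xBN$ then $\Tdiffm{\Len\delta}E$ is an arrow, so $E$ is an arrow type; inspecting Figure~\ref{fig:stack-type} shows that the only arrow-typed stacks are $\Starg P{s'}$ and $\Stdiff i{s'}$, and each of these triggers a reduction of $\State\delta{\Abst xBN}s$ (the $\beta$-rule, respectively the rule creating a differential frame). If instead $M=\Num n$, then its type $\Tnat$ forces, via the depth observation, $E=\Tnat$ and $\Len\delta=0$, i.e.\ $\delta=\Pempty$; the stack $s$ then has type $\Tnat$, and Figure~\ref{fig:stack-type} shows the $\Tnat$-typed stacks are exactly $\Stempty$, $\Stsucc{s'}$, $\Stpred{s'}$, $\Stif\epsilon{P_0}{P_1}{s'}$ and $\Stlet\epsilon xP{s'}$. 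For each of the last four a rule fires on $\State\Pempty{\Num n}s$, so the only normal possibility is $s=\Stempty$. This yields $c=\State\Pempty{\Num n}\Stempty$, which is indeed $\Stred$-normal, and we take $\nu=n$.

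The only genuinely delicate point is the bookkeeping behind the ``ground depth equals $\Len\delta$'' observation and its use to certify that $\delta$ can always be split as each rule demands; once that is in place the argument is a routine, purely mechanical inspection of the finite list of rules, and the identification of the arrow-typed and the $\Tnat$-typed stacks from Figure~\ref{fig:stack-type} closes the two non-structural cases.
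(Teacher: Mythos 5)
Your proof is correct and follows essentially the same route as the paper's: a case analysis on the head constructor of $M$, using closedness to rule out variables, the stack typing rules of Figure~\ref{fig:stack-type} to settle the abstraction and numeral cases, and well-typedness (your ``ground depth equals $\Len\delta$'' observation) to guarantee that the access word always splits as the rules of Figure~\ref{fig:state-reduction} demand. The only difference is presentational: you make the depth bookkeeping explicit and uniform up front, whereas the paper spells it out only for the conditional (and let) case and refers to the analysis in Proposition~\ref{prop:machine-subj-red} for the remaining constructors.
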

\begin{proof}
  We have $c=\State\alpha Ms$ with $\Tseq{}M{\Tdiffm{\Len\alpha}E}$
  and $\Stseq sE$, we reason by cases on the last typing rule of $M$
  which is \Simplicit, that is, on the structure of $M$.

  \Proofcase %
  $M$ cannot be a variable because it is closed.

  \Proofcase %
  If $M=\Num\nu$ then $E=\Tnat$ and $\alpha=\Pempty$ and hence
  $\Stseq s\Tnat$. According to the typing rule for stacks, $s$ must
  be of one of the following shapes: $\Stempty$, $\Stsucc t$,
  $\Stpred t$, $\Stif\epsilon{P_0}{P_1}t$, $\Stlet\epsilon xPt$ for
  some stack $t$ and the first case only is possible since $c$ is
  normal.

  \Proofcase %
  $M$ cannot be $\App NP$ or $M=\Lfix N$ since $c$ is normal.

  \Proofcase %
  Assume that $M=\Abst xAN$ so that %
  $E=\Timplp AF$ and we must have $\Stseq s{\Timpl AF}$. According to
  the typing rule for stacks we must have $s=\Starg Pt$ or
  $s=\Stdiff it$. In both cases a reduction rule applies, contradicting
  the assumption that $c$ is normal. This case is impossible.

  \Proofcase %
  Assume that $M=\Lif dN{P_0}{P_1}$ so that we must have %
  $\Tseq{}N\Tnat$ and %
  $\Tseq{}{P_i}{B=\Tdiffm eE}$ for $i=0,1$ where $E$ is a \Sharp{}
  type, and $\Stseq sE$, and $\Tseq{}M{\Tdiffm{e+d}E}$. %
  Since $c$ is well typed we must have $\Len\alpha=e+d$ %
  with $e=\Len\epsilon$, $d=\Len\delta$ and
  $\alpha=\epsilon\delta$. It follows that $c$ is not normal,
  reducing to $\State\delta N{\Stif\epsilon{P_0}{P_1}{s}}$.

  \Proofcase %
  The case $M=\Llet dxPM$ is completely similar to the previous one.

  \Proofcase %
  The remaining cases are dealt with as in the proof of
  Proposition~\ref{prop:machine-subj-red}: in each case it appears
  that, because $c$ is well typed, it cannot be normal.
\end{proof}

\subsubsection{Context associated with a stack, term associated with a state}
Given %
$\delta\in\Eset{0,1}^d$, %
\(e\in\Nat\) and a term $M$, we define a term %
$\Lprojd\delta eM$ by %
$\Lprojd{\Seqempty}e{M}=M$ and %
$\Lprojd{\delta i}eM=\Lprojd\delta e{\Lprojd ieM}$. %

Given a stack \(s\) such that %
\(\Stseq sE\) we define a context \(\Stctx s\) in
Figure~\ref{fig:stack-ctx}. Notice that this context is closed (it has
no free occurrences of variables). Remember that the notion of linear
context is defined in Equation~\Eqref{eq:linear-context}.

\begin{figure}
  \centering
  \begin{align*}
    \Stctx\Stempty &= \Echole
    & \Stctx{\Stsucc s} &= \Stctx s[\Lsucc0{\Echole}] \\
    \Stctx{\Stpred s} &= \Stctx s[\Lpred0{\Echole}]
    & \Stctx{\Stif\delta{M_0}{M_1}s}
                        &= \Stctx s[\Lproj\delta{\Lif 0\Echole{M_0}
                          {M_1}}]\\
    \Stctx{\Stlet\delta xMs} &= \Stctx s[\Lproj\delta{\Llet 0x\Echole{M}}]
    & \Stctx{\Starg Ms} &= \Stctx s[\App\Echole M]\\
    \Stctx{\Stdiff is} &= \Stctx s[\Lproj i{\Ldiff\Echole}]
  \end{align*}
  \caption{Context associated with a stack}
  \label{fig:stack-ctx}
\end{figure}

\begin{lemma}
  Let \(s\) be a well typed state such that \(\Stseq sE\). Then the
  context \(\Stctx s\) is linear and satisfies
  \(\Tseq{x:E}{\Stctx s[x]}{\Tnat}\).
\end{lemma}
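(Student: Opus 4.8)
The plan is to prove both assertions simultaneously by induction on the structure of the stack $s$ (equivalently, on its typing derivation $\Stseq sE$), following the clause-by-clause definition of $\Stctx s$ in Figure~\ref{fig:stack-ctx}. Before starting I would record three routine preliminaries. First, \emph{linear contexts compose}: if $L$ and $L'$ are linear contexts in the sense of~\Eqref{eq:linear-context}, then so is $L[L']$, since plugging $L'$ into the unique hole of $L$ keeps the result within the grammar (this is exactly the decomposition $L=K[L']$ with $\Lcht K=1$ already exploited for Lemma~\ref{lemma:ldlet-lin-context}). Second, an \emph{iterated projection} is a linear context with predictable typing: for a word $\delta$ of length $d$, the context $\Lprojd\delta 0\Echole$ is linear, and whenever $\Tseq\Gamma N{\Tdiffm dC}$ one has $\Tseq\Gamma{\Lprojd\delta 0N}C$; both are immediate inductions on $d$ using the production $\Lprojd idL$ and the rule \Trprojo{} at depth $0$. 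Third, none of the formers occurring in Figure~\ref{fig:stack-ctx} places a binder above the hole (the only binder, in $\Llet 0x\Echole M$, binds $x$ in the closed subterm $M$, not in the hole), so $\Stctx s$ is closed and filling its hole with a term coincides with substituting that term for the unique hole occurrence without capture; consequently typing propagates through hole-filling by Lemma~\ref{lemma:typing-subst-lemma}, together with the weakening of contexts (admissible since \Trvar{} permits arbitrary extra entries).

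With these in hand the induction is direct. In the base case $s=\Stempty$ we have $E=\Tnat$, the context $\Stctx\Stempty=\Echole$ is trivially linear, and $\Stctx\Stempty[x]=x$ gives $\Tseq{x:\Tnat}x\Tnat$ by \Trvar{}. In each inductive clause $s=F(s')$ we have $\Stctx s=\Stctx{s'}[G]$ for a one-level gadget $G$, namely $\Lsucc0\Echole$, $\Lpred0\Echole$, $\App\Echole M$, $\Lprojd\delta 0{\Lif 0\Echole{M_0}{M_1}}$, $\Lprojd\delta 0{\Llet0x\Echole M}$, or $\Lprojd i0{\Ldiff\Echole}$. Each such $G$ is a linear context by the first two preliminaries, so $\Stctx s$ is linear by composition and the induction hypothesis that $\Stctx{s'}$ is linear. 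For the typing I would verify in each case the \emph{gadget typing} $\Tseq{z:E}{G[z]}{E'}$, where $E'$ is the type assigned to $s'$ and $E$ the type assigned to $s$ by Figure~\ref{fig:stack-type}; composing it with the induction hypothesis $\Tseq{w:E'}{\Stctx{s'}[w]}\Tnat$ through the third preliminary yields $\Tseq{x:E}{\Stctx s[x]}\Tnat$.

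The gadget-typing checks are read straight off the rules. For $\Stsucc s'$ and $\Stpred s'$ both types are $\Tnat$ and a hole $z:\Tnat$ gives $\Lsucc0z,\Lpred0z$ of type $\Tnat$. For $\Starg Ms'$, with $\Stseq{s'}E$ and $\Tseq{}MA$, a hole $z:\Timpl AE$ makes $\App zM$ have type $E$ by \Trapp{}, matching the type $\Timpl AE$ of $\Starg Ms'$. For $\Stif\delta{M_0}{M_1}s'$ (and likewise $\Stlet\delta xMs'$), a hole $z:\Tnat$ makes $\Lif0z{M_0}{M_1}$ have type $\Tdiffm dE$ by \Trif{} (from $\Tseq{}{M_j}{\Tdiffm dE}$), and the length-$d$ projection tower brings this down to $E$ by the second preliminary; thus the gadget type is $E$ (matching $s'$) and its hole type is $\Tnat$ (matching $\Stif\delta{M_0}{M_1}s'$). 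The one clause needing genuine bookkeeping is $\Stdiff is'$: here $\Stseq{s'}{\Timpl{\Tdiff A}E}$, and for a hole $z:\Timpl AE$ the term $\Ldiff z$ has type $\Timpl{\Tdiff A}{\Tdiff E}$ by \Trdiff{}; using the defining equality $\Tdiff{(\Timpl{\Tdiff A}E)}=\Timpl{\Tdiff A}{\Tdiff E}$ one reads it as $\Tdiff{(\Timpl{\Tdiff A}E)}$, so $\Lprojd i0{\Ldiff z}$ has type $\Timpl{\Tdiff A}E$ by \Trprojo{}, again matching $s'$, with gadget hole type $\Timpl AE$ matching $\Stdiff is'$.

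I expect no real obstacle: the linearity half is pure grammar-closure, and the typing half is a bookkeeping induction whose only delicate points are the $\Tdiff$-type equalities in the $\Stdiff$ clause and the harmless $\alpha$-renaming of the bound variable in the $\Llet$ gadget so that it clashes neither with the hole variable nor with $x$. Collecting the three preliminaries first is precisely what reduces every clause to a one-line type computation.
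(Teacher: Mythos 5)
Your proof is correct and follows exactly the route the paper takes: the paper's entire proof is ``Straightforward induction on the typing derivation of \(s\)'', which is precisely your induction, with the gadget typings, the composition of linear contexts, and the \(\Tdiff{(\Timpl{\Tdiff A}E)}=\Timpl{\Tdiff A}{\Tdiff E}\) bookkeeping in the \(\Stdiff is\) case spelled out explicitly. Nothing in your argument deviates from or adds assumptions beyond what that straightforward induction requires.
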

\begin{proof}
  Straightforward induction on the typing derivation of \(s\).
\end{proof}

\begin{lemma}
  If $\Tseq\Gamma M{\Tdiffm{d+e}A}$ then %
  $\Tseq\Gamma{\Lprojd\delta eM}{\Tdiffm eA}$ if $d=\Len\delta$.
\end{lemma}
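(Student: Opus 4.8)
The plan is to argue by induction on $d=\Len\delta$, mirroring the inductive definition $\Lprojd{\Seqempty}eM=M$ and $\Lprojd{\delta i}eM=\Lprojd\delta e{\Lprojd ieM}$ of the iterated projection. The only type-theoretic fact needed is the associativity of the $\Tdiffsymb$-exponent, namely $\Tdiffm k{\Tdiffm lA}=\Tdiffm{k+l}A$, which holds by definition of $\Tdiffm d$; this lets one regard a term of type $\Tdiffm{d+e+1}A$ as one of type $\Tdiffm{e+1}{(\Tdiffm dA)}$ so that the primitive projection rule \Trprojo{} becomes applicable at the fixed depth $e$.

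For the base case $d=0$ the word $\delta=\Seqempty$ is empty, $\Lprojd{\Seqempty}eM=M$, and the target type $\Tdiffm{0+e}A=\Tdiffm eA$ coincides with the hypothesised type of $M$, so there is nothing to prove. For the inductive step I would write a word of length $d+1$ as $\delta i$ with $\Len\delta=d$ and $i\in\Eset{0,1}$, and assume $\Tseq\Gamma M{\Tdiffm{(d+1)+e}A}$. Rewriting $\Tdiffm{d+e+1}A=\Tdiffm{e+1}{(\Tdiffm dA)}$ and applying rule \Trprojo{} at depth $e$ gives $\Tseq\Gamma{\Lprojd ieM}{\Tdiffm e{(\Tdiffm dA)}}$, and since $\Tdiffm e{(\Tdiffm dA)}=\Tdiffm{d+e}A$ this says that $\Lprojd ieM$ has type $\Tdiffm{d+e}A$. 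Now the induction hypothesis applied to the word $\delta$ of length $d$ and to the term $\Lprojd ieM$ yields $\Tseq\Gamma{\Lprojd\delta e{\Lprojd ieM}}{\Tdiffm eA}$, and because $\Lprojd{\delta i}eM=\Lprojd\delta e{\Lprojd ieM}$ by definition, this is precisely the required judgement $\Tseq\Gamma{\Lprojd{\delta i}eM}{\Tdiffm eA}$.

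There is essentially no genuine obstacle; the proof is a routine structural induction. The single point demanding care is the bookkeeping of the $\Tdiffsymb$ superscripts: the projection depth stays fixed at $e$ throughout, while the word length $d$ governs how many projections are stacked, and one must check that each application of \Trprojo{} decreases the total exponent by exactly one (here from $d+e+1$ down to $d+e$), so that consuming the full word of length $d$ brings the type down from $\Tdiffm{d+e}A$ to $\Tdiffm eA$. Provided this counting is tracked correctly, the two steps above close the induction.
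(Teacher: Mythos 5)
Your proof is correct, and it is exactly the routine induction on $\Len\delta$ (using the typing rule \Trprojo{} together with the identification $\Tdiffm{d+e+1}A=\Tdiffm{e+1}{(\Tdiffm dA)}$) that the paper has in mind when it dismisses the proof as straightforward. The decomposition of the word as $\delta i$ matches the recursive definition $\Lprojd{\delta i}eM=\Lprojd\delta e{\Lprojd ieM}$, and the exponent bookkeeping is handled correctly.
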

The proof is straightforward. We set
\(\Lproj\delta M=\Lprojd\delta 0M\). 
Given a state \(c\) of shape \(c=\State\delta Ms\) with %
\(\Stseq sE\) we set %
\(\Tofst c=\Stctx s[\Lproj\delta{M}]\). We extend this definition to all
states by
\begin{align*}
  \Tofst 0&=0
            &\Tofst{c_0+c_1}&=\Tofst{c_0}+\Tofst{c_1}\,.
\end{align*}

\begin{lemma}
  If \(c\) is a well typed state which is not of shape \(c_0+c_1\)
  then \(\Tseq{}{\Tofst c}{\Tnat}\).
\end{lemma}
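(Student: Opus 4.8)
The plan is to split on the two possible shapes of a well typed state that is not a sum. By the grammar of states, either \(c=0\) or \(c=\State\delta Ms\). The first case I would dispose of immediately: by definition \(\Tofst 0=0=\Lzerot\Tnat\), which is typeable at \(\Tnat\) directly by rule \Trzero.

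For the main case \(c=\State\delta Ms\), I would first unfold the well-typedness hypothesis, which provides a \Sharp{} type \(E\) with \(\Stseq sE\) and \(\Tseq{}{M}{\Tdiffm{\Len\delta}E}\). Writing \(d=\Len\delta\), I would then apply the typing lemma for iterated projections \(\Lprojd\delta eM\) with \(e=0\) and \(A=E\); since \(\Tdiffm 0E=E\) and \(\Lproj\delta M=\Lprojd\delta 0M\), this yields the closed judgment \(\Tseq{}{\Lproj\delta M}{E}\). In parallel, the lemma on the context associated with a stack gives that \(\Stctx s\) is a linear context, moreover closed (as noted after its definition), satisfying \(\Tseq{x:E}{\Stctx s[x]}{\Tnat}\).

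The final step is to place \(\Lproj\delta M\) in the hole of \(\Stctx s\) and read off the type. Here I would invoke the substitution lemma, Lemma~\ref{lemma:typing-subst-lemma}, with \(\Gamma\) empty, \(A=E\), \(B=\Tnat\), the context term \(\Stctx s[x]\) playing the role of the term substituted into, and \(\Lproj\delta M\) playing the role of \(N\). This produces \(\Tseq{}{\Subst{\Stctx s[x]}{\Lproj\delta M}{x}}{\Tnat}\), and since \(\Tofst c=\Stctx s[\Lproj\delta M]\) by definition, the conclusion follows once \(\Stctx s[\Lproj\delta M]\) is identified with \(\Subst{\Stctx s[x]}{\Lproj\delta M}{x}\).

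The main obstacle — really the only point beyond bookkeeping — is justifying that last identification of hole-filling with capture-avoiding substitution. I expect it to rest on two facts already recorded: that \(\Stctx s\) is closed with its hole occurring exactly once (linearity), and that \(\Lproj\delta M\) is itself a closed term, so no free variable of \(\Lproj\delta M\) can be captured when it is dropped into the hole. A quick inspection of the clauses defining \(\Stctx s\) in Figure~\ref{fig:stack-ctx} confirms, moreover, that the hole never lies under a \(\lambda\)- or \(\mathsf{let}\)-binder capturing a variable it could clash with, so hole-filling and substitution genuinely coincide here and the step is safe.
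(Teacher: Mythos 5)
Your proof is correct, and it matches the paper's intent: the paper states this lemma without giving any proof, treating it as an immediate consequence of the three results you invoke (the linearity/typing lemma for \(\Stctx s\), the typing lemma for iterated projections, and Lemma~\ref{lemma:typing-subst-lemma}). Your elaboration — including the check that hole-filling coincides with substitution because both \(\Stctx s\) and \(\Lproj\delta M\) are closed and the hole of \(\Stctx s\) never sits inside the scope of a binder — is exactly the argument the paper leaves implicit.
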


\begin{theorem} %
  \label{th:red-state-term}
  If \(c\) is a well typed state and \(c\Stred c'\) then %
  \(\Tofst c\Trcl\Red\Tofst{c'}\).
\end{theorem}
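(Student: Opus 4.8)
The plan is to prove $\Tofst c \Trcl\Red \Tofst{c'}$ by a case analysis on the reduction rule of Figure~\ref{fig:state-reduction} applied to $c=\State\delta Ms$. Since each rule is determined by the outermost construct of $M$, the proof splits into one case per shape of $M$. In each case I would unfold both $\Tofst c = \Stctx s[\Lproj\delta M]$ and $\Tofst{c'}$ using the definition of $\Stctx s$ in Figure~\ref{fig:stack-ctx} and of $\Lproj\delta{(-)}$, and then check that the left-hand term reduces to the right-hand one using the rules of Figures~\ref{fig:main-red-rules} and~\ref{fig:proj-red-rules}, together with the fact (Lemma~\ref{lemma:linred-in-context}) that $\Linred$, and hence $\Red$, is closed under linear contexts.

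The key observation driving every case is that pushing a construct of $M$ onto the stack $s$ corresponds, on the term side, to moving the corresponding linear context fragment from around the hole out to the head, while the access word $\delta$ is exactly tracked by the iterated projection $\Lproj\delta{(-)}$. For the structural ``push'' rules ($\App MN$, $\Lfix M$, $\Lsucc dM$, $\Lpred dM$, $\Ldiff M$, $\Lif{}{}{}{}$, $\Llet{}{}{}{}$) the two sides are in fact \emph{syntactically equal}, so $\Tofst c \Trcl\Red \Tofst{c'}$ holds with zero reduction steps; here one only needs to verify that the definition of $\Stctx s$ for the extended stack reproduces the original context with the pushed fragment adjoined at the hole. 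For the genuine ``compute'' rules I would exhibit an actual reduction: the $\beta$-rule $\State\delta{\Abst xAM}{\Starg Ns}\Stred\State\delta{\Subst MNx}s$ uses the $\beta$-reduction of Figure~\ref{fig:main-red-rules} inside $\Stctx s$ and under the projections $\Lproj\delta{(-)}$; the differential rule $\State\delta{\Abst xAM}{\Stdiff is}\Stred\State{\delta i}{\Abst x{\Tdiff A}{\Ldletv xM}}s$ uses $\Ldiffp{\Abst xAM}\Red\Abst x{\Tdiff A}{\Ldletv xM}$; and the arithmetic/conditional rules on $\Num n$ use the corresponding rules of Figure~\ref{fig:main-red-rules}. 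The cases for $\Linjd idM$, $\Lprojd idM$, $\Lsumd dM$ and $\Lflipdl dlM$ are where the projection rules of Figure~\ref{fig:proj-red-rules} are invoked to commute a leading projection past the corresponding construct or to expand a sum.

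The main obstacle I expect is the bookkeeping in the cases involving $\Lsumd dM$ and $\Lflipdl dlM$, where the access word is split as $\epsilon i\delta$ or $\epsilon\alpha\delta$ and the reduction on the term side must reproduce exactly the rewriting of the word (the duplication-and-sum for $\Lsumd d{}$, and the circular permutation $\Rcycle\alpha$ for $\Lflipdl dl{}$). For $\Lsumd d{}$ with $i=1$ the state reduces to a sum of two states, so on the term side I must produce $\Lproj\delta{\cdots}\Red \Lproj{\cdots}{}+\Lproj{\cdots}{}$ via the rule $\Lprojd 1d{\Lsumd dM}\Red\Lprojd 1d{\Lprojd 0dM}+\Lprojd 0d{\Lprojd 1dM}$, carefully matching the positions encoded by $\epsilon,\delta$ against the nested projections, and using that these reductions may be performed under the linear context $\Stctx s$. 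The chief risk is an off-by-one or mis-alignment error in how $\Lproj\delta{(-)}$ distributes over the head projections, so I would be especially careful to state precisely, perhaps as an auxiliary lemma, how $\Lproj{\epsilon i\delta}{M}$ relates to $\Lprojd i{\Len\delta}{\Lproj{\epsilon\delta}{M}}$ and similar identities before launching into these cases.
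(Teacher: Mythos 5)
There is a genuine error in your treatment of the ``push'' rules. You claim that for $\App MN$, $\Lfix M$, $\Lsucc dM$, $\Lpred dM$, $\Ldiff M$, $\Lif{}{}{}{}$ and $\Llet{}{}{}{}$ the two sides are \emph{syntactically equal}, so that zero reduction steps are needed. This is false whenever $\delta\neq\Pempty$, because in $\Tofst c=\Stctx s[\Lproj\delta M]$ the projection prefix $\Lproj\delta{(-)}$ wraps the \emph{whole} focused term, whereas after the push it wraps only the subterm that remains in focus. Concretely, for $\State\delta{\App MN}{s}\Stred\State{\delta}{M}{\Starg Ns}$ one has $\Tofst c=\Stctx s[\Lproj\delta{\App MN}]$ but $\Tofst{c'}=\Stctx{\Starg Ns}[\Lproj\delta M]=\Stctx s[\App{\Lproj\delta M}{N}]$; these differ, and the paper bridges them with $\Len\delta$ applications of the rule $\Lprojd id{\App MN}\Red\App{\Lprojd idM}N$ of Figure~\ref{fig:proj-red-rules}. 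The same happens for $\Lsucc dM$ (one must commute each projection past the successor, decrementing its depth index, via $\Lprojd id{\Lsucc eM}\Red\Lsucc{e-1}{\Lprojd idM}$), and for every other push rule. Worse, the $\Lfix M$ case is not even a pure push on the term side: $\Tofst{c'}=\Stctx s[\App{\Lproj\delta M}{\Lfix M}]$ requires the genuine unfolding $\Lfix M\Red\App M{\Lfix M}$ of Figure~\ref{fig:main-red-rules} followed by projection commutations. In short, the commutation of $\Lproj\delta{(-)}$ with the outermost construct is exactly what Figure~\ref{fig:proj-red-rules} exists for, and it is the substance of most cases of this theorem; declaring those cases trivial skips the actual content of the proof.

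A secondary inaccuracy: to reduce under $\Stctx s$ you invoke Lemma~\ref{lemma:linred-in-context}, which concerns the relation $\Linred$, not $\Red$. The correct justification, used by the paper, is that $\Stctx s$ is an evaluation context (every stack context is a linear context, and every linear context is an evaluation context), and $\Red$ is closed under evaluation contexts by the rule of Section~\ref{sec:reducing-sums}. Your instinct about the delicate cases ($\Lsumd d{}$, $\Lflipdl dl{}$, and the alignment of $\Lproj{\epsilon i\delta}{(-)}$ with nested projections) is sound and matches the paper, but the proof as proposed would fail on the cases you dismissed as immediate.
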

\begin{proof}
  We must have \(c=\State\delta Qs\) with %
  \(\Stseq sE\) and %
  \(\Tseq{}{Q}{\Tdiffm dE}\) where \(\Len\delta=d\) so that %
  \(\Tofst c=\Stctx s[\Lproj\delta Q]\) and we reason by considering
  the various cases in the transition $c\Stred c'$ as listed in
  Figure~\ref{fig:state-reduction}.

  \Proofcase %
  \(Q=\App MN\) so that %
  \(\Tofst c=\Stctx s[\Lproj\delta{\App MN}]\), and %
  \(c'=\State{\delta}{M}{\Starg Ns}\) and hence %
  \(
  \Tofst{c'}
  =\Stctx{\Starg Ns}[\Lproj\delta M]
  =\Stctx s[\App{\Lproj\delta M}N]
  \).
  We have %
  \(\Lproj\delta{\App MN}\Trcl\Red\App{\Lproj\delta M}{N}\) %
  (in \(\Len\delta\) steps)
  and hence \(\Tofst c\Trcl\Red\Tofst{c'}\) %
  since \(\Stctx s\) is an evaluation context.

  \Proofcase %
  \(Q=\Abst xAM\) and \(s=\Starg Nt\) so that %
  \(
  \Tofst c
  =\Tofst{\Starg Nt}[\Lproj\delta{\Abst xAM}]
  =\Tofst t[\App{\Lproj\delta{\Abst xAM}}N]
  \). %
  And \(c'=\State{\delta}{\Subst MNx}{t}\) so that %
  \(\Tofst{c'}=\Tofst t[\Lproj\delta{\Subst MNx}]\). %
  We have %
  \begin{align*}
  \App{\Lproj\delta{\Abst xAM}}N
  &\Trcl\Red\App{\Abst xA{\Lproj\delta M}}{N}\\
  &\Red\Subst{\Lproj\delta M}{N}x
  =\Lproj\delta{\Subst MNx}
  \end{align*} %
  and hence \(\Tofst c\Trcl\Red\Tofst{c'}\). %

  \Proofcase %
  \(Q=\Ldiff M\) and \(\delta=\epsilon i\) so that %
  \(
  \Tofst c
  =\Tofst{s}[\Lproj{\epsilon i}{\Ldiff M}]
  \). %
  And %
  \(
  c'=\State{\epsilon}{M}{\Stdiff is}
  \) %
  so that %
  \(
  \Tofst{c'}
  =\Stctx{\Stdiff is}[\Lproj\epsilon M]
  =\Stctx s[\Lproj i{\Ldiff{\Lproj\epsilon M}}]
  \). %
  We have %
  \begin{align*}
  \Lproj{\epsilon i}{\Ldiff M}
  &\Trcl\Red
  \Lproj i{\Lprojd\epsilon 1{\Ldiff M}}\\
  &\Red
  \Lproj i{\Ldiff{\Lproj\epsilon M}}
  \end{align*} %
  and hence \(\Tofst c\Trcl\Red\Tofst{c'}\).

  \Proofcase %
  \(Q=\Abst xAM\) and \(s=\Stdiff it\) with %
  \(i\in\Eset{0,1}\) so that we have %
  \(
  \Tofst c
  =\Stctx{\Stdiff it}[\Lproj\delta{\Abst xAM}]
  =\Stctx t[\Lproj i{\Ldiffp{\Lproj\delta{\Abst xAM}}}]
  \). %
  And
  \(
  c'=\State{\delta i}{\Abst x{\Tdiff A}{\Ldletv xM}}{t}
  \), so %
  \(
  \Tofst{c'}
  =\Stctx t[\Lproj{\delta i}{\Abst x{\Tdiff A}{\Ldletv xM}}]
  \). %
  We have %
  \begin{align*}
  \Lproj i{\Ldiffp{\Lproj\delta{\Abst xAM}}}
  &\Trcl\Red
  \Lproj i{\Lprojd\delta 1{\Ldiffp{\Abst xAM}}}\\
  &\Trcl\Red
  \Lproj\delta{\Lproj i{\Ldiffp{\Abst xAM}}}\\
  &\Red
  \Lproj\delta{\Lproj i{\Abst x{\Tdiff A}{\Ldletv xM}}}
  =\Lproj{\delta i}{\Abst x{\Tdiff A}{\Ldletv xM}}
  \end{align*} %
  and hence \(\Tofst c\Trcl\Red\Tofst{c'}\).

  \Proofcase %
  \(Q=\Lfix M\) so that %
  \(
  \Tofst c
  =\Stctx s[\Lproj\delta{\Lfix M}]
  \) %
  and
  \(
  \Tofst{c'}
  =\Stctx{\Starg{\Lfix M}s}[\Lproj\delta{M}]
  =\Stctx s[\App{\Lproj\delta{M}}{\Lfix M}]
  \). %
  We have %
  \begin{align*}
  \Lproj\delta{\Lfix M}
  &\Red\Lproj\delta{\App M{\Lfix M}}\\
  &\Trcl\Red\App{\Lproj\delta{M}}{\Lfix M}
  \end{align*} %
  and hence \(\Tofst c\Trcl\Red\Tofst{c'}\).

  \Proofcase %
  \(Q=\Lsucc dM\) with \(d=\Len\delta\), so %
  \(
  \Tofst c
  =\Stctx s[\Lproj\delta{\Lsucc dM}]
  \). And %
  \(
  \Tofst{c'}
  =\Stctx{\Stsucc s}[\Lproj\delta M]
  =\Stctx s[\Lsucc0{\Lproj\delta M}]
  \). %
  We have
  \(
  \Lproj\delta{\Lsucc dM}
  \Trcl\Red
  \Lsucc0{\Lproj\delta M}
  \) %
  in $d$ steps
  and hence \(\Tofst c\Trcl\Red\Tofst{c'}\).

  \Proofcase %
  The case $Q=\Lpred dM$ is similar and the cases $Q=\Num n$ are
  straightforward (and then $\delta=\Pempty$ and there are several
  cases to consider as to the shape of $s$, they are all easy).

  \Proofcase %
  \(Q=\Lif e{M}{P_0}{P_1}\) and \(\delta=\eta\epsilon\) with %
  \(e=\Len\epsilon\). 
  We have %
  \(
  \Tofst c
  =\Stctx s[\Lproj{\eta\epsilon}{\Lif e{M}{P_0}{P_1}}]
  \) %
  and %
  \(
  \Tofst{c'}
  =\Tofst{\Stif{\eta}{P_0}{P_1}s}[\Lproj\epsilon M]
  =\Tofst s[\Lproj\eta{\Lif 0{\Lproj\epsilon M}{P_0}{P_1}}]
  \). %
  We have %
  \[
  \Lproj{\eta\epsilon}{\Lif e{M}{P_0}{P_1}}
  =\Lproj\eta{\Lproj\epsilon{\Lif e{M}{P_0}{P_1}}}
  \Trcl\Red
  \Lproj\eta{\Lif 0{\Lproj\epsilon M}{P_0}{P_1}}
  \] %
  and hence \(\Tofst c\Trcl\Red\Tofst{c'}\).

  \Proofcase %
  The case $Q=\Llet dxNM$ is similar.

  \Proofcase %
  \(Q=\Linjd ieM\) and \(\delta=\alpha i\epsilon\) so that %
  \(
  \Tofst c
  =\Stctx s[{\Lproj{\alpha i\epsilon}{\Linjd ieM}}]
  \) and %
  \(
  \Tofst{c'}
  =\Stctx s[{\Lproj{\alpha\epsilon}{M}}]
  \). We have %
  \begin{align*}
  \Lproj{\alpha i\epsilon}{\Linjd ieM}
  &=\Lproj\alpha{\Lproj i{\Lproj\epsilon{\Linjd ieM}}}\\
  &\Trcl\Red
  \Lproj\alpha{\Lproj i{\Linjd i0{\Lproj\epsilon M}}}\\
  &\Red
  \Lproj\alpha{\Lproj\epsilon M}
  =\Lproj{\alpha\epsilon}M
  \end{align*} %
  and hence \(\Tofst c\Trcl\Red\Tofst{c'}\).

  \Proofcase %
  \(Q=\Linjd ieM\) and \(\delta=\alpha{(1-i)}\epsilon\) so that %
  \(
  \Tofst c
  =\Stctx s[{\Lproj{\alpha{(1-i)}\epsilon}{\Linjd ieM}}]
  \) and %
  \(
  \Tofst{c'}
  =\Tofst0
  =0
  \). We have %
  \begin{align*}
  \Lproj{\alpha(1-i)\epsilon}{\Linjd ieM}
  &=\Lproj\alpha{\Lproj{1-i}{\Lproj\epsilon{\Linjd ieM}}}\\
  &\Trcl\Red
  \Lproj\alpha{\Lproj{1-i}{\Linjd i0{\Lproj\epsilon M}}}
  \Red
  0
  \end{align*} %
  and hence \(\Tofst c\Trcl\Red\Tofst{c'}\).

  \Proofcase %
  \(Q=\Lflipdl elM\) and %
  \(\delta=\alpha\Listter i{l+1}0\epsilon\) with %
  \(\Len\epsilon=e\).
  Then %
  \(
  \Tofst c
  =\Stctx s[{\Lproj{\alpha\Listter i{l+1}0\epsilon}{\Lflipdl elM}}]
  \) and %
  \(
  \Tofst{c'}
  =\Stctx s[{\Lproj{\alpha i_0\Listter i{l+1}{1}\epsilon}{M}}]
  \).
  Next observe that %
  \begin{align*}
  \Lproj{\alpha\Listter i{l+1}0\epsilon}{\Lflipdl elM}
  &\Trcl\Red
  \Lproj{\alpha\Listter i{l+1}0}{\Lflipdl 0l{\Lproj\epsilon M}}\\
  &\Red
  \Lproj{\alpha i_0\Listter i{l+1}{1}}{\Lproj\epsilon M}
  =\Lproj{\alpha i_0\Listter i{l+1}{1}\epsilon}{M}
  \end{align*} %
  and hence \(\Tofst c\Trcl\Red\Tofst{c'}\).

  \Proofcase %
  \(Q=\Lsumd eM\) and \(\delta=\alpha 0\epsilon\) with %
  \(\Len\epsilon=e\).
  Then we have %
  \(
  \Tofst c
  =\Stctx s[{\Lproj{\alpha 0\epsilon}{\Lsumd eM}}]
  \) and %
  \(
  \Tofst{c'}
  =\Stctx s[{\Lproj{\alpha 00\epsilon}{M}}]
  \). %
  And %
  \begin{align*}
  \Lproj{\alpha 0\epsilon}{\Lsumd eM}
  &\Trcl\Red
  \Lproj\alpha{\Lproj0{\Lsumd0{\Lproj\epsilon M}}}\\
  &\Red
  \Lproj{\alpha 00\epsilon}{M}
  \end{align*} %
  and hence \(\Tofst c\Trcl\Red\Tofst{c'}\).
  
  \Proofcase %
  \(Q=\Lsumd eM\) and \(\delta=\alpha 1\epsilon\) with %
  \(\Len\epsilon=e\).
  Then we have %
  \(
  \Tofst c
  =\Stctx s[{\Lproj{\alpha 1\epsilon}{\Lsumd eM}}]
  \) and %
  \(
  \Tofst{c'}
  =\Stctx s[{\Lproj{\alpha 10\epsilon}{M}}]
  +\Stctx s[{\Lproj{\alpha 01\epsilon}{M}}]
  \).
  And %
  \begin{align*}
  \Lproj{\alpha 1\epsilon}{\Lsumd eM}
  &\Trcl\Red
  \Lproj\alpha{\Lproj1{\Lsumd0{\Lproj\epsilon M}}}\\
  &\Red
  \Lproj{\alpha 10\epsilon}{M}
  +\Lproj{\alpha 01\epsilon}{M}
  \end{align*} %
  and hence \(\Tofst c\Trcl\Red\Tofst{c'}\).
\end{proof}

For \(C=\Mset{\List c1k}\in\Mfin{\Rsca\States}\) we define %
\(\Tofst
C=\Mset{\Tofst{c_1},\dots,\Tofst{c_k}}\in\Mfin{\Rsca\Lang}\). We say
that \(C\) is well typed if \(\List c1k\) are well typed.

\begin{theorem}
  \label{th:}
  If \(C\in\Mfin{\Rsca\States}\) is well typed and %
  \(C\Rel{\Rsred{\Msrs\States}}C'\) then %
  \(\Tofst C\Rel{\Rsred{\Msrs\Lang}}\Tofst{C'}\).
\end{theorem}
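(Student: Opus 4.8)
The plan is to argue by a case analysis on the one-step reduction $C\Rel{\Rsred{\Msrs\States}}C'$, following the three explicit clauses recalled in Section~\ref{sec:ms-rewriting}, and to reduce each clause to the already-proved single-state simulation (Theorem~\ref{th:red-state-term}) together with the fact that $\Tofst{\cdot}$ is a homomorphism of commutative monoids from $(\Mfin{\Rsca\States},+,\Msetempty)$ to $(\Mfin{\Rsca\Lang},+,\Msetempty)$; indeed $\Tofst{C_0+C_1}=\Tofst{C_0}+\Tofst{C_1}$ is immediate from the defining equation $\Tofst{\Mset{\List c1k}}=\Mset{\Tofst{c_1},\dots,\Tofst{c_k}}$, the two $+$ symbols here denoting multiset sum. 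Since Theorem~\ref{th:red-state-term} yields a \emph{reflexive--transitive} $\Lang$-reduction $\Tofst c\Trcl\Red\Tofst{c'}$, the conclusion must be read with the reflexive--transitive closure $\Trcl{(\Rel{\Rsred{\Msrs\Lang}})}$ of the one-step multiset relation (a single $\Msrs\States$-step cannot in general be matched by a single $\Msrs\Lang$-step).

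First I would record two elementary congruence properties of the generic $\Msrs\cT$ construction, valid for any rewriting system $\cT$. (i) If $t\Trcl{(\Rel{\Rsred\cT})}t'$ then $\Mset t\Trcl{(\Rel{\Rsred{\Msrs\cT}})}\Mset{t'}$: a straightforward induction on the length of the reduction, each step $t\Rel{\Rsred\cT}u$ being lifted by the third rule of $\Msrs\cT$ to $\Mset t\Rel{\Rsred{\Msrs\cT}}\Mset u$. (ii) If $S\Trcl{(\Rel{\Rsred{\Msrs\cT}})}S'$ then $S+T\Trcl{(\Rel{\Rsred{\Msrs\cT}})}S'+T$ for any $T$: an induction on the length of the reduction using the fourth (context) rule of $\Msrs\cT$, commutativity of multiset sum letting the context sit on either side. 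Combining (i) and (ii) gives the only non-trivial ingredient: $t\Trcl{(\Rel{\Rsred\cT})}t'$ implies $T+\Mset t\Trcl{(\Rel{\Rsred{\Msrs\cT}})}T+\Mset{t'}$ for every $T$.

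It then remains to treat the three clauses, specialised to $\cT=\Lang$. In the first clause $C=C_0+\Mset 0$ (the zero state) and $C'=C_0$; since $\Tofst 0=0$ we have $\Tofst C=\Tofst{C_0}+\Mset 0$, so the first rule of $\Msrs\Lang$ with context closure yields $\Tofst C\Rel{\Rsred{\Msrs\Lang}}\Tofst{C_0}=\Tofst{C'}$ in one step. In the third clause $C=C_0+\Mset{c_0+c_1}$ and $C'=C_0+\Mset{c_0,c_1}$; since $\Tofst{c_0+c_1}=\Tofst{c_0}+\Tofst{c_1}$ by definition, the second rule of $\Msrs\Lang$ with context closure again gives $\Tofst C\Rel{\Rsred{\Msrs\Lang}}\Tofst{C'}$ in one step. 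The genuinely multi-step case is the second clause $C=C_0+\Mset c$, $c\Stred c'$, $C'=C_0+\Mset{c'}$: well-typedness of $C$ makes $c$ a well-typed state, so Theorem~\ref{th:red-state-term} gives $\Tofst c\Trcl\Red\Tofst{c'}$, and the combined congruence property lifts this to $\Tofst{C_0}+\Mset{\Tofst c}\Trcl{(\Rel{\Rsred{\Msrs\Lang}})}\Tofst{C_0}+\Mset{\Tofst{c'}}$, that is $\Tofst C\Trcl{(\Rel{\Rsred{\Msrs\Lang}})}\Tofst{C'}$. The main point to keep straight, and essentially the only source of possible confusion, is the bookkeeping of the two distinct additions — multiset sum versus the formal $+$ on states and on terms — together with the fact that one $\Msrs\States$-step is simulated by possibly several $\Msrs\Lang$-steps, which is precisely what the congruence properties (i)--(ii) absorb.
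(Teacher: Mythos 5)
Your proof is correct and follows essentially the same route as the paper, whose own proof simply declares the result an immediate corollary of Theorem~\ref{th:red-state-term} together with the fact that \(\Msrs\States\) and \(\Msrs\Lang\) follow the same pattern; your congruence properties (i)--(ii) and the three-clause case analysis merely make that one-line argument explicit. Your remark that the conclusion must be read with the reflexive--transitive closure of \(\Rsred{\Msrs\Lang}\) is a correct sharpening rather than a deviation: since Theorem~\ref{th:red-state-term} only yields \(\Tofst c\Trcl\Red\Tofst{c'}\) (possibly zero or several \(\Lang\)-steps), a single \(\Msrs\States\)-step is in general matched by zero or several \(\Msrs\Lang\)-steps, a point the paper's statement leaves implicit.
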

This is an immediate corollary of Theorem~\ref{th:red-state-term} and
of the definition of \(\Msrs\States\) and \(\Msrs\Lang\) follow the
same pattern. Let us say that
\(C=\Mset{\List c1k}\in\Rsca{\Msrs\States}\) is \(\cL\)-summable if
\(C\) is well typed and the family of terms
\((\Tofst{c_1},\dots,\Tofst{c_k})\) is \(\cL\)-summable (see
Definition~\ref{def:term-multiset-summability}).

\begin{theorem}%
  \label{cor:K-mset-summability-pres}
  If \(C\in\Rsca{\Msrs\States}\) is \(\cL\)-summable and
  \(C\Rel{\Rsred{\Msrs\States}}C'\) then \(C'\) is \(\cL\)-summable.
\end{theorem}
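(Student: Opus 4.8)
The plan is to obtain the statement as a direct consequence of the two results immediately preceding it. The first is the multiset simulation theorem just established (itself a corollary of Theorem~\ref{th:red-state-term}): for well typed $C$, a transition $C\Rel{\Rsred{\Msrs\States}}C'$ is mirrored by a reduction of $\Tofst C$ to $\Tofst{C'}$ in $\Msrs\Lang$. The second is the multiset soundness Theorem~\ref{th:term-multiset-soundness}, which guarantees that $\cL$-summability of a term multiset is preserved along $\Topred$, with the interpretation left unchanged. What makes the argument a mere transfer of information is that, by the definition of $\cL$-summability for state multisets (Definition~\ref{def:term-multiset-summability} read through the translation $c\mapsto\Tofst c$), the assertion that $C$ is $\cL$-summable is by construction the same as the assertion that $C$ is well typed and that the term multiset $\Tofst C$ is $\cL$-summable.

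Concretely, I would first use the hypothesis that $C$ is $\cL$-summable to extract, by definition, both that $C$ is well typed and that the term multiset $\Tofst C$ is $\cL$-summable (so in particular typed, with $\Tseq{}{\Tofst C}{\Tnat}$). The simulation theorem then applies and produces a reduction from $\Tofst C$ to $\Tofst{C'}$ in $\Msrs\Lang$. Feeding this reduction into Theorem~\ref{th:term-multiset-soundness} shows that $\Tofst{C'}$ is again $\cL$-summable; unfolding the definition once more, together with the well-typedness of $C'$, then gives precisely that $C'$ is $\cL$-summable.

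I expect the only genuine work to be bookkeeping rather than new mathematics. First, one must check that $C'$ is well typed, which is needed both for ``$C'$ is $\cL$-summable'' to be meaningful and for $\Tofst{C'}$ to be a typed term multiset; this is supplied by the subject reduction already available, namely Proposition~\ref{prop:machine-subj-red} for individual states and Theorem~\ref{th:subj-reduction-ms} for the system $\Msrs\States$. Second, since Theorem~\ref{th:term-multiset-soundness} is phrased for a single $\Topred$ step whereas one state transition is in general simulated by several term reductions (Theorem~\ref{th:red-state-term}), I would iterate the soundness theorem over the individual steps, using that it also preserves $\Psem{\cdot}\Gamma$ so that no summability data is lost between consecutive steps. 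No fresh categorical reasoning is required: all the real content has already been isolated in the simulation and soundness theorems.
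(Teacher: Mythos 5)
Your proposal is correct and takes essentially the same route as the paper, whose one-line proof (``Use Theorem~\ref{th:term-multiset-soundness}'') leaves implicit exactly what you spell out: unfolding the definition of \(\cL\)-summability for state multisets, invoking the simulation theorem to transport the \(\Rsred{\Msrs\States}\) step to \(\Msrs\Lang\), and iterating term-level multiset soundness over the resulting steps. One cosmetic slip: well-typedness of \(C'\) is supplied by Proposition~\ref{prop:machine-subj-red} together with the trivial cases for \(0\) and for sums \(c_0+c_1\), not by Theorem~\ref{th:subj-reduction-ms}, which concerns term multisets in \(\Msrs\Lang\) rather than \(\Msrs\States\).
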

\begin{proof}
  Use Theorem~\ref{th:term-multiset-soundness}.
\end{proof}

\subsection{Soundness}\label{sec:K-machine-sound}
Let \(s\) be a stack such that \(\Stseq sE\) and a variable \(x\) we
know that \(\Stctx s[x]\) is a term such that %
\(\Tseq{x:E}{\Stctx s[x]}{\Tnat}\).
So we can define the semantics of \(s\) by
\(\Psem s{}=\Psem{\Stctx s[x]}{x:E}\in\Kl\cL(\Tsem E,\Snat)\).

\begin{lemma}
  If \(\Stseq sE\) then \(\Psem s{}\) is linear.
\end{lemma}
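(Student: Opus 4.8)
The plan is to argue by induction on the typing derivation of $\Stseq sE$, equivalently on the structure of the stack $s$, showing that $\Psem s{}=\Psem{\Stctx s[x]}{x:E}$ is linear in the sense of the paper, namely of the form $\Kllin g$ for some $g\in\cL(\Tsem E,\Snat)$. Two facts are used throughout: first, that $\Kllin$ is a functor, so the Kleisli composite $\Kllin(g)\Comp\Kllin(g')=\Kllin(gg')$ of two linear morphisms is again linear; and second, that the elementary morphisms $\Ssuc$, $\Spred$, $\Sproj i$ and $\Sdfunint$ (the last by Lemma~\ref{lemma:dint-linear}) are all linear by construction. For the base case $s=\Stempty$ we have $\Stctx\Stempty=\Echole$, so $\Psem s{}=\Psem x{x:E}=\Der{\Tsem E}=\Kllin(\Tsem E)$ is the Kleisli identity, which is linear.

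For the inductive step, each stack constructor produces a stack $s'$ with $\Stseq{s'}E$ from an immediate substack $s$ with $\Stseq s{E'}$, and the recursive clauses of Figure~\ref{fig:stack-ctx} show that $\Stctx{s'}[x]=\Stctx s[L[x]]$, where $L$ is the one-layer context read off the constructor (carrying its extra, closed subterms). Writing $\Stctx{s'}[x]=\Subst{\Stctx s[y]}{L[x]}{y}$ and applying Lemma~\ref{lemma:sem-subst} together with Lemma~\ref{lemma:sem-context-extension} (valid since $\Stctx s[y]$ has no free occurrence of $x$) yields
\[
  \Psem{\Stctx{s'}[x]}{x:E}=\Psem s{}\Comp\Psem{L[x]}{x:E}\,,
\]
in which $\Psem s{}$ is linear by the induction hypothesis. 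By functoriality of $\Kllin$ it therefore suffices to check that $\Psem{L[x]}{x:E}$ is linear for each constructor.

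This reduces the claim to a routine case analysis. For $\Stsucc$ and $\Stpred$ the term $L[x]$ interprets as $\Ssuc$, resp.\ $\Spred$, composed with a projection, hence is linear; for $\Stdiff i$ it interprets (up to a projection) as $\Sproj i\Comp\Sdfunint$, a composite of linear morphisms. For $\Stif\delta{M_0}{M_1}$ and $\Stlet\delta xM$ the context is $\Lproj\delta$ applied to $\Lif 0\Echole{M_0}{M_1}$, resp.\ $\Llet 0x\Echole M$, so its interpretation is an iterate of $\Sproj{}$'s post-composed with $\Mlin\Sif\Comp\Tuple{\_,\Tuple{\Psem{M_0}{},\Psem{M_1}{}}}$, resp.\ the $\Slet$ analogue; since $\Mlin\Sif$ and $\Mlin\Slet$ are multilinear (Lemma~\ref{lemma:Kleisli-multilin}), fixing all arguments but the hole gives a linear morphism, and post-composition with the linear $\Sproj{}$'s preserves linearity.

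The delicate case is $\Starg M$, where $L[x]=\App xM$ and the interpretation is the partial application $\App{\_}M=\Ev\Comp\Tuple{\_,\Psem M{}}$. Here the needed linearity is precisely the left-linearity of $\Ev$, but it must be invoked in its strong form: $\Ev$ uses its function argument only through dereliction, so the partial application is genuinely $\Kllin$ of a morphism of $\cL$, and not merely additive (commuting with $0$ and with sums). This is the main obstacle, and once it is recorded all cases combine through the functoriality of $\Kllin$ to give that $\Psem s{}$ is linear.
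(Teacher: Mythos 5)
Your skeleton is sound, and it is essentially the paper's own proof (which consists of the single remark that $\Stctx s$ is always a linear context, the induction being left to the reader as "straightforward"): the base case, the factorization $\Psem{\Stctx{s'}[x]}{x:E}=\Psem s{}\Comp\Psem{L[x]}{x:E}$ via Lemmas~\ref{lemma:sem-subst} and~\ref{lemma:sem-context-extension}, the appeal to functoriality of $\Kllin$, and the $\Stsucc$, $\Stpred$, $\Stdiff i$ cases are all correct. The gap is in the $\Stif$, $\Stlet$ and $\Starg$ cases, and it is one you half-acknowledge yourself. "Linear" in this lemma means "in the image of $\Kllin$", whereas Lemma~\ref{lemma:Kleisli-multilin} only gives \emph{additivity} of $\Mlin\Sif$ and $\Mlin\Slet$ in each argument (preservation of $0$ and of sums) — exactly the weaker notion you rightly refuse to settle for when you discuss $\Ev$ ("not merely additive"). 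So citing Lemma~\ref{lemma:Kleisli-multilin} for $\Stif$/$\Stlet$ proves nothing, and the principle it is meant to support — that fixing all arguments but one of a multilinear Kleisli morphism yields a linear morphism — is false for arbitrary fixed arguments: in $\REL$ the interpretation of $\Lif 0xyy$ in the context $(x:\Tnat,y:\Tnat)$ is additive in each variable separately, yet it contains pairs whose first component is a two-element multiset (one occurrence of $x$, one of $y$), so it is not of the form $\Kllin h$, all of whose elements have singleton first components.

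What makes the three problematic cases true is an ingredient you mention only in passing ("closed subterms") and never actually use: the stack typing rules force the parameters $M$, $M_0$, $M_1$ to be closed, so their interpretations factor through weakening. Unfolding the Kleisli pairing, a partial application such as $\Mlin\Sif\Comp\Tuple{\Kllin{\Id},\Tuple{p_0,p_1}}$ or $\Ev\Comp\Tuple{\Kllin{\Id},q}$ has the shape $l\Compl\Tensp{\Der Z}{g}\Compl\Contr Z$, where $g$ is the fixed argument (or its promotion, in the case of $\Ev$); in general the contraction $\Contr Z$ destroys linearity, but when $g=g_0\Compl\Weak Z$ the comonoid law $\Tensp{\Excl Z}{\Weak Z}\Compl\Contr Z=\Invp\Rightu$ collapses it, and by naturality of $\Rightu$ the composite becomes $l\Compl\Tensp{Z}{g_0}\Compl\Invp\Rightu\Compl\Der Z$, which is $\Kllin$ of a morphism of $\cL$. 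This short computation — together with the observation that $\Mlin\Sif$, $\Mlin\Slet$ and $\Ev$ reach the hole argument only through dereliction — is the actual content of the lemma; with it in place your case analysis closes uniformly, and your "strong form" of left-linearity becomes a proved fact rather than a recorded obstacle.
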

This is due to the fact that the context \(\Stctx s\) is always
linear.
The proof is straightforward.

If \(c\) is a well typed state, we set
\(\Psem c{}=\Psem{\Tofst c}{}\in\cL(\Sone,\Snat)\) and if %
\(C=\Mset{\List c1k}\in\Rsca{\Msrs\States}\) is \(\cL\)-summable we
set %
\(\Psem C{}=\sum_{i=1}^k\in\Psem{c_i}{}\) which is a well defined
element of \(\cL(\Sone,\Snat)\) by definition of \(\cL\)-summability.

\begin{theorem}
  \label{th:machine-sound}
  If \(C\in\Rsca{\Msrs\States}\) is \(\cL\)-summable and %
  \(C\Rel{\Rsred{\Msrs\States}}C'\) then %
  \(\Psem{C'}{}=\Psem C{}\).
\end{theorem}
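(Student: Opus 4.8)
The plan is to transport the statement through the translation $\Tofst{(-)}$ and reduce it entirely to the already-established soundness of the term-level multiset rewriting, Theorem~\ref{th:term-multiset-soundness}.

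The first step is to record the identity $\Psem C{}=\Psem{\Tofst C}{}$ relating the machine semantics to the term-multiset semantics of Definition~\ref{def:term-multiset-summability}. Writing $C=\Mset{\List c1k}$, the definition of the machine semantics gives $\Psem C{}=\sum_{i=1}^k\Psem{c_i}{}=\sum_{i=1}^k\Psem{\Tofst{c_i}}{}$. Now $\Tofst C=\Mset{\Tofst{c_1},\dots,\Tofst{c_k}}$, and by the very definition of $\cL$-summability of $C$ this is an $\cL$-summable multiset of closed terms of type $\Tnat$, so that $\Psem{\Tofst C}{}=\sum_{i=1}^k\Psem{\Tofst{c_i}}{}$. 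Hence $\Psem C{}=\Psem{\Tofst C}{}$. The same identity will hold for $C'$ once $C'$ is known to be $\cL$-summable, which it is by Theorem~\ref{cor:K-mset-summability-pres}.

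The second step is to use Theorem~\ref{th:} to transport the reduction $C\Rel{\Rsred{\Msrs\States}}C'$ to the term level, obtaining that $\Tofst C$ reduces to $\Tofst{C'}$ in a finite number of $\Rsred{\Msrs\Lang}$ steps. Here I would stress that this is a \emph{multi-step} reduction, not a single step: a single machine transition $c\Stred c'$ is simulated, via Theorem~\ref{th:red-state-term}, by a sequence of $\Red$ steps, so the lifted reduction $\Tofst C\to\Tofst{C'}$ is in general a composite of several $\Rsred{\Msrs\Lang}$ steps. Along this finite sequence, every intermediate multiset is well typed of type $\Tnat$ by subject reduction for multisets (Theorem~\ref{th:subj-reduction-ms}), so each individual step falls under the hypotheses of Theorem~\ref{th:term-multiset-soundness}.

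The final step is to iterate Theorem~\ref{th:term-multiset-soundness} along this sequence. Starting from the $\cL$-summable multiset $\Tofst C$, each step preserves $\cL$-summability and leaves the interpretation invariant, so a straightforward induction on the number of steps yields that $\Tofst{C'}$ is $\cL$-summable with $\Psem{\Tofst{C'}}{}=\Psem{\Tofst C}{}$. Combining this with the identities of the first step gives $\Psem{C'}{}=\Psem{\Tofst{C'}}{}=\Psem{\Tofst C}{}=\Psem C{}$, which is the claim. The only point requiring mild care is precisely the observation that the conclusion of Theorem~\ref{th:} is a multi-step reduction, which forces the iteration of Theorem~\ref{th:term-multiset-soundness} rather than a single application; this is the ``main obstacle'', but it is entirely bookkeeping, since all the substantive work has already been carried out in the cited results.
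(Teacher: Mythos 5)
Your proof is correct and takes essentially the same route as the paper's (very terse) argument: the paper likewise works through the translation \(\Tofst{(-)}\), obtaining summability of \(C'\) from Theorem~\ref{cor:K-mset-summability-pres} (itself proved by invoking Theorem~\ref{th:term-multiset-soundness}) and dismissing the semantic equality as a straightforward application of the definition of \(\Rsred{\Msrs\States}\). Your explicit handling of the fact that the simulation is a \emph{multi-step} reduction --- Theorem~\ref{th:} is literally stated as one step of \(\Rsred{\Msrs\Lang}\), but via Theorem~\ref{th:red-state-term} it is really a reflexive--transitive closure, so Theorem~\ref{th:term-multiset-soundness} must be iterated with Theorem~\ref{th:subj-reduction-ms} supplying typing of the intermediate multisets --- is a correct and useful sharpening of a point the paper glosses over.
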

This makes sense by Theorem~\ref{cor:K-mset-summability-pres} which
entails that \(C'\) is \(\cL\)-summable. The proof is a
straightforward application of the definition of
\(\Rsred{\Msrs\States}\).

\subsection{Intersection typing system for terms}
\label{sec:inter-types-terms}
We present the interpretation of terms in \(\REL\) (see
Section~\ref{sec:rel-model}) as a deduction system so as to make the
statement and the proofs of our normalization result and of the lemmas
it uses (see Section~\ref{sec:int-seq-normalization}) more readable and
natural.
Indeed these proofs are most naturally presented as proofs by
induction on derivations in this typing system.
The main idea is to consider the elements of the sets interpreting
types as intersection types, which are non-idempotent since we use
multisets and not sets in the definition of \(\Excl X\).

\subsubsection{Complements about \(\REL\)}
\label{sec:complements-REL}
Before describing the relational interpretation of types and terms, we
need to complement Section~\ref{sec:rel-model} with a few words about
the generalized flip of Lemma~\ref{lemma:sflipl-circular-permutation},
the additive strength and the basic operations on integers in
\(\REL\).

For \(l\in\Nat\), the generalized flip of length \(l+2\) is %
\( \Sflipl l= \{((\alpha,a),(\Rcycle\alpha,a)) \St
\alpha\in\Into^{l+2}\text{ and }a\in X\}
\in\REL(\Scfun^{l+2}X,\Scfun^{l+2}X) \). %
We prove this property this by induction on \(l\). %
For \(l=0\) this is due to the fact that \(\Sflipl 0=\Sflip\). %
Assume that this property holds for \(l\) %
and remember that by definition %
\(\Sflipl{l+1}=\Sflip\Compl(\Scfun\Sflipl l)\). We have %
\(
\Scfun\Sflipl l
=\{(r\alpha,a),(r\Rcycle\alpha,a))\St r\in\Into,\ a\in X\text{ and }
\alpha\in\Into^{l+2}\}
\) %
and hence
\(
\Sflip\Compl(\Scfun\Sflipl l)
=\{((r\alpha,a),(\Rcycle{r\alpha},a)
\St r\in\Into,\alpha\in\Into^{l+1} \text{ and }a\in X\}
=\Sflipl{l+1}
\).

The additive strength (see Section~\ref{sec:additive-strength}) of the
monad \(\Sdfun\) on \(\Kl\REL\) is %
\( \Sdfstr^i_{X_0,\dots,X_n} \in\Kl\REL(X_0\IWith\cdots\IWith\Sdfun
X_i\IWith\cdots\IWith X_n, \Sdfun(X_0\IWith\cdots\IWith X_n) \) %
given by
\begin{align*}
\Sdfstr^i_{X_0,\dots,X_n}
&=\Eset{(\Mset{(j,a)},(0,j,a))
  \St j\in\Eset{0,\dots,n}\setminus\Eset i
  \text{ and }a\in X_j}\\
&\Textsep\cup\Eset{(\Mset{(i,0,a)},(0,i,a))\St a\in X_i}\\
&\Textsep\cup\Eset{(\Mset{(i,1,a)},(1,i,a))\St a\in X_i}\,.
\end{align*}

Using the notations of Section~\ref{sec:basic-categ-constr} we set
$\Snat=\Nat$ and then %
\(\Szero=\Eset{(\Sonelem,0)}\in\REL(\Sone,\Snat)\), %
\(\Ssuc=\Eset{(\nu,\nu+1)\St n\in\Nat}\in\REL(\Snat,\Snat)\) and%
\footnote{To avoid confusions with integers used as indices or
  multiplicities, we use Greek letters \(\kappa,\nu\) to denote
  numerals.}
\(\Spred=\Eset{(0,0)}\cup\Eset{(\nu+1,\nu)\St \nu\in\Nat}\).
The %
\(\oc\)-coalgebra structure of $\Snat$ is given by %
\( \Coalg\Snat =\Eset{(\nu,k\Mset\nu)\St
  k,\nu\in\Nat}\in\REL(\Snat,\Excl\Snat) \).
Given a set $X$, the morphism %
\(\Slet\in\REL(\Tens\Snat{(\Limpl{\Excl\Snat}{X})},X)\) is given by %
\(\Slet=\Eset{(\nu,(k\Mset \nu,a),a)\St \nu,k\in\Nat\text{ and }a\in X}\) %
and the morphism %
\(\Sif\in\cL(\Snat\ITens(\With XX),X)\) is given by %
\(
\Sif=\Eset{(0,(0,a),a)\St a\in X}
\cup\Eset{(\nu+1,(1,a),a)\St a\in X}
\).

\subsubsection{The interpretation in \(\REL\) as a deduction system}
\label{sec:REL-deduction}
The interpretation \(\Tsemrel A\) of a type \(A\) in \(\REL\) is given by %
\(\Tsemrel{\Tdnat d}=\Into^d\times\Nat\) and %
\(\Tsemrel{\Timpl AB}=\Mfin{\Tsemrel A}\times\Tsemrel B\).
We consider the elements of $\Into^d$ as %
sequences of length $d$ of elements of
$\Into=\With\Sone\Sone=\Eset{0,1}$, that is, as sequences of bits of
length $d$.
Given \(\delta\in\Into^d\) and \(a\in\Tsemrel A\) one defines %
\(\Tseqact\delta a\in\Tsemrel{\Tdiffm dA}\) by induction on \(A\): %
if %
\(A=\Tdnat e\) then \(a=(\epsilon,\nu)\) where %
\(\nu\in\Nat\) and \(\epsilon\in\Into^e\) and we set %
\(\Tseqact\delta a=(\delta\epsilon,\nu)\in\Tdnat{d+e}\), %
and if \(A=\Timplp BC\) then \(a=(p,c)\) where %
\(p\in\Mfin{\Tsemrel B}\) and \(c\in\Tsemrel C\) and we set %
\( \Tseqact\delta a =(p,\Tseqact\delta c) \in\Tsemrel{\Timpl B{\Tdiffm
    dC}}=\Tsemrel{\Tdiffm d{\Timplp BC}} \).
Any type \(A\) can be written uniquely \(A=\Tdiffm dF\) where %
\(F\) is sharp, and then any element \(a\in\Tsemrel A\) can be
written %
uniquely \(a=\Tseqact\delta f\) where %
\(\delta\in\Into^d\) and \(f\in\Tsemrel F\).

An intersection typing context is a sequence %
\(\Phi=(\Ttermi{x_1}{m_1}{A_1},\dots,\Ttermi{x_n}{m_n}{A_n})\) where the %
\(x_i\)'s are pairwise distinct variables, %
and \(m_i\in\Mfin{\Tsemrel{A_i}}\) for each \(i\). %
Then we use \(\Contca\Phi\) for the underlying typing context defined
as \(\Contca\Phi=(x_1:A_1,\dots,x_n:A_n)\). If %
\(\Contca\Phi=\Contca{\Phi'}\), so that
\(\Phi=(\Ttermi{x_1}{m_1}{A_1},\dots,\Ttermi{x_n}{m_n}{A_n})\) and %
\(\Phi'=(\Ttermi{x_1}{m'_1}{A_1},\dots,\Ttermi{x_n}{m'_n}{A_n})\), %
then we set %
\(
\Phi+\Phi'=(\Ttermi{x_1}{m_1+m'_1}{A_1},\dots,\Ttermi{x_n}{m_n+m'_n}{A_n})
\) %
and when we use this notation we always implicitly assume that all
the intersection typing contexts involved have the same underlying
typing context. Given a typing context %
\(\Gamma=(x_1:A_1,\dots,x_n:A_n)\) we set %
\(
\Contz\Gamma
=(\Ttermi{x_1}{\Msetempty}{A_1},\dots,\Ttermi{x_n}{\Msetempty}{A_n})
\). %

The intersection typing rules are given in
Figure~\ref{fig:int-typing-rules}.
\begin{figure}
  \begin{center}
    \begin{prooftree}
      \infer0[\Itrvar]{\Tseqi{\Contz\Gamma,\Ttermi x{\Mset a}A}{x}{a}{A}}
    \end{prooftree}
    \labeltext{$\Itrvar$}{rl:itrvar}
    \Treesep
    \begin{prooftree}
      \hypo{\Tseqi{\Phi,\Ttermi xmA}{M}b{B}}
      \infer1[\Itrabs]{\Tseqi{\Phi}{\Abst xAM}{(m,b)}{\Timpl AB}}
    \end{prooftree}
    \labeltext{$\Itrabs$}{rl:itrabs}
  \end{center}
  \begin{center}
    \begin{prooftree}
      \hypo{\Tseqi{\Phi^0}M{(\Mset{\List a1n},b)}{\Timpl AB}}
      \hypo{(\Tseqi{\Phi^j}N{a_j}A)_{j=1}^n}
      \infer2[\Itrapp]{\Tseqi{\sum_{j=0}^n\Phi^j}{\App MN}bB}
    \end{prooftree}
    \labeltext{$\Itrapp$}{rl:itrapp}
  \end{center}
  \begin{center}
    \begin{prooftree}
      \hypo{\Tseqi{\Phi^0}M{(\Mset{\List a1n},a)}{\Timpl AA}}
      \hypo{(\Tseqi{\Phi^j}{\Lfix M}{a_j}A)_{j=1}^n}
      \infer2[\Itrfix]{\Tseqi{\sum_{j=0}^n\Phi^j}{\Lfix M}aA}
    \end{prooftree}
    \labeltext{$\Itrfix$}{rl:itrfix}
  \end{center}
  \begin{center}
    \begin{prooftree}
      \hypo{\nu\in\Nat}
      \infer1[\Itrnum]{\Tseqi{\Contz\Gamma}{\Num\nu}\nu\Tnat}
    \end{prooftree}
    \labeltext{$\Itrnum$}{rl:itrnum}
    \Treesep
    \begin{prooftree}
      \hypo{\Tseqi\Phi M{\Tseqact\delta\nu}{\Tdiffm d\Tnat}}
      \hypo{\Len\delta=d}
      \infer2[\Itrsuc]
      {\Tseqi\Phi{\Lsucc dM}{\Tseqact\delta{(\nu+1)}}{\Tdiffm d\Tnat}}
    \end{prooftree}
    \labeltext{$\Itrsuc$}{rl:itrsuc}
  \end{center}
  \begin{center}
    \begin{prooftree}
      \hypo{\Tseqi\Phi M{\Tseqact\delta 0}{\Tdiffm d\Tnat}}
      \hypo{\Len\delta=d}
      \infer2[\Itrpredz]
      {\Tseqi\Phi{\Lpred dM}{\Tseqact\delta 0}{\Tdiffm d\Tnat}}
    \end{prooftree}
    \labeltext{$\Itrpredz$}{rl:itrpredz}
  \end{center}
  \begin{center}
    \begin{prooftree}
      \hypo{\Tseqi\Phi M{\Tseqact\delta{(\nu+1)}}{\Tdiffm d\Tnat}}
      \hypo{\Len\delta=d} \infer2[\Itrpredp] {\Tseqi\Phi{\Lpred
          dM}{\Tseqact\delta\nu}{\Tdiffm d\Tnat}}
    \end{prooftree}
    \labeltext{$\Itrpredp$}{rl:itrpredp}
  \end{center}
  \begin{center}
    \begin{prooftree}
      \hypo{\Tseqi{\Phi^0}M{\Tseqact\delta 0}{\Tdiffm d\Tnat}}
      \hypo{\Tseqi{\Phi^1}PaA} \hypo{\Tseq{\Contca{\Phi^0}} QA}
      \hypo{\Len\delta=d} \infer4[\Itrifz]
      {\Tseqi{\Phi^0+\Phi^1}{\Lift AdMPQ}{\Tseqact\delta a}{\Tdiffm d
          A}}
    \end{prooftree}
    \labeltext{$\Itrifz$}{rl:itrifz}
  \end{center}
  \begin{center}
    \begin{prooftree}
      \hypo{\Tseqi{\Phi^0}M{\Tseqact\delta{(\nu+1)}}{\Tdiffm d\Tnat}}
      \hypo{\Tseq{\Contca{\Phi^0}} PA} \hypo{\Tseqi{\Phi^1}QaA}
      \hypo{\Len\delta=d} \infer4[\Itrifp]
      {\Tseqi{\Phi^0+\Phi^1}{\Lift AdMPQ}{\Tseqact\delta a}{\Tdiffm d
          A}}
    \end{prooftree}
    \labeltext{$\Itrifp$}{rl:itrifp}
  \end{center}
  \begin{center}
    \begin{prooftree}
      \hypo{\Tseqi\Phi{M}{\Tseqact{\delta r}a}{\Tdiffm{d+1}A}}
      \hypo{\Len\delta=d}
      \infer2[\Itrproj]
      {\Tseqi\Phi{\Lprojd rdM}{\Tseqact\delta a}{\Tdiffm dA}}
    \end{prooftree}
    \labeltext{$\Itrproj$}{rl:itrproj}
  \end{center}
  \begin{center}
    \begin{prooftree}
      \hypo{\Tseqi\Phi{M_i}a{A}}
      \hypo{\Tseq{\Contca\Phi}{M_0+M_1}{A}}
      \hypo{i\in\Into}
      \infer3[\Itradd]{\Tseqi\Phi{M_0+M_1}aA}
    \end{prooftree}
    \labeltext{$\Itradd$}{rl:itradd}
  \end{center}
  \begin{center}
    \begin{prooftree}
      \hypo{\Tseqi\Phi M{\Tseqact\delta a}{\Tdiffm dA}}
      \hypo{\Len\delta=d}
      \hypo{r\in\Into}
      \infer3[\Itrinj]
      {\Tseqi\Phi{\Linjd rdM}{\Tseqact{\delta r}a}{\Tdiffm{d+1}A}}
    \end{prooftree}
    \labeltext{$\Itrinj$}{rl:itrinj}
  \end{center}
  \begin{center}
    \begin{prooftree}
      \hypo{\Tseqi\Phi M{\Tseqact{\delta r_0r_1}a}{\Tdiffm{d+2}A}}
      \hypo{r,r_0,r_1\in\Into\text{ and }r=r_0+r_1}
      \hypo{\Len\delta=d}
      \infer3[\Itrsum]
      {\Tseqi\Phi{\Lsumd dM}{\Tseqact{\delta r}a}{\Tdiffm{d+1}A}}
    \end{prooftree}
    \labeltext{$\Itrsum$}{rl:itrsum}
  \end{center}
  \begin{center}
    \begin{prooftree}
      \hypo{\Tseqi\Phi M{(m,b)}{\Timpl AB}}
      \hypo{r\in\Into\text{ and }(m',(r,m))\in\Sdiff_{\Tsem A}}
      \infer2[\Itrdiff]
      {\Tseqi\Phi{\Ldiff M}{(m',\Tseqact rb)}
        {\Timpl{\Tdiff A}{\Tdiff B}}}
    \end{prooftree}
    \labeltext{$\Itrdiff$}{rl:itrdiff}
  \end{center}
  \begin{center}
    \begin{prooftree}
      \hypo{\Tseqi\Phi M{\Tseqact{\delta\alpha}{a}}
        {\Tdiffm{d+l+2}A}}
      \hypo{\Len\delta=d,\ \Len\alpha=l+2}
      \infer2[\Itrcirc]
      {\Tseqi\Phi{\Lflipdl dlM}{\Tseqact{\delta\Rcycle\alpha}{a}}
        {\Tdiffm{d+l+2}A}}
    \end{prooftree}
    \labeltext{$\Itrcirc$}{rl:itrcirc}
  \end{center}
  \begin{center}
    \begin{prooftree}
      \hypo{\Tseqi{\Phi^0}M{\Tseqact\delta\nu}{\Tdiffm d\Tnat}}
      \hypo{\Tseqi{\Phi^1,\Ttermi x{k\Mset\nu}\Tnat}N{b}B}
      \hypo{k,\nu\in\Nat\text{ and }\Len\delta=d}
      \infer3[\Itrlet]
      {\Tseqi{\Phi^0+\Phi^1}{\Llett BdxMN}{\Tseqact\delta b}{\Tdiffm dB}}
    \end{prooftree}
    \labeltext{$\Itrlet$}{rl:itrlet}
  \end{center}
  \caption{Intersection typing rules for terms}
  \label{fig:int-typing-rules}
\end{figure}
\begin{figure}
  \begin{center}
    \begin{prooftree}
      \infer0{\Stseqi\Stempty\nu\Tnat\nu}
    \end{prooftree}
    \Treesep
    \begin{prooftree}
      \hypo{\Stseqi s{\kappa+1}\Tnat\nu} %
      \infer1{\Stseqi{\Stsucc s}\kappa\Tnat\nu} %
    \end{prooftree}
  \end{center}
  \begin{center}
    \begin{prooftree}
      \hypo{\Stseqi s0\Tnat\nu} %
      \infer1{\Stseqi{\Stpred s}0\Tnat\nu}
    \end{prooftree}
    \Treesep
    \begin{prooftree}
      \hypo{\Stseqi s\kappa\Tnat\nu} %
      \infer1{\Stseqi{\Stpred s}{\kappa+1}\Tnat\nu}
    \end{prooftree}
  \end{center}
  \begin{center}
    \begin{prooftree}
      \hypo{\Stseqi sfF\nu} %
      \hypo{\Tseqi{}{M_0}{\Tseqact\delta f}{\Tdiffm dF}} %
      \hypo{\Tseq{}{M_1}{\Tdiffm dF}} %
      \hypo{\Len\delta=d} %
      \infer4{\Stseqi{\Stif\delta{M_0}{M_1}s}0{\Tnat}\nu} %
    \end{prooftree}
  \end{center}
  \begin{center}
    \begin{prooftree}
      \hypo{\Stseqi sfF\nu} %
      \hypo{\Tseq{}{M_0}{\Tdiffm dF}} %
      \hypo{\Tseqi{}{M_1}{\Tseqact\delta f}{\Tdiffm dF}} %
      \hypo{\Len\delta=d} %
      \infer4{\Stseqi{\Stif\delta{M_0}{M_1}s}{\kappa+1}{\Tnat}\nu} %
    \end{prooftree}
  \end{center}
  \begin{center}
    \begin{prooftree}
      \hypo{\Stseqi sfF\nu} %
      \hypo{\Tseqi{\Ttermi x{k\Mset\kappa}\Tnat}{M}
        {\Tseqact\delta f}{\Tdiffm dF}} %
      \hypo{\Len\delta=d} %
      \infer3{\Stseqi{\Stlet\delta{x}{M}s}\kappa{\Tnat}\nu} %
    \end{prooftree}    
  \end{center}
  \begin{center}
    \begin{prooftree}
      \hypo{(\Tseqi{}M{a_i}A)_{i=1}^k}
      \hypo{\Stseqi sfF\nu}
      \infer2{\Stseqi{\Starg Ms}{(\Mset{\List a1k},f)}{\Timpl AF}\nu}
    \end{prooftree}
  \end{center}
  \begin{center}
    \begin{prooftree}
      \hypo{\Stseqi{s}{(m',f)}{\Timpl{\Tdiff A}F}\nu} %
      \hypo{r\in\Into\text{ and }(m',(r,m))\in\Sdiff_{\Tsem A}} %
      \infer2{\Stseqi{\Stdiff rs}{(m,f)}{\Timpl AF}\nu} %
    \end{prooftree}
  \end{center}
  \begin{center}
    \begin{prooftree}
      \hypo{\Tseqi{}M{\Tseqact\delta f}{\Tdiffm dF}}
      \hypo{\Stseqi sfF\nu}
      \infer2{\Etseqi{\State\delta Ms}\nu}
    \end{prooftree}
  \end{center}
  \caption{Intersection typing rules for stacks and states}
  \label{fig:int-typing-rules-stacks}
\end{figure}
We also define an intersection typing system for stacks in
Figure~\ref{fig:int-typing-rules-stacks}. An intersection stack typing
judgment is an expression %
\(\Stseqi sfF\nu\) where \(s\) is a stack, \(F\) is a sharp type,
\(f\in\Tsemrel F\) and \(\nu\in\Nat\).
The intuition is that $s$ produces $\nu$ if it receives $f$ as a
linear argument.

The main feature of this system is the following result which is
obtained by a simple induction structured exactly as the proof of
Theorem~\ref{th:sem-defined-sum}, and expresses that the rules of
Figure~\ref{fig:int-typing-rules-stacks} are a simple reformulation of
the denotational interpretation of Figure~\ref{fig:term-interp} in the
case of the model \(\REL\).
\begin{theorem}\label{th:sem-intersection-equiv}
  For any term \(M\) such that %
  \(\Tseq{x_1:A_1,\dots,x_n:A_n}MB\) and any %
  \((m_i\in\Mfin{\Tsemrel{A_i}})_{i=1}^n\) and \(b\in\Tsemrel B\), the two
  following properties are equivalent.
  \begin{itemize}
  \item \((\List m1n,b)\in\Psemrel M\Gamma\)
  \item the judgment
    \(
    \Tseqi{\Ttermi{x_1}{m_1}{A_1},\dots,\Ttermi{x_n}{m_n}{A_n}}{M}{b}{B}
    \) %
    is provable in the system of Figure~\ref{fig:int-typing-rules}.
  \end{itemize}
  For any stack \(s\) such that \(\Stseq sF\) and any %
  \(f\in\Tsemrel F\) one has \((f,n)\in\Psemrel s{}\) iff %
  \(\Stseqi sfFn\). For any well-typed state \(c\) and \(n\in\Nat\),
  one has \(\Etseqi cn\) iff \(n\in\Psemrel c{}\).
\end{theorem}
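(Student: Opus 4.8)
The plan is to prove the three equivalences by a single structural induction mirroring the inductive definition of the interpretation in Figure~\ref{fig:term-interp}, exploiting that in $\REL$ every semantic operation used there has a completely explicit description on web elements (Section~\ref{sec:rel-model} and Section~\ref{sec:complements-REL}), and that each intersection typing rule of Figure~\ref{fig:int-typing-rules} is exactly the transcription of one such description. The guiding dictionary is: a judgment $\Tseqi{\Ttermi{x_1}{m_1}{A_1},\dots,\Ttermi{x_n}{m_n}{A_n}}MbB$ asserts membership $(m_1,\dots,m_n,b)\in\Psemrel M\Gamma$, and the side conditions attached to the rules (for instance $(m',(r,m))\in\Sdiff_{\Tsem A}$ in \Itrdiff, or $r=r_0+r_1$ in \Itrsum) are precisely the defining conditions of the matching relation in $\REL$.

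First I would treat the term statement, proceeding by induction on $M$ and \emph{not} on a typing derivation; this is legitimate because Theorem~\ref{th:sem-defined-sum} guarantees that $\Psemrel M\Gamma$ depends only on $M$, and it is essential for the sum case. For each constructor one reads off the relational form of the operation assigned to it and checks it coincides with the set of conclusions of the matching rule. The cleanest cases are the linear ``bureaucratic'' constructs $\Lprojd idN$, $\Linjd idN$, $\Lsumd dN$, $\Lflipdl dlN$: their interpretation is $\Sdfun^d h\Comp\Psemrel N\Gamma$ for a linear $h\in\Eset{\Sproj i,\Sin i,\Sdfmult,\Sflipl l}$, and the explicit form of $\Sdfun^d h$ in $\REL$ shows that it acts on a decorated element $\Tseqact\delta a$ by leaving the first $d$ bits of the access word fixed and applying $h$ at depth $d$ — exactly what the notation $\Tseqact\delta a$ and the side conditions of \Itrproj, \Itrinj, \Itrsum, \Itrcirc record. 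The rule \Itrapp encodes Kleisli composition in $\REL$, whose multiset summing is mirrored by the context sum $\sum_{j=0}^n\Phi^j$; the arithmetic constructs are handled by unfolding the descriptions of $\Ssuc$, $\Spred$, $\Sif$, $\Slet$ recalled in Section~\ref{sec:complements-REL}; and \Itrdiff is essentially definitional since it carries $\Sdiff_{\Tsem A}$ in its hypothesis.

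The two delicate points are the sum and the fixpoint. For $M=M_0+M_1$ the interpretation is $\Psemrel{M_0}\Gamma\cup\Psemrel{M_1}\Gamma$ (all morphisms of $\REL$ are summable with union as sum, Section~\ref{sec:rel-model}), so an element is interpreted iff it is interpreted by $M_0$ or by $M_1$; the rule \Itradd produces exactly this union, and the appeal to Theorem~\ref{th:sem-defined-sum} is what lets the induction hypothesis apply to $M_0$ and $M_1$ separately even when the simple typing of the sum goes through \Trprojt, \Trprojd or \Trlin. For $M=\Lfix N$ the interpretation is $\Sfix^{\Tsem A}\Comp\Psemrel N\Gamma=\sup_n\Sfix_n^{\Tsem A}\Comp\Psemrel N\Gamma$; since in $\REL$ the sup is a union and each approximant is finitely generated, one matches membership with the recursive rule \Itrfix by an inner induction on the approximant height, showing that any \Itrfix-typing lands in some $\Sfix_n$ and conversely. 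I expect this fixpoint/union coordination to be the main obstacle, as it is the only place where the finite-derivation structure of the intersection system must be reconciled with a genuine limit in the semantics.

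Finally, the stack and state statements follow. Since $\Stctx s$ is a linear context, $\Psemrel s{}=\Psemrel{\Stctx s[x]}{x:E}$ is a linear morphism, and a straightforward induction on $s$ shows $\Stseqi sfF\nu$ is derivable (Figure~\ref{fig:int-typing-rules-stacks}) iff $(f,\nu)\in\Psemrel s{}$: each stack constructor is typed by the very operation its context (Figure~\ref{fig:stack-ctx}) applies, so its rule transcribes that operation exactly as in the term case. For a well-typed state $c=\State\delta Ms$ we have $\Tofst c=\Stctx s[\Lproj\delta M]$, and $\Psemrel c{}$ is the linear composite of $\Psemrel s{}$ with the $\delta$-leaf of $\Psemrel M\Gamma$; the state rule, with premises $\Tseqi{}M{\Tseqact\delta f}{\Tdiffm dF}$ and $\Stseqi sfF\nu$, records precisely that $M$ outputs $f$ at word $\delta$ and that $s$ turns this single, linearly used, $f$ into $\nu$. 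Combining the term and stack equivalences then yields $\Etseqi c\nu$ iff $\nu\in\Psemrel c{}$.
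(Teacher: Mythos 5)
Your proposal is correct and follows essentially the same route as the paper: the paper disposes of this theorem in one line, as "a simple induction structured exactly as the proof of Theorem~\ref{th:sem-defined-sum}", i.e.\ precisely the rule-by-rule verification that each intersection typing rule transcribes the explicit relational description of the corresponding semantic operation, which is what you carry out. Your additional care for the two genuinely non-local cases — the sum (where, like the paper, you invoke Theorem~\ref{th:sem-defined-sum} to make the interpretation derivation-independent) and the fixpoint (handled via the union-of-approximants description of $\Sfix^X$ in $\REL$ with an inner induction) — fills in details the paper leaves implicit, but does not change the approach.
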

Of course \(\Psemrel M\Gamma\) is the interpretation of \(M\) in
\(\REL\), and similarly for stacks and states.

Given a typing context %
\(\Gamma=(x_1:A_1,\dots,x_n:A_n)\) and \(j\in\Eset{1,\dots,n}\) %
we define \(\Sdiffctx j\Gamma\) as the set of all triples %
\((\Phi',r,\Phi)\) where \(r\in\Into\), %
\(\Contca\Phi=\Contca{\Phi'}=\Gamma\) and, if we set %
\(\Phi=(x_i:m_i:A_i)_{i=1}^n\) and \(\Phi'=(x_i:m'_i:A'_i)_{i=1}^n\)
then we have
\begin{align*}
  \begin{cases}
  A'_i=\Tdiff A_i\text{ and }
  (m'_j,(r,m_j))\in\Sdiff_{\Tsem{A_j}} &\\
  A'_i=A_i\text{ and }m'_i=m_i&\text{ if }i\not=j\,.
  \end{cases}
\end{align*}

By Lemma~\ref{lemma:sem-diff-subst} we have %
\[
  \Psemrel{\Ldletv xM}{\Gamma,x:\Tdiff A} =\Sdfunpart1{\Psemrel
    M{\Gamma,x:A}} \in\Kl\cL(\With{\Tsemrel\Gamma}{\Sdfun{\Tsemrel
      A}},\Sdfun{\Tsemrel B})\,.
\]
If we rephrase this property in the model \(\REL\), we get the
following.

\begin{theorem}
  Assume that \(\Tseqi\Phi MbB\) and
  \((\Phi',r,\Phi)\in\Sdiffctx i\Gamma\). Then
  \(\Tseqi{\Phi'}{\Ldletv{x_i}M}{\Tseqact rb}{\Tdiff B}\).
\end{theorem}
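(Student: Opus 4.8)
The plan is to read this statement as the relational transcription of the semantic differential–substitution identity, so that it follows by feeding Lemma~\ref{lemma:sem-diff-subst} through the dictionary of Theorem~\ref{th:sem-intersection-equiv}. First I would fix notation: write $\Gamma=(x_1:A_1,\dots,x_n:A_n)$, $\Phi=(x_1:m_1:A_1,\dots,x_n:m_n:A_n)$ and $\Phi'=(x_1:m'_1:A'_1,\dots,x_n:m'_n:A'_n)$, and let $i$ be the distinguished index implicit in $(\Phi',r,\Phi)\in\Sdiffctx i\Gamma$, so that $A'_k=A_k$ and $m'_k=m_k$ for $k\neq i$, while $A'_i=\Tdiff{A_i}$ and $(m'_i,(r,m_i))\in\Sdiff_{\Tsem{A_i}}$. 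By Theorem~\ref{th:sem-intersection-equiv} the hypothesis $\Tseqi\Phi MbB$ is equivalent to $(\vec m,b)\in\Psemrel M\Gamma$, and the desired conclusion $\Tseqi{\Phi'}{\Ldletv{x_i}M}{\Tseqact rb}{\Tdiff B}$ is equivalent to $(\vec{m'},\Tseqact rb)\in\Psemrel{\Ldletv{x_i}M}\Delta$ where $\Delta=\Contca{\Phi'}$. Thus everything reduces to establishing this last membership from the hypotheses.

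Next I would invoke the general form of Lemma~\ref{lemma:sem-diff-subst} recorded immediately after its statement, namely, in $\Kl\REL$,
\[
  \Psemrel{\Ldletv{x_i}M}\Delta=\Sdfunpart i{\Psemrel M\Gamma}=\Sdfun(\Psemrel M\Gamma)\Comp\Sdfstr^i.
\]
It then remains to verify, by a purely relational computation, that the right-hand side relates $\vec m$ (with output $b$) to $\vec{m'}$ (with output $\Tseqact rb$) exactly under the stated conditions. For this I would substitute the explicit descriptions of $\Sdfstr^i$ and of $\Sdfun u$ in $\REL$ from Section~\ref{sec:complements-REL} and Section~\ref{sec:rel-model}, using the identification $\Tsemrel{\Tdiff B}=\Sdfun{\Tsemrel B}$ under which $\Tseqact rb$ corresponds to the pair $(r,b)\in\Into\times\Tsemrel B$ pushed to the sharp leaf, an identification already absorbed by Theorem~\ref{th:sem-intersection-equiv}.

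Unwinding the Kleisli composite, a witness for $(\vec{m'},(r,b))$ is a list of elements $(\rho_t,k_t,a_t)$ whose multiset feeds $\Sdfun(\Psemrel M\Gamma)$—forcing $(\sum_k\Mspmap k{m_k},b)\in\Psemrel M\Gamma$ and $r=\sum_t\rho_t$—and whose $\Sdfstr^i$-preimages reassemble $\vec{m'}$. By the three clauses defining $\Sdfstr^i$ in $\REL$, each coordinate $k_t\neq i$ forces $\rho_t=0$ and contributes to $m'_k=m_k$, whereas each coordinate $k_t=i$ contributes the pair $(\rho_t,a_t)$ to $m'_i$, with $m_i=\Mset{a_t\St k_t=i}$ and $r=\sum_{k_t=i}\rho_t$. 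This conjunction is precisely $(m'_i,(r,m_i))\in\Sdiff_{\Tsem{A_i}}$ by the explicit form of $\Sdiff_X$ in $\REL$, so membership in $\Sdfun(\Psemrel M\Gamma)\Comp\Sdfstr^i$ is equivalent to the conjunction of $(\vec m,b)\in\Psemrel M\Gamma$ and $(\Phi',r,\Phi)\in\Sdiffctx i\Gamma$, which closes the argument.

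The main obstacle is the bookkeeping in this last step: correctly tracking, in the composition $\Sdfun(\Psemrel M\Gamma)\Comp\Sdfstr^i$, how the single distinguished coordinate $i$ accumulates the tags $\rho_t$ into the $\Sdiff$-condition while all other coordinates are pinned to tag $0$, and checking that the $\Into$-labelling produced by $\Sdfun$ matches the leaf-level labelling encoded by $\Tseqact r{(-)}$. Everything else is a direct application of the two cited results.
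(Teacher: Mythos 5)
Your proposal is correct and follows essentially the same route as the paper: the paper derives this theorem by stating the semantic differential identity $\Psemrel{\Ldletv{x_i}M}{\Delta}=\Sdfunpart i{\Psemrel M\Gamma}$ (the generalized form of Lemma~\ref{lemma:sem-diff-subst}) and then simply invoking Theorem~\ref{th:sem-intersection-equiv} to rephrase it in the intersection typing system, which is exactly your two-step reduction. The only difference is that you explicitly carry out the relational unwinding of $\Sdfun(\Psemrel M\Gamma)\Comp\Sdfstr^i$ showing that the $\Sdiffctx i\Gamma$ condition matches $\Sdiff_{\Tsem{A_i}}$ — a verification the paper leaves implicit in the word ``rephrase'' — and your computation of it is accurate.
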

\begin{proof}
  By Theorem~\ref{th:sem-intersection-equiv}.
\end{proof}

We set \(\Ldletvm 0xM=M\) and
\(\Ldletvm{d+1}xM=\Ldletv x{\Ldletvm dxM}\).

\subsubsection{Normalization} %
\label{sec:int-seq-normalization}
Given \(\nu\in\Nat\) we define %
\(\Itintc\nu\) as the set of all well-typed states \(c\) such that %
\(
\Mset c
\Rel{\Trcl{\Rsred{\Msrs\States}}}
C+\Mset{\State{\Pempty}{\Num\nu}{\Stempty}}
\) %
for some \(C\in\Mfin{\Rsca\States}\).

With any type \(A\) and \(a\in\Tsemrel A\), we associate a set %
\(\Itintt aA\) of terms \(M\) such that \(\Tseq{}MA\).
If \(F\) is a sharp type, \(f\in\Tsemrel F\) and \(\nu\in\Nat\) we also
define a set %
\(\Itints fF\nu\) of stacks \(s\) such that \(\Stseq sF\).
The definition is by mutual induction on types, and more precisely on
the number of ``\(\Timpl{}{}\)'' in types. Remember that \(\Rev\delta\) is
the reversed word of \(\delta\).

\begin{itemize}
\item \(
  \Itintt{\Tseqact\delta f}{\Tdiffm dF}
  =\Eset{M\St \Tseq{}M{\Tdiffm dF}\text{ and }
    \forall\nu\in\Nat\,\forall s\in\Itints fF\nu\
    \State{\Rev\delta}Ms\in\Itintc\nu}
  \)
\item \(
  \Itints{\kappa}\Tnat\nu
  =\Eset{s\St \Stseq s\Tnat\text{ and }
    \State{\Pempty}{\Num\kappa}{s}\in\Itintc\nu}
  \)
\item let \(m\in\Mfin{\Tsemrel A}\) and \(f\in\Tsemrel F\), then %
  \begin{align*}
    \Itints{(m,f)}{\Timpl AF}\nu
    &=\{\Stdiff{r_1}{\cdots\Stdiff{r_d}{\Starg Ms}}\St
      d\in\Nat,\,\List r1d\in\Into,\\ 
    &\quad\quad \exists(m_i\in\Mfin{\Tsemrel{\Tdiffm iA}})_{i=0}^d\ m_0=m\\
    &\quad\quad \text{and }
      ((m_i,(r_i,m_{i-1}))\in\Sdiff_{\Tsemrel{\Tdiffm{i-1}{A}}})_{i=1}^d,\\
    &\quad\quad M\in\Itintt{m_d}{\Tdiffm dA}
      \text{ and } s\in\Itints fF\nu\}.
  \end{align*}
\end{itemize}
In this definition, we use the following notation: if
\(m=\Mset{\List a1k}\in\Mfin{\Tsemrel A}\) with %
\(\List a1n\in\Tsemrel A\), then %
\( \Itintt mA=\bigcap_{j=1}^k\Itintt{a_j}A \).

Given an intersection typing context \(\Phi=(x_i:m_i:A_i)_{i=1}^k\) we
define %
\(\Itintctx\Phi\) as the set of all substitutions %
\(\Vect P=(P_1/x_1,\dots,P_k/x_k)\) where %
\((P_i\in\Itintt{m_i}{A_i})_{i=1}^k\).  Given also \(b\in\Tsemrel B\) %
we define a set \(\Itopenn\Phi bBn\) of terms \(M\) such that %
\(\Tseq{\Contca\Phi}{M}{B}\), the definition is by induction on \(n\):
\begin{align*}
  \Itopenn\Phi bB0
  &=\{M\St\Tseq{\Contca\Phi}{M}{B}
  \text{ and }
    \forall\Vect P\in\Itintctx\Phi\ \Substbis M{\Vect P}\in\Itintt bB\}\\
  \Itopenn\Phi bB{n+1}
  &=\{M\St\Tseq{\Contca\Phi}{M}{B}\text{ and }
    \forall r\in\Into\,
    \forall i\in\Eset{1,\dots,k},
    \\
  &\Textsep\forall\Phi'\ (\Phi',(r,\Phi))\in\Sdiffctx i{\Contca\Phi}
  \Implies\Ldletv{x_i}{M}\in\Itopenn{\Phi'}{\Tseqact rb}{\Tdiff B}n\}
\end{align*}
and last we set
\begin{align*}
  \Itopen\Phi bB=\bigcap_{n\in\Nat}\Itopenn\Phi bBn\,.
\end{align*}

\begin{remark}
  Given \(M\), \(\Gamma\) and \(B\) such that \(\Tseq\Gamma MB\),
  saying that \(M\in\Itopenn\Phi bB{n}\) for all \(\Phi\) such that
  \(\Contca\Phi=\Gamma\) and all \(b\in\Tsemrel B\) intuitively means
  that \(M\) is \(n\) times differentiable with respect to its free
  variables, and \(M\in\Itopen\Phi bB\) for all \(\Phi\) and \(b\)
  means that \(M\) is infinitely differentiable. This definition bears
  some similarity with the definition of \(C^\infty\) maps in
  Analysis.
\end{remark}

\begin{lemma}%
  \label{lemma:sdiff-split}
  Let \(X\) be a set, let \(r\in\Into\) and let %
  \((m',(r,m))\in\Sdiff_X\). %
  Let \(\List m1k\in\Mfin X\) be such that \(m=m_1+\cdots+m_k\). %
  There are \(\List r1k\in\Into\) and \(\List{m'}{1}{k}\in\Mfin X\)
  such that %
  \(r=r_1+\cdots+r_k\), \(m'=m'_1+\cdots+m'_k\) and %
  \(((m'_i,(r_i,m_i))\in\Sdiff_X)_{i=1}^k\).
\end{lemma}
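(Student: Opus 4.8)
The plan is to exploit the explicit description of $\Sdiff_X$ in $\REL$ recorded in Section~\ref{sec:rel-model}, namely that $\Sdiff_X$ consists exactly of the pairs $((m,\Emptymset),(0,m))$ for $m\in\Mfin X$ together with the pairs $((m,\Mset a),(1,m+\Mset a))$ for $m\in\Mfin X$ and $a\in X$, under the bijection $\Excl{\Scfun X}\simeq\Mfin X\times\Mfin X$ that separates the $0$-tagged elements from the $1$-tagged ones. The crucial feature of this description is that it makes the value $r\in\Into$ rigid: $r=0$ forces every element recorded in $m'$ to carry the tag $0$, whereas $r=1$ forces $m'$ to contain exactly one $1$-tagged element. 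I would therefore argue by case analysis on $r$.

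If $r=0$, then by the description above $m'=(m,\Emptymset)$, that is, $m'$ is entirely $0$-tagged and projects to $m$. Given the decomposition $m=m_1+\cdots+m_k$, I would simply set $r_i=0$ and $m'_i=(m_i,\Emptymset)$ for each $i$. Then $m'=m'_1+\cdots+m'_k$, we have $r=0=r_1+\cdots+r_k$ in $\Into$, and each $(m'_i,(0,m_i))$ lies in $\Sdiff_X$ by the first clause, as required.

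If $r=1$, then $m'=(n,\Mset a)$ for some $n\in\Mfin X$ and $a\in X$ with $m=n+\Mset a$. The distinguished occurrence of $a$ must land in one of the parts: since $m(a)=\sum_i m_i(a)\ge 1$, there is an index $i_0$ with $a\in\Supp{m_{i_0}}$, and I would write $m_{i_0}=n_{i_0}+\Mset a$. I would then set $r_{i_0}=1$, $m'_{i_0}=(n_{i_0},\Mset a)$, and $r_i=0$, $m'_i=(m_i,\Emptymset)$ for $i\ne i_0$. A short multiset computation shows that the $0$-tagged part of $m'_1+\cdots+m'_k$ is $n_{i_0}+\sum_{i\ne i_0}m_i=m-\Mset a=n$ while its $1$-tagged part is $\Mset a$, so that $m'_1+\cdots+m'_k=m'$; moreover $r_1+\cdots+r_k=1=r$ in $\Into$ because only $r_{i_0}$ is nonzero; and each $(m'_i,(r_i,m_i))$ lies in $\Sdiff_X$, by the second clause when $i=i_0$ and by the first clause otherwise.

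I do not expect a serious obstacle here: the statement is essentially a bookkeeping fact about how the single $\Into$-tag carried by $\Sdiff_X$ distributes across a multiset decomposition. The only point requiring care is that sums of elements of $\Into$ are taken \emph{in} $\Into$, so at most one summand may equal $1$; this is exactly what the rigidity of $r$ in the two-clause description guarantees. The $r=1$ case is where this matters, and choosing the unique part $i_0$ that receives the $1$-tagged occurrence is precisely what keeps $r_1+\cdots+r_k$ inside $\Into$.
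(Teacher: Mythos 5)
Your proof is correct. It takes a somewhat different route from the paper's own argument: you case-split on \(r\in\Into\) and work with the two-clause description of \(\Sdiff_X\) under the bijection \(\Excl{\Scfun X}\simeq\Mfin X\times\Mfin X\), namely \(\Sdiff_X=\Eset{((m,\Emptymset),(0,m))\St m\in\Mfin X}\cup\Eset{((m,\Mset a),(1,m+\Mset a))\St m\in\Mfin X,\ a\in X}\), and in the case \(r=1\) you choose the part \(m_{i_0}\) of the decomposition that absorbs the distinguished occurrence of \(a\). The paper instead uses the equivalent tagged-multiset description \(m'=\Mset{(r'_j,a_j)\St j\in J}\) with \(r=\sum_{j\in J}r'_j\): it partitions the index set \(J\) into \((J_i)_{i=1}^k\) matching \(m=m_1+\cdots+m_k\), sets \(m'_i=\Mset{(r'_j,a_j)\St j\in J_i}\) and \(r_i=\sum_{j\in J_i}r'_j\), and the only thing left to verify is that each partial sum \(r_i\) lies in \(\Into\), which holds because the total \(r\) does. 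The two arguments have the same mathematical content --- the single possible \(1\)-tag must land in exactly one part of the decomposition --- but the paper's is uniform in \(r\) (no case analysis and no need to locate a distinguished element), whereas yours makes the placement of the \(1\)-tag fully explicit; the paper additionally remarks that the statement is, categorically, a consequence of \(\Sdiffca\in\REL(\Into,\Excl\Into)\) being a \(\oc\)-coalgebra structure, a reading that neither direct verification actually needs. There is no gap in your argument.
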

\begin{proof}
  Categorically this is a simple consequence of the fact that the
  morphism \(\Sdiffca\in\REL(\Into,\Excl\Into)\) of
  Section~\ref{sec:rel-model} is a \(\oc\)-coalgebra structure, and
  can also be checked directly as we do now. Setting
  \(m=\Mset{a_j\St j\in J}\), the assumption %
  \((m',(r,m))\in\Sdiff_X\) means that %
  \(m'=\Mset{(r'_j,a_j)\St j\in J}\) with \((r'_j\in\Into)_{j=1}^k\)
  and %
  \(r=\sum_{j\in J}r'_j\). Since \(m=m_1+\cdots+m_k\) we can find
  pairwise disjoint \((J_i)_{i=1}^k\) with \(J=\bigcup_{i=1}^kJ_i\)
  and \(m_i=\Mset{a_j\St j\in J_i}\). Then we take
  \(m'_i=\Mset{(r'_j,a_j)\St j\in J_i}\) and
  \(r_i=\sum_{j\in J_i}r'_i\) for \(i=1,\dots,k\). We have
  \(r_i\in\Into\) because \(\sum_{j\in J}r'_j=r\in\Into\).
\end{proof}

Now we prove a series of lemmas %
(from~\ref{lemma:itopenn-flip} to~\ref{lemma:itopenn-abst})
which express that the sets
\(\Itopenn\Phi bBn\) have some stability properties with respect to
the syntactic constructs of the language. They will make the proof of
Theorem~\ref{th:cdpcf-adequacy} essentially trivial.

\begin{remark}
  These lemmas are proven by induction on \(n\) (the superscript in
  \(\Itopenn\Phi bBn\)) and it is essential to notice that the
  statement we prove by induction on \(n\) is \emph{universally
    quantified} on \(\Phi\), \(B\) and \(b\) because, when proving the
  implication for \(M\), we have to apply the inductive hypothesis to
  \(\Ldletv xM\) for all the variables \(x\) in the context.
\end{remark}


\begin{lemma}%
  \label{lemma:itopenn-flip}
  If
  \(
  M\in\Itopenn{\Phi}{\Tseqact{\delta\alpha}{b}}{\Tdiffm{d+h+2}{B}}n
  \) with %
  \(d=\Len\delta\) and \(h+2=\Len\alpha\) then we have %
  \(
  \Lflipdl dhM
  \in\Itopenn{\Phi}{\Tseqact{\delta\Rcycle\alpha}{b}}{\Tdiffm{d+h}{B}}n
  \).
\end{lemma}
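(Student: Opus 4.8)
The plan is to argue by induction on $n$, keeping $\Phi$, $B$, $b$, $\delta$ and $\alpha$ universally quantified throughout, so that the inductive hypothesis can be reinstantiated at a larger context $\Phi'$ and a longer prefix. Two syntactic commutations drive the argument. First, ordinary substitution commutes with the flip, $\Substbis{\Lflipdl dhM}{\Vect P}=\Lflipdl dh{\Substbis M{\Vect P}}$, since $\Lflipdl dh{}$ binds no variables. Second, by Figure~\ref{fig:diff-subst} differential substitution satisfies $\Ldletv{x}{\Lflipdl dhM}=\Lflipdl{d+1}h{\Ldletv xM}$. Note also that the construct $M\mapsto\Lflipdl dhM$ preserves the type $\Tdiffm{d+h+2}B$ by rule $\Trcirc$, so only the point changes, through the circular permutation $\Rcycle$ acting on the segment $\alpha$.

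For the base case $n=0$, I would first reduce the claim to a statement about the reducibility sets $\Itintt{\cdot}{\cdot}$. Given $\Vect P\in\Itintctx\Phi$, the commutation for substitution makes it enough to show $\Lflipdl dh N\in\Itintt{\Tseqact{\delta\Rcycle\alpha}{b}}{\Tdiffm{d+h+2}B}$ whenever $N=\Substbis M{\Vect P}\in\Itintt{\Tseqact{\delta\alpha}{b}}{\Tdiffm{d+h+2}B}$. Unfolding $\Itintt{\cdot}{\cdot}$ (writing $B=\Tdiffm mF$ with $F$ sharp and $b=\Tseqact\beta f$, so that the point is $\Tseqact{\delta\Rcycle\alpha\beta}f$), I fix $\nu$ and a stack $s\in\Itints fF\nu$ and must place the state with access word $\Rev{(\delta\Rcycle\alpha\beta)}$ into $\Itintc\nu$. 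The machine's flip rule $\State{\epsilon\alpha'\delta'}{\Lflipdl dhM}{s}\Stred\State{\epsilon\Rcycle{\alpha'}\delta'}{M}{s}$ of Figure~\ref{fig:state-reduction} applies after splitting this word as $\epsilon\,\alpha'\,\delta'$ with $\Len{\delta'}=d$ and $\Len{\alpha'}=h+2$, namely $\delta'=\Rev\delta$, $\alpha'=\Rev{\Rcycle\alpha}$, $\epsilon=\Rev\beta$; one step takes it to the state with access word $\Rev\beta\,\Rcycle{(\Rev{\Rcycle\alpha})}\,\Rev\delta$. Since $\Itintc\nu$ is closed under anti-reduction along $\Stred$ (a single deterministic step lifts to $\Mset c\Rel{\Rsred{\Msrs\States}}\Mset{c'}$), this membership follows from $N\in\Itintt{\Tseqact{\delta\alpha}b}{\Tdiffm{d+h+2}B}$ provided the reduct is exactly the state for the point $\Tseqact{\delta\alpha}b$, i.e. provided its access word equals $\Rev{(\delta\alpha\beta)}=\Rev\beta\,\Rev\alpha\,\Rev\delta$.

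The inductive step $n+1$ is where the hypothesis is used twice. Fixing $r\in\Into$, an index $i$, and $\Phi'$ with $(\Phi',(r,\Phi))\in\Sdiffctx i{\Contca\Phi}$, I must show $\Ldletv{x_i}{\Lflipdl dhM}\in\Itopenn{\Phi'}{\Tseqact r{(\Tseqact{\delta\Rcycle\alpha}b)}}{\Tdiffm{(d+1)+h+2}B}{n}$. By the differential-substitution commutation the left-hand term is $\Lflipdl{d+1}h{\Ldletv{x_i}M}$. The membership of $M$ at level $n+1$ gives $\Ldletv{x_i}M\in\Itopenn{\Phi'}{\Tseqact r{(\Tseqact{\delta\alpha}b)}}{\Tdiffm{(d+1)+h+2}B}{n}$, and since $\Tseqact r{(\Tseqact{\delta\alpha}b)}=\Tseqact{(r\delta)\alpha}b$ with $\Len{r\delta}=d+1$, I can invoke the lemma's inductive hypothesis at level $n$ with $\delta$ replaced by $r\delta$. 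This yields $\Lflipdl{d+1}h{\Ldletv{x_i}M}\in\Itopenn{\Phi'}{\Tseqact{(r\delta)\Rcycle\alpha}b}{\Tdiffm{(d+1)+h+2}B}{n}$, which is exactly the required membership after observing $(r\delta)\Rcycle\alpha=r(\delta\Rcycle\alpha)$.

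The main obstacle is the combinatorial bookkeeping of the base case: checking that the machine's flip step, read through the reversal $\Rev{(\cdot)}$ relating semantic access words to machine access words, produces precisely the un-permuted word. Concretely this rests on the identity $\Rcycle{(\Rev{\Rcycle\alpha})}=\Rev\alpha$, which I would verify by a one-line computation on $\alpha=\Tuple{i_1,\dots,i_{h+2}}$, together with confirming that the split $\Rev\beta\,\Rev{\Rcycle\alpha}\,\Rev\delta$ matches the pattern $\epsilon\alpha'\delta'$ demanded by the rule, so that $\Len{\delta'}=d$ equals the subscript of the flip and $\Len{\alpha'}=h+2$. Everything else is routine once these length and permutation identities are in place.
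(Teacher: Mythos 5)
Your proposal is correct and follows essentially the same route as the paper's proof: induction on \(n\) with all parameters universally quantified, a base case that fires the machine's flip rule and uses closure of \(\Itintc\nu\) under anti-reduction, and an inductive step that uses the commutation \(\Ldletv x{\Lflipdl dhM}=\Lflipdl{d+1}h{\Ldletv xM}\) together with the hypothesis reinstantiated at \(r\delta\). Your key word identity \(\Rcycle{(\Rev{\Rcycle\alpha})}=\Rev\alpha\) is exactly the computation the paper performs via \(\Rev{\Rcycle\alpha}=\Lcycle{(\Rev\alpha)}\) and \(\Rcycle{\Lcycle{(\Rev\alpha)}}=\Rev\alpha\).
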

\begin{proof}
  By induction on \(n\). For \(n=0\): let \(\Vect P\in\Itintctx\Phi\)
  and let us set \(M'=\Substbis M{\Vect P}\). We can write %
  \(B=\Tdiffm eF\) where \(F\) is sharp and %
  \(b=\Tseqact\epsilon f\) where \(\Len\epsilon=e\) and
  \(f\in\Tsemrel F\). %
  Let \(\nu\in\Nat\). %
  Let \(s\in\Itints fF\nu\), we have %
  \begin{align*}
  \State{\Rev{\delta\Rcycle\alpha\epsilon}}{\Lflipdl dh{M'}}{s}
  &=\State{\Rev{\epsilon}\Lcycle{(\Rev\alpha)}\Rev\delta}
  {\Lflipdl dh{M'}}{s}\\
  &\Rel\Stred
  \State{\Rev\epsilon\Rcycle{\Lcycle{(\Rev\alpha)}}\Rev\delta}
  {M'}{s}\\
  &=\State{\Rev{\delta\alpha\epsilon}}{M'}{s}\in\Itintc\nu
  \end{align*}
  since %
  \(M\in\Itopenn{\Phi}{\Tseqact{\delta\alpha\epsilon}f}
  {\Tdiffm{d+h+e+2}F}{0}\).

  For the inductive step, we assume that the implication holds for
  \(n\) and we prove it for \(n+1\) so assume that %
  \(
  M\in\Itopenn{\Phi}{\Tseqact{\delta\alpha}{b}}{\Tdiffm{d+h+2}{B}}{n+1}
  \) and let %
  \(r\in\Into\), \(l\in\Eset{1,\dots,\Len\Phi}\) and %
  \(\Phi'\) be such that %
  \((\Phi',r,\Phi)\in\Sdiffctx{l}{\Contca\Phi}\), we have
  \( \Ldletv{x_l}M
  \in\Itopenn{\Phi'}{\Tseqact{r\delta\alpha}b}{\Tdiffm{d+h+3}b}{n} \)
  and hence, by inductive hypothesis, %
  \( \Ldletv{x_l}{\Lflipdl dhM} =\Lflipdl{d+1}h{\Ldletv{x_l}M}
  \in\Itopenn{\Phi'}{\Tseqact{r\delta\Rcycle\alpha}b}{\Tdiffm{d+h+3}b}{n}
  \). %
  Since we have this property for all \(l\) and \(r,\Phi'\) such that
  \((\Phi',r,\Phi)\in\Sdiffctx{l}{\Contca\Phi}\), we have proven
  that %
  \( \Lflipdl dhM
  \in\Itopenn{\Phi}{\Tseqact{\delta\Rcycle\alpha}b}{\Tdiffm{d+h+2}B}{n+1}
  \) as required.
\end{proof}

\begin{lemma}%
  \label{lemma:itopenn-inj}
  If \(M\in\Itopenn\Phi{\Tseqact\delta b}{\Tdiffm dB}{n}\) and
  \(r\in\Into\) then %
  \(\Linjd rdM\in\Itopenn{\Phi}{\Tseqact{\delta r}{b}}{\Tdiffm{d+1}B}{n}\) %
  where \(d=\Len\delta\).
\end{lemma}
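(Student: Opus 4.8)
The plan is to prove the statement by induction on \(n\), mirroring the structure of the proof of Lemma~\ref{lemma:itopenn-flip}. Before starting the induction I would normalise the types: writing \(B=\Tdiffm eF\) with \(F\) sharp and \(b=\Tseqact\epsilon f\) where \(\Len\epsilon=e\) and \(f\in\Tsemrel F\), and using the composition identity \(\Tseqact\gamma{(\Tseqact{\gamma'}a)}=\Tseqact{\gamma\gamma'}a\) (a direct consequence of the inductive definition of the action \(\Tseqact{\cdot}{\cdot}\)), the hypothesis becomes \(M\in\Itopenn\Phi{\Tseqact{\delta\epsilon}f}{\Tdiffm{d+e}F}n\) and the goal becomes \(\Linjd rdM\in\Itopenn{\Phi}{\Tseqact{\delta r\epsilon}f}{\Tdiffm{d+1+e}F}n\). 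The underlying typing \(\Tseq{\Contca\Phi}{\Linjd rdM}{\Tdiffm{d+1}B}\) is immediate from the rule~\Trinj.

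For the base case \(n=0\) I would fix \(\Vect P\in\Itintctx\Phi\) and set \(M'=\Substbis M{\Vect P}\); since substitution commutes with the injection constructor, \(\Substbis{(\Linjd rdM)}{\Vect P}=\Linjd rd{M'}\), so it suffices to show \(\Linjd rd{M'}\in\Itintt{\Tseqact{\delta r\epsilon}f}{\Tdiffm{d+1+e}F}\). Fixing \(\nu\in\Nat\) and \(s\in\Itints fF\nu\), the relevant state has access word \(\Rev{\delta r\epsilon}=\Rev\epsilon\,r\,\Rev\delta\), and because \(\Len{\Rev\delta}=d\) the matching injection rule of Figure~\ref{fig:state-reduction} fires with consumed bit \(i=r\):
\[
  \State{\Rev\epsilon\,r\,\Rev\delta}{\Linjd rd{M'}}{s}
  \Stred
  \State{\Rev\epsilon\,\Rev\delta}{M'}{s}
  =\State{\Rev{\delta\epsilon}}{M'}{s}\,.
\]
By the hypothesis \(M'\in\Itintt{\Tseqact{\delta\epsilon}f}{\Tdiffm{d+e}F}\), the reduct lies in \(\Itintc\nu\); I would then conclude that the source state is also in \(\Itintc\nu\) using that this set is closed under anti-reduction along a single \(\Stred\)-step, which holds because such a step lifts to \(\Mset{\cdot}\Rel{\Rsred{\Msrs\States}}\Mset{\cdot}\) by the singleton rule of Section~\ref{sec:ms-rewriting} and this composes in front of the \(\Trcl{\Rsred{\Msrs\States}}\)-reduction witnessing membership.

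For the inductive step I would assume the statement at level \(n\) and take \(M\in\Itopenn\Phi{\Tseqact\delta b}{\Tdiffm dB}{n+1}\). Given \(r'\in\Into\), \(i\in\Eset{1,\dots,\Len\Phi}\) and \(\Phi'\) with \((\Phi',r',\Phi)\in\Sdiffctx i{\Contca\Phi}\), the differential-of-a-context equation \(\Ldletv{x_i}{\Linjd rdM}=\Linjd r{d+1}{\Ldletv{x_i}M}\) of Figure~\ref{fig:diff-subst} reduces the goal to the behaviour of \(\Ldletv{x_i}M\). Unfolding the definition of \(\Itopenn{\cdot}{\cdot}{\cdot}{n+1}\) for \(M\) and using \(\Tseqact{r'}{(\Tseqact\delta b)}=\Tseqact{r'\delta}b\) gives \(\Ldletv{x_i}M\in\Itopenn{\Phi'}{\Tseqact{r'\delta}b}{\Tdiffm{d+1}B}n\); applying the induction hypothesis to this term with the word \(r'\delta\) of length \(d+1\) and the same bit \(r\) yields \(\Linjd r{d+1}{\Ldletv{x_i}M}\in\Itopenn{\Phi'}{\Tseqact{r'\delta r}b}{\Tdiffm{d+2}B}n\), which by \(\Tseqact{r'\delta r}b=\Tseqact{r'}{(\Tseqact{\delta r}b)}\) is exactly the required membership. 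Since \(r',i,\Phi'\) are arbitrary, this gives \(\Linjd rdM\in\Itopenn{\Phi}{\Tseqact{\delta r}b}{\Tdiffm{d+1}B}{n+1}\).

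I expect the only genuinely delicate point to be the word-reversal bookkeeping in the base case: one must verify that, once the access word is presented as \(\Rev{\delta r\epsilon}=\Rev\epsilon\,r\,\Rev\delta\), the bit examined by the depth-\(d\) injection is precisely the chosen index \(r\), so that the nonzero rule \(\State{\epsilon i\delta}{\Linjd idM}s\Stred\State{\epsilon\delta}Ms\) applies and not the rule producing \(0\). Everything else is routine manipulation of the action \(\Tseqact{\cdot}{\cdot}\), the context-differential equations, and the anti-reduction closure, in complete parallel with Lemma~\ref{lemma:itopenn-flip}.
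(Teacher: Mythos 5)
Your proof is correct and follows essentially the same route as the paper's: induction on \(n\), with the base case unwinding the machine reduction \(\State{\Rev\epsilon r\Rev\delta}{\Linjd rd{M'}}{s}\Rel{\Rsred\States}\State{\Rev{\delta\epsilon}}{M'}{s}\) and the inductive step using the equation \(\Ldletv{x_l}{\Linjd rdM}=\Linjd r{d+1}{\Ldletv{x_l}M}\) together with the induction hypothesis at word \(r'\delta\). The only differences are presentational: you make explicit the closure of \(\Itintc\nu\) under anti-reduction and the action identity \(\Tseqact{\gamma}{(\Tseqact{\gamma'}a)}=\Tseqact{\gamma\gamma'}a\), both of which the paper uses silently.
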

\begin{proof}
  By induction on \(n\). For \(n=0\): let \(\Vect P\in\Itintctx\Phi\)
  and let us set \(M'=\Substbis M{\Vect P}\). We can write %
  \(B=\Tdiffm eF\) where \(F\) is sharp and %
  \(b=\Tseqact\epsilon f\) where \(\Len\epsilon=e\) and
  \(f\in\Tsemrel F\). %
  Let \(\nu\in\Nat\). %
  Let \(s\in\Itints fF\nu\), we have %
  \(
  \State{\Rev{\delta r\epsilon}}{\Linjd rd{M'}}{s}
  =\State{\Rev\epsilon r\Rev\delta}{\Linjd rd{M'}}{s}
  \Rel\Stred
  \State{\Rev\epsilon\Rev\delta}{M'}{s}
  =\State{\Rev{\delta\epsilon}}{M'}{s}\in\Itintc\nu
  \) %
  by our assumption about \(M\), hence %
  \(\Linjd rd{M'}\in\Itintt{\Tseqact{\delta r}b}{\Tdiffm{d+1}B}\)
  as required.

  For the inductive step we assume that %
  \(M\in\Itopenn\Phi{\Tseqact\delta b}{\Tdiffm{d}B}{n+1}\). %
  Let \(r'\in\Into\), \(l\in\Eset{1,\dots,\Len\Phi}\), and \(\Phi'\) be
  such that \((\Phi',r',\Phi)\in\Sdiffctx l{\Contca\Phi}\). %
  We have %
  \(
  \Ldletv{x_l}M\in\Itopenn{\Phi'}{\Tseqact{r'\delta}b}
  {\Tdiffm{d+1}B}{n}
  \) and hence, by inductive hypothesis, %
  \(
  \Ldletv{x_l}{\Linjd rdM}
  =\Linjd r{d+1}{\Ldletv{x_l}{M}}
  \in\Itopenn{\Phi'}{\Tseqact{r'\delta r}{b}}{\Tdiffm{d+2}B}{n}
  \) since \(\Len{r'\delta}=d+1\).
\end{proof}

\begin{lemma}%
  \label{lemma:itopenn-lsum}
  If %
  \(
  M\in\Itopenn\Phi{\Tseqact{\delta r_0r_1}{b}}{\Tdiffm{d+2}B}{n}
  \) %
  with \(d=\Len\delta\) then %
  \(
  \Lsumd dM\in\Itopenn\Phi{\Tseqact{\delta r}{b}}{\Tdiffm{d+2}B}{n}
  \) %
  if \(r=r_0+r_1\in\Into\).
\end{lemma}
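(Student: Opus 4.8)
The plan is to prove the statement by induction on $n$, exactly along the lines of Lemmas~\ref{lemma:itopenn-flip} and~\ref{lemma:itopenn-inj}, keeping $\Phi$, $B$, $b$, $\delta$, $r_0$, $r_1$ universally quantified so that the inductive hypothesis is available for the differentials $\Ldletv{x_l}M$. Since $\Lsumd dM$ has type $\Tdiffm{d+1}B$ (rule \Itrsum), the claim to establish is $\Lsumd dM\in\Itopenn\Phi{\Tseqact{\delta r}{b}}{\Tdiffm{d+1}B}{n}$, and throughout I shall use that substitution commutes with the construct, $\Lsumd d{\Substbis M{\Vect P}}=\Substbis{\Lsumd dM}{\Vect P}$, together with the identity $\Ldletv x{\Lsumd dM}=\Lsumd{d+1}{\Ldletv xM}$ from Figure~\ref{fig:diff-subst}.

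For the base case $n=0$ I would fix $\Vect P\in\Itintctx\Phi$, set $M'=\Substbis M{\Vect P}$, write $B=\Tdiffm eF$ with $F$ sharp and $b=\Tseqact\epsilon f$ with $\Len\epsilon=e$, and take $\nu\in\Nat$ and $s\in\Itints fF\nu$. The goal is $\State{\Rev{\delta r\epsilon}}{\Lsumd d{M'}}{s}\in\Itintc\nu$, where $\Rev{\delta r\epsilon}=\Rev\epsilon\, r\,\Rev\delta$ matches the left-hand side of the $\Lsumd d$ transition of Figure~\ref{fig:state-reduction} (leading block $\Rev\epsilon$, central bit $r$, trailing block $\Rev\delta$ of length $d$). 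The hypothesis gives $M'\in\Itintt{\Tseqact{\delta r_0r_1\epsilon}f}{\Tdiffm{d+e+2}F}$, i.e.\ $\State{\Rev\epsilon\, r_1r_0\,\Rev\delta}{M'}{s}\in\Itintc\nu$. If $r=0$ then $r_0=r_1=0$ and the machine takes a single step to $\State{\Rev\epsilon\,00\,\Rev\delta}{M'}{s}$, which is exactly this good state, so closure of $\Itintc\nu$ under anti-reduction along $\Rsred{\Msrs\States}$ finishes the case. If $r=1$, the step produces a sum, and $\Mset{\State{\Rev\epsilon\,1\,\Rev\delta}{\Lsumd d{M'}}{s}}$ reduces in $\Msrs\States$ to $\Mset{\State{\Rev\epsilon\,10\,\Rev\delta}{M'}{s},\ \State{\Rev\epsilon\,01\,\Rev\delta}{M'}{s}}$. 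The key observation is that $\Itintc\nu$ only requires \emph{some} copy of $\State{\Pempty}{\Num\nu}{\Stempty}$ to occur in the reduct, so it suffices that \emph{one} of these two states lie in $\Itintc\nu$; and as $r_0+r_1=1$, exactly one of $(r_0,r_1)$ is $(1,0)$ or $(0,1)$, which is precisely the pair selecting the good state $\State{\Rev\epsilon\,r_1r_0\,\Rev\delta}{M'}{s}$ provided by the hypothesis.

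For the inductive step I would assume the claim at level $n$ (for all parameters) and let $M\in\Itopenn\Phi{\Tseqact{\delta r_0r_1}b}{\Tdiffm{d+2}B}{n+1}$. Given $r'\in\Into$, an index $l$, and $\Phi'$ with $(\Phi',r',\Phi)\in\Sdiffctx l{\Contca\Phi}$, unfolding $\Itopenn{\cdot}{\cdot}{\cdot}{n+1}$ yields $\Ldletv{x_l}M\in\Itopenn{\Phi'}{\Tseqact{r'\delta r_0r_1}b}{\Tdiffm{d+3}B}{n}$. Applying the inductive hypothesis with the word $r'\delta$ (of length $d+1$) in place of $\delta$ gives $\Lsumd{d+1}{\Ldletv{x_l}M}\in\Itopenn{\Phi'}{\Tseqact{r'\delta r}b}{\Tdiffm{d+2}B}{n}$; since $\Ldletv{x_l}{\Lsumd dM}=\Lsumd{d+1}{\Ldletv{x_l}M}$, this is exactly the clause needed to conclude $\Lsumd dM\in\Itopenn\Phi{\Tseqact{\delta r}b}{\Tdiffm{d+1}B}{n+1}$.

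The main obstacle is the bookkeeping of access words: one must keep straight the reversal between the semantic access word $\gamma$ in $\Tseqact\gamma{\cdot}$ and the machine access word $\Rev\gamma$, and decompose $\Rev{\delta r\epsilon}=\Rev\epsilon\, r\,\Rev\delta$ so as both to match the $\Lsumd d$ rule and to identify which of the two $r=1$ reducts is selected by the hypothesis. Once this matching is in place, the only conceptual input is the "at least one good branch suffices" reading of $\Itintc\nu$, and everything else is routine anti-reduction and an application of the inductive hypothesis.
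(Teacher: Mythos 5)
Your proof is correct and follows essentially the same route as the paper's: induction on \(n\) with all parameters universally quantified, a base case driven by the \(\Lsumd d{}\) machine rule together with the ``one good summand suffices'' reading of \(\Itintc\nu\) (splitting on \(r=0\) versus \(r=1\)), and an inductive step using \(\Ldletv{x_l}{\Lsumd dM}=\Lsumd{d+1}{\Ldletv{x_l}M}\) and the inductive hypothesis applied to the word \(r'\delta\) of length \(d+1\). Your silent correction of the target type to \(\Tdiffm{d+1}B\) is consistent with what the paper's own proof actually establishes (the \(\Tdiffm{d+2}B\) in the statement is a typo).
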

\begin{proof}
  By induction on \(n\). For \(n=0\): let \(\Vect P\in\Itintctx\Phi\)
  and let us set \(M'=\Substbis M{\Vect P}\). We can write %
  \(B=\Tdiffm eF\) where \(F\) is sharp and %
  \(b=\Tseqact\epsilon f\) where \(\Len\epsilon=e\) and
  \(f\in\Tsemrel F\). %
  Let \(\nu\in\Nat\). %
  Let \(s\in\Itints fF\nu\). Assume first \(r=0\) so that \(r_0=r_1=0\),
  we have %
  \(
  \State{\Rev{\delta 0\epsilon}}{\Lsumd d{M'}}{s}
  \Rel\Stred
  \State{\Rev\delta00\epsilon}{M'}{s}\in\Itintc\nu
  \) %
  by our assumption about \(M\), hence %
  \(\Lsumd d{M'}\in\Itintt{\Tseqact{\delta 0}b}{\Tdiffm{d+1}B}\)
  as required. Assume now \(r=1\), we have
  \(
  \State{\Rev{\delta 1\epsilon}}{\Lsumd d{M'}}{s}
  \Rel\Stred
  \State{\Rev{\delta01\epsilon}}{M'}{s}
  +\State{\Rev{\delta10\epsilon}}{M'}{s}
  \) %
  and since one of these summands belongs to \(\Itintc\nu\) %
  by our assumption about \(M\), we get
  \(\State{\Rev{\delta 1\epsilon}}{\Lsumd d{M'}}{s}\in\Itintc\nu\).

  For the inductive step we assume that %
  \(M\in\Itopenn\Phi{\Tseqact{\delta r_0r_1}b}{\Tdiffm{d+2}B}{n+1}\). %
  Let \(r'\in\Into\), \(l\in\Eset{1,\dots,\Len\Phi}\), and \(\Phi'\) be
  such that \((\Phi',r',\Phi)\in\Sdiffctx l{\Contca\Phi}\). %
  We have %
  \(
  \Ldletv{x_l}M\in\Itopenn{\Phi'}{\Tseqact{r'\delta r_0r_1}b}
  {\Tdiffm{d+3}B}{n}
  \) and hence, by inductive hypothesis, %
  \(
  \Ldletv{x_l}{\Lsumd dM}
  =\Lsumd{d+1}{\Ldletv{x_l}{M}}
  \in\Itopenn{\Phi'}{\Tseqact{r'\delta r_0r_1}{b}}{\Tdiffm{d+3}B}{n}
  \) since \(\Len{r'\delta}=d+1\).  
\end{proof}

\begin{lemma}%
  \label{lemma:itopenn-proj}
  For all \(n\in\Nat\), %
  if \(r\in\Into\) and %
  \(M\in\Itopenn{\Phi}{\Tseqact{\delta r}b}{\Tdiffm{d+1}B}n\)
  then \(\Lprojd rdM\in\Itopenn{\Phi}{\Tseqact\delta b}{\Tdiffm dB}n\)
\end{lemma}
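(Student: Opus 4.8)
The plan is to prove the lemma by induction on $n$, following the very same pattern as Lemmas \ref{lemma:itopenn-flip}--\ref{lemma:itopenn-lsum}. As flagged in the remark preceding these lemmas, the statement to be carried through the induction must be read as universally quantified over $\Phi$, $B$, $b$, $\delta$, $d$ and $r$, since the inductive step pushes the differential substitution inside the projection and thereby shifts all of these data. Throughout I would write $B=\Tdiffm eF$ with $F$ sharp and $b=\Tseqact\epsilon f$, $\Len\epsilon=e$, $f\in\Tsemrel F$, and use repeatedly that access words concatenate under $\Tseqact{\alpha}{(\Tseqact\beta c)}=\Tseqact{\alpha\beta}c$ and that $\Tdiff{\Tdiffm dB}=\Tdiffm{d+1}B$.

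For the base case $n=0$ I would fix $\Vect P\in\Itintctx\Phi$, set $M'=\Substbis M{\Vect P}$, and for $\nu\in\Nat$ and $s\in\Itints fF\nu$ evaluate the state $\State{\Rev{\delta\epsilon}}{\Lprojd rd{M'}}{s}$. Since $\Rev{\delta\epsilon}=\Rev\epsilon\Rev\delta$ and the suffix $\Rev\delta$ has length $d$, equal to the superscript of the projection, the projection rule of Figure~\ref{fig:state-reduction} fires: $\State{\Rev\epsilon\Rev\delta}{\Lprojd rd{M'}}{s}\Rel\Stred\State{\Rev\epsilon r\Rev\delta}{M'}{s}=\State{\Rev{\delta r\epsilon}}{M'}{s}$, using $\Rev\epsilon r\Rev\delta=\Rev{\delta r\epsilon}$. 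The hypothesis $M\in\Itopenn\Phi{\Tseqact{\delta r}b}{\Tdiffm{d+1}B}0$ gives $M'\in\Itintt{\Tseqact{\delta r}b}{\Tdiffm{d+1}B}=\Itintt{\Tseqact{\delta r\epsilon}f}{\Tdiffm{d+1+e}F}$, hence $\State{\Rev{\delta r\epsilon}}{M'}{s}\in\Itintc\nu$; and since $\Itintc\nu$ is closed under backward single $\States$-steps (a step $c\Rel\Stred c'$ lifts to $\Mset c\Rel{\Rsred{\Msrs\States}}\Mset{c'}$), I conclude $\State{\Rev{\delta\epsilon}}{\Lprojd rd{M'}}{s}\in\Itintc\nu$, that is $\Lprojd rd{M'}\in\Itintt{\Tseqact\delta b}{\Tdiffm dB}$. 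As $\Vect P$ was arbitrary and $\Lprojd rdM$ is well typed by \Trprojo, this yields $\Lprojd rdM\in\Itopenn\Phi{\Tseqact\delta b}{\Tdiffm dB}0$.

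For the inductive step I would assume the claim at level $n$ and take $M\in\Itopenn\Phi{\Tseqact{\delta r}b}{\Tdiffm{d+1}B}{n+1}$. Fixing $r'\in\Into$, an index $l$ and $\Phi'$ with $(\Phi',r',\Phi)\in\Sdiffctx l{\Contca\Phi}$, the defining clause at level $n+1$ gives $\Ldletv{x_l}M\in\Itopenn{\Phi'}{\Tseqact{r'\delta r}b}{\Tdiffm{d+2}B}n$ (using $\Tseqact{r'}{(\Tseqact{\delta r}b)}=\Tseqact{r'\delta r}b$ and $\Tdiff{\Tdiffm{d+1}B}=\Tdiffm{d+2}B$). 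Using the commutation $\Ldletv{x_l}{\Lprojd rdM}=\Lprojd r{d+1}{\Ldletv{x_l}M}$ of Figure~\ref{fig:diff-subst} and applying the induction hypothesis to $\Ldletv{x_l}M$ with parameters $(\Phi',\,r'\delta,\,r,\,d+1,\,b,\,B)$, I would obtain $\Lprojd r{d+1}{\Ldletv{x_l}M}=\Ldletv{x_l}{\Lprojd rdM}\in\Itopenn{\Phi'}{\Tseqact{r'\delta}b}{\Tdiffm{d+1}B}n$, i.e. exactly $\Ldletv{x_l}{\Lprojd rdM}\in\Itopenn{\Phi'}{\Tseqact{r'}{(\Tseqact\delta b)}}{\Tdiff{(\Tdiffm dB)}}n$. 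Since $r',l,\Phi'$ are arbitrary and $\Tseq{\Contca\Phi}{\Lprojd rdM}{\Tdiffm dB}$ holds by \Trprojo, this establishes $\Lprojd rdM\in\Itopenn\Phi{\Tseqact\delta b}{\Tdiffm dB}{n+1}$.

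The argument is essentially bookkeeping; the only genuinely delicate points will be, in the base case, matching the length-$d$ suffix of the reversed access word to the projection superscript and getting the reversal identity $\Rev\epsilon r\Rev\delta=\Rev{\delta r\epsilon}$ right, and, in the inductive step, aligning the induction hypothesis with the shifted data $d\mapsto d+1$, $\delta\mapsto r'\delta$. This last point is precisely why the induction on $n$ has to be run with $\Phi$, $B$, $b$, $\delta$, $d$ and $r$ all universally quantified.
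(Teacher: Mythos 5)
Your proof is correct and follows essentially the same route as the paper's: induction on \(n\) with the base case handled by firing the projection rule of the machine on \(\State{\Rev{\delta\epsilon}}{\Lprojd rd{M'}}{s}\) and the inductive step handled via the commutation \(\Ldletv{x_l}{\Lprojd rdM}=\Lprojd r{d+1}{\Ldletv{x_l}M}\) together with the induction hypothesis at shifted parameters \(d+1\), \(r'\delta\). The only differences are presentational — you spell out the reversal identity, the backward closure of \(\Itintc\nu\) under single machine steps, and the universal quantification of the induction statement, all of which the paper leaves implicit (and indeed your base-case conclusion \(\Itintt{\Tseqact\delta b}{\Tdiffm dB}\) fixes a small typo in the paper, which writes \(\Tdiffm{d+1}B\) there).
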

\begin{proof}
  By induction on \(n\). For \(n=0\): let \(\Vect P\in\Itintctx\Phi\)
  and let us set \(M'=\Substbis M{\Vect P}\). We can write %
  \(B=\Tdiffm eF\) where \(F\) is sharp and %
  \(b=\Tseqact\epsilon f\) where \(\Len\epsilon=e\) and
  \(f\in\Tsemrel F\). %
  Let \(\nu\in\Nat\). %
  Let \(s\in\Itints fF\nu\), we have %
  \(
  \State{\Rev{\delta\epsilon}}{\Lprojd rd{M'}}{s}
  \Rel\Stred
  \State{\Rev{\delta r\epsilon}}{M'}{s}\in\Itintc\nu
  \) %
  by our assumption about \(M\), hence %
  \(\Lprojd rd{M'}\in\Itintt{\Tseqact{\delta}b}{\Tdiffm{d+1}B}\)
  as required.

  For the inductive step we assume that %
  \(M\in\Itopenn\Phi{\Tseqact{\delta r}b}{\Tdiffm{d+1}B}{n+1}\). %
  Let \(r'\in\Into\), \(l\in\Eset{1,\dots,\Len\Phi}\), and \(\Phi'\) be
  such that \((\Phi',r',\Phi)\in\Sdiffctx l{\Contca\Phi}\). %
  We have %
  \(
  \Ldletv{x_l}M\in\Itopenn{\Phi'}{\Tseqact{r'\delta r}b}
  {\Tdiffm{d+2}B}{n}
  \) and hence, by inductive hypothesis, %
  \(
  \Ldletv{x_l}{\Lprojd rdM}
  =\Lprojd r{d+1}{\Ldletv{x_l}{M}}
  \in\Itopenn{\Phi'}{\Tseqact{r'\delta}{b}}{\Tdiffm{d+1}B}{n}
  \) since \(\Len{r'\delta}=d+1\).
\end{proof}

\begin{lemma}%
  \label{lemma:itopenn-ldiff}
  If \(M\in\Itopenn\Phi{(m,b)}{\Timpl AB}{n}\) and %
  \((m',(r,m))\in\Sdiff_{\Tsemrel A}\) then %
  \(\Ldiff M\in\Itopenn\Phi{(m',\Tseqact rb)}
  {\Timpl{\Tdiff A}{\Tdiff B}}{n}\).
\end{lemma}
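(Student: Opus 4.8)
The plan is to argue by induction on $n$, following the template of Lemmas~\ref{lemma:itopenn-flip}--\ref{lemma:itopenn-proj}, and keeping in mind (as stressed in the remark preceding Lemma~\ref{lemma:itopenn-flip}) that the statement proved by induction on $n$ is universally quantified over $\Phi$, $A$, $B$, $b$, $m$, $m'$ and $r$. Throughout I write $B=\Tdiffm eF$ with $F$ sharp and $b=\Tseqact\epsilon f$ with $\Len\epsilon=e$ and $f\in\Tsemrel F$, so that $\Timpl AB=\Tdiffm e{\Timpl AF}$ and $\Timpl{\Tdiff A}{\Tdiff B}=\Tdiffm{e+1}{\Timpl{\Tdiff A}F}$ with $\Timpl AF$ and $\Timpl{\Tdiff A}F$ sharp; moreover $(m,b)=\Tseqact\epsilon{(m,f)}$ and $(m',\Tseqact rb)=\Tseqact{r\epsilon}{(m',f)}$, using $\Tseqact{\delta_1}{\Tseqact{\delta_2}a}=\Tseqact{\delta_1\delta_2}a$.

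For the base case $n=0$, fix $\Vect P\in\Itintctx\Phi$ and set $M'=\Substbis M{\Vect P}$, so that $\Substbis{(\Ldiff M)}{\Vect P}=\Ldiff{M'}$; I must show $\Ldiff{M'}\in\Itintt{\Tseqact{r\epsilon}{(m',f)}}{\Tdiffm{e+1}{\Timpl{\Tdiff A}F}}$, that is $\State{\Rev{r\epsilon}}{\Ldiff{M'}}s\in\Itintc\nu$ for all $\nu$ and all $s\in\Itints{(m',f)}{\Timpl{\Tdiff A}F}\nu$. Since $\Rev{r\epsilon}=\Rev\epsilon r$, the machine rule for $\Ldiff{}$ gives $\State{\Rev\epsilon r}{\Ldiff{M'}}s\Stred\State{\Rev\epsilon}{M'}{\Stdiff rs}$, and because $\Itintc\nu$ is closed under anti-reduction it suffices to prove $\State{\Rev\epsilon}{M'}{\Stdiff rs}\in\Itintc\nu$. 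This in turn follows from $M'\in\Itintt{\Tseqact\epsilon{(m,f)}}{\Tdiffm e{\Timpl AF}}$ as soon as $\Stdiff rs\in\Itints{(m,f)}{\Timpl AF}\nu$. The crux is exactly this last membership: unfolding $s\in\Itints{(m',f)}{\Timpl{\Tdiff A}F}\nu$ produces $s=\Stdiff{r_1}{\cdots\Stdiff{r_d}{\Starg Ns'}}$ with multisets $(m_i)_{i=0}^d$, $m_0=m'$, a chain $(m_i,(r_i,m_{i-1}))\in\Sdiff_{\Tsemrel{\Tdiffm iA}}$ (using $\Tdiffm i{\Tdiff A}=\Tdiffm{i+1}A$), $N\in\Itintt{m_d}{\Tdiffm{d+1}A}$ and $s'\in\Itints fF\nu$; prepending the hypothesis $(m',(r,m))\in\Sdiff_{\Tsemrel A}$ as a new first link (setting $\tilde m_0=m$, $\tilde m_i=m_{i-1}$ and first bit $r$) yields precisely the data witnessing $\Stdiff rs\in\Itints{(m,f)}{\Timpl AF}\nu$. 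This is a pure index shift; no appeal to Lemma~\ref{lemma:sdiff-split} is needed.

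For the inductive step, assume the statement at level $n$ and let $M\in\Itopenn\Phi{(m,b)}{\Timpl AB}{n+1}$. The typing $\Tseq{\Contca\Phi}{\Ldiff M}{\Timpl{\Tdiff A}{\Tdiff B}}$ holds by \Trdiff. Fixing $r'\in\Into$, $l\in\Eset{1,\dots,\Len\Phi}$ and $\Phi'$ with $(\Phi',r',\Phi)\in\Sdiffctx l{\Contca\Phi}$, the clause to establish is $\Ldletv{x_l}{\Ldiff M}\in\Itopenn{\Phi'}{(m',\Tseqact{r'r}b)}{\Timpl{\Tdiff A}{\Tdiffm 2B}}n$, where I have used $\Tseqact{r'}{(m',\Tseqact rb)}=(m',\Tseqact{r'r}b)$ and $\Tdiff{(\Timpl{\Tdiff A}{\Tdiff B})}=\Timpl{\Tdiff A}{\Tdiffm 2B}$. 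By Figure~\ref{fig:diff-subst}, $\Ldletv{x_l}{\Ldiff M}=\Lflipdl 00{\Ldiff{\Ldletv{x_l}M}}$. The differentiation clause for $M$ first gives $\Ldletv{x_l}M\in\Itopenn{\Phi'}{(m,\Tseqact{r'}b)}{\Timpl A{\Tdiff B}}n$; applying the induction hypothesis of the present lemma to $\Ldletv{x_l}M$ (with base $\Tdiff B$ and the same datum $(m',(r,m))\in\Sdiff_{\Tsemrel A}$) yields $\Ldiff{\Ldletv{x_l}M}\in\Itopenn{\Phi'}{(m',\Tseqact{rr'}b)}{\Timpl{\Tdiff A}{\Tdiffm 2B}}n$, the word $rr'$ arising from $\Tseqact r{\Tseqact{r'}b}=\Tseqact{rr'}b$. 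Finally, Lemma~\ref{lemma:itopenn-flip} applied with $\delta$ empty, $\alpha=rr'$ and $h=0$ converts $rr'$ into $\Rcycle{(rr')}=r'r$, giving $\Lflipdl 00{\Ldiff{\Ldletv{x_l}M}}\in\Itopenn{\Phi'}{(m',\Tseqact{r'r}b)}{\Timpl{\Tdiff A}{\Tdiffm 2B}}n$, which is the required membership.

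The main obstacle I anticipate is purely the bookkeeping of access words and their reversals in the base case, in particular checking under the identification $\Tdiffm i{\Tdiff A}=\Tdiffm{i+1}A$ that prepending the summability datum $(m',(r,m))\in\Sdiff_{\Tsemrel A}$ to the chain encoded in $s$ produces a genuine element of $\Itints{(m,f)}{\Timpl AF}\nu$. Once this is in hand, the remainder is a routine unfolding of definitions, the machine rule for $\Ldiff{}$, the syntactic identity $\Ldletv x{\Ldiff M}=\Lflipdl 00{\Ldiff{\Ldletv xM}}$, and the already-established Lemma~\ref{lemma:itopenn-flip}.
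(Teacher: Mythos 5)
Your proof is correct and follows essentially the same route as the paper's: the base case uses the machine step $\State{\Rev\epsilon r}{\Ldiff{M'}}s\Stred\State{\Rev\epsilon}{M'}{\Stdiff rs}$ together with the membership $\Stdiff rs\in\Itints{(m,f)}{\Timpl AF}\nu$ (which the paper asserts in one phrase and you justify by the same prepend-a-link index shift), and the inductive step uses $\Ldletv{x_l}{\Ldiff M}=\Lflip{\Ldiff{\Ldletv{x_l}M}}$, the lemma's own induction hypothesis applied to $\Ldletv{x_l}M$ at base type $\Tdiff B$, and Lemma~\ref{lemma:itopenn-flip} to turn $rr'$ into $r'r$, exactly as in the paper.
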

\begin{proof}
  By induction on \(n\). For \(n=0\): let \(\Vect P\in\Itintctx\Phi\)
  and let us set \(M'=\Substbis M{\Vect P}\). We can write %
  \(B=\Tdiffm eF\) where \(F\) is sharp and %
  \(b=\Tseqact\epsilon f\) where \(\Len\epsilon=e\) and
  \(f\in\Tsemrel F\). %
  Let \(\nu\in\Nat\). %
  Let \(s\in\Itints{(m',f)}{\Timpl{\Tdiff A}{F}}\nu\), we have %
  \(\Stdiff rs\in\Itints{(m,f)}{\Timpl AF}{\nu}\) since %
  \((m',(r,m))\in\Sdiff_{\Tsemrel A}\), and hence %
  \(\State{\Rev\epsilon}{M'}{\Stdiff rs}\in\Itintc\nu\) and therefore %
  \(\State{\Rev{r\epsilon}}{\Ldiff{M'}}{s}\in\Itintc\nu\) since %
  \(
  \State{\Rev{r\epsilon}}{\Ldiff{M'}}{s}
  =\State{\Rev{\epsilon}r}{\Ldiff{M'}}{s}
  \Rel\Stred\State{\Rev\epsilon}{M'}{\Stdiff rs}
  \). %
  It follows that %
  \(\Ldiff{M'}\in\Itintt{(m',\Tseqact{r\epsilon}F)}{\Tdiffm{e+1}F}\)
  as required.

  For the inductive step, assume that %
  \(M\in\Itopenn\Phi{(m,b)}{\Timpl AB}{n+1}\) and let %
  \(r'\in\Into\), \(l\in\Eset{1,\dots,\Len\Phi}\) and %
  \(\Phi'\) be such that %
  \((\Phi',r',\Phi)\in\Sdiffctx{l}{\Contca\Phi}\), we have
  \(
  \Ldletv{x_l}M
  \in\Itopenn{\Phi'}{(m,\Tseqact{r'}b)}{\Timpl{A}{\Tdiff B}}{n}
  \) %
  and hence, by inductive hypothesis, %
  \(
  \Ldiff{\Ldletv{x_l}M}
  \in\Itopenn{\Phi'}{(m',\Tseqact{rr'}b)}{\Timpl{A}{\Tdiffm2B}}{n}
  \) since %
  \((m',(r,m))\in\Sdiff_{\Tsemrel A}\), and hence %
  \(
  \Ldletv{x_l}{\Ldiff M}
  =\Lflip{\Ldiff{\Ldletv{x_l}M}}
  \in\Itopenn{\Phi'}{(m',\Tseqact{r'r}b)}{\Timpl{A}{\Tdiffm2B}}{n}
  \) by Lemma~\ref{lemma:itopenn-flip}. Since we have this property for all
  \(l\) and \(r',\Phi'\) such that 
  \((\Phi',r',\Phi)\in\Sdiffctx{l}{\Contca\Phi}\), we have proven that %
  \(
  \Ldiff M\in\Itopenn{\Phi}{(m',\Tseqact rb))}
  {\Timpl{\Tdiff A}{\Tdiff B}}{n+1}
  \) as required.
\end{proof}

\begin{lemma}%
  \label{lemma:itopenn-var}
  Given \(\Phi=(x_1:m_1:A_1,\dots,x_k:m_k:A_k)\) %
  and \(l\in\{1,\dots,k\}\) such that %
  \(m_j=\Emptymset\) if \(j\not=l\) and \(m_l=\Mset a\), one has %
  \(x_l\in\Itopenn{\Phi}{a}{A}{n}\).
\end{lemma}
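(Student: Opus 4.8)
The plan is to prove the statement by induction on $n$, keeping $\Phi$, the index $l$, the type $A=A_l$ and the element $a\in\Tsemrel{A_l}$ universally quantified, as the preceding remark stresses. Throughout, $\Contca\Phi=(x_1:A_1,\dots,x_k:A_k)$ and the ordinary variable rule $\Trvar$ gives $\Tseq{\Contca\Phi}{x_l}{A_l}$, which is the typing side-condition required in both clauses of the definition of $\Itopenn{\Phi}{a}{A_l}{n}$.

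For the base case $n=0$, I would take an arbitrary substitution $\Vect P=(P_1/x_1,\dots,P_k/x_k)\in\Itintctx\Phi$, so that $P_j\in\Itintt{m_j}{A_j}$ for each $j$. Since $m_l=\Mset a$ we have $P_l\in\Itintt{\Mset a}{A_l}=\Itintt{a}{A_l}$, and $\Substbis{x_l}{\Vect P}=P_l$; hence $\Substbis{x_l}{\Vect P}\in\Itintt{a}{A_l}$, which is exactly what membership in $\Itopenn{\Phi}{a}{A_l}{0}$ demands (the empty multisets carried by the other variables impose no constraint).

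For the inductive step, assume the statement at level $n$, for all contexts of the required shape, all types and all elements. Fix $r\in\Into$, $i\in\{1,\dots,k\}$ and $\Phi'$ with $(\Phi',r,\Phi)\in\Sdiffctx{i}{\Contca\Phi}$; I must show $\Ldletv{x_i}{x_l}\in\Itopenn{\Phi'}{\Tseqact{r}{a}}{\Tdiff{A_l}}{n}$. The shape of $\Ldletv{x_i}{x_l}$ is read off Figure~\ref{fig:diff-subst}, and I split on whether $i=l$. The one point that needs care is to extract, from the definition of $\Sdiffctx{i}{\Contca\Phi}$ together with the explicit description of $\Sdiff_{\Tsemrel{A_i}}$ in Section~\ref{sec:rel-model}, exactly which $r$ and $m'_i$ are allowed: when $i=l$ we have $m_i=\Mset a$, so $(m'_l,(r,\Mset a))\in\Sdiff_{\Tsemrel{A_l}}$ forces $m'_l=\Mset{\Tseqact{r}{a}}$ for an arbitrary $r\in\Into$ (under the identification of $\Tsemrel{\Tdiff{A_l}}$ with $\Into\times\Tsemrel{A_l}$ given by $\Tseqact{r}{c}\leftrightarrow(r,c)$); when $i\neq l$ we have $m_i=\Emptymset$, and $(m'_i,(r,\Emptymset))\in\Sdiff_{\Tsemrel{A_i}}$ forces $m'_i=\Emptymset$ and $r=0$.

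In the case $i=l$ we have $\Ldletv{x_l}{x_l}=x_l$, and $\Phi'$ is again of the required shape: its only non-empty component is $x_l:\Mset{\Tseqact{r}{a}}:\Tdiff{A_l}$. The inductive hypothesis, applied to $\Phi'$, the type $\Tdiff{A_l}$ and the element $\Tseqact{r}{a}$, yields $x_l\in\Itopenn{\Phi'}{\Tseqact{r}{a}}{\Tdiff{A_l}}{n}$, as wanted. In the case $i\neq l$ we have $r=0$ and $\Ldletv{x_i}{x_l}=\Linjd{0}{0}{x_l}$, while $\Phi'$ still has $x_l:\Mset a:A_l$ as its only non-empty component (the type of $x_i$ has become $\Tdiff{A_i}$, but $x_i$ now carries $\Emptymset$). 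The inductive hypothesis gives $x_l\in\Itopenn{\Phi'}{a}{A_l}{n}=\Itopenn{\Phi'}{\Tseqact{\Seqempty}{a}}{\Tdiffm{0}{A_l}}{n}$, and then Lemma~\ref{lemma:itopenn-inj} with $\delta=\Seqempty$ and $r=0$ gives $\Linjd{0}{0}{x_l}\in\Itopenn{\Phi'}{\Tseqact{0}{a}}{\Tdiff{A_l}}{n}$. Since $r$, $i$ and $\Phi'$ were arbitrary, this establishes $x_l\in\Itopenn{\Phi}{a}{A_l}{n+1}$ and closes the induction. The only delicate part is the bookkeeping just described on the side-conditions coming from $\Sdiffctx{i}{\Contca\Phi}$; once those are pinned down, each case reduces to the inductive hypothesis, together with Lemma~\ref{lemma:itopenn-inj} in the case $i\neq l$.
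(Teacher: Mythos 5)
Your proof is correct and follows essentially the same route as the paper's: induction on \(n\) with the context, type and element universally quantified, the base case handled by observing that \(\Substbis{x_l}{\Vect P}=P_l\in\Itintt{a}{A_l}\), and the inductive step split on \(i=l\) (where \(\Ldletv{x_l}{x_l}=x_l\) and the relational description of \(\Sdiff\) forces \(m'_l=\Mset{\Tseqact ra}\)) versus \(i\neq l\) (where \(m_i=\Emptymset\) forces \(r=0\), and \(\Ldletv{x_i}{x_l}=\Linj0{x_l}\) is handled by the inductive hypothesis together with Lemma~\ref{lemma:itopenn-inj}). The bookkeeping on \(\Sdiffctx{i}{\Contca\Phi}\) that you single out as the delicate point is exactly the same bookkeeping the paper performs.
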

\begin{proof}
  By induction on \(n\). For \(n=0\), let \(\Vect P\in\Itintctx\Phi\)
  so that \(P_l\in\Itintt a{A_l}\) so that %
  \(x_l\in\Itopenn{\Phi}{a}{A_l}{0}\). We prove the property for
  \(n+1\) assuming that it holds for \(n\). So with the notations of
  the lemma we must prove that \(x_l\in\Itopenn\Phi a{A_l}{n+1}\). Let
  \(j\in\Eset{1,\dots,k}\), \(r\in\Into\) and \(\Phi'\) be such that %
  \((\Phi',r,\Phi)\in\Sdiffctx j{\Contca\Phi}\). We set
  \(\Phi'=(x_p:m'_p:A'_p)_{p=1}^k\), we must prove that %
  \(\Ldletv{x_j}{x_l}\in\Itopenn{\Phi'}{\Tseqact ra}{\Tdiff{A_l}}n\). %
  There are two cases.
  \begin{itemize}
  \item If \(j\not=l\) then \(m_j=\Emptymset\). We have %
    \(m'_j=\Emptymset\) and \(A'_j=\Tdiff{A_j}\), and %
    \(m'_p=m_p\) and \(A'_p=A_p\) for \(p\not=j\). %
    It follows that \(r=0\).
    We have \(x_l\in\Itopenn{\Phi'}{a_l}{A_l}{n}\) %
    by inductive hypothesis and hence  %
    \(
    \Ldletv{x_j}{x_l}
    =\Linj 0{x_l}
    \in\Itopenn{\Phi'}{\Tseqact 0a}{\Tdiff{A_l}}{n}
    \) by Lemma~\ref{lemma:itopenn-inj}, that
    is %
    \(\Ldletv{x_j}{x_l}\in\Itopenn{\Phi'}{\Tseqact
      ra}{\Tdiff{A_i}}{n}\) %
    as required.
  \item Assume now that \(j=l\). We have \(m'_p=\Emptymset\) and
    \(A'_p=A_p\) if \(p\not=l\), and \(m'_l=\Mset{\Tseqact r{a_l}}\)
    and \(A'_l=\Tdiff A_l\). Then by inductive hypothesis we have %
    \(x_l\in\Itopenn{\Phi'}{\Tseqact r{a_l}}{\Tdiff{A_l}}{n}\) which
    is exactly what we need since \(\Ldletv{x_j}{x_l}=x_l\) in that
    case.\qedhere
  \end{itemize}
\end{proof}

\begin{lemma}%
  \label{lemma:itopenn-num}
  Given \(\Phi=(x_1:\Emptymset:A_1,\dots,x_k:\Emptymset:A_k)\) %
  and \(\kappa\in\Nat\) one has %
  \(\Num\kappa\in\Itopenn{\Phi}{\kappa}{\Tnat}{n}\).
\end{lemma}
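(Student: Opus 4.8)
The plan is to follow the same pattern as the proof of Lemma~\ref{lemma:itopenn-var}, arguing by induction on $n$ while keeping $\Phi$ (and hence $\kappa$) universally quantified, as emphasized in the remark preceding the reducibility lemmas.

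For the base case $n=0$, I would first use that $\Num\kappa$ is closed, so that $\Substbis{\Num\kappa}{\Vect P}=\Num\kappa$ for every $\Vect P\in\Itintctx\Phi$; it then suffices to prove $\Num\kappa\in\Itintt\kappa\Tnat$. Unfolding the definition of $\Itintt\cdot\cdot$ at the sharp type $\Tnat=\Tdiffm0\Tnat$ (so that $\kappa=\Tseqact\Seqempty\kappa$ and $\Rev\Seqempty=\Pempty$), this membership amounts to requiring $\State\Pempty{\Num\kappa}s\in\Itintc\nu$ for all $\nu\in\Nat$ and all $s\in\Itints\kappa\Tnat\nu$. But this is precisely the condition defining $s\in\Itints\kappa\Tnat\nu$, so the base case holds by pure unwinding of definitions.

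For the inductive step I would assume the statement at level $n$ and establish it at level $n+1$. Fix $r\in\Into$, $i\in\Eset{1,\dots,k}$ and $\Phi'$ with $(\Phi',r,\Phi)\in\Sdiffctx i{\Contca\Phi}$; I must show $\Ldletv{x_i}{\Num\kappa}\in\Itopenn{\Phi'}{\Tseqact r\kappa}{\Tdiff\Tnat}n$. The one genuinely substantial point is to observe that, since every multiset annotation in $\Phi$ is empty, the explicit description of $\Sdiff_X$ in $\REL$ from Section~\ref{sec:rel-model} forces $r=0$ and $m'_i=\Emptymset$: indeed $(m'_i,(r,\Emptymset))\in\Sdiff_{\Tsemrel{A_i}}$ can only come from the empty decomposition, whose unique witness is $m'_i=\Emptymset$ with $r$ equal to the empty sum $0$. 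Consequently $\Phi'$ is again a context all of whose multiset annotations are empty (only the type of $x_i$ becomes $\Tdiff A_i$). I would then use $\Ldletv{x_i}{\Num\kappa}=\Linj0{\Num\kappa}$ from Figure~\ref{fig:diff-subst}, apply the inductive hypothesis to obtain $\Num\kappa\in\Itopenn{\Phi'}\kappa\Tnat n$, and finish by Lemma~\ref{lemma:itopenn-inj} (with $r=0$ and $d=0$), which yields $\Linj0{\Num\kappa}\in\Itopenn{\Phi'}{\Tseqact0\kappa}{\Tdiff\Tnat}n$, exactly the desired conclusion since $r=0$.

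The proof carries no real obstacle: the only step requiring attention is the computation in $\Sdiff_X$ that pins down $r=0$ and preserves emptiness of the context, and everything else is a routine unfolding of the reducibility definitions, entirely parallel to Lemmas~\ref{lemma:itopenn-inj} and~\ref{lemma:itopenn-var}.
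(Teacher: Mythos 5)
Your proposal is correct and follows essentially the same route as the paper's own proof: induction on $n$ with the base case reduced to $\Num\kappa\in\Itintt\kappa\Tnat$ by unfolding the definition of $\Itints\kappa\Tnat\nu$, and the inductive step using that empty multiset annotations force $r=0$ in the definition of $\Sdiffctx i{\Contca\Phi}$, followed by $\Ldletv{x_i}{\Num\kappa}=\Linj0{\Num\kappa}$ and Lemma~\ref{lemma:itopenn-inj}. The only difference is that you spell out the $\Sdiff_X$ computation pinning down $r=0$ more explicitly than the paper does, which is a harmless elaboration rather than a different argument.
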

\begin{proof}
  By induction on \(n\). For \(n=0\) it suffices to observe that
  \(\Num\kappa\in\Itintt\kappa\Tnat\) which results obviously from the
  definition of \(\Itints\kappa\Tnat\nu\).  We prove the property for
  \(n+1\) assuming that it holds for \(n\). So with the notations of
  the lemma we must prove that
  \(\Num\kappa\in\Itopenn\Phi a{A_l}{n+1}\). Let
  \(l\in\Eset{1,\dots,k}\), \(r\in\Into\) and \(\Phi'\) be such that %
  \((\Phi',r,\Phi)\in\Sdiffctx j{\Contca\Phi}\). Setting
  \(\Phi'=(x_j:m'_j:A'_j)_{j=1}^k\), this means that
  \(m'_j=\Emptymset\) for each \(j\), that %
  \(A'_j=A_j\) for \(j\not=l\) and that %
  \(A'_l=\Tdiff{A_l}\). This also implies that \(r=0\). %
  By inductive hypothesis we have %
  \(\Num\kappa\in\Itopenn{\Phi'}{\kappa}{\Tnat}n\) and hence %
  \(
  \Ldletv{x_l}{\Num\kappa}
  =\Linj 0{\Num\kappa}
  \in\Itopenn{\Phi'}{(0,\kappa)}{\Tdiff\Tnat}{n}
  \) by Lemma~\ref{lemma:itopenn-inj}.
\end{proof}

\begin{lemma}%
  \label{lemma:itopenn-succ}
  For all \(n\in\Nat\), if %
  \(M\in\Itopenn{\Phi}{\Tseqact\delta\kappa}{\Tdiffm d\Tnat}{n}\) then %
  \(
  \Lsucc dM
  \in\Itopenn{\Phi}{\Tseqact\delta{(\kappa+1)}}{\Tdiffm{d}\Tnat}{n}
  \)
  where \(d=\Len\delta\).
\end{lemma}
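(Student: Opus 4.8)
The plan is to prove this exactly as Lemmas~\ref{lemma:itopenn-proj} and~\ref{lemma:itopenn-inj}, namely by induction on \(n\), keeping the statement \emph{universally quantified} over \(\Phi\), \(\delta\), \(\kappa\) and \(d\) throughout. As the remark preceding these lemmas stresses, this openness is essential: the inductive step passes to the differentials \(\Ldletv{x_l}{\Lsucc dM}\), which live in enlarged contexts and at a shifted word and depth, so the inductive hypothesis must be available there as well. Both the syntactic identity \(\Ldletv x{\Lsucc dM}=\Lsucc{d+1}{\Ldletv xM}\) from Figure~\ref{fig:diff-subst} and the machine rules for \(\Lsucc{}{}\) from Figure~\ref{fig:state-reduction} will be used.

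For the base case \(n=0\), I would fix \(\Vect P\in\Itintctx\Phi\) and write \(M'=\Substbis M{\Vect P}\), so that \(M'\in\Itintt{\Tseqact\delta\kappa}{\Tdiffm d\Tnat}\) and \(\Substbis{(\Lsucc dM)}{\Vect P}=\Lsucc d{M'}\) since substitution commutes with the successor construct. To show \(\Lsucc d{M'}\in\Itintt{\Tseqact\delta{(\kappa+1)}}{\Tdiffm d\Tnat}\), take \(\nu\in\Nat\) and \(s\in\Itints{\kappa+1}\Tnat\nu\). First I would check that \(\Stsucc s\in\Itints\kappa\Tnat\nu\): indeed \(\State\Pempty{\Num\kappa}{\Stsucc s}\Stred\State\Pempty{\Num{\kappa+1}}s\), the latter lying in \(\Itintc\nu\) by the assumption on \(s\), and \(\Itintc\nu\) is closed under anti-reduction. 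Then the machine step \(\State{\Rev\delta}{\Lsucc d{M'}}s\Stred\State{\Rev\delta}{M'}{\Stsucc s}\), combined with \(M'\in\Itintt{\Tseqact\delta\kappa}{\Tdiffm d\Tnat}\) applied to the stack \(\Stsucc s\in\Itints\kappa\Tnat\nu\), gives \(\State{\Rev\delta}{M'}{\Stsucc s}\in\Itintc\nu\); one more use of anti-reduction closure yields \(\State{\Rev\delta}{\Lsucc d{M'}}s\in\Itintc\nu\), as required.

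For the inductive step, I would assume the result for \(n\) (at all contexts, words and depths) and suppose \(M\in\Itopenn\Phi{\Tseqact\delta\kappa}{\Tdiffm d\Tnat}{n+1}\). Given \(r\in\Into\), \(l\in\Eset{1,\dots,\Len\Phi}\) and \(\Phi'\) with \((\Phi',r,\Phi)\in\Sdiffctx l{\Contca\Phi}\), the definition of \(\Itopenn{\cdot}{\cdot}{\cdot}{n+1}\) gives \(\Ldletv{x_l}M\in\Itopenn{\Phi'}{\Tseqact{r\delta}\kappa}{\Tdiffm{d+1}\Tnat}n\), using \(\Tseqact r{\Tseqact\delta\kappa}=\Tseqact{r\delta}\kappa\). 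By the inductive hypothesis applied to this data (word \(r\delta\) of length \(d+1\)) I obtain \(\Lsucc{d+1}{\Ldletv{x_l}M}\in\Itopenn{\Phi'}{\Tseqact{r\delta}{(\kappa+1)}}{\Tdiffm{d+1}\Tnat}n\). Since \(\Ldletv{x_l}{\Lsucc dM}=\Lsucc{d+1}{\Ldletv{x_l}M}\), this is exactly what is needed, and as \(r,l,\Phi'\) were arbitrary I conclude \(\Lsucc dM\in\Itopenn\Phi{\Tseqact\delta{(\kappa+1)}}{\Tdiffm d\Tnat}{n+1}\).

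The argument is routine; the only points requiring care are the backward-closure of \(\Itintc\nu\) under \(\Stred\) (immediate from its definition through \(\Msrs\States\), since the two reduction steps invoked produce single states rather than sums) and keeping the quantifier over \((\Phi,\delta,\kappa,d)\) open so that the hypothesis may legitimately be invoked at the shifted word \(r\delta\) and depth \(d+1\). I do not expect any genuine obstacle beyond this bookkeeping.
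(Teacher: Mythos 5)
Your proof is correct and follows essentially the same route as the paper's: the same base case (showing \(\Stsucc s\in\Itints\kappa\Tnat\nu\) via the machine step on \(\Num\kappa\), then anti-reduction closure of \(\Itintc\nu\) for the step \(\State{\Rev\delta}{\Lsucc d{M'}}s\Stred\State{\Rev\delta}{M'}{\Stsucc s}\)) and the same inductive step using \(\Ldletv{x_l}{\Lsucc dM}=\Lsucc{d+1}{\Ldletv{x_l}M}\) with the hypothesis invoked at word \(r\delta\) and depth \(d+1\). The only difference is cosmetic: you make explicit the backward-closure of \(\Itintc\nu\) that the paper uses silently (and you avoid a small typo in the paper, which writes \(\Phi\) where \(\Phi'\) is meant in the inductive step).
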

\begin{proof}
  By induction on \(n\). Assume that \(n=0\), let %
  \(\Vect P\in\Itintctx\Phi\), \(M'=\Substbis M{\Vect P}\). %
  Let \(\nu\in\Nat\) and \(s\in\Itints{\kappa+1}\Tnat\nu\). %
  We have %
  \(
  \State{\Pempty}{\Num\kappa}{\Stsucc{s}}
  \Rel\Stred
  \State{\Pempty}{\Num{\kappa+1}}{s}
  \in\Itintc\nu
  \) and hence %
  \(\Stsucc s\in\Itints{\kappa}{\Tnat}\nu\).
  Since \(\Vect P\in\Itintctx{\Phi}\) %
  we have %
  \(M'\in\Itintt{\Tseqact\delta\kappa}{\Tdiffm d\Tnat}\) and hence %
  \(
  \State{\Rev{\delta}}{\Lsucc d{M'}}{s}
  \Rel\Stred\State{\Rev\delta}{M'}{\Stsucc s}
  \in\Itintc\nu
  \). %
  This shows that %
  \(
  \Lsucc d{M'}
  \in\Itintt{\Tseqact{\delta}{(\kappa+1)}}{\Tdiffm{d}\Tnat}
  \) as required.

  For the inductive step, we assume that the implication holds for
  \(n\) and we prove it for \(n+1\), so we assume that
  \(M\in\Itopenn{\Phi}{\Tseqact\delta\kappa}{\Tdiffm d\Tnat}{n+1}\). %
  Let \(r\in\Into\), \(l\in\{1,\dots,\Len\Phi\}\) and \(\Phi'\) such
  that \((\Phi',r,\Phi)\in\Sdiffctx l{\Contca\Phi}\). %
  We have %
  \(
  \Ldletv{x_l}M
  \in\Itopenn{\Phi}{\Tseqact{r\delta}{\kappa}}{\Tdiffm{d+1}\Tnat}{n}
  \) %
  and hence, by inductive hypothesis,
  \(
  \Ldletv{x_l}{\Lsucc dM}
  =\Lsucc{d+1}{\Ldletv{x_l}M}
  \in\Itopenn{\Phi'}{\Tseqact{r\delta}{(\kappa+1)}}{\Tdiffm{d+1}\Tnat}{n}
  \)
  and so we have shown that %
  \(
  \Lsucc dM\in\Itopenn{\Phi}{\Tseqact\delta{(\kappa+1)}}{\Tdiffm d\Tnat}{n+1}
  \). %
\end{proof}

\begin{lemma}%
  \label{lemma:itopenn-predz}
  For all \(n\in\Nat\), if %
  \(M\in\Itopenn{\Phi}{\Tseqact\delta0}{\Tdiffm d\Tnat}{n}\) then %
  \(
  \Lpred dM
  \in\Itopenn{\Phi}{\Tseqact\delta{0}}{\Tdiffm{d+1}\Tnat}{n}
  \)
  where \(d=\Len\delta\).
\end{lemma}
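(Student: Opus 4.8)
The plan is to prove the statement by induction on the superscript \(n\), transcribing step for step the argument of Lemma~\ref{lemma:itopenn-succ} with the predecessor constructs in place of the successor ones. As throughout this family of lemmas, the induction is run with the statement kept \emph{universally quantified} over \(\Phi\), \(\delta\) and \(d\), so that in the inductive step the hypothesis can be reapplied to the differentials \(\Ldletv{x_l}M\), whose access word is one symbol longer. The ingredients specific to the predecessor that I would isolate first are the two transitions \(\State\Pempty{\Num 0}{\Stpred s}\Rel\Stred\State\Pempty{\Num 0}s\) and \(\State\delta{\Lpred dM}s\Rel\Stred\State\delta M{\Stpred s}\) of Figure~\ref{fig:state-reduction}, together with the clause \(\Ldletv x{\Lpred dM}=\Lpred{d+1}{\Ldletv xM}\) of Figure~\ref{fig:diff-subst}.

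For the base case \(n=0\) I would fix \(\Vect P\in\Itintctx\Phi\) and set \(M'=\Substbis M{\Vect P}\), so that \(M'\in\Itintt{\Tseqact\delta0}{\Tdiffm d\Tnat}\) by hypothesis. The key observation is that prefixing a stack by \(\Stpred{}\) preserves the realizers of the numeral \(0\): for every \(\nu\in\Nat\) and every \(s\in\Itints 0\Tnat\nu\), the step \(\State\Pempty{\Num 0}{\Stpred s}\Rel\Stred\State\Pempty{\Num 0}s\) together with the closure of \(\Itintc\nu\) under backward \(\States\)-reduction gives \(\Stpred s\in\Itints 0\Tnat\nu\). Feeding this stack to \(M'\) and using \(\State{\Rev\delta}{\Lpred d{M'}}s\Rel\Stred\State{\Rev\delta}{M'}{\Stpred s}\) (again with backward closure) yields \(\State{\Rev\delta}{\Lpred d{M'}}s\in\Itintc\nu\); since \(\Vect P\) and \(s\) were arbitrary this is the required rank-\(0\) membership of \(\Lpred dM\).

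For the inductive step I would assume the claim at rank \(n\) for all \(\Phi,\delta,d\), and fix \(r\in\Into\), an index \(l\) and \(\Phi'\) with \((\Phi',r,\Phi)\in\Sdiffctx l{\Contca\Phi}\). Differentiation prepends \(r\) to the access word and raises the \(\Tdiffsymb\)-level by one, so from \(M\in\Itopenn\Phi{\Tseqact\delta0}{\Tdiffm d\Tnat}{n+1}\) one reads off \(\Ldletv{x_l}M\in\Itopenn{\Phi'}{\Tseqact{r\delta}0}{\Tdiffm{d+1}\Tnat}n\), with \(\Len{r\delta}=d+1\). Applying the induction hypothesis to \(\Ldletv{x_l}M\) at depth \(d+1\) and then rewriting with \(\Ldletv{x_l}{\Lpred dM}=\Lpred{d+1}{\Ldletv{x_l}M}\) gives \(\Ldletv{x_l}{\Lpred dM}\in\Itopenn{\Phi'}{\Tseqact{r\delta}0}{\Tdiffm{d+1}\Tnat}n\); since \(l\), \(r\) and \(\Phi'\) were arbitrary, this is exactly the data required by the definition of \(\Itopenn{\cdot}{\cdot}{\cdot}{n+1}\) to conclude for \(\Lpred dM\).

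The one point I would handle with care—and the only genuine obstacle—is the bookkeeping of the \(\Tdiffsymb\)-exponent and the access-word length along differentiation: the numeral \(0\) and its stack realizers live at the sharp level \(\Tnat\), while each application of \(\Ldletv{x_l}{\cdot}\) simultaneously lengthens the access word by one and pushes the computation one \(\Tdiffsymb\)-level higher, which is precisely why the inductive step is phrased at \(\Tdiffm{d+1}\Tnat\) rather than at \(\Tdiffm d\Tnat\). Once these indices are tracked precisely, every remaining verification is a mechanical copy of the successor case of Lemma~\ref{lemma:itopenn-succ}.
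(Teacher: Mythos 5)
Your proof is correct and is exactly the argument the paper intends: the paper's own proof of this lemma is literally ``Similar to that of Lemma~\ref{lemma:itopenn-succ}'', and your transcription matches the successor case step for step (backward closure of \(\Itintc\nu\) under \(\Stred\) to get \(\Stpred s\in\Itints 0\Tnat\nu\) in the base case, and the clause \(\Ldletv x{\Lpred dM}=\Lpred{d+1}{\Ldletv xM}\) together with the universally quantified induction hypothesis applied at word \(r\delta\) of length \(d+1\) in the inductive step). Note only that the exponent \(\Tdiffm{d+1}\Tnat\) in the stated conclusion appears to be a typo for \(\Tdiffm d\Tnat\) (compare the typing rule \Trpred{} and the statement of Lemma~\ref{lemma:itopenn-succ}), and your index bookkeeping implicitly proves the correctly indexed statement.
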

\begin{proof}
  Similar to that of Lemma~\ref{lemma:itopenn-succ}.
\end{proof}

\begin{lemma}%
  \label{lemma:itopenn-predp}
  For all \(n\in\Nat\), if %
  \(M\in\Itopenn{\Phi}{\Tseqact\delta{(\kappa+1)}}{\Tdiffm d\Tnat}{n}\) then %
  \(
  \Lpred dM
  \in\Itopenn{\Phi}{\Tseqact\delta{\kappa}}{\Tdiffm{d+1}\Tnat}{n}
  \)
  where \(d=\Len\delta\).
\end{lemma}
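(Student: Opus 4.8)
The plan is to prove the statement by induction on $n$, following \emph{verbatim} the structure of the proof of Lemma~\ref{lemma:itopenn-succ}; indeed this is the mirror image of the successor case, the only change being that the numeral is decremented rather than incremented, so that the ``value received'' by the stack and the ``value stored'' in the state are interchanged. This is precisely why the companion Lemma~\ref{lemma:itopenn-predz} was declared ``similar'' to the successor case, and the same remark applies here.

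For the base case $n=0$ I would fix $\Vect P\in\Itintctx\Phi$, set $M'=\Substbis M{\Vect P}$, and then fix $\nu\in\Nat$ together with $s\in\Itints{\kappa}\Tnat\nu$. The crucial observation is that, by the state reduction rule $\State{\Pempty}{\Num{n+1}}{\Stpred s}\Stred\State{\Pempty}{\Num{n}}{s}$ of Figure~\ref{fig:state-reduction}, the state $\State{\Pempty}{\Num{\kappa+1}}{\Stpred s}$ reduces to $\State{\Pempty}{\Num{\kappa}}{s}\in\Itintc\nu$, whence $\Stpred s\in\Itints{\kappa+1}\Tnat\nu$. Since $\Vect P\in\Itintctx\Phi$ gives $M'\in\Itintt{\Tseqact\delta{(\kappa+1)}}{\Tdiffm d\Tnat}$, and since $\State{\Rev\delta}{\Lpred d{M'}}{s}\Stred\State{\Rev\delta}{M'}{\Stpred s}$ (the predecessor is pushed onto the stack), I conclude $\State{\Rev\delta}{\Lpred d{M'}}{s}\in\Itintc\nu$, which is exactly what is needed to place $\Lpred d{M'}$ in $\Itintt{\Tseqact\delta\kappa}{\Tdiffm d\Tnat}$.

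For the inductive step, assuming the implication at level $n$ and taking $M\in\Itopenn{\Phi}{\Tseqact\delta{(\kappa+1)}}{\Tdiffm d\Tnat}{n+1}$, I would fix $r\in\Into$, $l\in\Eset{1,\dots,\Len\Phi}$ and $\Phi'$ with $(\Phi',r,\Phi)\in\Sdiffctx l{\Contca\Phi}$. Unfolding the definition of $\Itopenn{\Phi}{\Tseqact\delta{(\kappa+1)}}{\Tdiffm d\Tnat}{n+1}$ yields $\Ldletv{x_l}M\in\Itopenn{\Phi'}{\Tseqact{r\delta}{(\kappa+1)}}{\Tdiffm{d+1}\Tnat}{n}$; applying the inductive hypothesis at depth $d+1$ and word $r\delta$, together with the differential equation $\Ldletv{x_l}{\Lpred dM}=\Lpred{d+1}{\Ldletv{x_l}M}$ from Figure~\ref{fig:diff-subst}, gives $\Ldletv{x_l}{\Lpred dM}\in\Itopenn{\Phi'}{\Tseqact{r\delta}{\kappa}}{\Tdiffm{d+1}\Tnat}{n}$. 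As this holds for all admissible $l,r,\Phi'$, I obtain $\Lpred dM\in\Itopenn{\Phi}{\Tseqact\delta{\kappa}}{\Tdiffm d\Tnat}{n+1}$.

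I expect no genuine obstacle: the whole argument is a routine transcription of the successor case, and the only points requiring attention are clerical, namely keeping track that the stack expects the value $\kappa+1$ while the state carries the value $\kappa$ (the reverse of Lemma~\ref{lemma:itopenn-succ}), and that the depth superscript correctly shifts from $d$ to $d+1$ when the differential $\Ldletv{x_l}{-}$ is taken in the inductive step.
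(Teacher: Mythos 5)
Your proof is correct and is essentially the paper's own argument: the paper proves this lemma with the single line ``Similar to that of Lemma~\ref{lemma:itopenn-succ}'', and your transcription is exactly the intended instantiation --- building \(\Stpred s\in\Itints{\kappa+1}{\Tnat}{\nu}\) from \(s\in\Itints{\kappa}{\Tnat}{\nu}\) via the machine rule for \(\Num{\kappa+1}\) against \(\Stpred s\) in the base case, and using \(\Ldletv{x_l}{\Lpred dM}=\Lpred{d+1}{\Ldletv{x_l}M}\) with the inductive hypothesis at depth \(d+1\) and word \(r\delta\) in the inductive step. Note only that the conclusion you prove (correctly) lands in \(\Tdiffm{d}\Tnat\), the \(\Tdiffm{d+1}\Tnat\) in the paper's statement of the lemma being a typo, since the rule \Trpred{} preserves the depth \(d\).
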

\begin{proof}
  Similar to that of Lemma~\ref{lemma:itopenn-succ}.
\end{proof}

\begin{lemma}%
  \label{lemma:itopenn-ifz}
  For all \(n\in\Nat\), if %
  \(M\in\Itopenn{\Phi_0}{\Tseqact\delta 0}{\Tdiffm d\Tnat}{n}\), %
  \(Q_0\in\Itopenn{\Phi_1}{b}{B}{n}\) and %
  \(\Tseq{\Contca\Phi}{Q_1}{B}\) then %
  \(\Lif dM{Q_0}{Q_1}\in\Itopenn{\Phi}{\Tseqact\delta b}{\Tdiffm
    dB}{n}\), %
  where \(d=\Len\delta\) and \(\Phi=\Phi_0+\Phi_1\).
\end{lemma}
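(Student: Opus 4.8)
The plan is to prove the statement by induction on $n$, following exactly the pattern of Lemmas~\ref{lemma:itopenn-flip}--\ref{lemma:itopenn-predp}, keeping the universal quantification over $\Phi_0,\Phi_1,\delta,b,B$ throughout so that the induction hypothesis can be reapplied to differentiated subterms. A preliminary observation, used in both cases, is that since $\Phi=\Phi_0+\Phi_1$, any $\Vect P\in\Itintctx\Phi$ also lies in $\Itintctx{\Phi_0}$ and in $\Itintctx{\Phi_1}$ (because $\Itintt{m+m'}A=\Itintt mA\cap\Itintt{m'}A$); hence a single substitution simultaneously witnesses the reducibility instances of $M$ and of $Q_0$. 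Note also that $Q_1$ enters only through its typing $\Tseq{\Contca\Phi}{Q_1}{B}$, which is exactly what is needed to keep the machine states well typed: the branch $Q_1$ is never actually entered because the scrutinee $M$ computes to $0$.

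For the base case $n=0$, I would fix $\Vect P\in\Itintctx\Phi$, write $B=\Tdiffm eF$ with $F$ sharp and $b=\Tseqact\epsilon f$ ($\Len\epsilon=e$), and set $M'=\Substbis M{\Vect P}$, $Q_0'=\Substbis{Q_0}{\Vect P}$, $Q_1'=\Substbis{Q_1}{\Vect P}$. Given $\nu\in\Nat$ and $s\in\Itints fF\nu$, the machine reduces
\[
\State{\Rev\epsilon\Rev\delta}{\Lif d{M'}{Q_0'}{Q_1'}}{s}
\Stred
\State{\Rev\delta}{M'}{\Stif{\Rev\epsilon}{Q_0'}{Q_1'}s}
\]
since $\Len{\Rev\delta}=d$. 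The point is then that $\Stif{\Rev\epsilon}{Q_0'}{Q_1'}s\in\Itints 0\Tnat\nu$: indeed $\State\Pempty{\Num 0}{\Stif{\Rev\epsilon}{Q_0'}{Q_1'}s}\Stred\State{\Rev\epsilon}{Q_0'}{s}$, and $\State{\Rev\epsilon}{Q_0'}{s}\in\Itintc\nu$ because $Q_0'\in\Itintt bB=\Itintt{\Tseqact\epsilon f}{\Tdiffm eF}$ and $s\in\Itints fF\nu$, using that $\Itintc\nu$ is closed under anti-reduction. Since $M'\in\Itintt{\Tseqact\delta 0}{\Tdiffm d\Tnat}$, feeding this stack to $M'$ gives $\State{\Rev\delta}{M'}{\Stif{\Rev\epsilon}{Q_0'}{Q_1'}s}\in\Itintc\nu$, whence the original state is in $\Itintc\nu$ by anti-reduction. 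As $\Rev{\delta\epsilon}=\Rev\epsilon\Rev\delta$ and $\Tseqact\delta b=\Tseqact{\delta\epsilon}f$, this reads exactly as $\Lif d{M'}{Q_0'}{Q_1'}\in\Itintt{\Tseqact\delta b}{\Tdiffm dB}$, establishing the case $n=0$.

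For the inductive step I would assume the statement at rank $n$ (for all instances) and prove it at $n+1$. Fixing $r\in\Into$, an index $l$, and $\Phi'$ with $(\Phi',r,\Phi)\in\Sdiffctx l{\Contca\Phi}$, I must place $\Ldletv{x_l}{\Lif dM{Q_0}{Q_1}}$ in $\Itopenn{\Phi'}{\Tseqact r{\Tseqact\delta b}}{\Tdiff{\Tdiffm dB}}{n}$. The key move is to split the differentiation across $\Phi_0+\Phi_1$: writing $m_l=m_l^0+m_l^1$ for the two contributions at position $l$, Lemma~\ref{lemma:sdiff-split} applied to $(m'_l,(r,m_l))\in\Sdiff_{\Tsemrel{A_l}}$ yields $r_0,r_1$ and a splitting $m'_l=m_l^{\prime 0}+m_l^{\prime 1}$ with $r=r_0+r_1\in\Into$ and $(m_l^{\prime i},(r_i,m_l^i))\in\Sdiff_{\Tsemrel{A_l}}$; this produces $\Phi_0',\Phi_1'$ with $(\Phi_i',r_i,\Phi_i)\in\Sdiffctx l{\Contca{\Phi_i}}$ and $\Phi'=\Phi_0'+\Phi_1'$. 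The induction hypotheses for $M$ and $Q_0$ (at rank $n+1$) then give $\Ldletv{x_l}M\in\Itopenn{\Phi_0'}{\Tseqact{r_0\delta}0}{\Tdiffm{d+1}\Tnat}{n}$ and $\Ldletv{x_l}{Q_0}\in\Itopenn{\Phi_1'}{\Tseqact{r_1}b}{\Tdiff B}{n}$, while $\Tseq{\Contca{\Phi'}}{\Ldletv{x_l}{Q_1}}{\Tdiff B}$ by Lemma~\ref{lemma:ldlet-typing}. Applying the present lemma at rank $n$ to these three data yields
\[
\Lif{d+1}{\Ldletv{x_l}M}{\Ldletv{x_l}{Q_0}}{\Ldletv{x_l}{Q_1}}
\in\Itopenn{\Phi'}{\Tseqact{r_0\delta}{\Tseqact{r_1}b}}{\Tdiffm{d+1}{\Tdiff B}}{n}.
\]
Recalling from Figure~\ref{fig:diff-subst} that $\Ldletv{x_l}{\Lif dM{Q_0}{Q_1}}=\Lsum{\Lflipl d{\Lif{d+1}{\Ldletv{x_l}M}{\Ldletv{x_l}{Q_0}}{\Ldletv{x_l}{Q_1}}}}$, where $\Lflipl d{\_}=\Lflipdl 0d{\_}$ and $\Lsum{\_}=\Lsumd 0{\_}$, I would finish by applying Lemma~\ref{lemma:itopenn-flip} (which turns the prefix $r_0\delta r_1$ of the decoration word into its circular permutation $r_1r_0\delta$) and then Lemma~\ref{lemma:itopenn-lsum} (which merges the leading pair $r_1,r_0$ into $r_1+r_0=r$), landing in $\Itopenn{\Phi'}{\Tseqact{r\delta}b}{\Tdiffm{d+1}B}{n}$. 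Since $\Tseqact{r\delta}b=\Tseqact r{\Tseqact\delta b}$ and $\Tdiffm{d+1}B=\Tdiff{\Tdiffm dB}$, this is the required membership; as $l,r,\Phi'$ were arbitrary, we conclude $\Lif dM{Q_0}{Q_1}\in\Itopenn\Phi{\Tseqact\delta b}{\Tdiffm dB}{n+1}$.

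The main obstacle I anticipate is the inductive step, and specifically the bookkeeping of the access words: one must check that the rearrangements performed by the flip and the sum on the right-hand side of the differential rule exactly reconstruct the ``prepend $r$'' operation $\Tseqact r{-}$ demanded by the definition of $\Itopenn{\cdots}{n+1}$, and that the context splitting provided by Lemma~\ref{lemma:sdiff-split} is compatible with the additive decomposition $\Phi=\Phi_0+\Phi_1$, so that the $r=r_0+r_1$ produced there is precisely the $\Into$-sum consumed by the $\Lsumd 0$ step. By contrast, the operational reductions in the base case and all the typing side-conditions are routine.
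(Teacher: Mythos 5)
Your proof is correct and follows essentially the same route as the paper's: the base case via the machine step into the $\mathsf{if}$ stack and closure of $\Itintc\nu$ under anti-reduction, and the inductive step by splitting the differentiated context with Lemma~\ref{lemma:sdiff-split}, applying the rank-$n$ statement to the derivatives of $M$, $Q_0$, $Q_1$, and then repairing the access word with Lemma~\ref{lemma:itopenn-flip} followed by Lemma~\ref{lemma:itopenn-lsum}. The only cosmetic difference is that you cite Lemma~\ref{lemma:ldlet-typing} explicitly for the typing of $\Ldletv{x_l}{Q_1}$, which the paper leaves implicit.
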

\begin{proof}
  By induction on \(n\). Assume that \(n=0\), let %
  \(\Vect P\in\Itintctx\Phi\), \(M'=\Substbis M{\Vect P}\) %
  and \(Q_i'=\Substbis{Q_i}{\Vect P}\) for \(i=0,1\). We can write %
  \(B=\Tdiffm eF\) where \(F\) is sharp and \(b=\Tseqact\epsilon f\)
  where \(f\in\Tsemrel F\) and \(e=\Len\epsilon\). %
  Let \(\nu\in\Nat\) and \(s\in\Itints fF\nu\). %
  We have %
  \(
  \State{\Pempty}{\Num 0}{\Stif{\Rev\epsilon}{Q'_0}{Q'_1}{s}}
  \Rel\Stred
  \State{\Rev\epsilon}{Q'_0}{s}
  \). %
  Since \(\Vect P\in\Itintctx{\Phi}\subseteq\Itintctx{\Phi_1}\) %
  we have \(Q'_0\in\Itintt{\Tseqact\epsilon f}{\Tdiffm eF}\) %
  and hence \(\State{\Rev\epsilon}{Q'_0}{s}\in\Itintc\nu\) %
  and therefore %
  \(\Stif{\Rev\epsilon}{Q'_0}{Q'_1}{s}\in\Itints{0}{\Tnat}{\nu}\). %
  Since \(\Vect P\in\Itintctx{\Phi}\subseteq\Itintctx{\Phi_0}\) %
  we have \(M'\in\Itintt{\Tseqact\delta 0}{\Tdiffm d\Tnat}\) and hence %
  \(
  \State{\Rev{\delta\epsilon}}{\Lif d{M'}{Q'_0}{Q'_1}}{s}
  =\State{\Rev\epsilon\Rev\delta}{\Lif d{M'}{Q'_0}{Q'_1}}{s}
  \Rel\Stred
  \State{\Rev\delta}{M'}{\Stif{\Rev\epsilon}{Q'_0}{Q'_1}{s}}
  \in\Itintc\nu
  \). %
  This shows that %
  \(
  \Lif d{M'}{Q'_0}{Q'_1}
  \in\Itintt{\Tseqact{\delta\epsilon}{f}}{\Tdiffm{d+e}F}
  \) as required.

  For the inductive step, we assume that the implication holds for
  \(n\) and we prove it for \(n+1\), so we assume that
  \(M\in\Itopenn{\Phi_0}{\Tseqact\delta 0}{\Tdiffm d\Tnat}{n+1}\), %
  \(Q_0\in\Itopenn{\Phi_1}{b}{B}{n+1}\) and %
  \(\Tseq{\Contca\Phi}{Q_1}{B}\). %
  Let \(r\in\Into\), \(l\in\{1,\dots,\Len\Phi\}\) and \(\Phi'\) such
  that \((\Phi',r,\Phi)\in\Sdiffctx l{\Contca\Phi}\). %
  By Lemma~\ref{lemma:sdiff-split}, since \(\Phi=\Phi_0+\Phi_1\), %
  we can find \(r_0,r_1\in\Into\) such that \(r=r_0+r_1\) as well as %
  \(\Phi'_0,\Phi'_1\) such that %
  \((\Phi'_i,r_i,\Phi_i)\in\Sdiffctx l{\Contca\Phi}\) for \(i=0,1\), %
  and \(\Phi'=\Phi'_0+\Phi'_1\). If follows that %
  \(
  \Ldletv{x_l}{M}
  \in\Itopenn{\Phi'_0}{\Tseqact{r_0\delta}{0}}{\Tdiffm{d+1}\Tnat}{n}
  \) and %
  \(
  \Ldletv{x_l}{Q_0}
  \in\Itopenn{\Phi'_1}{\Tseqact{r_1}{b}}{\Tdiff B}{n}
  \). %
  Since
  \(\Tseq{\Contca{\Phi'}}{\Ldletv{x_l}{Q_1}}{\Tdiff B}\), %
  we have %
  \(
  \Lif{d+1}{\Ldletv{x_l}{M}}{\Ldletv{x_l}{Q_0}}{\Ldletv{x_l}{Q_1}}
  \in\Itopenn{\Phi'}{\Tseqact{r_0\delta r_1}{b}}{\Tdiffm{d+2}B}{n}
  \) by inductive hypothesis. %
  By Lemma~\ref{lemma:itopenn-flip} we have %
  \(
  \Lflipl d{\Lif{d+1}{\Ldletv{x_l}{M}}{\Ldletv{x_l}{Q_0}}{\Ldletv{x_l}{Q_1}}}
  \in\Itopenn{\Phi'}{\Tseqact{\Rcycle{r_0\delta r_1}}{b}}{\Tdiffm{d+2}B}{n}
  \), that is
  \(
  \Lflipl d{\Lif{d+1}{\Ldletv{x_l}{M}}{\Ldletv{x_l}{Q_0}}{\Ldletv{x_l}{Q_1}}}
  \in\Itopenn{\Phi'}{\Tseqact{r_1r_0\delta}{b}}{\Tdiffm{d+2}B}{n}
  \). Therefore %
  \begin{align*}
    \Ldletv{x_l}{\Lif dM{Q_0}{Q_1}}
    =\Lsum{\Lflipl d{\Lif{d+1}{\Ldletv{x_l}{M}}
    {\Ldletv{x_l}{Q_0}}{\Ldletv{x_l}{Q_1}}}}
    \in\Itopenn{\Phi'}{\Tseqact{r\delta}{b}}{\Tdiffm{d+2}B}{n}
  \end{align*}
  by Lemma~\ref{lemma:itopenn-lsum}, since \(r=r_0+r_1\).
\end{proof}
\begin{remark}
  In some sense this proof motivates syntactically the introduction of
  the cyclic permutation combinator \(\Lflipl{d}\_\) in the definition
  of \(\Ldletv x{\Lif dM{Q_0}{Q_1}}\) in Figure~\ref{fig:diff-subst}:
  it allows the \(\Lsum\_\) combinator to act at the right level. We have
  already motivated it denotationally by
  Theorem~\ref{th:sem-term-invariant}.
\end{remark}

\begin{lemma}%
  \label{lemma:itopenn-ifp}
  For all \(n\in\Nat\), if %
  \(M\in\Itopenn{\Phi_0}{\Tseqact\delta{(\kappa+1)}}{\Tdiffm d\Tnat}{n}\), %
  \(Q_1\in\Itopenn{\Phi_1}{b}{B}{n}\) for some \(\kappa\in\Nat\) and %
  \(\Tseq{\Contca\Phi}{Q_0}{B}\) then %
  \(\Lif dM{Q_0}{Q_1}\in\Itopenn{\Phi}{\Tseqact\delta b}{\Tdiffm
    dB}{n}\), %
  where \(d=\Len\delta\) and \(\Phi=\Phi_0+\Phi_1\).
\end{lemma}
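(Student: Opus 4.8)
The plan is to prove the statement by induction on $n$, following the proof of the companion Lemma~\ref{lemma:itopenn-ifz} almost line for line, with the roles of the two branches of the conditional interchanged. The structural difference is that here the discriminant $M$ evaluates to a \emph{positive} integer $\kappa+1$, so the machine selects the else-branch $Q_1$; accordingly it is $Q_1$ that carries the semantic value $b$ (hence the hypothesis $Q_1\in\Itopenn{\Phi_1}{b}{B}{n}$) while $Q_0$ is only required to be well typed, $\Tseq{\Contca\Phi}{Q_0}{B}$. Since the reindexing machinery used in the other case (the flip and sum lemmas) is completely insensitive to which branch is which, no genuinely new computation is needed beyond that of Lemma~\ref{lemma:itopenn-ifz}.

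For the base case $n=0$, I would fix $\Vect P\in\Itintctx\Phi$, set $M'=\Substbis M{\Vect P}$ and $Q'_i=\Substbis{Q_i}{\Vect P}$, write $B=\Tdiffm eF$ with $F$ sharp and $b=\Tseqact\epsilon f$ with $e=\Len\epsilon$, and take $\nu\in\Nat$ and $s\in\Itints fF\nu$. The one replacement relative to the $0$-case is the machine rule $\State{\Pempty}{\Num{\kappa+1}}{\Stif{\Rev\epsilon}{Q'_0}{Q'_1}s}\Rel\Stred\State{\Rev\epsilon}{Q'_1}s$. Because $\Vect P\in\Itintctx{\Phi_1}$ gives $Q'_1\in\Itintt{\Tseqact\epsilon f}{\Tdiffm eF}$, the right-hand state lies in $\Itintc\nu$, whence $\Stif{\Rev\epsilon}{Q'_0}{Q'_1}s\in\Itints{\kappa+1}\Tnat\nu$. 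Since $\Vect P\in\Itintctx{\Phi_0}$ yields $M'\in\Itintt{\Tseqact\delta{(\kappa+1)}}{\Tdiffm d\Tnat}$, the state $\State{\Rev{\delta\epsilon}}{\Lif d{M'}{Q'_0}{Q'_1}}s=\State{\Rev\epsilon\Rev\delta}{\Lif d{M'}{Q'_0}{Q'_1}}s$ reduces into $\Itintc\nu$, giving $\Lif d{M'}{Q'_0}{Q'_1}\in\Itintt{\Tseqact{\delta\epsilon}f}{\Tdiffm{d+e}F}$, as required.

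For the inductive step I would assume the statement at level $n$ (universally quantified in $\Phi,B,b,\kappa$) and prove it at $n+1$. Given $r\in\Into$, $l\in\Eset{1,\dots,\Len\Phi}$ and $\Phi'$ with $(\Phi',r,\Phi)\in\Sdiffctx l{\Contca\Phi}$, Lemma~\ref{lemma:sdiff-split} splits $r=r_0+r_1$ and $\Phi'=\Phi'_0+\Phi'_1$ compatibly with $\Phi=\Phi_0+\Phi_1$. The hypotheses then give $\Ldletv{x_l}M\in\Itopenn{\Phi'_0}{\Tseqact{r_0\delta}{(\kappa+1)}}{\Tdiffm{d+1}\Tnat}{n}$ and $\Ldletv{x_l}{Q_1}\in\Itopenn{\Phi'_1}{\Tseqact{r_1}b}{\Tdiff B}{n}$, while $\Ldletv{x_l}{Q_0}$ remains well typed by Lemma~\ref{lemma:ldlet-typing}. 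Applying the level-$n$ instance of \emph{this same} lemma to these differentiated data produces $\Lif{d+1}{\Ldletv{x_l}M}{\Ldletv{x_l}{Q_0}}{\Ldletv{x_l}{Q_1}}\in\Itopenn{\Phi'}{\Tseqact{r_0\delta r_1}b}{\Tdiffm{d+2}B}{n}$; Lemma~\ref{lemma:itopenn-flip} then rotates the access word to $\Rcycle{r_0\delta r_1}=r_1r_0\delta$, and Lemma~\ref{lemma:itopenn-lsum} collapses $r_0,r_1$ back into $r=r_0+r_1$, yielding $\Ldletv{x_l}{\Lif dM{Q_0}{Q_1}}=\Lsum{\Lflipl d{\Lif{d+1}{\Ldletv{x_l}M}{\Ldletv{x_l}{Q_0}}{\Ldletv{x_l}{Q_1}}}}\in\Itopenn{\Phi'}{\Tseqact{r\delta}b}{\Tdiffm{d+2}B}{n}$, exactly matching the definition of the differential from Figure~\ref{fig:diff-subst}. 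The only point demanding care is bookkeeping this asymmetry through the induction, namely checking that the positive case is the correct inductive hypothesis to invoke on $\Ldletv{x_l}M$ and $\Ldletv{x_l}{Q_1}$, since differentiation preserves both the discriminant value $\kappa+1$ and the identity of the branch carrying $b$; this is the closest thing to an obstacle, and it is resolved by the same cyclic-permutation argument already used for Lemma~\ref{lemma:itopenn-ifz}.
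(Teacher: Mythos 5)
Your proposal is correct and matches the paper's intent exactly: the paper's own proof of this lemma is literally ``Similar to that of Lemma~\ref{lemma:itopenn-ifz}'', and you have carried out precisely that adaptation (machine rule selecting $Q_1$ on $\Num{\kappa+1}$ in the base case; the same split/flip/sum argument via Lemmas~\ref{lemma:sdiff-split}, \ref{lemma:itopenn-flip} and~\ref{lemma:itopenn-lsum} in the inductive step). Your observation that differentiation preserves the discriminant value and the identity of the branch carrying $b$ is exactly the point that makes the word-for-word transfer legitimate.
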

\begin{proof}
  Similar to that of Lemma~\ref{lemma:itopenn-ifz}.
\end{proof}

\begin{lemma}
  \label{lemma:itopenn-let}
  For all \(n\in\Nat\), if %
  \(M\in\Itopenn{\Phi_0}{\Tseqact\delta\kappa}{\Tdiffm d\Tnat}{n}\) and %
  \(N\in\Itopenn{\Phi_1,x:k\Mset\kappa}{b}{B}{n}\)
  where \(k,\kappa\in\Nat\), then %
  \(
  \Llet dxMN\in\Itopenn{\Phi}{\Tseqact\delta b}{\Tdiffm dB}{n}
  \) where \(d=\Len\delta\) and \(\Phi=\Phi_0+\Phi_1\).
\end{lemma}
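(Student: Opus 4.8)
The plan is to prove the statement by induction on $n$, following line for line the structure of the proof of Lemma~\ref{lemma:itopenn-ifz}, with the state-reduction rules and the differential equation for the $\mathsf{let}$ construct replacing those for the conditional. Relative to the conditional there are only two genuinely new points: the extra frame $\Stlet{}{}{}{}$ pushed on the stack when the machine enters a $\mathsf{let}$, and the treatment of the bound variable $x$ together with its multiplicity $k\Mset\kappa$.

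For the base case $n=0$ I would fix $\Vect P\in\Itintctx\Phi$, write $B=\Tdiffm eF$ with $F$ sharp and $b=\Tseqact\epsilon f$ with $\Len\epsilon=e$ and $f\in\Tsemrel F$, and set $M'=\Substbis M{\Vect P}$, $N'=\Substbis N{\Vect P}$. Picking $\nu\in\Nat$ and $s\in\Itints fF\nu$, the target is $\State{\Rev{\delta\epsilon}}{\Llet dx{M'}{N'}}s\in\Itintc\nu$. The first step is to show $\Stlet{\Rev\epsilon}x{N'}s\in\Itints\kappa\Tnat\nu$: since $\State{\Pempty}{\Num\kappa}{\Stlet{\Rev\epsilon}x{N'}s}\Stred\State{\Rev\epsilon}{\Subst{N'}{\Num\kappa}x}s$ and $\Num\kappa\in\Itintt{k\Mset\kappa}\Tnat$ (because $\Itintt{k\Mset\kappa}\Tnat=\Itintt\kappa\Tnat$ and $\Num\kappa\in\Itintt\kappa\Tnat$ directly from the definition of $\Itints\kappa\Tnat\nu$), instantiating the hypothesis $N\in\Itopenn{\Phi_1,x:k\Mset\kappa}bB0$ with $\Vect P$ extended by $\Num\kappa/x$ gives $\Subst{N'}{\Num\kappa}x\in\Itintt{\Tseqact\epsilon f}{\Tdiffm eF}$, so the displayed state lies in $\Itintc\nu$. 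Then, as $M'\in\Itintt{\Tseqact\delta\kappa}{\Tdiffm d\Tnat}$, the state $\State{\Rev\delta}{M'}{\Stlet{\Rev\epsilon}x{N'}s}$ is in $\Itintc\nu$; and the machine rule $\State{\Rev\epsilon\Rev\delta}{\Llet dx{M'}{N'}}s\Stred\State{\Rev\delta}{M'}{\Stlet{\Rev\epsilon}x{N'}s}$ closes the case after noting $\Rev{\delta\epsilon}=\Rev\epsilon\Rev\delta$. Throughout I would use the monotonicity $\Itintt m{}\subseteq\Itintt{m'}{}$ for $m'\subseteq m$ to restrict $\Vect P\in\Itintctx\Phi$ to $\Itintctx{\Phi_0}$ and $\Itintctx{\Phi_1}$.

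For the inductive step I would use the equation $\Ldletv{x_l}{\Llet dxMN}=\Lsum{\Lflipl d{\Llet{d+1}x{\Ldletv{x_l}M}{\Ldletv{x_l}N}}}$ from Figure~\ref{fig:diff-subst} (assuming $x\neq x_l$ after $\alpha$-renaming). Given $r\in\Into$, an index $l$, and $\Phi'$ with $(\Phi',r,\Phi)\in\Sdiffctx l{\Contca\Phi}$, Lemma~\ref{lemma:sdiff-split} applied to $\Phi=\Phi_0+\Phi_1$ yields $r_0,r_1\in\Into$ with $r=r_0+r_1$ and $\Phi'_0,\Phi'_1$ with $(\Phi'_i,r_i,\Phi_i)\in\Sdiffctx l{\Contca\Phi}$ and $\Phi'=\Phi'_0+\Phi'_1$. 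Unfolding the $\Itopenn{}{}{}{n+1}$ memberships of $M$ and $N$ gives $\Ldletv{x_l}M\in\Itopenn{\Phi'_0}{\Tseqact{r_0\delta}\kappa}{\Tdiffm{d+1}\Tnat}n$ and $\Ldletv{x_l}N\in\Itopenn{\Phi'_1,x:k\Mset\kappa}{\Tseqact{r_1}b}{\Tdiff B}n$ (the $x$ entry untouched since $l$ indexes a variable of $\Phi_1$). The induction hypothesis of the present lemma at rank $n$ then produces $\Llet{d+1}x{\Ldletv{x_l}M}{\Ldletv{x_l}N}\in\Itopenn{\Phi'}{\Tseqact{r_0\delta r_1}b}{\Tdiffm{d+2}B}n$. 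Applying Lemma~\ref{lemma:itopenn-flip} (empty prefix, $\alpha=r_0\delta r_1$ of length $d+2$, so $\Rcycle\alpha=r_1r_0\delta$) and then Lemma~\ref{lemma:itopenn-lsum} (with $r=r_1+r_0$) delivers $\Ldletv{x_l}{\Llet dxMN}\in\Itopenn{\Phi'}{\Tseqact{r\delta}b}{\Tdiffm{d+1}B}n$; since $l,r,\Phi'$ were arbitrary this is exactly $\Llet dxMN\in\Itopenn\Phi{\Tseqact\delta b}{\Tdiffm dB}{n+1}$.

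The step I expect to demand the most care is the base case, specifically justifying that $\Num\kappa$ is an admissible substitute for a bound variable of multiplicity $k\Mset\kappa$ and verifying that the two-stage machine reduction—first pushing the $\Stlet$ frame, then substituting $\Num\kappa$—correctly threads together the hypotheses on $M$ and on $N$. The inductive step, by contrast, is essentially bookkeeping already carried out in Lemmas~\ref{lemma:itopenn-ifz}, \ref{lemma:itopenn-flip}, and \ref{lemma:itopenn-lsum}, so I would treat it briefly, perhaps only indicating the differences from the conditional case.
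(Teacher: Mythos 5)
Your proof is correct and follows essentially the same route as the paper's: induction on \(n\), with the base case handled by pushing the \(\mathsf{let}\) frame on the stack and using \(\Num\kappa\in\Itintt{k\Mset\kappa}\Tnat\) to instantiate the hypothesis on \(N\), and the inductive step combining Lemma~\ref{lemma:sdiff-split}, the induction hypothesis, Lemma~\ref{lemma:itopenn-flip} and Lemma~\ref{lemma:itopenn-lsum}, exactly as in the paper (which itself presents this lemma as the \(\mathsf{let}\) analogue of Lemma~\ref{lemma:itopenn-ifz}). Incidentally, your final type \(\Tdiffm{d+1}B\) in the inductive step is the consistent one; the paper's \(\Tdiffm{d+2}B\) there appears to be a typographical slip carried over from before the sum is applied.
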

\begin{proof}
  By induction on \(n\).  Assume that \(n=0\), let %
  \(\Vect P\in\Itintctx\Phi\), \(M'=\Substbis M{\Vect P}\) %
  and \(N'=\Substbis{N}{\Vect P}\). We can write %
  \(B=\Tdiffm eF\) where \(F\) is sharp and \(b=\Tseqact\epsilon f\)
  where \(f\in\Tsemrel F\) and \(e=\Len\epsilon\). %
  Let \(\nu\in\Nat\) and \(s\in\Itints fF\nu\). %
  We have %
  \(
  \State{\Pempty}{\Num\kappa}{\Stlet{\Rev\epsilon}{x}{N'}{s}}
  \Rel\Stred
  \State{\Rev\epsilon}{\Subst{N'}{\Num\kappa}{x}}{s}
  \). %
  Since \(\Vect P\in\Itintctx{\Phi}\subseteq\Itintctx{\Phi_0}\) %
  and \(\Num\kappa\in\Itintt{k\Mset\kappa}{\Tnat}\)
  we have %
  \(\Subst{N'}{\Num\kappa}{x}\in\Itintt{\Tseqact\epsilon f}{\Tdiffm eF}\) %
  and hence %
  \(\State{\Rev\epsilon}{\Subst{N'}{\Num\kappa}{x}}{s}\in\Itintc\nu\) %
  and therefore %
  \(\Stlet{\Rev\epsilon}{x}{N'}{s}\in\Itints{\kappa}{\Tnat}{\nu}\). %
  Since \(\Vect P\in\Itintctx{\Phi}\subseteq\Itintctx{\Phi_1}\) %
  we have \(M'\in\Itintt{\Tseqact\delta\kappa}{\Tdiffm d\Tnat}\) and hence %
  \(
  \State{\Rev{\delta\epsilon}}{\Llet dx{M'}{N'}}{s}
  =\State{\Rev\epsilon\Rev\delta}{\Llet dx{M'}{N'}}{s}
  \Rel\Stred
  \State{\Rev\delta}{M'}{\Stlet{\Rev\epsilon}{x}{N'}{s}}
  \in\Itintc\nu
  \). %
  This shows that %
  \(
  \Llet dx{M'}{N'}
  \in\Itintt{\Tseqact{\delta\epsilon}{f}}{\Tdiffm{d+e}F}
  \) as required.

  For the inductive step, we assume that the implication holds for
  \(n\) and we prove it for \(n+1\), so we assume that
  \(M\in\Itopenn{\Phi_0}{\Tseqact\delta\kappa}{\Tdiffm d\Tnat}{n+1}\), %
  and \(N\in\Itopenn{\Phi_1,x:k\Mset\kappa}{b}{B}{n+1}\). %
  Let \(r\in\Into\), \(l\in\{1,\dots,\Len\Phi\}\) and \(\Phi'\) %
  be such that \((\Phi',r,\Phi)\in\Sdiffctx l{\Contca\Phi}\). %
  By Lemma~\ref{lemma:sdiff-split}, since %
  \(\Phi=\Phi_0+\Phi_1\), we can find \(r_0,r_1\in\Into\)
  such that \(r=r_0+r_1\) as well as %
  \(\Phi'_0,\Phi'_1\) such that %
  \((\Phi'_0,r_0,\Phi_0)\in\Sdiffctx l{\Contca\Phi}\) so that %
  \(((\Phi'_0,x:k\Mset\kappa:\Tnat),r_0,(\Phi_0,x:k\Mset\kappa:\Tnat))
  \in\Sdiffctx l{(\Contca\Phi,x:\Tnat)}\), and %
  \((\Phi'_1,r_1,\Phi_1)\in\Sdiffctx l{\Contca\Phi}\) and
  \(\Phi'=\Phi'_0+\Phi'_1\). If follows that %
  \( \Ldletv{x_l}{M}
  \in\Itopenn{\Phi'_0}{\Tseqact{r_0\delta}{\kappa}}{\Tdiffm{d+1}\Tnat}{n}
  \) and %
  \( \Ldletv{x_l}{N}
  \in\Itopenn{\Phi'_1,x:k\Mset\kappa:\Tnat}{\Tseqact{r_1}{b}}{\Tdiff
    B}{n} \). %
  We have %
  \( \Llet{d+1}x{\Ldletv{x_l}{M}}{\Ldletv{x_l}{N}}
  \in\Itopenn{\Phi'}{\Tseqact{r_0\delta r_1}{b}}{\Tdiffm{d+2}B}{n} \)
  by inductive hypothesis. %
  By Lemma~\ref{lemma:itopenn-flip} we have %
  \( \Lflipl d{\Llet{d+1}x{\Ldletv{x_l}{M}}{\Ldletv{x_l}{N}}}
  \in\Itopenn{\Phi'}{\Tseqact{\Rcycle{r_0\delta
        r_1}}{b}}{\Tdiffm{d+2}B}{n} \), that is
  \( \Lflipl d{\Llet{d+1}x{\Ldletv{x_l}{M}}{\Ldletv{x_l}{N}}}
  \in\Itopenn{\Phi'}{\Tseqact{r_1r_0\delta}{b}}{\Tdiffm{d+2}B}{n}
  \). Therefore %
  \begin{align*}
    \Ldletv{x_l}{\Llet dx{M}{N}}
    =\Lsum{\Lflipl d{\Llet{d+1}x{\Ldletv{x_l}{M}}{\Ldletv{x_l}{N}}}}
    \in\Itopenn{\Phi'}{\Tseqact{r\delta}{b}}{\Tdiffm{d+2}B}{n}
  \end{align*}
  by Lemma~\ref{lemma:itopenn-lsum}, since \(r=r_0+r_1\).
\end{proof}

\begin{lemma}%
  \label{lemma:itopenn-app}
  If \(M\in\Itopenn{\Phi_0}{(m,b)}{\Timpl AB}n\) with %
  \(m=\Mset{\List a1k}\) and %
  \((N\in\Itopenn{\Phi_j}{a_j}{A}n)_{j=1}^k\) then %
  \(\App MN\in\Itopenn\Phi bBn\) where %
  \(\Phi=\sum_{j=0}^k\Phi_j\) (so that
  \(\forall j\ \Contca{\Phi_j}=\Contca\Phi\)).
\end{lemma}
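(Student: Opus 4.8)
The plan is to prove the statement, like the preceding lemmas, by induction on the superscript $n$, keeping it universally quantified over $\Phi_0,\dots,\Phi_k$, the types $A,B$, the element $b$ and the argument elements $a_j$, since the inductive step re-applies the statement at rank $n$ to differentiated subterms. Throughout I write $M'=\Substbis{M}{\Vect P}$ and $N'=\Substbis{N}{\Vect P}$ for a substitution $\Vect P$, and I record the elementary fact that, because $\Phi=\sum_{j=0}^k\Phi_j$ and $\Itintt{m}{A}=\bigcap_{a\in m}\Itintt{a}{A}$ is antitone in $m$ (a smaller multiset of intersection types cuts out a larger set), any $\Vect P\in\Itintctx{\Phi}$ also lies in $\Itintctx{\Phi_j}$ for every $j$; hence $M'\in\Itintt{(m,b)}{\Timpl AB}$ and $N'\in\bigcap_{j=1}^k\Itintt{a_j}{A}=\Itintt{m}{A}$.

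For the base case $n=0$, I would fix $\Vect P\in\Itintctx{\Phi}$ and write $B=\Tdiffm{e}{F}$ with $F$ sharp and $b=\Tseqact{\epsilon}{f}$ with $\Len{\epsilon}=e$. Given $\nu$ and $s\in\Itints{f}{F}{\nu}$, the stack $\Starg{N'}{s}$ belongs to $\Itints{(m,f)}{\Timpl AF}{\nu}$: this is the $d=0$ instance of the clause defining reducible stacks at an arrow type, using $N'\in\Itintt{m}{A}$ and $s\in\Itints{f}{F}{\nu}$. Since $\Timpl AB=\Tdiffm{e}{(\Timpl AF)}$ and $(m,b)=\Tseqact{\epsilon}{(m,f)}$, the membership $M'\in\Itintt{(m,b)}{\Timpl AB}$ gives $\State{\Rev\epsilon}{M'}{\Starg{N'}{s}}\in\Itintc{\nu}$. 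The machine rule $\State{\Rev\epsilon}{\App{M'}{N'}}{s}\Stred\State{\Rev\epsilon}{M'}{\Starg{N'}{s}}$ together with the backward closure of $\Itintc{\nu}$ along $\Stred$ (immediate from its definition via $\Rsred{\Msrs\States}$) then yields $\State{\Rev\epsilon}{\App{M'}{N'}}{s}\in\Itintc{\nu}$, i.e. $\App{M'}{N'}\in\Itintt{b}{B}$, as required.

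For the inductive step from $n$ to $n+1$, I would unfold $\Ldletv{x_l}{\App MN}=\App{\Lsum{\Ldiff{\Ldletv{x_l}M}}}{\Ldletv{x_l}N}$ and fix $(\Phi',r,\Phi)\in\Sdiffctx{l}{\Contca\Phi}$. Applying Lemma~\ref{lemma:sdiff-split} to the $l$-th multiset $m_l=\sum_{j=0}^k m_l^{(j)}$ splits $r=\sum_{j=0}^k r_j$ and $\Phi'=\sum_{j=0}^k\Phi'_j$ with $(\Phi'_j,r_j,\Phi_j)\in\Sdiffctx{l}{\Contca\Phi}$. The rank-$(n+1)$ hypotheses give $\Ldletv{x_l}M\in\Itopenn{\Phi'_0}{(m,\Tseqact{r_0}b)}{\Timpl A{\Tdiff B}}{n}$ and $\Ldletv{x_l}N\in\Itopenn{\Phi'_j}{\Tseqact{r_j}{a_j}}{\Tdiff A}{n}$ for $j=1,\dots,k$. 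Using the explicit form of $\Sdiff_{\Tsemrel A}$ from Section~\ref{sec:rel-model}, the pair $(\hat m,(\hat r,m))$ with $\hat m=\Mset{\Tseqact{r_1}{a_1},\dots,\Tseqact{r_k}{a_k}}$ and $\hat r=r_1+\cdots+r_k$ lies in $\Sdiff_{\Tsemrel A}$, so Lemma~\ref{lemma:itopenn-ldiff} gives $\Ldiff{\Ldletv{x_l}M}\in\Itopenn{\Phi'_0}{(\hat m,\Tseqact{\hat r r_0}b)}{\Timpl{\Tdiff A}{\Tdiffm 2 B}}{n}$, and then Lemma~\ref{lemma:itopenn-lsum} (with $d=0$) gives $\Lsum{\Ldiff{\Ldletv{x_l}M}}\in\Itopenn{\Phi'_0}{(\hat m,\Tseqact{\hat r+r_0}b)}{\Timpl{\Tdiff A}{\Tdiff B}}{n}$, where $\hat r+r_0=\sum_{j=0}^k r_j=r\in\Into$. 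The function part now has argument multiset exactly $\hat m=\Mset{\Tseqact{r_1}{a_1},\dots,\Tseqact{r_k}{a_k}}$, matching the arguments $\Ldletv{x_l}N$, so the induction hypothesis of the present lemma at rank $n$ yields $\Ldletv{x_l}{\App MN}\in\Itopenn{\Phi'}{\Tseqact rb}{\Tdiff B}{n}$ with $\Phi'=\sum_{j=0}^k\Phi'_j$; since $l,r,\Phi'$ were arbitrary this is exactly $\App MN\in\Itopenn{\Phi}{b}{B}{n+1}$.

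I expect the main obstacle to be the bookkeeping of the $\Sdiff$-witnesses and of the action words $\Tseqact{}{}$ in the inductive step, which is in effect a syntactic incarnation of the Leibniz rule. The delicate points are to check that the single bit $r$ of the context split decomposes compatibly across the $k+1$ uses of $x_l$ (one in the function, $k$ in the arguments), that the multiset $\hat m$ produced by differentiating the function through $\Ldiff$ coincides with the multiset of the differentiated arguments $\Ldletv{x_l}N$, and that the two outermost bits $\hat r$ and $r_0$ recombine through $\Lsum$ to give back $r$, with both $\hat r$ and $\hat r+r_0$ remaining in $\Into$ (which follows from $r\in\Into$). Once these identities are aligned, the three auxiliary lemmas and the rank-$n$ induction hypothesis close the argument mechanically.
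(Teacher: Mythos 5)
Your proof is correct and follows essentially the same route as the paper's: the same base case via the machine transition $\State{\Rev\epsilon}{\App{M'}{N'}}{s}\Stred\State{\Rev\epsilon}{M'}{\Starg{N'}{s}}$, and the same inductive step combining Lemma~\ref{lemma:sdiff-split}, Lemma~\ref{lemma:itopenn-ldiff}, Lemma~\ref{lemma:itopenn-lsum} and the rank-$n$ induction hypothesis, with your $\hat m,\hat r$ playing the role of the paper's $m',r'$. The only differences are presentational: you spell out two points the paper leaves implicit, namely that $(\hat m,(\hat r,m))\in\Sdiff_{\Tsemrel A}$ with $\hat r\in\Into$, and that $\Itintc\nu$ is closed under backward $\Stred$-reduction.
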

\begin{proof}
  By induction on \(n\). Assume that \(n=0\). So let
  \(\Vect P\in\Itintctx\Phi\). Since for each \(j=0,\dots,k\) we have
  \(\Itintctx\Phi\subseteq\Itintctx{\Phi_j}\) by the assumption that %
  \(\Phi=\sum_{j=0}^k\Phi_j\), we have %
  \(M'\in\Itintt{(m,b)}{\Timpl AB}\) (setting
  \(M'=\Substbis M{\Vect P}\), remember that we use this convention
  systematically) and \(N'\in\bigcap_{j=1}^k\Itintt{a_j}A\). %
  We can write uniquely \(B=\Tdiffm eF\) with \(F\) sharp and %
  \(b=\Tseqact\epsilon f\) with \(\epsilon\in\Into^e\) and
  \(f\in\Tsemrel F\). Let \(\nu\in\Nat\) and \(s\in\Itints fF\nu\), we have %
  \(
  \State{\Rev\epsilon}{\App{M'}{N'}}{s}
  \Rel\Stred
  \State{\Rev\epsilon}{M'}{\Starg{N'}{s}}\in\Itintc\nu
  \) %
  since \(\Starg{N'}s\in\Itints fF\nu\).

  Now assume that the implication holds for \(n\) and let us prove it
  for \(n+1\) so assume that %
  \(M\in\Itopenn{\Phi_0}{(m,b)}{\Timpl AB}{n+1}\) and %
  \((N\in\Itopenn{\Phi_j}{a_j}{A}{n+1})_{j=1}^k\). %
  Let \(r\in\Into\) and let \(\Phi'\) be such that %
  \((\Phi',r,\Phi)\in\Sdiffctx l{\Contca\Phi}\) for some %
  \(1\leq l\leq\Len\Phi\). By Lemma~\ref{lemma:sdiff-split} we can
  find %
  \((r_i\in\Into)_{i=0}^k\) such that \(r=\sum_{i=0}^kr_i\) as well
  as %
  \((\Phi'_i)_{i=0}^k\) such that \(\Phi'=\sum_{i=0}\Phi'_i\) and %
  \(((\Phi'_i,r_i,\Phi_i)\in\Sdiffctx{l}{\Contca\Phi} )_{i=0}^k\). So
  by our assumptions we have %
  \(\Ldletv{x_l}{M}\in\Itopenn{\Phi'_0}{(m,\Tseqact{r_0}b)}
  {\Timpl A{\Tdiff B}}{n}\) and %
  \(\Ldletv{x_l}N\in\bigcap_{i=1}^k\Itopenn{\Phi'_i}
  {\Tseqact{r_j}{a_j}}{\Tdiff A}{n}\). %
  Let \(r'=\sum_{i=1}^kr_i\in\Into\) and %
  \(m'=\Mset{\Tseqact{r_1}{a_1},\dots,\Tseqact{r_k}{a_k}}\).
  By Lemma~\ref{lemma:itopenn-ldiff} we have %
  \(
  \Ldiff{\Ldletv{x_l}{M}}
  \in\Itopenn{\Phi'_0}{(m',\Tseqact{r'r_0}b)}
  {\Timpl{\Tdiff A}{\Tdiffm2B}}{n}
  \) and hence by Lemma~\ref{lemma:itopenn-lsum} we have %
  \(
  \Lsum{\Ldiff{\Ldletv{x_l}{M}}}
  \in\Itopenn{\Phi'_0}{(m',\Tseqact{(r'+r_0)}b)}
  {\Timpl{\Tdiff A}{\Tdiff B}}{n}
  \). %
  By inductive hypothesis we get %
  \(
  \Ldletv{x_l}{\App MN}
  =\App{\Lsum{\Ldiff{\Ldletv{x_l}{M}}}}{\Ldletv{x_l}N}
  \in\Itopenn{\Phi'}{\Tseqact{r}{b}}{\Tdiff B}{n}
  \) since \(r=r'+r_0\).
  Since we have proven this for all choices of \(l\), \(r\) and
  \(\Phi'\), our contention follows.
\end{proof}

\begin{lemma}%
  \label{lemma:itopenn-lfix}
  For any \(n\in\Nat\), if %
  \(\App M{\Lfix M}\in\Itopenn\Phi bBn\)
  then \(\Lfix M\in\Itopenn\Phi bBn\).
\end{lemma}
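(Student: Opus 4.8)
The plan is to induct on $n$, using two ingredients: the fact that in the machine both $\Lfix{M}$ and $\App M{\Lfix M}$ reduce in one step to the \emph{same} state, and the differential-substitution identity $\Ldletv{x_l}{\Lfix M}=\Lfix{(\Lsum{\Ldiff{\Ldletv{x_l}M}})}$ read off from Figure~\ref{fig:diff-subst}, which makes the fixpoint construction commute with $\Ldletv{x_l}{\_}$.

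For the base case $n=0$, fix a substitution $\Vect P\in\Itintctx\Phi$ and write $M'=\Substbis M{\Vect P}$, so that $\Substbis{\App M{\Lfix M}}{\Vect P}=\App{M'}{\Lfix{M'}}$ and $\Substbis{\Lfix M}{\Vect P}=\Lfix{M'}$; I must show $\Lfix{M'}\in\Itintt bB$. Decomposing $B=\Tdiffm eF$ with $F$ sharp and $b=\Tseqact\epsilon f$, I take any $\nu\in\Nat$ and $s\in\Itints fF\nu$. The machine rules of Figure~\ref{fig:state-reduction} for $\Lfix$ and for application produce a common reduct:
\[
  \State{\Rev\epsilon}{\Lfix{M'}}{s}\Stred
  \State{\Rev\epsilon}{M'}{\Starg{\Lfix{M'}}s}
  \quad\text{and}\quad
  \State{\Rev\epsilon}{\App{M'}{\Lfix{M'}}}{s}\Stred
  \State{\Rev\epsilon}{M'}{\Starg{\Lfix{M'}}s}\,.
\]
The hypothesis $\App M{\Lfix M}\in\Itopenn\Phi bB0$ gives $\App{M'}{\Lfix{M'}}\in\Itintt bB$, hence $\State{\Rev\epsilon}{\App{M'}{\Lfix{M'}}}{s}\in\Itintc\nu$. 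Since this state is simplicial (neither $0$ nor a sum), its only $\Rsred{\Msrs\States}$-successor is $\Mset{\State{\Rev\epsilon}{M'}{\Starg{\Lfix{M'}}s}}$ by determinism of $\Stred$, so the reduction witnessing membership in $\Itintc\nu$ must factor through this reduct; thus $\State{\Rev\epsilon}{M'}{\Starg{\Lfix{M'}}s}\in\Itintc\nu$, and prepending the left-hand step yields $\State{\Rev\epsilon}{\Lfix{M'}}{s}\in\Itintc\nu$. Therefore $\Lfix{M'}\in\Itintt bB$, and since $\Vect P$ was arbitrary, $\Lfix M\in\Itopenn\Phi bB0$.

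For the inductive step, assume the statement at level $n$ (quantified over all $\Phi$, $b$, $B$, as in the Remark preceding these lemmas), and suppose $\App M{\Lfix M}\in\Itopenn\Phi bB{n+1}$. Fix $r\in\Into$, $l\in\Eset{1,\dots,\Len\Phi}$ and $\Phi'$ with $(\Phi',r,\Phi)\in\Sdiffctx l{\Contca\Phi}$, and set $M^{*}=\Lsum{\Ldiff{\Ldletv{x_l}M}}$. From Figure~\ref{fig:diff-subst} I read $\Ldletv{x_l}{\Lfix M}=\Lfix{M^{*}}$ and $\Ldletv{x_l}{\App M{\Lfix M}}=\App{\Lsum{\Ldiff{\Ldletv{x_l}M}}}{\Ldletv{x_l}{\Lfix M}}=\App{M^{*}}{\Lfix{M^{*}}}$. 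By the definition of $\Itopenn\Phi bB{n+1}$ the hypothesis gives $\App{M^{*}}{\Lfix{M^{*}}}\in\Itopenn{\Phi'}{\Tseqact rb}{\Tdiff B}n$, so the inductive hypothesis, applied to $M^{*}$ with context $\Phi'$, value $\Tseqact rb$ and type $\Tdiff B$, yields $\Lfix{M^{*}}=\Ldletv{x_l}{\Lfix M}\in\Itopenn{\Phi'}{\Tseqact rb}{\Tdiff B}n$. As $r$, $l$ and $\Phi'$ were arbitrary, $\Lfix M\in\Itopenn\Phi bB{n+1}$.

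The main obstacle is the base case, and specifically the passage from $\App{M'}{\Lfix{M'}}\in\Itintc\nu$ to $\Lfix{M'}\in\Itintc\nu$: this rests on the two states sharing the single reduct $\State{\Rev\epsilon}{M'}{\Starg{\Lfix{M'}}s}$ together with the determinism of $\Stred$ on simplicial states, which forces the witnessing $\Rsred{\Msrs\States}$-reduction to pass through that common reduct. The inductive step is then purely formal, once one observes that $\Ldletv{x_l}{\_}$ transports the fixpoint identity relating $\Lfix M$ and $\App M{\Lfix M}$ verbatim to $M^{*}$.
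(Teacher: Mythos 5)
Your proof is correct and follows essentially the same route as the paper's: the base case uses the fact that $\State{\Rev\epsilon}{\Lfix{M'}}{s}$ and $\State{\Rev\epsilon}{\App{M'}{\Lfix{M'}}}{s}$ share the one-step reduct $\State{\Rev\epsilon}{M'}{\Starg{\Lfix{M'}}s}$ (your explicit appeal to determinism of $\Stred$ just spells out what the paper leaves implicit), and the inductive step uses verbatim the identities $\Ldletv{x_l}{\App M{\Lfix M}}=\App{M^{*}}{\Lfix{M^{*}}}$ and $\Ldletv{x_l}{\Lfix M}=\Lfix{M^{*}}$ from Figure~\ref{fig:diff-subst} together with the universally quantified inductive hypothesis. (Minor terminological point: the paper calls states that are neither $0$ nor sums \emph{sum-implicit}, not ``simplicial''.)
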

\begin{proof}
  By induction on \(n\). For \(n=0\), let \(\Vect P\in\Itintctx\Phi\)
  and let \(M'=\Substbis M{\Vect P}\). We write \(B=\Tdiffm eF\) where
  \(F\) is sharp and \(b=\Tseqact\epsilon f\) where \(f\in\Tsemrel
  F\). Let \(\nu\in\Nat\) and \(s\in\Itints fF\nu\), we have %
  \(
  \State{\Rev\epsilon}{\Lfix{M'}}{s}
  \Rel\Stred
  \State{\Rev\epsilon}{M'}{\Starg{\Lfix{M'}}s}
  \in\Itintc\nu
  \) %
  since we have %
  \(
  \State{\Rev\epsilon}{\App{M'}{\Lfix{M'}}}{s}
  \Rel\Stred
  \State{\Rev\epsilon}{M'}{\Starg{\Lfix{M'}}s}  
  \)
  and we have \(\App{M'}{\Lfix{M'}}\in\Itintt bB\) by our assumption. %
  So we have shown that \(\Lfix M\in\Itopenn\Phi bB0\).

  Assume that the implication holds for \(n\) and let us prove it for
  \(n+1\). So we assume that %
  \(\App{M}{\Lfix M}\in\Itopenn{\Phi}{b}{B}{n+1}\). %
  Let \(l\in\Eset{1,\dots,\Len\Phi}\), \(r\in\Into\) and \(\Phi'\) be
  such that \((\Phi',r,\Phi)\in\Sdiffctx l{\Contca\Phi}\). We know that %
  \(
  \Ldletv{x_l}{\App{M}{\Lfix M}}\in\Itopenn{\Phi'}{\Tseqact rb}{\Tdiff B}{n}
  \). On the other hand %
  \(
  \Ldletv{x_l}{\App{M}{\Lfix M}}
  =\App{\Lsum{\Ldiff{\Ldletv{x_l}M}}}{\Ldletv{x_l}{\Lfix M}}
  \), see Figure~\ref{fig:diff-subst}. %
  On the same figure we see that %
  \(\Ldletv{x_l}{\Lfix M}=\Lfix{(\Lsum{\Ldiff{\Ldletv{x_l}M}})}\). %
  So we have %
  \(
  \App{\Lsum{\Ldiff{\Ldletv{x_l}M}}}{\Lfix{(\Lsum{\Ldiff{\Ldletv{x_l}M}})}}
  \in\Itopenn{\Phi'}{\Tseqact rb}{\Tdiff B}{n}
  \) and hence %
  \(
  \Lfix{(\Lsum{\Ldiff{\Ldletv{x_l}M}})}
  \in\Itopenn{\Phi'}{\Tseqact rb}{\Tdiff B}{n}
  \) by inductive hypothesis.
\end{proof}

\begin{remark}
  The next lemma has a different structure: the hypothesis in the
  implication we prove by induction is stronger. This feature is used
  only in the base case where it is absolutely crucial.
\end{remark}

\begin{lemma}%
  \label{lemma:itopenn-abst0}
  For any \(n\in\Nat\), if
  \(M\in\bigcap_{h\geq n}\Itopenn{\Phi,x:m:A}{b}{B}{h}\) then
  \(\Abst xAM\in\Itopenn{\Phi}{(m,b)}{\Timpl AB}{n}\).
\end{lemma}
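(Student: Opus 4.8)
The plan is to proceed by induction on $n$, following the same pattern as the previous lemmas, the key point being that the strengthened hypothesis $M\in\bigcap_{h\geq n}\Itopenn{\Phi,x:m:A}{b}{B}{h}$ is exactly what is needed to drive the base case (as the preceding remark announces).

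For the base case $n=0$, fix $\Vect P\in\Itintctx\Phi$ and write $M'=\Substbis M{\Vect P}$; since every $P_i$ is closed, substitution commutes with the differential operator, so $\Substbis{\Ldletvm dxM}{\Vect P}=\Ldletvm dx{M'}$ for every $d$. Writing $B=\Tdiffm eF$ with $F$ sharp and $b=\Tseqact\epsilon f$, we have $\Timpl AB=\Tdiffm e{(\Timpl AF)}$ with $\Timpl AF$ sharp and $(m,b)=\Tseqact\epsilon{(m,f)}$, so unfolding $\Itintt{(m,b)}{\Timpl AB}$ I would take $\nu\in\Nat$ and a stack $s\in\Itints{(m,f)}{\Timpl AF}\nu$. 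By the definition of $\Itints{(m,f)}{\Timpl AF}\nu$, this $s$ has the form $\Stdiff{r_1}{\cdots\Stdiff{r_d}{\Starg Nt}}$, witnessed by multisets $m_0=m,m_1,\dots,m_d$ with $(m_i,(r_i,m_{i-1}))\in\Sdiff_{\Tsemrel{\Tdiffm{i-1}A}}$, a term $N\in\Itintt{m_d}{\Tdiffm dA}$ and a stack $t\in\Itints fF\nu$. Running the machine on $\State{\Rev\epsilon}{\Abst xA{M'}}s$, each leading $\Stdiff{r_i}$ is consumed by the rule $\State{\delta}{\Abst xAM}{\Stdiff is}\Stred\State{\delta i}{\Abst x{\Tdiff A}{\Ldletv xM}}s$, and the final $\Starg N{}$ by the application rule, so that in $d+1$ steps we reach $\State{\Rev\epsilon r_1\cdots r_d}{\Subst{\Ldletvm dx{M'}}Nx}t$, and $\Rev\epsilon r_1\cdots r_d=\Rev{r_d\cdots r_1\epsilon}$.

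The point is now to feed the machine data back into the hypothesis. Applying $M\in\Itopenn{\Phi,x:m:A}{b}{B}{d}$ and differentiating $d$ times with respect to $x$, choosing at step $i$ the direction $r_i$ and the $\Sdiffctx{}{}$-triple carrying $m_i$ (legitimate since $(m_i,(r_i,m_{i-1}))\in\Sdiff_{\Tsemrel{\Tdiffm{i-1}A}}$), I reach level $0$: $\Ldletvm dxM\in\Itopenn{\Phi,x:m_d:\Tdiffm dA}{\Tseqact{r_d\cdots r_1}b}{\Tdiffm dB}0$. Instantiating this level-$0$ membership with the substitution $(\Vect P,N/x)\in\Itintctx{\Phi,x:m_d:\Tdiffm dA}$ gives $\Subst{\Ldletvm dx{M'}}Nx\in\Itintt{\Tseqact{r_d\cdots r_1}b}{\Tdiffm dB}=\Itintt{\Tseqact{r_d\cdots r_1\epsilon}f}{\Tdiffm{d+e}F}$, which by definition means $\State{\Rev{r_d\cdots r_1\epsilon}}{\Subst{\Ldletvm dx{M'}}Nx}t\in\Itintc\nu$. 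Since $\Itintc\nu$ is closed under anti-reduction along the deterministic $\Stred$-steps above (exactly as used in Lemmas~\ref{lemma:itopenn-succ}--\ref{lemma:itopenn-app}), it follows that $\State{\Rev\epsilon}{\Abst xA{M'}}s\in\Itintc\nu$, hence $\Abst xA{M'}\in\Itintt{(m,b)}{\Timpl AB}$; as $\Vect P$ was arbitrary, $\Abst xAM\in\Itopenn\Phi{(m,b)}{\Timpl AB}0$. For the inductive step, assume the statement for $n$ and let $M\in\bigcap_{h\geq n+1}\Itopenn{\Phi,x:m:A}{b}{B}{h}$. Fixing $r\in\Into$, $l\in\Eset{1,\dots,\Len\Phi}$ and $\Phi'$ with $(\Phi',r,\Phi)\in\Sdiffctx l{\Contca\Phi}$, I use $\Ldletv{x_l}{\Abst xAM}=\Abst xA{\Ldletv{x_l}M}$, together with $\Tdiff{(\Timpl AB)}=\Timpl A{\Tdiff B}$ and $\Tseqact r{(m,b)}=(m,\Tseqact rb)$. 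For each $h\geq n$, applying the clause defining $\Itopenn{\Phi,x:m:A}{b}{B}{h+1}$ while differentiating with respect to $x_l$ (extending the triple to leave $x$ untouched) gives $\Ldletv{x_l}M\in\Itopenn{\Phi',x:m:A}{\Tseqact rb}{\Tdiff B}h$, so $\Ldletv{x_l}M\in\bigcap_{h\geq n}\Itopenn{\Phi',x:m:A}{\Tseqact rb}{\Tdiff B}h$. The induction hypothesis of the lemma then yields $\Abst xA{\Ldletv{x_l}M}\in\Itopenn{\Phi'}{(m,\Tseqact rb)}{\Timpl A{\Tdiff B}}n$, i.e. $\Ldletv{x_l}{\Abst xAM}\in\Itopenn{\Phi'}{\Tseqact r{(m,b)}}{\Tdiff{(\Timpl AB)}}n$; since $r,l,\Phi'$ were arbitrary, $\Abst xAM\in\Itopenn\Phi{(m,b)}{\Timpl AB}{n+1}$.

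The main obstacle is the base case bookkeeping: one must check that the machine peels off precisely the $d$ leading $\Stdiff{r_i}$ entries of the stack and that this is matched, step for step, by the $d$ available differentiations of $M$ granted by the level-$d$ hypothesis~--- which is why the intersection over all $h\geq n$ is indispensable, and which also explains why this lemma, unlike its neighbours, must be proved from a hypothesis quantified over all $h\geq n$. The supporting facts to verify carefully are the commutation $\Substbis{\Ldletvm dxM}{\Vect P}=\Ldletvm dx{M'}$ for closed $\Vect P$, the word-reversal identity $\Rev\epsilon r_1\cdots r_d=\Rev{r_d\cdots r_1\epsilon}$, and the anti-reduction closure of $\Itintc\nu$ along the displayed deterministic reduction.
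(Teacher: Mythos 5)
Your proof is correct and follows essentially the same route as the paper's: induction on \(n\) with all parameters universally quantified, a base case that unfolds the stack \(s\in\Itints{(m,f)}{\Timpl AF}\nu\) into \(\Stdiff{r_1}{\cdots\Stdiff{r_d}{\Starg Nt}}\), runs the machine for \(d+1\) steps, and exploits the level-\(d\) hypothesis \(M\in\Itopenn{\Phi,x:m:A}{b}{B}{d}\) (available precisely because of the intersection over all \(h\geq n\)) to place \(\Subst{\Ldletvm dx{M'}}Nx\) in the right \(\Itintt{}{}\) set, and an inductive step built on \(\Ldletv{x_l}{\Abst xAM}=\Abst xA{\Ldletv{x_l}M}\). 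The minor presentational differences (machine run before the membership argument, explicit appeal to anti-reduction closure of \(\Itintc\nu\), and flagging the substitution/differentiation commutation as a fact to check) do not change the argument; the paper relies on the same facts implicitly.
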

\begin{proof}
  We prove the statement by induction on \(n\). For \(n=0\) our
  assumption is
  \(\forall h\in\Nat\ M\in\Itopenn{\Phi,x:m:A}{b}{B}{h}\) and we prove
  \(\Abst xAM \in\Itopenn{\Phi}{(m,b)}{\Timpl AB}{0}\).
  So let %
  \(\Vect P\in\Itintctx\Phi\) and let \(M'=\Substbis M{\Vect P}\). %
  We can write uniquely \(B=\Tdiffm eF\) with \(F\) sharp and %
  \(b=\Tseqact\epsilon f\) with \(f\in\Tsemrel F\). Let \(\nu\in\Nat\)
  and \(s\in\Itints{(m,f)}{\Timpl AF}{\nu}\). Then, by the typing
  rules for stacks, we have %
  \(s=\Stdiff{r_1}{\cdots\Stdiff{r_d}{\Starg Pt}}\) where %
  \(d\in\Nat\), %
  \(\List r1d\in\Into\) and there are %
  \(m_1\in\Mfin{\Tsemrel{\Tdiff A}},\dots,m_d\in\Mfin{\Tsemrel{\Tdiffm
      dA}}\) such that %
  \(((m_i,(r_i,m_{i-1}))\in\Sdiff_{\Tsemrel{\Tdiffm{i-1}{A}}})_{i=1}^d\),
  where we set \(m_0=m\). And last %
  \(P\in\Itintt{m_d}{\Tdiffm dA}\) and \(t\in\Itints fF\nu\). %
  By our assumption%
  \footnote{It is here that the special form of the hypothesis is
    crucial.} %
  about \(M\) we have in particular %
  \(M\in\Itopenn{\Phi,x:m:A}{b}{B}{d}\) and hence %
  \( \Ldletvm d{x}{M} \in\Itopenn{\Phi,x:m_d:\Tdiffm
    dA}{\Tseqact{r_d\cdots r_1}b}{\Tdiffm dB}{0} \) so that %
  \( \Subst{\Ldletvm dx{M'}}Px =\Substbis{\Ldletvm dxM}{\Vect P,P/x}
  \in\Itintt{\Tseqact{r_d\cdots r_1}b}{\Tdiffm dB} \).
  So we have
  \begin{align*}
    \State{\Rev\epsilon}{\Abst xA{M'}}{s}
    &\Rel\Stred
      \State{\Rev\epsilon r_1}{\Abst x{\Tdiff A}{\Ldletv x{M'}}}
      {\Stdiff{r_2}{\cdots\Stdiff{r_d}{\Starg Pt}}}\\
    &\Rel\Stred\cdots\Rel\Stred    
      \State{\Rev\epsilon r_1\cdots r_d}{\Abst x{\Tdiffm dA}
      {\Ldletvm dx{M'}}}
      {\Starg Pt}\\
    &\Rel\Stred
      \State{\Rev{r_d\cdots r_1\epsilon}}{\Subst{\Ldletvm dx{M'}}Px}{t}
      \in\Itintc\nu
  \end{align*}
  which entails that %
  \(\State{\Rev\epsilon}{\Abst xA{M'}}{s}\in\Itintc\nu\) and hence %
  \(\Abst xA{M'}\in\Itintt{(m,b)}{\Timpl AB}\) as required.

  For the inductive step we assume that the implication holds for
  \(n\) and prove it for \(n+1\). %
  Remember that in this implication, \(M\), \(\Phi\), 
  \(m\), \(A\), \(b\) and \(B\) are universally quantified.  So we
  assume that %
  \(\forall h\geq n+1\ M\in\Itopenn{\Phi,x:m:A}{b}{B}{h}\) and prove
  that %
  \(\Abst xAM\in\Itopenn{\Phi}{(m,b)}{\Timpl AB}{n+1}\). %
  We set \(\Phi=(x_1:m_1:A_1,\dots,x_k:m_k:A_k)\).
  Let \(r\in\Into\), \(l\in\Eset{1,\dots,k}\) and \(\Phi'\) be such
  that %
  \((\Phi',r,\Phi)\in\Sdiffctx{l}{\Contca\Phi}\). We have %
  \(((\Phi',x:m:A),r,(\Phi,x:m:A))\in\Sdiffctx{l}{(\Contca\Phi,x:A)}\) %
  and hence %
  \(
  \forall h\geq n\ \Ldletv{x_l}M
  \in\Itopenn{\Phi',x:A}{(m,\Tseqact rb)}{\Timpl A{\Tdiff B}}{h}
  \) %
  by our assumption about \(M\). It follows by inductive hypothesis that %
  \(
  \Ldletv{x_l}{\Abst xAM}=\Abst xA{\Ldletv{x_l}M}
  \in\Itopenn{\Phi'}{(m,\Tseqact rb)}{\Timpl A{\Tdiff B}}{n}\). %
  Therefore \(\Abst xAM\in\Itopenn{\Phi}{(m,b)}{\Timpl AB}{n+1}\) as required.
\end{proof}

\begin{lemma}%
  \label{lemma:itopenn-abst}
  If
  \(M\in\Itopen{\Phi,x:m:A}{b}{B}\) then
  \(\Abst xAM\in\Itopen{\Phi}{(m,b)}{\Timpl AB}\).
\end{lemma}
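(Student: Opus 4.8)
The plan is to obtain this statement as an immediate consequence of Lemma~\ref{lemma:itopenn-abst0}, which already carries out all the substantive work. Recall that by definition $\Itopen{\Phi,x:m:A}{b}{B}=\bigcap_{h\in\Nat}\Itopenn{\Phi,x:m:A}{b}{B}{h}$, so the hypothesis $M\in\Itopen{\Phi,x:m:A}{b}{B}$ says precisely that $M\in\Itopenn{\Phi,x:m:A}{b}{B}{h}$ for every $h\in\Nat$. The target $\Itopen{\Phi}{(m,b)}{\Timpl AB}$ is likewise the intersection over all levels $n$, so it suffices to establish membership at each fixed level and then intersect.

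First I would fix an arbitrary $n\in\Nat$ and observe that the full intersection over all $h$ is contained in the partial intersection over $h\geq n$, that is, $\bigcap_{h\in\Nat}\Itopenn{\Phi,x:m:A}{b}{B}{h}\subseteq\bigcap_{h\geq n}\Itopenn{\Phi,x:m:A}{b}{B}{h}$. Hence $M\in\bigcap_{h\geq n}\Itopenn{\Phi,x:m:A}{b}{B}{h}$, which is exactly the hypothesis required to invoke Lemma~\ref{lemma:itopenn-abst0} at level $n$. This yields $\Abst xAM\in\Itopenn{\Phi}{(m,b)}{\Timpl AB}{n}$.

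Since $n\in\Nat$ was arbitrary, this membership holds for every $n$, so $\Abst xAM\in\bigcap_{n\in\Nat}\Itopenn{\Phi}{(m,b)}{\Timpl AB}{n}=\Itopen{\Phi}{(m,b)}{\Timpl AB}$, which is the desired conclusion.

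There is essentially no obstacle at this level: the delicate point — the reason Lemma~\ref{lemma:itopenn-abst0} is deliberately stated with the \emph{stronger} hypothesis $M\in\bigcap_{h\geq n}\Itopenn{\ldots}{h}$ rather than merely $M\in\Itopenn{\ldots}{n}$ — was already absorbed into the proof of that lemma, whose base case appeals to $M\in\Itopenn{\Phi,x:m:A}{b}{B}{d}$ for a $d$ that depends on the stack $s$ (through the number of $\Stdiff{}{}$ frames it contains) and may therefore exceed the current level. The present lemma merely repackages the ``all levels simultaneously'' version of that result; the only genuine content is the set-theoretic passage from the $n$-indexed families to their intersection, which is routine, so I would expect the written proof to be a single short paragraph.
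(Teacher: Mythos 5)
Your proof is correct and takes exactly the paper's route: the paper's own proof is the single line ``Apply Lemma~\ref{lemma:itopenn-abst0}'', and your paragraph simply spells out that application (restricting the all-levels hypothesis to levels $h\geq n$, invoking Lemma~\ref{lemma:itopenn-abst0} at each $n$, and intersecting). Your closing remark about why Lemma~\ref{lemma:itopenn-abst0} needs the stronger hypothesis matches the paper's footnoted emphasis on that point in its base case.
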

\begin{proof}
  Apply Lemma~\ref{lemma:itopenn-abst0}.
\end{proof}

\begin{theorem}%
  \label{th:cdpcf-adequacy}
  If \(\Tseqi{\Phi}MbB\) then \(M\in\Itopen\Phi bB\).
\end{theorem}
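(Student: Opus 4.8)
The plan is to prove the statement by induction on the derivation of \(\Tseqi\Phi MbB\) in the intersection typing system of Figure~\ref{fig:int-typing-rules}, establishing directly that \(M\in\Itopen\Phi bB=\bigcap_{n\in\Nat}\Itopenn\Phi bBn\). The entire purpose of the preceding series of stability lemmas (Lemmas~\ref{lemma:itopenn-flip}--\ref{lemma:itopenn-abst}) is that each typing rule has been tailored to match exactly one of them, so that every inductive case reduces to feeding the inductive hypotheses into the corresponding lemma. Since each of those lemmas preserves membership at a fixed superscript \(n\), and the inductive hypotheses will deliver membership at \emph{all} levels (that is, in \(\Itopen\)), applying the relevant lemma uniformly in \(n\) yields the conclusion in \(\Itopen\) as well.

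Concretely, I would run through the rules as follows: \Itrvar{} uses Lemma~\ref{lemma:itopenn-var}; \Itrnum{} uses Lemma~\ref{lemma:itopenn-num}; \Itrsuc{}, \Itrpredz{}, \Itrpredp{} use Lemmas~\ref{lemma:itopenn-succ}, \ref{lemma:itopenn-predz}, \ref{lemma:itopenn-predp}; \Itrifz{} and \Itrifp{} use Lemmas~\ref{lemma:itopenn-ifz} and~\ref{lemma:itopenn-ifp}; \Itrlet{} uses Lemma~\ref{lemma:itopenn-let}; \Itrproj{}, \Itrinj{}, \Itrsum{}, \Itrcirc{}, \Itrdiff{} use Lemmas~\ref{lemma:itopenn-proj}, \ref{lemma:itopenn-inj}, \ref{lemma:itopenn-lsum}, \ref{lemma:itopenn-flip}, \ref{lemma:itopenn-ldiff} respectively; and \Itrapp{} uses Lemma~\ref{lemma:itopenn-app}. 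In each case the inductive hypotheses applied to the premises supply the memberships of the immediate subterms, while the side conditions on access words \(\delta\), bits \(r\), and \(\Sdiff\)-related data match verbatim the hypotheses of the corresponding lemma. The abstraction rule \Itrabs{} is handled slightly differently, since Lemma~\ref{lemma:itopenn-abst} is already phrased at the level of \(\Itopen\) (it repackages the strengthened induction of Lemma~\ref{lemma:itopenn-abst0}); here I would apply it directly to the inductive hypothesis \(M\in\Itopen{\Phi,x:m:A}bB\).

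Two cases deserve genuine care. The first is the fixpoint rule \Itrfix{}, whose premises derive \(M\) of type \(\Timpl AA\) and, crucially, \(\Lfix M\) \emph{itself} (with smaller semantic data \(a_j\)). Because the induction is on derivations and not on subject terms, the inductive hypothesis legitimately applies to these strictly smaller subderivations, yielding \(M\in\Itopen{\Phi^0}{(\Mset{\List a1n},a)}{\Timpl AA}\) and \(\Lfix M\in\Itopen{\Phi^j}{a_j}A\); combining them via Lemma~\ref{lemma:itopenn-app} gives \(\App M{\Lfix M}\in\Itopen\Phi aA\), and then Lemma~\ref{lemma:itopenn-lfix} closes the loop to \(\Lfix M\in\Itopen\Phi aA\). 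The second, and the main obstacle, is the sum rule \Itradd{}: its subject \(M_0+M_1\) is not \Simplicit{}, so it cannot literally be placed in a machine state, and there is no packaged stability lemma for \(+\). Here I would argue directly, using the inductive hypothesis \(M_i\in\Itopen\Phi aA\) for the chosen summand together with the fact that, once a sum is pushed under the projections and into the linear context \(\Stctx s\), the linear reduction \(\Linred\) distributes it as \(\Tofst{\State{\Rev\delta}{M_0}s}+\Tofst{\State{\Rev\delta}{M_1}s}\), after which the toplevel \(\Msrs\States\) rewriting splits it into the two summand states; since \(M_i\) already reaches a multiset containing \(\State{\Pempty}{\Num\nu}{\Stempty}\), so does the whole sum. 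Because \(\Ldletv{x_l}{M_0+M_1}\) is again a sum by Figure~\ref{fig:diff-subst}, the same splitting survives each differentiation step and the argument lifts through every level \(n\). Making this distribution interact cleanly with the inductive definition of \(\Itopenn\Phi bBn\) is the one point that is not a pure appeal to an existing lemma, and is where I expect the real work to lie.
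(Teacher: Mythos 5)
Your proof is, in substance, the paper's proof: the same induction on the intersection typing derivation, the same one-lemma-per-rule dispatch (with the side conditions on access words, bits and \(\Sdiff\)-data indeed matching verbatim), the same two-step treatment of \Itrfix{} --- inductive hypotheses on both premises, then Lemma~\ref{lemma:itopenn-app}, then Lemma~\ref{lemma:itopenn-lfix} --- and the same direct appeal to Lemma~\ref{lemma:itopenn-abst} for abstraction. On every case the paper actually writes out, your argument coincides with it.

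The one place you diverge is \Itradd{}, and there the situation is the opposite of what you assume: the paper's proof does not contain this case at all. Its case analysis covers exactly the rules whose subjects are \Simplicit{} constructs, i.e.\ it implicitly reads the theorem as being about sum-free terms, which is all that its application (Theorem~\ref{cor:sem-adequacy}, whose subject \(M\) must be placed inside a machine state) requires. Your instinct that ``the real work lies there'' is thus understandable, but the repair you sketch cannot be carried out within the stated definitions: a term \(M_0+M_1\) is not \Simplicit{}, so \(\State{\Rev\delta}{M_0+M_1}{s}\) is not a state of \(\States\) at all, hence the condition \(\State{\Rev\delta}{M_0+M_1}{s}\in\Itintc\nu\) defining membership in \(\Itintt{b}{B}\) can never hold, and \(M_0+M_1\notin\Itopenn{\Phi}{b}{B}{0}\) under a literal reading of the definitions. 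Distributing the sum ``once it is pushed into \(\Stctx s\)'' presupposes a state that does not exist; in the machine, sums are split only at the level of state sums \(c_0+c_1\) by \(\Msrs\States\), never by inserting a term-level sum into a state. To genuinely cover \Itradd{} one would have to change the definitions --- either extend the state grammar with a rule \(\State\delta{M_0+M_1}{s}\to\State\delta{M_0}{s}+\State\delta{M_1}{s}\), or redefine \(\Itintt{b}{B}\) through the term-level rewriting \(\Msrs\Lang\) rather than the machine --- not merely prove one more stability lemma. So the correct move is to drop that case (as the paper silently does, restricting attention to \Simplicit{} terms); with that restriction your proof is complete and identical to the paper's.
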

\begin{proof}
  By induction on the derivation of %
  \(\Tseqi\Phi MbB\).
  We write \(\Phi=(x_1:m_1:A_1,\dots,x_k:m_k:A_k)\).

  \Proofcase %
  Assume that \(M=x_l\) for some %
  \(l\in\Eset{1,\dots,k}\) so that the derivation consists of a %
  rule~\ref{rl:itrvar} 
  \(B=A_l\) and \(m_l=\Mset b\) and we have
  \(m_j=\Emptymset\) for \(j\not=l\).  So we can apply
  Lemma~\ref{lemma:itopenn-var} which gives us %
  \(x_l\in\Itopenn\Phi{b}{B}{n}\) for all \(n\in\Nat\).

  \Proofcase %
  Assume that \(M=\Num\kappa\) so that the derivation consists of a
  rule~\ref{rl:itrnum} 
  and hence \(B=\Tnat\) and \(b=\kappa\), and we have
  \(m_j=\Emptymset\) for each \(j\). By Lemma~\ref{lemma:itopenn-num}
  we get \(M\in\Itopenn\Phi{\Num\kappa}\Tnat n\) for each
  \(n\in\Nat\).
  
  \Proofcase %
  Assume that \(M=\Abst xAN\) so that \(B=\Timplp AC\) and %
  \(b=(m,c)\) and we have \(\Tseqi{\Phi,x:m:A}{M}{c}{C}\) and hence %
  \(M\in\Itopen{\Phi,x:m:A}{c}{C}\) by inductive hypothesis from
  which we get %
  \(\Abst xAM\in\Itopen{\Phi}{(m,c)}{\Timpl AC}\) by
  Lemma~\ref{lemma:itopenn-abst}.

  \Proofcase %
  Assume that \(M=\App NQ\) with %
  \(\Tseqi{\Phi_0}{N}{(m,b)}{\Timpl AB}\) with %
  \(m=\Mset{\List a1n}\) and \((\Tseqi{\Phi_j}{Q}{a_j}{A})_{j=1}^n\), %
  and \(\Phi=\sum_{j=0}^n\Phi_j\). By inductive hypothesis we have %
  \(N\in\Itopen{\Phi_0}{(m,b)}{\Timpl AB}\) and %
  \((Q\in\Itopen{\Phi_j}{a_j}{A})_{j=1}^n\) so we get %
  \(\App NQ\in\Itopen{\Phi}{b}{B}\) by Lemma~\ref{lemma:itopenn-app}.

  \Proofcase %
  Assume that \(M=\Ldiff N\) with %
  \(B=\Timplp{\Tdiff A}{\Tdiff C}\),
  \(b=(m',\Tseqact rc)\),
  \((m',(r,m))\in\Sdiff_{\Tsemrel A}\),
  \(\Tseqi\Phi{N}{(m,c)}{\Timpl AC}\).
  By inductive hypothesis we have %
  \(N\in\Itopen{\Phi}{(m,c)}{\Timpl AC}\) and hence %
  \(M\in\Itopen{\Phi}{(m',\Tseqact rc)}{\Timpl{\Tdiff A}{\Tdiff C}}\) %
  by Lemma~\ref{lemma:itopenn-ldiff}.
  
  \Proofcase %
  Assume that \(M=\Lfix N\) with %
  \(\Tseqi{\Phi_0}N{(m,b)}{\Timpl BB}\) with \(m=\Mset{\List b1n}\), %
  \((\Tseqi{\Phi_j}{\Lfix N}{b_j}{B})_{j=1}^n\) and %
  \(\Phi=\sum_{j=0}^n\Phi_j\). By inductive hypothesis we get %
  \(N\in\Itopen{\Phi_0}{(m,b)}{\Timpl BB}\) and %
  \((\Lfix N\in\Itopen{\Phi_j}{b_j}{B})_{j=1}^n\) and hence
  \(\App N{\Lfix N}\in\Itopen\Phi bB\) by
  Lemma~\ref{lemma:itopenn-app}. It follows that
  \(M=\Lfix N\in\Itopen\Phi bB\) by Lemma~\ref{lemma:itopenn-lfix}.

  \Proofcase %
  Assume that \(M=\Lsucc dN\) with
  \(\Tseqi{\Phi}{N}{\Tseqact\delta\kappa}{\Tdiffm d\Tnat}\) where
  \(\kappa\in\Nat\) and \(d=\Len\delta\). By inductive hypothesis we have %
  \(N\in\Itopen{\Phi}{\Tseqact\delta\kappa}{\Tdiffm d\Tnat}\) and hence %
  \(M\in\Itopen{\Phi}{\Tseqact\delta{\kappa+1}}{\Tdiffm d\Tnat}\) %
  by Lemma~\ref{lemma:itopenn-succ}.

  \Proofcase %
  The cases where \(M=\Lpred dN\) are similar, using
  Lemmas~\ref{lemma:itopenn-predz} and~\ref{lemma:itopenn-predp}.
  
  \Proofcase %
  Assume that \(M=\Lif d{N}{Q_0}{Q_1}\) and that
  \(\Tseqi{\Phi_0}{N}{\Tseqact\delta 0}{\Tdiffm d\Tnat}\) (with
  \(d=\Len\delta\)), \(B=\Tdiffm dC\), \(b=\Tseqact\delta c\) with
  \(\Tseqi{\Phi_1}{Q_0}{c}{C}\), \(\Tseq{\Contca\Phi}{Q_1}{C}\) and %
  \(\Phi=\Phi_0+\Phi_1\). By inductive hypothesis we have %
  \(N\in\Itopen{\Phi_0}{\Tseqact\delta 0}{\Tdiffm d\Tnat}\) and %
  \(Q_0\in\Itopen{\Phi_1}{c}{C}\) and hence %
  \(M\in\Itopen{\Phi}{\Tseqact\delta c}{\Tdiffm dC}\) %
  by Lemma~\ref{lemma:itopenn-ifz}.

  \Proofcase %
  The case \(M=\Lif d{N}{Q_0}{Q_1}\),
  \(\Tseqi{\Phi_0}{N}{\Tseqact\delta{(\kappa+1)}}{\Tdiffm d\Tnat}\) %
  (with \(\kappa\in\Nat\) and \(d=\Len\delta\)), \(B=\Tdiffm dC\),
  \(b=\Tseqact\delta c\) with \(\Tseqi{\Phi_1}{Q_1}{c}{C}\),
  \(\Tseq{\Contca\Phi}{Q_0}{C}\) and %
  \(\Phi=\Phi_0+\Phi_1\) is similar to the previous one, using
  Lemma~\ref{lemma:itopenn-ifp}.

  \Proofcase %
  Assume that \(M=\Llet dxNQ\) with %
  \(\Tseqi{\Phi_0}{N}{\Tseqact\delta\kappa}{\Tdiffm d\Tnat}\) and %
  \(\Tseqi{\Phi_1,x:k\Mset\kappa:\Tnat}{Q}{b}{B}\), %
  \(d=\Len\delta\) and \(\Phi=\Phi_0+\Phi_1\). By inductive hypothesis
  we have %
  \(N\in\Itopen{\Phi_0}{\Tseqact\delta\kappa}{\Tdiffm d\Tnat}\) and %
  \(Q\in\Itopen{\Phi_1,x:k\Mset\kappa:\Tnat}{b}{B}\) and hence by
  Lemma~\ref{lemma:itopenn-let} we have %
  \(\Llet dxNQ\in\Itopen\Phi{\Tseqact\delta b}{\Tdiffm dB}\).

  \Proofcase %
  Assume that \(M=\Linjd rdN\) with %
  \(\Tseqi\Phi N{\Tseqact\delta c}{\Tdiffm dC}\) so that %
  \(B=\Tdiffm{d+1}C\) and \(b=\Tseqact{\delta r}c\) (of course
  \(d=\Len\delta\)). By inductive hypothesis we have %
  \(N\in\Itopen{\Phi}{\Tseqact\delta c}{\Tdiffm dC}\) and hence %
  \(M\in\Itopen\Phi{\Tseqact{\delta r}c}{\Tdiffm{d+1}C}\)
  by Lemma~\ref{lemma:itopenn-inj}.
  
  \Proofcase %
  Assume that \(M=\Lsumd dN\) with %
  \(\Tseqi\Phi N{\Tseqact{\delta r_0r_1} c}{\Tdiffm{d+2}C}\) so that %
  \(B=\Tdiffm{d+1}C\) and \(b=\Tseqact{\delta r}c\) with
  \(r=r_0+r_1\in\Into\) (of course \(d=\Len\delta\)). By inductive
  hypothesis we have %
  \(N\in\Itopen{\Phi}{\Tseqact{\delta r_0r_1}c}{\Tdiffm{d+2}C}\) and hence %
  \(M\in\Itopen\Phi{\Tseqact{\delta r}c}{\Tdiffm{d+1}C}\) by
  Lemma~\ref{lemma:itopenn-lsum}.
  
  \Proofcase %
  Assume that \(M=\Lprojd rdN\) with %
  \(\Tseqi{\Phi}{N}{\Tseqact{\delta r}c}{\Tdiffm{d+1}C}\) so that %
  \(B=\Tdiffm dC\) and \(b=\Tseqact\delta c\) (of course
  \(d=\Len\delta\)). By inductive hypothesis we have %
  \(N\in\Itopen{\Phi}{\Tseqact{\delta r}c}{\Tdiffm{d+1}C}\) and hence
  \(M\in\Itopen{\Phi}{\Tseqact{\delta}c}{\Tdiffm{d}C}\) by %
  Lemma~\ref{lemma:itopenn-proj}.

  \Proofcase %
  Assume that \(M=\Lflipdl dlN\) with %
  \(\Tseqi{\Phi}N{\Tseqact{\delta\alpha}c}{\Tdiffm{d+l+2}C}\) so
  that %
  \(B=\Tdiffm{d+l+2}C\) and \(b=\Tseqact{\delta\Rcycle\alpha}c\) (of
  course \(\Len\delta=d\) and \(\Len\alpha=l+2\)). %
  By inductive hypothesis we have %
  \(N\in\Itopen{\Phi}{\Tseqact{\delta\alpha}{c}}{\Tdiffm{d+l+2}C}\) %
  and hence
  \(M\in\Itopen{\Phi}{\Tseqact{\delta\Rcycle\alpha}{c}}{\Tdiffm{d+l+2}C}\) %
  by Lemma~\ref{lemma:itopenn-flip}.
\end{proof}

\begin{theorem}%
  \label{cor:sem-adequacy}
  Let \(M\) be a closed term and let \(\nu\in\Nat\) be such that
  \(\Tseqi{}M\nu\Tnat\). %
  Then %
  \(
  \Mset{\State\Pempty M\Stempty}
  \Rel{\Trcl{\Rsred{\Msrs\States}}}
  C=C_0+\Mset{\State\Pempty{\Num\nu}\Stempty}
  \) %
  for some multiset of well typed states \(C_0\) such that \(C\) is
  \(\cL\)-summable in any model \(\cL\).
\end{theorem}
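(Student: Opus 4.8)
The plan is to obtain this corollary from the adequacy Theorem~\ref{th:cdpcf-adequacy} taken at reducibility level $n=0$; the substantive combinatorial content has already been absorbed into the reducibility sets, so what is left is a careful unfolding of their definitions together with one application of the summability preservation result.

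First I would instantiate Theorem~\ref{th:cdpcf-adequacy} on the hypothesis $\Tseqi{}M\nu\Tnat$: the context being empty, it yields $M\in\Itopen{}{\nu}{\Tnat}$ and hence in particular $M\in\Itopenn{}{\nu}{\Tnat}0$. Since the only element $\Vect P$ of $\Itintctx{}$ is the empty substitution, for which $\Substbis M{\Vect P}=M$, this unfolds to $M\in\Itintt{\nu}{\Tnat}$. Now $\Tnat$ is sharp and, writing $\nu=\Tseqact{\Seqempty}\nu$ with the empty access word, the defining clause of $\Itintt{\nu}{\Tnat}$ reads: for every $\kappa\in\Nat$ and every $s\in\Itints{\nu}{\Tnat}{\kappa}$ one has $\State\Pempty Ms\in\Itintc\kappa$, using $\Rev{\Seqempty}=\Pempty$.

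The key step is the observation that the empty stack lies in the appropriate reducibility set, namely $\Stempty\in\Itints{\nu}{\Tnat}{\nu}$. Indeed $\Stseq\Stempty\Tnat$ holds, and membership amounts to $\State\Pempty{\Num\nu}\Stempty\in\Itintc\nu$, which is immediate from reflexivity of $\Rel{\Trcl{\Rsred{\Msrs\States}}}$ by taking the witness multiset to be $0$, since $\State\Pempty{\Num\nu}\Stempty$ already has the shape demanded by the definition of $\Itintc\nu$. Instantiating the clause above with $\kappa=\nu$ and $s=\Stempty$ then gives $\State\Pempty M\Stempty\in\Itintc\nu$, which by definition of $\Itintc\nu$ means exactly that there is a multiset of well-typed states $C_0$ with $\Mset{\State\Pempty M\Stempty}\Rel{\Trcl{\Rsred{\Msrs\States}}}C_0+\Mset{\State\Pempty{\Num\nu}\Stempty}$.

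Finally I would settle $\cL$-summability of $C=C_0+\Mset{\State\Pempty{\Num\nu}\Stempty}$ in an arbitrary model $\cL$. The starting singleton $\Mset{\State\Pempty M\Stempty}$ is $\cL$-summable because a one-element family is trivially summable, its single semantics being $\Psem{\Tofst{\State\Pempty M\Stempty}}{}=\Psem M{}$ as $\Stctx\Stempty=\Echole$. Theorem~\ref{cor:K-mset-summability-pres} propagates $\cL$-summability along each $\Rel{\Rsred{\Msrs\States}}$ step and hence along the reflexive--transitive closure, so $C$ is $\cL$-summable. The only mildly delicate points are the unfolding of $\Itintt{\nu}{\Tnat}$ through the empty access word and empty context and the reflexivity remark that makes $\Stempty$ reducible; neither is a real obstacle, all the weight of the argument sitting in Theorems~\ref{th:cdpcf-adequacy} and~\ref{cor:K-mset-summability-pres}.
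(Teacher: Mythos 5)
Your proof is correct and matches the paper's intended argument: the paper states this result as an immediate corollary of Theorem~\ref{th:cdpcf-adequacy} (with no written proof), and your unfolding of \(\Itopen{}{\nu}{\Tnat}\) at level \(n=0\), the observation that \(\Stempty\in\Itints{\nu}{\Tnat}{\nu}\) via the zero-step reduction, and the propagation of \(\cL\)-summability along the reduction by Theorem~\ref{cor:K-mset-summability-pres} is exactly the argument that omission presupposes. One nitpick: well-typedness of the states in \(C_0\) is not literally part of the definition of \(\Itintc\nu\), but it follows from Proposition~\ref{prop:machine-subj-red} lifted to multisets, or equivalently from the \(\cL\)-summability you establish, whose definition includes well-typedness.
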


\section{Determinism and probabilities} %
\label{sec:determinism}
Our goal is to refine Theorem~\ref{cor:sem-adequacy} by showing that
none of the elements of \(C_0\) reduces to a value.
To this end we use the model of probabilistic coherence spaces that we
have presented in Section~\ref{sec:PCS-definition}.

\subsection{Integers and fixpoints in \(\PCOH\)} %
\label{sec:pcoh-fix}
The category \(\PCOH\) is \Cpolike{} in the sense of
Section~\ref{sec:cat-recursion}. The order relation \(\leq\) on
morphisms in \(\PCOH(X,Y)\) is given by \(s\leq t\) iff
\(\forall(a,b)\in\Web{\Limpl XY}\ s_{a,b}\leq t_{a,b}\). It is a
standard fact that for any PCS \(X\) the poset \(\Pcoh X\) (equipped
with the pointwise order of \(\Web X\)-indexed families of
non-negative real numbers) is a cpo and that all the operations
(composition of morphisms, tensor product, \(\Excl\_\) functor) preserve
the lubs of directed families of morphisms, that is \(\PCOH\) is
\Cpolike{}, see for instance~\cite{DanosEhrhard08}.

As a consequence for any PCS \(X\) we have a fixpoint operator
\(\Sfix\in\PCOH(\Simpl XX,X)\) which is characterized by %
\(\Fun\Sfix(t)=\sup_{n\in\Nat}\Fun t^n(0)\). Concerning derivatives,
this operator satisfies Theorem~\ref{th:sdfun-sfix}.

For the categorical axiomatization of integers we refer to
Section~\ref{sec:basic-categ-constr}.  We define \(\Snat\) by %
\(\Web\Snat=\Nat\) and \(u\in\Pcoh\Snat\) if \(u\in\Realpto\Nat\)
satisfies \(\sum_{\nu\in\Nat}u_\nu\leq 1\). This structure \(\Snat\)
is a PCS such that \(\Pcoh{\Orth\Snat}=\Intcc01^\Nat\) as easily
checked. Notice that \(\Norm u_\Snat=\sum_{\nu\in\Nat}u_\nu\). %
This is an \(\ell^1\) norm whereas in \(\Orth\Snat\) the norm is
\(\Norm{u'}_{\Orth\Snat}=\sup_{\nu\in\Nat}u'_\nu\) which is an
\(\ell^\infty\) norm. The matrix %
\(\chi\in\Realpto{\Web{\Limpl{\Plus\Sone\Snat}{\Snat}}}\) defined by %
\(\chi_{(0,\Sonelem),n}=\Kronecker 0n\) and %
\(\chi_{(1,n'),n}=\Kronecker{n'+1}n\) satisfies %
\(\chi\in\PCOH(\Plus\Sone\Snat,\Snat)\) and is an isomorphism between
these two PCSs. %
Then, given \(t\in\PCOH(\Plus\Sone X,X)\), let
\((s_n\in\PCOH(\Snat,X))_{n\in\Nat}\) be the sequence of morphisms
defined by %
\begin{align*}
  s_0=0
  \text{\quad and\quad}
  s_{n+1}=t\Compl(\Plus\Sone{s_n})\Compl\Funinv\chi\,.
\end{align*}
An easy induction shows that \(\forall n\in\Nat\ s_n\leq s_{n+1}\) and
so %
\(s=\sup_{n\in\Nat}s_n\in\PCOH(\Snat,X)\) satisfies %
\(s=t\Compl(\Plus\Sone s)\Compl\Funinv\chi\). This means that
\((\Snat,\chi)\) is an initial algebra for the functor
\(\Plus\Sone\_:\PCOH\to\PCOH\) and so \(\PCOH\) %
satisfies~\ref{ax:laxint}. 

The associated morphisms \(\Ssuc,\Spred\in\PCOH(\Snat,\Snat)\) are
characterized by %
\(\Matappa\Ssuc u=\sum_{\nu\in\Nat}u_\nu\Base{\nu+1}\) and %
\(\Matappa\Spred u=u_0\Base 0+\sum_{\nu\in\Nat}u_{\nu+1}\Base\nu\). %
The morphism \(\Sif\in\PCOH(\Tens\Snat{\Withp XX},X)\) is
characterized by %
\[
  \Matappa{\Sif}{\Tensp{u}{\Tuple{x^0,x^1}}}
  =u_0x^0+\big(\sum_{\nu=1}^\infty u_\nu\big)x^1\,.
\]
Last the morphism %
\(\Slet\in\PCOH(\Tens\Snat{(\Limpl{\Excl\Snat}{X})},X)\) is
characterized by %
\[
  \Matappa\Slet{\Tensp ut}=\sum_{\nu\in\Nat}u_\nu\Fun t(\Base\nu)\,.
\]

So we have an interpretation of terms and states in \(\PCOH\) which is
sound, that is, invariant by reduction.
More precisely, following the general pattern of
Section~\ref{sec:types-terms-interp}, we associate with each type
\(A\) an object \(\Tsempcoh A\) of \(\PCOH\) in such a way that %
\(\Tsempcoh{\Tdiffm d\Tnat}=\Sdfun^d\Snat\), %
\(\Tsempcoh{\Timpl AB}=(\Simpl{\Tsempcoh A}{\Tsempcoh B})\).

And with any term \(M\) such that %
\(\Tseq{x_1:A_1,\dots,x_k:A_k}MB\) we can associate the morphism %
\(\Psempcoh M\Gamma\in\Kl\PCOH(\Bwith_{i=1}^k\Tsempcoh{A_i},\Tsempcoh
B)\) and this interpretation satisfies that if %
\(M\Rel\Red M'\) then %
\(\Psempcoh M\Gamma=\Psempcoh{M'}\Gamma\).
Remember that the reduction relation \(\Red\) can be extended into the
reduction relation %
\(\Rsred{\Msrs\Lang}\) on \(\PCOH\)-summable multisets of terms, that
is to multisets \(S=\Mset{\List M1n}\) such that
\((\Tseq\Gamma{M_j}B)_{j=1}^n\) and %
\( \Psem S\Gamma=\sum_{j=1}^n\Psempcoh{M_j}\Gamma
\in\Kl\PCOH(\Bwith_{i=1}^k\Tsempcoh{A_i},\Tsempcoh B) \) and this
extended relation satisfies %
\(S\Rel{\Rsred{\Msrs\Lang}}S'\Implies\Psempcoh
S\Gamma=\Psempcoh{S'}\Gamma\).

\subsection{A forgetful functor} %
\label{sec:pcoh-rel-forget}
Given \(s\in\PCOH(X,Y)\), we set
\(
\Qforg s=\Eset{(a,b)\in\Web X\times\Web Y\St s_{a,b}\not=0}
\in\REL(\Web X,\Web Y)
\).

\begin{theorem}
  The operation \(\Qforg\) extended to objects by %
  \(\Qforg X=\Web X\) is a functor \(\PCOH\to\REL\) which preserves
  all the structure of model of \LL{}.
\end{theorem}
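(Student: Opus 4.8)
The plan is to prove functoriality first and then run through the pieces of the $\LL$-structure one by one, in each case comparing the support of the $\PCOH$-matrix with the explicit relational description recalled in Section~\ref{sec:rel-model}. The one genuinely delicate phenomenon, which recurs throughout, is that $\PCOH$ lives over the non-negative reals: a sum of products of matrix entries vanishes \emph{exactly} when every summand vanishes. This absence of cancellation is precisely what makes $\Qforg$ compatible with all the operations that involve sums, and I expect it to be the conceptual crux of the whole argument.

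First I would check functoriality. That $\Qforg s$ is a morphism of $\REL$ is immediate, since by definition it is a subset of $\Web X\times\Web Y$, and $\Qforg X=\Web X$ is exactly the object of $\REL$ underlying $X$. Since the identity of $X$ in $\PCOH$ is the diagonal matrix $(\Id_X)_{a,b}=\Kronecker ab$, its support is the diagonal relation, which is the identity of $\Web X$ in $\REL$. For composition, given $s\in\PCOH(X,Y)$ and $t\in\PCOH(Y,Z)$ we have $(t\Compl s)_{a,c}=\sum_{b\in\Web Y}s_{a,b}t_{b,c}$ with all entries in $\Realp$; hence $(t\Compl s)_{a,c}\neq 0$ iff there is some $b$ with $s_{a,b}\neq0$ and $t_{b,c}\neq0$, which is precisely the condition $(a,c)\in\Qforg t\Compl\Qforg s$ in $\REL$. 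This yields $\Qforg(t\Compl s)=\Qforg t\Compl\Qforg s$, the first and most important use of non-negativity.

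For preservation of the $\LL$ structure I would observe that on objects the two models agree on the nose: $\Web{\Tens XY}=\Web X\times\Web Y$, $\Web{\Limpl XY}=\Web X\times\Web Y$, $\Web{\Orth X}=\Web X$, $\Web{\Bwith_j X_j}=\Web{\Bplus_j X_j}=\bigcup_j\Eset j\times\Web{X_j}$ and $\Web{\Excl X}=\Mfin{\Web X}$, all matching the corresponding constructions in $\REL$. It then suffices to verify that $\Qforg$ carries each structural morphism of $\PCOH$ to the corresponding relation and commutes with $\ITens$ and $\Bwith$ on morphisms; since $\Qforg$ is already known to be a functor and all the coherence axioms of a model of $\LL$ are equations between composites of these data, matching the generators componentwise is enough to validate them. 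Each such check is a direct inspection: for the tensor of morphisms $(\Tens st)_{(a,a'),(b,b')}=s_{a,b}t_{a',b'}$ the support is $\Tens{\Qforg s}{\Qforg t}$; the $*$-autonomous/closed structure and the Seely isos $\Seelyz,\Seelyt$ have $\Eset{0,1}$-valued matrices whose supports are exactly the relations recalled in Section~\ref{sec:rel-model}; and the same holds for $\Der X$, $\Digg X$, the projections $\Proj j$, and the (co)tuplings, all of which are Kronecker symbols.

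The remaining and most substantial case is the comonad endofunctor on morphisms, where I again rely on the no-cancellation principle. For $t\in\PCOH(X,Y)$ we have $(\Excl t)_{m,p}=\sum_{r\in\Mstrans mp}\Multinomb pr\,t^r$, where each coefficient $\Multinomb pr$ lies in $\Nat\setminus\Eset 0$ and $t^r=\prod_{(a,b)}t_{a,b}^{r(a,b)}\in\Realp$. Hence $(\Excl t)_{m,p}\neq0$ iff some $r\in\Mstrans mp$ satisfies $t^r\neq0$, that is, there is a transport $r$ of $m$ onto $p$ supported only on pairs $(a,b)$ with $t_{a,b}\neq0$. Such transports correspond bijectively to the pairings $m=\Mset{\List a1k}$, $p=\Mset{\List b1k}$ with $(a_i,b_i)\in\Qforg t$ that define $\Excl{\Qforg t}$ in $\REL$, so $\Qforg(\Excl t)=\Excl{\Qforg t}$. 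I expect this matching of transport multisets with relational pairings, together with the strict positivity of the multinomial coefficients, to be the step requiring real care; everything else reduces to reading off supports from the explicit formulas of Sections~\ref{sec:rel-model} and~\ref{sec:PCS-definition}.
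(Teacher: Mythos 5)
Your proposal is correct and follows essentially the same route as the paper's own proof: functoriality via the no-cancellation property of non-negative sums, and preservation of $\Excl{}$ via the correspondence between transport multisets $r\in\Mstrans mp$ with $t^r\neq 0$ and the relational pairings defining $\Excl{\Qforg t}$, with everything else read off from supports of Kronecker-valued matrices. The only item the paper's proof records that you omit is that $\Qforg$ also carries the fixpoint operator $\Sfix$ of $\PCOH$ to that of $\REL$ (by local continuity of $\Qforg$); strictly speaking this is beyond the $\LL$ structure claimed in the statement, but it is used later for the term-interpretation corollary.
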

\begin{proof}
  This is essentially trivial. Let us prove functoriality: let
  \(s\in\PCOH(X,Y)\) and \(t\in\PCOH(Y,Z)\). If
  \((a,c)\in\Qforg(\Matapp ts)\) then
  \(\sum_{b\in\Web Y}s_{a,b}t_{b,c}\not=0\) and so there must be
  \(b\in\Web Y\) such that \(s_{a,b}\not=0\) and \(t_{b,c}\not=0\) and
  hence \((a,c)\in\Matapp{(\Qforg t)}{(\Qforg s)}\). %
  Conversely assume that \((a,c)\in\Matapp{(\Qforg t)}{(\Qforg s)}\)
  and let \(b\in\Web Y\) be such that \(s_{a,b}\not=0\) and
  \(t_{b,c}\not=0\). Since all coefficients are non-negative we have
  \(\sum_{b'\in\Web Y}s_{a,b'}t_{b',c}\geq s_{a,b}t_{b,c}>0\) and
  hence \((a,c)\in\Qforg(\Matapp ts)\).

  As another example, let us prove that if \(s\in\PCOH(X,Y)\) then %
  \(\Qforg(\Excl s)=\Exclp{\Qforg s}\). Let
  \((m,p)\in\Qforg(\Excl s)\), so let \(r\in\Mstrans mp\) be such that
  \(s^r=\prod_{(a,b)\in\Web X\times\Web
    Y}s_{a,b}^{r(a,b)}\not=0\). This implies that
  \(\Supp r\subseteq\Qforg s\) and hence we have
  \(r\in\Exclp{\Qforg s}\). %
  Conversely if \((m,p)\in\Exclp{\Qforg s}\) we can write %
  \(m=\Mset{\List a1n}\) and \(p=\Mset{\List b1n}\) in such a way that %
  \(((a_i,b_i)\in\Qforg s)_{i=1}^k\) which means that %
  \(\prod_{i=1}^ks_{a_i,b_i}\not=0\). Now setting %
  \(r=\Mset{(a_1,b_1),\dots,(a_k,b_k)}\) we have %
  \(r\in\Mstrans mp\) and \(s^r\not=0\) and hence %
  \((m,p)\in\Qforg(\Excl s)\).

  To prove that \(\Qforg\) applied to the \(\Sfix\) operator of
  \(\PCOH\) yields the \(\Sfix\) operator of \(\REL\), it suffices to
  observe that the functor \(\Qforg\) is locally continuous.
\end{proof}

\begin{theorem}%
  \label{cor:psem-qforg}
  For any type \(A\) we have \(\Qforg\Tsempcoh A=\Tsemrel A\) and for
  any term \(M\) such that %
  \(\Tseq\Gamma MB\) we have
  \(\Qforg\Psempcoh M\Gamma=\Psemrel M\Gamma\), and similarly for
  stacks and states.
\end{theorem}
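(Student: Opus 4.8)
The plan is to prove both equalities by induction, leaning on the preceding theorem, which already guarantees that $\Qforg$ is an $\LL$-functor: it preserves composition, the cartesian and monoidal closed structure, the exponential $\Excl\_$, and — being locally continuous — the fixpoint operators. First I would settle the claim on types by induction on $A$; then I would deduce the claim on terms by induction on $M$, invoking Theorem~\ref{th:sem-defined-sum} so that I may argue on the term itself rather than on a fixed typing derivation (this is what lets the sum case behave well); finally the cases of stacks and states are immediate from the term case, since $\Psem s{}=\Psem{\Stctx s[x]}{x:E}$ and $\Psem c{}=\Psem{\Tofst c}{}$ by definition.

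For types the only clauses are $\Tsempcoh{\Tdiffm d\Tnat}=\Sdfun^d\Snat$ and $\Tsempcoh{\Simpl AB}=\Simplp{\Tsempcoh A}{\Tsempcoh B}$. Since $\Qforg$ preserves $\Limpl{}{}$, the arrow clause reduces to $\Qforg\Sdfun X=\Sdfun\Qforg X$. In the elementary setting $\Sdfun$ acts on objects as $\Sfun=\Limpl\Into\_$, and $\Qforg\Into=\Web\Into=\Eset{0,1}=\Into$ in $\REL$, so $\Qforg\Sfun X=\Limplp{\Qforg\Into}{\Qforg X}=\Sfun\Qforg X$; together with $\Web{\Snat}=\Nat$ and a $d$-fold iteration this yields $\Qforg\Tsempcoh{\Tdiffm d\Tnat}=\Sdfun^d\Nat=\Tsemrel{\Tdiffm d\Tnat}$.

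The core of the term induction is to verify that $\Qforg$ sends each basic morphism of Figure~\ref{fig:term-interp} to its $\REL$ counterpart, after which the general case follows from functoriality (preservation of composition, tupling, and currying). For the arithmetic morphisms $\Szero,\Ssuc,\Spred,\Mlin\Sif,\Mlin\Slet$ one compares the explicit matrices of Section~\ref{sec:pcoh-fix} with the explicit relations of Section~\ref{sec:complements-REL}: in each case a coefficient is nonzero exactly when the corresponding pair lies in the $\REL$ morphism. For the coherent differential transformations the crucial point is that $\Qforg\Sdiffca=\Sdiffca$ (the left side being the $\PCOH$ coalgebra, the right the $\REL$ one): the matrix $(\Sdiffca)_{r,\Mset{\List r1k}}=\Kronecker r{\sum_i r_i}$ of Section~\ref{sec:pcoh-can-diff-struct} has support precisely $\Eset{(r,\Mset{\List r1k})\St r=\sum_i r_i}$, which is the relation $\Sdiffca$ of Section~\ref{sec:rel-model}. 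Since in both models $\Sdiff$ is the Curry transpose of the same composite built from $\Sdiffca$, the Seely morphisms and $\Evlin$ — all preserved by $\Qforg$ — we obtain $\Qforg\Sdiff_X=\Sdiff_{\Qforg X}$, and hence the preservation of $\Sproj i,\Sin i,\Sfunadd=\Sdfmult,\Sflip,\Sflipl l$ and of the internalized derivative $\Sdfunint$.

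The step I expect to demand the most care is the sum constructor $\Lplus{M_0}{M_1}$, whose semantics is $\Psem{M_0}\Gamma+\Psem{M_1}\Gamma$. Here I must use that addition of morphisms in $\PCOH$ becomes union in $\REL$, i.e.\ $\Qforg(s+t)=\Qforg s\cup\Qforg t$. This holds \emph{precisely} because all coefficients are nonnegative, so no cancellation can annihilate a nonzero entry: $(s+t)_{a,b}\neq 0$ iff $s_{a,b}\neq 0$ or $t_{a,b}\neq 0$. This is exactly where the forgetful functor is seen to respect the partially defined additive structure, and it is what makes the clauses corresponding to the rules \Trprojt, \Trprojd{} and \Trlin{} (handled as in the proof of Theorem~\ref{th:sem-defined-sum}) go through. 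Everything else is routine functorial bookkeeping, and the fixpoint case is already covered by local continuity of $\Qforg$.
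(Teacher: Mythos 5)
Your proposal is correct and follows exactly the route the paper intends: the paper states this result without proof, as an immediate consequence of the preceding theorem that $\Qforg$ is a functor preserving all the $\LL$ structure (with fixpoints handled by local continuity), since the interpretations in $\PCOH$ and $\REL$ are given by the same categorical formulas. Your elaboration---the support computation showing $\Qforg\Sdiffca=\Sdiffca$ (hence $\Qforg\Sdiff_X=\Sdiff_{\Qforg X}$), and the observation that nonnegativity of coefficients gives $\Qforg(s+t)=\Qforg s\cup\Qforg t$ for the sum case---simply fills in the details the paper leaves implicit.
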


\begin{theorem}
  For any term \(M\) such that \(\Tseq{(x_i:A_i)_{1=1}^k} MB\) all
  coefficients of the matrix
  \(\Psempcoh M\Gamma\in\PCOH(\Bwith_{i=1}^k\Tsempcoh{A_i},\Tsempcoh B)\)
  belong to \(\Nat\). The same holds for stacks and states.
\end{theorem}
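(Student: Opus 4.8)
The plan is to argue by induction on the term $M$, following the clause-by-clause definition of $\Psempcoh M\Gamma$ recorded in Figure~\ref{fig:term-interp}; by Theorem~\ref{th:sem-defined-sum} the interpretation depends only on $M$ and not on a chosen typing derivation, so this induction is well founded, and the sum clause $\Psempcoh{M_0+M_1}\Gamma=\Psempcoh{M_0}\Gamma+\Psempcoh{M_1}\Gamma$ causes no difficulty since componentwise sums of $\Nat$-valued matrices are $\Nat$-valued. The whole statement then reduces to a closure lemma: the class of $\PCOH$ morphisms whose matrices take all their values in $\Nat$ is stable under the operations used in Figure~\ref{fig:term-interp}. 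The one principle that drives every verification is the following finiteness observation: since each $\Psempcoh M\Gamma$ is a genuine morphism of $\PCOH$, all its coefficients are finite reals; hence whenever a coefficient is presented as an a priori infinite sum $\sum_i n_i$ of natural numbers, finiteness forces all but finitely many $n_i$ to vanish, and the value is again a natural number.

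First I would record that the basic linear morphisms entering the interpretation have $\Nat$-valued (indeed $\{0,1\}$-valued) matrices: the projections $\Proj i$, the comonad structure maps $\Der X$ and $\Digg X$, the linear evaluation $\Evlin$ (hence $\Ev$), the numerals $\Snum n$, the arithmetic morphisms $\Ssuc$, $\Spred$, the morphisms $\Mlin\Sif$, $\Mlin\Slet$, the summability maps $\Sproj i$, $\Sin j$, $\Sdfmult$ and the canonical flips, all computed explicitly in Sections~\ref{sec:rel-model}--\ref{sec:pcoh-can-diff-struct} and in Section~\ref{sec:pcoh-fix}. The only structural morphism with coefficients genuinely exceeding $1$ is the differential $\Sdiff_X$, whose matrix (Section~\ref{sec:pcoh-can-diff-struct}) has entries $1$ or $m(a)\in\Nat$, so it too is $\Nat$-valued. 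Next I would check that the combinators preserve the property. Tupling $\Tuple{\cdot}$ and currying $\Cur$ merely permute coefficients. The exponential $\Excl s$ has entries $\sum_{r\in\Mstrans mp}\Multinomb pr s^r$, a finite sum of products of entries of $s$ weighted by the integer multinomial coefficients $\Multinomb pr\in\Nat$, hence $\Nat$-valued whenever $s$ is. Linear composition $\Matapp st$ has entries $\sum_b t_{a,b}s_{b,c}$, an a priori infinite but, by the finiteness principle, effectively finite sum of products of naturals, hence $\Nat$-valued; combining this with the previous point and $\Prom f=\Excl f\Compl\Digg X$ shows that Kleisli composition $\Comp$ preserves $\Nat$-valuedness. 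Finally $\Sfun$ preserves $\Nat$-valuedness (it only relabels entries) and, by the explicit formula for $\Sdfun$ on morphisms in Section~\ref{sec:pcoh-can-diff-struct}, so do $\Sdfun$ and each $\Sdfun^d$; consequently $\Sdfunint$ and the additive strengths $\Sdfstr^i$, being assembled by these operations from $\Nat$-valued pieces, are $\Nat$-valued as well.

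The one remaining construct is the fixpoint clause $\Sfix^X\Comp\Psempcoh N\Gamma$. Here I would use that $\Sfix^X=\sup_{n}\Sfix^X_n$, that each approximant $\Sfix^X_n$ is built from $0$, $\Ev$, tupling and composition, hence is $\Nat$-valued by the closure properties, and that the sequence is monotone for the pointwise order. For each fixed coefficient the corresponding real sequence is then a nondecreasing sequence of natural numbers bounded above by the value of the limit, which is finite because $\Sfix^X$ is a morphism; such a sequence is eventually constant, so its supremum lies in $\Nat$. This establishes the claim for terms. The cases of stacks and states are then immediate: by definition $\Psem s{}=\Psempcoh{\Stctx s[x]}{x:E}$ and $\Psem c{}=\Psempcoh{\Tofst c}{}$ are interpretations of terms, so they inherit $\Nat$-valued matrices from the term case. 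The main obstacle is not any single computation but stating and using the finiteness principle uniformly: at every place where an infinite sum formally appears (composition, the marginals defining the exponential, and the fixpoint supremum) one must invoke that the assembled object is a bona fide $\PCOH$ morphism to conclude that the natural-number-valued total is attained by finitely many summands and is therefore itself a natural number.
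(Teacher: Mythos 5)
Your proof is correct and follows essentially the same route as the paper's: the paper's own one-line proof simply points out the two nontrivial closure facts — that \(\Excl\_\) preserves \(\Nat\)-valued matrices and that \(\Sdiff_X\) is \(\Nat\)-valued — leaving implicit the structural induction over the clauses of Figure~\ref{fig:term-interp}, which is exactly the induction you carry out. Your explicit use of the finiteness of \(\PCOH\) coefficients to handle the a priori infinite sums (in composition and in the fixpoint supremum) is a detail the paper leaves tacit, and you handle it correctly.
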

\begin{proof}
  It suffices to observe that if \(t\in\PCOH(X,Y)\) belongs to
  \(\Nat^{\Web X\times\Web Y}\), then
  \(\Excl t\in\Nat^{\Web{\Excl X}\times\Web{\Excl Y}}\) and that
  \(\Sdiff_X\in\Nat^{\Web{\Excl{\Sfun X}}\times\Web{\Sfun{\Excl X}}}\).
\end{proof}

\begin{remark}
  Of course the above property is lost if we extend the \(\Lang\) with
  probabilistic choice, which is perfectly possible and compatible
  with the \(\PCOH\) semantics.
\end{remark}

\begin{theorem}%
  \label{cor:psem-pcoh-int}
  If \(\Mset{\List c1n}\) is a \(\PCOH\)-summable multiset of
  states, we have only the two following possibilities:
  \begin{itemize}
  \item either \(\Psempcoh{c_i}{}=0\) for all \(i=1,\dots,n\)
  \item or there is exactly one \(i\in\Eset{1,\dots,n}\) and one
    \(\nu\in\Nat\) such that \(\Psempcoh{c_i}{}=\Base\nu\), and
    \(\Psempcoh{c_j}{}=0\) for \(j\not=i\).
  \end{itemize}
\end{theorem}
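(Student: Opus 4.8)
The plan is to reduce the statement to an elementary fact about non-negative integers whose sum is at most $1$. Recall that $\Psempcoh{c_i}{}\in\PCOH(\Sone,\Snat)$, and that through the tensor unit $\Sone=(\Eset\Sonelem,\Intcc01)$ such a morphism is entirely determined by the element $u_i=\Matappa{\Psempcoh{c_i}{}}{1}\in\Pcoh\Snat$ obtained by applying it to the top element $1$ of $\Pcoh\Sone=\Intcc01$; concretely $u_i=((\Psempcoh{c_i}{})_{\Sonelem,\nu})_{\nu\in\Nat}\in\Realpto\Nat$ with $\sum_{\nu\in\Nat}(u_i)_\nu\leq 1$, so that $\Psempcoh{c_i}{}=0$ iff $u_i=0$ and $\Psempcoh{c_i}{}=\Base\nu$ iff $u_i=\Base\nu$. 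First I would record the two facts driving the argument. By the preceding theorem all coefficients of $\Psempcoh{c_i}{}$ are non-negative integers, so $(u_i)_\nu\in\Nat$ for all $i$ and $\nu$. And by $\PCOH$-summability of the multiset, the pointwise sum $u=\sum_{i=1}^n u_i$ lies in $\Pcoh\Snat$, hence $\sum_{\nu\in\Nat}u_\nu\leq 1$, where each $u_\nu=\sum_{i=1}^n(u_i)_\nu$ is itself a non-negative integer.

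Next I would exploit integrality at the level of $u$. Since every $u_\nu\in\Nat$ and $\sum_{\nu\in\Nat}u_\nu\leq 1$, there are only two possibilities: either $u_\nu=0$ for all $\nu$, that is $u=0$; or there is a unique $\nu$ with $u_\nu=1$ and $u_{\nu'}=0$ for all $\nu'\neq\nu$, that is $u=\Base\nu$.

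Finally I would descend from $u$ to the individual $u_i$ using non-negativity. In the first case, $\sum_{i=1}^n(u_i)_\nu=0$ with all summands $\geq 0$ forces $(u_i)_\nu=0$ for every $i$ and $\nu$, i.e.\ $\Psempcoh{c_i}{}=0$ for all $i$. In the second case, for each $\nu'\neq\nu$ the equation $\sum_{i=1}^n(u_i)_{\nu'}=0$ again forces $(u_i)_{\nu'}=0$ for all $i$; and $\sum_{i=1}^n(u_i)_\nu=1$ with each $(u_i)_\nu\in\Nat$ forces exactly one index $i_0$ with $(u_{i_0})_\nu=1$ and $(u_j)_\nu=0$ for $j\neq i_0$. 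Hence $u_{i_0}=\Base\nu$ and $u_j=0$ for $j\neq i_0$, which is exactly the second alternative of the statement.

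The argument involves no genuine obstacle; the only point requiring care is the identification of $\PCOH(\Sone,\Snat)$ with $\Pcoh\Snat$ and the translation of $\PCOH$-summability of the multiset into the single inequality $\sum_{\nu\in\Nat}(\sum_{i=1}^n u_i)_\nu\leq 1$. Once this is in place, the conjunction of integrality (from the preceding theorem) with the bound $\leq 1$ (from summability) leaves no possibility other than the stated dichotomy.
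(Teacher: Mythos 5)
Your proof is correct and follows essentially the same route as the paper, whose entire proof is the one-line observation that an element of \(\Pcoh\Snat\) with coefficients in \(\Nat\) is either \(0\) or \(\Base\nu\) for a unique \(\nu\). You have simply made explicit the surrounding steps the paper leaves implicit: the identification \(\PCOH(\Sone,\Snat)\cong\Pcoh\Snat\), the use of integrality from the preceding theorem, summability giving \(\sum_\nu u_\nu\le 1\) for the total \(u\), and the descent to the individual \(u_i\) by non-negativity.
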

\begin{proof}
  Observe that if \(u\in\Pcoh\Snat\) belongs to \(\Nat^\Nat\) then we
  have either \(u=0\) of \(u=\Base\nu\) for a uniquely determined
  \(\nu\in\Nat\).
\end{proof}

\begin{theorem}%
  \label{cor:sem-adequacy-det}
  Let \(M\) be a closed term and let \(\nu\in\Nat\) be such that
  \(\Tseqi{}M\nu\Tnat\). %
  Then %
  \(
  \Mset{\State\Pempty M\Stempty}
  \Rel{\Trcl{\Rsred{\Msrs\States}}}
  C=C_0+\Mset{\State\Pempty{\Num\nu}\Stempty}
  \) %
  for some multiset of well typed states \(C_0\) such that \(C\) is
  \(\cL\)-summable in any model \(\cL\), and all the elements \(c\) of
  \(C_0\) satisfy \(\Psemrel c{}=\emptyset\).
\end{theorem}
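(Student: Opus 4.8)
The plan is to transport the reduction produced by Theorem~\ref{cor:sem-adequacy} into the model \(\PCOH\), use the integrality and sub-probability constraints there to show that only the distinguished summand \(\State\Pempty{\Num\nu}\Stempty\) carries a nonzero probabilistic interpretation, and then push this back to \(\REL\) through the forgetful functor \(\Qforg\). First I would apply Theorem~\ref{cor:sem-adequacy} to obtain the reduction \(\Mset{\State\Pempty M\Stempty}\Rel{\Trcl{\Rsred{\Msrs\States}}}C\) with \(C=C_0+\Mset{\State\Pempty{\Num\nu}\Stempty}\), together with the fact that \(C\) is \(\cL\)-summable in every model \(\cL\); in particular, since \(\PCOH\) is a \Cpolike{} differential summable resource category (Section~\ref{sec:PCS-definition} and Section~\ref{sec:pcoh-fix}), the multiset \(C\) is \(\PCOH\)-summable.

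Next I would compute \(\Psempcoh{\State\Pempty M\Stempty}{}\). Since \(\Tofst{\State\Pempty M\Stempty}=\Stctx\Stempty[\Lproj\Pempty M]=M\), this interpretation equals \(\Psempcoh M{}\), viewed as an element of \(\Pcoh\Snat\). The hypothesis \(\Tseqi{}M\nu\Tnat\) means, by Theorem~\ref{th:sem-intersection-equiv}, that \(\nu\in\Psemrel M{}\); by Theorem~\ref{cor:psem-qforg} we have \(\Psemrel M{}=\Qforg\Psempcoh M{}\), so the \(\nu\)-th coefficient of \(\Psempcoh M{}\) is nonzero. As all coefficients of \(\Psempcoh M{}\) are natural numbers and \(\Psempcoh M{}\in\Pcoh\Snat\) has \(\ell^1\)-norm at most \(1\), a nonzero coefficient must be exactly \(1\) and no other coefficient can be nonzero; hence \(\Psempcoh M{}=\Base\nu\).

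Then I would invoke soundness of the machine in \(\PCOH\). Iterating Theorem~\ref{th:machine-sound} along the reduction \(\Mset{\State\Pempty M\Stempty}\Rel{\Trcl{\Rsred{\Msrs\States}}}C\) (the singleton is trivially \(\PCOH\)-summable and \(\PCOH\)-summability is preserved by Theorem~\ref{cor:K-mset-summability-pres}) yields \(\Psempcoh C{}=\Psempcoh{\State\Pempty M\Stempty}{}=\Base\nu\). Now Theorem~\ref{cor:psem-pcoh-int}, applied to the \(\PCOH\)-summable multiset \(C\), says that at most one occurrence of a state in \(C\) has nonzero \(\PCOH\)-interpretation. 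Since \(\Psempcoh{\State\Pempty{\Num\nu}\Stempty}{}=\Base\nu\neq 0\), that distinguished occurrence is precisely the one singled out by the theorem, and therefore every state \(c\in C_0\) satisfies \(\Psempcoh c{}=0\).

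Finally, transporting through \(\Qforg\): by Theorem~\ref{cor:psem-qforg} we have \(\Psemrel c{}=\Qforg\Psempcoh c{}\) for each well typed state \(c\), and \(\Qforg\) sends the zero matrix to the empty relation, so \(\Psemrel c{}=\emptyset\) for all \(c\in C_0\), as required. The only genuinely delicate step is the identification \(\Psempcoh M{}=\Base\nu\): it depends essentially on combining the integrality of the \(\PCOH\)-coefficients (so that any surviving coefficient is at least \(1\)) with the sub-probability constraint of \(\Pcoh\Snat\) (so that no second coefficient can survive), which is exactly where the probabilistic model does work that the relational model alone cannot.
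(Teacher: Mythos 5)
Your proof is correct and takes essentially the same route as the paper's: Theorem~\ref{cor:sem-adequacy}, \(\PCOH\)-summability of \(C\) via Theorem~\ref{cor:K-mset-summability-pres}, the dichotomy of Theorem~\ref{cor:psem-pcoh-int} applied to \(C\) together with \(\Psempcoh{\State\Pempty{\Num\nu}\Stempty}{}=\Base\nu\not=0\), and transport through \(\Qforg\) by Theorem~\ref{cor:psem-qforg}. The only difference is that your intermediate computation \(\Psempcoh M{}=\Base\nu\) (via intersection typing, integrality and iterated machine soundness, Theorem~\ref{th:machine-sound}) is redundant---the paper omits it, since the dichotomy plus the nonzero interpretation of the distinguished state already force \(\Psempcoh c{}=0\) for every \(c\in C_0\).
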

\begin{proof}
  Using Theorem~\ref{cor:sem-adequacy} we have %
  \( \Mset{\State\Pempty M\Stempty} \Rel{\Trcl{\Rsred{\Msrs\States}}}
  C=C_0+\Mset{\State\Pempty{\Num\nu}\Stempty} \) %
  and by Theorem~\ref{cor:K-mset-summability-pres} we know that %
  \(C\) is \(\PCOH\)-summable and hence by
  Theorem~\ref{cor:psem-pcoh-int} we know that all the elements
  \(c\) of \(C_0\) satisfy \(\Psempcoh c{}=0\) and hence
  \(\Psemrel c{}=\emptyset\) by Theorem~\ref{cor:psem-qforg}.
\end{proof}

\subsection{A deterministic machine} %
\label{sec:det-machine}
We slightly modify the Krivine machine presented in
Section~\ref{sec:diff-machine} so as to make it fully deterministic.
Guillaume Geoffroy must be credited for the key idea of this
determinization which consists in making the access word component of
the Krivine machine writable. To this end we introduce the set %
\(\Intow=\Into\uplus\Eset{\Wcell n\St n\in\Nat}\) where for each
\(n\in\Nat\) the symbol \(\Wcell n\) represents a %
``writable cell of name \(n\)''.

A state of the deterministic machine is \(0\) or a tuple
\begin{align*}
  \gamma=\Statew\zeta kM\sigma
\end{align*}
where \(\zeta\in\Intow^d\) for some \(d\in\Nat\), \(k\in\Nat\), \(M\)
is a \(\Lang\)-term and \(\sigma\) is a stack (defined exactly as in
Section~\ref{sec:diff-machine} apart that words are now taken in
\({\Intow}^{<\omega}\)).  We use \(\Swcell\gamma\) for the set of all
\(n\in\Nat\) such that \(\Wcell n\) occurs in \(\zeta\) or \(\sigma\).
When \(k\) is \(>\) than all the elements of \(\Swcell\gamma\), we say
that \(\gamma\) is \emph{well-formed}.

We use letters \(u,v\) to denote general
elements of \(\Intow\), letters \(i,j\) for elements of
\(\Into\subset\Intow\) and \(\Statesw\) for the set of states of this
new machine. The main novelty is that that the \(+\) operation on
states is no more required. The second component of a state of the
machine is used as a \(\mathtt{gensym}\) for creating cell names
on request, which are fresh by the well-formedness condition.

The typing rules for stacks and states are exactly the same as in
Figure~\ref{fig:stack-type}, replacing \(\Into\) with \(\Intow\). Most
transition rules are the same as in
Figure~\ref{fig:state-reduction}. The modified transition rules are
given in Figure~\ref{fig:state-det-reduction}. Notice that the rule
producing a \(0\) is still present
(\(\State{\eta i\zeta}{\Linjd{1-i}dM}{\sigma}\Stred0\) for
\(i\in\Into\) and \(d=\Len\zeta\)) but that the rule producing a sum
has been replaced by the deterministic rule~\Eqref{eq:stredw-lsumd-1}
so that the machine \(\Statesw\) equipped with the transitions
\(\Stredw\) is fully deterministic. In rule~\Eqref{eq:stredw-linjd1},
namely
\(\Statew{\eta\Wcell n\zeta}{k}{\Linjd 1dM}{\sigma}
\Subst{\Stredw\Statew{\eta\zeta}{k}{M}{\sigma}}{0}{\Wcell n}\), notice
that the cell \(\Wcell n\) can occur in \(\eta\zeta\) and in
\(\sigma\). The intuition behind this rule is that we give the value
\(1\) to the occurrence of \(\Wcell n\) singled out in its left
member, so we know that all the other occurrences must take value
\(0\), see Remark~\ref{rem:dwords-intuition} below.
Rule~\Eqref{eq:stredw-linjd0} means intuitively that the occurrence of
\(\Wcell n\) singled out in its left member receives value \(0\) and
we do not know for the others; what we know however is that exactly
one will receive value \(1\), see Remark~\ref{rem:dwords-intuition}
below, so at least one occurrence of \(\Wcell n\) must be present in
\(\gamma\).
This explains why the proviso in the application of this rule makes
sense; it will be essential in the proof of
Lemma~\ref{lemma:Strew-co-simulation}.

Notice that if \(\gamma\Stredw\gamma'\) and \(\gamma\) is well-formed
then \(\gamma'\) is also well-formed thanks to
rule~\Eqref{eq:stredw-lsumd-1} which is the only one which introduces
a new writable cell: it advances the gensym counter by \(1\).

\begin{figure}
  \centering
  \begin{align}
    \Statew{\zeta u}k{\Ldiff M}{\sigma}
    &\Stredw\Statew\zeta kM{\Stdiff u\sigma}\\
    \Statew{\zeta}{k}{\Abst xAM}{\Stdiff u\sigma}
    &\Stredw\Statew{\zeta u}{k}{\Abst x{\Tdiff A}{\Ldletv xM}}{\sigma}\\
    \Statew{\eta\Wcell n\zeta}{k}{\Linjd 0dM}{s}
    &\Stredw\gamma=\Statew{\eta\zeta}{k}{M}{\sigma}
    \text{\quad if }n\in\Swcell\gamma\label{eq:stredw-linjd0}\\
    \Statew{\eta\Wcell n\zeta}{k}{\Linjd 1dM}{\sigma}
    &\Subst{\Stredw\Statew{\eta\zeta}{k}{M}{\sigma}}{0}{\Wcell n}
    \label{eq:stredw-linjd1}\\
    \Statew{\eta\Wcell n\zeta}k{\Lsumd dM}{\sigma}
    &\Stredw\Statew{\eta\Wcell n\Wcell n\zeta}{k}{M}{\sigma}\\
    \Statew{\eta1\zeta}k{\Lsumd dM}{\sigma}
    &\Stredw\Statew{\eta\Wcell k\Wcell k\zeta}{k+1}{M}{\sigma}
      \label{eq:stredw-lsumd-1}
  \end{align}
  \caption{Deterministic reduction rules, with the convention that
    \(d=\Len\zeta\)}
  \label{fig:state-det-reduction}
\end{figure}

Given \(\zeta\in\Intow^{<\omega}\) we use %
\(\Dwordl\zeta\) for the set of all \(\delta\in\Into^{\Len\zeta}\)
such that, for all \(l\in\Eset{1,\dots,\Len\zeta}\),
\(\zeta_l\in\Into\Implies\delta_l=\zeta_l\).
For a stack \(\sigma\) of \(\Statesw\) we define a set of stacks
\(\Dwordl\sigma\) of \(\Rsca\States\) in
Figure~\ref{fig:stack-shape}. Last given a \(\Statesw\)-state
\(\gamma=\Statew\zeta kM\sigma\) we define a set of
\(\Rsca\States\)-states \(\Dwordl\gamma\) by %
\[
  \Dwordl\gamma=\Eset{\State\delta Ms\St\delta\in\Dwordl\zeta\text{
      and }s\in\Dwordl\sigma}\,.
\]
\begin{figure}
  \centering
  \begin{align*}
    \Dwordl{\Starg M\sigma}
    &=\{\Starg Ms\St s\in\Dwordl\sigma\}\\
    \Dwordl{\Stsucc\sigma}
    &=\{\Stsucc s\St s\in\Dwordl\sigma\}\\
    \Dwordl{\Stpred\sigma}
    &=\{\Stpred s\St s\in\Dwordl\sigma\}\\
    \Dwordl{\Stif\zeta{M_1}{M_2}\sigma}
    &=\Eset{\Stif\delta{M_1}{M_2}s
      \St \delta\in\Dwordl\zeta\text{ and }s\in\Dwordl\sigma}\\
    \Dwordl{\Stlet\zeta{x}{M}\sigma}
    &=\Eset{\Stlet\delta{x}{M}s
      \St \delta\in\Dwordl\zeta\text{ and }s\in\Dwordl\sigma}\\
    \Dwordl{\Stdiff{i}\sigma}
    &=\Eset{\Stdiff is\St s\in\Dwordl\sigma}\\
    \Dwordl{\Stdiff{\Wcell n}\sigma}
    &=\Eset{\Stdiff is\St i\in\Into\text{ and }s\in\Dwordl\sigma}
  \end{align*}
  \caption{The set of \(\Rsca\States\)-stacks associated with a
    \(\Statesw\)-stack }
  \label{fig:stack-shape}
\end{figure}

Next given \(n\in\Nat\) and \(\delta\in\Dwordl\zeta\) we define %
\(\Nwcell n\delta\zeta\in\Nat\), which is the sum of all values taken
by the cell \(\Wcell n\) in \(\delta\), by
\begin{align*}
  \Nwcell n{\Wempty}{\Wempty}
  &=0\\
  \Nwcell n{\delta i}{\zeta i}
  &=\Nwcell n\delta\zeta\\
  \Nwcell n{\delta i}{\zeta\Wcell k}
  &=\Kronecker nki+\Nwcell n\delta\zeta
    \text{\quad where }\Kronecker nk\text{ is the Kronecker symbol.}
\end{align*}
Given \(s\in\Dwordl\sigma\) we define \(\Nwcell ns\sigma\in\Nat\) as
follows.
\begin{align*}
  \Nwcell n{\Starg Ms}{\Starg M\sigma}
  &=\Nwcell ns\sigma\\
  \Nwcell n{\Stsucc s}{\Stsucc \sigma}
  &=\Nwcell ns\sigma\\
  \Nwcell n{\Stpred s}{\Stpred \sigma}
  &=\Nwcell ns\sigma\\
  \Nwcell n{\Stif\delta{M_1}{M_2}s}{\Stif\zeta{M_1}{M_2}\sigma}
  &=\Nwcell n\delta\zeta+\Nwcell ns\sigma\\
  \Nwcell n{\Stlet\delta{x}{M}s}{\Stlet\zeta{x}{M}\sigma}
  &=\Nwcell n\delta\zeta+\Nwcell ns\sigma\\
  \Nwcell n{\Stdiff is}{\Stdiff i\sigma}
  &=\Nwcell ns\sigma\\
  \Nwcell n{\Stdiff is}{\Stdiff{\Wcell k}\sigma}
  &=\Kronecker kni+\Nwcell ns\sigma\,.
\end{align*}
Last, for states \(\gamma=\Statew\zeta kM\sigma\) and
\(c=\State\delta Ms\) such that \(c\in\Dwordl\gamma\) we set %
\begin{align*}
  \Nwcell nc\gamma=\Nwcell n\delta\zeta+\Nwcell ns\sigma\,.
\end{align*}
Finally we define
\begin{align*}
  \Dwords\gamma
  =\Eset{c\in\Dwordl\gamma
  \St\forall n\in\Swcell\gamma\quad\Nwcell nc\gamma=1}\,.
\end{align*}

\begin{remark} %
  \label{rem:dwords-intuition}
  Intuitively \(\Dwords\gamma\) is the set of all states of the
  machine \(\Rsca\States\) obtained from \(\gamma\in\Statesw\) as
  follows:
  for each \(n\) such that \(\Wcell n\) occurs at some place in
  \(\gamma\), we choose one occurrence of \(\Wcell n\) and replace it
  with \(1\), and we replace all the other occurrences of \(\Wcell n\)
  with \(0\)'s.
\end{remark}

Let \(\Strednd\) be the transition relation on \(\Statessi\), the set
of all \Simplicit{} elements of \(\Rsca\States\), defined as follows:
\begin{align*}
  c\Strednd c'\text{\quad if\quad}
  \begin{cases}
    & c\Stred c'\\
    \text{or }&c\Stred c'+c''\\
    \text{or }&c\Stred c''+c'
  \end{cases}
\end{align*}
so that now
\begin{align*}
  \State{\epsilon1\delta}{\Lsumd dM}{s}
  &\Strednd\State{\epsilon10\delta}{M}{s}\\
  \State{\epsilon1\delta}{\Lsumd dM}{s}
  &\Strednd\State{\epsilon01\delta}{M}{s}
\end{align*}
and \(\Strednd\) is defined exactly as \(\Stred\) in the other cases.

\begin{lemma} %
  \label{lemma:statew-zero-subst}
  Assume that \(c\in\Dwordl\gamma\) and that %
  \(\Nwcell nc\gamma=0\) for some \(n\in\Nat\). Then %
  \(c\in\Dwordl{\Subst\gamma 0{\Wcell n}}\).
\end{lemma}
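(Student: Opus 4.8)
The plan is to reduce the claim to an elementary observation about sums of bits, after splitting everything along the $\zeta/\sigma$ structure of $\gamma$. Write $\gamma=\Statew\zeta kM\sigma$ and $c=\State\delta Ms$ with $\delta\in\Dwordl\zeta$ and $s\in\Dwordl\sigma$, so that $\Subst\gamma0{\Wcell n}=\Statew{\Subst\zeta0{\Wcell n}}kM{\Subst\sigma0{\Wcell n}}$ (substitution leaves $M$ and the gensym counter $k$ untouched). Since $\Nwcell nc\gamma=\Nwcell n\delta\zeta+\Nwcell ns\sigma$ is a sum of two natural numbers equal to $0$, both summands vanish. The crucial remark is that, by the defining recursions, each of $\Nwcell n\delta\zeta$ and $\Nwcell ns\sigma$ is a sum of values taken in $\Into=\Eset{0,1}$, one for each position where $\zeta$ (resp.\ $\sigma$) carries the marker $\Wcell n$; a sum of bits being $0$ forces every one of those bits to be $0$.

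First I would treat the word component. Unfolding the definition of $\Dwordl{-}$, membership $\delta\in\Dwordl{\Subst\zeta0{\Wcell n}}$ is equivalent to $\delta\in\Dwordl\zeta$ together with $\delta_l=0$ at every position $l$ with $\zeta_l=\Wcell n$, because replacing $\Wcell n$ by $0$ turns exactly those positions into genuine bits $0$ while leaving the constraints at all other positions unchanged (other cells $\Wcell m$, $m\neq n$, still impose nothing). Since $\Nwcell n\delta\zeta=0$ delivers precisely those vanishing bits, we obtain $\delta\in\Dwordl{\Subst\zeta0{\Wcell n}}$.

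Next I would handle the stack by structural induction on $\sigma$, following Figure~\ref{fig:stack-shape}. In the base case $\Stempty$ there is nothing to substitute. The cases $\Starg M{-}$, $\Stsucc{-}$, $\Stpred{-}$ and $\Stdiff i{-}$ with $i\in\Into$ propagate the induction hypothesis verbatim, since substitution does not touch their head and $\Nwcell ns\sigma=0$ restricts to the tail. In the cases $\Stif\zeta{M_1}{M_2}{-}$ and $\Stlet\zeta xM{-}$ I would apply the word observation above to the embedded word together with the induction hypothesis for the tail, using that $\Nwcell n{-}{-}$ splits additively over head word and tail. The only genuinely new case is $\Stdiff{\Wcell n}{-}$: here $\Subst{(\Stdiff{\Wcell n}\sigma_0)}0{\Wcell n}=\Stdiff0{\Subst{\sigma_0}0{\Wcell n}}$, so an $s=\Stdiff i{s_0}\in\Dwordl{\Stdiff{\Wcell n}\sigma_0}$ lies in $\Dwordl{\Stdiff0{-}}$ exactly when $i=0$; and $\Nwcell ns\sigma=i+\Nwcell n{s_0}{\sigma_0}=0$ forces both $i=0$ and $\Nwcell n{s_0}{\sigma_0}=0$, so the induction hypothesis applies to $s_0$. (Markers $\Stdiff{\Wcell m}$ with $m\neq n$ survive substitution and are handled like $\Stdiff i$, with no new head constraint.)

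Combining the two parts yields $\delta\in\Dwordl{\Subst\zeta0{\Wcell n}}$ and $s\in\Dwordl{\Subst\sigma0{\Wcell n}}$, hence $c=\State\delta Ms\in\Dwordl{\Subst\gamma0{\Wcell n}}$ by the definition of $\Dwordl{-}$ on states. I do not expect a genuine obstacle: the argument is essentially bookkeeping, and the single point to state carefully is the bit-sum observation (non-negativity together with $\Into=\Eset{0,1}$), which is exactly what converts the quantitative hypothesis $\Nwcell nc\gamma=0$ into the pointwise equalities $\delta_l=0$ and $i=0$ demanded by $\Dwordl{-}$ membership after substitution.
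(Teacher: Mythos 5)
Your proof is correct and follows essentially the same route as the paper's own (one-sentence) proof sketch: the hypothesis $\Nwcell nc\gamma=0$ is a vanishing sum of bits, which forces the value $0$ at every position of $c$ corresponding to an occurrence of $\Wcell n$ in $\gamma$, and these are exactly the new constraints imposed by membership in $\Dwordl{\Subst\gamma 0{\Wcell n}}$. Your structural induction on words and stacks is just the careful bookkeeping that the paper leaves implicit with ``the conclusion follows readily.''
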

\begin{proof}[Sketch of the proof]
  By the assumption that \(\Nwcell nc\gamma=0\) we know that at all
  the places in \(c\) corresponding to occurrences of \(\Wcell n\) in
  \(\gamma\) we have the value \(0\). The conclusion follows readily.
\end{proof}

\begin{lemma} %
  \label{lemma:Strew-simulation}
  Let \(\gamma\in\Statesw\) and \(c\in\Dwords\gamma\).  If
  \(c\Strednd c'\not=0\) then %
  \(\gamma\Stredw\gamma'\) for \(\gamma'\) such that
  \(c'\in\Dwords{\gamma'}\).
\end{lemma}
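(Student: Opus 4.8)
The plan is to argue by cases on the outermost construct of the term $M$ carried by $c$, since this construct determines which $\Strednd$-transition can fire. I would first unfold the hypothesis $c\in\Dwords\gamma$: writing $\gamma=\Statew\zeta kM\sigma$ and $c=\State\delta Ms$, it gives $\delta\in\Dwordl\zeta$ and $s\in\Dwordl\sigma$ (hence $\Len\delta=\Len\zeta$), together with the balance condition $\Nwcell nc\gamma=1$ for every $n\in\Swcell\gamma$. Because $c$ and $\gamma$ share the very same term $M$, each $\Strednd$-step of $c$ will be matched with the correspondingly shaped $\Stredw$-step of $\gamma$ from Figure~\ref{fig:state-det-reduction} (or with the rule inherited from Figure~\ref{fig:state-reduction}), and the content of each case is to exhibit $\gamma'$ and check that $c'\in\Dwords{\gamma'}$, i.e. that $c'$ respects the word/stack shape of $\gamma'$ and restores the balance condition.

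For the structural transitions that never inspect a writable cell --- application, the abstraction-against-argument rule, fixpoint unfolding, the successor/predecessor pushes, the numeral firings, the conditional and $\mathsf{let}$ push/fire rules, projection, and the circular flip --- the deterministic machine applies the rule inherited verbatim, and the verification is routine. Membership $c'\in\Dwordl{\gamma'}$ follows from the clauses of Figure~\ref{fig:stack-shape} together with $s\in\Dwordl\sigma$, while balance is preserved because these rules only insert or delete genuine bits (each contributing $0$ to every count $\Nwcell n{\cdot}{\cdot}$) or permute a block of the word (leaving, for each $n$, the multiset of values at the $\Wcell n$-positions unchanged, so $\Nwcell n{c'}{\gamma'}=\Nwcell nc\gamma=1$). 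The two rules that shuttle a symbol between the word and a $\mathsf D$-frame of the stack, for $\Ldiff M$ and for $\Abst xAM$ facing $\Stdiff u\sigma$, are handled similarly: the last clause of Figure~\ref{fig:stack-shape} is exactly what lets a cell $\Wcell m$, once moved, concretize to an arbitrary bit, and the move does not alter the value carried at that symbol, so balance is kept.

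The injection rules on a writable cell are where determinization starts to bite. Suppose $M=\Linjd 0dN$ and the relevant word symbol is a cell $\Wcell n$ carrying value $0$ in $\delta$; then the balance condition forces the unique value-$1$ occurrence of $\Wcell n$ in $\gamma$ to lie elsewhere, so $\Wcell n$ still occurs in the result $\gamma'$, the proviso of rule~\eqref{eq:stredw-linjd0} is satisfied, and deleting a $0$-valued occurrence keeps $\Nwcell n{c'}{\gamma'}=1$. Suppose instead $M=\Linjd 1dN$ and the relevant symbol is a cell $\Wcell n$ carrying value $1$; balance makes this the unique value-$1$ occurrence, so in the pre-substitution state $\tilde\gamma$ obtained by deleting it every remaining $\Wcell n$ has value $0$ in $c'$, giving $\Nwcell n{c'}{\tilde\gamma}=0$. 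Lemma~\ref{lemma:statew-zero-subst} then yields $c'\in\Dwordl{\Subst{\tilde\gamma}0{\Wcell n}}$, which is precisely $\Dwordl{\gamma'}$ for the $\gamma'$ produced by rule~\eqref{eq:stredw-linjd1}; and since replacing $\Wcell n$ by $0$ leaves every other cell's count untouched, the full balance condition holds for all of $\Swcell{\gamma'}$.

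The crux is the case $M=\Lsumd dN$, the only construct at which $\Strednd$ genuinely branches. If the relevant symbol carries value $0$ the $\Strednd$-transition has a single successor, and whether that symbol is a genuine bit (inherited rule) or a cell (the duplication rule for $\Lsumd dN$ in Figure~\ref{fig:state-det-reduction}), the two resulting copies both receive value $0$ and balance is immediate. If it carries value $1$ --- whether a genuine bit, firing rule~\eqref{eq:stredw-lsumd-1} and creating a fresh duplicated cell $\Wcell k\Wcell k$, or an existing cell $\Wcell n$, duplicated into $\Wcell n\Wcell n$ --- the single $\Stredw$-step must cover both $\Strednd$-successors, namely the states obtained by replacing that $1$ with $10$ and with $01$. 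I would verify this by concretizing the two copies of the duplicated cell through the complementary assignments $(1,0)$ and $(0,1)$: in the fresh-cell subcase the new cell $\Wcell k$ contributes exactly $1$ under either assignment and, being fresh, is otherwise balanced; in the $\Wcell n$ subcase the deleted value-$1$ occurrence is replaced by two occurrences whose values sum to $1$, so $\Nwcell n{c'}{\gamma'}=(1-1)+1=1$ in each branch. This is exactly the mechanism by which writability absorbs the non-determinism, and confirming that each branch lands in $\Dwords{\gamma'}$ --- not merely in $\Dwordl{\gamma'}$ --- is the step I expect to demand the most care, since it combines a shape check on the two branch words with the balance bookkeeping for the newly duplicated cell.
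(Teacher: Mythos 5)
Your proposal is correct and follows essentially the same route as the paper's proof: a case analysis on the outermost construct driving the $\Strednd$-step, with the same key points — using the balance condition to discharge the proviso of rule~\eqref{eq:stredw-linjd0}, invoking Lemma~\ref{lemma:statew-zero-subst} for the $\Linjd 1d{}$ substitution case, and splitting the $\Lsumd d{}$ case into the existing-cell and fresh-cell (well-formedness) subcases with the complementary assignments $(1,0)$/$(0,1)$ covering both non-deterministic successors.
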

\begin{proof} %
  We use our convention that \(d=\Len\delta=\Len\zeta\).  Most cases
  are straightforward, we deal first with one of these to illustrate its
  triviality.

  \Proofcase %
  \(c=\State{\epsilon\delta}{\Lif dM{N_1}{N_2}}{s}\) and %
  \(c'=\State{\delta}{M}{\Stif\epsilon{N_1}{N_2}s}\). %
  Since \(c\in\Dwords\gamma\) we can write %
  \(\gamma=\Statew{\eta\zeta}{k}{\Lif dM{N_1}{N_2}}{\sigma}\) and we
  have %
  \(\gamma\Stredw\gamma'=\Statew\zeta
  k{M}{\Stif\eta{N_1}{N_2}{\sigma}}\)
  so that \(c'\in\Dwords{\gamma'}\) because for each %
  \(n\in\Swcell\gamma=\Swcell{\gamma'}\) we have %
  \(\Nwcell n{c'}{\gamma'}=\Nwcell{n'}{c}{\gamma}=1\).

  Next we consider the more interesting cases.
    
  \Proofcase %
  \(c=\State{\delta i}{\Ldiff M}{s}\) and
  \(c'=\State{\delta}{M}{\Stdiff is}\). %
  Then \(\gamma=\Statew{\zeta u}k{\Ldiff M}{\sigma}\) and we have
  \(\gamma\Stredw\gamma'=\Statew\zeta Mk{\Stdiff u\sigma}\) and hence
  \(c'\in\Dwords{\gamma'}\). To be more explicit we should consider
  the two possible cases for \(u\).
  \begin{itemize}
  \item If \(u\in\Into\) we have \(u=i\) because \(c\in\Dwordl\gamma\)
    and hence \(\gamma'=\Statew\zeta kM{\Stdiff i\sigma}\). %
    For each \(n\in\Swcell\gamma=\Swcell{\gamma'}\) we have %
    \(\Nwcell n{c'}{\gamma'}=\Nwcell
    n\delta\zeta+\Nwcell ns\sigma=\Nwcell n{c}{\gamma}=1\) so that %
    \(c'\in\Dwords{\gamma'}\).
  \item If \(u=\Wcell n\) for some \(n\in\Nat\) we know that %
    \(\Nwcell{n'}c\delta
    =\Kronecker{n'}ni+\Nwcell{n'}\delta\zeta+\Nwcell {n'}s\sigma=1\)
    for each \(n'\in\Swcell\gamma=\Swcell{\gamma'}\) and since we also
    have %
    \(\Nwcell{n'}{c'}{\gamma'}=\Kronecker{n'}ni+\Nwcell
    {n'}\delta\zeta+\Nwcell{n'}s\sigma\) it follows that %
    \(c'\in\Dwords{\gamma'}\).
  \end{itemize}

  \Proofcase %
  \(c=\State\delta{\Abst xAM}{\Stdiff is}\) and %
  \(c'=\State{\delta i}{\Abst x{\Tdiff A}{\Ldletv xM}}{s}\). %
  Then \(\gamma=\Statew{\zeta}{k}{\Abst xAM}{\Stdiff u\sigma}\) %
  and we have
  \(\gamma\Stredw\gamma' =\Statew{\zeta u}k{\Abst x{\Tdiff A}{\Ldletv
      xM}}\sigma\) so that %
  \(c'\in\Dwordl{\gamma'}\). We check easily as above that %
  \(\Nwcell n{c'}{\gamma'}=\Nwcell n{c}{\gamma}\) for all
  \(n\in\Swcell\gamma=\Swcell{\gamma'}\) and hence
  \(c'\in\Dwords{\gamma'}\).

  \Proofcase %
  \(c=\State{\epsilon i\delta}{\Linjd idM}s\) and %
  \(c'=\State{\epsilon\delta}{M}{s}\) with \(i=0\).  Assume that
  \(\gamma=\Statew{\eta\Wcell n\zeta}{k}{\Linjd 0dM}{\sigma}\) %
  and let \(\gamma'=\Statew{\eta\zeta}{k}{M}{\sigma}\).  Clearly
  \(c'\in\Dwordl{\gamma'}\). Moreover
  \(\Nwcell{n'}{c'}{\gamma'}=\Nwcell{n'}{c}{\gamma}=1\) for all
  \(n'\in\Swcell{\gamma'}\subseteq\Swcell{\gamma}\), including when
  \(n'=n\) in which case we use the fact that \(i=0\) ---~assuming that
  \(n\in\Swcell{\gamma'}\) which is not only possible but necessary as
  we see now~---. Since \(\Nwcell nc\gamma=1\) and \(i=0\), \(\Wcell n\)
  must occur in \(\eta\), \(\zeta\) or \(\sigma\) so that
  \(n\in\Swcell{\gamma'}\) and hence \(\gamma\Stredw\gamma'\) (see in
  Figure~\ref{fig:state-det-reduction} the restrictive condition for
  this reduction~\Eqref{eq:stredw-linjd0}).

  The case where \(\gamma=\Statew{\eta0\zeta}{k}{\Linjd 0dM}{\sigma}\)
  is dealt with straightforwardly.

  \Proofcase %
  \(c=\State{\epsilon i\delta}{\Linjd idM}s\) and %
  \(c'=\State{\epsilon\delta}{M}{s}\) with \(i=1\).
  Assume that
  \(\gamma=\Statew{\eta\Wcell n\zeta}{k}{\Linjd 1dM}{\sigma}\) %
  so that we have 
  \(\gamma\Stredw\gamma'
  =\Subst{\Statew{\eta\zeta}{k}{M}{\sigma}}{0}{\Wcell n}\) and we
  clearly have \(c'\in\Dwordl{\gamma'}\). %
  Since \(i=1\), for \(n'\in\Swcell\gamma\) we have %
  \(\Nwcell{n'}{c}{\gamma}
  =\Kronecker{n}{n'}+\Nwcell{n'}{\epsilon\delta}{\eta\zeta}
  +\Nwcell{n'}{s}{\sigma}\)
  and we know that \(\Nwcell{n'}{c}{\gamma}=1\).
  It follows that %
  \(\Nwcell{n'}{c'}{\gamma'}=\Nwcell{n'}{\epsilon\delta}{\eta\zeta}
  +\Nwcell{n'}{s}{\sigma}=1\) for all
  \(n'\in\Swcell{\gamma'}=\Swcell\gamma\setminus\Eset n\) and moreover
  that
  \(\Nwcell{n}{c'}{\gamma'} =\Nwcell{n}{\epsilon\delta}{\eta\zeta}
  +\Nwcell{n}{s}{\sigma}=0\). %
  By Lemma~\ref{lemma:statew-zero-subst} it follows that %
  \(c'\in\Dwordl{\gamma'}\), and hence that \(c'\in\Dwords{\gamma'}\)
  as expected.

  The case where \(\gamma=\Statew{\eta1\zeta}{k}{\Linjd 0dM}{\sigma}\)
  is dealt with straightforwardly.

  \Proofcase %
  \(c=\State{\epsilon1\delta}{\Lsumd dM}{s}\) and %
  \(c'=\State{\epsilon i_1i_2\delta}{M}{s}\) for some
  \(i_1,i_2\in\Into\) such that \(i_1+i_2=1\).

  Assume first that
  \(\gamma=\Statew{\eta\Wcell n\zeta}{k}{\Lsumd dM}{\sigma}\). %
  Then 
  \(\gamma'=\Statew{\eta\Wcell n\Wcell n\delta}{k}{M}{\sigma}\).
  We clearly have \(c'\in\Dwordl{\gamma'}\), %
  and if \(n'\in\Swcell\gamma=\Swcell{\gamma'}\) we have
  \begin{align*}
    \Nwcell{n'}{c'}{\gamma'}
    &=\Nwcell{n'}{\epsilon\delta}{\eta\zeta}+\Kronecker{n'}{n}(i_1+i_2)
      +\Nwcell{n'}{s}{\sigma}\\
    &=\Nwcell{n'}{\epsilon\delta}{\eta\zeta}+\Kronecker{n'}{n}
      +\Nwcell{n'}{s}{\sigma}\\
    &=\Nwcell{n'}{c}{\gamma}=1.
  \end{align*}

  Assume next that
  \(\gamma=\Statew{\eta1\zeta}{k}{\Lsumd dM}{\sigma}\). %
  Then
  \(\gamma\Stredw\gamma' =\Statew{\eta\Wcell k\Wcell
    k\zeta}{k+1}{M}{\sigma}\) so that we clearly have %
  \(c'\in\Dwordl{\gamma'}\). %
  For \(n'\in\Swcell{\gamma'}=\Swcell\gamma\uplus\Eset k\) %
  (since \(\gamma\) is well-formed) we have
  \begin{align*}
    \Nwcell{n'}{c'}{\gamma'}
    =\Nwcell{n'}{\epsilon\delta}{\eta\zeta}+\Nwcell{n'}{s}{\sigma}
    =\Nwcell{n'}{c}{\gamma}=1\text{ if }n'\not=k
  \end{align*}
  and
  \begin{align*}
    \Nwcell{n'}{c'}{\gamma'}
    =\Nwcell{n'}{\epsilon\delta}{\eta\zeta}+i_1+i_2+\Nwcell{n'}{s}{\sigma}
    =0+1+0\text{ if }n'=k
  \end{align*}
  since \(i_1+i_2=1\) and by our assumption that \(\gamma\) is
  well-formed which implies that \(\Nwcell kc\gamma=0\).

  The case where \(c=\State{\epsilon0\delta}{\Lsumd dM}{s}\), so that %
  \(c'=\State{\epsilon00\delta}{M}{s}\), is easy.
\end{proof}

\begin{lemma}
  \label{lemma:Strew-co-simulation}
  Let \(\gamma,\gamma'\in\Statesw\) with \(\gamma\Stredw\gamma'\).  If
  \(c'\in\Dwords{\gamma'}\) then there is \(c\in\Dwords\gamma\) such
  that \(c\Strednd c'\).
\end{lemma}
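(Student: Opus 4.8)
The plan is to mirror the proof of Lemma~\ref{lemma:Strew-simulation}, but to read every transition \emph{backwards}. I would case on the deterministic rule that witnesses $\gamma\Stredw\gamma'$ and, given $c'\in\Dwords{\gamma'}$, reconstruct a preimage $c\in\Dwords\gamma$ by re-inserting into the word and stack of $c'$ precisely the symbol(s) that the forward rule consumed, choosing the bit-values they carry so that the cell-sum invariant $\Nwcell n{c}\gamma=1$ is restored for every $n\in\Swcell\gamma$. In each case the witnessing step $c\Strednd c'$ is then read directly off the rules of Figure~\ref{fig:state-reduction} or off the two summation clauses of $\Strednd$.

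For the rules shared with $\Stred$ (application, abstraction, fixpoint, successor, predecessor, numerals, conditional and let), there is nothing to choose: the preimage $c$ is forced, it lies in $\Dwordl\gamma$ because these rules never touch a writable cell, and $\Nwcell n{c}\gamma=\Nwcell n{c'}{\gamma'}=1$ since $\Swcell\gamma=\Swcell{\gamma'}$ and each $\Wcell n$ is merely relocated. The rules pushing and popping $\Stdiff u\sigma$ are handled identically, peeling the head symbol $u$ of the word (be it in $\Into$ or some $\Wcell n$); the value attached to $u$ in $c$ is dictated by $c'$ and the invariant carries over verbatim, exactly as in the corresponding case of Lemma~\ref{lemma:Strew-simulation}.

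The real content is in the injection and summation rules. For \Eqref{eq:stredw-linjd0}, with $\gamma=\Statew{\eta\Wcell n\zeta}{k}{\Linjd 0dM}{\sigma}$, $\gamma'=\Statew{\eta\zeta}{k}{M}{\sigma}$ and the proviso $n\in\Swcell{\gamma'}$, I would re-insert value $0$ at the singled-out position, setting $c=\State{\epsilon 0\delta'}{\Linjd 0dM}{s'}$ from $c'=\State{\epsilon\delta'}{M}{s'}$; the proviso is exactly what guarantees $\Nwcell n{c'}{\gamma'}=1$, whence $\Nwcell n{c}\gamma=0+1=1$. For \Eqref{eq:stredw-linjd1}, where $\gamma'=\Subst{\Statew{\eta\zeta}{k}{M}{\sigma}}{0}{\Wcell n}$, I must undo the substitution: re-insert the singled-out $\Wcell n$ with value $1$ and restore the remaining $\Wcell n$-positions with value $0$ (consistent with $c'$ precisely because the substitution had put $0$ there), giving $c=\State{\epsilon 1\delta'}{\Linjd 1dM}{s'}$ with $\Nwcell n{c}\gamma=1+0=1$ and the other cell-sums inherited from $c'$. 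The two summation rules are dual to the duplication step of Lemma~\ref{lemma:Strew-simulation}: I would collapse the two copies of a cell into one and give that single position the sum of the two values read from $c'$; the invariant $\Nwcell n{c'}{\gamma'}=1$ forces this sum into $\Into$, so the collapse is legal, and for \Eqref{eq:stredw-lsumd-1} well-formedness of $\gamma$ (freshness of $k$) ensures the fresh cell carried total $1$ in $c'$, which becomes the literal $1$ at the collapsed position of $c$; the matching $\Strednd$ step is the $10$- or $01$-clause according to which copy held the $1$.

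The main obstacle is the bookkeeping for \Eqref{eq:stredw-linjd1}: inverting $\Subst{\cdot}{0}{\Wcell n}$ means recovering the exact erased positions of $\Wcell n$ and verifying that reinstating them with value $0$ keeps $c$ in $\Dwordl\gamma$, hence in $\Dwords\gamma$. This is the converse of Lemma~\ref{lemma:statew-zero-subst}, which I would invoke directly or re-prove by the same routine induction on the structure of $\gamma$, matching occurrences of $\Wcell n$ one for one. Once the invariant $\Nwcell n{c}\gamma=1$ is tracked through each case, the remaining verifications are routine.
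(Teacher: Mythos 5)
Your proposal is correct and follows essentially the same route as the paper's proof: a case analysis on the rule witnessing \(\gamma\Stredw\gamma'\), reconstructing the forced preimage \(c\) and propagating the invariant \(\Nwcell n{c}\gamma=1\), with the proviso of rule~\Eqref{eq:stredw-linjd0}, the inversion of \(\Subst{\cdot}{0}{\Wcell n}\) in rule~\Eqref{eq:stredw-linjd1}, and well-formedness for rule~\Eqref{eq:stredw-lsumd-1} playing exactly the roles you identify. The only cosmetic difference is that you package the inversion of the substitution as a converse of Lemma~\ref{lemma:statew-zero-subst}, whereas the paper carries out that (easy) reasoning inline.
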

\begin{proof}
  We consider only a few cases, the other ones being straightforward.

  \Proofcase %
  \(\gamma=\Statew{\zeta u}k{\Ldiff M}{\sigma}\) and
  \(\gamma'=\Statew\zeta Mk{\Stdiff u\sigma}\). %
  Then we have \(c'=\State{\delta}{M}{\Stdiff is}\) for some
  \(i\in\Into\) (with \(i=u\) is \(u\in\Into\)) and then
  \(c=\State{\delta i}{\Ldiff M}{s}\) satisfies the required
  conditions.

  \Proofcase %
  \(\gamma=\Statew{\zeta}{k}{\Abst xAM}{\Stdiff u\sigma}\) and
  \(\gamma' =\Statew{\zeta u}k{\Abst x{\Tdiff A}{\Ldletv
      xM}}\sigma\). %
  Then we have %
  \(c'=\State{\delta i}{\Abst x{\Tdiff A}{\Ldletv xM}}{s}\) for some
  \(i\in\Into\) (with \(i=u\) if \(u\in\Into\)) and then
  \(c=\State\delta{\Abst xAM}{\Stdiff is}\) satisfies the required
  conditions.

  \Proofcase %
  \(\gamma=\Statew{\eta\Wcell n\zeta}{k}{\Linjd 0dM}{\sigma}\) %
  and \(\gamma'=\Statew{\eta\zeta}{k}{M}{\sigma}\) applying
  rule~\Eqref{eq:stredw-linjd0}, so that we know that
  \(n\in\Swcell{\gamma'}\).
  Then we have \(c'=\State{\epsilon\delta}{M}{s}\).
  Taking \(c=\State{\epsilon 0\delta}{\Linjd 0dM}s\) we have
  \(c\Strednd c'\) and \(c\in\Dwords\gamma\) since for all
  \(n'\in\Swcell\gamma=\Swcell{\gamma'}\) we have
  \(\Nwcell{n'}{c}{\gamma}=\Nwcell{n'}{c'}{\gamma'}=1\).
  It is here that the restrictive condition in the
  rule~\Eqref{eq:stredw-linjd0} of
  Figure~\ref{fig:state-det-reduction} is essential.

  \Proofcase %
  \(\gamma=\Statew{\eta\Wcell n\zeta}{k}{\Linjd 1dM}{\sigma}\) %
  and
  \(\gamma'=\Subst{\Statew{\eta\zeta}{k}{M}{\sigma}}{0}{\Wcell n}\). %
  Then we have \(c'=\State{\epsilon\delta}{M}{s}\) and we take
  \(c=\State{\epsilon 1\delta}{\Linjd 1dM}s\) so that %
  \(c\Strednd c'\). For
  \(n'\in\Swcell{\gamma}=\Swcell{\gamma'}\uplus\Eset n\) we have %
  \(\Nwcell{n'}c\gamma=\Nwcell{n'}{c'}{\gamma'}=1\) if
  \(n'\not=n\). Since %
  \(c'\in\Dwordl{\Subst{\Statew{\eta\zeta}{k}{M}{\sigma}}{0}{\Wcell
      n}}\) we have
  \(\Nwcell n{c'}{\Statew{\eta\zeta}{k}{M}{\sigma}}=0\) (that is, all
  the occurrences of \(\Wcell n\) in
  \(\Statew{\eta\zeta}{k}{M}{\sigma}\) are filled with \(0\)'s in
  \(c'\)) and hence \(\Nwcell nc\gamma=1\).

  \Proofcase %
  \(\gamma=\Statew{\eta\Wcell n\zeta}{k}{\Lsumd dM}{\sigma}\) and
  \(\gamma'=\Statew{\eta\Wcell n\Wcell n\delta}{k}{M}{\sigma}\).  Then
  \(c'=\State{\epsilon i_1i_2\delta}{M}{s}\) for some
  \(i_1,i_2\in\Into\) such that \(i=i_1+i_2\in\Into\). Let %
  \(c=\State{\epsilon i\delta}{\Lsumd dM}{s}\) so that %
  \(c\Strednd c'\) and \(c\in\Dwordl\gamma\). %
  Then for \(n'\in\Swcell\gamma=\Swcell{\gamma'}\) we have %
  \begin{align*}
    \Nwcell{n'}{c}{\gamma}
    &=\Nwcell{n'}{\epsilon\delta}{\eta\zeta}
      +\Kronecker{n'}{n}i+\Nwcell{n'}{s}{\sigma}\\
    &=\Nwcell{n'}{\epsilon\delta}{\eta\zeta}
      +\Kronecker{n'}{n}i_1+\Kronecker{n'}{n}i_2+\Nwcell{n'}{s}{\sigma}\\
    &=\Nwcell{n'}{c'}{\gamma'}=1
  \end{align*}
  and hence \(c\in\Dwords\gamma\).

  \Proofcase %
  \(\gamma=\Statew{\eta1\zeta}{k}{\Lsumd dM}{\sigma}\) and
  \(\gamma'=\Statew{\eta\Wcell k\Wcell k\delta}{k+1}{M}{\sigma}\).
  Then \(c'=\State{\epsilon i_1i_2\delta}{M}{s}\) for some
  \(i_1,i_2\in\Into\) such that \(i_1+i_2=1\): this is due to the fact
  that we must have \(\Nwcell k{c'}{\gamma'}=1\) and we know that
  \(\Wcell k\) does not occur in \(\eta\), \(\zeta\) and \(\sigma\)
  since \(\gamma\) is well-formed. Let %
  \(c=\State{\epsilon 1\delta}{\Lsumd dM}{s}\) so that %
  \(c\Strednd c'\) and \(c\in\Dwordl\gamma\). %
  Then for \(n'\in\Swcell\gamma=\Swcell{\gamma'}\setminus\Eset k\) we
  have \(\Nwcell{n'}{c}{\gamma}=\Nwcell{n'}{c'}{\gamma'}=1\)
  and hence \(c\in\Dwords\gamma\).
\end{proof}

\begin{theorem}%
  \label{th:det-machine-complete}
  Let \(M\) be a term such that \(\Tseq{}M\Tnat\) and let
  \(\nu\in\Nat\). Then
  \begin{align*}
    [\exists k\in\Nat\ \Statew{\Pempty}{0}{M}{\Stempty}
    \Trcl\Stredw\Statew\Pempty{k}{\Num\nu}\Stempty]
    \Equiv
    \State\Pempty M\Stempty\Trcl\Strednd\State\Pempty{\Num\nu}\Stempty
  \end{align*}
  Moreover when one of these two reduction converges, the other one
  does, with the same number of steps.
\end{theorem}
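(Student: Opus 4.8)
The plan is to derive the equivalence directly from the two simulation Lemmas~\ref{lemma:Strew-simulation} and~\ref{lemma:Strew-co-simulation}, the only preliminary being that the two distinguished states are matched by \emph{singletons} under $\Dwords\cdot$. First I would record that both $\Statew\Pempty 0M\Stempty$ and the candidate final state $\Statew\Pempty k{\Num\nu}\Stempty$ contain no writable cell, so that $\Swcell\cdot=\emptyset$ for each of them. Since moreover $\Dwordl\Pempty=\Eset\Pempty$ and $\Dwordl\Stempty=\Eset\Stempty$, the condition $\forall n\in\Swcell\gamma\ \Nwcell nc\gamma=1$ defining $\Dwords\gamma$ is then vacuous, and I obtain $\Dwords{\Statew\Pempty 0M\Stempty}=\Eset{\State\Pempty M\Stempty}$ and $\Dwords{\Statew\Pempty k{\Num\nu}\Stempty}=\Eset{\State\Pempty{\Num\nu}\Stempty}$. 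I would also recall that $\Statew\Pempty 0M\Stempty$ is well-formed and that well-formedness is preserved along $\Stredw$, so that every state occurring in a $\Stredw$-reduction issued from it is well-formed.

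For the implication from left to right, assume a reduction $\gamma_0\Stredw\gamma_1\Stredw\cdots\Stredw\gamma_m$ with $\gamma_0=\Statew\Pempty 0M\Stempty$ and $\gamma_m=\Statew\Pempty k{\Num\nu}\Stempty$. Setting $c_m=\State\Pempty{\Num\nu}\Stempty\in\Dwords{\gamma_m}$, I would run Lemma~\ref{lemma:Strew-co-simulation} backwards along the chain: from $c_{i+1}\in\Dwords{\gamma_{i+1}}$ and $\gamma_i\Stredw\gamma_{i+1}$ it produces $c_i\in\Dwords{\gamma_i}$ with $c_i\Strednd c_{i+1}$. After $m$ applications I obtain $c_0\in\Dwords{\gamma_0}=\Eset{\State\Pempty M\Stempty}$, hence $c_0=\State\Pempty M\Stempty$, together with an $m$-step reduction $\State\Pempty M\Stempty\Trcl\Strednd\State\Pempty{\Num\nu}\Stempty$.

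For the converse, assume $c_0\Strednd c_1\Strednd\cdots\Strednd c_m$ with $c_0=\State\Pempty M\Stempty$ and $c_m=\State\Pempty{\Num\nu}\Stempty$. Since $\Strednd$ is a relation on $\Statessi$, every target $c_{i+1}$ is a \Simplicit{} state and in particular $c_{i+1}\neq 0$, which is exactly the side condition required by Lemma~\ref{lemma:Strew-simulation}. Starting from $\gamma_0=\Statew\Pempty 0M\Stempty$ with $c_0\in\Dwords{\gamma_0}$, I would run the simulation forward: from $c_i\in\Dwords{\gamma_i}$ and $c_i\Strednd c_{i+1}\neq0$ it yields $\gamma_i\Stredw\gamma_{i+1}$ with $c_{i+1}\in\Dwords{\gamma_{i+1}}$. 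After $m$ steps this gives $\gamma_0\Stredw\cdots\Stredw\gamma_m$ with $\State\Pempty{\Num\nu}\Stempty\in\Dwords{\gamma_m}$. It then remains to read off the shape of $\gamma_m=\Statew\zeta k{N}\sigma$: writing $\State\Pempty{\Num\nu}\Stempty=\State\delta Ns$ with $\delta\in\Dwordl\zeta$ and $s\in\Dwordl\sigma$ forces $N=\Num\nu$, $\delta=\Pempty$ (hence $\zeta=\Pempty$ since $\Len\zeta=\Len\delta$), and $s=\Stempty$, which by the shape-preserving structure of $\Dwordl\sigma$ in Figure~\ref{fig:stack-shape} forces $\sigma=\Stempty$. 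Thus $\gamma_m=\Statew\Pempty k{\Num\nu}\Stempty$ for some $k\in\Nat$, as required.

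In both directions each $\Stredw$-step is paired with exactly one $\Strednd$-step, so the two reductions have the same length $m$, which yields the final clause. The whole argument is a routine induction on $m$; the only points needing care are the two singleton identities for $\Dwords\cdot$ at the endpoints, the automatic nonzeroness of the targets of $\Strednd$ (ensuring Lemma~\ref{lemma:Strew-simulation} always applies), and the extraction of the precise form of $\gamma_m$ from $\State\Pempty{\Num\nu}\Stempty\in\Dwords{\gamma_m}$. I expect none of these to be a genuine obstacle: the real content of the determinization is already encapsulated in Lemmas~\ref{lemma:Strew-simulation} and~\ref{lemma:Strew-co-simulation}, and the present theorem is essentially their composition at the two trivial endpoints.
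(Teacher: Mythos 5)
Your proposal is correct and follows essentially the same route as the paper's own proof: the right-to-left implication by iterating Lemma~\ref{lemma:Strew-simulation} forward along the $\Strednd$-chain, the left-to-right implication by running Lemma~\ref{lemma:Strew-co-simulation} backwards along the $\Stredw$-chain, with the endpoint identifications $\Dwords{\Statew\Pempty 0M\Stempty}=\Eset{\State\Pempty M\Stempty}$ and $\State\Pempty{\Num\nu}\Stempty\in\Dwords{\gamma_m}\Implies\gamma_m=\Statew\Pempty k{\Num\nu}\Stempty$ playing the same role in both texts. Your treatment of the side conditions (nonzeroness of $\Strednd$-targets, the shape analysis of $\gamma_m$) is if anything slightly more explicit than the paper's, and the step-counting claim follows in both by the one-to-one pairing of steps.
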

\begin{proof}
  \Proofcase %
  (\(\Leftarrow\)) By an obvious induction on the length of the
  \(\Strednd\)-reduction using Lemma~\ref{lemma:Strew-simulation} one
  proves the following statement: %
  if \(\State{\Pempty}{M}{\Stempty}\Trcl\Strednd c\not=0\) then %
  \(\Statew{\Pempty}{0}{M}{\Stempty}\Trcl\Stredw\gamma\) with %
  \(c\in\Dwords\gamma\). We apply this statement to the case where %
  \(c=\State\Pempty{\Num\nu}\Stempty\) and obtain that %
  \(\Statew{\Pempty}{0}{M}{\Stempty}\Trcl\Stredw\gamma\) with %
  \(\State\Pempty{\Num\nu}\Stempty\in\Dwords\gamma\), which means that %
  \(\gamma=\Statew\Pempty k{\Num\nu}\Stempty\) for some \(k\in\Nat\).

  \Proofcase %
  (\(\Implies\))  By an obvious induction on the length of the
  \(\Stredw\)-reduction using Lemma~\ref{lemma:Strew-co-simulation}
  one proves the following statement: %
  if \(\gamma\Trcl\Stredw\Statew\Pempty k{\Num\nu}\Stempty\) for some
  \(k\in\Nat\) then there is \(c\in\Dwords\gamma\) such that %
  \(c\Trcl\Strednd\State\Pempty{\Num\nu}\Stempty\). We apply this
  statement to the case where \(\gamma=\Statew\Pempty0M\Stempty\) and
  obtain \(c\) such that
  \(c\Trcl\Strednd\State\Pempty{\Num\nu}\Stempty\) and
  \(c\in\Dwords\gamma\). This latter property means that
  \(c=\State\Pempty M\Stempty\).
\end{proof}

\begin{remark}
  The fact that the lengths of the deterministic \(\Stredw\)-reduction
  and of the non-deterministic \(\Strednd\)-reduction in
  Theorem~\ref{th:det-machine-complete} are equal is of course
  essential since it is always possible to simulate a
  non-deterministic reduction by a deterministic one using
  interleaving techniques. One can use the simulation and
  co-simulation Lemmas~\ref{lemma:Strew-simulation}
  and~\ref{lemma:Strew-co-simulation} for proving various
  generalizations of that theorem, using the fact that if
  \(c\in\Dwords\gamma\) and \(\gamma\) contains no \(\Wcell n\)'s then
  \(c\) and \(\gamma\) are essentially the same thing (the only
  difference is the counter contained in \(\gamma\)).
\end{remark}

\begin{theorem}
  Let \(M\) be a term such that \(\Tseq{}M\Tnat\) and let
  \(\nu\in\Nat\). Then we have \(\Tseqi{}M\nu\Tnat\) iff %
  \(\Statew{\Pempty}{0}{M}{\Stempty}
  \Trcl\Stredw\Statew{\Pempty}{k}{\Num\nu}{\Stempty}\)
  for some \(k\in\Nat\).
\end{theorem}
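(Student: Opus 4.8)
The plan is to prove the statement by composing the determinization equivalence of Theorem~\ref{th:det-machine-complete} with an adequacy result for the non-deterministic machine $\Strednd$. Indeed, Theorem~\ref{th:det-machine-complete} already shows that
\[
\exists k\in\Nat\ \Statew{\Pempty}{0}{M}{\Stempty}\Trcl\Stredw\Statew{\Pempty}{k}{\Num\nu}{\Stempty}
\]
holds if and only if $\State\Pempty M\Stempty\Trcl\Strednd\State\Pempty{\Num\nu}\Stempty$. So it suffices to establish the single equivalence $\Tseqi{}M\nu\Tnat\Equiv\State\Pempty M\Stempty\Trcl\Strednd\State\Pempty{\Num\nu}\Stempty$, after which the theorem follows by transitivity. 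I would prove the two directions separately, the $(\Leftarrow)$ half being a soundness argument and the $(\Rightarrow)$ half a completeness argument.

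For soundness I would first record that a single $\Strednd$ step can only shrink the relational interpretation. Indeed $c\Strednd c'$ means $c\Stred c'$, $c\Stred c'+c''$ or $c\Stred c''+c'$; in each case Theorem~\ref{th:red-state-term} gives $\Tofst c\Trcl\Red\Tofst{c'}$ or $\Tofst c\Trcl\Red\Tofst{c'}+\Tofst{c''}$, and soundness of the $\REL$ interpretation (Theorem~\ref{th:sem-term-invariant}) together with the fact that addition in $\REL$ is union yields $\Psemrel{c'}{}\subseteq\Psemrel c{}$. Iterating along $\State\Pempty M\Stempty\Trcl\Strednd\State\Pempty{\Num\nu}\Stempty$ and using $\Tofst{\State\Pempty M\Stempty}=M$ gives $\Psemrel{\Num\nu}{}\subseteq\Psemrel M{}$; since $\nu\in\Psemrel{\Num\nu}{}$ we get $\nu\in\Psemrel M{}$, and therefore $\Tseqi{}M\nu\Tnat$ by Theorem~\ref{th:sem-intersection-equiv}.

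For completeness I would start from $\Tseqi{}M\nu\Tnat$ and invoke Theorem~\ref{cor:sem-adequacy} to obtain a multiset reduction $\Mset{\State\Pempty M\Stempty}\Trcl{\Rsred{\Msrs\States}}C_0+\Mset{\State\Pempty{\Num\nu}\Stempty}$. The task is then to extract from this multiset computation a single $\Strednd$ thread reaching $\State\Pempty{\Num\nu}\Stempty$. I would prove a projection lemma: for a \Simplicit{} state $c$, if $\Mset c\Trcl{\Rsred{\Msrs\States}}C$ and a \Simplicit{} state $c'$ occurs in $C$, then $c\Trcl\Strednd c'$. The proof is by induction on the length of the multiset reduction, using that $\Stred$ is deterministic on \Simplicit{} states (the applicable rule is fixed by the outermost constructor): the first multiset step must reduce the single element $c$ to some $\tilde c$ with $c\Stred\tilde c$, and one then distinguishes whether $\tilde c$ is \Simplicit{}, is $0$ (vacuous, since $c'$ could not survive), or is a sum $d_0+d_1$ which must next be split into $\Mset{d_0,d_1}$; in the sum case one observes $c\Strednd d_0$ and $c\Strednd d_1$ and applies the induction hypothesis to whichever branch $c'$ descends from. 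Applying this to $c'=\State\Pempty{\Num\nu}\Stempty$ yields the desired reduction.

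The main obstacle is precisely the sum case of the projection lemma, where I must argue that a multiset reduction out of $\Mset{d_0,d_1}$ decomposes into independent reductions of the descendants of $d_0$ and of $d_1$, so that the distinguished occurrence of $c'$ can be traced back to exactly one summand. This is the standard locality (parallel-moves) property of the rewriting system $\Msrs\States$: every rule acts on a single multiset element, and the congruence rule $S+T\Rsred{\Msrs\States}S'+T$ lets one work inside a sub-multiset, so the given reduction can be split into the steps touching $d_0$-descendants and those touching $d_1$-descendants. Making this ancestry tracking precise is where the real work lies; everything else reduces to the already-established Theorems~\ref{th:det-machine-complete}, \ref{th:red-state-term}, \ref{th:sem-term-invariant}, \ref{th:sem-intersection-equiv} and~\ref{cor:sem-adequacy}.
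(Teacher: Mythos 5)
Your proposal is correct and takes essentially the same route as the paper, whose entire proof reads: by Theorems~\ref{th:det-machine-complete}, \ref{th:machine-sound} and~\ref{cor:sem-adequacy-det}, ``observing that \(c\Trcl{\Strednd}c'\) is equivalent to the existence of \(C\) such that \(\Mset c\Trcl{\Rsred{\Msrs\States}}\Mset{c'}+C\)''. Your projection lemma (with its splitting/ancestry argument) is precisely the nontrivial direction of that parenthetical observation, which the paper leaves unproved, and your remaining variations --- invoking Theorem~\ref{cor:sem-adequacy} instead of Theorem~\ref{cor:sem-adequacy-det}, and unfolding the machine soundness Theorem~\ref{th:machine-sound} into Theorems~\ref{th:red-state-term}, \ref{th:sem-term-invariant} and~\ref{th:sem-intersection-equiv} via step-by-step inclusion of \(\REL\)-interpretations --- are immaterial.
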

\begin{proof}
  By Theorems~\ref{th:det-machine-complete}, \ref{th:machine-sound}
  and~\ref{cor:sem-adequacy-det}, observing that %
  \(c\Trcl{\Strednd}c'\) is equivalent to the existence of \(C\) such that %
  \(\Mset c\Trcl{\Rsred{\Msrs\States}}\Mset{c'}+C\).
\end{proof}

\section*{Conclusion}
Building on the categorical axiomatization of coherent differentiation
introduced in~\cite{Ehrhard23a} we have defined a differential
extension \(\Lang\) of the standard Turing complete functional
programming language \PCF{}.

The rewriting system of \(\Lang\) has many reduction rules and
therefore it would be probably rather difficult to prove this
determinism property syntactically (as a Church-Rosser property) so
our use of denotational semantics for this purpose seems really
crucial.
We also use semantics for proving that our rewriting system is
complete in the sense that it allows any closed term of type \(\Tnat\)
whose interpretation in the relational model contains \(\nu\) to be
reduced to \(\Num\nu\).
This proof is based on the use of a Krivine machine which is a way of
extracting from the general \(\Lang\) rewriting system a fairly simple
though sufficiently expressive subsystem.
The completeness proof is based on a reducibility method that we have
been obliged to modify drastically in order to adapt it to this
differential setting.
To make the proof more readable, we present the relational semantics
of \(\Lang\) as a non-idempotent intersection typing system and
completeness can be understood as a normalization property for this
typing system.

One major novelty\footnote{And improvement in some sense.} of coherent
differentiation wrt.~\(\DILL\) is the fact that it is
deterministic.
This was already clear in~\cite{Ehrhard23a} where we showed that
coherent differentiation admits deterministic models, that is models
where the superposition of the values \(\True\) and \(\False\) in the
type \(\Bool\) is rejected.
The present article provides a syntactic evidence of this determinism
via the fully deterministic version of our machine which is shown to
be sound and complete wrt.~the execution of closed \(\Lang\)-terms of
type \(\Tnat\).

\subsection*{Future work}
Our Krivine machine has no environment and uses actual substitutions
in terms for implementing \(\beta\)-reduction, as well as a syntactic
differential operation \(\Ldletv xM\) defined by induction on \(M\) to
implement the differential reduction of \(\Lang\).
From the viewpoint of efficiency this is of course not satisfactory
and we will present in a forthcoming paper a machine using a stack as
well as a (differential) environment not invoking any external
operation defined by induction on terms for executing the expressions
of our language.

The most puzzling questions however remain of a theoretical nature and
concern the exact operational meaning of our language \(\Lang\)%
\footnote{Or of its variants, we can of course expect to design more
  syntactically elegant versions of \(\Lang\) in the next few months;
  the version presented in this paper has been chosen for its
  relatively straightforward denotational semantics.}, %
which has now fully satisfactory deterministic operational and
denotational semantics.
From a programming point of view, what is
exactly the meaning of the type construction \(\Tdiff A\) and in what
kind of programming situation could it be useful, as well as the
syntactic term construction \(\Ldiff M\)?
The few examples provided in Section~\ref{sec:term-examples} do
clearly not answer this question.
One way to address it could be to consider a probabilistic
extension of \(\Lang\), for which differentiation has a clear
mathematical meaning easily expressed in \(\PCOH\) as we have seen in
Section~\ref{sec:PCS-definition}.

Another interesting direction, which might require the extension of
\(\Lang\) with richer types%
\footnote{Most models of \(\LL\) support inductive and coinductive
  definitions so such extensions should not be problematic.}, %
would be to understand if it has connection with incremental
programming where syntactic constructs of a differential nature are
also used. Such a connection remains however highly conjectural.
More specifically, in~\cite{EhrhardLaurent07a,EhrhardLaurent07b}, we
have suggested possible connections between \(\DILL\) and various
process calculi, it might be worthwhile to understand if such
connections could be related to incremental computing and benefit from
coherent differentiation.

Last, as suggested in Section~\ref{sec:term-examples}, the fact that
\(\Lang\) has at the same time a general fixpoint construct and a
differential construct means that it is possible to define programs by
some kind of ``differential equations'' (recursive definitions of
functions whose body contains the possibly higher derivatives of the
functions being defined) and that such programs can be executed in our
Krivine machine(s); this is a very exciting feature of our setting
which justifies investigations \emph{per se}.

\section*{Acknowledgments}
This work was partly supported by the ANR project %
\emph{Probabilistic Programming Semantics (PPS)} ANR-19-CE48-0014. I
am very grateful to Christine Tasson, Michele Pagani, Paul-André
Melliès, Adrien Guatto, Guillaume Geoffroy and Aymeric Walch for many
discussions on this work. In particular I owe to Guillaume Geoffroy
the key idea of the final step in the determinization of the Krivine
machine for \(\Lang\): make the access words writable!

\bibliographystyle{alphaurl}
\bibliography{newbiblio.bib}

\end{document}